\theoremstyle{definition} \newtheorem{definition}{Definition}[section]
\theoremstyle{plain} \newtheorem{theorem}[definition]{Theorem}
\theoremstyle{plain} \newtheorem{assumption}[definition]{Assumption}
\theoremstyle{plain} \newtheorem{proposition}[definition]{Proposition}
\theoremstyle{plain} \newtheorem{lemma}[definition]{Lemma}
\theoremstyle{plain} \newtheorem{corollary}[definition]{Corollary}
\theoremstyle{plain} \newtheorem{remark}[definition]{Remark}
\theoremstyle{definition} 
\theoremstyle{plain} 
\numberwithin{equation}{section}
\newcommand{\N}{\mathbb{N}}
\newcommand{\R}{\mathbb{R}}
\newcommand{\C}{\mathbb{C}}
\newcommand{\one}{\mathbbm{1}}
\newcommand{\err}{\mathcal{R}_1}
\newcommand{\M}{\mathcal{M}}
\newcommand{\cA}{\mathcal{A}}
\newcommand{\cB}{\mathcal{B}}
\newcommand{\rmu}{\rho_{\mu}}
\newcommand{\ML}{{\mathcal M}^{L}}
\newcommand{\ER}{\varepsilon_R}
\newcommand{\tr}{\text{tr}}
\newcommand{\abs}[1]{\lvert {#1} \rvert }
\DeclareMathOperator{\Ran}{\mathrm{Ran}}
\DeclareMathOperator{\supp}{\mathrm{supp}}
\DeclareMathOperator{\Spec}{\mathrm{Spec}}
\newcommand{\cM}{{M}} 
\renewcommand{\epsilon}{\varepsilon}
\renewcommand{\phi}{\varphi}
\newcommand{\ta}{d}
\newcommand{\cst}{\text{(cst.)}}
\begin{document}

\title{The energy of dilute Bose gases II: \\The general case}
\author[1]{S\o ren Fournais}
\author[2]{Jan Philip Solovej}
\affil[1]{\small{Department of Mathematics, Aarhus University\\ Ny Munkegade 118\\ DK-8000 Aarhus C\\ Denmark}}
\affil[2]{\small{Department of Mathematics\\ University of Copenhagen\\ Universitetsparken 5\\ DK-2100 Copenhagen \O\\Denmark }}
\maketitle

\begin{abstract}
For a dilute system of non-relativistic bosons interacting through a
positive potential $v$ with scattering length $a$ we prove
that the ground state energy density satisfies the bound $e(\rho) \geq
4\pi a \rho^2 (1+ \frac{128}{15\sqrt{\pi}} \sqrt{\rho a^3} +o(\sqrt{\rho a^3}\,))$, thereby proving a lower bound consistent with the Lee-Huang-Yang formula for
the energy density. 
The proof allows for potentials with large $L^1$-norm, in particular, the case of hard core interactions is included.
Thereby, we solve a problem in mathematical physics that had been a major challenge since the 1960's.
\end{abstract}

\tableofcontents

\section{Introduction}
\subsection{Introduction and result for hard core potential}
Since the foundational paper of Lee, Huang and Yang \cite{LHY}, it has been a  fundamental open problem in mathematical physics to establish the so-called Lee-Huang-Yang term of the energy of the dilute, hard core Bose gas. More precisely, this means to prove the 
two-term asymptotics of the ground state energy per unit volume in the thermodynamic limit for the $3$-dimensional, dilute Bose gas in the case of hard core interactions. In this paper we will give the lower bound for this ground state energy density for a very large class of potentials including the hard core case.
Our method allows for the potentials to have both a hard core part and a fairly large support/slow decay, thereby being remarkably close to the expected optimal conditions.

It is interesting to remark that in the same number of Physical Review as \cite{LHY}, Dyson \cite{dyson} gave rigorous upper and lower bounds to the ground state energy. Thus it appears that already at that time the importance  of establishing the correct mathematical way of analysing this fundamental many-body problem was recognised. Notice in passing that the upper bound of Dyson's 1957 paper remains unsurpassed in the hard core case, despite significant recent progress on the upper bound in general \cite{LY,YY,AA,BCS}.

In \cite{FS} we gave the first proof of a two-term lower bound agreeing with the LHY correction in the thermodynamic limit. 
The analysis in \cite{FS} requires, most importantly, an $L^1$-condition on the  potential, thereby excluding the hard core case.
Combining the result of \cite{FS} with the known upper bound of \cite{YY} (see \cite{BCS} for a recent improvement of the upper bound to $L^3$-potentials) establishes the correctness of the LHY-term in the thermodynamic limit. Notice that surprisingly the conditions under which we have the correct upper bounds are much more restrictive than those necessary for the lower bounds. In the present paper we extend and improve the methods of \cite{FS} combined with new insights in order to remove the $L^1$-assumption on the potential therefore allowing us to treat the hard core case. The improvements also allow for potentials of much larger support, which leads to the inclusion of potentials with slow decay through an approximation argument.

For more description of the history of the subject and related results
we refer to the accompanying paper \cite{FS} or to the many recent
works on the subject of which we mention in particular
\cite{ABS,BCS,BBCS0,BBCS,BBCS2,BCaS,BSS,BSS2,BFS,BS,ESY,F,GS,NT,NNRT,BogFuncI,BogFuncII,St}. Notice
also the inspiring review \cite{LSSY} of the status of the subject in
2005.  For a more recent review of various mathematical aspects of the
Bose gas see \cite{rougerie2021scaling}, for a point of view closer to
Physics, see \cite{ChevySalomon,Navon2010}.  The present paper is
essentially self-contained, in particular, it is an independent
improvement of \cite{FS}.

We proceed to more precisely define the model under study.
We consider $N$
bosons in $3$ dimensions described by the Hamiltonian
\begin{align}
\label{eq:Hamiltonian}
{\mathcal H}_N= {\mathcal H}_N(v)=\sum_{i=1}^N-\Delta_i+\sum_{1\leq i<j\leq N}v(x_i-x_j).
\end{align}
One of our main concerns in this paper will be to prove the LHY-correction for the hard core potential, i.e. for $v = v_{\text{hc}}$ with
\begin{align}\label{eq:v_hc}
v_{\text{hc}}(x) = \begin{cases}
0, & |x| > a, \\
+\infty, & |x| \leq a.
\end{cases}
\end{align}

We are interested in the thermodynamic limit of the ground state
energy density as a function of the particle density $\rho$.
\begin{align}\label{eq:e_rho}
e(\rho, v)=\lim_{L\to\infty\atop N/L^3\to \rho}L^{-3}\inf_{\Psi\in
  C^\infty_0([0,L]^N)\setminus\{0\}}\frac{\langle\Psi,{\mathcal
    H}_N(v) \Psi\rangle}{\|\Psi\|^2}.
\end{align}
We will omit the dependence on $v$ from the notation and just write
$e(\rho)$, when the potential is clear from the context.  Here the
inner product $\langle\cdot,\cdot\rangle$ and the corresponding norm
$\|\cdot\|$ are in the Hilbert space $L^2(\Omega^N)$, where we have
denoted $\Omega=[0,L]^3$. When considering bosons the infimum above
should be over all symmetric function in $C^\infty_0(\Omega^N)$. It is
however a well-known fact that the infimum over all functions is
actually the same as if constrained to symmetric functions. When we
restrict to functions with compact support in $\Omega$ we are
effectively using Dirichlet boundary conditions, but it is not
difficult to see that the thermodynamic energy is independent of the
boundary condition used.

Our result for the hard core potential is the following:

\begin{theorem}[The Lee-Huang-Yang Formula]\label{thm:LHY-hc}
Suppose $v = v_{\text{hc}}$ is the hard core potential defined in \eqref{eq:v_hc}. Then, 
\begin{equation}\label{eq:LHY-hc}
  e(\rho)\geq 4\pi\rho^2 a\left(1 +\frac{128}{15\sqrt{\pi}} \sqrt{\rho a^3}-{\mathcal C}(\rho a^3)^{\frac{1}{2}+\eta}\right),
\end{equation}
for some $\eta>0$ and ${\mathcal C}>0$.
\end{theorem}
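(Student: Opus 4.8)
The plan is to obtain Theorem~\ref{thm:LHY-hc} as a special case of a general lower bound, established in the body of the paper, that is valid for a wide class of non-negative potentials — in particular for potentials of arbitrarily large $L^1$-norm, of which the hard core is, after a standard approximation, the limiting case. The overall architecture is that of \cite{FS}: localise the gas into boxes of a carefully chosen intermediate size, split each localised state into a condensate part and a perpendicular excitation part, perform a Bogoliubov diagonalisation of the quadratic piece of the localised Hamiltonian, and estimate the cubic and quartic remainders. Two new ingredients are needed to reach the hard core: (i) a reduction step that converts the singular interaction into a soft, integrable one essentially without loss of scattering length, and (ii) a sharpening of the Bogoliubov analysis so that it survives the large interaction range produced by that reduction.

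For step (i) I would use a Dyson-type lemma: the vanishing of the wavefunction on the balls $\{\abs{x_i-x_j}\le a\}$, together with a portion of the kinetic energy, dominates a soft two-body potential $W$ with $\supp W\subset\{\abs{x}\le R\}$ for some $R\gg a$ and with scattering length $a_W$ that can be made as close to $a$ as needed. The crucial point — and the place where a naive application fails — is that extracting nearly the full scattering length $a$ must not cost a constant fraction of the kinetic energy, since that energy is required for the excitations. One therefore has to perform the reduction in a scale-resolved way, feeding it only the high-momentum part of the kinetic energy, which lives at the scale $a$ of the scattering process, while leaving the low-momentum part, which lives at the much longer healing length, for the Bogoliubov step. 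After this reduction, and a scaling, one is left with a lower bound for a dilute gas with the soft potential $W$, to which the general estimate applies and gives $4\pi a_W\rho^2\bigl(1+\tfrac{128}{15\sqrt{\pi}}\sqrt{\rho a_W^3}-o(\sqrt{\rho a^3})\bigr)$ as a lower bound for the corresponding energy density.

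It then remains to verify that the parameters can be chosen so that every error in the chain — the loss of kinetic energy in the Dyson step, the gap between $a_W$ and $a$, the nearest-neighbour remainder inherent in the Dyson lemma, the localisation errors from the box decomposition, and the error term of the general bound itself — is $o\bigl((\rho a^3)^{1/2}\bigr)$ relative to $4\pi a\rho^2$. This fixes the admissible scaling of $R$ (and of the box size and the momentum cut-offs) as small powers of $\rho a^3$ and produces the stated exponent $\eta>0$ and constant $\mathcal{C}>0$.

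The main obstacle I expect is exactly this conflict of length scales. The Dyson reduction wants $R$ large, so that $W$ is weak and $a_W$ is genuinely close to $a$; the Bogoliubov analysis wants the interaction range small, so that the condensate--excitation splitting, the control of the momentum distribution, and the estimates on the cubic and quartic terms do not deteriorate. The substance of the paper must be to widen the window of admissible ranges — which simultaneously disposes of the $L^1$-restriction of \cite{FS} — until it contains a choice of parameters for which all of the above errors are subleading; checking that this window is wide enough is, in effect, the proof of Theorem~\ref{thm:LHY-hc}. A further, more technical issue that will need care is that the Dyson lemma most naturally produces a nearest-neighbour interaction rather than a genuine pair potential, so the reduction to a true two-body $W$ must be organised so that the resulting discrepancy still fits inside the error budget.
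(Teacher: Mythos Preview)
Your overall architecture (box localisation, condensate/excitation splitting, Bogoliubov diagonalisation, control of cubic and quartic terms) matches the paper's, but your reduction step (i) is \emph{not} the one the paper uses, and the difference matters.

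The paper does \emph{not} employ a Dyson-type lemma and does not spend any kinetic energy to soften the interaction. Instead it uses plain monotonicity: one constructs an $L^1$ potential $v_T$ with $0\le v_T\le v_{\text{hc}}$, so that $e(\rho,v_{\text{hc}})\ge e(\rho,v_T)$ for free, and the approximation result Theorem~\ref{thm:ApprScatGen} guarantees that one can do this with $\int v_T\le 8\pi T a$ while $a(v_T)\ge a(1-O(1/T))$. Taking $T\sim(\rho a^3)^{-1/2-\eta}$ makes the scattering-length loss $o(\sqrt{\rho a^3})$. No kinetic energy is touched and there is no nearest-neighbour issue, since $v_T$ is a genuine pair potential from the outset. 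Theorem~\ref{thm:LHY-hc} then follows immediately from the general Theorem~\ref{thm:LHY-hc-gen} (the hard core trivially satisfies the decay condition).

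All the work is pushed into Theorem~\ref{thm:LHY}: proving the LHY lower bound for $L^1$ potentials \emph{uniformly} in $\int v/a$ up to $(\rho a^3)^{-1/2-\eta_1}$. The new idea relative to \cite{FS} is to run the localisation-of-large-matrices argument not on the full excitation number $n_+$ (whose commutator estimate carries a factor $\int v$) but on the \emph{low-momentum} excitation number $n_+^{\rm L}$, which is all that is needed downstream to control the soft-pair contribution from the $3Q$ term.

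The obstacle you flag --- that a Dyson reduction competes with the Bogoliubov step for the same kinetic energy --- is exactly why the paper avoids that route. Your scale-resolved variant is a natural idea, but to get the scattering length right to $o(\sqrt{\rho a^3})$ while retaining the kinetic energy at the healing scale is a genuine tension, and the paper's monotone approximation simply sidesteps it.
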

\begin{remark}
Our proof gives that the choice $\eta=\frac{1}{4640}$ works in Theorem~\ref{thm:LHY-hc}. However, this exponent is likely an artefact of the proof and not a true reflection of the next correction term.
\end{remark}

\subsection{Result for general class of potentials}
An interesting feature of the LHY-formula is the universality property that \eqref{eq:LHY-hc} holds for a large class of potentials $v$, where the parameter $a$ has to be understood as the {\it scattering length} of the potential.
The definition of the scattering length is recalled in Section~\ref{scattering} below.
We will denote the scattering length of the potential $v$ by $a(v)$ and write $a$ instead of $a(v)$, when the potential is clear from the context.
Our more general result is indeed to show this universality, allowing for both potentials with large $L^1$-norm, for instance a hard core part, and a slowly decaying tail.
\begin{theorem}\label{thm:LHY-hc-gen}
Let ${\mathcal C}, S_0, \varepsilon_0>0$ be given. Then there exist $\eta \in (0,\frac{1}{4})$ and $C>0$ (only depending on ${\mathcal C}, S_0, \varepsilon_0$) such that if $v: {\mathbb R}^3 \rightarrow [0,+\infty]$ is a measurable, spherically symmetric potential with scattering length $a\in (0,\infty)$ and satisfying the following decay condition
\begin{align}\label{eq:decay}
\frac{1}{8\pi a} \int_{\{ |x| \geq R \}} v(x)\,dx \leq {\mathcal C} (R/a)^{-1-\varepsilon_0}, \qquad \text{ for all } R\geq S_0 a.
\end{align}
Then
\begin{align}\label{eq:LHY-hc-gen}
e(\rho) \geq 4 \pi a \rho^2 \left( 1 + \frac{128}{15\sqrt{\pi}}\sqrt{\rho a^3} - C (\rho a^3)^{\frac{1}{2} + \eta} 
\right).
\end{align}

\end{theorem}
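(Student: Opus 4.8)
The plan is to deduce Theorem~\ref{thm:LHY-hc-gen} from Theorem~\ref{thm:LHY-hc} by an approximation argument: since the hard core potential is already covered (and is the hardest case in terms of $L^1$-norm), the general case should follow once we control how the scattering length and the energy density respond to modifying the potential. The core of the paper — the detailed Bogoliubov-type analysis of the Hamiltonian \eqref{eq:Hamiltonian} with renormalized interactions — presumably produces a statement of the form \eqref{eq:LHY-hc-gen} \emph{directly} for a class of potentials parametrized by their support size and $L^1$-norm (measured in units of $a$), with constants depending only on those parameters; the decay condition \eqref{eq:decay} is exactly what is needed to reduce a general $v$ to such a ``box-friendly'' potential. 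So the real work is (i) the main \emph{a priori} lower bound for compactly supported (or rapidly decaying) potentials with quantitative dependence on $\int v / (8\pi a)$ and $\supp v / a$, and (ii) the truncation estimate.

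\textbf{Step 1: reduce to potentials supported in a ball of radius $\sim Ra$.} Given $v$ satisfying \eqref{eq:decay}, write $v = v_{<} + v_{>}$ where $v_{<} = v\,\one_{\{|x| < R a\}}$ and $v_{>} = v\,\one_{\{|x|\geq Ra\}}$ for a parameter $R\geq S_0$ to be optimized as a (negative) power of $\rho a^3$. Since $v\geq 0$ we have ${\mathcal H}_N(v) \geq {\mathcal H}_N(v_{<})$, so $e(\rho,v) \geq e(\rho,v_{<})$. The scattering length is monotone in the potential, so $a(v_{<}) \leq a(v) = a$; moreover, using the variational characterization of the scattering length together with the decay bound \eqref{eq:decay}, one shows $a(v_{<}) \geq a\,(1 - C (R)^{-1-\varepsilon_0})$, i.e. removing the tail changes the scattering length only by a relative error of size $R^{-1-\varepsilon_0}$. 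Hence it suffices to prove \eqref{eq:LHY-hc-gen} for $v_{<}$, with $a$ replaced by $a(v_{<})$, and then re-expand $4\pi a(v_{<})\rho^2(1 + \ldots)$ in terms of $a$, absorbing the $O(R^{-1-\varepsilon_0})$ correction into the error term $C(\rho a^3)^{1/2+\eta}$ provided $R^{-1-\varepsilon_0} \lesssim (\rho a^3)^{1/2+\eta}$, which fixes how large we may take $R$ (a small positive power of $(\rho a^3)^{-1}$).

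\textbf{Step 2: apply the quantitative bound for compactly supported potentials.} Now $v_{<}$ is supported in $\{|x|<Ra(v_{<})\,(1+o(1))\}$ but may still have enormous $L^1$-norm — in particular it contains the hard core part. This is precisely the regime the bulk of the paper handles. I would invoke the main technical theorem of the paper (the version of Theorem~\ref{thm:LHY-hc} proved for general bounded-range potentials, with the box/localization length scales chosen as functions of $\rho a^3$ and of $R$) to get
\begin{equation*}
e(\rho, v_{<}) \geq 4\pi a(v_{<})\rho^2\Bigl(1 + \tfrac{128}{15\sqrt{\pi}}\sqrt{\rho a(v_{<})^3} - C' (\rho a(v_{<})^3)^{1/2 + \eta'}\Bigr),
\end{equation*}
where $C', \eta'$ depend on $R$ only polynomially, so that after taking $R$ a sufficiently small power of $(\rho a^3)^{-1}$ the composite error is still of the form $C(\rho a^3)^{1/2+\eta}$ for some $\eta>0$ depending on $\varepsilon_0, S_0, {\mathcal C}$. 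Combining with Step 1 and re-expanding the scattering length finishes the proof; Theorem~\ref{thm:LHY-hc} is the special case $v = v_{\mathrm{hc}}$, for which no tail truncation is needed and $R$ can be fixed.

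\textbf{Main obstacle.} The genuinely hard part is not the truncation bookkeeping but the quantitative control in Step 2: proving the LHY lower bound \emph{without} the $L^1$-assumption of \cite{FS}, i.e. handling the hard core. The difficulty is that the standard route replaces $v$ by its (soft, integrable) scattering solution $w$ via a Dyson-type lemma, but doing so on the hard core costs a factor that must be shown not to destroy the $\sqrt{\rho a^3}$ term; one needs a version of the argument where the a priori kinetic and number bounds, the Bogoliubov diagonalization of the $b$-operators, and the third/fourth-order error terms are all estimated using only the scattering length $a$ and the (large) range $Ra$, with the $L^1$-norm entering at most through a controllable power of $R$. Tracking that $R$-dependence through every error term, and in particular through the $Q$-space/condensate localization and the final choice of all length scales as powers of $\rho a^3$, is where the real effort lies, and it is what forces the smallness of the gained exponent $\eta$ (the remark's $\eta = 1/4640$).
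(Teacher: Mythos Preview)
Your reduction misses the essential step. In Step~2 you presuppose a ``main technical theorem'' that gives the LHY lower bound directly for compactly supported potentials with possibly infinite $L^1$-norm (hard core included). The paper contains no such theorem. Its technical core is Theorem~\ref{thm:LHY}, which requires $v\in L^1$ \emph{and} the quantitative bound ${\mathcal R}=(8\pi a)^{-1}\int v\le {\mathcal C}(\rho a^3)^{-1/2-\eta_1}$ (see~\eqref{eq:Assumptionsv}). Your spatially truncated $v_<$ can still have $\int v_<=\infty$, so Theorem~\ref{thm:LHY} does not apply to it, and there is no alternate route in the paper.

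What you are missing is the $L^1$-norm reduction, Theorem~\ref{thm:ApprScatGen}. The paper's proof of Theorem~\ref{thm:LHY-hc-gen} first applies Theorem~\ref{thm:ApprScatGen} with $T=(\rho a^3)^{-1/2-\eta_1}$ to produce $\widetilde v\le v$ with $(8\pi)^{-1}\int\widetilde v\le Ta$ and $a(\widetilde v)\ge a\bigl(1-O(T^{-1})\bigr)$; note that the loss $T^{-1}=(\rho a^3)^{1/2+\eta_1}$ is already of the right size. Only \emph{then} is the spatial truncation $v'=\widetilde v\,\one_{\{|x|\le R\}}$ performed, with $R=a(\rho a^3)^{-1/2+\kappa}$ and the tail controlled via~\eqref{eq:decay}, and finally Theorem~\ref{thm:LHY} is applied to $v'$. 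The sharp $T^{-1}$ rate in Theorem~\ref{thm:ApprScatGen} is not a technicality: the naive cutoff $\min(v,n)$ on a hard core loses scattering length like $n^{-1/2}$, which would force $\int v_n/a\sim(\rho a^3)^{-1}$, violating~\eqref{eq:Assumptionsv} (see Remark~\ref{rem:HardCore} and the paragraph preceding it). Your opening sentence also inverts the logic: Theorem~\ref{thm:LHY-hc} is a \emph{consequence} of Theorem~\ref{thm:LHY-hc-gen}, not an input to it.
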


\begin{remark}
Our proof gives that the choice $\eta= \frac{1}{5} \min\{ \frac{\varepsilon_0}{4(1+\varepsilon_0)},\frac{1}{928}\}$ works in Theorem~\ref{thm:LHY-hc-gen}.
Again, we do not expect this exponent in the error bound to be optimal.
\end{remark}

\begin{remark}
Theorem~\ref{thm:LHY-hc-gen} is quite satisfactory in its generality. One could imagine the LHY-correction to be true for all positive potentials with finite scattering length. Notice that the finiteness of the scattering length implies integrability at infinity by \cite{MR2923973}. The decay condition \eqref{eq:decay} is reasonably close to this.
One could of course imagine to allow for potentials with positive scattering length but which are not everywhere positive. 
In this generality the Lee-Huang-Yang formula cannot hold. This is discussed in \cite{LSSY} where a conjecture is formulated (see  \cite{Yin} for partial results).
\end{remark}

Clearly, Theorem~\ref{thm:LHY-hc} follows from Theorem~\ref{thm:LHY-hc-gen}, since the scattering length $a$ of the hard core potential is the same as 
the radius of its support and the decay condition \eqref{eq:decay} is trivially satisfied.
Notice that in Theorem~\ref{thm:LHY-hc-gen}, the potential $v$ is not assumed to be $L^1$, indeed the hard core potential $v_{\text{hc}}$ has $\int v_{\text{hc}} = \infty$.
To prove Theorem~\ref{thm:LHY-hc-gen}, we will approximate $v$ from below by a $\rho$-dependent $L^1$-potential in such a way that
the difference between the scattering lengths is sufficiently small while the integral of the approximating potential is not too large.
The existence of such an approximating potential is guaranteed by the following uniform approximation result, 
which could be of independent interest and will be proved in Section~\ref{scattering}.

\begin{theorem}[Approximation of scattering length]\label{thm:ApprScatGen}
Suppose that $v: {\mathbb R}^3 \rightarrow [0,+\infty]$ is radial and has finite scattering length $a(v)$.
Then for all $T>1$ there exists $v_T \in L^1({\mathbb R}^3)$, with compact support and such that 
\begin{align}\label{eq:lowerpt}
0 \leq v_T(x) \leq v(x),
\end{align}
for all $x \in {\mathbb R}^3$, and furthermore
\begin{align}\label{eq:UnifBoundaThm}
(8 \pi)^{-1} \int v_T &\leq T a(v), \nonumber \\
a(v_T) &\geq a(v) \Big(1-   \Big(1+ 2\frac{\sqrt{5}}{\sqrt{T}}\Big)   T^{-1} \Big).
\end{align}
\end{theorem}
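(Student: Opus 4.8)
The plan is to obtain $v_T$ by truncating $v$ both in magnitude and in spatial extent, and then to quantify how much the scattering length can drop. Recall that the scattering length $a(v)$ is characterized variationally: $8\pi a(v) = \inf \int (|\nabla f|^2 + v|f|^2)$ over functions $f$ with $f\to 1$ at infinity (equivalently $f = 1-\phi$ with $\phi$ of compact support, or decaying), and the minimizer $f_v$ is the zero-energy scattering solution $-\Delta f_v + \tfrac12 v f_v = 0$ with $f_v(x) = 1 - a(v)/|x|$ outside the support. First I would fix $T$ and set $v_T = v \cdot \one_{\{|x| \le R\}}$ for a cutoff radius $R$ to be chosen as a multiple of $a(v)$; this already gives \eqref{eq:lowerpt} and compact support, and makes $v_T$ integrable because on $\{|x|\le R\}$ the finiteness of $a(v)$ forces $\int_{\{|x|\le R\}} v < \infty$ (the scattering solution is bounded below on bounded sets away from $0$, so local integrability of $v$ is automatic; near the origin one uses that $f_v\ge c>0$ on an annulus and the energy bound). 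The key quantitative input is then an upper bound for $(8\pi)^{-1}\int_{\{|x|\le R\}} v$ in terms of $a(v)$ and $R$: testing the scattering variational problem for $v$ with a fixed trial function (e.g. the scattering solution of the ball of radius $R/2$, or simply $f(x)=\min\{1, (|x|/r_0 - 1)_+\}$ for suitable $r_0$) bounds $\int v f^2$ by a constant times $a(v)$; since $f$ is bounded below by a constant on $\{r_0 \le |x| \le R\}$ we get $\int_{\{r_0\le|x|\le R\}} v \lesssim a(v)$, and the piece $\int_{\{|x|\le r_0\}} v$ is controlled by taking $r_0$ comparable to $a(v)$ and using that $v$ restricted near the origin contributes at most an $O(a(v))$ amount to its own scattering length. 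Optimizing the constants should yield $(8\pi)^{-1}\int v_T \le T a(v)$ once $R$ is chosen of order $\sqrt{T}\,a(v)$ or so.

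For the lower bound on $a(v_T)$, I would compare scattering solutions directly. Since $v_T \le v$, monotonicity of the scattering length already gives $a(v_T) \le a(v)$; the content is the reverse inequality with an explicit loss. Let $f = f_{v_T}$ be the scattering solution for $v_T$, normalized so $f(x) = 1 - a(v_T)/|x|$ for $|x| \ge R$. Plug $f$ into the variational problem for $v$: then $8\pi a(v) \le \int(|\nabla f|^2 + v f^2) = \int(|\nabla f|^2 + v_T f^2) + \int (v - v_T) f^2 = 8\pi a(v_T) + \int_{\{|x|> R\}} v\, f^2$. Now on $\{|x| > R\}$ one has $0 \le f \le 1$, so $\int_{\{|x|>R\}} v f^2 \le \int_{\{|x|>R\}} v$, but at this point $v$ need not be integrable at infinity in a useful quantitative way — except that this is exactly where one invokes that finiteness of $a(v)$ forces integrability at infinity (the cited \cite{MR2923973}), together with a scaling bound: testing the scattering problem with a function equal to $1$ for $|x|\le R$ and decaying like $R/|x|$ beyond shows $\int_{\{|x|>R\}} v \cdot (R/|x|)^2 \lesssim a(v)$, hence $\int_{\{|x|>R\}} v \le \int_{\{|x|>R\}} v (|x|/R)^2 \cdot$ — no, that goes the wrong way; instead one bounds $f(x)^2 \le C (a(v)/|x|)^2$-type decay is false since $f\to 1$. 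The correct route is: the tail energy $\int_{\{|x|>R\}} v f^2$ with $f$ the full scattering solution of $v$ satisfies $\int_{\{|x|>R\}} v f_v^2 \le \int_{\{|x|>R\}}(|\nabla f_v|^2 + v f_v^2) = 8\pi(a(v) - a_R)$ where $a_R$ is the scattering length of the problem on $\{|x|\le R\}$ with the exterior solved exactly; and $a(v) - a_R$ is precisely $O(a(v)^2/R)$ by the explicit form $f_v = 1 - a(v)/|x|$ of the exterior solution and a Dirichlet–Neumann bracketing across the sphere $|x| = R$. With $R \sim \sqrt{T}\, a(v)$ this gives the loss $\sim a(v)^2/R \sim a(v)/\sqrt{T}$, which after bookkeeping matches the claimed $a(v)(1 - (1 + 2\sqrt5/\sqrt T)T^{-1})$.

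The two requirements — $\int v_T \lesssim T a(v)$ and $a(v_T) \ge a(v)(1 - O(1/T))$ — pull the cutoff radius $R$ in opposite directions (large $R$ is good for the scattering length, bad for the $L^1$-norm), and the stated bound reflects the balance $R \asymp \sqrt T\, a(v)$ together with the $\sqrt5$ geometric constant coming from the explicit spherical computation. I expect the main obstacle to be the \emph{quantitative} control of both the near-origin contribution to $\int v_T$ and the far-field contribution to $a(v) - a(v_T)$ using nothing but finiteness of $a(v)$: one cannot assume $v\in L^1_{\mathrm{loc}}$ a priori, so the argument must extract all integrability from the scattering variational principle itself. Concretely, the delicate point is showing that truncating $v$ outside $\{|x| \le R\}$ costs at most $O(a(v)^2/R)$ in scattering length \emph{uniformly} over all such $v$; this is where one needs the sharp Dirichlet–Neumann bracketing on the sphere of radius $R$ and the monotone dependence of the exterior problem's contribution, and it is the step I would write out in full detail. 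Everything else — monotonicity of $a(\cdot)$, the variational characterization, compact support and integrability of $v_T$ — is routine once this estimate is in hand.
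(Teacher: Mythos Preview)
Your truncation goes in the wrong direction, and this is a genuine gap. You propose $v_T = v\cdot\one_{\{|x|\le R\}}$, keeping the \emph{core} of the potential. But the claim that ``finiteness of $a(v)$ forces $\int_{\{|x|\le R\}} v < \infty$'' is false: for the hard core potential $v = +\infty\cdot\one_{\{|x|\le a\}}$ one has $a(v) = a < \infty$ while $\int_{\{|x|\le R\}} v = +\infty$ for every $R>0$. The scattering solution vanishes identically on the hard core, so the variational integral is finite despite $v$ not being locally integrable. More generally, the paper discusses (just before the theorem) that even with magnitude truncation $v_n = \min(v,n)$ one only gets $a(v)-a(v_n)\sim n^{-1/2}$ with $\int v_n \sim n$, i.e.\ the wrong trade-off by a full power.

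The paper's construction is the opposite: for $v\in L^1$ with compact support (to which one first reduces by a soft approximation), set $R_T = \inf\{R': \int_{\{|x|\ge R'\}} v < 8\pi T a\}$ and take $v_T = v\cdot\one_{\{|x|>R_T\}}$, the \emph{tail}. Then $\int v_T = 8\pi T a$ exactly, and the key quantitative input is that the scattering solution $\varphi_T$ of $v_T$ is small on the inner ball: since $\varphi_T$ is radially nondecreasing, $8\pi a_T = \int v_T\varphi_T \ge \varphi_T(R_T)\int v_T = 8\pi T a\,\varphi_T(R_T)$, hence $\varphi_T(R_T)\le 1/T$. One then plugs the trial function $u = \one_{\{|x|\ge R_T\}}(\varphi_T - c/|x|)$ with $c=R_T\varphi_T(R_T)$ into the variational problem for $a(v)$; the cross term vanishes by the scattering equation, and the remainder is $O(c^2) = O(R_T^2/T^2)$. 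A second estimate (using $\varphi = c_0\varphi_{<}$ on the inner ball) handles the case $R_T$ large, and optimizing the crossover gives the stated constant. The heart of the matter is that the scattering solution is \emph{small} near the origin, so throwing away the core costs little in scattering length while removing all the $L^1$-mass there.
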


From an explicit calculation in the case of the hard core potential (see Remark~\ref{rem:HardCore} below) it follows that the error term in \eqref{eq:UnifBoundaThm} has an optimal dependence on $T$ to leading order.
A result in the same spirit as Theorem~\ref{thm:ApprScatGen} was given in \cite[Lemma 1]{St} but with a non-optimal error bound, which would not be sufficient for our purpose.

Using the approximation result above, we need to prove a Lee-Huang-Yang formula for $L^1$-potentials 
but with an explicit dependence on the parameters of the problem. That is the result of the next theorem, which also allows for potentials whose support has large but finite diameter. This is needed in order to approximate potentials as in Theorem~\ref{thm:LHY-hc-gen} that are allowed to have a long tail.

\begin{theorem}\label{thm:LHY}
For all $\kappa \in (0, \frac{1}{4}), {\mathcal C} >0$,
there exist constants $C>0$, $\eta_1 \in (0, \frac{1}{4})$ (depending only on $\kappa, {\mathcal C}$) such that if $0 \leq v \in L^1({\mathbb R}^3)$ is 
spherically symmetric with support in $\overline{B(0,R)}$, and such that $a:=a(v), R, \int v$ and the density $\rho$ satisfy that 
\begin{align}\label{eq:Assumptionsv}
\rho a^3 \leq  C^{-1}, \qquad
\frac{R}{a} \leq  {\mathcal C} (\rho a^3)^{\kappa - \frac{1}{2}},  \qquad
{\mathcal R} \leq   {\mathcal C} (\rho a^3)^{-\eta_1 - \frac{1}{2}},
\end{align}
where
\begin{equation}\label{eq:DefCalR}
{\mathcal R}  :=\frac{1}{8\pi
a} \int v.
\end{equation}
Then
\begin{equation}\label{eq:LHY}
e(\rho) \geq 4 \pi a \rho^2 \left( 1 + \frac{128}{15\sqrt{\pi}}\sqrt{\rho a^3} - 
C (\rho a^3)^{\frac{1}{2} + \eta_1}  \right).
\end{equation}
\end{theorem}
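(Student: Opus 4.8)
The plan is to follow the strategy of \cite{FS}, keeping careful track of the dependence of all error terms on the parameters $R/a$ and ${\mathcal R}$, and then to insert a handful of new estimates that survive when these two quantities are allowed to be moderately large (polynomially in $(\rho a^3)^{-1}$ rather than bounded). First I would pass to a box of side length $\ell$ with $\rho a^3 \ll (\ell/a)^{-3}$ chosen at an intermediate scale, use the standard superadditivity/localization argument to reduce the thermodynamic lower bound to a lower bound on a single box with Neumann boundary conditions, and fix a particle number $n$ with $n/\ell^3$ close to $\rho$. The main object is then the $c$-number-substituted, particle-number-restricted Hamiltonian on Fock space, to which one applies the Bogoliubov-type diagonalization. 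The key algebraic input is the splitting of the interaction into a high-momentum part, which after a Neumann--Dirichlet bracketing and use of the scattering equation produces the leading $4\pi a\rho^2$ term together with a gain that controls the kinetic energy of excited particles, and a low-momentum part, which is treated by the Bogoliubov transformation and produces the LHY correction $\tfrac{128}{15\sqrt\pi}(\rho a^3)^{1/2}$.

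The steps, in order, would be: (i) choose the scales --- a short length $s$ with $a \ll s$, the box size $\ell$, a momentum cutoff $k_H$ separating ``high'' and ``low'' momenta, and the Bogoliubov cutoff $k_L$ --- and express all admissible choices as powers of $\rho a^3$, so that conditions \eqref{eq:Assumptionsv} become the precise constraints under which the exponents can be balanced; (ii) perform the \emph{a priori} estimates on the number of excited particles and on $\sum_{p\neq 0} p^2 a_p^* a_p$, using positivity of $v$ and the $3$-body/$4$-body estimates of \cite{FS}, here tracking how the bound degrades with ${\mathcal R}$ and $R/a$; (iii) carry out the high-momentum analysis, replacing $v$ by its scattering solution on the relevant scale and absorbing the cost of the replacement into the kinetic gain --- this is where the support radius $R$ enters, through the requirement that $R$ be small compared to the localization scale, i.e.\ the second inequality in \eqref{eq:Assumptionsv}; (iv) diagonalize the remaining quadratic form in the low-momentum modes by a (unitarily implemented) Bogoliubov transformation, extract $e_{\mathrm{LHY}}$, and bound the third- and higher-order terms using the a priori estimates from (ii); (v) collect all errors, each of the form $C(\rho a^3)^{1/2}\cdot(\rho a^3)^{\delta_j}$ with $\delta_j>0$ a function of the chosen exponents, and verify that the exponents can be chosen (with $\eta_1$ the resulting minimum of the $\delta_j$) so that \eqref{eq:LHY} holds, for $\rho a^3$ smaller than a constant $C^{-1}$ depending only on $\kappa$ and ${\mathcal C}$.

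The main obstacle is step (ii)--(iii): without the $L^1$-smallness that was assumed in \cite{FS}, the naive bounds on the interaction energy of the condensate with the excited cloud, and on the three-particle terms, are no longer $o(\rho a^2)$ per unit volume; one must instead exploit that the \emph{scattering length} $a$ (not $\int v$) governs the physically relevant energy, by systematically substituting the zero-energy scattering solution and using that $v$ acts only on the short scale $R\le {\mathcal C}(\rho a^3)^{\kappa-1/2}a$. Concretely, every place where $\int v$ appeared in \cite{FS} must be re-derived so that it appears multiplied by a compensating small factor coming from the probability that two particles are within distance $R$, which is $\sim \rho R^3$; the condition ${\mathcal R}\le {\mathcal C}(\rho a^3)^{-\eta_1-1/2}$ is exactly what makes $\rho R^3 {\mathcal R}$, and the analogous combinations, still a positive power of $\rho a^3$. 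Balancing this against the kinetic gain from the high-momentum modes, which is only of order $k_H^2$ times the number of excitations, is the delicate quantitative heart of the argument and dictates the admissible range of $\kappa$ and the final value of $\eta_1$.
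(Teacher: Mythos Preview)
Your broad architecture --- localization to boxes, potential splitting via the scattering solution, $c$-number substitution, Bogoliubov diagonalization, control of the $3Q$-term --- matches the paper's. However, you misidentify the precise bottleneck and your proposed fix does not work.

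The place where $\int v$ enters irreducibly in \cite{FS} is the \emph{localization of large matrices} step (controlling moments of $n_+$), where one needs
\[
\Big(\tfrac{1}{a}\textstyle\int v\Big)^{1/2}(\rho a^3)^{-1/4}\ll \mathcal{M}\ll (\rho a^3)^{-1/2}.
\]
Once ${\mathcal R}=\frac{1}{8\pi a}\int v$ is allowed to exceed $(\rho a^3)^{-1/2}$, this window closes and no choice of $\mathcal{M}$ works. Your proposed compensation --- multiplying $\int v$ by the ``probability of proximity'' $\rho R^3$ --- does not rescue this: under \eqref{eq:Assumptionsv} one has
\[
\rho R^3\,{\mathcal R}\;\le\;{\mathcal C}^4(\rho a^3)^{1+3(\kappa-\tfrac12)-\eta_1-\tfrac12}
= {\mathcal C}^4(\rho a^3)^{\,3\kappa-1-\eta_1},
\]
and since $\kappa<\tfrac14$ the exponent $3\kappa-1-\eta_1$ is \emph{negative}. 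So this combination diverges, and the analogous combinations you would need in the $d_1,d_2$ estimates are no better. More fundamentally, the off-diagonal terms in the large-matrix localization are not controlled by two-body short-range probabilities; they involve $Q_H'$--$Q_H'$ pieces of the potential that feel the full size of $\int v$.

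The paper's actual resolution is different and is the main technical novelty: instead of localizing $n_+$, one localizes only the \emph{low-momentum} excitation number $n_+^{\mathrm L}$ (defined via a spectral projection of the kinetic energy). This requires (i) a modified kinetic energy with an $\varepsilon_N(-\Delta^{\mathcal N})$ term ensuring uniform ellipticity on the box, (ii) an operator bound $\|(Q_L'\otimes I)\,w\,(Q_L'\otimes I)\|\lesssim \err (K_H')^3\ell^{-3}\int v$ (Lemma~\ref{lem:LowMomentaOperator}), and (iii) new estimates on the pieces $d_1^L,d_2^L$ that change $n_+^{\mathrm L}$, using the a~priori bound on ${\mathcal Q}_4^{\rm ren}$ to handle the $Q_H'Q_H'$ completions (Lemma~\ref{lem:Estimate_dLs}). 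The upshot is a relaxed window
\[
\Big(\tfrac{1}{a}\textstyle\int v\Big)^{1/2}(\rho a^3)^{-3/16}\ll \mathcal{M}^{\mathrm L}\ll (\rho a^3)^{-1/2},
\]
which \emph{is} compatible with ${\mathcal R}\le{\mathcal C}(\rho a^3)^{-\eta_1-1/2}$ for small $\eta_1$. Downstream, one must also verify that all later uses of the moment bound (notably the $3Q$ ``soft-pair'' control in Section~\ref{sec:Precise}) need only $n_+^{\mathrm L}$, not $n_+$; this is a second nontrivial observation that your outline does not address.
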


\begin{remark}
Our proof gives the choice 
\begin{align}\label{eq:Choiceeta2}
\eta_1:=\frac{1}{5} \min\{ \frac{1}{928}, \frac{\kappa}{11}\}.
\end{align}
\end{remark}

We end this introduction by showing how Theorem~\ref{thm:LHY-hc-gen} follows from Theorem~\ref{thm:LHY} and by a remark detailing the formal definition of the Hamiltonian ${\mathcal H}_N(v)$ for potentials as singular as $v_{\text{hc}}$.
In the following Section~\ref{sec:Desc}, we will give an overview the paper, in particular of the proof of Theorem~\ref{thm:LHY}.

\begin{proof}[Proof of Theorem~\ref{thm:LHY-hc-gen}]
Choose $\kappa = \frac{\varepsilon_0}{4(1+\varepsilon_0)}$ and ${\mathcal C} = 2$.
Let $\eta_1$ be given by Theorem~\ref{thm:LHY} for the chosen $\kappa, {\mathcal C}$ and define
$T = (\rho a^3)^{-\frac{1}{2} - \eta_1}$.
Let $v$ satisfying the assumptions of Theorem~\ref{thm:LHY-hc-gen} be given and let $a=a(v)$ be its scattering length.
By the trivial bound $e(\rho) \geq 0$, and by choosing $C$ larger than some universal constant it suffices to consider small values of $\rho a^3$.
Choose, using Theorem~\ref{thm:ApprScatGen}, an $L^1$-potential with compact support, $\widetilde{v}$ (depending on $T$) such that  \eqref{eq:UnifBoundaThm} is satisfied.

Let $R = a (\rho a^3)^{-\frac{1}{2} + \kappa}$, and define
$v' = \widetilde{v} \one_{\{|x| \leq R \}}$.
Notice that $R \geq S_0 a$ for sufficiently small values of $\rho a^3$.
By \eqref{eq:ScatInt}, \eqref{eq:decay} and the choice of $\kappa$, we have
\begin{align}
a(\widetilde{v} - v') \leq \frac{1}{8\pi} \int  \widetilde{v} \one_{\{|x| \geq R \}} \leq  C (R/a)^{-1-\varepsilon_0} a= C (\rho a^3)^{\frac{1}{2} + \frac{\varepsilon_0}{4}}a.
\end{align}
Therefore,
\begin{align}\label{eq:CollectScattering}
a= a(v) &\leq a(\widetilde{v}) + C a (\rho a^3)^{\frac{1}{2} + \eta_1} \nonumber \\
&\leq a(v') + a(\widetilde{v} - v') + C  (\rho a^3)^{\frac{1}{2} +\eta_1}a\nonumber \\
&\leq a(v') + C\left( (\rho a^3)^{\frac{1}{2} + \frac{\varepsilon_0}{4}} +  (\rho a^3)^{\frac{1}{2} + \eta_1}\right)a.
\end{align}
Also, by construction, $v'$ has support in $\{|x| \leq R\}$ and $\frac{1}{ 8 \pi}\int v' \leq T a$. 
So $v'$ satisfies \eqref{eq:Assumptionsv} with ${\mathcal C} = 2$ (for $\rho a^3$ sufficiently small).
Using monotonicity of the energy as a function of the potential and Theorem~\ref{thm:LHY}, 
we find
\begin{align}
e(\rho,v) \geq e(\rho,v') 
&\geq 4\pi\rho^2 a(v')\left(1 +\frac{128}{15\sqrt{\pi}} \sqrt{\rho a(v')^3}
- C (\rho a(v')^3)^{\frac{1}{2} + \eta_1} \right) .
\end{align}
At this point we insert \eqref{eq:CollectScattering} to get \eqref{eq:LHY-hc-gen} (with $\eta = \min(\frac{\varepsilon_0}{4}, \eta_1)$).
\end{proof}

\begin{remark}[Definition of the Hamiltonian]
Since we allow for very singular but positive potentials, we recall the precise definition of the realization of ${\mathcal H}_N(v)$ as a self-adjoint operator.

Consider the quadratic form 
\begin{align}
Q_N(\Psi) = \int_{\Lambda^N} \sum_{j=1}^N |\nabla_j \Psi|^2 + \sum_{i<j} v(x_i-x_j) |\Psi|^2\,dx,
\end{align}
defined on the domain
\begin{align}
D(Q_N):= \Big\{ \Psi \in H_0^1(\Omega^N)\,\Big{|}\, \Big(\sum_{i<j} v(x_i-x_j)\Big)^{1/2} \Psi \in L^2(\Lambda^N) \Big\}.
\end{align}
Consider furthermore,
\begin{align}
{\mathcal H}_Q = \overline{ \{  \Psi \in L^2(\Omega^N) \,|\, Q_N(\Psi) + \| \Psi \|_2^2 < \infty \}},
\end{align}
where the closure is taken in $L^2(\Omega^N)$.
Since ${\mathcal H}_Q$ is a closed, linear subspace of the Hilbert space $L^2(\Omega^N)$, ${\mathcal H}_Q$ becomes a Hilbert space in its own right having $D(Q_N)$ as a dense subspace.
Therefore, clearly $Q_N$ defines a densely defined, Hermitian, quadratic form in the Hilbert space ${\mathcal H}_Q$. It is straight forward to check that $Q_N$ is a closed form and it therefore follows that $Q_N$ defines a unique self-adjoint operator ${\mathcal H}_N$.
\end{remark}

\paragraph{Acknowledgements.}
SF was partially supported by a Sapere Aude grant from the
Independent Research Fund Denmark, Grant number DFF--4181-00221, by the Charles Simonyi Endowment, and by an EliteResearch Prize from the Danish Ministry
of Higher Education and Science.
JPS was partially supported by the Villum Centre of Excellence for the
Mathematics of Quantum Theory (QMATH). 

\section{Description of the proof}\label{sec:Desc}

In this section we give an overview of the paper, most notably the proof of Theorem~\ref{thm:LHY}.

From a fundamental point of view, the work provides a rigorous understanding of the correctness and the limits of the Bogoliubov pairing theory.

Compared to our previous article \cite{FS}, we have to deal with the complication that the potential can have large $L^1$-norm and support. In particular, this causes our control of high moments of the excitations (the localization of large matrices argument) to break down. It turns out that a substitute for this is to bound only high moments of the low-frequency excitations.
To better describe this complication, we first give the common outline of the present article and \cite{FS} (see \cite[Section 2]{FS} for a more detailed description) 
and below comment on their differences.

Section~\ref{scattering} gives background on the scattering length $a$ and related quantities. 
In this paper we consider very general potentials, for example of hard core type. We need to approximate these potentials from below with a control of the relative difference in scattering length. Therefore, 
a key new result is Theorem~\ref{thm:ApprScatGen}, which we prove in Section~\ref{scattering}.

A main challenge for the Bogoliubov approach to work is to establish Bose-Einstein condensation for the interacting Bose gas. This is a main challenge in mathematical physics, and at present out of reach in the setting of the thermodynamic limit. However, we can prove condensation---with optimal estimates on the condensate depletion---on length scales much {\it shorter} than the healing length $1/\sqrt{\rho a}$. In order to get the LHY-term correct, it is necessary to consider length scales much {\it longer} than the healing length. The way out of this apparent problem is to make a double localization, first to length scales much longer than the healing length (the length scales $\ell$ to be introduced in \eqref{eq:def_ell} below). On these 'large' boxes, one then carries out a second localization in order to obtain almost optimal a priori information on condensation and the condensate fraction.
This double localization has to 1) keep the condensate untouched; and 2) allow for the algebraic structure of the Bogoliubov approach to remain intact. These two constraints unfortunately result in the localized kinetic energy being somewhat complicated (see \eqref{eq:DefT_new} below). This causes several technical complications in the rest of the proof.
Since we here allow the potentials to have a long range, the analysis in the small boxes has to be improved compared to \cite{FS}. We can therefore not simply refer to \cite{FS} but have to carry out the necessary analysis in detail in Appendix~\ref{SmallBoxes}.
For further results on condensation 

As a technical tool in order to avoid keeping track of how the particles distribute themselves between the localized boxes, we reformulate the problem in a grand canonical setting with a chemical potential term. This and the localization procedure are carried out in Section~\ref{sec:Box} below.

A two-body potential can be thought of as producing two outgoing momenta starting from two incoming momenta, a process involving a total of $4$ momenta. The condensate being described by the particles having momentum $0$, we can accordingly split the potential into $0Q$, $1Q$,\ldots, $4Q$-terms, labelled according to the number of non-zero momenta involved. 
Section~\ref{sec:Box} contains the crucial Lemma~\ref{lm:potsplit} which uses the positivity of the potential to estimate it from below by an effective interaction where only $0Q$ to $3Q$ terms appear.
At the same time, this estimate renormalizes the interaction so that from that point on all potentials will have $L^1$-norm controlled by the scattering length.

Section~\ref{sec:Apriori} starts by establishing a priori estimates on, among other things, the condensation in the large boxes. 
Part of this analysis is the second localization to small boxes, which is carried out in Appendix~\ref{SmallBoxes}.
Furthermore, a number of estimates needed to control the excitation of particles outside the condensate are established in this section. Here a fundamental difference between the current paper and \cite{FS} appears, which will be discussed further below. The actual control of the particles outside the condensate is the content of Section~\ref{sec:LocMatrices}.

It is an important insight from \cite{YY, FS} and the current paper that a correct treatment of the $3Q$-term is essential to the precision of the LHY-term: The so-called 'soft-pairs' of two high momenta (of magnitude $a^{-1}$) producing one zero momentum and one low momentum (of magnitude $\sqrt{\rho a}$) or vice versa, are a part of the $3Q$-term and have to be calculated precisely. However, to essentially reduce the $3Q$-term to the soft-pair contribution requires a few estimates, some of which are carried out in the short Section~\ref{sec:3Q}---since they are easiest in position-space---and some are carried out in Section~\ref{sec:Second}, being easier in $2$nd quantized formalism.

In Section~\ref{sec:Second} we reformulate the problem on the large boxes in terms of $2$nd quantization. This is the setting in which the Bogoliubov diagonalization of quadratic operators is well understood. Furthermore, in Section~\ref{sec:Second} we apply the technique of $c$-number substitution, which allows to reduce our quantum mechanical problem to a family of problems in the non-condensed particles alone, parametrized by the number of particles in the condensate, denoted by $\rho_z \ell^3$.
After this, we need to establish rough bounds on $\rho_z \ell^3$, which is the content of Section~\ref{sec:rough}.
This sets the stage for the precise calculation of the ground state energy in the large boxes in Section~\ref{sec:Precise}. Here the quadratic Bogoliubov-type Hamiltonian is diagonalized, yielding the right energy to LHY-precision. Several terms are left out in this calculation, some are positive and some are error terms, of which the most severe corresponds to the soft-pairs from the $3Q$-term. It is another main insight of this paper and \cite{FS} that these soft pairs can be controlled by the positive terms, in particular, by part of the Bogoliubov-diagonalized Hamiltonian.

In the short Section~\ref{sec:proofCombined} the estimates are combined to give the proof of the LHY-asymptotical formula Theorem~\ref{thm:LHY}. Several appendices contain technical details.
Of particular importance is Appendix~\ref{sec:params}: The proof involves a number of parameters that have to be chosen large or small and have to satisfy several relations between them. Appendix~\ref{sec:params} lists the relations between these parameters and also the concrete choice made at the end in order to finish the proof of Theorem~\ref{thm:LHY}.

In the previous work \cite{FS} we proved the lower bound for the Lee-Huang-Yang correction for potentials with bounded $L^1$-norm. As we pointed out in that paper, there was one essential inequality, where $\int v$ appeared (instead of the smaller quantity $a$), namely \cite[(7.4) in Theorem 7.1]{FS} which was an essential ingredient in the localization of large matrices argument. This is needed in order to control moments of $n_{+}$---the operator counting the number of excitations out of the condensate---which is crucial for the finer analysis in later sections.
One of the main observations in the present work, is that it suffices to make a localization of large matrices argument on a momentum localized version of $n_{+}$---counting the excitations in the low momenta, denoted by $n_{+}^{\rm L} $ and defined in \eqref{eq:def_nplusL}.
In particular, it is important that in the control of the 'soft-pairs' in Section~\ref{sec:Precise} it suffices to have the localization of large matrices result for $n_{+}^{\rm L}$.

To get sufficiently sharp bounds to carry through the localization of large matrices for $n_{+}^{\rm L} $ requires quite a bit of technical work (see in particular Appendix~\ref{sec:proofs}).
It is worth noticing that this part of the analysis would have been substantially easier---and the resulting estimates sharper---if the localized problem had been periodic with the standard periodic Laplacian as kinetic energy instead of the localized kinetic energy in \eqref{eq:DefT_new}. However, we do not know how to compare the problem on a periodic box to the thermodynamic limit.
Furthermore, in order for these estimates to be valid (see in particular Lemma~\ref{lem:LowMomentaOperator}), we have had to modify the kinetic energy compared to \cite{FS} essentially by extracting the extra term $\varepsilon_N (-\Delta^{\mathcal N})$ in \eqref{eq:DefT_new} to assure a degree of uniform ellipticity on the box. 
So the localized operator (on the large box) in the present paper is different from the one of \cite{FS}.

Also, let us give some idea as to why it appears necessary to work with a different quantity than $n_{+}$:
The localization of large matrices allows us to work with states $\Psi$ satisfying for suitable ${\mathcal M}>0$ that $\one_{(0,{\mathcal M}]}(n_{+}) \Psi = \Psi$, which clearly gives us control of expectations of powers of $n_{+}$ in the state $\Psi$.
Since $\Psi$ is an $n$-particle state, one can always take the trivial choice ${\mathcal M} = n$, but in order to be useful for our purposes we need to improve that bound by a non-trivial power of our small parameter $\rmu a^3$.

We found in  \cite{FS}  that $\M$ had to be sufficiently {\it large} in order to localize $n_+$ with a localization error of smaller order than LHY and had to be sufficiently {\it small} in order to control expectations of powers of $n_+$ in the later estimates of the proof. The two conditions we found in  \cite{FS}  (see \cite[(5.9),(5.10), and (5.12)]{FS}) are \begin{equation}\label{eq:oldMest}
(\int v/a)^{1/2}(\rmu a^3)^{-1/4}\ll \M\ll (\rmu a^3)^{-1/2}.
\end{equation}
Here and in the rest of the paper $f\ll g$ is used in the precise meaning that $(f/g)\leq (\rmu
a^3)^{\varepsilon}$ for some positive $\varepsilon$ and likewise for $f \gg g$.
The first condition in \eqref{eq:oldMest} ensures that the error in \cite[Lemma~8.3]{FS} is smaller than the LHY order. The second condition is not only needed in  \cite{FS}  but also in this paper (see estimate \eqref{cond:KH3-n} on $\M^{\rm L}$ used in 
Lemma~\ref{lem:Q3-splitting2} below).

To prove the LHY asymptotics for the large class of potentials we consider here, in particular, the hard core potential, we need to approximate by $L^1$ potentials in such a way that the scattering length is approximated to order $o(\sqrt{\rmu a^3})$. From 
Theorem~\ref{thm:ApprScatGen} we see that this would require $\int v/a \gg (\rmu a^3)^{-1/2}$, which is in contradiction with
\eqref{eq:oldMest}.

In this paper we use two observations to circumvent this problem. Firstly, as mentioned above we only need to control fluctuations of the number of low momentum excitations $n_+^{\rm L}$. Secondly, the first condition in \eqref{eq:oldMest}, required to localize fluctuations, can be improved if we only have to localize the low momentum fluctuations $n_+^{\rm L}$. In fact, it follows from 
Proposition~\ref{prop:LocMatrices} below that the conditions in this paper replacing \eqref{eq:oldMest} are
$$
(\int v/a)^{1/2}(\rmu a^3)^{-3/16}\ll \ML \ll (\rmu a^3)^{-1/2}.
$$
Here $\ML$ controls $n_+^{\rm L}$ in essentially the same way that
$\M$ controlled $n_+$ in  \cite{FS}. These conditions are now adequate for the $L^1$ approximation.

\section{Facts about the scattering solution in ${\mathbb R}^3$}\label{scattering}
\subsection{Basic theory}\label{basic}
In this section we establish notation and results concerning the scattering length and associated quantities. 
For simplicity we restrict ourselves to the $3$-dimensional situation. 
We refer to \cite[Appendix C]{LSSY} for more details. We will always assume that the potential $v:  {\mathbb R}^3 \rightarrow [0,+\infty]$ is radial and positive.
Often, but not always, we will also assume that $v$ has compact support, i.e. $v(x) =0$ unless  $|x| \leq R$, for some $R>0$.

\begin{definition}\label{def:scatteringlength}
Given a potential $v$ with compact support,
the {\it scattering length} $a=a(v)$ is defined by
\begin{align}\label{eq:ScatLengthMin}
\frac{4\pi a}{1- a/\widetilde{R}} = \inf_{
\{ \phi \in H^1(B(0,\widetilde{R}): \phi_{|x|=\widetilde{R}} =1
\} }
\Big\{ \int_{\{|x|\leq \widetilde{R}\} }|\nabla \phi(x)|^2 + \frac{1}{2}v(x) |\phi(x)|^2\,dx
\Big\}.
\end{align}
Here $\widetilde{R} > R$ is arbitrary.
\end{definition}

It follows from an analysis of the minimisation problem that $a$ is independent of the choice of $\widetilde{R} > R,$ and satisfies 
\begin{equation}
\label{eq: upper bound on scattering length finite range}
a\leq R.\end{equation}
By choosing $\varphi=1$ as a variational state, one gets
\begin{align}\label{eq:ScatInt}
a \leq \frac{1}{8\pi} \int v.
\end{align}

Furthermore, there is a unique minimizer $\phi_{v,\widetilde{R}}$ in \eqref{eq:ScatLengthMin}, which is radial and satisfies
$\phi_{v,\widetilde{R}}= \left(1-a/\widetilde{R}\right)^{-1} \phi_v(x)$.
Here the function $\phi_v(x)$ is independent of $\widetilde{R}$, radial, non-negative, monotone non-decreasing as a function of $|x|$ and satisfies (in the sense of distributions on the set where $v$ is $L^1_{\mathrm{loc}}$)
\begin{equation}\label{eq:scattering_equation}
-\Delta \phi_v + \frac{1}{2} v \phi_v = 0.
\end{equation}
Furthermore,
\begin{equation}\label{eq:ScatOutside}
\phi_v(x) = 1 - \frac{a}{|x|}, \qquad \text{for }|x| \geq R.
\end{equation}

We introduce the notation
\begin{equation}
\varphi_v = 1 - \omega_v.
\end{equation}
When there is no possible confusion, we will drop the subscript and write $\varphi = \varphi_v$ and $\omega=\omega_v$.
By the properties of $\varphi$, we find that
$\omega(x) = a/|x|$ for $x$ outside
$\supp\, v$.
Furthermore, $\omega$ is radially symmetric and
non-increasing with (see \cite[Appendix C]{LSSY})
\begin{align}
	0\leq \omega(x)\leq \min\left(\frac{a}{|x|}, 1\right) .\label{omegabounds}
\end{align}
We introduce the function
\begin{equation}\label{eq:gdef}
g := v(1-\omega).
\end{equation}
The scattering equation \eqref{eq:scattering_equation} can be reformulated as
\begin{align}
\label{eq:Scattering3}
-\Delta \omega = \frac{1}{2} g.
\end{align}
From this we deduce that, if $v \in L^1_{\rm loc}$ so that \eqref{eq:scattering_equation} is valid on all of ${\mathbb R}^3$, then
\begin{align}\label{eq:ScatLengthBasic}
a = (8\pi)^{-1} \int g,
\end{align}
and the Fourier transform satisfies
\begin{equation}\label{es:scatteringFourier}
\widehat{\omega}(k) = \frac{\hat{g}(k)}{2 k^2}.
\end{equation}
\subsection{Potentials without compact support}
For potentials that do not have compact support, the scattering length is defined as the limit of the scattering lengths for a sequence of localized versions, i.e.
\begin{align}\label{eq:InfRange}
a(v) = \lim_{n \rightarrow \infty} a( \one_{\{|x| \leq n\}} v).
\end{align}
Since $v_1 \leq v_2$ implies that $a(v_1) \leq a(v_2)$ (as is immediate from Definition~\ref{def:scatteringlength}), the limit in \eqref{eq:InfRange} exists in ${\mathbb R}$ if and only if the sequence $\{ a( \one_{\{|x| \leq n\}} v) \}_n$ is bounded.
By \cite[Lemma 1]{MR2923973} this is true if and only if $v$ is $L^1$ near infinity, i.e. if and only if there exists $b>0$, such that
\begin{align*}
\int_{\{|x|\geq b \}} v(x)\,dx < \infty.
\end{align*}

\subsection{Potentials with large integral}
We end this section by proving the approximation result for potential with large $L^1$-norm, Theorem~\ref{thm:ApprScatGen}.

For a radial potential $v:{\mathbb R}^3 \rightarrow [0,+\infty]$ with compact support, it is well-known \cite[Lemma 3.2]{BFS} that a monotone convergence result holds for the scattering length, i.e. that
\begin{align}\label{eq:Scatcutoff}
a(\max\{v,n\}) \nearrow a(v).
\end{align}
By combining \eqref{eq:InfRange} and \eqref{eq:Scatcutoff}, i.e. by simultaneously cutting off the top and the tail of the potential, we reach the following conclusion: Given a radial potential $v:{\mathbb R}^3 \rightarrow [0,+\infty]$, we can define $v_n:= \one_{\{|x|\leq n\}} \max\{v,n\}$ and get
\begin{align}
a(v_n) \nearrow a(v).
\end{align}
Notice that with this definition, the potential $v_n$ has compact support and belongs to $L^1({\mathbb R}^3)$ (actually, we even have $v_n \in L^{\infty}({\mathbb R}^3)$).
However, this approximation can be too slow for our purposes. For example for the hard core potential $v_{\text{hc}}$ defined in \eqref{eq:v_hc}, we have $v_n= n \one_{\{|x| \leq R\}}$ and therefore $\int v_n =  \text{Const}\cdot n$. A calculation shows that $a(v_{\text{hc}}) - a(v_n) \approx \text{Const} \cdot n^{-1/2}$.
This means that in order to approximate the scattering length to precision $\sqrt{\rho a^3}$ (with $a = a(v_{\text{hc}}) = R$) one would have an integral $\int v_n$ of magnitude $a (\rho a^3)^{-1}$.
This would not be allowed by our method---as can for instance be seen from the error term in Theorem~\ref{thm:LHY}.

Our result in Theorem~\ref{thm:ApprScatGen} shows that one can make a better approximation than that; not only for hard core potentials, but uniformly for general potentials.
Our approximation result is optimal to leading order as can be seen from the explicit calculation in the case of the hard core potential given below in Remarik~\ref{rem:HardCore}.

\begin{proof}[Proof of Theorem~\ref{thm:ApprScatGen}]
The main approximation result is given below as Lemma~\ref{lem:ApprScat} and is valid for $L^1$-potentials with compact support. Here we only make the simple reduction to that case.

Let $v$ be as in the theorem.
By the discussion above we can approximate $v$ arbitrarily well from below by potentials with finite $L^1$-norm and compact support. So for all $\delta>0$, there exists a radial potential $v' \in L^1({\mathbb R}^3)$ with compact support and with $0 \leq v' \leq v$ and such that 
\begin{align}\label{eq:primitive}
a(v') \leq a(v) \leq a(v') + \delta a(v).
\end{align}
Applying Lemma~\ref{lem:ApprScat} to $v'$ we find $v_T$ satisfying \eqref{eq:lowerpt} and \eqref{eq:UnifBoundaThm} as well as 
\begin{align}
a(v') \leq a(v_T) +  \Big(1+\frac{\sqrt{5}}{\sqrt{T}}\Big)  T^{-1} a(v') \leq a(v_T) +  \Big(1+\frac{\sqrt{5}}{\sqrt{T}}\Big) T^{-1} a(v).
\end{align}
Combining this with \eqref{eq:primitive} and using that $\delta$ was arbitrary finishes the proof.
\end{proof}

\begin{lemma}\label{lem:ApprScat}
Suppose that $v: {\mathbb R}^3 \rightarrow [0,+\infty]$ is radial and of class $L^1$ and that there exists $R>0$ such that $v(x) = 0$, for all $|x| > R$.
Then for all $T>1$ there exists $v_T \in L^1({\mathbb R}^3)$, with 
\begin{align}
0 \leq v_T(x) \leq v(x),
\end{align}
for all $x \in {\mathbb R}^3$, and such that
\begin{align}\label{eq:UnifBounda}
(8\pi)^{-1} \int v_T &\leq T a(v), \\
a(v_T) &\geq a(v) \Big(1- \Big(1+\frac{\sqrt{5}}{\sqrt{T}}\Big)  T^{-1}\Big).\label{eq:UnifBounda_itself}
\end{align}
\end{lemma}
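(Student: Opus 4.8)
My plan is to construct $v_T$ by truncating the potential $v$ to the region where the scattering solution has already ``done most of its work''. Recall that for the scattering solution we have $\varphi_v = 1-\omega_v$ with $\omega_v$ radial, non-increasing, $0\le\omega_v\le\min(a/|x|,1)$, and that $a = (8\pi)^{-1}\int g$ with $g = v\varphi_v = v(1-\omega_v)$. The idea is that if $\omega_v(x)$ is small — say $\omega_v(x)\le \theta$ for some threshold $\theta$ — then on that region $g \approx v$, so keeping only $v$ restricted to $\{\omega_v\le\theta\}$ should retain nearly all of the ``scattering mass'' $\int g = 8\pi a$, while at the same time $\int v$ over that region is controlled because $\int v(1-\omega_v)\ge (1-\theta)\int_{\{\omega_v\le\theta\}} v$ there; hence $\int_{\{\omega_v\le\theta\}} v \le (1-\theta)^{-1}\cdot 8\pi a$. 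So I would set
\begin{equation}
v_T := v\,\one_{\{\omega_v \le \theta\}}
\end{equation}
for a suitable $\theta = \theta(T)$ to be optimized at the end. By monotonicity of the scattering length in the potential, $v_T\le v$ gives $a(v_T)\le a(v)$, and the nontrivial content is the lower bound on $a(v_T)$. Note $\{\omega_v\le\theta\}$ is the exterior of a ball (since $\omega_v$ is radial non-increasing), so $v_T$ automatically has compact support contained in $\supp v$; it also kills a neighborhood of the origin, which is irrelevant for an $L^1$ bound.

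For the $L^1$ bound: on $\{\omega_v\le\theta\}$ we have $v\varphi_v = v(1-\omega_v)\ge (1-\theta)v$, so
\begin{equation}
(1-\theta)\int v_T \le \int_{\{\omega_v\le\theta\}} v\varphi_v \le \int g = 8\pi a(v),
\end{equation}
giving $(8\pi)^{-1}\int v_T \le (1-\theta)^{-1} a(v)$, which we want to be $\le T a(v)$; so we need $(1-\theta)^{-1}\le T$, i.e. $\theta \le 1 - 1/T$.

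For the lower bound on $a(v_T)$, the plan is to use $\varphi_v$ (suitably cut off at a large radius to be admissible in Definition~\ref{def:scatteringlength}) as a trial function in the variational principle for $a(v_T)$, or more cleanly to compare scattering solutions directly. Concretely I expect
\begin{equation}
8\pi a(v_T) \ge \int v_T\,\varphi_{v_T} \quad\text{vs.}\quad 8\pi a(v) = \int v\,\varphi_v,
\end{equation}
and the difference $8\pi(a(v)-a(v_T))$ should be bounded by (i) the mass of $g = v\varphi_v$ that lives on the discarded region $\{\omega_v>\theta\}$, plus (ii) an error from replacing $\varphi_v$ by $\varphi_{v_T}$. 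For (i): on $\{\omega_v>\theta\}$ we have a ball of radius $r_\theta$ with $\omega_v(x)> \theta$, and since $\omega_v(x)=a/|x|$ outside $\supp v$, or more generally $\omega_v$ is bounded by $a/|x|$, the constraint $\omega_v>\theta$ forces $|x|\lesssim a/\theta$; but I actually want an upper bound on $\int_{\{\omega_v>\theta\}} g$. Using $-\Delta\omega_v = \tfrac12 g \ge 0$ and that $\omega_v(x)=a/|x|$ for large $|x|$, the total mass $\int g = 8\pi a$ is distributed so that the ``tail mass'' $\int_{\{|x|> s\}} g$ can be estimated; the cleanest route is: if $r_\theta := \sup\{|x|: \omega_v(x)>\theta\}$ then since $\omega_v$ is harmonic outside $\supp v$ and equals $a/|x|$ there, while monotone, one gets $r_\theta \ge a/\theta$ is false in general — rather $\omega_v(r_\theta)=\theta$ forces, by $\omega_v(x)\le a/|x|$, that $\theta \le a/r_\theta$, i.e. $r_\theta\le a/\theta$. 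Then $\int_{\{|x|\le r_\theta\}} g \le $ (by the representation $\omega_v(0) = \frac{1}{8\pi}\int \frac{g(y)}{|y|}\,dy$ and $\omega_v(0)\le 1$, combined with $g\ge0$ supported appropriately) can be turned into a bound of the form $\int_{\{|x|\le r_\theta\}} g \le C r_\theta \le C a/\theta$ — this dimensional balance ($g$ integrates to $8\pi a$, ``radius'' $a/\theta$) is consistent with the claimed $T^{-1}$ and $T^{-1/2}$ error terms once $\theta\sim 1/T$. The $\sqrt 5$ and the $T^{-3/2}$-type correction will come out of optimizing the two competing error contributions.

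The main obstacle I anticipate is controlling the discarded scattering mass $\int_{\{\omega_v>\theta\}} g$ sharply enough to get the stated error $(1 + 2\sqrt5/\sqrt T)T^{-1}$ — a crude bound gives the right order $\theta$ but likely a worse constant. The key structural facts to exploit are the Poisson-type representation $\omega_v(x) = \frac{1}{8\pi}\int \frac{g(y)}{|x-y|}\,dy$ (from $-\Delta\omega_v=\tfrac12 g$), the monotonicity and the pointwise bound $\omega_v\le a/|x|$, together with the fact that removing $v$ from a ball around the origin only \emph{increases} the exterior part of the solution — so the comparison $a(v_T)$ vs $a(v)$ is one-sided in a usable way. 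A secondary technical point is making the trial-function argument for $a(v_T)$ rigorous for merely-$L^1$ potentials (cutting $\varphi_v$ off at radius $\widetilde R\to\infty$ and checking the boundary normalization), but this is routine given the background in Section~\ref{scattering}. Finally I would verify the claimed constants by carrying out the one-variable optimization over $\theta\in(0,1-1/T]$, which should land exactly on $\theta$ slightly below $1-1/T$ and reproduce \eqref{eq:UnifBounda} and \eqref{eq:UnifBounda_itself}.
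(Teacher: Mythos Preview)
Your construction (restricting $v$ to the exterior of a ball) is the same geometric move as the paper's, but your parametrization via the level set $\{\omega_v\le\theta\}$ together with the ``discarded $g$--mass'' estimate does not yield the claimed lower bound on $a(v_T)$; there is a genuine gap.

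The failure is easiest to see on the near--hard--core family $v_n=n\,\one_{\{|x|\le 1\}}$. Here $a_n\to 1$ and $\varphi_{v_n}(1)=1-a_n\to 0$; since $\varphi_{v_n}$ is nondecreasing, $\varphi_{v_n}\le 1-a_n$ on all of $\supp v_n$. Hence $\omega_{v_n}\ge a_n>1-1/T=\theta$ on the whole support once $n$ is large, so your $v_T=v_n\one_{\{\omega_{v_n}\le\theta\}}=0$ and $a(v_T)=0$. More generally, your inequality $8\pi(a-a_T)\le\int_{\{\omega_v>\theta\}}g$ is correct (the cross term is indeed $\le 0$ by $\varphi_v\le\varphi_{v_T}$), but the only bound your argument supplies for the right–hand side is $\int_{\{|x|\le r_\theta\}}g\le 8\pi r_\theta\theta\le 8\pi a$, which is trivially $a-a_T\le a$. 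No choice of $\theta\le 1-1/T$ improves this, because the constraint only bounds $\int v_T$ from above and says nothing about how much $g$--mass you have discarded.

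The paper's proof fixes both issues by reversing the comparison: it chooses $R_T$ so that $\int v_T=8\pi T a$ \emph{exactly}, and then uses the scattering solution $\varphi_T$ of the \emph{truncated} potential as (the main ingredient of) a trial state for $a(v)$. Because $8\pi a_T=\int v_T\varphi_T\ge\varphi_T(R_T)\int v_T=8\pi Ta\,\varphi_T(R_T)$, one gets $\varphi_T(R_T)\le 1/T$; plugging the trial function $u=\one_{\{|x|\ge R_T\}}\big(\varphi_T-\tfrac{R_T\varphi_T(R_T)}{|x|}\big)$ into the functional for $a(v)$ then yields $(a-a_T)/a\le T^{-1}\big(1+\tfrac{R_T}{aT}\big)$. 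A separate short argument (comparing $\varphi_v$ with $\varphi_{v_<}$ on the inner ball) handles the complementary regime $a/R_T$ small, and optimizing over the crossover gives the stated error $(1+\sqrt5/\sqrt T)T^{-1}$. The essential point you are missing is that the smallness must come from $\varphi_T(R_T)\le 1/T$, which is produced by pinning $\int v_T=8\pi Ta$, not from any a priori smallness of the discarded $g$--mass.
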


\begin{proof}
We will write $a$ instead of $a(v)$ for simplicity.
We may assume that $(8\pi a)^{-1} \int v > T$, because if not there is nothing to prove.
Define $R_T = \inf\{ R'>0\,:\, \int_{\{|x| \geq R'\}} v \,dx < 8 \pi T a\}$, and
\begin{align}
v_< := v \one_{\{|x| \leq R_T\}}, \qquad v_T := v \one_{\{|x| > R_T\}}.
\end{align}
Clearly,
\begin{align}
\int v_T = 8\pi a T,
\end{align}
and $R_T > 0$ since $\int v > 8\pi a T$.

Let $\varphi$ be the scattering solution for the potential $v$. Similarly, we let $a_T$ be the scattering length of the potential $v_T$ and $\varphi_T$ the associated scattering solution. The same convention is used to introduce $a_{<}$ and $\varphi_{<}$.

We have from \eqref{eq:ScatLengthBasic}, using that $\varphi_T$ is a non-decreasing function,
\begin{align}
8\pi a_T = \int v_T \varphi_T \geq \varphi_T(R_T) \int v_T = 8\pi \varphi_T(R_T) T a,
\end{align}
so we find
\begin{align}\label{eq:phiRT}
\varphi_T(R_T) \leq \frac{a_T}{a} \frac{1}{T} \leq \frac{1}{T}.
\end{align}
The same argument applied to $\varphi$ gives
\begin{align}\label{eq:phiRT-2}
\varphi(R_T) \leq T^{-1}.
\end{align}

Now we choose $c=R_T \varphi_T(R_T)$ and use $u:= \one_{\{|x| \geq R_T\}} ( \varphi_T - \frac{c}{|x|})$ as a trial state in the functional for $a$.
Notice that $\varphi_T$ is constant on $\{|x| \leq R_T\}$, since it is harmonic there. Also, note that $v=v_T$ on $\{|x| \geq R_T\}$.
Therefore,
\begin{align}
4\pi a &\leq \int_{\{|x| \geq R_T\}} |\nabla u|^2 + \frac{1}{2} v |u|^2 
= E_1 + E_2 + E_3,
\end{align}
with
\begin{align}
E_1 &= \int_{\{|x| \geq R_T\}} |\nabla \varphi_T |^2 + \frac{1}{2} v_T |\varphi_T|^2 = 4\pi a_T,\nonumber \\
E_2 &= -2 c \int_{\{|x| \geq R_T\}} \nabla \varphi_T \cdot \nabla |x|^{-1} + \frac{1}{2} v_T \varphi_T |x|^{-1},\nonumber \\
E_3 &= c^2 \int_{\{|x| \geq R_T\}} |\nabla |x|^{-1}|^2 +\frac{1}{2} v_T |x|^{-2}.
\end{align}
Notice that $E_2=0$ by integration by parts, since $\Delta \varphi_T = \frac{1}{2} v_T \varphi_T$, and where the boundary term from the integration by parts disappears since $\varphi_T$ is constant on $\{|x| < R_T\}$.

We therefore find, using the monotonicity of the scattering length for the first inequality,
\begin{align}
0 \leq 4\pi (a-a_T) &\leq E_3 \leq c^2 (4 \pi R_T^{-1} + R_T^{-2} 4\pi  a T) 
\leq 4\pi \varphi_T(R_T)^2 ( R_T +   a T),
\end{align}
using the choice of $c$.

We conclude using the bound \eqref{eq:phiRT} that
\begin{align}\label{eq:smallRT}
0 \leq \frac{a-a_T}{a} \leq \frac{1}{T} \left( \frac{R_T}{a}  \frac{1}{T} + 1\right).
\end{align}

For large values of $\frac{R_T}{a}$ we will use a different inequality.
Assume that $\frac{a}{R_T} <   \lambda $, for some $\lambda  \leq 1$. 
We will choose the optimal $\lambda$ as a function of $T$ at the end of the proof.
Notice first that on $\{|x| < R_T\}$ we have $\varphi = c_0 \varphi_{<}$ with $c_0 = \varphi(R_T)/\varphi_{<}(R_T)$, since the two functions satisfy the same equation on the ball.
Also notice that since $v \geq v_T$ we have $\varphi \leq \varphi_T$ by \cite[Lemma C.2]{LSSY}.
Therefore, we can estimate
\begin{align}
8\pi a &= \int v_{<} \varphi + \int v_T \varphi \leq \frac{\varphi(R_T)}{\varphi_{<}(R_T)} 8\pi a_{<} + 8\pi a_T
= \frac{\varphi(R_T)}{1-\frac{a_{<}}{R_T}} 8\pi a_{<} + 8\pi a_T .
\end{align}
Using that $a_< \leq a$, $\frac{a}{R_T}< \lambda \leq 1$, and \eqref{eq:phiRT-2} we find
\begin{align}\label{eq:largeRT}
\frac{a-a_T}{a} \leq \frac{1}{1-\frac{a}{R_T}} T^{-1}.
\end{align}

We use \eqref{eq:smallRT} for $\frac{a}{R_T} \geq   \lambda$, \eqref{eq:largeRT} for $\frac{a}{R_T} <   \lambda $, and choose the optimal value
$\lambda=\frac{\sqrt{4T+1}-1}{2T}$.
Notice that this choice satisfies that $\lambda < 1$, since $T\geq 1$.
Inserting this optimal choice we find, for all $T \geq 1$,
\begin{align}
\frac{a-a_T}{a} \leq\Big( 1 + \frac{1+\sqrt{4T+1}}{2T}\Big) T^{-1},
\end{align}
from which the simplified version \eqref{eq:UnifBounda_itself} follows by an elementary estimate.
\end{proof}

\begin{remark}\label{rem:HardCore}
Suppose that  $v$ is the hard core potential with unit scattering length, i.e.,
\begin{align}
v(x) = \begin{cases} \infty, & |x| \leq 1, \\ 0, & |x| >1.
\end{cases}
\end{align}
Let $T\geq 1$ be given and assume that $\widetilde{v}$ is a radial $L^1$-potential with $0\leq \widetilde{v} \leq v$ (i.e. $\supp \widetilde{v} \subseteq \{ |x| \leq 1\}$) with $\int \widetilde{v} \leq 8\pi T$.
Let $\varphi_{\widetilde{v}}$ be the corresponding scattering solution. Since $\varphi_{\widetilde{v}}$ is radial and non-decreasing we find that
\begin{align}\label{eq:radialIncreasing}
4\pi a(\widetilde{v})
=
\int |\nabla \varphi_{\widetilde{v}}|^2 + \frac{1}{2} \widetilde{v} |\varphi_{\widetilde{v}}|^2 
\leq \int |\nabla \varphi_{\widetilde{v}}|^2 + \frac{1}{2} \overline{v} |\varphi_{\widetilde{v}}|^2,
\end{align}
with $\overline{v} = 8\pi T \delta_{\{|x| = 1\}}$ and $\delta_{\{|x| = 1\}}$ being the normalized surface measure on the unit sphere.
Although $\overline{v}$ is a measure (not a function), it is not difficult to generalize the discussion of Section~\ref{basic} to this case.
It follows that we can find the scattering length $a(\overline{v})$---satisfying $a(\overline{v}) \geq a(\widetilde{v})$
by \eqref{eq:radialIncreasing} and \eqref{eq:ScatLengthMin}---by solving the equation \eqref{eq:scattering_equation} subject to \eqref{eq:ScatOutside}.

In terms of the continuous function $u(r) = r \varphi(x)$, with $r = |x|$, we rewrite the equation as
\begin{align}
u''(r) = 0, \quad \text{ on } \quad \{r<1\} \cup  \{r>1\} ,
\end{align}
with the boundary conditions that $u(0)= 0$, $u(r) = r- a(\overline{v})$,  for $r>1$ and the jump condition $T u(1)= \lim_{r \searrow 1} u'(r) - \lim_{r \nearrow 1} u'(r)$ (corresponding to the $\delta$-function in the potential.
These equations are easily solved and yield $a(\overline{v})= \frac{T}{1+T}$.

In particular, we get that 
\begin{align}
a(v) - a(\widetilde{v}) = 1 -  a(\widetilde{v}) \geq \frac{1}{1+T},
\end{align}
for all potentials $\widetilde{v}\leq v$ with $\int \widetilde{v} \leq 8\pi T$.
Therefore, we can conclude that in general the error term $\Big(1+ 2\frac{\sqrt{5}}{\sqrt{T}}\Big)   T^{-1}$ in \eqref{eq:UnifBoundaThm} has the correct behavior $T^{-1}$  for large values of $T$ and with the correct coefficient ($=1$). However, the second term---in $T^{-3/2}$ in \eqref{eq:UnifBoundaThm}---can possibly be improved both in terms of the power and the coefficient.
\end{remark}

\section{Reduction to box Hamiltonian}\label{sec:Box}
To prove Theorem~\ref{thm:LHY} we will study localized problems. In order to control how particles distribute themselves among the boxes, it is convenient to introduce a chemical potential. This is done in subsection~\ref{subsec:chem}.

\subsection{Chemical potential}\label{subsec:chem}
We start by reformulating the problem grand canonically on Fock space. 
Consider, for given $\rmu >0$, the following operator ${\mathcal
  H}_{\rmu}$ on the symmetric Fock space $ {\mathcal F}_{\rm
  s}(L^2(\Omega))$. The operator ${\mathcal H}_{\rmu}$ commutes with
particle number and satisfies, with ${\mathcal H}_{\rmu,N}$ denoting
the restriction of ${\mathcal H}_{\rmu}$ to the $N$-particle subspace
of ${\mathcal F}_{\rm s}(L^2(\Omega))$,
\begin{align}\label{eq:BackgroundH}
{\mathcal H}_{\rmu,N} &= {\mathcal H}_N- 8\pi a \rmu N=\sum_{i=1}^N -\Delta_i 
+ \sum_{i<j} v(x_i-x_j)
- 8\pi a \rmu N\\
&=\sum_{i=1}^N \left( -\Delta_i - \rmu \int_{{\mathbb R}^3} g(x_i-y)\,dy \right)
+ \sum_{i<j} v(x_i-x_j).
 \nonumber 
\end{align}
Notice that the new term in ${\mathcal H}_{\rmu,N}$ plays the role of a chemical potential justifying the notation.

Define the corresponding ground state energy density,
\begin{align}
e_0(\rmu):= \lim_{|\Omega| \rightarrow \infty} |\Omega|^{-1}
\inf_{\Psi \in {\mathcal F_{\rm s}}(L^2(\Omega))\setminus \{0\}} \frac{\langle
  \Psi, {\mathcal H}_{\rmu} \Psi \rangle}{\| \Psi \|^2}.
\end{align}

\noindent We formulate the following result, which will be a
consequence of Theorems~\ref{thm:CompareBoxEnergy} and
\ref{thm:LHY-Box} below.

\begin{theorem}\label{thm:LHY-Background}
For all $\kappa \in (0,\frac{1}{4}), {\mathcal C}>0$, there exist $\varepsilon_2 >0, C>0$ and $\eta_2 \in (0,\frac{1}{4})$ (only depending on $\kappa, {\mathcal C}$) such that if $v\geq 0$ 
is spherically symmetric of class $L^1({\mathbb R}^3)$ with
compact support in $\overline{B(0,R)}$ and (with ${\mathcal R}$ as defined in \eqref{eq:DefCalR})
\begin{align}\label{eq:assumps_thm_background}
\rmu a^3 \leq C^{-1}, \qquad
\frac{R}{a} \leq {\mathcal C}(\rmu a^3)^{\kappa -\frac{1}{2}}, \qquad
{\mathcal R} \leq {\mathcal C}(\rmu a^3)^{-\eta_2 -\frac{1}{2}}.
\end{align}
Then the
thermodynamic ground state energy density of ${\mathcal H}_{\rmu}$
satisfies that
\begin{align}
  e_0(\rmu) &\geq -4\pi \rmu^2 a \left(1 - \frac{128}{15\sqrt{\pi}} (\rmu a^3)^{\frac{1}{2}}\right) 
  - C\rmu^2 a  (\rmu a^3)^{\frac{1}{2}+\varepsilon_2} .
\end{align}
\end{theorem}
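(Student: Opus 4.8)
The plan is to reduce Theorem~\ref{thm:LHY-Background} to two intermediate results which, according to the text, are Theorems~\ref{thm:CompareBoxEnergy} and~\ref{thm:LHY-Box} that appear below in the paper. The strategy is the standard localization-to-boxes scheme: first compare the thermodynamic energy density $e_0(\rmu)$ of the grand-canonical Hamiltonian ${\mathcal H}_{\rmu}$ on a large domain $\Omega$ with the infimum of a suitable \emph{box Hamiltonian} on a box of intermediate side length $\ell$ (much longer than the healing length $1/\sqrt{\rmu a}$ but much shorter than $|\Omega|^{1/3}$); and second, prove the sharp LHY lower bound for that box Hamiltonian. So first I would invoke Theorem~\ref{thm:CompareBoxEnergy} to obtain an inequality of the shape $e_0(\rmu) \geq \ell^{-3}\inf \langle \Psi, {\mathcal H}_{\rmu,\ell}^{\mathrm{box}}\Psi\rangle / \|\Psi\|^2 - (\text{localization errors})$, where the errors are controlled by negative powers of $\rmu a^3$ coming from the mismatch of boundary conditions, the sliding/averaging over box positions, and the chemical-potential bookkeeping; these are arranged to be $o(\rmu^2 a (\rmu a^3)^{1/2})$ for suitable choice of $\ell = (\rmu a)^{-1/2}(\rmu a^3)^{-\text{(small power)}}$ as in \eqref{eq:def_ell}. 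Then I would feed in Theorem~\ref{thm:LHY-Box}, which provides the lower bound $\ell^{-3}\inf \geq -4\pi \rmu^2 a(1 - \tfrac{128}{15\sqrt\pi}(\rmu a^3)^{1/2}) - C\rmu^2 a (\rmu a^3)^{1/2+\varepsilon_3}$ for the box problem under the very same hypotheses \eqref{eq:assumps_thm_background} on $R/a$ and ${\mathcal R}$.

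The bulk of the work — and the main obstacle — is of course Theorem~\ref{thm:LHY-Box} itself, i.e. establishing the LHY lower bound on a single large box with the modified localized kinetic energy of \eqref{eq:DefT_new}. Granting that as an input, the present theorem's proof is essentially a matching of error exponents: one must check that every error exponent produced in the comparison step and in the box estimate, as a function of $\kappa$ and ${\mathcal C}$, can be taken uniformly positive, and then set $\varepsilon_2$ to be the minimum of the finitely many resulting positive quantities (and choose $\eta_2$ accordingly, as dictated by the constraint on ${\mathcal R}$ in \eqref{eq:assumps_thm_background} — note both Theorems~\ref{thm:CompareBoxEnergy} and~\ref{thm:LHY-Box} must tolerate ${\mathcal R}$ up to $(\rmu a^3)^{-\eta_2-1/2}$, so $\eta_2$ is the smaller of the two tolerances). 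The constant $C$ is likewise the sum/maximum of the constants from the two cited theorems, absorbing also the universal constants in the localization errors.

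Concretely, I would structure the proof as follows. \textbf{Step 1.} State the choice of intermediate length scale $\ell$ and of the auxiliary parameters, deferring their constraints to Appendix~\ref{sec:params}; verify that the hypotheses \eqref{eq:assumps_thm_background} guarantee these constraints are satisfiable for $\rmu a^3$ small. \textbf{Step 2.} Apply Theorem~\ref{thm:CompareBoxEnergy}: this converts $e_0(\rmu)$ into an infimum over states on the large box of the box Hamiltonian, up to errors of order $\rmu^2 a (\rmu a^3)^{1/2 + \varepsilon}$ with $\varepsilon > 0$ depending only on $\kappa, {\mathcal C}$. \textbf{Step 3.} Apply Theorem~\ref{thm:LHY-Box} to bound that box infimum from below by the LHY expression with its own error exponent $\varepsilon_3 > 0$. \textbf{Step 4.} Combine: the $-4\pi\rmu^2 a(1 - \tfrac{128}{15\sqrt\pi}(\rmu a^3)^{1/2})$ main term passes through untouched, and all errors are bounded by $C \rmu^2 a (\rmu a^3)^{1/2 + \varepsilon_2}$ with $\varepsilon_2 := \min\{\varepsilon, \varepsilon_3, \dots\} > 0$. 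The one point requiring a little care is that the leading correction term $-\tfrac{128}{15\sqrt\pi}\rmu^2 a (\rmu a^3)^{1/2}$ must be reproduced \emph{exactly} with coefficient $\tfrac{128}{15\sqrt\pi}$ and not merely up to lower order; this is delivered by Theorem~\ref{thm:LHY-Box} (it is the output of the Bogoliubov diagonalization in Section~\ref{sec:Precise}), and the comparison step of Theorem~\ref{thm:CompareBoxEnergy} only contributes genuinely lower-order errors, so no fine-tuning is needed here beyond bookkeeping.
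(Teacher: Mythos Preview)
Your proposal is correct and follows exactly the paper's approach: the paper states explicitly that Theorem~\ref{thm:LHY-Background} is a consequence of Theorems~\ref{thm:CompareBoxEnergy} and~\ref{thm:LHY-Box}, and gives no further argument. One small simplification relative to your write-up: Theorem~\ref{thm:CompareBoxEnergy} in the paper yields the \emph{clean} inequality $e_0(\rmu)\geq e_\Lambda(\rmu)$ with no localization error term at all (the sliding average absorbs everything exactly), so your Step~2 error bookkeeping is unnecessary and the only error exponent to track is the $X^{1/5}$ from Theorem~\ref{thm:LHY-Box}, giving $\varepsilon_2=\eta_1$ and $\eta_2=5\eta_1$ directly.
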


\begin{remark}
Our proof gives the choice $\eta_2=5\eta_1$, with $\eta_1$ as stated in \eqref{eq:Choiceeta2} and $\varepsilon_2:= \eta_1$.
\end{remark}

\begin{proof}[Proof of Theorem~\ref{thm:LHY}]
Let $\kappa, {\mathcal C}$ be given and
let $\eta_2, C, \varepsilon_2$ be the resulting constants from Theorem~\ref{thm:LHY-Background}. 
If $v$ satisfies \eqref{eq:Assumptionsv} with $\eta_1=\eta_2$, then $v$ also satisfies \eqref{eq:assumps_thm_background} with $\rmu = \rho$.
By our choice of $\eta_1$, this means that \eqref{eq:assumps_thm_background} is satisfied for $\rmu =\rho$.
By inserting the ground state of
${\mathcal H}_N$ as a trial state in ${\mathcal H}_{\rmu}$ we get in
the thermodynamic limit and with $\rmu=\rho >0$
\begin{align}\label{eq:CompareGC}
e(\rho\,) &\geq e_0(\rmu) + 8\pi a \rho \rmu \nonumber \\
&\geq 8\pi a \rho \rmu
-4\pi \rmu^2 a \left(1 - \frac{128}{15\sqrt{\pi}} (\rmu a^3)^{\frac{1}{2}}\right)
- C \rmu^2 a  (\rmu a^3)^{\frac{1}{2}+\varepsilon_2},
\end{align}
where we have used the lower bound from Theorem~\ref{thm:LHY-Background}.
Recalling that $\rmu = \rho$,
this gives \eqref{eq:LHY} with $\eta_1:= \min\{\eta_2, \varepsilon_2\}$.
\end{proof}

\subsection{Localization: Setup and notation}
The main part of the analysis will be carried out on a box $\Lambda=[-\ell/2,\ell/2]^3$ of size 
$\ell$ given in \eqref{eq:def_ell}.
In this section we will carry out
the localization to the box $\Lambda$. The main result is given at the
end of the section as Theorem~\ref{thm:CompareBoxEnergy} which states
that for a lower bound it suffices to consider a `box energy',
i.e. the ground state energy of a Hamiltonian localized to a box of
size $\ell$.  For convenience, in Theorem~\ref{thm:LHY-Box} we state
the bound on the box energy that will suffice in order to prove
Theorem~\ref{thm:LHY-Background}.

Notice that the assumption $\frac{R}{a} \leq {\mathcal C}(\rmu a^3)^{\kappa- \frac{1}{2}}$ from Theorem~\ref{thm:LHY-Background} implies that
\begin{align}
\frac{R}{\ell} = \frac{1}{K_{\ell}}\frac{R}{a} \sqrt{\rmu a^3} \leq {\mathcal C} \frac{(\rmu a^3)^{\kappa}}{K_{\ell} }.
\end{align}
So $R$ is much smaller than $\ell$ for all sufficiently small values of $\rmu a^3$.

It will be important to make an explicit choice of a localization
function $\chi\in C_0^{M}(\R^3)$, for $M\in\N$ and with support in
$[-1/2,1/2]^3$. 
It is given in Appendix~\ref{sec:chiproperties}. The
function will not be smooth but it will be important in the analysis
that we choose $M\in 4 \N$ finite but sufficiently large. The condition for the choice of $M$ is given in \eqref{eq:choiceM} below.
The explicit choice of $\chi$ plays a role in the double localization argument which was carried out in \cite{FS}. However, in the present paper, we only use some of the consequences of this argument, most notably Theorem~\ref{thm:aprioribounds} below. Therefore, the reader of the present paper may safely disregard the specific choice and only observe that $\chi \in C_0^{M}$ is such that
$\chi$ is even, 
\begin{align}\label{eq:chinormalization}
0 \leq \chi, \qquad \int \chi^2 = 1.
\end{align}
For some proofs it will be useful to use the following structure of $\chi$,
\begin{align}\label{eq:Splitchi}
\chi = f^2,
\end{align}
where $f\in C_0^{\frac{M}{2}}(\R^3)$.

We will also use the notation 
\begin{align}
\chi_{\Lambda}(x) := \chi(x/\ell). 
\end{align}

For given $u \in {\mathbb R}^3$, we define 
\begin{align}
\chi_u(x) = \chi(\frac{x}{\ell}-u)=\chi_\Lambda(x-u\ell).
\end{align}
Notice that $\chi_u$ localizes to the box $\Lambda(u) := \ell u + [-\ell/2,\ell/2]^3$.

We will also need the sharp localization function $\theta_u$ to the box $\Lambda(u)$, i.e.
\begin{align}\label{eq:Theta}
\theta_u := \one_{\Lambda(u)}.
\end{align}

Define $P_u, Q_u$ to be the orthogonal projections in $L^2({\mathbb R}^3)$ defined by
\begin{align}\label{def: projections}
P_u \varphi := \ell^{-3} \langle \theta_u, \varphi\rangle \theta_u, \qquad  Q_u \varphi:= \theta_u \varphi - \ell^{-3} \langle \theta_u, \varphi \rangle \theta_u.
\end{align}
In the case $u=0$, we will use the notations
\begin{equation}
\theta_{u=0} = \theta, \qquad
  P_{u=0}=P_\Lambda=P,\qquad Q_{u=0}=Q_\Lambda=Q.
\end{equation}
Define furthermore 
\begin{align}\label{eq:3.5}
W(x) := \frac{v(x)}{\chi*\chi(x/\ell)}.
\end{align}
Since $R \ll \ell$ it is clear that $W$ is well-defined for sufficiently small values of $\rmu a^3$. 
Manifestly $W$ depends on $\ell$ and thus $\rho_\mu$, but we will not reflect this in our notation.

Define the localized potentials
\begin{align}\label{eq:w_u}
w_u(x,y) := \chi_u(x) W(x-y) \chi_u(y), \qquad w(x,y) := w_{u=0}(x,y).
\end{align}
Notice the translation invariance,
\begin{align}\label{eq:transInv}
w_{u+\tau}(x,y) = w_u(x-\ell \tau,y-\ell \tau).
\end{align}
For some estimates it is convenient to invoke the scattering solution
and thus we introduce the notation, which again is well-defined for
$\rho_\mu a^3$ sufficiently small,
\begin{align}\label{eq:defW_12}
W_1(x) &:= W(x) (1-\omega(x)) = \frac{g(x)}{\chi*\chi(x/\ell)},\qquad w_1(x,y):=w(x,y)(1-\omega(x-y)), \nonumber \\
W_2(x) &:= W(x) (1-\omega^2(x)) = \frac{g(x)+g\omega(x)}{\chi*\chi(x/\ell)},\qquad w_2(x,y):=w(x,y)(1-\omega^2(x-y)).
\end{align}
If we add a subscript $u$ we mean as above the translated versions $w_{1,u}(x,y)=w_1(x-\ell u,y-\ell u)$. 
For $\rmu a^3$ sufficiently small a simple change of variables yields, for all $u\in{\mathbb R}^3$, the identities
\begin{align}\label{eq:DefU}
\ell^{-3} \iint_{{\mathbb R}^3\times{\mathbb R}^3} \chi(\frac{x}{\ell})\chi(\frac{y}{\ell}) W_1(x-y)\, dx\,dy
&=\ell^{-3} \iint_{{\mathbb R}^3\times{\mathbb R}^3} w_{1}(x,y)\,dx\,dy \nonumber \\
&= \int g(x)\,dx = 8 \pi a,
\end{align}
and likewise
\begin{align}\label{eq:w2int}
\ell^{-3} \iint_{{\mathbb R}^3\times{\mathbb R}^3} w_2(x,y)\,dx\,dy 
&= \int g(1+\omega) \,dx= 8\pi a + \int g \omega \,dx .
\end{align}
The following basic lemma will often be useful.

\begin{lemma}\label{lem:Simple}
Assuming that $R/\ell$ is smaller than some universal constant, we have
\begin{itemize}
\item 
\begin{equation}\label{eq:W1-g-new}
0 \leq {W}_1(x) -  g(x)  \leq C g(x) \frac{\min\{|x|^2,R^2\}}{\ell^2} .
\end{equation}
\item Suppose that $f \in L^1({\mathbb R}^3)$ satisfies $\supp f \subset B(0,R)$ and $f(-x)=f(x)$.
Then
\begin{align}\label{eq:I2-integral-2}
\left| f*\chi_{\Lambda}(x) - \chi_{\Lambda}(x) \int f \right| \leq \max_{i,j}\| \partial_i\partial_j \chi \|_{\infty} 
\frac{1}{\ell^2}  \int |x|^2  |f|\,dx \leq
\max_{i,j}\| \partial_i\partial_j \chi \|_{\infty} 
\frac{R^2}{\ell^2}  \int  |f|\,dx 
\end{align}
\item
For some universal constant $C>0$ we have
\begin{align}\label{eq:I2-integral-new}
\left|(2\pi)^{-3}  \int \frac{\widehat{W}_1(k)^2}{2k^2} \,dk -
\widehat{g\omega}(0) \right| \leq  C \frac{R a^2}{\ell^2}.
\end{align}
Furthermore,
\begin{equation}\label{eq:I2-integral-2-new}
\int \frac{(\widehat{W}_1(k) - \widehat{g}(k))^2}{2k^2} \,dk \leq C \frac{R^3 a^2}{\ell^4}.
\end{equation}
\end{itemize}
\end{lemma}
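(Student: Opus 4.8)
The plan is to treat the three items in turn, using the normalization $\int\chi^2=1$ from \eqref{eq:chinormalization}, the regularity $\chi\in C_0^M$ (so that $\chi$, and hence $\chi*\chi$, are $C^2$), the scattering relations \eqref{eq:Scattering3}, \eqref{es:scatteringFourier}, \eqref{eq:ScatLengthBasic}, and the pointwise bound $\omega(x)\le a/|x|$ from \eqref{omegabounds}. For \eqref{eq:W1-g-new}, write $W_1(x)-g(x)=g(x)\,\dfrac{1-\chi*\chi(x/\ell)}{\chi*\chi(x/\ell)}$. Cauchy--Schwarz gives $\chi*\chi(y)=\int\chi(z)\chi(z-y)\,dz\le\|\chi\|_2^2=1$, which makes the numerator nonnegative and yields the lower bound. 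Since $\chi*\chi$ is even and $C^2$ with $\chi*\chi(0)=\int\chi^2=1$, Taylor's theorem (the gradient vanishing at $0$ by evenness) gives $0\le 1-\chi*\chi(y)\le C|y|^2$, and trivially $1-\chi*\chi(y)\le 1$, so $1-\chi*\chi(y)\le C\min\{|y|^2,1\}$. Choosing $R/\ell$ small enough that $\chi*\chi(x/\ell)\ge 1/2$ on $\supp g\subseteq\{|x|\le R\}$ (possible since there $1-\chi*\chi(x/\ell)\le C(R/\ell)^2$) and observing that $\min\{|x/\ell|^2,1\}=\ell^{-2}\min\{|x|^2,R^2\}$ on $\supp g$, one obtains $W_1(x)-g(x)\le 2g(x)(1-\chi*\chi(x/\ell))\le Cg(x)\,\ell^{-2}\min\{|x|^2,R^2\}$.

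For \eqref{eq:I2-integral-2}, note $f*\chi_\Lambda(x)-\chi_\Lambda(x)\int f=\int f(y)\bigl[\chi(\tfrac{x-y}{\ell})-\chi(\tfrac{x}{\ell})\bigr]\,dy$, and since $f$ is even the substitution $y\mapsto -y$ lets us replace the bracket by the symmetric combination $\tfrac12\bigl[\chi(\tfrac{x-y}{\ell})+\chi(\tfrac{x+y}{\ell})-2\chi(\tfrac{x}{\ell})\bigr]$, in which the first-order Taylor terms cancel. A second-order Taylor expansion of $\chi$ then bounds this combination by $\ell^{-2}\max_{i,j}\|\partial_i\partial_j\chi\|_\infty|y|^2$ (up to the dimensional constant implicit in the statement), and integrating against $|f|$ gives the middle expression; the final bound follows from $\supp f\subseteq\{|x|\le R\}$. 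Note this item needs no smallness hypothesis.

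For \eqref{eq:I2-integral-new}--\eqref{eq:I2-integral-2-new}, first rewrite the constant: by \eqref{eq:Scattering3}, $g=-2\Delta\omega$, so integration by parts together with Plancherel gives $\widehat{g\omega}(0)=\int g\omega=2\int|\nabla\omega|^2=(2\pi)^{-3}\int\widehat g(k)^2/(2k^2)\,dk$; since $\nabla\omega\in L^2$ (as $\omega\sim a/|x|$ near infinity), these integrals are finite with no $L^2$-assumption on $v$, and the same reformulation applied to $W_1$ shows $\int\widehat W_1^2/(2k^2)$ is finite. Put $f:=W_1-g\ge 0$, supported in $\{|x|\le R\}$ with $0\le f\le Cg|x|^2/\ell^2$ on $\supp g$ by \eqref{eq:W1-g-new}, and expand $\widehat W_1^2-\widehat g^2=\widehat f^2+2\widehat g\widehat f$. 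Using $\widehat g/(2k^2)=\widehat\omega$ and Plancherel, the cross term equals $2\int\omega f$, which by \eqref{eq:W1-g-new}, $\omega\le a/|x|$ and $\int g=8\pi a$ is at most $\tfrac{C}{\ell^2}\int\omega g|x|^2\le\tfrac{Ca}{\ell^2}\int g|x|\le\tfrac{C}{\ell^2}aR\cdot 8\pi a\le CRa^2/\ell^2$. For the $\widehat f^2$ term, the identity $\int\widehat f^2/(2k^2)\,dk=\pi^2\iint\frac{f(x)f(y)}{|x-y|}\,dx\,dy$ (coming from $\tfrac{1}{4\pi|x-y|}$ being the Green's function of $-\Delta$) reduces matters to this Coulomb energy: bounding $f(x)f(y)\le\tfrac{C}{\ell^4}g(x)|x|^2g(y)|y|^2$, then using $|x|^2|y|^2\le R^2|x||y|\le\tfrac{R^2}{2}(|x|^2+|y|^2)$ and the symmetry of $\tfrac{g(x)g(y)}{|x-y|}$, and finally $\int\frac{g(y)}{|x-y|}\,dy=8\pi\omega(x)$ with $\omega\le a/|x|$ and $\int g=8\pi a$, yields $\iint\frac{f(x)f(y)}{|x-y|}\le\tfrac{CR^2}{\ell^4}\int g|x|^2\omega\le\tfrac{CR^2a}{\ell^4}\int g|x|\le CR^3a^2/\ell^4$. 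This is \eqref{eq:I2-integral-2-new}; since $R^3/\ell^4\le R/\ell^2$ this term is also dominated by $CRa^2/\ell^2$, and adding it to the cross term bound gives \eqref{eq:I2-integral-new} (the absolute value is superfluous since $f,\omega\ge 0$ make both contributions nonnegative).

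The Taylor-expansion steps are routine; the one genuinely delicate point is that for merely $L^1$ potentials $v$ the Fourier transforms $\widehat\omega,\widehat g$ need not be square-integrable, so the integrals in \eqref{eq:I2-integral-new}--\eqref{eq:I2-integral-2-new} and the Plancherel manipulations must be justified. Identifying these Fourier integrals with $\|\nabla(\cdot)\|_2^2$ of bounded Newtonian potentials (which lie in $\dot H^1$) and then passing entirely to position-space Coulomb integrals circumvents this, and is also what produces the sharp power $a^2R^3$ rather than the cruder $aR^4$ one would get from using only $W_1-g\le CgR^2/\ell^2$.
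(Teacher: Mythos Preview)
Your proof is correct and follows essentially the same approach as the paper's: Taylor expansion of $\chi*\chi$ near $0$ for \eqref{eq:W1-g-new} and \eqref{eq:I2-integral-2}, and for \eqref{eq:I2-integral-new}--\eqref{eq:I2-integral-2-new} passing from the Fourier integral to the Coulomb-type position-space integral via $\widehat g/(2k^2)=\widehat\omega$, then using \eqref{eq:W1-g-new} together with $\omega\le a/|x|$ and $\int g=8\pi a$. The paper factors $\widehat W_1^2-\widehat g^2=(\widehat W_1-\widehat g)(\widehat W_1+\widehat g)$ whereas you expand as $\widehat f^2+2\widehat g\widehat f$, but these are algebraically identical; you are simply more explicit in separating the $\widehat f^2$ piece to obtain the sharp power in \eqref{eq:I2-integral-2-new} (which the paper omits), and your remark justifying the Plancherel manipulations via $\nabla\omega\in L^2$ rather than assuming $g\in L^2$ is a welcome point of rigor that the paper leaves implicit.
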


\begin{proof}
To prove \eqref{eq:W1-g-new} it suffices to estimate $\chi*\chi(x/\ell)$ for $|x|< R$, since $\supp g \subset \overline{B(0,R)}$.
Clearly $\chi*\chi(y) \leq 1$ for all $y \in {\mathbb R}^3$. On the other hand
\begin{align}
\left| 1 - \chi*\chi(y) \right|= \left| \int \chi(u) [ \chi(-u) - \chi(y-u)] \,du \right| \leq C y^2 \max_{i,j} \| \partial_i \partial_j \chi \|_{\infty} ,
\end{align}
by Taylor's theorem, using that $\int y \chi(y) \,dy = 0$. This gives \eqref{eq:W1-g-new} and \eqref{eq:I2-integral-2} follows in the same way.

Recall that $\widehat{\omega}(k)=\frac{\widehat{g}(k)}{2k^2}$ by \eqref{es:scatteringFourier}. 
Using the Fourier transformation and \eqref{eq:W1-g-new} we get
\begin{align}
\left|(2\pi)^{-3} \int \frac{\widehat{W}_1^2(k)- \widehat{g}^2(k)}{2k^2} \, dk\right|
&= C   \iint \frac{(W_1-g)(x) (W_1+g)(y)}{|x-y|}\,dx\,dy  \nonumber \\
&\leq \widetilde{C} C \frac{1}{\ell^2}  \iint \frac{|x|^2 g(x) g(y)}{|x-y|}\,dx\,dy  \nonumber \\
&=  C' \frac{1}{\ell^2}\int |x|^2 g(x) \omega(x) \,dx .\label{eq: HLS}
\end{align}
At this point we insert \eqref{omegabounds} to get
\begin{align}
\left|(2\pi)^{-3} \int \frac{\widehat{W}_1^2(k)- \widehat{g}^2(k)}{2k^2} \, dk\right|
\leq C'  \frac{a}{\ell^2}\int |x| g(x)\,dx \leq 8\pi C'  \frac{a^2 R}{\ell^2},
\end{align}
since $|x|\leq R$ on $\supp g$.
This finishes the proof of \eqref{eq:I2-integral-new}.
The proof of \eqref{eq:I2-integral-2-new} follows from a similar calculation and is omitted.
\end{proof}

\subsection{The localized hamiltonian}
Define the kinetic energy operator on $L^2(\Lambda)$ as
\begin{align}
\label{eq:DefT_new}
      {\mathcal T} &:=
       \varepsilon_N (-\Delta^{\mathcal N})+ (1-\varepsilon_N)\widetilde{\mathcal T}, 
\end{align}
with $\Delta^{\mathcal N}$ being the Neumann Laplacian on $\Lambda$ and 
\begin{align}\label{eq:DefT_tilde}
      \widetilde{\mathcal T}:=  {\mathcal T}' +
      \frac12 \varepsilon_T (d \ell)^{-2} \frac{-\Delta^{\mathcal N}}{-\Delta^{\mathcal N}+(d\ell)^{-2}} 
      + b \ell^{-2} Q 
      + b\varepsilon_T (d\ell)^{-2} Q\one_{(d^{-2}\ell^{-1},\infty)}(\sqrt{-\Delta})Q,
\end{align}
where
\begin{equation}\label{eq:T'u_new}
  {\mathcal T}' :=
  Q \chi_{\Lambda} \tau(-i\partial_x) \chi_{\Lambda} Q,
\end{equation}
with $\tau$ being the Fourier multiplier,
\begin{align}\label{eq:Def_tau}
\tau(p) = \Big\{
  (1-\varepsilon_T) \Big[ |p| - \frac{1}{2}(s\ell)^{-1} \Big]_{+}^2
  + \varepsilon_T \Big[ |p| - \frac{1}{2}(d s\ell)^{-1} \Big]_{+}^2
  \Big\}.
\end{align}
Here the parameters $0<s, d,  \varepsilon_N, \varepsilon_T$ are chosen in Appendix~\ref{sec:params} in such a way that Assumption~\ref{assump:params} is satisfied, and $b$ is a universal constant chosen in Theorem~\ref{thm:CompareBoxEnergy} below.

\begin{remark} The kinetic energy operator in \eqref{eq:DefT_new} looks complicated. This is partly because 
we need to localize it even further into smaller boxes in order to get
a priori estimates (see \cite[Appendix B]{FS} and Theorem~\ref{thm:aprioribounds} below) but also because we need different terms to remedy certain problems of the main kinetic energy operator ${\mathcal T}'$.  

To describe the different terms, let us start with ${\mathcal T}'$, which is the main kinetic energy term in the (large)
boxes. This will  be used in the Bogoliubov-type calculation---together with the localized potential energy---to give the correct energy up to and including the LHY-term. However, ${\mathcal T}'$ does not have a gap, and also---due to the factors of $\chi$---it does not behave correctly close to boundary of the box, which will cause problems for establishing certain technical lemmas along the way. The remaining terms in ${\mathcal T}$ are included to solve these (and similar) problems.

The Neumann Laplacian in \eqref{eq:DefT_new} compensates the loss of ellipticity of ${\mathcal T}'$ near the boundary and is essential in
Lemma~\ref{lem:LowMomentaOperator}.
The second term in
\eqref{eq:DefT_tilde} will give us a Neumann gap in the small boxes, which is important in the analysis in \cite[Appendix B]{FS}.
The
third term in \eqref{eq:DefT_tilde} is a Neumann gap in the large
boxes. The fourth term in \eqref{eq:DefT_tilde}will control errors coming
from excited particles with very large momenta (see Lemma~\ref{lem:Q3-splitting1} and the estimate \eqref{eq:EstimateQ3Tilde1}  in Lemma~\ref{lem:EstimatesOnQ3-js}). 
\end{remark}

The localized Hamiltonian ${\mathcal H}_{\Lambda}$ will be an
operator on the symmetric Fock space over $L^2({\mathbb R}^3)$
preserving particle number. Its action on the $N$-particle sector is
as
\begin{align}\label{eq:Def_HB_new}
({\mathcal H}_{\Lambda}(\rmu))_{N} :=
 \sum_{i=1}^N \mathcal{T}^{(i)} -
\rmu \sum_{i=1}^N \int w_{1}(x_i,y)\,dy + \sum_{1\leq i<j\leq N} w(x_i,x_j),
\end{align}
with $w, w_1$ from \eqref{eq:w_u} and \eqref{eq:defW_12}.

Using the splitting ${\mathcal F}_{\rm s}(L^2({\mathbb R}^3)) = {\mathcal F}_{\rm s}( L^2(\Lambda)) \otimes  {\mathcal F}_{\rm s}((L^2({\mathbb R}^3\setminus \Lambda)))$ we may also consider ${\mathcal H}_{\Lambda}(\rmu))$ as an operator on ${\mathcal F}_{\rm s}:={\mathcal F}_{\rm s}( L^2(\Lambda))$. This operator will have the same spectrum as the original operator on ${\mathcal F}_{\rm s}(L^2({\mathbb R}^3))$.

\begin{theorem}\label{thm:CompareBoxEnergy}
Assume that the conditions of Assumption~\ref{assump:params} 
are satisfied and that $\rmu a^3$ is small enough. 
Define the ground state energy and energy density in the box, by
\begin{align}
E_{\Lambda}(\rmu) &:= \inf \Spec {\mathcal H}_{\Lambda}(\rmu), \\
e_{\Lambda}(\rmu) &:= \ell^{-3} \inf \Spec {\mathcal H}_{\Lambda}(\rmu) = \ell^{-3} E_{\Lambda}(\rmu),
\end{align}
with ${\mathcal H}_{\Lambda}(\rmu)$ as defined in \eqref{eq:Def_HB_new} above.
Then, if the parameter $b$ from \eqref{eq:DefT_tilde} is smaller than a universal constant, we have
\begin{align}
e_0(\rmu) \geq e_{\Lambda}(\rmu).
\end{align}
\end{theorem}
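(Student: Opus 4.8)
\textbf{Proof proposal for Theorem~\ref{thm:CompareBoxEnergy}.}

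The plan is to bound $e_0(\rmu)$ from below by an average over translated and rotated copies of the box Hamiltonian ${\mathcal H}_\Lambda(\rmu)$, exploiting the sliding/localization technique. First I would fix a large box $\Omega$ (later $|\Omega|\to\infty$) and a state $\Psi$ on ${\mathcal F}_{\rm s}(L^2(\Omega))$, and cover $\Omega$ by the boxes $\Lambda(u)=\ell u+[-\ell/2,\ell/2]^3$, averaging over the continuous translation parameter $u$ (and, to symmetrize the kinetic multiplier $\tau$, over rotations). The key point is the partition-of-unity identity built from $\chi$: because $\int\chi^2=1$ one has $\int \chi_u(x)^2\,du=\ell^{-3}$ (suitably normalized), so that $\sum$-type averages of the localized potentials $w_u$ reproduce the original interaction $v$ up to the factor $\chi*\chi$ absorbed into the definition $W=v/(\chi*\chi)$ in \eqref{eq:3.5}. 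Thus $\int w_u(x_i,x_j)\,du$ reconstructs $v(x_i-x_j)$ exactly, and similarly $\int \bigl(\rmu\int w_{1,u}(x_i,y)\,dy\bigr) du$ reconstructs $\rmu\int g(x_i-y)\,dy$ by \eqref{eq:DefU}; this matches the chemical-potential form of ${\mathcal H}_{\rmu,N}$ in \eqref{eq:BackgroundH}.

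The harder half is the kinetic energy: one must show that the true kinetic energy $\sum_i -\Delta_i$ on $\Omega$ dominates the averaged localized kinetic energy $\int \sum_i {\mathcal T}_u^{(i)}\,du$. I would treat the four pieces of ${\mathcal T}$ separately. For the main term ${\mathcal T}'=Q_u\chi_u\tau(-i\partial)\chi_u Q_u$, the bound $\tau(p)\le p^2$ (clear from \eqref{eq:Def_tau} since each bracket $[|p|-c]_+^2\le|p|^2$ and the weights sum to $1$) together with IMS-type localization—$\sum_u \chi_u(-\Delta)\chi_u\,du = -\Delta - (\text{localization error }\sum_u|\nabla\chi_u|^2)$—gives the needed domination, the gradient-of-$\chi$ error being absorbed because we have a strict inequality $\tau(p)<p^2$ with room of order $(s\ell)^{-2}$; the projection $Q_u$ only helps (it removes the lowest mode). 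For the added terms $\varepsilon_N(-\Delta^{\mathcal N})$, the Neumann-gap terms, and the high-momentum term in \eqref{eq:DefT_tilde}, each is bounded above by a small multiple of $-\Delta$ on the box (Neumann Laplacian $\le$ Dirichlet-type localization of the free Laplacian, and the bounded Fourier multipliers $\le \ell^{-2}\cdot C$), so their $u$-average is controlled by a small fraction of $\sum_i-\Delta_i$ provided the parameters $\varepsilon_N,\varepsilon_T,b$ and the ratios in Assumption~\ref{assump:params} are chosen appropriately—this is where the hypothesis that $b$ be smaller than a universal constant enters.

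Putting it together: for any $\Psi$,
\begin{align}
\langle\Psi,{\mathcal H}_{\rmu}\Psi\rangle \;\geq\; \int \langle\Psi, {\mathcal H}_{\Lambda(u)}(\rmu)\,\Psi\rangle\,du \;\geq\; \ell^{-3}\,E_\Lambda(\rmu)\,\|\Psi\|^2,
\end{align}
using translation invariance \eqref{eq:transInv} (so all $E_{\Lambda(u)}(\rmu)=E_\Lambda(\rmu)$) and the fact that ${\mathcal H}_{\Lambda(u)}(\rmu)\geq E_\Lambda(\rmu)$ as an operator, restricted to the relevant Fock sector. Dividing by $\|\Psi\|^2$ and $|\Omega|$ and taking $|\Omega|\to\infty$ yields $e_0(\rmu)\geq \ell^{-3}E_\Lambda(\rmu)=e_\Lambda(\rmu)$. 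The main obstacle is the kinetic estimate near the boundary of the boxes: the factors $\chi_u$ in ${\mathcal T}'$ degenerate at $\partial\Lambda(u)$, so ${\mathcal T}'$ alone does not control $-\Delta$ there; this is precisely why the term $\varepsilon_N(-\Delta^{\mathcal N})$ was extracted in \eqref{eq:DefT_new}, and one must verify that the combination $\varepsilon_N(-\Delta^{\mathcal N})+(1-\varepsilon_N){\mathcal T}'$, after $u$-averaging, still lies below $-\Delta$ uniformly, including the crossover region—handling this cleanly (rather than losing a constant) is the delicate point and uses the specific structure of $\chi\in C_0^M$ and the parameter relations.
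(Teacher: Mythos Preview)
Your outline is essentially the paper's own argument: reconstruct the potential exactly via the sliding identity \eqref{eq:Loc_pot}, bound the averaged localized kinetic energy by $-\Delta$, and pass to the thermodynamic limit. A few corrections, though. First, rotation averaging is unnecessary: $\tau(p)$ in \eqref{eq:Def_tau} is already radial. Second, the paper treats the two kinetic pieces \emph{separately and additively}: one shows $\int(-\Delta^{\mathcal N}_u)\,du\le -\Delta$ by the standard bracketing argument, and independently $\int\widetilde{\mathcal T}_u\,du\le -\Delta$ (this is \cite[Lemma~6.4]{FS}, whose proof is indeed your IMS heuristic using the gap between $\tau(p)$ and $p^2$); combining with weights $\varepsilon_N$ and $1-\varepsilon_N$ gives $\int{\mathcal T}_u\,du\le -\Delta$ with no ``crossover'' difficulty at all---your final paragraph worries about a non-issue. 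Third, your displayed chain has a missing volume factor: the $u$-integral runs over $\ell^{-1}(\Omega+B(0,\ell/2))$, so the lower bound is $\ell^{-3}\,|\Omega+B(0,\ell/2)|\,E_\Lambda(\rmu)$, and only after dividing by $|\Omega|$ and letting $|\Omega|\to\infty$ do you recover $e_\Lambda(\rmu)$.
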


It is clear, using Theorem~\ref{thm:CompareBoxEnergy}, that Theorem~\ref{thm:LHY-Background} is a consequence of the following theorem on the box Hamiltonian.
Therefore, the remainder of the paper will be dedicated to the proof of Theorem~\ref{thm:LHY-Box} below.

\begin{theorem}\label{thm:LHY-Box}
Let $\kappa \in (0,\frac{1}{4}), {\mathcal C}>0$. Then there exists $C>0$ such that if
$v\geq 0$ 
is spherically symmetric of class $L^1({\mathbb R}^3)$ 
and satisfies \eqref{eq:assumps_thm_background} with $\eta_2= 5\eta_1$ with $\eta_1$ given by \eqref{eq:Choiceeta2},
and if the parameters are as given in \eqref{eq:Xparameter}-\eqref{eq:choiceM}
we have, for $\rmu a^3$ sufficiently small,
\begin{align}\label{eq:EnergyBoxRes1}
  e_\Lambda(\rmu) \geq -4\pi \rmu^2 a +  4\pi \rmu^2 a  \frac{128}{15\sqrt{\pi}}(\rmu a^3)^{\frac{1}{2}}
  -C\rmu^2 a (\rmu a^3)^{\frac{1}{2}}
X^{\frac{1}{5}}.
\end{align}
\end{theorem}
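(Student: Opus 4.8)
The plan is to prove Theorem~\ref{thm:LHY-Box} by the Bogoliubov pairing analysis on a single large box $\Lambda$ of size $\ell$, following the strategy outlined in Section~\ref{sec:Desc}. First I would record the a priori estimates on the box Hamiltonian ${\mathcal H}_\Lambda(\rmu)$---in particular Bose--Einstein condensation with near-optimal control of the condensate depletion (Theorem~\ref{thm:aprioribounds}), obtained via the further localization into small boxes carried out in Appendix~\ref{SmallBoxes}---and reduce to states satisfying the low-momentum excitation bound $\one_{(0,\ML]}(n_+^{\rm L})\Psi=\Psi$ using the localization-of-large-matrices result Proposition~\ref{prop:LocMatrices}. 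The decisive structural input is Lemma~\ref{lm:potsplit}: using positivity of $v$ one replaces the interaction by an effective renormalized interaction containing only $0Q$ through $3Q$ terms, with all resulting potentials having $L^1$-norm controlled by $a$ (not $\int v$). One then passes to second quantization, applies the $c$-number substitution to fix the condensate density $\rho_z\ell^3$, and reduces to a family of problems in the non-condensed modes alone.

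Next I would isolate the quadratic (Bogoliubov) Hamiltonian together with the $0Q$ and $1Q$ constant/linear terms, throw away a collection of manifestly positive higher-order terms, and carefully retain the $3Q$ contribution---most importantly the `soft pairs' coupling two high momenta $\sim a^{-1}$ to one zero and one low momentum $\sim\sqrt{\rho a}$. The rough bounds on $\rho_z\ell^3$ from Section~\ref{sec:rough} feed into Section~\ref{sec:Precise}, where the quadratic Hamiltonian is diagonalized explicitly. Summing the Bogoliubov dispersion over momenta and using the Fourier-side estimates of Lemma~\ref{lem:Simple} (to replace $\widehat W_1$ by $\widehat g$ up to errors of relative order $R^2/\ell^2$, and $(2\pi)^{-3}\int\widehat W_1^2/(2k^2)$ by $\widehat{g\omega}(0)$) produces the main term $-4\pi\rmu^2 a$ and the LHY correction $+4\pi\rmu^2 a\frac{128}{15\sqrt\pi}(\rmu a^3)^{1/2}$, plus controlled errors. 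The soft-pair terms left over from the $3Q$-analysis are then absorbed into the positive part of the diagonalized Hamiltonian, exactly as emphasized in Section~\ref{sec:Desc}; this is where the improved $n_+^{\rm L}$-localization (with the condition $(\int v/a)^{1/2}(\rmu a^3)^{-3/16}\ll\ML\ll(\rmu a^3)^{-1/2}$) is essential rather than the old $n_+$-localization.

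Finally I would collect all error terms---the condensation and depletion errors, the localization error from Proposition~\ref{prop:LocMatrices}, the $3Q$ position-space estimates of Section~\ref{sec:3Q} and their second-quantized counterparts in Section~\ref{sec:Second}, the geometric errors $R^2/\ell^2$, $Ra^2/\ell^2$ from Lemma~\ref{lem:Simple}, and the errors incurred by dropping positive terms---and show each is bounded by $C\rmu^2 a(\rmu a^3)^{1/2}X^{1/5}$ for a suitable aggregate small parameter $X$, upon choosing the many auxiliary parameters $s,d,\varepsilon_N,\varepsilon_T,\ML,\M^{\rm L},\ldots$ according to the relations listed in Appendix~\ref{sec:params}. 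The assumption $R/a\le{\mathcal C}(\rmu a^3)^{\kappa-1/2}$ enters precisely to make the geometric errors small (recall $R/\ell\le{\mathcal C}(\rmu a^3)^\kappa/K_\ell$), and the renormalized bound ${\mathcal R}\le{\mathcal C}(\rmu a^3)^{-5\eta_1-1/2}$ is what makes the $\ML$-window nonempty after the renormalization of Lemma~\ref{lm:potsplit}.

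The main obstacle I expect is the control of the $3Q$ soft-pair term: it is genuinely of LHY order and must be computed, not merely estimated, and then the residual part must be shown to be dominated by the positive spectrum of the diagonalized Bogoliubov operator \emph{restricted to states with $n_+^{\rm L}\le\ML$}. Making this work with the complicated localized kinetic energy ${\mathcal T}$ of \eqref{eq:DefT_new}--\eqref{eq:DefT_tilde}---rather than a clean periodic Laplacian---is what forces the technical apparatus of Lemma~\ref{lem:LowMomentaOperator}, the extra Neumann term $\varepsilon_N(-\Delta^{\mathcal N})$ for uniform ellipticity, and the sharp moment bounds of Appendix~\ref{sec:proofs}; getting all these estimates simultaneously compatible with the parameter constraints of Appendix~\ref{sec:params} is the delicate heart of the argument.
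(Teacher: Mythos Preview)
Your proposal is correct and follows essentially the same approach as the paper's proof in Section~\ref{sec:proofCombined}: reduce to a localized $n_+^{\rm L}$ state via Proposition~\ref{prop:LocMatrices}, pass to second quantization and $c$-number substitution (Proposition~\ref{prop:Hamilton2ndQuant} and Theorem~\ref{thm:Kafz}), eliminate $\rho_z$ far from $\rmu$ by the rough bound of Proposition~\ref{prop:FarAway}, and for $\rho_z\approx\rmu$ combine the Bogoliubov diagonalization Theorem~\ref{thm:BogHamDiag} with the soft-pair control Theorem~\ref{thm:Control3Q}, collecting all errors against the parameter choices \eqref{eq:Xparameter}--\eqref{eq:choiceM}. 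You have identified all the key ingredients and the central difficulty (the $3Q$ soft-pair term with the complicated kinetic energy $\mathcal T$) exactly as the paper does.
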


\begin{remark}
Our proof gives that $\eta_2$ can be chosen as $\eta_2=5 \eta_1$, with $\eta_1$ as in \eqref{eq:Choiceeta2},
and the bound on the error term $X^{\frac{1}{5}}$ in \eqref{eq:EnergyBoxRes1} is
\begin{align}
X^{\frac{1}{5}} = (\rmu a^3)^{\frac{\eta_2}{5}}. 
\end{align}
We believe that these powers are technical artefacts of our proof and leave room for improvement.
\end{remark}

\begin{proof}[Proof of Theorem~\ref{thm:CompareBoxEnergy}]
The proof of Theorem~\ref{thm:CompareBoxEnergy} follows by a sliding technique as in \cite{BS,BFS,FS}.
For completeness we sketch the argument here.

By a direct calculation of the integral, we get
\begin{align}\label{eq:Loc_pot}
  -\rmu \sum_{i=1}^N \int & g(x_i-y)\,dy + \sum_{i<j} v(x_i-x_j)  
 \nonumber \\
 &=
  \int_{\R^3} \Big[-\rmu \sum_{i=1}^N \int w_{1,u}(x_i,y)\,dy + \sum_{i<j} w_u(x_i,x_j)\Big] du.
\end{align}

We next consider the kinetic energy.
It follows from \cite[Lemma 6.4]{FS} that if the regularity parameter $M$ of $\chi$ satisfies $M \geq 5$, then for all $\ell>0$ and all sufficiently small values of $\varepsilon_T,d,s,b$, we have 
\begin{equation}\label{eq:LowerKin}
  \int_{\R^3} \widetilde{\mathcal T}_u  du\leq -\Delta,
\end{equation}
where $\widetilde{\mathcal T}_u$ is a translated version of the kinetic operator defined in \eqref{eq:DefT_tilde}.

Furthermore, it is a standard result that (in the form sense)
\begin{align}
\sum_{u \in {\mathbb Z}^3} - \Delta_u^{{\mathcal N}}  \leq - \Delta,
\end{align}
where $- \Delta_u^{{\mathcal N}}$ is the Neumann Laplace operator on $L^2(\Lambda_u)$.
Therefore, by averaging, 
\begin{align}
\int - \Delta_u^{{\mathcal N}} \,du = \int_{[0,1]^3}  \sum_{u \in {\mathbb Z}^3} - \Delta_{u+v}^{{\mathcal N}}\,dv \leq - \Delta.
\end{align}
Combining this with \eqref{eq:LowerKin} and \eqref{eq:Loc_pot}, we find
\begin{align}
{\mathcal H}_{\rmu,N}(\rmu) \geq
\int_{\ell^{-1}(\Omega + B(0,\ell/2))} ({\mathcal H}_{\Lambda,u}(\rmu))_{N} \,du  \geq \ell^{-3} | \Omega + B(0,\ell/2)| E_{\Lambda}(\rmu),
\end{align}
where ${\mathcal H}_{\Lambda,u}(\rmu)$ is the natural translated version of ${\mathcal H}_{\Lambda}(\rmu)$, and where the second inequality uses that $({\mathcal H}_{\Lambda,u}(\rmu))_{N}$ and $({\mathcal H}_{\Lambda,u'}(\rmu))_{N}$ are unitarily equivalent by \eqref{eq:transInv},
and the definition of $E_{\Lambda}(\rmu)$.
Now Theorem~\ref{eq:Def_HB_new} follows upon using that $|\Omega+B(0,\ell/2)|/|\Omega| \rightarrow 1$ in the thermodynamic limit.
\end{proof}

\subsection{Potential energy splitting}\label{sec:potsplit}
Using that $P+Q = \one_{\Lambda}$, we 
will in Lemma~\ref{lm:potsplit} below arrive at a very useful decomposition of the potential.

Define the (commuting) operators
\begin{align}
n_0 = \sum_{i=1}^N P_i,\qquad n_{+} = \sum_{i=1}^N Q_i,\qquad n=\sum_{i=1}^N\one_{\Lambda,i}=n_0+n_+ .
\end{align}
We furthermore define
\begin{align}\label{eq:Densities}
\rho_{+} := n_{+} \ell^{-3}, \qquad \rho_0:= n_{0} \ell^{-3}.
\end{align}
A crucial idea in this paper is to write the potential energy in the
form given in the next lemma, where the important observation is to
identify the positive term ${\mathcal Q}_4^{\rm ren}$ which can be dropped for a 
lower bound.
\begin{lemma}[Potential energy decomposition]\label{lm:potsplit}
We have
\begin{equation} \label{eq:potsplit}
-\rmu \sum_{i=1}^N \int w_1(x_i,y)\,dy+
\frac{1}{2} \sum_{i\neq j}  w(x_i, x_j)
= {\mathcal Q}_0^{\rm ren}+{\mathcal Q}_1^{\rm ren}
+{\mathcal Q}_2^{\rm ren}+{\mathcal Q}_3^{\rm ren} + {\mathcal Q}_4^{\rm ren},
\end{equation}
where
\begin{align}
{\mathcal Q}_4^{\rm ren}:=&\,
\frac{1}{2} \sum_{i\neq j} \Big[ Q_i Q_j + (P_i P_j + P_i Q_j + Q_i P_j)\omega(x_i-x_j) \Big] w(x_i,x_j) \nonumber \\
&\,\qquad \qquad \times
\Big[ Q_j Q_i + \omega(x_i-x_j) (P_j P_i + P_j Q_i + Q_j P_i)\Big],\label{eq:DefQ4}\\
{\mathcal Q}_3^{\rm ren}:=&\,
\sum_{i\neq j} P_i Q_j w_1(x_i,x_j) Q_j Q_i + h.c. \label{eq:DefQ3} \\
{\mathcal Q}_2^{\rm ren}:=&\, \sum_{i\neq j} P_i Q_j w_2(x_i,x_j) P_j Q_i
+ \sum_{i\neq j} P_i Q_j w_2(x_i,x_j) Q_j P_i \nonumber \\&\,- \rmu \sum_{i=1}^N Q_i \int w_1(x_i,y)\,dy Q_i 
+ \frac{1}{2}\sum_{i\neq j} (P_i P_j w_1(x_i,x_j) Q_j Q_i + h.c.),\label{eq:DefQ2}\\
{\mathcal Q}_1^{\rm ren}:=&\, \sum_{i,j}P_jQ_iw_2(x_i,x_j)P_iP_j-\rmu
  \sum_{i} Q_i \int w_1(x_i,y)\,dy P_i +h.c.
\label{eq:DefQ1} \\
{\mathcal Q}_0^{\rm ren}:=&\,
\frac{1}{2} \sum_{i\neq j} P_i P_j w_2(x_i,x_j) P_j P_i - \rmu \sum_i P_i \int w_1(x_i,y)\,dy P_i\label{eq:DefQ0}
\end{align}
\end{lemma}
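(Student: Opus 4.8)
The plan is to prove Lemma~\ref{lm:potsplit} by a direct but carefully organized algebraic computation, expanding both sides in terms of the projections $P_i$ and $Q_i$, using $P_i+Q_i=\one_{\Lambda,i}$ throughout. First I would start from the right-hand side: expanding the product in ${\mathcal Q}_4^{\rm ren}$ and collecting terms. The key observation is that the terms in ${\mathcal Q}_4^{\rm ren}$ of the form $Q_iQ_j \, w \, Q_jQ_i$, $Q_iQ_j\, w\, \omega(\cdots)$, and $\omega(\cdots)\, w\, Q_jQ_i$, together with the various $\omega\, w\, \omega$ contributions, are precisely the pieces that, when combined with ${\mathcal Q}_0^{\rm ren}$ through ${\mathcal Q}_3^{\rm ren}$, must reconstitute the sums $\frac{1}{2}\sum_{i\neq j}w(x_i,x_j)$ and $-\rmu\sum_i\int w_1(x_i,y)\,dy$. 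The crucial identity that makes this work is $w_2 = w(1-\omega^2) = w(1-\omega)(1+\omega) = w_1(1+\omega)$ and $w_1 = w(1-\omega)$, so that $w = w_1 + w\omega = w_2 + w\omega^2$; these relations convert $\omega$-dressed $w$ terms into $w_1$ and $w_2$ terms.

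The key steps, in order: (1) Insert $\one_{\Lambda,i}=P_i+Q_i$ for each particle index in $\frac{1}{2}\sum_{i\neq j}w(x_i,x_j)$, producing $16$ terms classified by how many $P$'s and $Q$'s appear on each side (this is the $0Q,1Q,\ldots,4Q$ classification mentioned in the paper's overview). (2) Similarly expand $-\rmu\sum_i\int w_1(x_i,y)\,dy = -\rmu\sum_i(P_i+Q_i)\int w_1\,dy\,(P_i+Q_i)$ into $P\cdots P$, $P\cdots Q$, $Q\cdots P$, $Q\cdots Q$ pieces. (3) Expand ${\mathcal Q}_4^{\rm ren}$ fully; here one must be careful because $\omega(x_i-x_j)$ is a multiplication operator that does \emph{not} commute with $P_i,P_j$, so the ordering matters and the ``$h.c.$''-type symmetry of the expression has to be respected. (4) Match: the $4Q$ term $\frac12\sum Q_iQ_j w Q_jQ_i$ is exactly the leading piece of ${\mathcal Q}_4^{\rm ren}$. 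The $3Q$ terms split — part goes into ${\mathcal Q}_4^{\rm ren}$ (the cross terms $Q_iQ_j\, w\, \omega (P_jQ_i+Q_jP_i)$ and hermitian conjugate) and the remainder, after using $w=w_1+w\omega$, is absorbed by the definition $ {\mathcal Q}_3^{\rm ren} = \sum_{i\neq j}P_iQ_j w_1 Q_jQ_i + h.c.$ Continue this matching down through $2Q$, $1Q$, $0Q$, at each stage peeling off the $\omega$-dressed corrections into ${\mathcal Q}_4^{\rm ren}$ and using $w=w_1+w\omega$ or $w=w_2+w\omega^2$ to rewrite the leftover bare-$w$ pieces in terms of $w_1$ or $w_2$ as appropriate.

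The main obstacle, and the step requiring genuine care rather than bookkeeping, is verifying that \emph{all} the $\omega$-containing cross-terms generated by expanding the bare potential $\frac12\sum w(x_i,x_j)$ (after substituting $w = w_2 + w\omega^2$ and $w = w_1 + w\omega$ in the various sectors) assemble exactly — with correct coefficients, correct operator ordering, and correct hermiticity structure — into the explicitly written quadratic form ${\mathcal Q}_4^{\rm ren}$ in \eqref{eq:DefQ4}, with no leftover terms and no sign errors. In particular one must check the $2Q$ sector with some attention: the $P_iP_j\,w\,Q_jQ_i$ contribution must be split as $P_iP_j w_1 Q_jQ_i$ (landing in ${\mathcal Q}_2^{\rm ren}$) plus $P_iP_j\,(w\omega)\,Q_jQ_i$ (landing in ${\mathcal Q}_4^{\rm ren}$), and the diagonal-in-$\{i,j\}$ terms like $P_iQ_j w P_jQ_i$ must be handled using $w = w_2 + w\omega^2$ so that the $w_2$ part enters ${\mathcal Q}_2^{\rm ren}$ and the $w\omega^2$ part enters ${\mathcal Q}_4^{\rm ren}$. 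Once the combinatorics of which $\omega$-power pairs with which projection-sector is pinned down, the identity \eqref{eq:potsplit} follows by inspection; the positivity of ${\mathcal Q}_4^{\rm ren}$ is manifest from its written form $\sum_{i\neq j} A_{ij}^* A_{ij}\,\cdot\,(\text{factor})$ with $A_{ij}$ read off from \eqref{eq:DefQ4}, though strictly positivity is not part of the stated identity and would be remarked separately. I would organize the write-up as a table of the $16$ projection sectors of $\frac12\sum_{i\neq j}w(x_i,x_j)$ plus the $4$ sectors of the chemical-potential term, indicating for each where it lands, and then simply note that summing the columns reproduces the five ${\mathcal Q}^{\rm ren}$ expressions.
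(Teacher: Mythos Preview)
Your proposal is correct and follows essentially the same route as the paper: insert $P_i+Q_i=\one_{\Lambda,i}$ on both sides of $w(x_i,x_j)$ to obtain the $16$ projection sectors, then use the algebraic relations $w_1=w(1-\omega)$ and $w_2=w(1-\omega^2)$ to peel off the $\omega$-dressed pieces into ${\mathcal Q}_4^{\rm ren}$ and identify the remainders with ${\mathcal Q}_0^{\rm ren},\ldots,{\mathcal Q}_3^{\rm ren}$. The paper's proof is extremely terse (three sentences), so your more detailed plan---especially the careful sector-by-sector matching and the observation that $\omega(x_i-x_j)$ does not commute with the projections---is exactly the bookkeeping one must actually carry out.
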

\begin{proof}
The identity \eqref{eq:potsplit} follows using simple algebra and the
identitites \eqref{eq:DefU} and \eqref{eq:w2int}.  We simply write
$P_i+Q_i = 1_{\Lambda,i}$ for all $i$.  Inserting this identity in
both $i$ and $j$ on both sides of $w(x_i,x_j)$ and expanding yields
$16$ terms, which we have organized in a positive ${\mathcal Q}_4$
term and terms depending on the number of $Q$'s occuring.
\end{proof}
It will be useful to rewrite and estimate these terms as in the following lemma. 
\begin{lemma}\label{lem:QDecompSimpler}
If $v$ and hence $W_1$ are non-negative we have
\begin{align}
  {\mathcal Q}_0^{\rm ren}=&\,
  \frac{n_0(n_0-1)}{2|\Lambda|^2} \iint w_2(x,y)\,dx dy - \rmu \frac{n_0}{|\Lambda|}  \iint w_1(x,y)\,dx dy \nonumber\\
  =&\,\frac{n_0(n_0-1)}{2|\Lambda|} \Big(\widehat{g}(0) + \widehat{g\omega}(0)\Big)
  - \rmu n_0  \widehat{g}(0),\label{eq:Q0n0}
\end{align}
\begin{align}  
  {\mathcal Q}_1^{\rm ren}= &\, (n_0|\Lambda|^{-1} -\rmu) \sum_{i} Q_i \chi_{\Lambda}(x_i) W_1*\chi_{\Lambda}(x_i)  P_i + h.c. 
  \nonumber\\ 
  &\,+ n_0|\Lambda|^{-1}  \sum_{i} Q_i \chi_{\Lambda}(x_i)  (W_1\omega)*\chi_{\Lambda}(x_i) P_i + h.c.
  \label{eq:Q1n0}
\end{align}
and
\begin{align}    
      {\mathcal Q}_2^{\rm ren}\geq &\, 
      \sum_{i\neq j} P_i Q_j w_2(x_i,x_j) P_j Q_i
      + \frac{1}{2}\sum_{i\neq j} (P_i P_j w_1(x_i,x_j) Q_j Q_i + h.c.) \nonumber \\
      &\, + \Big((\rho_0 -\rmu) \widehat{W_1}(0) + \rho_0 \widehat{W_1 \omega}(0)\Big)\sum_{i} Q_i \chi_{\Lambda}(x_i)^2 Q_i 
      -C(\rmu+\rho_0)a (R/\ell)^2 n_+\, .\label{eq:Q2n0}
\end{align}
\end{lemma}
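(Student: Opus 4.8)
The plan is to start from the definitions of $\mathcal{Q}_0^{\rm ren},\mathcal{Q}_1^{\rm ren},\mathcal{Q}_2^{\rm ren}$ in \eqref{eq:DefQ0}--\eqref{eq:DefQ2} and simplify each one by carrying out the sums over the condensate indices, exploiting that $P_i$ projects onto the constant function $\ell^{-3/2}\theta$ on $\Lambda$. For a one-body operator sandwiched between two $P_i$'s, $P_i A(x_i) P_i = \ell^{-3}\langle\theta, A\theta\rangle P_i$, and since $\sum_i P_i = n_0$ (on a fixed $N$-particle sector, and each $P_i$ acts on the $i$-th variable) these collapse to multiplication operators in $n_0$. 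Concretely, for $\mathcal{Q}_0^{\rm ren}$: the first term $\frac12\sum_{i\ne j} P_iP_j w_2 P_jP_i$ becomes $\frac{n_0(n_0-1)}{2}\cdot\ell^{-6}\iint w_2$, and the second term $-\rmu\sum_i P_i\int w_1(x_i,y)\,dy\,P_i$ becomes $-\rmu n_0 \ell^{-3}\iint w_1$ (the $y$-integral of $w_1(x_i,y)$ is a function of $x_i$, then averaged against $P_i$). Then insert the identities \eqref{eq:DefU}, i.e. $\ell^{-3}\iint w_1 = 8\pi a = \widehat g(0)$, and \eqref{eq:w2int}, i.e. $\ell^{-3}\iint w_2 = 8\pi a + \int g\omega = \widehat g(0) + \widehat{g\omega}(0)$; this gives exactly \eqref{eq:Q0n0}. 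Note $\widehat{f}(0) = \int f$ in the paper's Fourier convention, so this is a matter of rewriting.

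For $\mathcal{Q}_1^{\rm ren}$ one keeps the index that carries a $Q$ live. In the term $\sum_{i,j} P_j Q_i w_2(x_i,x_j) P_i P_j$, sum over $j\ne i$ first: $\sum_{j\ne i} P_j \,(\text{acting on }x_j)$ contributes $\ell^{-3}\theta(x_j)\int w_2(x_i,\cdot)$-type factor; after summing the bosonic index $j$ this produces $(n_0 - \text{correction})\ell^{-3}$ times the partial integral of $w_2$ in its second variable evaluated at $x_i$. Recall $w_2(x,y) = \chi_\Lambda(x) W_2(x-y)\chi_\Lambda(y)$ and $W_2 = W_1(1+\omega)$, so $\int w_2(x_i,y)\,dy = \chi_\Lambda(x_i)\,(W_1(1+\omega))*\chi_\Lambda(x_i) = \chi_\Lambda(x_i)[W_1*\chi_\Lambda(x_i) + (W_1\omega)*\chi_\Lambda(x_i)]$. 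Combining the resulting $n_0\ell^{-3} = \rho_0$ coefficient with the $-\rmu \sum_i Q_i \int w_1(x_i,y)\,dy\, P_i$ term (whose integral is $\chi_\Lambda(x_i) W_1*\chi_\Lambda(x_i)$, by definition of $w_1$) and adding the h.c. yields precisely \eqref{eq:Q1n0}: the $(\rho_0-\rmu)$ coefficient multiplies the $W_1*\chi_\Lambda$ piece and $\rho_0$ multiplies the $(W_1\omega)*\chi_\Lambda$ piece. One has to be a little careful about whether the diagonal $i=j$ is included in the sum written in \eqref{eq:DefQ1}; the stated form suggests it is, and the off-diagonal/diagonal bookkeeping is routine.

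For $\mathcal{Q}_2^{\rm ren}$ the first two off-diagonal sums in \eqref{eq:DefQ2}, namely $\sum_{i\ne j} P_iQ_j w_2 P_j Q_i + \sum_{i\ne j}P_iQ_j w_2 Q_j P_i$ and $\frac12\sum_{i\ne j}(P_iP_j w_1 Q_jQ_i + h.c.)$, are kept as they are (they reappear verbatim as the first line of \eqref{eq:Q2n0}), while the diagonal term $-\rmu \sum_i Q_i \int w_1(x_i,y)\,dy\, Q_i = -\rmu \sum_i Q_i \chi_\Lambda(x_i) (W_1*\chi_\Lambda)(x_i) Q_i$ is the one requiring work. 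Here I would replace $W_1*\chi_\Lambda$ by $\chi_\Lambda \cdot \widehat{W_1}(0)$ plus an error using \eqref{eq:I2-integral-2} from Lemma~\ref{lem:Simple} with $f=W_1$ (valid since $\supp W_1 = \supp g \subset \overline{B(0,R)}$ and $W_1$ is even): $|W_1*\chi_\Lambda(x) - \chi_\Lambda(x)\int W_1| \le C (R/\ell)^2 \int W_1 \le C' (R/\ell)^2 a$, using $\int W_1 = \int W_1 = 8\pi a$ up to the $O((R/\ell)^2)$ correction from \eqref{eq:W1-g-new}. Similarly the $\mathcal{Q}_1$-type contribution coming from the rewriting above contains $(W_1\omega)*\chi_\Lambda$ which I replace by $\chi_\Lambda \widehat{W_1\omega}(0)$ up to the same kind of error (here $\int W_1\omega = \widehat{W_1\omega}(0)$). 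Wait—I should be careful: the $(W_1\omega)*\chi_\Lambda$ piece belongs to $\mathcal{Q}_1^{\rm ren}$ in \eqref{eq:Q1n0}, not to $\mathcal{Q}_2^{\rm ren}$; the third line of \eqref{eq:DefQ2} only has $\int w_1$. So in $\mathcal{Q}_2^{\rm ren}$ the only convolution to handle is $W_1*\chi_\Lambda$, producing the coefficient $-\rmu\widehat{W_1}(0)$. But the stated \eqref{eq:Q2n0} has coefficient $(\rho_0-\rmu)\widehat{W_1}(0) + \rho_0\widehat{W_1\omega}(0)$ on $\sum_i Q_i\chi_\Lambda^2 Q_i$ — so additional $\rho_0$-terms must be coming from the off-diagonal $P_iP_j w_1 Q_jQ_i$ sum or from the first two sums when one further decomposes a $P$; more plausibly, the intended derivation treats some of the $\mathcal{Q}_1$ rewriting as landing in $\mathcal{Q}_2$, or the sums in \eqref{eq:DefQ2} secretly include $i=j$ diagonal pieces. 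In any case the mechanism is the same: collapse $P$'s to $n_0\ell^{-3}=\rho_0$, collect the coefficient of $\sum_i Q_i\chi_\Lambda^2 Q_i$, and bound every convolution-vs-constant discrepancy by $C(R/\ell)^2 a$ times $\sum_i Q_i\chi_\Lambda(x_i)^2 Q_i \le \sum_i Q_i \le n_+$ (here I use $0\le\chi\le 1$, or rather $\|\chi\|_\infty$ bounded), and similarly for $\rmu$ versus $\rho_0$ factors, giving the error term $-C(\rmu+\rho_0)a(R/\ell)^2 n_+$. The inequality (rather than equality) in \eqref{eq:Q2n0} comes precisely from absorbing these real-valued error operators, which are bounded in absolute value by $C(\rmu+\rho_0)a(R/\ell)^2 n_+$, with a sign. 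The main obstacle, and the only genuinely non-routine point, is the careful bookkeeping of which of the $16$ original terms (and their diagonal vs off-diagonal parts) end up in \eqref{eq:Q2n0} versus \eqref{eq:Q1n0}, and tracking that the net coefficient of $\sum_i Q_i\chi_\Lambda^2 Q_i$ is exactly $(\rho_0-\rmu)\widehat{W_1}(0)+\rho_0\widehat{W_1\omega}(0)$; once that combinatorial accounting is pinned down, every remaining estimate is an application of Lemma~\ref{lem:Simple} and the crude bound $\chi_\Lambda^2\le\|\chi\|_\infty^2$.
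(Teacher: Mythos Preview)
Your treatment of $\mathcal{Q}_0^{\rm ren}$ and $\mathcal{Q}_1^{\rm ren}$ is correct and matches the paper. The gap is in $\mathcal{Q}_2^{\rm ren}$, and it is exactly the bookkeeping point you flag as an obstacle but then resolve incorrectly.

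You write that the two sums $\sum_{i\ne j} P_iQ_j w_2 P_j Q_i$ and $\sum_{i\ne j}P_iQ_j w_2 Q_j P_i$ ``are kept as they are (they reappear verbatim as the first line of \eqref{eq:Q2n0})''. This is wrong: only the \emph{first} of these, $\sum_{i\ne j} P_iQ_j w_2(x_i,x_j) P_j Q_i$, appears in \eqref{eq:Q2n0}. The second, $\sum_{i\ne j} P_iQ_j w_2(x_i,x_j) Q_j P_i$, does not reappear --- and it is precisely the missing source of your $\rho_0$-terms. In that sum the projection $P$ acts on the \emph{same} variable $i$ on both sides, so one can carry out the $x_i$-average: $P_i w_2(x_i,x_j) P_i = \ell^{-3}\!\int w_2(x,x_j)\,dx\cdot P_i = \chi_\Lambda(x_j)(W_2*\chi_\Lambda)(x_j)\,\ell^{-3} P_i$. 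Summing over $i\ne j$ and sandwiching with $Q_j$ (using $P_jQ_j=0$) yields
\[
\sum_{i\ne j} P_iQ_j w_2(x_i,x_j) Q_j P_i
= \rho_0 \sum_j Q_j\,\chi_\Lambda(x_j)\,(W_2*\chi_\Lambda)(x_j)\,Q_j .
\]
Since $W_2=W_1+W_1\omega$, this splits into a $\rho_0$-piece with $W_1*\chi_\Lambda$ and one with $(W_1\omega)*\chi_\Lambda$. Combined with the $-\rmu$ term (your third term), the convolution coefficients are $(\rho_0-\rmu)$ for $W_1*\chi_\Lambda$ and $\rho_0$ for $(W_1\omega)*\chi_\Lambda$, exactly as in \eqref{eq:Q2n0}. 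There is no leakage from $\mathcal{Q}_1^{\rm ren}$ and no hidden diagonal; the combinatorics stays entirely inside \eqref{eq:DefQ2}. After this, your plan --- apply \eqref{eq:I2-integral-2} from Lemma~\ref{lem:Simple} to replace each convolution by $\chi_\Lambda\cdot\widehat{(\cdot)}(0)$ with error $C a(R/\ell)^2$, then bound $Q_i\chi_\Lambda^2 Q_i\le \|\chi\|_\infty^2 Q_i$ --- is exactly what the paper does to obtain the last line of \eqref{eq:Q2n0}.
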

\begin{proof} The rewriting of ${\mathcal Q}_0$ is straightforward.
The rewriting of ${\mathcal Q}_1^{\rm ren}$ follows from
\begin{align}
  {\mathcal Q}_1^{\rm ren} = &\,
  \Big((n_0|\Lambda|^{-1} -\rmu) \sum_{i} Q_i \int w_1(x_i,y)\,dy P_i + h.c. \Big) \nonumber\\&\,+
  \Big(n_0|\Lambda|^{-1}  \sum_{i} Q_i \int w_1(x_i,y) \omega(x_i-y) \,dy P_i + h.c. \Big). \nonumber
\end{align}
We carry out the similar calculation on the part of the
$2Q$-term where $P$ acts in the same variable on both sides of the
potential,
\begin{align}
{\mathcal Q}_2^{\rm ren,1} &\,=\sum_{i\neq j} P_i Q_j w_2(x_i,x_j) P_j Q_i
+ \frac{1}{2}\sum_{i\neq j} (P_i P_j w_1(x_i,x_j) Q_j Q_i + h.c.) \nonumber \\
&\,\quad + (\rho_0 -\rmu) \sum_{i} Q_i \chi_{\Lambda}(x_i) W_1*\chi_{\Lambda}(x_i)  Q_i + 
\rho_0  \sum_{i} Q_i \chi_{\Lambda}(x_i)  (W_1\omega)*\chi_{\Lambda}(x_i) Q_i . \nonumber
\end{align}
At this point we invoke Lemma~\ref{lem:Simple} to get, for example,
\begin{align}
\sum_{i} Q_i \chi_{\Lambda}(x_i) W_1*\chi_{\Lambda}(x_i)  Q_i
\geq &\,
\Bigl(\int W_1\Bigr) \sum_{i} Q_i \chi_{\Lambda}(x_i)^2 Q_i\nonumber\\
&\, - \max_{i,j}\| \partial_i\partial_j \chi \|_{\infty} (R/\ell)^2 \Bigl(\int W_1\Bigr) \| \chi \|_{\infty} n_{+}.
\end{align}
\end{proof}

\section{A priori bounds on particle number and excited particles}\label{sec:Apriori}

In this section we will give some important a priori bounds on the
particle number $n$, the number of excited particles $n_+$, as well as
on the potential energy term ${\mathcal Q}_4^{\rm ren}$. The bounds on $n$ and $n_+$
essentially say that for states with sufficiently low energy $n$ is
close to what one would expect, i.e., $\rmu\ell^3$ and the expectation
of $n_+$ is smaller with a factor which is not much worse than the
relative LHY error. 
The bounds on $n$ and $n_{+}$ were given in \cite[Theorem 7.1]{FS} for a similar operator.
To prove these bounds requires the localization to smaller boxes---the double localization mentioned above---and to facilitate this our localization function $\chi$ has only finite smoothness $M$.

One of the main new contributions in this article compared to \cite{FS} is a bound on the momentum-localized excitations $n_+^L$ to be defined in \eqref{eq:def_nplusL} below. In order to achieve this bound, we need the improved bound on ${\mathcal Q}_4^{\rm ren}$ given in \eqref{eq:Q4apriori_2} 
below.

\begin{theorem}[{A priori bounds}]\label{thm:aprioribounds}
Assume that the conditions of Assumption~\ref{assump:params} 
are satisfied and that $\rmu a^3$ is small enough. 
Then there is a universal constant $C>0$
such that if $\Psi \in {\mathcal F}_s(L^2(\Lambda))$ is an $n$-particle normalized state in the
bosonic Fock space over $L^2(\Lambda)$ satisfying
\begin{align}\label{eq:aprioriPsi}
  \langle \Psi, {\mathcal H}_{\Lambda}(\rmu) \Psi \rangle \leq
  - 4\pi \rmu^2 a \ell^3(1-K_B^3 (\rmu a^3)^{\frac{1}{2}})
\end{align}
then 
\begin{align}\label{eq:apriori_n}
|n\ell^{-3}-\rmu| &\leq  C \rmu K_B^{3/2}K_\ell(\rmu a^3)^{1/4},\\
\label{eq:apriorinn+NEW}
\langle\Psi,n_+\Psi\rangle &\leq C \rmu\ell^3K_B^3K_\ell^2(\rmu a^3)^{1/2}, \\
\intertext{and}
\label{eq:Q4apriori_2}
  0\leq \langle \Psi, {\mathcal Q}_4^{\rm ren}\Psi\rangle&\leq C\rmu^2 a \ell^3 K_B^3 K_{\ell}^2 (\rmu a^3)^{1/2}.
\end{align}
\end{theorem}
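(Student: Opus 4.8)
The strategy is to feed the energy hypothesis \eqref{eq:aprioriPsi} into the potential-energy decomposition of Lemma~\ref{lm:potsplit}, combined with the kinetic-energy lower bound coming from the construction of ${\mathcal T}$, and to extract all three bounds in a bootstrap: a rough control of $n$ first, then the sharper bounds on $n_+$ and on ${\mathcal Q}_4^{\rm ren}$. Since this statement is quoted as essentially the analogue of \cite[Theorem 7.1]{FS} together with the new improvement \eqref{eq:Q4apriori_2}, I would structure the argument to run in parallel with \cite{FS} wherever possible and isolate the genuinely new input.

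\textbf{Step 1: reduce to the localized boxes.} The operator ${\mathcal H}_\Lambda(\rmu)$ on the large box is still too crude to see condensation; as explained in Section~\ref{sec:Desc} one must perform the second localization into small boxes of a length scale $\ll \ell$ but $\gg 1/\sqrt{\rmu a}$, using the explicit finite-regularity $\chi=f^2$ from \eqref{eq:Splitchi} so that the sliding/IMS-type localization of the kinetic energy only loses the Neumann-gap terms that were deliberately put into \eqref{eq:DefT_tilde}. After this localization one is reduced — as in \cite[Appendix B]{FS} — to a one-small-box problem where Bose--Einstein condensation with near-optimal depletion bounds is available. I would invoke this machinery (it is what \cite[Appendix B]{FS} and Appendix~\ref{SmallBoxes} here provide) rather than redo it: the output is that, on a low-energy state, the number of particles outside the condensate \emph{in each small box} is controlled, and summing over small boxes yields a preliminary estimate on $\langle n_+\rangle$ and a preliminary two-sided estimate on $n$.

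\textbf{Step 2: the energy balance and the role of ${\mathcal Q}_4^{\rm ren}$.} Write $\langle \Psi,{\mathcal H}_\Lambda(\rmu)\Psi\rangle = \langle \text{kin}\rangle + \langle {\mathcal Q}_0^{\rm ren}\rangle+\cdots+\langle {\mathcal Q}_4^{\rm ren}\rangle$. The term ${\mathcal Q}_0^{\rm ren}$ is a c-number in $n_0$ (formula \eqref{eq:Q0n0}) whose minimum over $n_0$ is of size $-4\pi\rmu^2 a\ell^3$ up to the relative corrections $\widehat{g\omega}(0)/a = O((\rmu a^3)^{1/2})$ and boundary errors $O((R/\ell)^2)$; matching this against the hypothesis \eqref{eq:aprioriPsi} forces $n_0/\ell^3$ — hence (using Step 1 to bound $n_+$) $n/\ell^3$ — to be within $O(\rmu K_B^{3/2}K_\ell (\rmu a^3)^{1/4})$ of $\rmu$, which is \eqref{eq:apriori_n}. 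The terms ${\mathcal Q}_1^{\rm ren},{\mathcal Q}_2^{\rm ren}$ are estimated using Lemma~\ref{lem:QDecompSimpler}: ${\mathcal Q}_2^{\rm ren}$ furnishes a positive quadratic-in-$Q$ piece (the $\sum_i Q_i\chi_\Lambda^2 Q_i$ term with its nonnegative coefficient, since $\rho_0\approx\rmu$ and $\widehat{W_1\omega}(0)\geq 0$) minus a small error $\propto a(R/\ell)^2 n_+$, while ${\mathcal Q}_1^{\rm ren}$ is handled by Cauchy--Schwarz against the kinetic energy and against that positive piece; the off-diagonal $P_iP_j w_1 Q_jQ_i$ part is absorbed into the kinetic energy together with ${\mathcal Q}_3^{\rm ren}$ exactly as in \cite{FS}. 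What survives after all cancellations is, schematically,
\begin{align}
  -4\pi\rmu^2 a\ell^3 + c\,\langle\text{kin}\rangle + c\,\langle {\mathcal Q}_4^{\rm ren}\rangle + (\text{gap terms controlling }n_+) - (\text{errors}) \leq -4\pi\rmu^2 a\ell^3(1-K_B^3(\rmu a^3)^{1/2}),
\end{align}
so that $\langle {\mathcal Q}_4^{\rm ren}\rangle$, $\langle\text{kin}\rangle$ and the Neumann-gap control of $\langle n_+\rangle$ are all bounded by $C\rmu^2 a\ell^3 K_B^3 K_\ell^2(\rmu a^3)^{1/2}$ — giving \eqref{eq:apriorinn+NEW} (the gap term $b\ell^{-2}Q$ in \eqref{eq:DefT_tilde} converts an energy bound into the bound $\langle n_+\rangle\lesssim \rmu^2 a\ell^5(\rmu a^3)^{1/2}\cdot K$, and $\ell^2=K_\ell^2/(\rmu a)$ gives the stated power) and \eqref{eq:Q4apriori_2}. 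The lower bound $\langle {\mathcal Q}_4^{\rm ren}\rangle\geq 0$ is immediate since ${\mathcal Q}_4^{\rm ren}=\frac12\sum_{i\neq j}A_{ij}^*w(x_i,x_j)A_{ji}\cdots$ is manifestly a sum of positive operators because $w\geq 0$ (the form $\sum A^* w A\geq 0$).

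\textbf{Main obstacle.} The delicate point is \eqref{eq:Q4apriori_2}: getting the \emph{upper} bound on $\langle {\mathcal Q}_4^{\rm ren}\rangle$ with the claimed power $(\rmu a^3)^{1/2}$ rather than something worse. Unlike in \cite{FS}, one cannot afford an estimate in which $\int v$ (or $\mathcal R$) appears, because $\int v$ is now allowed to be large; the cancellations in the renormalized splitting must be exploited so that only $a = (8\pi)^{-1}\int g$ (via \eqref{eq:DefU}, \eqref{eq:w2int}) enters. Concretely, one has to show that after subtracting the minimizing value of ${\mathcal Q}_0^{\rm ren}$ and using the positivity structure of ${\mathcal Q}_2^{\rm ren}$, the remaining negative contributions (from ${\mathcal Q}_1^{\rm ren}$, from the $\rho_0-\rmu$ discrepancy, from the boundary/$(R/\ell)^2$ errors in Lemma~\ref{lem:Simple}, and from the $P_iP_j w_1 Q_jQ_i + {\mathcal Q}_3^{\rm ren}$ cross terms) are each $o(\rmu^2 a\ell^3(\rmu a^3)^{1/2})$ or controllable by $\varepsilon$ times $\langle {\mathcal Q}_4^{\rm ren}\rangle+\langle\text{kin}\rangle$ itself, so that ${\mathcal Q}_4^{\rm ren}$ can be moved to the left-hand side and bounded. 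This rearrangement, and checking that the parameter choices in Appendix~\ref{sec:params} (the smallness of $b,\varepsilon_T,s,d$, the size of $K_\ell,K_B$) make all error terms subleading, is where the real work lies; everything else is a repackaging of \cite{FS} adapted to the modified kinetic energy \eqref{eq:DefT_new}.
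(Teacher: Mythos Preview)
Your two-phase outline (small-box localization to extract the $n_+$ bound via the large-box Neumann gap, then a large-box energy balance keeping a fraction of ${\mathcal Q}_4^{\rm ren}$ to get \eqref{eq:apriori_n} and \eqref{eq:Q4apriori_2}) is exactly the paper's structure in Appendix~\ref{SmallBoxes}. However, two points need correction.

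\textbf{First, the $n$-bound requires a particle-splitting argument that you omit.} Your Step~2 says the quadratic shape of ${\mathcal Q}_0^{\rm ren}$ in $n_0$, combined with the hypothesis \eqref{eq:aprioriPsi}, forces $n_0/\ell^3\approx\rmu$. But this only works once you know $n\lesssim\rmu\ell^3$ a priori. The Bogoliubov lower bound on $\sum{\mathcal T}^{(i)}+A_2$ (the analogue of Lemma~\ref{lm:appbogolubov} in the large box) produces error terms scaling like $a^4(ds\ell)^3(n\ell^{-3})^3 n$, i.e.\ quartic in $n$; for $n\gg\rmu\ell^3$ these overwhelm the quadratic gain from ${\mathcal Q}_0^{\rm ren}$ and the argument collapses. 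The paper handles this (both in the small boxes, Theorem~\ref{thm:smallbox}, and again in the large box, Theorem~\ref{thm:aprioriHLambdanQ}) by writing $n=mn'+n''$ with $n'\in[3\rmu\ell^3,4\rmu\ell^3)$, ignoring the positive inter-group interaction, and showing that each group of $n'$ particles has nonnegative energy while the remainder $n''<4\rmu\ell^3$ is in the regime where your quadratic argument applies. The hypothesis \eqref{eq:aprioriPsi} then forces $m=0$. This trick is not optional.

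\textbf{Second, your ``Main obstacle'' mischaracterizes the difficulty.} You worry about $\int v$ appearing, but by the time one reaches Lemma~\ref{lm:potsplit} all potentials are $w_1,w_2$, built from $W_1=g/(\chi*\chi)$ and $W_2$, whose integrals are $O(a)$ by \eqref{eq:DefU}--\eqref{eq:w2int}; the quantity $\int v$ simply never enters the a~priori bounds. The genuine reason the small-box detour is forced is that in the large box the gap $b\ell^{-2}$ is much smaller than $\rmu a\sim K_\ell^2\ell^{-2}$, so the ubiquitous errors $Ca(\rmu+n_0\ell^{-3})n_+$ from Lemma~\ref{lm:appinteractionestimate_a} cannot be absorbed there; in the small box the gap is $\varepsilon_T(d\ell)^{-2}\gg\rmu a$ (by \eqref{con:eTdK}), which is what makes the absorption possible. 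The novelty for \eqref{eq:Q4apriori_2} relative to \cite{FS} is not an $\int v$ issue but simply that one must retain $\tfrac12{\mathcal Q}_4^{\rm ren}$ in the potential lower bound (Lemma~\ref{lm:appinteractionestimate_a}) rather than discarding all of it, and absorb ${\mathcal Q}_3^{\rm ren}$ into that half via Cauchy--Schwarz rather than into the kinetic energy as you suggest.
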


The proof of Theorem~\ref{thm:aprioribounds} is given in Appendix~\ref{SmallBoxes}.

There will be important regions in momentum space corresponding to low and high momenta.

We first define the `low' and `high' momentum regions as follows\footnote{The ${\mathcal P}_{\rm high}$ region was also used in \cite{FS}, but denoted $P_H$ and the parametrization was different. Therefore we here use a `doubleprime' on $K_H''$ to distinguish it from the parameter $K_H$ used in that paper.
The notation $K_H'$ will be used for another related constant, see \eqref{eq:Def_QL'}.
}
\begin{align}\label{eq:Momomta}
{\mathcal P}_{\rm low} := \{ |p| \leq d^{-2} \ell^{-1} \} \subset {\mathbb R}^3,\qquad {\mathcal P}_{\rm high}  := \{|p| \geq K_H'' \ell^{-1}\} \subset {\mathbb R}^3.
\end{align}
where $d\ll1$ and $K_H''\gg 1$  were defined in Section~\ref{sec:params}.
We will always assume that \eqref{cond:disjoint} is satisfied.
This assures that ${\mathcal P}_{\rm low}$ and ${\mathcal P}_{\rm high}$ are disjoint.
The choice of ${\mathcal P}_{\rm low}$ here is in agreement with a term in the kinetic energy, which will assure that we have a good bound on the number of particles with momentum outside ${\mathcal P}_{\rm low}$.

We will define the low momentum localization operator $Q_L$ as follows.
Let $f \in C^{\infty}({\mathbb R})$ be a monotone non-increasing function satisfying that $f(t) = 1 $ for $t\leq 1$ and $f(t) = 0$ for $t\geq 2$.
We further define
\begin{align}\label{eq:def_fL}
f_L(t) := f(d^2 \ell t).
\end{align}
I.e.  $f_L$ is a smooth localization to the low momenta ${\mathcal P}_{\rm low}$. With this notation, we define
\begin{align}\label{eq:Ql-bar}
Q_L := Q f_L(\sqrt{-\Delta}), \qquad \overline{Q_L} := Q (1- f_L(\sqrt{-\Delta})).
\end{align}
Notice that $Q_L$ is not self-adjoint.

We define
\begin{align}\label{eq:nplusH}
n_{+}^{H} := \sum Q \one_{(d^{-2} \ell^{-1}, \infty)}(\sqrt{-\Delta}) Q.
\end{align}
With this definition, we have
\begin{align}\label{eq:QLs_giveNplusH}
\sum \overline{Q}_L (\overline{Q}_L)^{*} \leq n_{+}^{H}.
\end{align}
Notice that $n_{+}^{H}$ can be thought of as counting the number of particles outside the condensate and with momentum outside ${\mathcal P}_{\rm low}$---{\it not} the number of excited particles with momentum in ${\mathcal P}_{\rm high}$ as one might be led to think.

We will also need the splitting corresponding intuitively to particles with nonzero momentum in or outside ${\mathcal P}_{\rm high}$. However, for technical reasons (we would like the corresponding projections to commute with the kinetic energy), we define these in terms of the kinetic energy.
Therefore, we define for  $K_H' >1$, 
\begin{align}\label{eq:Def_QL'}
Q_L'  = \one_{(0, (K_H' \ell^{-1})^2)}( {\mathcal T}),\qquad Q_H'  = \one_{[(K_H' \ell^{-1})^2, \infty)}( {\mathcal T}), 
\end{align}
with the kinetic energy operator ${\mathcal T}$ as defined in \eqref{eq:DefT_new}.
In particular,
\begin{align}
Q = Q_L' + Q_H',  \qquad
1_{\Lambda} = P + Q_L' + Q_H'.
\end{align}
Since the kinetic energy is more complicated than the standard Laplacian, we cannot choose $K_H'=K_H''$, but
need them to satisfy \eqref{cond:disjoint}.

We define also
\begin{align}\label{eq:def_nplusL}
\widetilde{n}_{+}^{\rm H} := \sum_{j} (Q_H')_j, \qquad n_{+}^{\rm L} := \sum_{j} (Q_L')_j.
\end{align}
We warn the reader that with these definitions, we have $n_{+}^{\rm L} + n_{+}^{\rm H} \neq n_{+}$.

It is one of the main achievements of the present paper to establish good bounds on moments on $n_{+}^{\rm L}$---even for potentials with large $L^1$-norm.

In the remainder of this section we will establish two crucial technical results. One of these, Lemma~\ref{lem:pseudolocal} below, compares $Q_L' $ (which is defined in terms of a spectral projection of the kinetic energy) with a localization to comparable momenta.
For the localization of large matrices result, Proposition~\ref{prop:LocMatrices}, it is convenient to consider a projection commuting with the kinetic energy. However, for later calculations, a momentum localization is better. Thus the need for Lemma~\ref{lem:pseudolocal} that compares the two in \eqref{eq:MomentComp}.

We also establish Lemma~\ref{lem:Estimate_dLs} below. This lemma gives a bound on the parts of the Hamiltonian that change $n_{+}^{\rm L}$ and is the key input to the localization of large matrices; Proposition~\ref{prop:LocMatrices}.
To prove Lemma~\ref{lem:Estimate_dLs}, we need Lemma~\ref{lem:LowMomentaOperator}.
The main technical input to the proof of Lemma~\ref{lem:LowMomentaOperator} has been placed in Corollary~\ref{cor:Derivatives} and Lemma~\ref{lem:SobolevHom} in Appendix~\ref{sec:proofs}.

\begin{lemma}\label{lem:pseudolocal}
Suppose that the parameters satisfy Assumption~\ref{assump:params}.
Then, we have
\begin{align}\label{eq:MomentComp}
Q \chi_{\Lambda} \one_{\{|p| \leq K_H'' \ell^{-1}\}} \chi_{\Lambda}  Q \leq C Q_L' + C \left( \left( \frac{K_H''}{K_H'} \right)^{M} +  \varepsilon_N^{\frac{3}{2}}\right) Q.
\end{align}

Furthermore, we have a similar estimate without $\chi_{\Lambda}$'s:
\begin{align}
Q  \one_{\{|p| \leq K_H'' \ell^{-1}\}} Q \leq C Q_L' + C \left( \left( \frac{K_H''}{K_H'} \right)^{M} +  \varepsilon_N^{\frac{3}{2}}\right) Q,
\end{align}
\end{lemma}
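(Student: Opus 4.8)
\textbf{Proof strategy for Lemma~\ref{lem:pseudolocal}.}
The plan is to compare the sharp momentum cutoff $\one_{\{|p|\le K_H''\ell^{-1}\}}$ against the spectral projection $Q_L' = \one_{(0,(K_H'\ell^{-1})^2)}(\mathcal T)$ by exploiting the lower bound on $\mathcal T$ on the range of $Q$ and the regularity of $\chi$. The key point is that $\mathcal T$ dominates (up to constants and lower-order terms) a genuinely elliptic operator: by the $\varepsilon_N(-\Delta^{\mathcal N})$ term in \eqref{eq:DefT_new}, we have $\mathcal T \ge \varepsilon_N(-\Delta^{\mathcal N})$ on $Q$, and on the high-momentum part $\mathcal T$ also controls $\mathcal T'=Q\chi_\Lambda\tau(-i\partial)\chi_\Lambda Q$, which behaves like $c\,Q\chi_\Lambda(-\Delta)\chi_\Lambda Q$ for momenta $\gg (s\ell)^{-1}$. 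I would first split the operator $Q\chi_\Lambda\one_{\{|p|\le K_H''\ell^{-1}\}}\chi_\Lambda Q \le Q\chi_\Lambda\one_{\{|p|\le K_H''\ell^{-1}\}}\chi_\Lambda Q$ trivially and instead bound it above by decomposing the identity as $\one = Q_L' + Q_H'$ from \eqref{eq:Def_QL'}: write
\[
Q\chi_\Lambda\one_{\{|p|\le K_H''\ell^{-1}\}}\chi_\Lambda Q
= (Q_L'+Q_H')\,Q\chi_\Lambda\one_{\{|p|\le K_H''\ell^{-1}\}}\chi_\Lambda Q\,(Q_L'+Q_H'),
\]
so that it suffices to control the four cross terms. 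The $Q_L' \cdots Q_L'$ term is bounded by $C Q_L'$ using $\|\chi_\Lambda\|_\infty\le 1$ and $\|\one_{\{\cdots\}}\|\le 1$. The diagonal $Q_H'\cdots Q_H'$ term and the off-diagonal $Q_L'\cdots Q_H'$ terms are the ones that need real work.

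For the $Q_H'$ contributions, the idea is that $Q_H'$ forces the kinetic energy to be at least $(K_H'\ell^{-1})^2$; combined with the elliptic lower bound on $\mathcal T$ this forces either a large genuine momentum or a boundary contribution controlled by $\varepsilon_N$. Concretely, I would use that on $\Ran Q_H'$ one has $\one \le (K_H'\ell^{-1})^{-2}\mathcal T$, and then $\mathcal T \le C\varepsilon_N(-\Delta^{\mathcal N}) + C\mathcal T' + (\text{lower order})$; the term $\mathcal T'$ is $Q\chi_\Lambda\tau(-i\partial)\chi_\Lambda Q$, and since $\tau(p)\lesssim |p|^2$, we get that $Q_H'$ composed with $\chi_\Lambda\one_{\{|p|\le K_H''\ell^{-1}\}}\chi_\Lambda$ on the left, paired with its adjoint, produces a factor $(K_H''/K_H')^2$ from the momentum mismatch in the $\mathcal T'$ part. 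To upgrade the power $2$ to $M$, I would iterate: the $\chi_\Lambda$ factors are $C_0^M$, so commuting the sharp low-momentum cutoff through $\chi_\Lambda$ and the resolvent-type factors in $\mathcal T$ repeatedly gains a factor $(K_H''/K_H')$ per derivative, up to $M$ derivatives, exactly as in the smoothing arguments of \cite{FS}. The $\varepsilon_N(-\Delta^{\mathcal N})$ piece does not benefit from this momentum separation (the Neumann Laplacian does not commute nicely with a translation-invariant cutoff near the boundary), which is why it contributes the separate error $\varepsilon_N^{3/2} Q$: here one uses that $-\Delta^{\mathcal N}$ restricted to energies below $(K_H''\ell^{-1})^2$ and paired against a function with $\|\chi_\Lambda\|$-type weights yields, via the Weyl-type counting / trace estimate on the box, an error of size $\varepsilon_N^{3/2}$ times the number of relevant modes, which is $O(1)$ after the normalization built into \eqref{eq:DefT_new}; the exponent $3/2$ reflects the three-dimensional phase-space volume of the region where $\varepsilon_N(-\Delta^{\mathcal N})$ is below the threshold.

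For the cross terms $Q_L'(\cdots)Q_H'$, I would use Cauchy--Schwarz in the operator sense: $Q_L'AQ_H' + Q_H'A^*Q_L' \le Q_L' A A^* Q_L' + Q_H' \le C Q_L' + Q_H'A^*AQ_H'$ where $A = Q\chi_\Lambda\one_{\{|p|\le K_H''\ell^{-1}\}}\chi_\Lambda Q$ restricted appropriately — wait, more carefully one wants the $Q_H'$-weighted piece to carry the small factor, so one writes $2\,\real\, Q_L' A Q_H' \le \delta^{-1} Q_L' A A^* Q_L' + \delta\, Q_H' \le C\delta^{-1} Q_L' + \delta Q$ and then absorbs $\delta Q$ by choosing $\delta = (K_H''/K_H')^M + \varepsilon_N^{3/2}$, provided one first shows the stronger statement that $Q_H' A^* A Q_H' \le C((K_H''/K_H')^M + \varepsilon_N^{3/2})Q_H'$, which is exactly the diagonal $Q_H'$ estimate from the previous paragraph. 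So the whole lemma reduces to that one diagonal bound. Finally, the version without $\chi_\Lambda$'s follows from the same argument, in fact more easily: without the $\chi_\Lambda$ weights the sharp cutoff $\one_{\{|p|\le K_H''\ell^{-1}\}}$ commutes with $-\Delta^{\mathcal N}$ is false on the box, but one still has $\mathcal T \ge \varepsilon_N(-\Delta^{\mathcal N})$ and $\mathcal T \ge c(1-\varepsilon_T)Q\chi_\Lambda[\,|p|-\tfrac12(s\ell)^{-1}]_+^2\chi_\Lambda Q$, and one can insert $\chi_\Lambda^2$ at a cost controlled by the same Taylor-expansion estimate \eqref{eq:I2-integral-2} of Lemma~\ref{lem:Simple}, since $\int\chi^2 = 1$, picking up only an $O(R^2/\ell^2)$ or $O(\varepsilon_N)$ correction which is absorbed into the stated error.

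\textbf{Main obstacle.} The hard part is the diagonal $Q_H'$ estimate, i.e.\ showing $Q_H'\,Q\chi_\Lambda\one_{\{|p|\le K_H''\ell^{-1}\}}\chi_\Lambda Q\,Q_H' \le C((K_H''/K_H')^M + \varepsilon_N^{3/2})Q_H'$, because it requires simultaneously (i) extracting the full power $M$ from the momentum separation, which forces one to carefully track $M$ successive commutators of the sharp cutoff with the $C_0^M$ function $\chi_\Lambda$ and with the various Fourier-multiplier and resolvent factors inside $\mathcal T$ (the fourth term of \eqref{eq:DefT_tilde} and the resolvent in the second term being the delicate ones), and (ii) isolating the genuinely non-smooth Neumann-Laplacian contribution and bounding it by $\varepsilon_N^{3/2}$ via a phase-space/eigenvalue-counting argument on the box that survives the $Q$-projection removing the constant mode. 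Controlling the interplay of these two mechanisms — smooth momentum smoothing versus the rough boundary operator — inside the single complicated operator $\mathcal T$ is where essentially all the work lies; everything else is Cauchy--Schwarz and bookkeeping.
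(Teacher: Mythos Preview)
Your Cauchy--Schwarz reduction to the single operator-norm bound
\[
\bigl\|Q_H'\,\widetilde\chi_\Lambda\,\one_{\{|p|\le K_H''\ell^{-1}\}}\bigr\| \le C\Bigl((K_H''/K_H')^{M/2} + \varepsilon_N^{3/4}\Bigr)
\]
is exactly the paper's first step, and you correctly identify the diagonal $Q_H'$ estimate as the whole problem. But the mechanism you give for the $\varepsilon_N^{3/2}$ error is wrong and would not produce that exponent.

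You invoke a ``Weyl-type counting / trace estimate'' and say the exponent $3/2$ ``reflects the three-dimensional phase-space volume of the region where $\varepsilon_N(-\Delta^{\mathcal N})$ is below the threshold.'' But $\varepsilon_N$ is a coupling constant, not a spectral threshold: the region $\{\varepsilon_N(-\Delta^{\mathcal N}) \le (K_H''\ell^{-1})^2\}$ contains $\sim \varepsilon_N^{-3/2}(K_H'')^3$ Neumann modes, which \emph{diverges} as $\varepsilon_N\to 0$. The paper's mechanism is entirely different. One writes $\mathcal T = (1-\varepsilon_N)(A_5 + A')$ with $A_5 = \varepsilon_N'(-\Delta^{\mathcal N})$ and iterates $Q_H' = (1-\varepsilon_N)(A_5 + A')\mathcal T^{-1}Q_H'$ a total of $M/2$ times. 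Each $A_5$ insertion is then handled by integration by parts: $\langle A_5\mathcal T^{-j-1}Q_H'\varphi,\psi_j\rangle$ equals a bulk term plus a boundary integral $\varepsilon_N'\int_{\partial\Lambda}\overline{\mathcal T^{-j-1}Q_H'\varphi}\,\partial_n\psi_j\,d\sigma$. This boundary term is bounded via the trace theorem and the interpolation $\|u\|_{H^{1/2}} \le C\|u\|^{1/2}\|\nabla u\|^{1/2}$; since the only available control of $\|\nabla\mathcal T^{-j-1}Q_H'\varphi\|$ comes from $\varepsilon_N(-\Delta^{\mathcal N}) \le \mathcal T$, one loses a factor $\varepsilon_N^{-1/4}$. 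The net is $\varepsilon_N\cdot\varepsilon_N^{-1/4} = \varepsilon_N^{3/4}$ for the norm. This boundary-term argument is the entire source of the $\varepsilon_N$ error and is absent from your sketch.

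Two smaller issues. First, your iteration scheme for the $(K_H''/K_H')^M$ factor --- use $\one \le (K_H'\ell^{-1})^{-2}\mathcal T$ on $\Ran Q_H'$ and then bound $\mathcal T$ \emph{above} by its pieces --- does not close: the inequalities go in opposite directions. The paper instead proves directly that $\psi_s := (A')^s\widetilde\chi\,\one_{\{|p|\le K_H''\}}\psi$ lies in $H^2(\Lambda)$ with $\|\partial^\alpha\psi_s\|\le C(K_H'')^{2s+|\alpha|}$, which pairs against $\|\mathcal T^{-s}Q_H'\|\le (K_H')^{-2s}$. Establishing this $H^2$ bound requires a genuine elliptic-regularity lemma for $f\,(-\Delta^{\mathcal N}+d^{-2})^{-k}$ with $f$ vanishing on $\partial\Lambda$ (to suppress corner singularities of the Neumann resolvent), and uses the factorization $\chi=f^2$ from \eqref{eq:Splitchi}. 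Second, your appeal to Lemma~\ref{lem:Simple} for the version without $\chi_\Lambda$ is a mismatch --- that lemma concerns convolutions with the potential, not insertion of $\chi_\Lambda^2$ into momentum projections; the paper simply reruns the same argument with $\widetilde\chi=\one_\Lambda$.
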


The proof of Lemma~\ref{lem:pseudolocal} is a bit long and technical and is given in Appendix~\ref{sec:proofs}.
We next state and prove Lemma~\ref{lem:LowMomentaOperator}.

\begin{lemma}\label{lem:LowMomentaOperator}
Assume that the conditions of Assumption~\ref{assump:params} 
are satisfied and that $\rmu a^3$ is small enough. 
Then,
\begin{align}\label{eq:OperatorNorm}
\| (Q_L' \otimes I) w(x_j,x_{j'}) (Q_L' \otimes I) \| \leq C \err (K_H')^3 \ell^{-3} (\int v),
\end{align}
where
\begin{equation}\label{eq:Def_R1}
\err := 
\left(1 + \varepsilon_N^{-\frac{1}{2}} \ell^{-1} (a+ R{\mathcal R}^{-1}) \right) 
\varepsilon_N^{-\frac{1}{2}}.
\end{equation}
\end{lemma}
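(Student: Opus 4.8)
\textbf{Proof plan for Lemma~\ref{lem:LowMomentaOperator}.}

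The plan is to estimate the operator norm by testing against an arbitrary pair of normalized two-particle states. Writing $\Phi = (Q_L' \otimes I)\Phi$ and $\Psi = (Q_L' \otimes I)\Psi$, I would bound
\[
\big| \langle \Phi, w(x_j,x_{j'}) \Psi \rangle \big|
= \Big| \iint \overline{\Phi(x,y)}\, \chi_\Lambda(x) W(x-y) \chi_\Lambda(y)\, \Psi(x,y)\, dx\, dy \Big|
\leq \| W \|_1 \sup_{x} \Big( \int |\chi_\Lambda(x) \Phi(x,y)|\,|\chi_\Lambda(y) \Psi(x,y)|\, dy \Big),
\]
since $\supp W$ has small diameter $R \ll \ell$, so $\chi_\Lambda(x)\chi_\Lambda(y)$ is essentially constant on the support of $W(x-y)$; this reduces matters to an $L^\infty_x L^2_y$-type bound on the low-momentum-projected states. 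The key point is that $\| W\|_1 \leq C \int v$ (since $\chi*\chi \sim 1$ on $\supp v$, by \eqref{eq:W1-g-new} and $R\ll\ell$), so the problem becomes showing that the operator $Q_L'$ maps into functions with controlled sup-norm, with the constant $(K_H')^3 \ell^{-3} \err$.

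The heart of the matter is thus a Sobolev-type estimate: I would show that for $\psi$ in the range of $Q_L'$ one has a pointwise bound $\|\psi\|_\infty^2 \leq C \err (K_H')^3 \ell^{-3} \|\psi\|_2^2$, i.e. that functions of low kinetic energy are uniformly bounded with the stated constant. Since $Q_L' = \one_{(0,(K_H'\ell^{-1})^2)}(\mathcal T)$ and $\mathcal T \geq \varepsilon_N(-\Delta^{\mathcal N})$ by \eqref{eq:DefT_new}, a function $\psi = Q_L'\psi$ satisfies $\langle \psi, (-\Delta^{\mathcal N})\psi\rangle \leq \varepsilon_N^{-1}(K_H'\ell^{-1})^2 \|\psi\|_2^2$; combined with $\|\psi\|_2^2$ this controls the full $H^1(\Lambda)$-norm up to the factor $\varepsilon_N^{-1}$, and the Sobolev embedding $H^2 \hookrightarrow L^\infty$ in dimension three (or an interpolation/Agmon-type inequality $\|\psi\|_\infty \leq C \|\psi\|_2^{1/2}\|\psi\|_{H^2}^{1/2}$) then gives the sup bound — but only after I also control a second derivative. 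This is where the structure of $\mathcal T$ in \eqref{eq:DefT_tilde} and \eqref{eq:Def_tau} enters: on the range of $Q_L'$ the operator $\mathcal T'$ behaves like $\chi_\Lambda |p|^2 \chi_\Lambda$ up to the low-momentum cutoffs, so one gets a bound on $\chi_\Lambda(-\Delta)\chi_\Lambda \psi$, and the extra $-\Delta^{\mathcal N}$ term in $\mathcal T$ handles the boundary region where $\chi_\Lambda$ degenerates. This is precisely the content that the paper has deferred to Corollary~\ref{cor:Derivatives} and Lemma~\ref{lem:SobolevHom} in Appendix~\ref{sec:proofs}, so I would invoke those: they should give $\|\psi\|_\infty^2 \lesssim (\text{number of modes below }(K_H'\ell^{-1})^2) \cdot \ell^{-3}\|\psi\|_2^2$ with the correction factor $\err$ encoding the loss $\varepsilon_N^{-1/2}$ from the non-uniform ellipticity together with the term $\varepsilon_N^{-1/2}\ell^{-1}(a + R{\mathcal R}^{-1})$ coming from commuting $\chi_\Lambda$ past the potential / the scattering-length normalization of $W$.

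Counting the modes: the Neumann Laplacian on the box of side $\ell$ has $\sim (K_H')^3$ eigenvalues below $(K_H'\ell^{-1})^2$, which produces the factor $(K_H')^3$, and each eigenfunction contributes $\ell^{-3}$ to the sup-norm-squared, giving the overall scale $(K_H')^3 \ell^{-3}$. Assembling: $\|\psi\|_\infty^2 \leq C\err (K_H')^3\ell^{-3}\|\psi\|_2^2$, hence $\big|\langle\Phi, w\Psi\rangle\big| \leq \|W\|_1 \cdot C\err(K_H')^3\ell^{-3} \leq C\err(K_H')^3\ell^{-3}(\int v)$, as claimed.

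\textbf{Main obstacle.} The principal difficulty is the Sobolev/elliptic-regularity estimate for $Q_L'$: unlike the clean case of the periodic Laplacian, the operator $\mathcal T$ mixes the Neumann Laplacian, the $\chi_\Lambda$-conjugated Fourier multiplier $\tau(-i\partial)$, and several correction terms, none of which individually gives uniform ellipticity up to the boundary. Extracting a genuine second-derivative (or $H^2$-type) bound from $\mathcal T \leq (K_H'\ell^{-1})^2$ on the range of $Q_L'$ — tracking how $\chi_\Lambda$ degenerating near $\partial\Lambda$ forces the $\varepsilon_N^{-1/2}$ losses and how the renormalized potential $W$ (rather than $v$ itself) enters through $R{\mathcal R}^{-1}$ — is the delicate part, and it is exactly why the paper isolates it into the appendix lemmas I would cite rather than prove inline.
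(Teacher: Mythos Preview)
Your overall strategy — reduce to a sup-norm bound on the range of $Q_L'$ via the Sobolev estimates in Corollary~\ref{cor:Derivatives} and Lemma~\ref{lem:SobolevHom} — is the same as the paper's. But two concrete steps are missing, and without them you will not recover the stated $\err$.

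First, your central claim $\|\psi\|_\infty^2 \leq C\,\err\,(K_H')^3\ell^{-3}\|\psi\|_2^2$ for $\psi\in\Ran Q_L'$ is not what the appendix lemmas give. Corollary~\ref{cor:Derivatives} yields \emph{two different} bounds: $\|\psi\|_\infty^2 \lesssim \varepsilon_N^{-3/2}(K_H')^3\ell^{-3}$ (using \eqref{eq:DerivsPhiTerms}) versus $\|\chi_\Lambda\psi\|_\infty^2 \lesssim \varepsilon_N^{-1/2}(K_H')^3\ell^{-3}$ (using \eqref{eq:DerivsChiTerms}). The paper therefore first writes $\chi_\Lambda(x_j)\chi_\Lambda(x_{j'}) = \chi_\Lambda(x_j)^2 + \chi_\Lambda(x_j)(\chi_\Lambda(x_{j'})-\chi_\Lambda(x_j))$ and splits the quadratic form into $I_1 + I_2$. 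The main term $I_1$ involves $\|\chi_\Lambda\varphi\|_\infty^2$ and gives the leading $\varepsilon_N^{-1/2}$ in $\err$; if you used $\|\varphi\|_\infty^2$ directly you would get $\varepsilon_N^{-3/2}$, which is too large.

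Second, for the correction term $I_2$ a naive Taylor bound $|\chi_\Lambda(x_j)-\chi_\Lambda(x_{j'})|\,v \leq C\frac{R}{\ell}v$ is fatal: integrating gives $\frac{R}{\ell}\int v$, and since $\int v$ may be as large as $(\rmu a^3)^{-1/2}a$ this does not fit into $\err$. The paper's key trick is to write $v = g + \omega v$ and use $\omega(x)\leq a/|x|$ from \eqref{omegabounds}, so that
\[
|\chi_\Lambda(x_j)-\chi_\Lambda(x_{j'})|\,v(x_j-x_{j'}) \;\leq\; C\Big(\tfrac{R}{\ell}\,g(x_j-x_{j'}) + \tfrac{a}{\ell}\,v(x_j-x_{j'})\Big).
\]
Because $\int g = 8\pi a$, the integrated error becomes $C\frac{a}{\ell}(R+\int v)$, and after multiplying by $\|\chi_\Lambda\varphi\|_\infty\|\varphi\|_\infty \sim \varepsilon_N^{-1}(K_H')^3\ell^{-3}$ and dividing by $\int v$ (recall $\mathcal R = (8\pi a)^{-1}\int v$) this produces exactly the second term $\varepsilon_N^{-1}\ell^{-1}(a + R\mathcal R^{-1})$ in $\err$. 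Your phrase ``scattering-length normalization of $W$'' gestures at this, but the actual mechanism is the scattering identity $v(1-\omega)=g$, not anything about $W$ versus $v$.
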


\begin{proof}
Notice first that ${\mathcal T}$ commutes with $Q$ and that $\Ran Q_L' \subseteq \Ran Q$.

Let $\varphi \in \Ran Q_L'$ and be normalized in $L^2$.
We will use \eqref{eq:Sob-box} on $\varphi$ and \eqref{eq:Sob-fullspace} on $\chi_{\Lambda} \varphi$, using Corollary~\ref{cor:Derivatives} to control the derivatives.

We estimate
\begin{align}
\langle \varphi, (Q_L')_j w(x_j,x_{j'}) (Q_L')_j \varphi \rangle_{L^2(\Lambda_j)} \leq
I_1 + I_2,
\end{align}
where
\begin{align}
I_1&:= \int \chi_{\Lambda}(x_j)^2 |\varphi(x_j)|^2 v(x_j - x_{j'}) \,dx_j, \nonumber \\
I_2&:= \int \chi_{\Lambda}(x_j) | \chi_{\Lambda}(x_j) - \chi_{\Lambda}(x_{j'})| \, |\varphi(x_j)|^2 v(x_j - x_{j'}) \,dx_j.
\end{align}
To estimate $I_1$, we use \eqref{eq:Sob-fullspace}  and \eqref{eq:DerivsChiTerms} 
to get
\begin{align}
I_1 \leq (\int v)  \| \chi_{\Lambda} \varphi \|_{\infty}^2 \leq C (\int v )
\| \nabla (\chi_{\Lambda} \varphi) \|
\| (-\Delta^{\mathcal N}) (\chi_{\Lambda} \varphi) \|
\leq C \ell^{-3} (\int v) \varepsilon_N^{-1/2} (K_H')^3 .
\end{align}
This is in agreement with \eqref{eq:OperatorNorm}.

The estimate on $I_2$ uses the same main ideas. 
Start by writing 
\begin{align}
v = g + \omega v.
\end{align}
Then notice that
\begin{align}
| \chi_{\Lambda}(x_j) - \chi_{\Lambda}(x_{j'})|  v(x_j - x_{j'}) \leq C \left( \frac{R}{\ell} g(x_j - x_{j'}) + \frac{a}{\ell} v(x_j - x_{j'}) \right),
\end{align}
by Taylor's formula and the support properties of $v$ as well as \eqref{omegabounds}.
Therefore, we can estimate
\begin{align}
I_2 \leq C \left( \frac{R}{\ell} (\int g )+ \frac{a}{\ell} (\int  v ) \right) \| \chi_{\Lambda} \varphi \|_{\infty}  \| \varphi \|_{\infty} 
\leq C \frac{a}{\ell} \left( R + (\int  v ) \right) \| \chi_{\Lambda} \varphi \|_{\infty}  \| \varphi \|_{\infty} .
\end{align}
Therefore, this term also can be estimated in agreement with \eqref{eq:OperatorNorm}
upon noticing that the estimates in \eqref{eq:DerivsPhiTerms} contain an extra factor of $\varepsilon_N^{-1/2}$ compared to those in \eqref{eq:DerivsChiTerms}.
\end{proof}

We aim to get a good bound on moments of $n_{+}^{\rm L}$ by the method of localization of large matrices.
To this end define $d_1^L$ and $d_2^L$ as the parts of the Hamiltonian do not commute with $n_{+}^{\rm L}$, more precisely, they respectively change $n_{+}^{\rm L}$ by $\pm1$ and $\pm2$.
They are defined as
\begin{align}\label{eq:def_d2L}
d_2^L &:= \sum_{i \neq j} (P_i + Q_{H,i}') (P_j + (Q_H')_j) w(x_i,x_j) (Q_L')_j (Q_L')_i + h.c. \nonumber \\
&= d_{2,1}^L + d_{2,2}^L + d_{2,3}^L,
\end{align}
with
\begin{align}
d_{2,1}^L &:= \sum_{i \neq j} P_i P_j  w(x_i,x_j) (Q_L')_j (Q_L')_i + h.c. , \nonumber \\
d_{2,2}^L &:= \sum_{i \neq j} (P_i (Q_H')_j +  (Q_H')_i P_j) w(x_i,x_j) (Q_L')_j (Q_L')_i + h.c. , \nonumber \\
d_{2,3}^L &:= \sum_{i \neq j} (Q_H')_i (Q_H')_j w(x_i,x_j) (Q_L')_j (Q_L')_i + h.c.,   \nonumber
\end{align}
and
\begin{align}\label{eq:def_d1L}
d_1^L &:= - \rmu \sum_i (P_i + Q_{H,i}') \int w_1(x_i,y) \,dy Q_{L,i}' + h.c. \nonumber \\
&\quad+ \sum_{i\neq j} \big( (P_i + Q_{H,i}') Q_{L,j}' w(x_i,x_j) Q_{L,j}'Q_{L,i}' + h.c. \big) \nonumber \\
& \quad+  \sum_{i\neq j} \big( Q_{L,i}' (P_j + Q_{H,j}') w(x_i,x_j)  (P_j + Q_{H,j}')  (P_i + Q_{H,i}')+ h.c. \big) \nonumber \\
&= \sum_{j=1}^{10} d_{1,j}^L,
\end{align}
with
\begin{align}
d_{1,1}^L &:= - \rmu \sum_i P_i  \int w_1(x_i,y) \,dy Q_{L,i}' + h.c. \nonumber \\
d_{1,2}^L &:= - \rmu \sum_i Q_{H,i}' \int w_1(x_i,y) \,dy Q_{L,i}' + h.c. \nonumber \\
d_{1,3}^L &:= \sum_{i\neq j}  P_i Q_{L,j}' w(x_i,x_j) Q_{L,j}'Q_{L,i}' + h.c.  \nonumber \\
d_{1,4}^L &:=\sum_{i\neq j}  Q_{H,i}' Q_{L,j}' w(x_i,x_j) Q_{L,j}'Q_{L,i}' + h.c. \nonumber\\
d_{1,5}^L &:=\sum_{i\neq j}  Q_{L,i}' P_j w(x_i,x_j) P_j P_i+ h.c. \nonumber  \\
d_{1,6}^L &:=\sum_{i\neq j}   Q_{L,i}' Q_{H,j}'  w(x_i,x_j) P_j P_i+ h.c.  \nonumber \\
d_{1,7}^L &:=\sum_{i\neq j}  Q_{L,i}' P_j w(x_i,x_j) \left(Q_{H,j}' P_i+ P_j Q_{H,i}'\right)+ h.c.   \nonumber\\
d_{1,8}^L &:= \sum_{i \neq j}  Q_{L,i}' Q_{H,j}' w(x_i,x_j) \left( P_j Q_{H,i}' + Q_{H,j}' P_i\right) + h.c   \nonumber\\
d_{1,9}^L &:=\sum_{i\neq j}  Q_{L,i}' P_j w(x_i,x_j) Q_{H,j}' Q_{H,i}'+ h.c.  \nonumber \\
d_{1,10}^L &:= \sum_{i \neq j}  Q_{L,i}' Q_{H,j}' w(x_i,x_j) Q_{H,j}' Q_{H,i}'  + h.c.  \nonumber
\end{align}

The following Lemma~\ref{lem:Estimate_dLs} controls the magnitude of $d_1^L$ and $d_2^L$ in a state with low energy, i.e. the magnitude of the non-diagonal part of the Hamiltonian with respect to $n_{+}^{\rm L}$.
This is a crucial input for the localization of large matrices.

\begin{lemma}\label{lem:Estimate_dLs}
Let $\Psi \in {\mathcal F}_s(L^2(\Lambda))$ be an $n$-particle normalized state in the
bosonic Fock space over $L^2(\Lambda)$ satisfying \eqref{eq:aprioriPsi}.
Assume that the conditions of Assumption~\ref{assump:params} 
are satisfied and that $\rmu a^3$ is small enough. 
Suppose that, for some $\widetilde{{\mathcal M}} \geq 0$, we have $\Psi = 1_{[0, \widetilde{{\mathcal M}} ]}(n_{+}^{\rm L} ) \Psi$.
Then, with $\err$ as defined in \eqref{eq:Def_R1},
\begin{align}\label{eq:GoodEstimatedSimple}
&|\langle \Psi, d_1^L \Psi \rangle | + |\langle \Psi, d_2^L \Psi \rangle | \leq \nonumber \\
&
C a \rmu^2 \ell^3  (\int v/a) \Big\{ \frac{ \langle \Psi, n_{+} \Psi \rangle^{1/2} }{n^{1/2}}+ \Big(\frac{ \err (K_H')^3  \widetilde{{\mathcal M}}}{n} \Big)^{1/2} \frac{ \langle \Psi, n_{+} \Psi \rangle^{1/2} }{n^{1/2}}
+ \frac{\err (K_H')^3 \widetilde{{\mathcal M}}}{n} \frac{ \langle \Psi, n_{+} \Psi \rangle }{n}
\Big\} \nonumber \\
&\quad +C  \langle \Psi, {\mathcal Q}_4^{\rm ren}\Psi\rangle.
\end{align}
\end{lemma}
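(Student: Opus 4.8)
The plan is to estimate each of the thirteen pieces $d_{1,1}^L,\ldots,d_{1,10}^L$ and $d_{2,1}^L, d_{2,2}^L, d_{2,3}^L$ separately, using Cauchy--Schwarz in Fock space to split each term into a product of a ``creation-side'' and an ``annihilation-side'' norm, then absorbing the annihilation side into one of the controlled quantities $\langle \Psi, n_+\Psi\rangle$, $\langle \Psi, n_+^{\rm L}\Psi\rangle \le \widetilde{\mathcal M}$, $\langle\Psi,{\mathcal Q}_4^{\rm ren}\Psi\rangle$, or the trivial bound $n$. The basic mechanism is standard: a term of the schematic form $\sum_{i\neq j} A_i B_j w(x_i,x_j) C_j D_i$ applied to an $n$-particle state is bounded, via Cauchy--Schwarz with a free parameter $t>0$, by $\tfrac{t}{2}\langle \Psi, (\sum A^* A)\otimes(\sum \cdots) \Psi\rangle + \tfrac{1}{2t}\langle \text{annihilation side}\rangle$, where one uses $w = w^{1/2}w^{1/2}$ to symmetrize and the operator-norm bound of Lemma~\ref{lem:LowMomentaOperator}, namely $\|(Q_L'\otimes I)w(Q_L'\otimes I)\| \le C\err (K_H')^3\ell^{-3}(\int v)$, wherever two $Q_L'$ factors sit on the same side of $w$.

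First I would handle the ``one $Q_L'$'' terms $d_{1,1}^L, d_{1,2}^L, d_{1,5}^L, d_{1,6}^L, d_{1,7}^L, d_{1,8}^L, d_{1,9}^L, d_{1,10}^L$ and $d_{2,1}^L, d_{2,2}^L, d_{2,3}^L$: here exactly one $Q_L'$ appears on each side (or the $w_1$-convolution term with one $Q_L'$), so one puts a bare $w$ (not needing Lemma~\ref{lem:LowMomentaOperator}) on the high/condensate side, extracting a factor $\ell^{-3}\int v$ from $\iint w \le \iint w \le C\ell^3 \cdot \ell^{-3}\int v$ type bounds together with the condensate density $\rho_0 = n_0\ell^{-3}\simeq\rmu$; the annihilation side then produces $\langle \Psi, n_+\Psi\rangle^{1/2}n^{1/2}$ or, for the $Q_H'Q_H'$ or $PP$ factors, falls under ${\mathcal Q}_4^{\rm ren}$ after completing the square as in \eqref{eq:DefQ4}. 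This yields the first term $C a\rmu^2\ell^3(\int v/a)\,\langle\Psi,n_+\Psi\rangle^{1/2}/n^{1/2}$ and the ${\mathcal Q}_4^{\rm ren}$ contribution. The genuinely new terms are $d_{1,3}^L, d_{1,4}^L$ and $d_{2,1}^L$'s relatives where \emph{two} $Q_L'$ factors meet $w$ on one side: there I invoke Lemma~\ref{lem:LowMomentaOperator} to bound that side's quadratic form by $C\err(K_H')^3\ell^{-3}(\int v)\cdot n_+^{\rm L} \le C\err(K_H')^3\ell^{-3}(\int v)\widetilde{\mathcal M}$, and pair it against $n^{1/2}$ (for the $d_2$-type, giving the cross term $(\err(K_H')^3\widetilde{\mathcal M}/n)^{1/2}\langle n_+\rangle^{1/2}/n^{1/2}$ after dividing by $n$) or against $\langle n_+\rangle^{1/2}$ (for $d_{1,3}^L, d_{1,4}^L$, giving $(\err(K_H')^3\widetilde{\mathcal M}/n)\langle n_+\rangle/n$). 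Optimizing the Cauchy--Schwarz parameter $t$ in each case to balance the two sides produces exactly the three-term structure in \eqref{eq:GoodEstimatedSimple}; one then checks that all numerical prefactors are absorbed into the overall constant $C$ using $\rho_0\le C\rmu$ from \eqref{eq:apriori_n}, $a \le (8\pi)^{-1}\int v$ from \eqref{eq:ScatInt}, and $R\ll\ell$.

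The main obstacle will be the bookkeeping on the terms involving $Q_H'$ together with the $\chi_\Lambda$-factors hidden inside $w$: since $Q_L'$ is defined via the spectral projection of the complicated kinetic energy ${\mathcal T}$ rather than a momentum cutoff, one cannot directly treat $w(x_i,x_j) = \chi_\Lambda(x_i)W(x_i-x_j)\chi_\Lambda(x_j)$ in momentum space, which is precisely why Lemma~\ref{lem:LowMomentaOperator} (proved via the Sobolev estimates of Appendix~\ref{sec:proofs}) is needed to get the operator-norm bound on $Q_L' w Q_L'$ with the correct $\err$. A secondary subtlety is making sure the terms with two $Q_H'$ factors (such as $d_{1,10}^L$ and $d_{2,3}^L$) are genuinely controlled by ${\mathcal Q}_4^{\rm ren}$: one must verify that after the Cauchy--Schwarz splitting the $Q_HQ_H\,w\,Q_HQ_H$-type quadratic form that appears is dominated by $\tfrac12\sum_{i\neq j}Q_iQ_j w(x_i,x_j)Q_jQ_i \le {\mathcal Q}_4^{\rm ren}$ (using positivity of $w$ and of the cross terms in \eqref{eq:DefQ4}), so that its expectation is $\le\langle\Psi,{\mathcal Q}_4^{\rm ren}\Psi\rangle$ and hence, by the a priori bound \eqref{eq:Q4apriori_2}, of acceptable size — but for the statement of this lemma it suffices to leave it as the explicit $C\langle\Psi,{\mathcal Q}_4^{\rm ren}\Psi\rangle$ term.
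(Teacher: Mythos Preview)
Your overall strategy---estimate each of the thirteen pieces via Cauchy--Schwarz, invoking Lemma~\ref{lem:LowMomentaOperator} whenever a $Q_L'$ factor appears on both sides of $w$ in the same particle index---is exactly what the paper does. But there is a genuine gap in your treatment of the terms with two $Q_H'$ factors on one side, namely $d_{1,9}^L$, $d_{1,10}^L$, and $d_{2,3}^L$.

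You propose to Cauchy--Schwarz these directly, producing on one side a quadratic form of the type $\sum_{i\neq j} Q_{H,i}'Q_{H,j}'\, w(x_i,x_j)\, Q_{H,j}'Q_{H,i}'$, and then to bound this by $\tfrac12\sum Q_iQ_j\, w\, Q_jQ_i \le {\mathcal Q}_4^{\rm ren}$. Neither inequality holds. For projections $R\le S$ and a non-negative multiplication operator $M$, the operator inequality $RMR\le SMS$ is \emph{false} in general (take $R$ a rank-one projection not diagonal in the position basis), so you cannot pass from $Q_H'Q_H'\,w\,Q_H'Q_H'$ to $QQ\,w\,QQ$. And ${\mathcal Q}_4^{\rm ren}=\tfrac12\sum(QQ+\omega X)\,w\,(QQ+\omega X)$ with $X=PP+PQ+QP$ does not dominate $\tfrac12\sum QQ\,w\,QQ$, because the cross terms $\omega(QQ\,w\,X+X\,w\,QQ)$ are not positive.

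The paper's fix is to perform the completion \emph{before} Cauchy--Schwarz: rewrite $Q_H'Q_H'=QQ-Q_L'Q_H'-Q_H'Q_L'-Q_L'Q_L'$ on the relevant side, then complete $QQ$ to $QQ+\omega(PP+PQ+QP)$, and only then apply Cauchy--Schwarz. One side of the resulting inequality is now exactly $[QQ+\omega(\cdots)]\,w\,[QQ+\omega(\cdots)]=2{\mathcal Q}_4^{\rm ren}$, while the correction terms (those with $Q_L'$ or with $\omega(\cdots)$) are of the same type as terms already handled. A secondary point: Lemma~\ref{lem:LowMomentaOperator} is also needed for $d_{1,6}^L$ and $d_{1,8}^L$, since after Cauchy--Schwarz one gets $Q_{L,i}'Q_{H,j}'\,w\,Q_{H,j}'Q_{L,i}'$, which has $Q_L'$ on both sides in particle $i$; without the lemma you have no control on this for singular $v$.
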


\begin{proof}
We will prove that
\begin{align}\label{eq:GoodEstimated1d2-v2}
|\langle \Psi, d_1^L \Psi \rangle &| + |\langle \Psi, d_2^L \Psi \rangle | 
\nonumber \\
& \leq C a \rmu^2 \ell^3  (\int v/a) \Big\{ n^{-1/2} \langle \Psi, n_{+} \Psi \rangle^{1/2} +  [\err (K_H')^3 ]^{1/2} n^{-1} \langle \Psi, (n_{+}^L)^2 \Psi \rangle^{1/2} \nonumber \\
&\quad\quad\quad\quad\quad\quad\quad\quad  + [\err (K_H')^3 ] n^{-2} \langle \Psi,  n_{+}^{\rm L}  \widetilde{n}_{+}^{\rm H} \Psi \rangle^{1/2} \langle \Psi,  (n_{+}^{\rm L} )^2 \Psi \rangle^{1/2} \nonumber \\
&\quad\quad\quad\quad\quad\quad\quad\quad+ [\err (K_H')^3 ]^{1/2} n^{-1} \langle \Psi,  n_{+}^{\rm L}  \widetilde{n}_{+}^{\rm H} \Psi \rangle^{1/2} 
+ [\err (K_H')^3 ] n^{-2}  \langle \Psi,  (n_{+}^{\rm L} )^2 \Psi\rangle\nonumber \\
&\quad\quad\quad\quad\quad\quad\quad\quad+ [\err (K_H')^3 ] n^{-2} \langle \Psi,  n_{+}^{\rm L}  \widetilde{n}_{+}^{\rm H} \Psi \rangle  \Big\}
+C  \langle \Psi, {\mathcal Q}_4^{\rm ren}\Psi\rangle .
\end{align}
The estimate \eqref{eq:GoodEstimatedSimple} is a simple consequence of \eqref{eq:GoodEstimated1d2-v2} upon estimating $n_{+}^L \leq \widetilde{{\mathcal M}}$ in occurences with higher than first moments.

We estimate each term in $d_1^L$ and $d_2^L$ individually. 
They will all be estimated by fairly simple Cauchy-Schwarz inequalities, however the proof is long because there are many terms.
Let us indicate where some of the bounds in \eqref{eq:GoodEstimated1d2-v2} come from.
The error bound in terms of ${\mathcal Q}_4^{\rm ren}$ comes from the analysis of the terms $d_{1,9}^L$, $d_{1,10}^L$ and $d_{2,3}^L$.
Next we list the $6$ terms in $\{ \cdot \}$ in \eqref{eq:GoodEstimated1d2-v2}, we only list the terms where we believe that our inequalities cannot be significantly improved.
The first term in \eqref{eq:GoodEstimated1d2-v2} comes from $d_{1,5}^L$, the second from $d_{2,1}^L$, the third term comes from $d_{1,4}^L$, the fourth term comes from $d_{1,6}^L$, the fifth from $d_{2,3}^L$ and the sixth from $d_{1,10}^L$.

We start with $d_{1,5}^L$.
Using Cauchy-Schwarz, we find for arbitrary $\varepsilon >0$,
\begin{align}\label{eq:A_d_term}
|\langle \Psi, d_{1,5}^L \Psi \rangle | \leq  n \ell^{-3} (\int v )
\left( \varepsilon^{-1} \langle \Psi,  n_{+}^{\rm L}  \Psi \rangle + \varepsilon n \right).
\end{align}
With $\varepsilon = \sqrt{ n^{-1}\langle \Psi,  n_{+}^{\rm L}  \Psi \rangle}$ we get an estimate in agreement with \eqref{eq:GoodEstimated1d2-v2}.

To estimate  $d_{1,1}^L$ we use Cauchy-Schwarz in a similar manner and 
the fact that $0\leq \int W_1 \leq Ca$, to get
\begin{align}
|\langle \Psi, d_{1,1}^L \Psi \rangle | \leq C n \ell^{-3} (\int g )
\left( \varepsilon^{-1}  \langle \Psi,  n_{+}^{\rm L}  \Psi \rangle + \varepsilon n \right).
\end{align}
Choosing $\varepsilon = \sqrt{ n^{-1} \langle \Psi,  n_{+}^{\rm L}  \Psi \rangle}$ , and using that $\int v/a \geq 1$, this term is consistent with the first term in  \eqref{eq:GoodEstimated1d2-v2}.

Applying Cauchy-Schwarz to the $d_{1,2}^L$-term, we find
\begin{align}
|\langle \Psi, d_{1,2}^L \Psi \rangle | \leq C \rmu (\int g )
\left( \varepsilon^{-1} \langle \Psi,  n_{+}^{\rm L}  \Psi \rangle + \varepsilon  \langle \Psi,  \widetilde{n}_{+}^{\rm H}  \Psi \rangle \right).
\end{align}
Since $ \widetilde{n}_{+}^{\rm H} \leq n$ and $(\int g) \leq (\int v)$, this term is also consistent with \eqref{eq:GoodEstimated1d2-v2}.

After application of Cauchy-Schwarz to the $d_{1,3}^L$-term, we find after using Lemma~\ref{lem:LowMomentaOperator} on one of the resulting terms,
\begin{align}
|\langle \Psi, d_{1,3}^L \Psi \rangle | \leq n \ell^{-3} (\int v )
\left( \varepsilon^{-1} \langle \Psi,  n_{+}^{\rm L}  \Psi \rangle + \varepsilon  \err (K_H')^3 n^{-1}\langle \Psi,  (n_{+}^{\rm L} )^2 \Psi \rangle \right).
\end{align}
After using the (suboptimal) estimate $n_{+}^{\rm L} \leq n$ on the first term, we can choose $\varepsilon^{-1} = (\err (K_H')^3)^{1/2} n^{-1}\langle \Psi,  (n_{+}^{\rm L} )^2 \Psi \rangle^{1/2}$ to get an estimate consistent with the second term in \eqref{eq:GoodEstimated1d2-v2}.

We also apply Cauchy-Schwarz to the $d_{1,4}^L$-term, and use Lemma~\ref{lem:LowMomentaOperator} on both resulting terms.
This yields,
\begin{align}
|\langle \Psi, d_{1,4}^L \Psi \rangle | \leq C \err (K_H')^3 \ell^{-3} (\int v )
\left( \varepsilon^{-1} \langle \Psi,  n_{+}^{\rm L}  \widetilde{n}_{+}^{\rm H} \Psi \rangle + \varepsilon  \langle \Psi,  (n_{+}^{\rm L} )^2 \Psi \rangle \right).
\end{align}
Upon optimizing in $\varepsilon$, we find the third term in \eqref{eq:GoodEstimated1d2-v2}.

After applying Cauchy-Schwarz and Lemma~\ref{lem:LowMomentaOperator} to the $d_{1,6}^L$-term, we find 
\begin{align}
|\langle \Psi, d_{1,6}^L \Psi \rangle | \leq C \ell^{-3} (\int v )
\left( \varepsilon^{-1}  \err (K_H')^3  \langle \Psi,  n_{+}^{\rm L}  \widetilde{n}_{+}^{\rm H} \Psi \rangle + \varepsilon  n^2  \right).
\end{align}
Optimizing in $\varepsilon$ leads to the fourth term in \eqref{eq:GoodEstimated1d2-v2}.

The term with $d_{1,7}^L$ is easily estimated by Cauchy-Schwarz as
\begin{align}
|\langle \Psi, d_{1,7}^L \Psi \rangle | \leq C \ell^{-3} (\int v ) n \langle \Psi, n_{+} \Psi \rangle.
\end{align}
Since $n_{+} \leq n$ this term in agreement with the first term in \eqref{eq:GoodEstimated1d2-v2}.

To estimate $d_{1,8}^L$ we use Cauchy-Schwarz and Lemma~\ref{lem:LowMomentaOperator} and get (where the second inequality is clearly suboptimal),
\begin{align}
|\langle \Psi, d_{1,8}^L \Psi \rangle | &\leq C \ell^{-3} (\int v )
\left( \varepsilon^{-1}  \err (K_H')^3  \langle \Psi,  n_{+}^{\rm L}  \widetilde{n}_{+}^{\rm H} \Psi \rangle + \varepsilon  n   \langle \Psi,  \widetilde{n}_{+}^{\rm H} \Psi \rangle \right) \nonumber \\
&\leq C \ell^{-3} (\int v )
\left( \varepsilon^{-1}  \err (K_H')^3  \langle \Psi,  n_{+}^{\rm L}  \widetilde{n}_{+}^{\rm H} \Psi \rangle + \varepsilon  n^2    \right),
\end{align}
and we recognize that this term can be estimated in the same fashion as $ d_{1,6}^L$ therefore contributing to the fourth term in \eqref{eq:GoodEstimated1d2-v2}.

The estimates of $d_{1,9}^L$ and $d_{1,10}^L$ are a bit different since there are two factors of $Q_H'$ on the same side. We will complete the $Q_H'$'s to $Q = Q_H' + Q_L'$ and then complete to $QQ+ \omega(PP + PQ + QP)$ before using Cauchy-Schwarz. Thereby, we will be able to use the bound \eqref{eq:Q4apriori_2} on ${\mathcal Q}_4^{\rm ren}$. The new terms appearing through this completion procedure can be estimated in the same manner as the previous $d_{1,j}^L$'s.

Let ${\mathcal E}$ denote the error bound in \eqref{eq:GoodEstimated1d2-v2}. We get
\begin{align}
|\langle \Psi, d_{1,9}^L \Psi \rangle | &\leq | \langle \Psi, \sum_{i\neq j}  (Q_{L,i}' P_j w(x_i,x_j) Q_j Q_i+ h.c. ) \Psi \rangle| \\
&\quad+| \langle \Psi, \sum_{i\neq j}  (Q_{L,i}' P_j w(x_i,x_j) (Q_{H,j}' Q_{L,i}'+ Q_{L,j}'Q_{H,i}') + h.c. ) \Psi \rangle| \nonumber
+|\langle \Psi, d_{1,3}^L \Psi \rangle |
\end{align}
Here the second term on the right can be estimated, after an application of Cauchy-Schwarz, in the same manner as $d_{1,6}^L$.
Continuing the procedure, we therefore find,
\begin{align}
|\langle \Psi, d_{1,9}^L \Psi \rangle | &\leq  | \langle \Psi, \sum_{i\neq j}  (Q_{L,i}' P_j w(x_i,x_j) (Q_j Q_i + \omega(P_j Q_i + Q_j P_i + P_j P_i))  + h.c. ) \Psi \rangle| \nonumber \\
&\quad + | \langle \Psi, \sum_{i\neq j}  (Q_{L,i}' P_j w(x_i,x_j) \omega (P_j Q_i + Q_j P_i ) + h.c. ) \Psi \rangle| \nonumber \\
&\quad + | \langle \Psi, \sum_{i\neq j}  (Q_{L,i}' P_j w(x_i,x_j) \omega P_j P_i+ h.c. ) \Psi \rangle| + {\mathcal E}.
\end{align}
Here the third term is estimated as $d_{1,5}^L$ using that $0\leq \omega \leq 1$. Similarly, the second term is estimated as $d_{1,7}^L$. 
Finally, the first term. After using Cauchy-Schwarz, this is controlled by
\begin{align}
 \langle \Psi, {\mathcal Q}_4^{\rm ren}\Psi\rangle + C n \ell^{-3} (\int v) \langle \Psi, n_{+} \Psi \rangle ,
\end{align}
which is in agreement with  \eqref{eq:GoodEstimated1d2-v2}.

The analysis of $d_{1,10}^L$ is similar. We write
\begin{align}
|\langle \Psi, d_{1,10}^L \Psi \rangle | &\leq | \langle \Psi, \sum_{i\neq j}  (Q_{L,i}' Q_{H,j}'  w(x_i,x_j) Q_j Q_i+ h.c. ) \Psi \rangle|  \\
&\quad+| \langle \Psi, \sum_{i\neq j}  (Q_{L,i}' Q_{H,j}'  w(x_i,x_j) (Q_{H,j}' Q_{L,i}'+ Q_{L,j}'Q_{H,i}') + h.c. ) \Psi \rangle|
+|\langle \Psi, d_{1,4}^L \Psi \rangle | \nonumber
\end{align}
Here the second term is easily estimated by the sixth term in \eqref{eq:GoodEstimated1d2-v2}. Therefore, we continue the estimate as
\begin{align}
|\langle \Psi, d_{1,10}^L \Psi \rangle | &\leq | \langle \Psi, \sum_{i\neq j}  \left(Q_{L,i}' Q_{H,j}'  w(x_i,x_j) (Q_j Q_i + \omega(P_j Q_i + Q_j P_i + P_j P_i)  + h.c. \right) \Psi \rangle| \nonumber \\
& \quad+ | \langle \Psi, \sum_{i\neq j}  (Q_{L,i}' Q_{H,j}'  w(x_i,x_j) \omega (P_j Q_i + Q_j P_i ) + h.c. ) \Psi \rangle| \nonumber \\
&\quad  + | \langle \Psi, \sum_{i\neq j}  (Q_{L,i}' Q_{H,j}'  w(x_i,x_j) \omega P_j P_i+ h.c. ) \Psi \rangle| + {\mathcal E}.
\end{align}
The third term can be estimated as $d_{1,6}^L$ using that $0\leq \omega \leq 1$. The second term can similarly be estimated as $d_{1,8}^L$.
Applying Cauchy-Schwarz to the first term, we therefore find
\begin{align}
|\langle \Psi, d_{1,10}^L \Psi \rangle | &\leq  \langle \Psi, {\mathcal Q}_4^{\rm ren}\Psi\rangle + C \err (K_H')^3 \ell^{-3} (\int v)  \langle \Psi, n_{+}^L  \widetilde{n}_{+}^H \Psi\rangle + {\mathcal E}.
\end{align}
So we conclude that also  $d_{1,10}^L$ can be estimated in agreement with \eqref{eq:GoodEstimated1d2-v2}.

Using Cauchy-Schwarz, we find for arbitrary $\varepsilon >0$, and using Lemma~\ref{lem:LowMomentaOperator} in the last inequality,
\begin{align}
| \langle \Psi, d_{2,1}^L \Psi \rangle| &\leq \sum_{i \neq j} \left\langle \Psi,  \left(\varepsilon  P_i P_j  w(x_i,x_j) P_j P_i + \varepsilon^{-1}(Q_L')_i  (Q_L')_j w(x_i,x_j) (Q_L')_j (Q_L')_i\right) \Psi \right\rangle  \nonumber \\
&\leq C\varepsilon n^2 \ell^{-3} \int v + C \varepsilon^{-1} \langle \Psi, (n_{+}^L)^2 \Psi \rangle \err (K_H')^3\ell^{-3}  \int v,
\end{align}
Making the choice  $\varepsilon = n^{-1} [ \err (K_H')^3]^{1/2} \langle \Psi, (n_{+}^L)^2 \Psi \rangle^{1/2} $ this estimate  corresponds to the second term in \eqref{eq:GoodEstimated1d2-v2}.

We estimate $d_{2,2}^L$ similarly by the Cauchy-Schwarz inequality. 
\begin{align}
\pm d_{2,2}^L  &\leq \varepsilon \sum_{i \neq j} (Q_H')_i  P_j  w(x_i,x_j) P_j (Q_H')_i  + \varepsilon^{-1} \sum_{i \neq j} (Q_L')_i  (Q_L')_j w(x_i,x_j) (Q_L')_j (Q_L')_i  .
\end{align}
The first sum on the right is controlled by $n \ell^{-3} n_{+}^H \int v \leq n^2 \ell^{-3}  \int v$ (using the suboptimal estimate $n_{+}^H \leq n$). Therefore, the $d_{2,2}^L$-term is controlled by the same expression as for $d_{2,1}^L$.

We next estimate $d_{2,3}^L$. We write
\begin{align}
d_{2,3}^L &= \sum_{i \neq j} Q_i Q_j w(x_i,x_j) (Q_L')_j (Q_L')_i + h.c. \nonumber \\
&\quad- \sum_{i \neq j} (Q_L')_i (Q_L')_j w(x_i,x_j) (Q_L')_j (Q_L')_i + h.c.  \nonumber \\
&\quad- 2 d_{1,4}^L 
\end{align}
The second term here can be estimated as $C \ell^{-3} \err (K_H')^3 (\int v) (n_+^L)^2$. This is the fifth term in \eqref{eq:GoodEstimated1d2-v2}.
Therefore, we can continue the estimate as follows, where ${\mathcal E}$ denote the error bound in \eqref{eq:GoodEstimated1d2-v2},
\begin{align}
|\langle \Psi, d_{2,3}^L \Psi \rangle | &\leq | \langle \Psi, \sum_{i\neq j}  \left(Q_{L,i}' Q_{L,j}' w(x_i,x_j) [Q_j Q_i + \omega(P_j Q_i + Q_j P_i + P_j P_i)]  + h.c. \right) \Psi \rangle| \nonumber \\
& + | \langle \Psi, \sum_{i\neq j}  (Q_{L,i}' Q_{L,j}'w(x_i,x_j) \omega (P_j Q_i + Q_j P_i ) + h.c. ) \Psi \rangle| \nonumber \\
& + | \langle \Psi, \sum_{i\neq j}  (Q_{L,i}' Q_{L,j}'w(x_i,x_j) \omega P_j P_i+ h.c. ) \Psi \rangle| + {\mathcal E}.
\end{align}
Here the third term can be estimated in the same way as $d_{2,1}^L$ and the second as $d_{1,3}^L$, using in both cases that $0\leq \omega \leq 1$. So we find after applying Cauchy-Schwarz to the first term
\begin{align}
|\langle \Psi, d_{2,3}^L \Psi \rangle | &\leq  \langle \Psi, {\mathcal Q}_4^{\rm ren}\Psi\rangle + C \ell^{-3} \err (K_H')^3 (\int v) (n_+^L)^2 + {\mathcal E},
\end{align}
which is in agreement with \eqref{eq:GoodEstimated1d2-v2}.
\end{proof}

\section{Localization of the momentum localized number of excited particles $n_+^L$}\label{sec:LocMatrices}

In this section we prove Proposition~\ref{prop:LocMatrices} below. That result allows us to work with a state $\widetilde{\Psi}$ that is localized in $n_{+}^L$ and by \eqref{eq:EnergyBound} has an energy which is correct to LHY-order.
We introduce a parameter $\ML$ satisfying \eqref{eq:DefML}.

\begin{proposition}[Restriction on $n_+^L$]\label{prop:LocMatrices}
Assume that the conditions of Assumption~\ref{assump:params} 
are satisfied and that $\rmu a^3$ is small enough. 
There is then a universal constant $C>0$, such that
if there is a normalized $n$-particle $\Psi\in{\mathcal F}_s(L^2(\Lambda))$ 
satisfying (notice the factor of $\frac{1}{2}$ compared to \eqref{eq:aprioriPsi}),
\begin{align}\label{eq:aprioriPsiHalf}
  \langle \Psi, {\mathcal H}_{\Lambda}(\rmu) \Psi \rangle \leq
  - 4\pi \rmu^2 a \ell^3(1-\frac{1}{2}K_B^3 (\rmu a^3)^{\frac{1}{2}})
\end{align}
under the assumptions in
Theorem~\ref{thm:aprioribounds}, then
there is also a normalized $n$-particle wave function $\widetilde{\Psi}\in
{\mathcal F}_{\rm s}(L^2(\Lambda))$ with the property that
\begin{align}\label{eq:LocalizedNPlus}
  \widetilde{\Psi} = 1_{[0,\ML]}(n_{+}^L) \widetilde{\Psi},
\end{align}
i.e., only values
of $n_+^L$ smaller than $\ML$ appear in $\widetilde{\Psi}$, and
such that
\begin{align}\label{eq:EnergyBound}
\langle\widetilde\Psi,{\mathcal H}_\Lambda(\rmu)\widetilde\Psi\rangle&
\leq \langle\Psi,{\mathcal H}_\Lambda(\rmu)\Psi\rangle
+C \rmu^2 a \ell^3 (\rmu a^3)^{\frac{5}{8}}  \widetilde{K}_{{\mathcal M}}^2 K_{\mathcal R}  K_N^{\frac{1}{4}} (K_H')^{\frac{3}{2}} K_B^{\frac{3}{2}} K_{\ell}^{-5} .
\end{align}
\end{proposition}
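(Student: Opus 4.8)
\emph{The band structure in $n_{+}^{\rm L}$.} The plan is to run the localization-of-large-matrices technique with respect to the operator $n_{+}^{\rm L}$, which on $\mathcal F_s(L^2(\Lambda))$ has spectrum in $\{0,1,\dots,n\}$. The structural point is that $Q_L'$ is defined in \eqref{eq:Def_QL'} as a spectral projection of the kinetic energy $\mathcal T$, so $\mathcal T$ commutes with $n_{+}^{\rm L}=\sum_j(Q_L')_j$; the chemical-potential term $-\rmu\int w_1(x_i,y)\,dy$ is a one-body operator and $w$ is a two-body operator, hence they change the value of $n_{+}^{\rm L}$ by at most $\pm1$ and $\pm2$ respectively. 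Thus $\mathcal H_\Lambda(\rmu)=\mathcal H^{\rm diag}+d_1^L+d_2^L$, where $\mathcal H^{\rm diag}$ commutes with $n_{+}^{\rm L}$ and $d_1^L,d_2^L$ are precisely the operators of \eqref{eq:def_d2L}--\eqref{eq:def_d1L}, shifting $n_{+}^{\rm L}$ by $\pm1$ and $\pm2$; i.e. $\mathcal H_\Lambda(\rmu)$ is block-pentadiagonal in the grading of Fock space by $n_{+}^{\rm L}$.

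\emph{Localization and the error estimate.} Since \eqref{eq:aprioriPsiHalf} is stronger than \eqref{eq:aprioriPsi}, the state $\Psi$ satisfies the hypotheses of Theorem~\ref{thm:aprioribounds}, giving $n=\rmu\ell^3(1+o(1))$, $\langle\Psi,n_+\Psi\rangle\le C\rmu\ell^3K_B^3K_\ell^2(\rmu a^3)^{1/2}$ and $0\le\langle\Psi,{\mathcal Q}_4^{\rm ren}\Psi\rangle\le C\rmu^2a\ell^3K_B^3K_\ell^2(\rmu a^3)^{1/2}$; note also $\langle\Psi,n_{+}^{\rm L}\Psi\rangle\le\langle\Psi,n_+\Psi\rangle$ since $Q_L'\le Q$. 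I would then invoke the abstract localization-of-large-matrices lemma (in the form used in \cite{FS} and the references therein): for the pentadiagonal operator $\mathcal H_\Lambda(\rmu)$ and the parameter $\ML$ of \eqref{eq:DefML} there is a normalized $\widetilde\Psi$ with $\widetilde\Psi=1_{[m_0,\,m_0+\ML]}(n_{+}^{\rm L})\widetilde\Psi$ for some $m_0\ge0$ and
\[
\langle\widetilde\Psi,\mathcal H_\Lambda(\rmu)\widetilde\Psi\rangle\le\langle\Psi,\mathcal H_\Lambda(\rmu)\Psi\rangle+(\text{localization error}),
\]
the localization error being the IMS-type correction produced by cutoff functions supported on $n_{+}^{\rm L}$-windows of length $\sim\ML$, hence controlled by $|\langle\cdot,d_1^L\,\cdot\rangle|+|\langle\cdot,d_2^L\,\cdot\rangle|$ evaluated in $\Psi$ truncated in $n_{+}^{\rm L}$. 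Windows far from $0$ carry negligible mass by Chebyshev applied to $\langle\Psi,n_{+}^{\rm L}\Psi\rangle$, while for the windows near $0$ one applies Lemma~\ref{lem:Estimate_dLs} with $\widetilde{\mathcal M}$ of order $\ML$, inserting the a priori bounds above, the parameter control of $\err$ (Lemma~\ref{lem:LowMomentaOperator}) and of $K_H'$, and the bound on $\int v/a$ coming from \eqref{eq:assumps_thm_background}. Tracking the explicit powers of the $K$-parameters, together with the prescribed size of $\ML$ in \eqref{eq:DefML}, produces exactly the right-hand side of \eqref{eq:EnergyBound}; the large quantity $\int v/a$ enters the final bound only through the factor $K_{\mathcal R}$, which is precisely the improvement that working with the momentum-localized $n_{+}^{\rm L}$ buys over the analogue for $n_+$ in \cite{FS}.

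\emph{Reducing the window to $[0,\ML]$.} To obtain \eqref{eq:LocalizedNPlus} I would argue that $m_0=0$ may be taken. By construction the localization error is of order $(\rmu a^3)^{5/8}$ times a fixed polynomial in the $K$-parameters, which by the choices \eqref{eq:Xparameter}--\eqref{eq:choiceM} collected in Section~\ref{sec:params} is $o\big((\rmu a^3)^{1/2}\big)$ relative to $\rmu^2a\ell^3$; combined with the factor $\tfrac12$ in \eqref{eq:aprioriPsiHalf} this shows $\widetilde\Psi$ still satisfies \eqref{eq:aprioriPsi}. Applying Theorem~\ref{thm:aprioribounds} now to $\widetilde\Psi$ gives $m_0\le\langle\widetilde\Psi,n_{+}^{\rm L}\widetilde\Psi\rangle\le\langle\widetilde\Psi,n_+\widetilde\Psi\rangle\le C\rmu\ell^3K_B^3K_\ell^2(\rmu a^3)^{1/2}$, and since \eqref{eq:DefML} is arranged so that $\ML$ dominates this last quantity (running the abstract lemma, if needed, with window length $\tfrac12\ML$ to absorb the shift), the support of $\widetilde\Psi$ lies in $[0,\ML]$.

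\emph{Main obstacle.} The conceptual ingredients — the pentadiagonal splitting, the generic localization lemma, and the bootstrap through Theorem~\ref{thm:aprioribounds} — are essentially routine. The delicate part is the quantitative balancing in the second step: $\ML$ must be small enough that the terms $\err(K_H')^3\ML/n$ arising in Lemma~\ref{lem:Estimate_dLs} stay controlled, yet large enough that the cutoff cost (of order $\ML^{-2}$ times the off-diagonal terms, the latter carrying the potentially large $K_{\mathcal R}$) remains $o\big(\rmu^2a\ell^3(\rmu a^3)^{1/2}\big)$ so that the self-consistency of the third step closes. Reconciling these two requirements, and verifying that the explicit choice \eqref{eq:Xparameter}--\eqref{eq:choiceM} meets all the relations of Section~\ref{sec:params}, is where the real work lies.
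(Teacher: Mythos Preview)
Your overall architecture is right and matches the paper: exploit the pentadiagonal structure in $n_+^{\rm L}$, invoke the localization-of-large-matrices lemma, then bootstrap through Theorem~\ref{thm:aprioribounds} to force the window down to $[0,\ML]$. The third step is exactly as in the paper (packaged there as Lemma~\ref{lem:LocMatrices_toiterate}).

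The middle step, however, is muddled in a way that matters. The abstract lemma you cite (Theorem~\ref{thm:Localizing large matrices}) bounds the localization error by $C(\ML)^{-2}\bigl(|\langle\Psi,d_1^L\Psi\rangle|+|\langle\Psi,d_2^L\Psi\rangle|\bigr)$ with $d_1^L,d_2^L$ evaluated on the \emph{original} state $\Psi$---there is no sum over windows, no truncated states, and no Chebyshev argument at this point. Consequently Lemma~\ref{lem:Estimate_dLs} must be invoked with the trivial choice $\widetilde{\mathcal M}=n$ (any $n$-particle state has $n_+^{\rm L}\le n$), not with $\widetilde{\mathcal M}\sim\ML$: you cannot use $\widetilde{\mathcal M}=\ML$ on $\Psi$ because that is precisely the localization you have not yet obtained. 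With $\widetilde{\mathcal M}=n$ the dominant contribution in \eqref{eq:GoodEstimatedSimple} is the square-root term, giving the bound \eqref{eq:ds}; dividing by $(\ML)^2=\widetilde K_{\mathcal M}^{-2}K_\ell^6(\rmu a^3)^{-1}$ then yields \eqref{eq:EnergyBound} directly. The paper explicitly remarks after the proof that one \emph{could} subsequently iterate with $\widetilde{\mathcal M}=\ML$ for a sharper bound, but refrains since it is not needed---so your instinct that $\widetilde{\mathcal M}\sim\ML$ is relevant is not wrong, it just enters (if at all) only as an optional refinement after the first pass, not as the primary estimate.
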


\begin{proof}
We start by proving that with notation from \eqref{eq:def_d1L} and \eqref{eq:def_d2L},
we have
\begin{align}\label{eq:ds}
|\langle \Psi, d_1^L \Psi \rangle | + |\langle \Psi, d_2^L \Psi \rangle | \leq 
C \rmu^2 a \ell^3 K_{\mathcal R}  K_N^{\frac{1}{4}} (K_H')^{\frac{3}{2}} K_B^{\frac{3}{2}} K_{\ell} (\rmu a^3)^{-\frac{3}{8}}.
\end{align}
To get this, we apply \eqref{eq:GoodEstimatedSimple} with $\widetilde{M} = n$.
By \eqref{eq:Q4apriori_2} and \eqref{con:KBKell} the estimate is immediate for the ${\mathcal Q}_4^{\rm ren}$-part in \eqref{eq:GoodEstimatedSimple}.
Also, since ${\mathcal R_1} (K_H')^3 \geq 1$, we only have to prove that
\begin{align}
\rmu^2 a \ell^3 \left\{ {\mathcal R} \sqrt{ {\mathcal R_1} (K_H')^3  \frac{ \langle \Psi, n_{+} \Psi \rangle }{n}} + 
{\mathcal R}  {\mathcal R_1} (K_H')^3 
 \frac{ \langle \Psi, n_{+} \Psi \rangle }{n}\right\},
\end{align}
is bounded by the right hand side in \eqref{eq:ds}. This follows by inserting \eqref{eq:apriori_n}, \eqref{eq:apriorinn+NEW}, the definition \eqref{eq:Def_R1} of $\err$ and using \eqref{con:Constants} to realize that the largest term is the one with the square root. 
Notice in particular, that ${\mathcal R} \err \leq 3 K_{\mathcal R} (\rmu a^3)^{-\frac{1}{2}} \varepsilon_N^{-\frac{1}{2}}$, by \eqref{con:RNew}, \eqref{con:intv} and \eqref{con:Constants}.
This establishes \eqref{eq:ds}.

Using \eqref{con:newML} and \eqref{con:KBellKM} we can therefore apply Lemma~\ref{lem:LocMatrices_toiterate} to get \eqref{eq:EnergyBound}.
\end{proof}

\begin{remark}
In the proof above, we used \eqref{eq:GoodEstimatedSimple} with $\widetilde{{\mathcal M}} = n$. It is clear that one could now use  \eqref{eq:GoodEstimatedSimple} with $\widetilde{{\mathcal M}} = \ML$ and thus iterate to get improved bounds.
However, the gain of doing this is limited, and we have refrained from doing so, since it is not necessary for our purposes.
\end{remark}

\begin{lemma}[Restriction on $n_+^L$---a priori version]\label{lem:LocMatrices_toiterate}
Assume that the conditions of Assumption~\ref{assump:params} 
are satisfied and that $\rmu a^3$ is small enough. 
Then there exists a universal $C>0$ such that
if there is a normalized $n$-particle $\Psi\in{\mathcal F}_s(L^2(\Lambda))$ 
satisfying \eqref{eq:aprioriPsiHalf} under the assumptions in
Theorem~\ref{thm:aprioribounds}, and
if for some $\widetilde{\ML}$ we have
\begin{align}\label{eq:ML_large}
\widetilde{\ML} \gg K_B^3 K_{\ell}^5,
\end{align}
and
\begin{equation}\label{eq:IMS_small}
\frac{|\langle\Psi,d_1^L \Psi\rangle| + |\langle\Psi,d_2^L \Psi\rangle|}{\widetilde{\ML}^2} \ll a \rmu^2 \ell^3 (\rmu a^3)^{1/2},
\end{equation}
with $d_1^L$ from \eqref{eq:def_d1L} and $d_2^L$ from \eqref{eq:def_d2L},
then
there also exists a normalized $n$-particle wave function $\widetilde{\Psi}\in
{\mathcal F}_{\rm s}(L^2(\Lambda))$ with the property that
\begin{align}\label{eq:LocalizedNPlus_toiterate}
  \widetilde{\Psi} = 1_{[0,\widetilde{\ML}]}(n_{+}^L) \widetilde{\Psi},
\end{align}
and
such that
\begin{equation}\label{eq:EnergyBound_toiterate}
\langle\widetilde\Psi,{\mathcal H}_\Lambda(\rmu)\widetilde\Psi\rangle\leq
\langle\Psi,{\mathcal H}_\Lambda(\rmu)\Psi\rangle
+ \frac{|\langle\Psi,d_1^L \Psi\rangle| + |\langle\Psi,d_2^L \Psi\rangle|}{(\widetilde{\ML})^2}.
\end{equation}
\end{lemma}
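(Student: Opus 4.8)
The plan is to obtain Lemma~\ref{lem:LocMatrices_toiterate} from the abstract ``localization of large matrices'' argument of \cite{FS}, applied to $\mathcal{A}=\mathcal{H}_\Lambda(\rmu)$ graded by the eigenvalues of $n_{+}^{\rm L}$. First I would record the block decomposition $\mathcal{H}_\Lambda(\rmu)=\mathcal{D}^{\rm L}+d_1^L+d_2^L$, where $\mathcal{D}^{\rm L}$ commutes with $n_{+}^{\rm L}$ and $d_1^L,d_2^L$ (from \eqref{eq:def_d1L} and \eqref{eq:def_d2L}) shift $n_{+}^{\rm L}$ by $\pm1$ and $\pm2$. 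This is immediate from the structure of \eqref{eq:Def_HB_new}: the kinetic energy commutes with $n_{+}^{\rm L}$ because $Q_L'=\one_{(0,(K_H'\ell^{-1})^2)}(\mathcal{T})$ is a spectral projection of $\mathcal{T}$, and inserting $\one_\Lambda=P+Q_L'+Q_H'$ into every leg of the one- and two-body terms produces only shifts of $n_{+}^{\rm L}$ by $0,\pm1,\pm2$, the nonzero ones being exactly the terms grouped into $d_1^L$ and $d_2^L$; no $\pm3$ part occurs since the interaction is two-body. I would also note $n_{+}^{\rm L}\leq n_{+}$, which follows from $Q=Q_L'+Q_H'$.

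Next I would invoke the matrix-localization estimate of \cite{FS} (whose proof averages a smooth window over its position and keeps the best translate): for any normalized $\Psi$ and any integer $M$ there is a normalized $\widetilde\Psi$ supported on an interval of $n_{+}^{\rm L}$-values of length $M$ with $\langle\widetilde\Psi,\mathcal{H}_\Lambda(\rmu)\widetilde\Psi\rangle\leq\langle\Psi,\mathcal{H}_\Lambda(\rmu)\Psi\rangle+M^{-2}\big(|\langle\Psi,d_1^L\Psi\rangle|+|\langle\Psi,d_2^L\Psi\rangle|\big)$. The one point that needs genuine input is to force this interval into $[0,\widetilde{\ML}]$, and here I would use Theorem~\ref{thm:aprioribounds} in \emph{both} directions. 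Since \eqref{eq:aprioriPsiHalf} implies \eqref{eq:aprioriPsi}, the bound \eqref{eq:apriorinn+NEW} gives $\langle\Psi,n_{+}^{\rm L}\Psi\rangle\leq\langle\Psi,n_{+}\Psi\rangle\leq C\rmu\ell^3K_B^3K_\ell^2(\rmu a^3)^{1/2}=CK_B^3K_\ell^5$ (using $\rmu\ell^3=K_\ell^3(\rmu a^3)^{-1/2}$), which by \eqref{eq:ML_large} is $\ll\widetilde{\ML}$; conversely, Theorem~\ref{thm:aprioribounds} in contrapositive form, combined with $n_{+}\geq n_{+}^{\rm L}$, shows that any normalized state supported on $\{n_{+}^{\rm L}\geq C'K_B^3K_\ell^5\}$ (with $C'$ slightly above $C$) has energy $>-4\pi\rmu^2a\ell^3\big(1-K_B^3(\rmu a^3)^{1/2}\big)$.

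I would then run the abstract lemma with width $M=\widetilde{\ML}-C'K_B^3K_\ell^5$, which by \eqref{eq:ML_large} equals $\widetilde{\ML}(1-o(1))$. Every candidate window that starts at $n_{+}^{\rm L}\geq\widetilde{\ML}-M=C'K_B^3K_\ell^5$ has energy $>-4\pi\rmu^2a\ell^3(1-K_B^3(\rmu a^3)^{1/2})$ by the previous step; whereas \eqref{eq:aprioriPsiHalf} gives $\langle\Psi,\mathcal{H}_\Lambda(\rmu)\Psi\rangle\leq -4\pi\rmu^2a\ell^3\big(1-\tfrac12K_B^3(\rmu a^3)^{1/2}\big)$ and \eqref{eq:IMS_small} bounds the localization error $M^{-2}\big(|\langle\Psi,d_1^L\Psi\rangle|+|\langle\Psi,d_2^L\Psi\rangle|\big)$ by $o\big(\rmu^2a\ell^3(\rmu a^3)^{1/2}\big)$, so (since $K_B^3\gg1$) the right-hand side of the localization estimate is $<-4\pi\rmu^2a\ell^3(1-K_B^3(\rmu a^3)^{1/2})$ for $\rmu a^3$ small. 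Consequently the window realizing the estimate must start below $C'K_B^3K_\ell^5$, hence its support lies in $[0,\widetilde{\ML}]$, giving \eqref{eq:LocalizedNPlus_toiterate}, while $M^{-2}\leq\widetilde{\ML}^{-2}(1+o(1))$ yields \eqref{eq:EnergyBound_toiterate} for $\rmu a^3$ small. The main obstacle is precisely this window placement: the factor $\tfrac12$ in \eqref{eq:aprioriPsiHalf} is the exact ``budget'' that has to absorb the localization error supplied by \eqref{eq:IMS_small}, and \eqref{eq:ML_large} is what keeps the localized state clear of the high-$n_{+}^{\rm L}$ region where Theorem~\ref{thm:aprioribounds} forbids low energy; the remainder is bookkeeping together with the known abstract estimate.
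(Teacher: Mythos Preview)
Your proposal is correct and follows essentially the same route as the paper: decompose $\mathcal{H}_\Lambda(\rmu)$ by how it shifts $n_+^{\rm L}$ (using that $Q_L'$ is a spectral projection of $\mathcal{T}$), apply the abstract localization-of-large-matrices theorem, and then use Theorem~\ref{thm:aprioribounds} to force the resulting window into $[0,\widetilde{\ML}]$. The only cosmetic difference is that the paper takes window width $\lfloor\widetilde{\ML}/2\rfloor$ and applies \eqref{eq:apriorinn+NEW} directly to the localized state $\widetilde{\Psi}$, whereas you take width $\widetilde{\ML}(1-o(1))$ and phrase the same step as the contrapositive; both yield the error with a universal constant (the $C$ in the statement).
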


We shall use the following theorem from \cite{LS}.
\begin{theorem}[Localization of large matrices]~\label{thm:Localizing large matrices} Suppose that 
  ${\mathcal A}$ is an $(N+1)\times (N+1)$ 
  Hermitean matrix and let ${\mathcal A}^{(k)}$, with $k=0,1,\dots ,N$, denote the
  matrix consisting of the $k^{\mathrm{th}}$ supra- and infra-diagonal
  of ${\mathcal A}$. Let $\psi\in \C^{N+1}$ be a normalized vector and set
  $d_k=\langle\psi,{\mathcal A}^{(k)}\psi\rangle$ and $\lambda=\langle\psi,{\mathcal A}
    \psi\rangle=\sum_{ k=0}^{N}d_k$ ($\psi$ need not be an eigenvector of
  ${\mathcal A}$). Choose some positive integer ${\mathcal M}'\leq N+1$. Then, with
  ${\mathcal M}'$ fixed, there is some $n'\in [0,N+1-{\mathcal M}']$ and some normalized
  vector $\phi\in \C^{N+1}$ with the property that $\phi_j=0$ unless
  $n'+1\leq j\leq n'+{\mathcal M}'$ (i.e., $\phi$ has localization length ${\mathcal M}'$) and such that
  \begin{align}
    \langle\phi,{\mathcal A}\phi\rangle\leq \lambda
    +\frac{C}{{\mathcal M}'^2}\sum_{k=1}^{{\mathcal M}'-1}k^2\abs{d_k}+C\sum_{k={\mathcal M}'}^{N}\abs{d_k},\label{eq:Localizing
      of large matrices}
  \end{align}
  where $C>0$ is a universal constant. (Note that the first sum starts at $k=1$.)
\end{theorem}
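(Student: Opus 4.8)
The plan is to establish this purely combinatorial statement by a discrete partition-of-unity (IMS-type) localization combined with an averaging argument. \emph{Tent profile and translates.} First I would fix a ``tent'' function $\chi:\Z\to[0,1]$ supported on the ${\mathcal M}'$ consecutive integers $\{1,\dots,{\mathcal M}'\}$, piecewise affine with a single interior maximum of height $1$; such a $\chi$ is $C/{\mathcal M}'$-Lipschitz and $c:=\sum_{m\in\Z}\chi(m)^2$ is comparable to ${\mathcal M}'$, say $c_0{\mathcal M}'\le c\le{\mathcal M}'$ with $c_0>0$ universal. For $n'\in\Z$ put $\chi^{(n')}_j:=\chi(j-n')$; translation invariance gives the exact partition of unity $\sum_{n'\in\Z}(\chi^{(n')}_j)^2=c$ for every $j$. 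After extending $\psi$ by zero outside $\{1,\dots,N+1\}$, only finitely many $n'$ make $\chi^{(n')}\psi$ nonzero, and for each such $n'$ the vector $\chi^{(n')}\psi$ is supported in a block of at most ${\mathcal M}'$ consecutive indices of $\{1,\dots,N+1\}$ (translating the block back into the range if it protrudes, which is possible since ${\mathcal M}'\le N+1$). Set $\phi^{(n')}:=\chi^{(n')}\psi/\|\chi^{(n')}\psi\|$ whenever this is nonzero; each such $\phi^{(n')}$ is normalized and vanishes outside a window $\{n'+1,\dots,n'+{\mathcal M}'\}$.

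\emph{Localization identity and averaging.} Using Hermiticity of ${\mathcal A}$ and the identity $ab=\tfrac12(a^2+b^2)-\tfrac12(a-b)^2$ applied entrywise to $\chi^{(n')}_i\chi^{(n')}_j$, summing over $n'$ and invoking the partition of unity gives
\begin{align}
\sum_{n'}\big\langle\psi,\chi^{(n')}{\mathcal A}\chi^{(n')}\psi\big\rangle
=c\,\lambda-\tfrac12\sum_{i,j}G_{i-j}\,\overline{\psi_i}\,{\mathcal A}_{ij}\,\psi_j,\qquad
G_m:=\sum_{n'\in\Z}\big(\chi(n')-\chi(n'+m)\big)^2,
\end{align}
where $G_m=G_{-m}$ depends on $m$ alone and $G_0=0$. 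Since $\langle\psi,\chi^{(n')}{\mathcal A}\chi^{(n')}\psi\rangle=\|\chi^{(n')}\psi\|^2\langle\phi^{(n')},{\mathcal A}\phi^{(n')}\rangle$ and $\sum_{n'}\|\chi^{(n')}\psi\|^2=c\|\psi\|^2=c$, a weighted average dominates its minimum, so there is an index $n'$ with
\begin{align}
\langle\phi^{(n')},{\mathcal A}\phi^{(n')}\rangle
\le\lambda-\frac1{2c}\sum_{i,j}G_{i-j}\,\overline{\psi_i}\,{\mathcal A}_{ij}\,\psi_j
\le\lambda+\frac1{2c}\sum_{k=1}^{N}G_k\,\abs{d_k},
\end{align}
using that $d_k=\sum_{\abs{i-j}=k}\overline{\psi_i}{\mathcal A}_{ij}\psi_j$ is real (as ${\mathcal A}^{(k)}$ is Hermitian) and $G_k\ge0$.

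\emph{Estimating $G_k$ and concluding.} For $k\ge{\mathcal M}'$ the supports of $\chi$ and $\chi(\cdot+k)$ are disjoint, so $G_k=2c$. For $1\le k<{\mathcal M}'$ the Lipschitz bound gives $\abs{\chi(n')-\chi(n'+k)}\le Ck/{\mathcal M}'$, and the summand is nonzero for at most $2{\mathcal M}'$ values of $n'$, whence $G_k\le2{\mathcal M}'(Ck/{\mathcal M}')^2=2C^2k^2/{\mathcal M}'$. Inserting these together with $c\ge c_0{\mathcal M}'$ yields
\begin{align}
\langle\phi^{(n')},{\mathcal A}\phi^{(n')}\rangle
\le\lambda+\frac{C^2}{c_0\,{\mathcal M}'^2}\sum_{k=1}^{{\mathcal M}'-1}k^2\abs{d_k}+\sum_{k={\mathcal M}'}^{N}\abs{d_k},
\end{align}
which is the asserted bound with $\phi:=\phi^{(n')}$ (relabelled so that its window is $\{n'+1,\dots,n'+{\mathcal M}'\}\subseteq\{1,\dots,N+1\}$) and a universal constant.

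\emph{Main obstacle.} The localization identity and the $G_k$-estimates are routine; the one genuinely delicate point is the bookkeeping at the two ends of the index range. Because the averaging divides by the global constant $c$, one must have the partition identity $\sum_{n'}(\chi^{(n')}_j)^2=c$ valid for \emph{every} $j\in\{1,\dots,N+1\}$ while simultaneously forcing each active window of length ${\mathcal M}'$ to lie inside $\{1,\dots,N+1\}$. This is what dictates the zero-extension of $\psi$ and the (harmless) translation of protruding blocks, and it is where care is needed to keep all constants universal and to guarantee that $\phi$ really has localization length at most ${\mathcal M}'$.
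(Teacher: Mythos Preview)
Your proof is correct and is essentially the standard argument: the paper does not prove this theorem itself but simply cites it from Lieb--Solovej \cite{LS}, and the proof there proceeds exactly as you outline---a discrete IMS localization with a piecewise-linear (tent) profile, the identity $\sum_{n'}\chi^{(n')}_i\chi^{(n')}_j=c-\tfrac12 G_{i-j}$, then averaging over translates to select a good window, with the same Lipschitz/disjoint-support dichotomy to bound $G_k$. Your treatment of the boundary (zero-extension of $\psi$ and relabelling of protruding windows) is the right way to handle it and matches the original; note that since $\psi$ vanishes outside $\{1,\dots,N+1\}$, the support of $\chi^{(n')}\psi$ automatically lies in at most ${\mathcal M}'$ consecutive indices of $\{1,\dots,N+1\}$, so the relabelling is purely cosmetic.
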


\begin{proof}[Proof of Lemma~\ref{lem:LocMatrices_toiterate}]
We may assume by \eqref{eq:ML_large} that $\widetilde{\ML} \geq 5$ and that 
$\widetilde{\ML}\leq n$, since otherwise there is nothing to prove.
 
We apply Theorem~\ref{thm:Localizing large matrices} to the 
$(n+1)\times(n+1)$-matrix with elements
$$
{\mathcal A}_{i,j}=\|\one_{\{n_+^L=i\}}\Psi\|^{-1}\|\one_{\{n_+^L=j\}}\Psi\|^{-1}
\langle \one_{\{n_+^L=i\}}\Psi, H_\Lambda(\rmu)\one_{\{n_+^L=j\}}\Psi\rangle.
$$
(If any of the norms are zero, we set the corresponding element to zero.) 
Then we get a normalized vector $\psi=(\|\one_{\{n_+^L=0\}}\Psi\|,\ldots,\|\one_{\{n_+^L=n\}}\Psi\|)$ in $\C^{n+1}$ satisfying
$$
\langle\psi,{\mathcal A} \psi\rangle= \langle \Psi, H_\Lambda(\rmu)\Psi\rangle. 
$$
Moreover, using the notation of Theorem~\ref{thm:Localizing large matrices}, only the ${\mathcal A}^{(k)}$ with $k=0,1,2$ are non-vanishing, and
\begin{equation}
 d_1=\langle \psi,{\mathcal A}^{(1)} \psi \rangle=\langle\Psi,d_1^L \Psi\rangle,
 \qquad
d_2=\langle \psi,{\mathcal A}^{(2)} \psi \rangle=
\langle\Psi,d_2^L\Psi \rangle.
\end{equation}
Notice that here it is important that $Q_L'$ commutes with the kinetic energy.

By Theorem~\ref{thm:Localizing large matrices} we can find a normalized $\phi\in \C^{n+1}$ with localization length ${\mathcal M}'$ equal to the integer part of $\widetilde{\ML}/2$,
such that 
\begin{align}
  \langle \phi, {\mathcal A}\phi\rangle\leq&\, \langle\psi,{\mathcal A} \psi\rangle+C\frac{|d_1| + |d_2|}{(\widetilde{\ML})^2}
  = \langle \Psi, H_\Lambda(\rmu)\Psi\rangle+C\frac{|d_1| + |d_2|}{(\widetilde{\ML})^2}.
\end{align}
Let $\widetilde\phi\in\C^{n+1}$ be given by
$\widetilde\phi_i=\phi_i$ if $\|\one_{\{n_+^L=i\}}\Psi\|\ne 0$ and
$\widetilde\phi_i=0$ if $\|\one_{\{n_+^L=i\}}\Psi\|=0$.  Then
$\|\widetilde\phi\|\leq 1$. 
We then have 
\begin{align}
  \langle \widetilde \phi, {\mathcal A}\widetilde\phi\rangle=\langle \phi, {\mathcal A}\phi\rangle\leq
  \langle \Psi, H_\Lambda(\rmu)\Psi\rangle+C\frac{|d_1| + |d_2|}{(\widetilde{\ML})^2}< 0.
\end{align}
where the negativity follows from 
\eqref{eq:aprioriPsiHalf},
\eqref{con:KBKell}, and \eqref{eq:IMS_small}.
In particular,
$\widetilde\phi\ne0$. Define
$$
\widetilde \Psi=\|\widetilde\phi\|^{-1}\sum_{i=0}^n\widetilde\phi_i\|\one_{\{n_+^L=i\}}\Psi\|^{-1}\one_{\{n_+^L=i\}}\Psi.
$$
Then $\widetilde \Psi$ is normalized and satisfies 
$$
\langle\widetilde\Psi,H_\Lambda(\rmu)\widetilde\Psi\rangle=
\|\widetilde\phi\|^{-2}\langle \widetilde \phi, {\mathcal A}\widetilde\phi\rangle\leq 
\langle \widetilde \phi, {\mathcal A}\widetilde\phi\rangle,
$$ since the term on the right is negative and
$\|\widetilde\phi\|^{-2}\geq1$. This proves that $\widetilde\Psi$
satisfies \eqref{eq:EnergyBound_toiterate}. It remains to prove that
$\widetilde\Psi$ satisfies \eqref{eq:LocalizedNPlus_toiterate}.  We know from
the construction that the possible values of $n_+^L$ that occur in
$\widetilde \Psi$ lie in an interval of length ${\mathcal M}'$. 
Furthermore, since we have 
\eqref{eq:aprioriPsiHalf}
as well as \eqref{eq:EnergyBound_toiterate} and \eqref{eq:IMS_small}, we may 
use the a priori
bound \eqref{eq:apriorinn+NEW} on the expectation value of $n_+^L$ in
$\widetilde \Psi$. 
This implies that the interval of $n_+^L$ values in
$\widetilde \Psi$ must be contained in
$[0,\widetilde{\ML}]$,
since otherwise we get a contradiction between \eqref{eq:apriorinn+NEW}  and \eqref{eq:ML_large}.
\end{proof}

\section{Localization of the $3Q$-term}
\label{sec:3Q}
In this section we will absorb an unimportant part of the $3Q$ term in the positive $4Q$ term. 
This result, Lemma~\ref{lem:Q3-splitting1} below, as well as its proof is almost identical to \cite[Lemma 9.1]{FS}, but has an improved dependence on $R$ in the error bound. For completeness, we have included the short proof below.

\begin{lemma}\label{lem:Q3-splitting1}
Define
\begin{equation}
\widetilde{Q}_3^{(1)} := 
\sum_{i\neq j} (P_i Q_{L,j} w_1(x_i,x_j) Q_j Q_i + h.c.) ,
\end{equation}
Assume that the conditions of Assumption~\ref{assump:params} 
are satisfied and that $\rmu a^3$ is small enough. 
With the notation from \eqref{eq:DefQ4}, \eqref{eq:DefQ3}, we get,
\begin{align}\label{eq:ReduceQ3}
{\mathcal Q}_3^{\rm ren} + \frac{1}{4} {\mathcal Q}_4^{\rm ren} +\frac{b}{100}  \left( \ell^{-2} n_{+} + \varepsilon_T (d\ell)^{-2} n_{+}^H \right) 
\geq \widetilde{Q}^{(1)}_3 
 - C \rmu^2 a \ell^3   \left( d^{2M} + \frac{R a}{\ell^2} \right) .
\end{align}
\end{lemma}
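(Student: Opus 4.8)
The plan is to decompose $\mathcal{Q}_3^{\rm ren} = \widetilde{Q}_3^{(1)} + (\mathcal{Q}_3^{\rm ren} - \widetilde{Q}_3^{(1)})$ and to show that the difference is controlled by the positive terms on the left side of \eqref{eq:ReduceQ3}, up to the stated error. Recalling from \eqref{eq:DefQ3} that $\mathcal{Q}_3^{\rm ren} = \sum_{i\neq j} P_i Q_j w_1(x_i,x_j) Q_j Q_i + h.c.$, and writing $Q_j = Q_{L,j} + \overline{Q_{L,j}}$ (with $Q_L, \overline{Q_L}$ as in \eqref{eq:Ql-bar}), we get
\begin{align}
\mathcal{Q}_3^{\rm ren} - \widetilde{Q}_3^{(1)} = \sum_{i\neq j} \big(P_i \overline{Q_{L,j}} w_1(x_i,x_j) Q_j Q_i + h.c.\big),
\end{align}
i.e.\ the part where the middle $Q$ acting on particle $j$ is replaced by its high-momentum component $\overline{Q_L}$. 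The idea is then to bound this term by a Cauchy–Schwarz inequality that pairs $P_i \overline{Q_{L,j}}$ against $Q_j Q_i$: the $Q_jQ_i$ factor, together with $w = w_1/(1-\omega)$ and a completion of the $P_iP_j$, $P_iQ_j$, $Q_iP_j$ pieces with $\omega$-weights, is exactly the structure appearing in $\mathcal{Q}_4^{\rm ren}$ from \eqref{eq:DefQ4}, which is why $\tfrac14 \mathcal{Q}_4^{\rm ren}$ appears on the left; the other factor involves $\overline{Q_L}$ which is controlled by $n_+^H$ via \eqref{eq:QLs_giveNplusH}, explaining the $\varepsilon_T(d\ell)^{-2} n_+^H$ term (with the weight coming from the corresponding term in the kinetic energy \eqref{eq:DefT_tilde}). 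The $\ell^{-2} n_+$ term absorbs the coarser $n_+$ contributions, and the $d^{2M}$ error comes from the sharp decay of $\overline{Q_L}$ tested against the smooth $\chi_\Lambda$-cutoffs in $w_1$ — this is where the finite smoothness $M$ of $\chi$ and the definition \eqref{eq:def_fL} of $f_L$ (with the factor $d^2\ell$) enter, producing a factor $(d^2\ell \cdot d^{-2}\ell^{-1})^{-M}$-type bound, i.e.\ the tail of the momentum profile. Finally the $Ra/\ell^2$ error is the price of replacing $w_1$ by $g$, i.e.\ it comes from \eqref{eq:W1-g-new} and the estimate \eqref{eq:I2-integral-2-new} of Lemma~\ref{lem:Simple}.

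Concretely, I would first write, for $\varepsilon>0$ to be chosen,
\begin{align}
\pm\big(\mathcal{Q}_3^{\rm ren} - \widetilde{Q}_3^{(1)}\big) \leq \varepsilon \sum_{i\neq j} P_i \overline{Q_{L,j}}\, w(x_i,x_j)\, \overline{Q_{L,j}} P_i + \varepsilon^{-1} \sum_{i\neq j} Q_j Q_i\, \tilde{w}_1(x_i,x_j)\, Q_j Q_i,
\end{align}
where $\tilde w_1 = w_1^2/w$ (a positive kernel dominated by $w$ since $0\le 1-\omega\le1$). The second sum is $\le \varepsilon^{-1}\cdot(\text{const})\cdot\mathcal{Q}_4^{\rm ren}$ after bounding $\sum_{i\ne j} Q_iQ_j w(x_i,x_j) Q_jQ_i$ by $\mathcal{Q}_4^{\rm ren}$ plus the $\omega$-cross terms (which are themselves nonnegative inside the quadratic form of \eqref{eq:DefQ4}); choosing $\varepsilon$ a small universal constant makes this $\le \tfrac14\mathcal{Q}_4^{\rm ren}$. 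The first sum I would estimate by passing to the operator norm: $\sum_{i\ne j} P_i \overline{Q_{L,j}} w(x_i,x_j) \overline{Q_{L,j}} P_i$ is bounded using the one-body operator norm of $\overline{Q_{L}} \chi_\Lambda W \chi_\Lambda \overline{Q_L}$ — here the point is that $\overline{Q_L}$ projects onto $|p|\gtrsim d^{-2}\ell^{-1}$ while the smooth cutoff $\chi_\Lambda$ convolved with $W$ has rapidly decaying high-momentum content (rate governed by the $M$ derivatives of $\chi$), so the norm is $\lesssim d^{2M}\ell^{-3}\int v$ up to the $Ra/\ell^2$ correction from Lemma~\ref{lem:Simple}; combined with $n_0 \le n$ and $\sum\overline{Q_{L,j}}(\overline{Q_{L,j}})^* \le n_+^H$ from \eqref{eq:QLs_giveNplusH}, and the crude $\rho_0, n \sim \rmu\ell^3$ bounds, this yields the $\rmu^2 a\ell^3(d^{2M} + Ra/\ell^2)$ error plus a multiple of $\varepsilon_T(d\ell)^{-2}n_+^H + \ell^{-2}n_+$, which after rescaling $b$ small enough gives the $\frac{b}{100}$ coefficient.

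The main obstacle I expect is the careful bookkeeping in the Cauchy–Schwarz step: one must split $\overline{Q_{L,j}}$ correctly so that the "diagonal" square $\sum P_i \overline{Q_{L,j}} w \overline{Q_{L,j}} P_i$ genuinely reduces to a one-body operator norm times $n_0$ (using that $w$ is a two-body multiplication-type operator and $P_i$ is rank-one per box), while the other square reassembles into $\mathcal Q_4^{\rm ren}$ without losing the positivity — the $\omega$-weighted $PP$, $PQ$, $QP$ cross-terms in \eqref{eq:DefQ4} have to be reinstated, and one needs that the full expression in \eqref{eq:DefQ4} is a sum of squares (which it is, by construction, being of the form $A^* A$ after symmetrization). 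This is exactly the same maneuver as in \cite[Lemma 9.1]{FS}; the only genuinely new element is tracking the improved $R$-dependence, which is handled by using \eqref{eq:W1-g-new} and \eqref{eq:I2-integral-2-new} rather than the cruder bounds of \cite{FS}. Since the authors say the proof is "almost identical to \cite[Lemma 9.1]{FS}" with only the $R$-dependence improved, I would follow that proof line by line, substituting Lemma~\ref{lem:Simple} at the point where $w_1$ is compared to $g$.
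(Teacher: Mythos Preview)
Your overall decomposition $\mathcal{Q}_3^{\rm ren}-\widetilde{Q}_3^{(1)}=\sum_{i\ne j}\big(P_i\overline{Q_{L,j}}\,w_1\,Q_jQ_i+h.c.\big)$ is correct, and your intuition that one side of the Cauchy--Schwarz should rebuild $\mathcal{Q}_4^{\rm ren}$ while the other involves $n_+^H$ is right. But the explicit Cauchy--Schwarz you write down fails, for two linked reasons.

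First, after applying $P_i(\cdot)P_i$ to your ``first sum'' $\sum_{i\ne j}P_i\overline{Q_{L,j}}\,w(x_i,x_j)\,\overline{Q_{L,j}}P_i$, you get $\rho_0\sum_j\overline{Q_{L,j}}\,F(x_j)\,\overline{Q_{L,j}}$ with $F=\chi_\Lambda\cdot(W*\chi_\Lambda)$. Here $\|F\|_\infty\sim\ell^{-3}\int v=8\pi\ell^{-3}a\,\mathcal{R}$, which is \emph{large} (you used $w$, not $w_1\omega$, as the weight). Second, and more importantly, the operator $\overline{Q_L}\,F\,\overline{Q_L}$ is \emph{not} small of order $d^{2M}$: multiplication by a smooth function does not lower high frequencies, so $\|\overline{Q_L}\,F\,\overline{Q_L}\|\sim\|F\|_\infty$. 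The Fourier decay of $F$ that you invoke tells you that $\overline{Q_L}\,F\,P$ is small (low-to-high transfer is suppressed), but says nothing about the high-to-high block. Putting the pieces together, your first sum is bounded only by $C\rmu a\,\mathcal{R}\,n_+^H$; with $\mathcal{R}\sim(\rmu a^3)^{-1/2-\eta_2}$ this is far too large to be absorbed by the gap $\tfrac{b}{100}\varepsilon_T(d\ell)^{-2}n_+^H$ (check the parameter choices in Appendix~\ref{sec:params}).

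The paper avoids this by invoking \cite[Corollary~6.12]{FS}, whose Cauchy--Schwarz is arranged so that the residual term is of the form $\sum_{i,j}P_j\overline{Q}_{L,i}\,w_1\omega\,P_iP_j+h.c.$, i.e.\ with kernel $w_1\omega$ (so the integral is $\sim a$, not $\int v$) and with $\overline{Q_L}$ paired against $P$ rather than against another $\overline{Q_L}$. That term, after integrating out $x_j$ via $P_j(\cdot)P_j$, becomes $I\rho_0\big(\overline{Q_L}\chi_\Lambda^2 P+h.c.\big)$ with $I\le Ca$, and now the $d^{2M}$ decay is genuine: one writes $\overline{Q_L}\chi_\Lambda^2 P=\overline{Q_L}(\ell^{-2}-\Delta)^{-M/2}\big[(\ell^{-2}-\Delta)^{M/2}\chi_\Lambda^2\big]P$ and uses that $\overline{Q_L}$ is supported where $|p|\gtrsim d^{-2}\ell^{-1}$. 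The $Ra/\ell^2$ error enters through \eqref{eq:I2-integral-2} with $f=W_1\omega$ (and $\int|x|^2|f|\le Ca\int|x|g\le CRa^2$ via \eqref{omegabounds}), not through \eqref{eq:I2-integral-2-new} as you suggest. So your fallback plan of ``following \cite[Lemma~9.1]{FS} line by line'' is in fact the needed argument; your explicit Cauchy--Schwarz is a different and non-working route.
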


\begin{proof}
Using \cite[Corollary 6.12]{FS}, with $Q' = \overline{Q}_{L,}$ and $\varepsilon= c K_{\ell}^{-2}$ for some sufficiently small constant $c$, as well as \eqref{eq:QLs_giveNplusH}
we find 
\begin{align}
\frac{1}{4} {\mathcal Q}_4^{\rm ren} + \frac{b}{100 \ell^2} n_{+}+ 
{\mathcal Q}_3^{\rm ren}
- \widetilde{Q}^{(1)}_3 
&\geq
  \sum_{i,j}\Bigl(P_j \overline{Q}_{L,i} w_1(x_i,x_j)\omega(x_i-x_j)P_iP_j+ h.c. \Bigr) \nonumber \\
  &\quad   -C \ell^{-2} K_{\ell}^4 n_{+}^H. \label{eq:pq'pp-sf} 
\end{align}
Using \eqref{con:eTdK} it is clear that the $n_{+}^H$ term is dominated by half of the positive $n_{+}^H$ term from \eqref{eq:ReduceQ3}.

To estimate the remaining terms in \eqref{eq:pq'pp-sf} we start by using the estimate \eqref{eq:I2-integral-2} on the convolution from Lemma~\ref{lem:Simple} to get 
\begin{align}
&- \sum_{i\neq j} \left( P_i \overline{Q}_{L,j}  w_1(x_i,x_j) \omega(x_i-x_j) P_j P_i + h.c. \right) \nonumber \\
&\geq - I \ell^{-3} \Big( n_0 \sum_j \overline{Q}_{L,j} \chi_{\Lambda}^2(x_j) P_j + h.c.\Big)
- C a n^2 \ell^{-3} \frac{R a}{\ell^2},
\end{align}
where $I := \int W_1(y) \omega(y) \leq C a$, and where we used \eqref{omegabounds} to estimate the error term in \eqref{eq:I2-integral-2}.

To complete the proof we write, with $N = M/2 \in {\mathbb N}$ 
\begin{align}
\overline{Q}_{L} \chi_{\Lambda}^2 P + h.c.
=
\overline{Q}_{L} (\ell^{-2} - \Delta)^{-N} \left[(\ell^{-2} - \Delta)^{N}\chi_{\Lambda}^2\right]  P + h.c.
\end{align}
and notice that
\begin{equation}
|(\ell^{-2} - \Delta)^{N}\chi_{\Lambda}^2| \leq C \ell^{-2N}.
\end{equation}
Therefore,
\begin{equation}
\overline{Q}_{L} \chi_{\Lambda}^2 P + h.c. \leq
\varepsilon_2 P + \varepsilon_2^{-1} \ell^{2N} \overline{Q}_{L} (\ell^{-2} - \Delta)^{-2N}
(\overline{Q}_{L})^*
\leq \varepsilon_2 P + \varepsilon_2^{-1} d^{4N} \overline{Q}_{L}(\overline{Q}_{L})^*.
\end{equation}
Choosing $\varepsilon_2=d^{4N} = d^{2M}$ and using again \eqref{con:eTdK} we get \eqref{eq:ReduceQ3}
upon summing this estimate in the particle indices
and absorbing the $n_{+}^H$ term as before.
\end{proof}

\section{Second quantized operators}
\label{sec:Second}
\subsection{Creation/annihilation operators}
We will use $a, a^{\dagger}$ to denote the standard bosonic annihilation/creation operators on the bosonic Fock space ${\mathcal F}_{\rm s}(L^2(\Lambda))$.

We define $a_0$ as the annihilation operator associated to the condensate function for the box $\Lambda$, i.e. $a_0 = \ell^{-3/2} a(\theta)$, where we recall that $\theta$ defined in \eqref{eq:Theta} is the characteristic function of the box. 

We will also introduce creation/annihilation operators for other momenta. These will come in two forms, depending on whether they contain the localization function $\chi_{\Lambda}$ or not.

For $k \in {\mathbb R}^3\setminus\{0\}$ we let 
\begin{align}
\widetilde{a}_k := \ell^{-3/2} a(Q(e^{ikx} \theta)), \qquad \widetilde{a}_k^{\dagger} := \ell^{-3/2} a^{\dagger}(Q(e^{ikx} \theta)).
\end{align}
Clearly, for $k,k' \in {\mathbb R}^3\setminus\{0\}$, we have the commutation relations 
\begin{align}\label{eq:ak-akdagger}
[\widetilde{a}_k, \widetilde{a}_{k'}] = 0, \qquad [\widetilde{a}_k, \widetilde{a}_{k'}^{\dagger}] = \ell^{-3} \langle e^{ikx} \theta, Q e^{ik'x} \theta \rangle.
\end{align}
We also define, for $k \in {\mathbb R}^3\setminus \{0\}$,
\begin{align}
a_k := \ell^{-3/2} a( Q(e^{ikx} \chi_\Lambda))\qquad \textrm{and}\qquad a_k^{\dagger} := \ell^{-3/2} a^{\dagger}( Q(e^{ikx} \chi_\Lambda)).
\end{align}
Then, for all $k,k' \in {\mathbb R}^3\setminus \{ 0 \}$,
\begin{align}
[ a_k, a_{k'} ] = 0, 
\end{align}
and, using $Q = \one - P$ on $L^2(\Lambda)$,
\begin{align}\label{eq:Commutator_general}
[a_k, a_{k'}^{\dagger}] &= \ell^{-3}
\langle Q(e^{ikx} \chi_\Lambda), Q (e^{ik'x} \chi_\Lambda) \rangle 
= \widehat{\chi^2}((k-k')\ell) -  \widehat{\chi}(k \ell) \overline{\widehat{\chi}(k'\ell)}.
\end{align}
In particular,
\begin{align}\label{eq:Commutator}
[a_k, a_{k}^{\dagger}] \leq  1.
\end{align}

We will repeatedly need the following consequence of Lemma~\ref{lem:pseudolocal}
\begin{lemma}\label{lem:UsingML}
Assume that the conditions of Assumption~\ref{assump:params} 
are satisfied and that $\rmu a^3$ is small enough. 
Suppose that $\widetilde{\Psi} \in {\mathcal F}_{\rm s}(L^2(\Lambda))$ is normalized and satisfies
\begin{align}
\one_{[0,2\ML]}(n_+^L)\widetilde{\Psi} = \widetilde{\Psi} , \qquad
\one_{[0, 2 \rmu \ell^3]} (n_+) \widetilde{\Psi} = \widetilde{\Psi} .
\end{align}

Then,
\begin{align}\label{eq:DomByML}
\langle \widetilde{\Psi} , 
\ell^3 \int_{\{|k|\leq 2 K_H'' \ell^{-1}\}}  
(a_k^{\dagger} a_k + \widetilde{a}_k^{\dagger} \widetilde{a}_k)\,dk\,
\widetilde{\Psi} \rangle
\leq C \ML.
\end{align}

Furthermore,
\begin{align}\label{eq:DomByML_ext}
\langle \widetilde{\Psi} , 
\ell^3 \int 
(a_k^{\dagger} a_k + \widetilde{a}_k^{\dagger} \widetilde{a}_k) \,dk\,
\widetilde{\Psi} \rangle
\leq C \ML + C \langle  \widetilde{\Psi}, n_+^{H}  \widetilde{\Psi} \rangle.
\end{align}

\end{lemma}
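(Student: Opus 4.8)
The plan is to reduce everything to Lemma~\ref{lem:pseudolocal} after unpacking the second-quantized expressions into projections. First I would observe that for any one-particle operator $A\geq 0$ on $L^2(\Lambda)$, the second-quantized operator $d\Gamma(A) = \sum_j A_j$ satisfies $\langle\widetilde\Psi, d\Gamma(A)\widetilde\Psi\rangle \geq 0$, and that the integrals $\ell^3\int a_k^\dagger a_k\,dk$ and $\ell^3\int \widetilde a_k^\dagger \widetilde a_k\,dk$ — restricted to $\{|k|\leq 2K_H''\ell^{-1}\}$ — are precisely $d\Gamma$ of the one-particle operators
\[
B_\chi := \ell^3\int_{\{|k|\leq 2K_H''\ell^{-1}\}} \chi_\Lambda Q\, |e^{ikx}\rangle\langle e^{ikx}|\, Q\chi_\Lambda\,\frac{dk}{\ell^3},
\qquad
B_\theta := \ell^3\int_{\{|k|\leq 2K_H''\ell^{-1}\}} \theta Q\, |e^{ikx}\rangle\langle e^{ikx}|\, Q\theta\,\frac{dk}{\ell^3},
\]
i.e. $B_\chi = Q\chi_\Lambda \one_{\{|p|\leq 2K_H''\ell^{-1}\}}\chi_\Lambda Q$ and similarly $B_\theta = Q\theta\one_{\{|p|\leq 2K_H''\ell^{-1}\}}\theta Q = Q\one_{\{|p|\leq 2K_H''\ell^{-1}\}} Q$ (using $\theta = \one_\Lambda$ on the range of $Q$). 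Since the relevant estimate in Lemma~\ref{lem:pseudolocal} is stated for $K_H''\ell^{-1}$, and we here need it at $2K_H''\ell^{-1}$, I would either note that Assumption~\ref{assump:params} can absorb the factor of $2$ into the constant, or apply the lemma with $K_H''$ replaced by $2K_H''$ (still disjoint from ${\mathcal P}_{\rm low}$ by \eqref{cond:disjoint}, for $\rmu a^3$ small). This gives
\[
B_\chi + B_\theta \leq C Q_L' + C\Big(\big(\tfrac{2K_H''}{K_H'}\big)^M + \varepsilon_N^{3/2}\Big) Q.
\]

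Next I would take expectations: $\langle\widetilde\Psi, d\Gamma(B_\chi + B_\theta)\widetilde\Psi\rangle \leq C\langle\widetilde\Psi, n_+^L\widetilde\Psi\rangle + C\big((2K_H''/K_H')^M + \varepsilon_N^{3/2}\big)\langle\widetilde\Psi, n_+\widetilde\Psi\rangle$. The first term is bounded by $C\ML$ directly from the hypothesis $\one_{[0,2\ML]}(n_+^L)\widetilde\Psi = \widetilde\Psi$, which forces $n_+^L \leq 2\ML$ on the support of $\widetilde\Psi$. For the second term, the hypothesis $\one_{[0,2\rmu\ell^3]}(n_+)\widetilde\Psi=\widetilde\Psi$ gives $\langle\widetilde\Psi,n_+\widetilde\Psi\rangle \leq 2\rmu\ell^3$, and then one uses the parameter relations from Appendix~\ref{sec:params} — specifically that $K_H'$ is a large power of $\rmu a^3$ while $K_H''$ is a fixed power, so $(K_H''/K_H')^M \rmu\ell^3 \ll \ML$, and similarly $\varepsilon_N^{3/2}\rmu\ell^3 \ll \ML$ — to conclude the second term is also $\leq C\ML$. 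This proves \eqref{eq:DomByML}.

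For \eqref{eq:DomByML_ext}, I would split the full $k$-integral into the region $\{|k|\leq 2K_H''\ell^{-1}\}$, handled by \eqref{eq:DomByML}, and the complementary high region $\{|k| > 2K_H''\ell^{-1}\}$. On the high region, the one-particle symbol $\one_{\{|p| > 2K_H''\ell^{-1}\}}$ dominates, so $d\Gamma$ of the corresponding pieces is bounded — using $[a_k,a_k^\dagger]\leq 1$ and the analogous bound for $\widetilde a_k$ coming from \eqref{eq:ak-akdagger}, together with the fact that $\widehat{\chi^2}$ and $\widehat\chi$ decay — by a constant times $\sum_j Q\one_{(2K_H''\ell^{-1},\infty)}(\sqrt{-\Delta})Q$. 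Since $2K_H''\ell^{-1} \geq d^{-2}\ell^{-1}$ for $\rmu a^3$ small (this is again part of \eqref{cond:disjoint}), this is $\leq C n_+^H$ by \eqref{eq:nplusH}. Combining gives \eqref{eq:DomByML_ext}.

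The main obstacle I anticipate is purely bookkeeping: matching the region $\{|k|\leq 2K_H''\ell^{-1}\}$ against the statement of Lemma~\ref{lem:pseudolocal} (which has $K_H''$, not $2K_H''$), and verifying that the error factor $(2K_H''/K_H')^M + \varepsilon_N^{3/2}$, multiplied by $\rmu\ell^3$, is indeed $\ll \ML$ using the concrete parameter choices; this requires chasing the relations in Appendix~\ref{sec:params} but involves no analytic difficulty. The rest is a clean reduction to already-proven projection inequalities and to the two support hypotheses on $\widetilde\Psi$.
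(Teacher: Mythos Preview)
Your proof of \eqref{eq:DomByML} is correct and matches the paper's argument: both reduce to the one-particle operators $Q\chi_\Lambda\one_{\{|p|\le 2K_H''\ell^{-1}\}}\chi_\Lambda Q$ and $Q\one_{\{|p|\le 2K_H''\ell^{-1}\}}Q$, apply Lemma~\ref{lem:pseudolocal}, and absorb the error term $\big((K_H''/K_H')^M+\varepsilon_N^{3/2}\big)\rmu\ell^3$ into $C\ML$ using \eqref{cond:KHs-ML} and the definition of $\varepsilon_N$.

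For \eqref{eq:DomByML_ext} there is a genuine gap in the $a_k$ part. The high-momentum one-particle operator is $Q\chi_\Lambda\one_{\{|p|\ge 2K_H''\ell^{-1}\}}\chi_\Lambda Q$, and this is \emph{not} bounded by $C\,Q\one_{(d^{-2}\ell^{-1},\infty)}(\sqrt{-\Delta})Q$ as an operator inequality: multiplication by $\chi_\Lambda$ does not commute with the momentum cutoff, so $\chi_\Lambda\one_{\{|p|\ge 2K_H''\ell^{-1}\}}\chi_\Lambda$ has a nontrivial component on low momenta. Your phrase ``the one-particle symbol $\one_{\{|p|>2K_H''\ell^{-1}\}}$ dominates'' glosses over exactly this point. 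The paper fixes this by a Cauchy--Schwarz splitting,
\[
Q\chi_\Lambda\one_{\{|p|\ge 2K_H''\ell^{-1}\}}\chi_\Lambda Q
\le 2\|\chi_\Lambda\|_\infty\, Q\one_{[d^{-2}\ell^{-1},\infty)}(\sqrt{-\Delta})Q
+ 2\bigl\|\one_{[0,d^{-2}\ell^{-1})}(\sqrt{-\Delta})\,\chi_\Lambda\,\one_{[2K_H''\ell^{-1},\infty)}(\sqrt{-\Delta})\bigr\|^2 Q,
\]
and then bounds the cross-norm by $C(d^2K_H'')^{-M}$ using the $C^M$-regularity of $\chi$. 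This produces an extra $C(d^2K_H'')^{-2M}n_+$ term, which is absorbed into $C\ML$ via \eqref{cond:KHs-ML} and the hypothesis $n_+\le 2\rmu\ell^3$. For the $\widetilde a_k$ part your argument is fine, since there is no $\chi_\Lambda$ to commute through and $\one_{\{|p|\ge 2K_H''\ell^{-1}\}}\le \one_{(d^{-2}\ell^{-1},\infty)}(\sqrt{-\Delta})$ directly.
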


\begin{proof}
We carry out the proof for the $a_k$'s and only comment on how the same proof works for the $\widetilde{a}_k$'s.

An explicit calculation shows that, on the $N$-particle subspace,
\begin{align}\label{eq:BackToFirst}
(2\pi)^{-3} \ell^3 \int_{\{|k|\leq 2 K_H'' \ell^{-1}\}}  a_k^{\dagger} a_k \,dk
&=  \sum_{j=1}^N Q_j \chi_{\Lambda}(x_j) \one_{(0,2 K_H'' \ell^{-1}]}(|p_j|) \chi_{\Lambda}(x_j) Q_j\nonumber \\
&\leq 
C \sum_{j=1}^N \Big\{ (Q_L')_j +  \left( \left( \frac{K_H''}{K_H'} \right)^{M} +  \varepsilon_N^{\frac{3}{2}}\right) Q_j \Big\},
\end{align}
where the inequality follows from Lemma~\ref{lem:pseudolocal}.

The estimate \eqref{eq:DomByML}  for the $a_k$'s now follows easily using \eqref{cond:KHs-ML}, since we have that $\varepsilon_N^{\frac{3}{2}} \rmu \ell^3 \ll 1$ using the definition of $\varepsilon_N$ in \eqref{eq:Cond_epsilonN}, as well as \eqref{con:KBKell} and \eqref{con:Constants}.

The estimate \eqref{eq:DomByML}  for the $\widetilde{a}_k$'s follows in the same way with the only change that the operator on the $N$-particle subspace in \eqref{eq:BackToFirst} becomes $\sum_j Q_j \one_{(0,2 K_H'' \ell^{-1}]}(|p_j|) Q_j$, which can also be compared to $\sum_j (Q_L')_j$ by Lemma~\ref{lem:pseudolocal}.

To prove \eqref{eq:DomByML_ext} for the $a_k$'s we therefore have to estimate the complementary integral
\begin{align}
\ell^3 \int_{\{|k|\geq 2 K_H'' \ell^{-1}\}}  
a_k^{\dagger} a_k \,dk.
\end{align}
This is the second quantization of the one-particle operator $Q\chi_{\Lambda} \one_{[2 K_H'' \ell^{-1}, \infty)}(\sqrt{-\Delta}) \chi_{\Lambda}Q$.
Using Cauchy-Schwarz we can estimate this as
\begin{align}
&Q\chi_{\Lambda}\one_{[2 K_H'' \ell^{-1}, \infty)} \chi_{\Lambda}Q \nonumber\\
&\leq 2 \| \chi_{\Lambda}\|_{\infty} Q \one_{[d^{-2}\ell^{-1}, \infty)}(\sqrt{-\Delta}) Q
\nonumber \\
&\quad 
+ 2 Q \one_{[0, d^{-2}\ell^{-1})}(\sqrt{-\Delta}) \chi_{\Lambda}\one_{[2 K_H'' \ell^{-1}, \infty)}(\sqrt{-\Delta})  \chi_{\Lambda} \one_{[0, d^{-2}\ell^{-1})}(\sqrt{-\Delta}) Q \nonumber \\
&\leq 2 \| \chi_{\Lambda} \|_{\infty} Q \one_{[d^{-2}\ell^{-1}, \infty)}(\sqrt{-\Delta}) Q + 2 \| \one_{[0, d^{-2}\ell^{-1})}(\sqrt{-\Delta}) \chi_{\Lambda}\one_{[2 K_H'' \ell^{-1}, \infty)}(\sqrt{-\Delta}) \|^2 Q.
\end{align}
Here the operator norm satisfies
\begin{align}
&\| \one_{[0, d^{-2}\ell^{-1})}(\sqrt{-\Delta}) \chi_{\Lambda} \one_{[2 K_H'' \ell^{-1}, \infty)}(\sqrt{-\Delta}) \| \nonumber \\
&= \left\| \left(\one_{[0, d^{-2}\ell^{-1})}(\sqrt{-\Delta}) \chi_{\Lambda} (-\Delta)^{M/2}\right) \left( (-\Delta)^{-M/2}\one_{[2 K_H'' \ell^{-1}, \infty)}(\sqrt{-\Delta}) \right)\right\| \nonumber \\
&\leq C (d^2 K_H'')^{-M},
\end{align}
where we used that $\chi$ has $M\in 2{\mathbb N}$ bounded derivatives. So we see that
\begin{align}
\ell^3 \int_{\{|k|\geq 2 K_H'' \ell^{-1}\}}  
a_k^{\dagger} a_k \,dk \leq C n_{+}^{H} + C (d^2 K_H'')^{-2M} n_{+}.
\end{align}
Now \eqref{eq:DomByML_ext} for the $a_k$'s follows using the assumption on $ \widetilde{\Psi}$ and \eqref{cond:KHs-ML}.

The proof of  \eqref{eq:DomByML_ext} for the $\widetilde{a}_k$'s is easier, we just observe that $\one_{[d^{-2}\ell^{-1},\infty)}(\sqrt{-\Delta})  \geq \one_{[2 K_H'' \ell^{-1}, \infty)}(\sqrt{-\Delta})$.
\end{proof}

We will now formulate a lower bound on the Hamiltonian $ {\mathcal H}_{\Lambda}(\rmu) $ in second quantization.
Recall the Fourier-multiplier $\tau(p)$ defined in \eqref{eq:Def_tau}.

\begin{proposition}\label{prop:Hamilton2ndQuant}
Assume that $\widetilde{\Psi}$ is an $n$-particle wavefunction which satisfies \eqref{eq:aprioriPsi} and \eqref{eq:LocalizedNPlus},
and that the parameters satisfy 
Assumption~\ref{assump:params} and that $\rmu a^3$ is small enough.
Then, in 2nd quantization the operator ${\mathcal H}_{\Lambda}(\rmu)$ defined in \eqref{eq:Def_HB_new} satisfies
\begin{align}
\langle \widetilde{\Psi}, {\mathcal H}_{\Lambda}(\rmu) \widetilde{\Psi} \rangle 
\geq \langle \widetilde{\Psi}, {\mathcal H}_{\Lambda}^{\rm 2nd}(\rmu) \widetilde{\Psi} \rangle
- C \rmu^2 a \ell^3   \left( d^{2M} + \frac{R a}{\ell^2} \right) , 
\end{align}
where
\begin{align}\label{eq:HLambdaSecond}
{\mathcal H}_{\Lambda}^{\rm 2nd} &=
(2\pi)^{-3} \ell^3 \int (1-\varepsilon_N) \tau(k) a_k^{\dagger} a_k\,dk + \frac{b}{2 \ell^2} n_{+} + \varepsilon_T\frac{b}{8 d^2 \ell^2} n_{+}^{H} 
+\varepsilon_T\frac{b}{16 d^2 \rmu \ell^5} n_0 n_{+}^{H} 
\nonumber \\
&\quad+ \frac{1}{2} \ell^{-3} a_0^{\dagger} a_0^{\dagger} a_0 a_0 
\Big(\widehat{g}(0) + \widehat{g\omega}(0)\Big)
  - \rmu   \widehat{g}(0) a_0^{\dagger} a_0 \nonumber \\
  &\quad +\Big( (\ell^{-3} a_0^{\dagger} a_0 - \rmu) \widehat{W}_1(0) (2\pi)^{-3} \int \widehat{\chi}_{\Lambda}(k) a_k^{\dagger} a_0 \,dk + h.c. \Big) \nonumber \\
  &\quad +\Big( \ell^{-3} a_0^{\dagger} a_0 \widehat{W \omega}_1(0)(2\pi)^{-3} \int \widehat{\chi}_{\Lambda}(k) a_k^{\dagger} a_0 \,dk + h.c. \Big) \nonumber \\
  &\quad +
  (2\pi)^{-3} \int \left(\widehat{W}_1(k) + \widehat{W_1 \omega}(k)\right) a_0^{\dagger} a_k^{\dagger} a_k a_0 
  + \frac{1}{2} \widehat{W}_1(k) \left( a_0^{\dagger} a_0^{\dagger} a_k a_{-k} + a_k^{\dagger} a_{-k}^{\dagger} a_0 a_0 \right) \,dk \nonumber \\
  & \quad + \Big((\ell^{-3} a_0^{\dagger} a_0 -\rmu) \widehat{W_1}(0) + \ell^{-3} a_0^{\dagger} a_0\widehat{W_1 \omega}(0)\Big) (2\pi)^{-3} \ell^{-3} \int a_k^{\dagger} a_k\,dk  \nonumber\\
  &\quad + \widetilde{Q}_3,
\end{align}
where
\begin{align}\label{eq:3Qtilde}
\widetilde{Q}_3 := \ell^3 (2\pi)^{-6} \iint_{\{ k \in P_H\}}
f_L(s)
\widehat{W}_1(k) (a_0^{\dagger} \widetilde{a}_s^{\dagger}  a_{s-k} a_k
+ a_k^{\dagger}  a^{\dagger}_{s-k} \widetilde{a}_s a_0).
\end{align}
\end{proposition}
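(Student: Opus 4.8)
The plan is to pass from the first-quantized localized Hamiltonian ${\mathcal H}_{\Lambda}(\rmu)$ to second quantization term by term, using the potential energy decomposition of Lemma~\ref{lm:potsplit} together with the rewritings in Lemma~\ref{lem:QDecompSimpler}, and throwing away the positive term ${\mathcal Q}_4^{\rm ren}$ (which is legitimate for a lower bound by \eqref{eq:Q4apriori_2}) as well as the positive $n_+$-, $n_+^H$- and $n_0 n_+^H$-pieces that have been set aside inside the kinetic energy \eqref{eq:DefT_new}--\eqref{eq:DefT_tilde}. First I would expand the kinetic energy $\sum_i {\mathcal T}^{(i)}$: the $(1-\varepsilon_N){\mathcal T}'$ part, being $Q\chi_\Lambda\tau(-i\partial_x)\chi_\Lambda Q$, second-quantizes exactly to $(2\pi)^{-3}\ell^3(1-\varepsilon_N)\int\tau(k)a_k^\dagger a_k\,dk$; the Neumann-Laplacian and Neumann-gap terms in \eqref{eq:DefT_tilde} are all nonnegative and can either be dropped or retained in the $\frac{b}{2\ell^2}n_+$ and $\varepsilon_T\frac{b}{8d^2\ell^2}n_+^H$ form after using crude comparisons (absorbing numerical constants into $b$); the $n_0 n_+^H$-term is produced when one keeps a fraction of the fourth term of \eqref{eq:DefT_tilde} for later use and rewrites $Q\mathbf 1_{(d^{-2}\ell^{-1},\infty)}(\sqrt{-\Delta})Q$ second-quantized, bounding the $a_0^\dagger a_0$ factor below trivially.

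Next I would treat the potential energy. Using Lemma~\ref{lm:potsplit} and dropping ${\mathcal Q}_4^{\rm ren}\geq 0$, it remains to second-quantize ${\mathcal Q}_0^{\rm ren}+{\mathcal Q}_1^{\rm ren}+{\mathcal Q}_2^{\rm ren}+{\mathcal Q}_3^{\rm ren}$. For ${\mathcal Q}_0^{\rm ren}$, formula \eqref{eq:Q0n0} gives directly $\frac12\ell^{-3}a_0^\dagger a_0^\dagger a_0 a_0(\widehat g(0)+\widehat{g\omega}(0)) - \rmu\widehat g(0)a_0^\dagger a_0$ once one writes $n_0(n_0-1)=\ell^{-3}\cdot\ell^{-3}a_0^\dagger a_0^\dagger a_0 a_0$ and $n_0=a_0^\dagger a_0$. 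For ${\mathcal Q}_1^{\rm ren}$, formula \eqref{eq:Q1n0} involves $\sum_i Q_i\chi_\Lambda(x_i)(W_1*\chi_\Lambda)(x_i)P_i$; writing $W_1*\chi_\Lambda(x) = (2\pi)^{-3}\int\widehat{W}_1(k)\widehat{\chi_\Lambda}(k)e^{ikx}\,dk$ and recognizing $\ell^{-3/2}a(Q(e^{ikx}\chi_\Lambda)) = a_k$ produces the two lines $\big((\ell^{-3}a_0^\dagger a_0-\rmu)\widehat W_1(0)(2\pi)^{-3}\int\widehat\chi_\Lambda(k)a_k^\dagger a_0\,dk+h.c.\big)$ and the analogous $\widehat{W\omega}_1(0)$ line, where I would need to isolate the $k=0$ contribution of $\widehat W_1(k)$ against $\widehat\chi_\Lambda(k)$ and control the remainder $\widehat W_1(k)-\widehat W_1(0)$ against $\widehat\chi_\Lambda(k)$ using \eqref{eq:I2-integral-2-new}-type bounds, which is where the error $C\rmu^2 a\ell^3(d^{2M}+Ra/\ell^2)$ partly comes from. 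For ${\mathcal Q}_2^{\rm ren}$, the lower bound \eqref{eq:Q2n0} already isolates the terms $\sum P_iQ_j w_2 P_jQ_i$ (the $a_0^\dagger a_k^\dagger a_k a_0$ "direct" interaction with weight $\widehat W_1(k)+\widehat{W_1\omega}(k)$), $\frac12\sum(P_iP_j w_1 Q_jQ_i+h.c.)$ (the pairing term with weight $\frac12\widehat W_1(k)$), plus the diagonal $\big((\rho_0-\rmu)\widehat{W_1}(0)+\rho_0\widehat{W_1\omega}(0)\big)\sum Q_i\chi_\Lambda(x_i)^2Q_i$ which second-quantizes to the $(2\pi)^{-3}\ell^{-3}\int a_k^\dagger a_k\,dk$ line; the error $-C(\rmu+\rho_0)a(R/\ell)^2 n_+$ in \eqref{eq:Q2n0} is absorbed into $C\rmu^2 a\ell^3\cdot Ra/\ell^2$ after inserting the a priori bound \eqref{eq:apriorinn+NEW} on $n_+$.

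The remaining and most delicate term is ${\mathcal Q}_3^{\rm ren}$, and I expect this to be the main obstacle. Here I would first invoke Lemma~\ref{lem:Q3-splitting1}: using \eqref{eq:ReduceQ3}, ${\mathcal Q}_3^{\rm ren}$ plus a quarter of the dropped ${\mathcal Q}_4^{\rm ren}$ plus $\frac{b}{100}(\ell^{-2}n_+ + \varepsilon_T(d\ell)^{-2}n_+^H)$ is bounded below by $\widetilde Q_3^{(1)}=\sum_{i\neq j}(P_iQ_{L,j}w_1 Q_jQ_i+h.c.)$ up to error $C\rmu^2 a\ell^3(d^{2M}+Ra/\ell^2)$. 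Then $\widetilde Q_3^{(1)}$ must be second-quantized: writing $w_1(x,y)=\chi_\Lambda(x)W_1(x-y)\chi_\Lambda(y)$, expanding $W_1(x-y)=(2\pi)^{-3}\int\widehat W_1(k)e^{ik(x-y)}\,dk$, and identifying $\ell^{-3/2}a(Q f_L(\sqrt{-\Delta})(e^{isx}\theta))\sim f_L(s)\widetilde a_s$ for the $Q_L$ factor while the other $Q$'s give $a_k$ and $\widetilde a$'s and $a_0$, one arrives at $\widetilde Q_3 = \ell^3(2\pi)^{-6}\iint_{\{k\in P_H\}} f_L(s)\widehat W_1(k)(a_0^\dagger\widetilde a_s^\dagger a_{s-k}a_k + a_k^\dagger a_{s-k}^\dagger\widetilde a_s a_0)$ as in \eqref{eq:3Qtilde}; restricting the $k$-integral to $P_H$ costs another error of the form $d^{2M}$ controlled exactly as in the proof of Lemma~\ref{lem:Q3-splitting1}. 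Throughout, the bookkeeping of commutator corrections (each replacement of $Q(e^{ikx}\chi_\Lambda)$-type vectors by $a_k$ is exact, but combining them one picks up $[a_k,a_{k'}^\dagger]$ terms from \eqref{eq:Commutator_general}) must be shown to be lower order; I would use Lemma~\ref{lem:UsingML} together with the localization \eqref{eq:LocalizedNPlus} and the a priori bounds of Theorem~\ref{thm:aprioribounds} to absorb all such contributions into $C\rmu^2 a\ell^3(d^{2M}+Ra/\ell^2)$. The hard part is precisely keeping the $3Q$/soft-pair structure intact with an error genuinely of order $\rmu^2 a\ell^3\cdot o(\sqrt{\rmu a^3})$ rather than something larger, and making sure the momentum cutoffs ($P_H$, $f_L$, the $|k|\leq 2K_H''\ell^{-1}$ truncations) are mutually compatible via the parameter relations of Assumption~\ref{assump:params}.
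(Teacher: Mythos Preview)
Your overall architecture is correct and matches the paper's proof: second-quantize the kinetic energy, apply Lemma~\ref{lm:potsplit} and Lemma~\ref{lem:QDecompSimpler}, drop (most of) ${\mathcal Q}_4^{\rm ren}$, and reduce ${\mathcal Q}_3^{\rm ren}$ via Lemma~\ref{lem:Q3-splitting1}. However, you misidentify the mechanism by which several error terms are controlled, and this matters for achieving the stated error bound.

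The key point you miss is the role of the kinetic \emph{gap}. In \eqref{eq:DefT_tilde} the term $b\ell^{-2}Q$ second-quantizes to $b\ell^{-2}n_+$; only \emph{half} of this is retained in ${\mathcal H}_\Lambda^{\rm 2nd}$, and the other half is spent to absorb error terms of the form $(\text{small})\cdot n_+$ at the operator level. Concretely: (i) the error $-C(\rmu+\rho_0)a(R/\ell)^2 n_+$ from \eqref{eq:Q2n0} is absorbed in $\frac{b}{2\ell^2}n_+$ because $\rmu a R^2\ll 1$ by \eqref{con:ER}; your proposal to instead insert the a priori bound \eqref{eq:apriorinn+NEW} yields an error $\sim \rmu^2 a\ell^3\cdot (R/\ell)^2 K_B^3 K_\ell^2(\rmu a^3)^{1/2}$, which for the admissible range $R/a\lesssim (\rmu a^3)^{\kappa-1/2}$ with small $\kappa$ can exceed both $Ra/\ell^2$ and $d^{2M}$. (ii) Similarly, any error incurred in passing from $W_1*\chi_\Lambda$ to $\widehat W_1(0)\chi_\Lambda$ in the ${\mathcal Q}_1^{\rm ren}$ term is an operator bounded by a small multiple of $n_+$ and is absorbed in the gap, not in the explicit error term.

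Your treatment of the step $\widetilde Q_3^{(1)}\to\widetilde Q_3$ is also off. This is \emph{not} ``controlled exactly as in the proof of Lemma~\ref{lem:Q3-splitting1}'' and does not produce a $d^{2M}$-type error. It is the content of Lemma~\ref{lem:Q3-splitting2}: one uses the localization \eqref{eq:LocalizedNPlus} through Lemma~\ref{lem:UsingML} to bound $\ell^3\int_{\{|k|\leq K_H''\ell^{-1}\}}a_k^\dagger a_k$ by $C\ML$, and the resulting error is $C\big(K_\ell^4(K_H'')^3\ML/(\rmu\ell^3)\big)^{1/2}\ell^{-2}\langle n_+\rangle$, which is again absorbed in the gap thanks to \eqref{cond:KH3-n}. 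This is precisely where the hypothesis \eqref{eq:LocalizedNPlus} on $\widetilde\Psi$ enters the proof; in your write-up it appears only as a vague tool for ``commutator corrections'', whereas in fact the second-quantization identities are exact and no commutator bookkeeping is needed.
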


\begin{proof}
Notice that \eqref{eq:apriori_n} and \eqref{eq:apriorinn+NEW} hold, using \eqref{eq:aprioriPsi} and Theorem~\ref{thm:aprioribounds}.
Notice also that $n_0 \leq n \leq 2 \rmu \ell^3$.

We start by explicitly calculating the second quantization of the kinetic energy. Here we drop the two positive terms in \eqref{eq:DefT_new} depending on the Neumann Laplace operator $-\Delta^{\mathcal N}$.
Also, using the definition \eqref{eq:nplusH} of $n_{+}^{H}$ we easily obtain the last two terms in the first line of \eqref{eq:HLambdaSecond}---for later use this $n_{+}^{H}$-term has been split in two by artificially inserting a factor of $n_0/(\rmu \ell^3)\leq 2$ in one of the resulting terms.
Also, part of the $n_+$-term---the `gap' in the kinetic energy---has been dropped in \eqref{eq:HLambdaSecond}: It is used to control error terms appearing in the potential terms treated below.

To calculate the potential energy, 
we apply Lemma~\ref{lm:potsplit}. 
For the operators ${\mathcal Q}_0^{\rm ren}$ and ${\mathcal Q}_1^{\rm ren}$ we use the simplifications of Lemma~\ref{lem:QDecompSimpler} before making the explicit calculation of their 2nd quantizations.
For ${\mathcal Q}_2^{\rm ren}$ we also use the simplifications of Lemma~\ref{lem:QDecompSimpler}. The error term in \eqref{eq:Q2n0} is absorbed in the gap in the kinetic energy. This uses that $R \ll (\rmu a)^{-1/2}$, which is \eqref{con:ER}, and the relation $n \approx \rmu \ell^3$ 
from \eqref{eq:apriori_n}.

Finally we consider ${\mathcal Q}_3^{\rm ren}$ and ${\mathcal Q}_4^{\rm ren}$.
By Lemma~\ref{lem:Q3-splitting1} and the positivity of $v$ we have the lower bound
\eqref{eq:ReduceQ3}. What remains of ${\mathcal Q}_4^{\rm ren}$ will be discarded for a lower bound. The application of \eqref{eq:ReduceQ3} also costs a bit of the gap in the kinetic energy.
We have left to compare $\widetilde{Q}^{(1)}_3$ with $\widetilde{Q}_3$.
But that is the content of Lemma~\ref{lem:Q3-splitting2} below.
Notice that using \eqref{cond:KH3-n} the error term from \eqref{eq:Q3-1-reduce} can also be absorbed in the gap in the kinetic energy.
This finishes the proof of Proposition~\ref{prop:Hamilton2ndQuant}.
\end{proof}

In the above proof we used the following localization of the $3Q$-term.

\begin{lemma}\label{lem:Q3-splitting2}
Assume that $\widetilde{\Psi}$ is an $n$-particle wavefunction which satisfies \eqref{eq:aprioriPsi} and \eqref{eq:LocalizedNPlus},
and that the parameters satisfy 
Assumption~\ref{assump:params} and that $\rmu a^3$ is small enough.
Let $\widetilde{Q}^{(1)}_3$ be as defined in Lemma~\ref{lem:Q3-splitting1} and
$\widetilde{Q}_3$ from \eqref{eq:3Qtilde}.

Then, 
\begin{equation}\label{eq:Q3-1-reduce}
\langle \widetilde{\Psi}, \widetilde{Q}^{(1)}_3 \widetilde{\Psi} \rangle
\geq \langle \widetilde{\Psi}, \widetilde{Q}_3 \widetilde{\Psi} \rangle - C \left( \frac{K_{\ell}^4 (K_H'')^3 \ML}{\rmu \ell^3} \right)^{1/2}  \ell^{-2} \langle \widetilde{\Psi}, n_{+} \widetilde{\Psi} \rangle.
\end{equation}
\end{lemma}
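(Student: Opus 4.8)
The plan is to start from the position-space operator $\widetilde{Q}^{(1)}_3 = \sum_{i\neq j}(P_i Q_{L,j} w_1(x_i,x_j) Q_j Q_i + h.c.)$ and rewrite it in second quantization, carefully splitting the third factor $Q_j$ (the one acted on by $Q_{L,j}$ on the left) into its low-momentum and complementary parts. First I would convert everything to creation/annihilation operators: $P_i$ produces an $a_0$, the full $Q_j$ on the right of $w_1$ becomes a $\widetilde{a}_k$ (no $\chi_\Lambda$), the $Q_i$ becomes $a_k^\dagger$-type factor, and the $Q_{L,j}$ on the left acting on the $j$-variable produces the localization $f_L(s)$ as in \eqref{eq:Ql-bar}; the momentum on that leg is restricted to $P_H$ because $w_1$ carries the high momentum $k$ and the leftmost $P_iQ_{L,j}$ structure forces $s\in P_H$ up to the error we are about to estimate. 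Tracking the Fourier transform of $w_1$ (recall $W_1(x)=g(x)/\chi*\chi(x/\ell)$) and using translation invariance, the main term reproduces exactly $\widetilde{Q}_3$ from \eqref{eq:3Qtilde}. The discrepancy between the ``honest'' position-space expression and $\widetilde{Q}_3$ consists of: (a) the difference between using $Q$ versus $Q_L'$ (or $f_L$) on one leg, i.e. the high-momentum tail $\overline{Q}_L$ contribution, and (b) terms where $\chi_\Lambda$-dressed operators $a_k$ must be traded for bare $\widetilde{a}_k$ (or vice versa), controlled by the commutator estimates \eqref{eq:Commutator_general}.

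The core estimate is then a Cauchy--Schwarz bound on these error terms. For a typical error term of the form $\sum_{i\neq j} a_0^\dagger (\overline{Q}_L)_j w_1 \,(\text{stuff})_j \,(\text{stuff})_i + h.c.$, I would split by Cauchy--Schwarz with a parameter $\varepsilon$, putting the $n_+$-type operators on one side and the high-momentum count $n_+^H$ (dominated via \eqref{eq:QLs_giveNplusH}, $\sum \overline{Q}_L(\overline{Q}_L)^*\le n_+^H$) on the other. Using $\|W_1\|_{L^1}\le Ca$, the bound $n_0\le n\le 2\rmu\ell^3$, and the operator-norm estimate controlling how many momenta in $P_H$ can be reached (a factor $(K_H'')^3\ell^{-3}$ from the volume of the relevant momentum shell times $f_L$-localization and the $\|\widehat{W}_1\|_\infty\le Ca$ bound), the typical term is bounded by something like $C\ell^{-2}(\rmu a^3)^{?}$ times $\langle n_+\rangle$ with the remaining powers collected into $K_{\ell}^4 (K_H'')^3 \ML/(\rmu\ell^3)$ under a square root. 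The appearance of $\ML$ is crucial and is exactly where \eqref{eq:LocalizedNPlus} is used: on the state $\widetilde{\Psi}$ we have $n_+^L\le\ML$, so the low-momentum occupation in $\widetilde{Q}_3$-type quantities is a priori bounded by $\ML$ via Lemma~\ref{lem:UsingML} (in particular \eqref{eq:DomByML}), and this is what lets us close the estimate with the stated factor rather than a trivial power of $n$.

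The main obstacle I anticipate is bookkeeping the momentum restrictions precisely: ensuring that after the $P+Q=\one_\Lambda$ expansions and the passage from $Q$ to $f_L(\sqrt{-\Delta})Q$ on the correct leg, every leftover term is either (i) literally part of $\widetilde{Q}_3$, or (ii) controllable by $n_+^H$ and $\ML$ with the right power counting — and in particular that the ``cross'' terms mixing a low-momentum $Q_L$ leg with a high-momentum $\overline{Q}_L$ leg don't generate a worse power. The condition \eqref{cond:KH3-n} on $\ML$ (referenced in the proof of Proposition~\ref{prop:Hamilton2ndQuant}) is precisely what guarantees the resulting error is absorbable into the kinetic-energy gap, so the target bound \eqref{eq:Q3-1-reduce} has the form stated. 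A secondary technical point is that $Q_L$ is not self-adjoint, so one must be careful to keep $\overline{Q}_L$ and $(\overline{Q}_L)^*$ distinct and use \eqref{eq:QLs_giveNplusH} in the correct order; this is routine but easy to get wrong.
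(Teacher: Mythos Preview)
Your proposal misidentifies what the error term actually is. When you write $\widetilde{Q}^{(1)}_3$ in second quantization, you obtain
\[
\widetilde{Q}^{(1)}_3 = \ell^3(2\pi)^{-6}\iint f_L(s)\,\widehat{W}_1(k)\bigl(a_0^{\dagger}\widetilde{a}_s^{\dagger}a_{s-k}a_k + h.c.\bigr)\,dk\,ds,
\]
which is \emph{exactly} the operator $\widetilde{Q}_3$ from \eqref{eq:3Qtilde} except that the $k$-integral runs over all of $\mathbb{R}^3$ rather than over $\{k\in P_H\}$. There is no discrepancy of type (a) or (b) as you describe: the $f_L(s)$ factor on the $s$-leg comes out directly from $Q_{L,j}$ (and $s$ lives in ${\mathcal P}_{\rm low}$, not $P_H$), and the $\chi_\Lambda$-factors in $w_1$ attach to the right-hand $Q_j,Q_i$ to give $a_{s-k},a_k$ (with $\chi_\Lambda$), so no trading of $a_k$ for $\widetilde{a}_k$ is needed. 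In particular, neither $\overline{Q}_L$ nor $n_+^H$ enters this lemma at all; that machinery belongs to Lemma~\ref{lem:Q3-splitting1}, which has already been applied.

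The object you actually have to bound is therefore the single integral over $\{|k|\le K_H''\ell^{-1}\}$. The paper's argument is a straight Cauchy--Schwarz in $k,s$: on one side one gets $\varepsilon\,\widetilde{a}_s^{\dagger}a_0^{\dagger}a_0\widetilde{a}_s$, which after the $s$-integral yields $C a\, n\,\ell^{-3}\langle n_+\rangle\,\varepsilon(K_H'')^3$ (the $(K_H'')^3$ is the volume of the $k$-ball); on the other side one gets $\varepsilon^{-1}a_k^{\dagger}a_{s-k}^{\dagger}a_{s-k}a_k$, and here Lemma~\ref{lem:UsingML} applied to the $k$-integral over $\{|k|\le K_H''\ell^{-1}\}$ produces $C\ML\langle n_+\rangle\ell^{-3}$. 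Optimizing $\varepsilon=(\ML/(n(K_H'')^3))^{1/2}$ and using $n\approx\rmu\ell^3$ from \eqref{eq:apriori_n} gives \eqref{eq:Q3-1-reduce}. So the ingredients you list (Cauchy--Schwarz, $\ML$ via Lemma~\ref{lem:UsingML}, the $(K_H'')^3$ volume factor) are right, but your organization of them around $\overline{Q}_L$ and $n_+^H$ is off-target and would not produce the estimate.
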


\begin{proof}
Notice that \eqref{eq:apriori_n} holds, using  \eqref{eq:aprioriPsi}
and Theorem~\ref{thm:aprioribounds}.

In second quantization we have
\begin{align}
\widetilde{Q}^{(1)}_3
= 
\ell^3 (2\pi)^{-6} \iint
f_L(s)
\widehat{W}_1(k) (a_0^{\dagger} \widetilde{a}_s^{\dagger}  a_{s-k} a_k
+ a_k^{\dagger}  a^{\dagger}_{s-k} \widetilde{a}_s a_0) \,dk\,ds,
\end{align}
so we have to estimate the part of the integral where $k \notin P_H$.
Notice that
\begin{align}
\ell^6 \langle  \widetilde{\Psi}, \iint_{\{ |k| \leq K_H'' \ell^{-1} \}}a_k^{\dagger} a_{s-k}^{\dagger} a_{s-k} a_k \,dsdk  \,\widetilde{\Psi} \rangle
\leq C \ML  \langle  \widetilde{\Psi}, n_{+} \widetilde{\Psi} \rangle,
\end{align}
using Lemma~\ref{lem:UsingML}.
Therefore, we get for $\varepsilon >0$, \begin{align}\label{eq:EstimQ3-reduce}
&\langle  \widetilde{\Psi}, \ell^3 (2\pi)^{-6} \iint_{\{ |k| \leq K_H'' \ell^{-1} \}}
f_L(s)
\widehat{W}_1(k) (a_0^{\dagger} \widetilde{a}_s^{\dagger}  a_{s-k} a_k
+ a_k^{\dagger}  a^{\dagger}_{s-k} \widetilde{a}_s a_0) \widetilde{\Psi} \rangle dk ds \nonumber \\
&\geq
- C \| \widehat{W}_1(k) \|_{\infty}  \ell^3 (2\pi)^{-6} \iint_{\{ |k| \leq K_H'' \ell^{-1} \}}
f_L(s)  \langle  \widetilde{\Psi},  \left(
\varepsilon  \widetilde{a}_s^{\dagger} a_0^{\dagger} a_0 \widetilde{a}_s  +
\varepsilon^{-1} a_k^{\dagger} a_{s-k}^{\dagger} a_{s-k} a_k
\right) \widetilde{\Psi} \rangle dk ds \nonumber \\
&\geq
- C a n \frac{\langle  \widetilde{\Psi}, n_{+} \widetilde{\Psi} \rangle }{\ell^3} \left( \varepsilon  (K_H'')^{3} + \varepsilon^{-1} \frac{\ML}{n}\right).
\end{align}
From here  \eqref{eq:Q3-1-reduce} follows upon choosing $\varepsilon = \left( \frac{\ML}{n (K_H'')^3} \right)^{1/2}$ and 
the relation $n \approx \rmu \ell^3$ 
from \eqref{eq:apriori_n}.
\end{proof}

It will also be useful to notice the following representation in terms of the operators $\widetilde{a}_k$.
\begin{lemma}
We have the identities
\begin{align}
\left((2\pi)^{-6} \ell^6 \iint \widetilde{a}_k^{\dagger} \widehat{\chi^2}\left((k-k')\ell\right) \widetilde{a}_{k'}
\right)_{N} = \sum_{j=1}^N Q_j \chi_{\Lambda}^2(x_j) Q_j,
\end{align}
and
\begin{align}\label{eq:RepInTermsOfatilde}
\left((2\pi)^{-6} \ell^6 \iint  f_L(k)f_L(k')\widetilde{a}_k^{\dagger} \widehat{\chi^2}\left((k-k')\ell\right) \widetilde{a}_{k'}
\right)_{N} = \sum_{j=1}^N Q_{L,j}^{*} \chi_{\Lambda}^2(x_j) Q_{L,j}.
\end{align}

\end{lemma}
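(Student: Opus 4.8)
The plan is to recognize both sides of each identity as the second quantization $d\Gamma(\cdot)$, restricted to the $N$-particle sector, of an explicit one-particle operator, and then to match those one-particle operators. The two tools are the elementary identity $\sum_{j=1}^{N}\big(|u\rangle\langle v|\big)^{(j)} = a^{\dagger}(u)\,a(v)$, valid for any one-particle vectors $u,v$ (with the convention used here that $a(\cdot)$ is antilinear), and its integrated version: if $A = (2\pi)^{-6}\iint \mathcal{A}(k,k')\,|g_k\rangle\langle g_{k'}|\,dk\,dk'$ then $d\Gamma(A) = (2\pi)^{-6}\iint \mathcal{A}(k,k')\,a^{\dagger}(g_k)\,a(g_{k'})\,dk\,dk'$. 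All integral expressions below are read as quadratic forms on the $N$-particle sector, where convergence is automatic since $Q\chi_{\Lambda}^{2}Q$ is bounded and $\widehat{\chi^{2}}$ decays by the regularity of $\chi$. The only genuine ingredient is the plane-wave resolution of the identity on $L^{2}(\Lambda)$,
\begin{equation}\label{eq:plan-pwres}
(2\pi)^{-3}\int_{{\mathbb R}^{3}}|\theta e^{ik\cdot}\rangle\langle\theta e^{ik\cdot}|\,dk = \one_{\Lambda},
\end{equation}
whose integral kernel is $\theta(x)\theta(y)(2\pi)^{-3}\int e^{ik(x-y)}\,dk = \theta(x)\delta(x-y)$, together with the trivial facts $Q\one_{\Lambda}=\one_{\Lambda}Q=Q$ and $\one_{\Lambda}\chi_{\Lambda}^{2}\one_{\Lambda}=\chi_{\Lambda}^{2}$ (the latter because $\chi_{\Lambda}$ is supported in $\Lambda$).

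For the first identity, write $Q\chi_{\Lambda}^{2}Q = Q\one_{\Lambda}\chi_{\Lambda}^{2}\one_{\Lambda}Q$ and insert \eqref{eq:plan-pwres} on each side of $\chi_{\Lambda}^{2}$; since $Q|\theta e^{ik\cdot}\rangle=|Q\theta e^{ik\cdot}\rangle$ and $\langle\theta e^{ik'\cdot}|Q=\langle Q\theta e^{ik'\cdot}|$, this gives
\begin{equation}\label{eq:plan-Qchi2Q}
Q\chi_{\Lambda}^{2}Q = (2\pi)^{-6}\iint\langle\theta e^{ik\cdot},\chi_{\Lambda}^{2}\theta e^{ik'\cdot}\rangle\,|Q\theta e^{ik\cdot}\rangle\langle Q\theta e^{ik'\cdot}|\,dk\,dk'.
\end{equation}
The scalar kernel is evaluated by the substitution $x=\ell y$ using $\theta\chi_{\Lambda}^{2}=\chi_{\Lambda}^{2}$ and the evenness of $\chi$: $\langle\theta e^{ik\cdot},\chi_{\Lambda}^{2}\theta e^{ik'\cdot}\rangle = \int\chi(x/\ell)^{2}e^{i(k-k')x}\,dx = \ell^{3}\,\widehat{\chi^{2}}\big((k-k')\ell\big)$. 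Taking $d\Gamma$ of \eqref{eq:plan-Qchi2Q}, using $\widetilde{a}_{k}=\ell^{-3/2}a(Q\theta e^{ik\cdot})$ and $\widetilde{a}_{k}^{\dagger}=\ell^{-3/2}a^{\dagger}(Q\theta e^{ik\cdot})$ (so that $a^{\dagger}(Q\theta e^{ik\cdot})\,a(Q\theta e^{ik'\cdot})=\ell^{3}\,\widetilde{a}_{k}^{\dagger}\widetilde{a}_{k'}$), and restricting to the $N$-particle sector, produces exactly the first claimed identity.

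For the second identity one runs the same computation with the low-momentum cutoffs in place. Since $a^{\dagger}$ is linear and $a$ is antilinear with $f_{L}$ real, $f_{L}(|k|)\widetilde{a}_{k}^{\dagger}=\ell^{-3/2}a^{\dagger}(f_{L}(|k|)Q\theta e^{ik\cdot})$ and $f_{L}(|k'|)\widetilde{a}_{k'}=\ell^{-3/2}a(f_{L}(|k'|)Q\theta e^{ik'\cdot})$, so the left-hand side of \eqref{eq:RepInTermsOfatilde} is the $N$-particle restriction of $d\Gamma$ applied to
\[
(2\pi)^{-6}\iint\langle\theta e^{ik\cdot},\chi_{\Lambda}^{2}\theta e^{ik'\cdot}\rangle\,|f_{L}(|k|)Q\theta e^{ik\cdot}\rangle\langle f_{L}(|k'|)Q\theta e^{ik'\cdot}|\,dk\,dk'.
\]
Here the cutoffs are absorbed through the identity
\begin{equation}\label{eq:plan-cutres}
(2\pi)^{-3}\int f_{L}(|k|)\,|Q\theta e^{ik\cdot}\rangle\langle\theta e^{ik\cdot}|\,dk = Q\,f_{L}(\sqrt{-\Delta})\,\one_{\Lambda},
\end{equation}
which one checks by applying both sides to a test function $g$ and using $\langle\theta e^{ik\cdot},g\rangle=\widehat{\theta g}(k)$ followed by Fourier inversion; its adjoint is $(2\pi)^{-3}\int f_{L}(|k|)\,|\theta e^{ik\cdot}\rangle\langle Q\theta e^{ik\cdot}|\,dk = \one_{\Lambda}\,f_{L}(\sqrt{-\Delta})\,Q$. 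Inserting \eqref{eq:plan-cutres} and its adjoint around $\chi_{\Lambda}^{2}$ (in place of the two copies of \eqref{eq:plan-pwres}) and using $\one_{\Lambda}\chi_{\Lambda}^{2}\one_{\Lambda}=\chi_{\Lambda}^{2}$ identifies the displayed one-particle operator with $Q\,f_{L}(\sqrt{-\Delta})\,\chi_{\Lambda}^{2}\,f_{L}(\sqrt{-\Delta})\,Q$, which in the notation of \eqref{eq:Ql-bar} is the one-particle operator whose second quantization on the $N$-particle sector gives the right-hand side of \eqref{eq:RepInTermsOfatilde}.

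I do not expect any real obstacle: the argument is mechanical once \eqref{eq:plan-pwres} is in hand. The only step that needs care — hence the main (minor) difficulty — is the bookkeeping: handling the antilinearity of $a(\cdot)$, tracking the position of the projection $Q$ and of the Fourier multiplier $f_{L}(\sqrt{-\Delta})$ relative to the multiplication operator $\chi_{\Lambda}^{2}$, and matching the normalization constants (the $\ell^{6}$, the $(2\pi)^{-6}$, and the $\ell^{-3/2}$ hidden in $\widetilde{a}_{k}$).
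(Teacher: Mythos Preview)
The paper states this lemma without proof, so there is no argument to compare against; your approach via the plane-wave resolution of the identity is the natural one, and for the first identity it is correct as written.

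For the second identity there is a genuine slip in the final step. Your computation correctly yields the one-particle operator
\[
Q\,f_L(\sqrt{-\Delta})\,\chi_\Lambda^2\,f_L(\sqrt{-\Delta})\,Q,
\]
but with $Q_L := Q f_L(\sqrt{-\Delta})$ this equals $Q_L\,\chi_\Lambda^2\,Q_L^{*}$, \emph{not} $Q_L^{*}\,\chi_\Lambda^2\,Q_L = f_L(\sqrt{-\Delta})\,Q\,\chi_\Lambda^2\,Q\,f_L(\sqrt{-\Delta})$. Since $Q$ (which contains multiplication by $\theta$) does not commute with the Fourier multiplier $f_L(\sqrt{-\Delta})$, these two operators differ. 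What you have actually proved is
\[
\Bigl((2\pi)^{-6}\ell^6\iint f_L(k)f_L(k')\,\widetilde a_k^{\dagger}\,\widehat{\chi^2}\bigl((k-k')\ell\bigr)\,\widetilde a_{k'}\Bigr)_N \;=\; \sum_{j=1}^N Q_{L,j}\,\chi_\Lambda^2(x_j)\,Q_{L,j}^{*},
\]
with the adjoint on the opposite side from the stated \eqref{eq:RepInTermsOfatilde}. This is almost certainly a typo in the paper (the subsequent estimate in the proof of \eqref{eq:EstimateQ3Tilde1}, namely $\sum_j Q_{L,j}^{*}\chi_\Lambda^2 Q_{L,j}\le (1+\varepsilon)\sum_j Q_j\chi_\Lambda^2 Q_j + C\varepsilon^{-1}n_+^H$, holds with the stars swapped via $Q_L = Q-\overline{Q_L}$ and \eqref{eq:QLs_giveNplusH}), but your last sentence asserts an equality that is false as written. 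You should either flag the discrepancy or, if you intend to prove the identity exactly as stated, explain why the two orderings coincide --- which they do not.
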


\subsection{$c$-number substitution}
It is convenient to apply the technique of $c$-number substitution as described in \cite{LSYc}.

Let $\Psi \in {\mathcal F}_{\rm s}(L^2(\Lambda))$.
We can think of $L^2(\Lambda) = \Ran(P) \oplus \Ran(Q)$, with $\Ran(P)$ being, of course, spanned by the constant vector $\theta$ (defined in \eqref{eq:Theta}).
This leads to the splitting ${\mathcal F}_{\rm s}(L^2(\Lambda)) = {\mathcal F}_{\rm s}(\Ran(P)) \otimes {\mathcal F}_{\rm s}(\Ran(Q))$. We let $\Omega$ denote the vacuum vector in ${\mathcal F}_{\rm s}(L^2(\Lambda))$.

For $z \in {\mathbb C}$ we define 
\begin{align}
|z \rangle:= \exp\left(-\frac{|z|^2}{2} - z a_0^{\dagger}\right) \Omega.
\end{align}
Given $z$ and $\Psi$ we can define 
\begin{align}
\Phi(z) := \langle z | \Psi \rangle \in {\mathcal F}_{\rm s}(\Ran(Q)),
\end{align}
where the inner product is 
considered as a partial inner product induced by the representation ${\mathcal F}_{\rm s}(L^2(\Lambda)) = {\mathcal F}_{\rm s}(\Ran(P)) \otimes {\mathcal F}_{\rm s}(\Ran(Q))$.

It is a simple calculation that
\begin{align}\label{eq:PartUnity}
1 = \int_{{\mathbb C}} |z \rangle \langle z | \,d^2z, \qquad 
\text{ and } \qquad 
a_0  |z \rangle = z  |z \rangle.
\end{align}

\begin{theorem}\label{thm:Kafz}
Define
\begin{align}
\rho_z := |z|^2 \ell^{-3},
\end{align}
and
\begin{align}\label{eq:Kz}
{\mathcal K}(z) 
&=
 \frac{b}{2 \ell^2} n_{+} + \varepsilon_T\frac{b}{8 d^2 \ell^2} n_{+}^{H} +  \varepsilon_T\frac{b |z|^2}{16 d^2 \rmu \ell^5} n_{+}^{H}
 +\ER (\rmu - \rho_z)^2 a \ell^3 \nonumber \\
&\quad +  
 \frac{1}{2} \rho_z^2 \ell^{3} 
\Big(\widehat{g}(0) + \widehat{g\omega}(0)\Big)
  - \rmu   \widehat{g}(0) \rho_z \ell^3 
  \nonumber \\
  &\quad +
  (2\pi)^{-3} \ell^3 \int \left( (1-\varepsilon_N)\tau(k) + \rho_z \widehat{W}_1(k) \right) a_k^{\dagger} a_k
  + \frac{1}{2} \rho_z \widehat{W}_1(k) \left(  a_k a_{-k} + a_k^{\dagger} a_{-k}^{\dagger}  \right) \,dk \nonumber \\
  & \quad + (\rho_z -\rmu) \widehat{W_1}(0)  (2\pi)^{-3} \ell^{3} \int a_k^{\dagger} a_k\,dk  \nonumber\\
  &\quad + {\mathcal Q}_1(z)+  {\mathcal Q}_1^{\rm ex}(z)+
  {\mathcal Q}_2^{\rm ex}(z) + {\mathcal Q}_3(z),
\end{align}
with
\begin{align}\label{eq:DefQ1z}
 {\mathcal Q}_1(z)&:= 
  \Big( (\rho_z - \rmu) \widehat{W}_1(0) (2\pi)^{-3} \int \widehat{\chi}_{\Lambda}(k) a_k^{\dagger} z \,dk + h.c. \Big),  \\
  \label{eq:DefQ1zex}
 {\mathcal Q}_1^{\rm ex}(z) &:= \Big( \rho_z \widehat{W_1 \omega}(0)(2\pi)^{-3} \int \widehat{\chi}_{\Lambda}(k) a_k^{\dagger} z \,dk + h.c. \Big),\\
  \label{eq:kroelleQ3z}
 {\mathcal Q}_3(z)&:= \ell^3 (2\pi)^{-6} z \iint_{\{ k \in P_H\}}
f_L(s)
\widehat{W}_1(k) ( \widetilde{a}_s^{\dagger}  a_{s-k} a_k
+ a_k^{\dagger}  a^{\dagger}_{s-k} \widetilde{a}_s),
\end{align}
and
\begin{align}\label{eq:DefQ2ex}
{\mathcal Q}_2^{\rm ex} = {\mathcal Q}_2^{\rm ex}(z) :=
(2\pi)^{-3} \rho_z \ell^3  \int \big(\widehat{W_1 \omega}(k) + \widehat{W_1 \omega}(0)\big)  a_k^{\dagger} a_k .
\end{align}

Assume that $\widetilde{\Psi}$ is an $n$-particle wavefunction which satisfies \eqref{eq:aprioriPsi} and \eqref{eq:LocalizedNPlus},
and that the parameters satisfy 
Assumption~\ref{assump:params} and that $\rmu a^3$ is small enough.

Then,
\begin{align}\label{eq:IntroK(z)}
\langle \widetilde{\Psi}, {\mathcal H}_{\Lambda}^{\rm 2nd}(\rmu) \widetilde{\Psi} \rangle
\geq
\inf_{z \in {\mathbb R}_{+} } \inf_{\Phi} \langle \Phi, {\mathcal K(z)} \Phi \rangle
- C \left( K_{\ell}^{-3} + \ER K_{\ell}^2 K_B^3\right) K_{\ell}^3 \rmu a,
\end{align}
where the second infimum is over all normalized $\Phi  \in {\mathcal F}_{\rm s}(\Ran(Q))$ with 
\begin{align}\label{eq:LocalizedAftercNumber}
\Phi =  1_{[0,\ML]}(n_{+}^L) \Phi,
\end{align}
and
\begin{equation}\label{eq:Localized_n_AftercNumber}
\Phi =  1_{[0,n]}(n_{+}) \Phi.
\end{equation}
\end{theorem}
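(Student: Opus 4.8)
The plan is to carry out the $c$-number (coherent state) substitution in the condensate mode $a_0$, as in \cite{LSYc}, and then compare the resulting one-parameter family of operators on $\mathcal F_{\rm s}(\Ran(Q))$ with $\mathcal K(z)$. First I would use the partition of unity \eqref{eq:PartUnity} together with the Glauber--Sudarshan ($P$-)representation of each monomial $a_0^{\dagger p}a_0^q$ occurring in \eqref{eq:HLambdaSecond} — whose $P$-symbol equals $\bar z^{\,p}z^{\,q}$ plus a polynomial supported on the strictly lower-degree monomials $\bar z^{\,p-k}z^{\,q-k}$, $k\ge 1$ — to obtain the exact identity
\begin{align*}
\langle\widetilde\Psi,\mathcal H_\Lambda^{\rm 2nd}(\rmu)\widetilde\Psi\rangle=\int_{\mathbb C}\langle\Phi(z),\mathcal H^{\rm sub}(z)\Phi(z)\rangle\,d^2z,\qquad \Phi(z):=\langle z|\widetilde\Psi\rangle,\quad \int_{\mathbb C}\|\Phi(z)\|^2\,d^2z=1,
\end{align*}
where $\mathcal H^{\rm sub}(z)$ acts on $\mathcal F_{\rm s}(\Ran(Q))$. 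Since $n_+,n_+^L$ act only on $\mathcal F_{\rm s}(\Ran(Q))$ they commute with $|z\rangle\langle z|$, so \eqref{eq:LocalizedNPlus} gives $\one_{[0,\ML]}(n_+^L)\Phi(z)=\Phi(z)$, and because $\widetilde\Psi$ is an $n$-particle state with $n_0\le n$ one also gets $\one_{[0,n]}(n_+)\Phi(z)=\Phi(z)$; thus $\Phi(z)/\|\Phi(z)\|$ is admissible in the inner infimum of \eqref{eq:IntroK(z)} for a.e.\ $z$.

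\emph{Reduction to $z\in\mathbb R_+$.} The gauge transformation $\Gamma$ on $\mathcal F_{\rm s}(\Ran(Q))$, the second quantization of $e^{i\vartheta}\mathbf 1$ on $\Ran(Q)$, multiplies all creation operators among the $a_k^\dagger,\widetilde a_k^\dagger$ by a common phase and commutes with $\mathcal T$, hence with $n_+,n_+^L,n_+^H,Q_L',Q_H'$. As $\mathcal H_\Lambda^{\rm 2nd}(\rmu)$ conserves particle number, in every monomial $\bar z^{\,p'}z^{\,q'}B$ of $\mathcal H^{\rm sub}(z)$ the difference $p'-q'$ is the negative of the particle-number change effected by the $Q$-operator $B$, so choosing $\vartheta=\arg z$ makes the phase of $\bar z^{\,p'}z^{\,q'}$ cancel the phase acquired by $B$, i.e.\ $\Gamma\,\mathcal H^{\rm sub}(z)\,\Gamma^*=\mathcal H^{\rm sub}(|z|)$, while $\Gamma$ preserves admissibility. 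It therefore suffices to compare $\mathcal H^{\rm sub}(z')$ with $\mathcal K(z')$ for $z'=|z|\in\mathbb R_+$.

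\emph{Operator comparison.} Going through \eqref{eq:HLambdaSecond} line by line I would establish that on the admissible subspace
\begin{align*}
\mathcal H^{\rm sub}(z')\ \ge\ \mathcal K(z')-\ER\,a\ell^3(\rmu-\rho_{z'})^2-\mathcal E(z'),\qquad z'\in\mathbb R_+,
\end{align*}
with $\sup_{z'}\|\mathcal E(z')\one_{\rm adm}\|$ of the order of the right-hand side of \eqref{eq:IntroK(z)}. Terms with no $a_0$ are untouched. For $\tfrac12\ell^{-3}a_0^{\dagger2}a_0^2(\widehat g(0)+\widehat{g\omega}(0))-\rmu\widehat g(0)a_0^\dagger a_0$ the $P$-substitution reproduces $\tfrac12\rho_{z'}^2\ell^3(\widehat g(0)+\widehat{g\omega}(0))-\rmu\widehat g(0)\rho_{z'}\ell^3$ plus a polynomial of degree $\le 2$ in $z'$; completing the square exhibits the positive $4\pi a\ell^3(\rmu-\rho_{z'})^2$, a tiny fraction of which absorbs the linear-in-$\rho_{z'}$ corrections by Cauchy--Schwarz up to an $O(a\ell^{-3})$ constant, the remaining $c$-number corrections being $O(a\rmu+a\ell^{-3})$. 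The $a_k^\dagger a_0$-, $a_k^\dagger a_{-k}^\dagger$-, $a_k^\dagger a_k$- and $\widetilde Q_3$-type terms produce at leading order precisely the quadratic block and $\mathcal Q_1(z'),\mathcal Q_1^{\rm ex}(z'),\mathcal Q_2^{\rm ex}(z'),\mathcal Q_3(z')$ of \eqref{eq:Kz}, while the $P$-corrections either replace one factor $z'\sim(\rmu\ell^3)^{1/2}$ by $1$ or insert $-n_+$ or $-n_+^H$; all of these are controlled by Cauchy--Schwarz against the gaps $\tfrac b{2\ell^2}n_+$ and $\varepsilon_T\tfrac b{8d^2\ell^2}n_+^H$ retained in \eqref{eq:HLambdaSecond} and the a priori bound \eqref{eq:apriorinn+NEW}.

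\emph{Averaging.} Integrating the comparison against $\|\Phi(z)\|^2$ and using $\int\|\Phi(z)\|^2\,d^2z=1$ gives $\langle\widetilde\Psi,\mathcal H_\Lambda^{\rm 2nd}(\rmu)\widetilde\Psi\rangle\ge\inf_{z'\in\mathbb R_+}\inf_\Phi\langle\Phi,\mathcal K(z')\Phi\rangle-\ER a\ell^3\int_{\mathbb C}(\rmu-\rho_z)^2\|\Phi(z)\|^2\,d^2z-(\text{small})$. The decisive point is that the discrepancy $\ER a\ell^3(\rmu-\rho_z)^2$ is not pointwise small, only small on average: by the $P$-representation $\int_{\mathbb C}(\rmu-\rho_z)^2\|\Phi(z)\|^2\,d^2z=\ell^{-6}\langle\widetilde\Psi,(a_0^\dagger a_0-\rmu\ell^3)^2\widetilde\Psi\rangle+O(\rmu\ell^{-3})$, and writing $a_0^\dagger a_0=n-n_+$ (valid since $\widetilde\Psi$ is $n$-particle) and invoking \eqref{eq:apriori_n}, \eqref{eq:apriorinn+NEW} from Theorem~\ref{thm:aprioribounds} (applicable because $\widetilde\Psi$ satisfies \eqref{eq:aprioriPsi}) together with $n_+^2\le n\,n_+$, this is $\le C\rmu^2K_B^3K_\ell^2(\rmu a^3)^{1/2}+C\rmu\ell^{-3}$; hence the corresponding error is $\le C\ER K_\ell^2K_B^3\,K_\ell^3\rmu a$, and with the $O(a\rmu)=O(K_\ell^{-3}K_\ell^3\rmu a)$ contributions from $\mathcal E(z')$ this matches exactly the error in \eqref{eq:IntroK(z)}, using the relations of Assumption~\ref{assump:params}. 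The hardest part will be the uniform line-by-line control of the $c$-number corrections in the third step, together with this averaging observation — which is precisely where the a priori concentration of the condensate occupation $a_0^\dagger a_0$ near $\rmu\ell^3$ is indispensable.
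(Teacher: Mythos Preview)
Your approach is essentially the same as the paper's: coherent-state substitution in the condensate mode via anti-Wick ordering, verification that the constraints on $n_+^L$ and $n_+$ pass to $\Phi(z)$, gauge rotation to reduce to $z\in\mathbb R_+$, and the key observation that the discrepancy $\ER\,a\ell^3(\rmu-\rho_z)^2$ between $\mathcal K(z)$ and the substituted Hamiltonian is small \emph{on average} thanks to the a priori bounds \eqref{eq:apriori_n}--\eqref{eq:apriorinn+NEW}. The paper merely organizes this differently, adding and subtracting $\ER(\rmu-n_0\ell^{-3})^2a\ell^3$ \emph{before} the $c$-number substitution rather than after.

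One step does not close as written. You absorb the $O(a\rho_{z'})$ correction from $a_0^{\dagger2}a_0^2\mapsto|z|^4-4|z|^2+2$ pointwise, via Cauchy--Schwarz against ``a tiny fraction'' $\epsilon$ of the square $4\pi a\ell^3(\rmu-\rho_{z'})^2$ contained in line~2 of $\mathcal K(z')$. But that fraction is then \emph{missing} from $\mathcal K(z')$, so your comparison is really $\mathcal H^{\rm sub}(z')\ge\mathcal K(z')-(\ER+4\pi\epsilon)\,a\ell^3(\rmu-\rho_{z'})^2-\mathcal E(z')$; after averaging, the extra $\epsilon$ piece contributes $\sim\epsilon\,K_\ell^5K_B^3\rmu a$, and since your stated Cauchy--Schwarz remainder $O(a\ell^{-3})$ forces $\epsilon\sim1$, this exceeds the LHY order because $K_\ell^2K_B^3\gg1$. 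The clean fix---and what the paper does in \eqref{eq:ErrorInZ4}---is not to absorb this correction pointwise at all: simply carry the term $-Ca\rho_z$ through the averaging, where $\int\rho_z\|\Phi(z)\|^2\,d^2z=\ell^{-3}\langle\widetilde\Psi,(n_0+1)\widetilde\Psi\rangle\le C\rmu$ yields the $O(a\rmu)$ contribution directly, without borrowing from the square.
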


\begin{remark}
The last two terms in the first line of \eqref{eq:Kz} are new compared to the corresponding definition in \cite{FS}.
The first of these terms is needed in Section~\ref{sec:rough} as part of the proof that the ground state energy is too large unless $|z|^2 \approx \rmu \ell^3$.
This modification is related to the fact that in this paper, since we consider singular potentials, we can only control the number of excited particles with low momenta through the localization of large matrices. Therefore, an extra effort is needed to control the number of excited particles with high momenta.
The second term is needed because we in this paper improve the dependence on the support of the potential. Therefore, we have to include this extra term to control a negative term in \eqref{eq:BogHamDiag} in Theorem~\ref{thm:BogHamDiag} that has a bad dependence on the radius $R$ of the potential.
\end{remark}

\begin{proof}
Notice that \eqref{eq:apriori_n} and \eqref{eq:apriorinn+NEW} hold for $\widetilde{\Psi}$, using \eqref{eq:aprioriPsi} and Theorem~\ref{thm:aprioribounds}.

We start by adding and subtracting the term $\ER (\rmu - \frac{n_0}{\ell^3})^2 a \ell^3$ in ${\mathcal H}_{\Lambda}^{\rm 2nd}$ and estimating the negative term. We get by \eqref{eq:apriori_n} and \eqref{eq:apriorinn+NEW}, upon using the simple estimate $n_{+}^2 \leq n n_{+}$,
\begin{align}
- \langle \widetilde{\Psi}, \ER (\rmu - \frac{n_0}{\ell^3})^2 a \ell^3 \widetilde{\Psi} \rangle
&\geq  - 2 \ER a \ell^{-3}\left\{ (n- \rmu \ell^3)^2
+  \langle \widetilde{\Psi}, n_{+}^2 \widetilde{\Psi} \rangle
\right\} \nonumber \\
&\geq
- C  \ER a \ell^{3} K_B^3 K_{\ell}^2 (\rmu a^3)^{\frac{1}{2}} \rmu^2 \nonumber \\
&=
-C \ER K_{\ell}^5 K_B^3 \rmu a.
\end{align}
So this term can be absorbed in the error term in \eqref{eq:IntroK(z)}.

The corresponding positive term, $\rmu a R^2 (\rmu - \frac{n_0}{\ell^3})^2 a \ell^3,$ is written in second quantization as 
\begin{align}\label{eq:ExtraQuadratic}
\rmu a^2 R^2 \ell^{-3} (\rmu \ell^3  - a_0^{\dagger} a_0)^2 
= \rmu a^2 R^2 \ell^{-3} \left( a_0^{\dagger} a_0^{\dagger} a_0 a_0 - (2 \rmu \ell^3-1) a_0^{\dagger} a_0 +  (\rmu \ell^3)^2\right),
\end{align}
and added to ${\mathcal H}_{\Lambda}^{\rm 2nd}$ defined in \eqref{eq:HLambdaSecond} above.
We define $\widetilde{\mathcal K}(z)$ to be the operator ${\mathcal H}_{\Lambda}^{\rm 2nd}+\ER (\rmu - \frac{n_0}{\ell^3})^2 a \ell^3$, but where the following substitutions have been performed:
\begin{align}\label{eq:subst-z}
&a_0^{\dagger} a_0^{\dagger} a_0 a_0 \mapsto |z|^4 - 4 |z|^2 + 2, \nonumber \\
&a_0^{\dagger} a_0 a_0   \mapsto |z|^2 z - 2 z, \qquad
a_0 a_0^{\dagger} a_0^{\dagger}    \mapsto |z|^2 \overline{z} , \nonumber \\
&a_0 a_0 \mapsto z^2,\qquad a_0^{\dagger} a_0^{\dagger} \mapsto \overline{z}^2,\qquad
a_0^{\dagger} a_0 \mapsto |z|^2 - 1, \nonumber \\
&a_0 \mapsto z,\qquad a_0^{\dagger} \mapsto \overline{z}.
\end{align}
Then, we will prove that
\begin{align}\label{eq:Csubst}
\langle \widetilde{\Psi}, \left( {\mathcal H}_{\Lambda}^{\rm 2nd} + \rmu a R^2 (\rmu - \frac{n_0}{\ell^3})^2 a \ell^3\right) \widetilde{\Psi} \rangle 
&= \Re \int \langle \Phi(z), \widetilde{\mathcal K}(z)   \Phi(z) \rangle \,d^2z \nonumber \\
&=  \Re \int \langle \widetilde{\Phi}(z), \widetilde{\mathcal K}(z)   \widetilde{\Phi}(z) \rangle n^2(z) \,d^2z,
\end{align}
where $n(z) = \| \Phi(z) \|_{{\mathcal F}_{\rm s}(\Ran(Q))}$ and $ \widetilde{\Phi}(z)  =  \Phi(z)/n(z)$.

To obtain \eqref{eq:Csubst}  we write all polynomials in $a_0, a_0^{\dagger}$ in anti-Wick ordering, for example $a_0^{\dagger} a_0 = a_0 a_0^{\dagger} - 1$. Therefore, using \eqref{eq:PartUnity},
\begin{align}
\langle \Psi, a_0^{\dagger} a_0 \Psi \rangle = \int \langle a_0^{\dagger} \Psi |z \rangle \langle z | a_{0}^{\dagger} \Psi \rangle - \langle \Psi| z \rangle \langle z | \Psi \rangle \,d^2 z = 
\int (|z|^2-1) \langle \Phi(z) | \Phi(z) \rangle \,d^2 z .
\end{align}
Performing this type of calculation for each term in $ {\mathcal H}_{\Lambda}^{\rm 2nd}$ yields \eqref{eq:Csubst}.

If $\widetilde{\Psi} \in {\mathcal F}_{\rm s}(L^2(\Lambda))$ satisfies \eqref{eq:LocalizedNPlus} then, for all $z \in {\mathbb C}$,
\begin{align}\label{eq:ControlNPlusRemains}
\widetilde{\Phi}(z)  =  1_{[0,\ML]}(n_{+}^L) \widetilde{\Phi}(z).
\end{align}
with $\widetilde{\Phi}(z) = \langle z | \widetilde{\Psi} \rangle$ as above, since $n_{+}^L$ acts in the orthogonal complement to $P$.

By a similar argument, one obtains \eqref{eq:Localized_n_AftercNumber}.

The next step of the proof is to remove the lower order terms coming from the substitutions in \eqref{eq:subst-z} above.

Let us first consider the negative term $-4|z|^2$ in the substitution of $a_0^{\dagger} a_0^{\dagger} a_0 a_0$ in $ {\mathcal H}_{\Lambda}^{\rm 2nd}$.
By undoing the integrations leading to $\widetilde{\mathcal K}(z)$ for this term, we see that it contributes with
\begin{align}\label{eq:ErrorInZ4} 
\int \langle \Phi(z), 
-4  \frac{1}{2} |z|^2 \ell^{-3} 
\Big(\widehat{g}(0) + \widehat{g\omega}(0)\Big)\Phi(z) \rangle\,d^2z 
&\geq
-C a \ell^{-3} \langle \widetilde{\Psi}, a_0 a_0^{\dagger} \widetilde{\Psi} \rangle  \nonumber \\
&\geq -C a \ell^{-3} (n+1),
\end{align}
in agreement with the error term in \eqref{eq:IntroK(z)} (using that by \eqref{eq:apriori_n} $n \approx \rmu \ell^3 \gg 1$).

An other term of a similar kind comes from the $a_0^{\dagger} a_0^{\dagger} a_0 a_0$ term in $\ER (\rmu - \frac{n_0}{\ell^3})^2 a \ell^3$. This is easily estimated by $C \ER \rmu a$, and can therefore also be absorbed in the error term in \eqref{eq:IntroK(z)}.
We notice in passing that the negative $(-1)$-part of the cross term in \eqref{eq:ExtraQuadratic} also has this magnitude and can therefore also be absorbed in the error term in \eqref{eq:IntroK(z)}.

We also estimate the term linear in $z$ coming from the substitution of $a_0^{\dagger} a_0 a_0$ in \eqref{eq:subst-z}. This substitution occurs twice, but we will only explicitly treat one of them, namely the term
\begin{align}
& \Re \int \Big\langle \Phi(z), -2 \ell^{-3} \widehat{W}_1(0) (2\pi)^{-3} \int \widehat{\chi}_{\Lambda}(k) a_k^{\dagger} z \,dk \Phi(z) \Big\rangle d^2 z \nonumber \\
&=
-2 \ell^{-3} \widehat{W}_1(0) (2\pi)^{-3}
\int \Big\langle \Phi(z),  \int \widehat{\chi}_{\Lambda}(k) (a_k^{\dagger} z + a_k \overline{z}) \,dk \Phi(z) \Big\rangle d^2 z \nonumber \\
&\geq -C a \ell^{-3} 
\int \Big\langle \Phi(z),  \int |\widehat{\chi}_{\Lambda}(k)| (  a_k^{\dagger} a_k +  |z|^2) \,dk \Phi(z) \Big\rangle d^2 z.
\end{align}
Notice that $|\widehat{\chi}_{\Lambda}(k)| = \ell^{3} |\widehat{\chi}(k\ell)|$ and that $\widehat{\chi} \in L^1({\mathbb R}^3)$ for $M \geq 4$.
Notice also that \eqref{eq:Localized_n_AftercNumber} holds for each $\Phi(z)$.
Redoing the calculation in \eqref{eq:ErrorInZ4} we therefore find 
\begin{align}
\Re \int \Big\langle \Phi(z), -2 \ell^{-3} \widehat{W}_1(0) (2\pi)^{-3} \int \widehat{\chi}_{\Lambda}(k) a_k^{\dagger} z \,dk \Phi(z) \Big\rangle d^2 z
\geq -C a \ell^{-3} (n+1),
\end{align}
in agreement with the error term in \eqref{eq:IntroK(z)}.

The other error terms from the substitutions are \eqref{eq:subst-z} estimated in a similar manner and we will leave out the details

Finally, we need to restrict to non-negative $z$.
Suppose $z = |z| e^{i\phi}$.
In the operator we can replace $a_{\pm k}$ by $e^{i\phi} a_{\pm k}$. In this way all occurences of $z$ will be replaced by $|z|$. 
Notice that this substitution will not affect the commutation relations. 
This finishes the proof.
\end{proof}

\section{First energy bounds}
\label{sec:rough}

In this section we will make a rough estimate on the energy. 
This rough estimate will be used to eliminate the values of $\rho_z$ that are far away from $\rmu$.

\begin{lemma}\label{lem:Roughbounds}
Assume that the parameters satisfy 
Assumption~\ref{assump:params} and that $\rmu a^3$ is small enough.
For any normalized state $\Phi \in {\mathcal F}_{\rm s}(\Ran(Q))$ satisfying  \eqref{eq:LocalizedAftercNumber} and \eqref{eq:Localized_n_AftercNumber} we have
\begin{align}
\label{eq:RoughBelow}
\langle \Phi, {\mathcal K}(z) \Phi \rangle &\geq - \frac{\widehat{g}(0)}{2} \rmu^2 \ell^3 + \frac{\widehat{g}(0)}{2} (\rmu - \rho_z)^2 \ell^3 - a (\rho_z + \rmu)^{3/2} \rmu^{1/2} \ell^3 \delta_1- \rho_z^2 a \ell^3 \delta_2
- \delta_0 \rho_z a \rmu \ell^3 
\nonumber  \\
&\quad 
 - C \rmu^2 a \ell^3\sqrt{\rmu a^3}.
\end{align}
with 
\begin{align}
\delta_0 &:= C \frac{\varepsilon_T^2}{d^2 K_{\ell}^4 \widetilde{K}_{{\mathcal M}}}   (1 + (K_H'')^{-2} K_{\ell}^2) \nonumber \\
 \delta_1&:=  \widetilde{K}_{{\mathcal M}}^{-\frac{1}{2}}, \nonumber \\
\delta_2 &:= C  \left(\frac{d^4 K_{\ell}^4}{\varepsilon_T^2}+ \frac{R a}{\ell^2} 
 \right).
\end{align}
\end{lemma}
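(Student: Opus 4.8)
\textbf{Proof strategy for Lemma~\ref{lem:Roughbounds}.}

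The plan is to produce a rough lower bound on $\langle\Phi,{\mathcal K}(z)\Phi\rangle$ by keeping only the genuinely positive pieces of ${\mathcal K}(z)$ and estimating all the remaining terms from below using Cauchy--Schwarz together with the localization constraints \eqref{eq:LocalizedAftercNumber} and \eqref{eq:Localized_n_AftercNumber}. First I would isolate the terms in \eqref{eq:Kz} whose sign is controlled: the kinetic/number terms $\frac{b}{2\ell^2}n_+$, $\varepsilon_T\frac{b}{8d^2\ell^2}n_+^H$, $\varepsilon_T\frac{b|z|^2}{16d^2\rmu\ell^5}n_+^H$, the $c$-number penalty $\ER(\rmu-\rho_z)^2a\ell^3$, and the scalar term $\frac12\rho_z^2\ell^3(\widehat g(0)+\widehat{g\omega}(0))-\rmu\widehat g(0)\rho_z\ell^3$. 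The latter scalar already produces, after completing the square, the two explicit $\widehat g(0)$-terms on the right of \eqref{eq:RoughBelow} up to an error of size $\rho_z^2\ell^3\widehat{g\omega}(0)=O(\rho_z^2 a\ell^3\cdot Ra/\ell^2\ \text{or}\ \sqrt{\rho a^3})$, which gets absorbed into $\delta_2$ and the final error. The key positive operator that we will \emph{spend} is the number term $\frac{b}{2\ell^2}n_+$ together with the momentum-integrated $a_k^\dagger a_k$-terms; everything else must be dominated by these.

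Next I would treat the quadratic-in-$(a_k,a_k^\dagger)$ block: $(2\pi)^{-3}\ell^3\int[(1-\varepsilon_N)\tau(k)+\rho_z\widehat W_1(k)]a_k^\dagger a_k+\frac12\rho_z\widehat W_1(k)(a_ka_{-k}+a_k^\dagger a_{-k}^\dagger)\,dk$. Since $\widehat W_1(k)\geq 0$ is \emph{not} guaranteed in general, but $|\widehat W_1(k)|\le\widehat W_1(0)=O(a)$ (from $0\le W_1$ being close to $g$, cf. Lemma~\ref{lem:Simple}), I would bound $|\frac12\rho_z\widehat W_1(k)(a_ka_{-k}+\text{h.c.})|\le \rho_z\widehat W_1(k)a_k^\dagger a_k$ pointwise in $k$ plus a constant-times-$\rho_z a_k^\dagger a_k$ using the standard $\pm(a_ka_{-k}+a_k^\dagger a_{-k}^\dagger)\le a_k^\dagger a_k + a_{-k}^\dagger a_{-k}+2$ type inequality, so that the whole quadratic block is $\ge -C\rho_z a\ell^3\,(2\pi)^{-3}\ell^{-3}\int\ \cdots$, i.e. bounded below by $-Ca\,\langle n_+^{\text{low}}\rangle$-type quantities controlled by $\ML$ via Lemma~\ref{lem:UsingML}. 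The term $(\rho_z-\rmu)\widehat W_1(0)(2\pi)^{-3}\ell^3\int a_k^\dagger a_k\,dk$ is likewise $\ge-C|\rho_z-\rmu|a\ML\ell^{-3}$, which using $\ML\ll\rmu\ell^3$ and $|\rho_z-\rmu|\le C(\rho_z+\rmu)$ is absorbed into $\delta_0\rho_z a\rmu\ell^3$ and the final error.

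The genuinely delicate parts are the linear terms ${\mathcal Q}_1(z),{\mathcal Q}_1^{\rm ex}(z)$ — which are of order $|\rho_z-\rmu|a$ times $z$ times $\int\widehat\chi_\Lambda(k)a_k^\dagger$ — and the $3Q$-term ${\mathcal Q}_3(z)$. For ${\mathcal Q}_1(z)$ I would Cauchy--Schwarz against $\varepsilon^{-1}$ times the positive $(2\pi)^{-3}\ell^3\int a_k^\dagger a_k$ block already in hand plus $\varepsilon|z|^2|\rho_z-\rmu|^2\widehat W_1(0)^2(2\pi)^{-3}\int|\widehat\chi_\Lambda(k)|^2\ell^{-3}dk$; using $\widehat\chi\in L^1$ ($M\ge4$) and optimizing gives a contribution $\lesssim a(\rho_z+\rmu)^{3/2}\rmu^{1/2}\ell^3\widetilde K_{\mathcal M}^{-1/2}$, which is exactly the $\delta_1$-term — the factor $\widetilde K_{\mathcal M}^{-1/2}=\delta_1$ comes from the fact that we only have $\langle n_+^L\rangle\le\ML$ available and $\ML/(\rmu\ell^3)$ is a small power encoded in $\widetilde K_{\mathcal M}$. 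The $3Q$-term ${\mathcal Q}_3(z)$ is split via $|z|=\sqrt{\rho_z}\,\ell^{3/2}$, Cauchy--Schwarz in the soft/hard momentum pairing $\widetilde a_s^\dagger a_{s-k}a_k$ as in the proof of Lemma~\ref{lem:Q3-splitting2}: bound $\pm\widehat W_1(k)(\widetilde a_s^\dagger a_{s-k}a_k+\text{h.c.})\le \eta\,\widetilde a_s^\dagger\widetilde a_s+\eta^{-1}|\widehat W_1(k)|^2 a_k^\dagger a_{s-k}^\dagger a_{s-k}a_k$, integrate, use Lemma~\ref{lem:UsingML} to turn $\int\widetilde a_s^\dagger\widetilde a_s$ and the $a^\dagger a^\dagger aa$ piece into factors of $\ML$ and $n_+$, and the $k\in P_{\rm high}$ constraint plus $\|\widehat W_1\|_\infty\le Ca$ to see that the cost is $\le C\rho_z a\ell^3\,(\ML/(\rmu\ell^3))^{1/2}(\cdots)$, which again feeds into $\delta_2$ together with the $Ra/\ell^2$ coming from ${\mathcal Q}_2^{\rm ex}(z)$ and the $\widehat{g\omega}$ discrepancies. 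Finally ${\mathcal Q}_2^{\rm ex}(z)=(2\pi)^{-3}\rho_z\ell^3\int(\widehat{W_1\omega}(k)+\widehat{W_1\omega}(0))a_k^\dagger a_k$ is itself $\ge-C\rho_z a(Ra/\ell^2)\,\langle n_+^L\rangle$ after using $|\widehat{W_1\omega}(k)|\le Ca\,Ra^{-1}\cdots$ wait — more carefully $\int|W_1\omega|\le C a\cdot(\text{small})$, giving the $Ra/\ell^2$ in $\delta_2$; but I should double-check this is not simply $O(a^2/R)$, in which case it is even better. The main obstacle I anticipate is keeping the bookkeeping of which positive term pays for which error \emph{tight enough} that the constants $\delta_0,\delta_1,\delta_2$ come out with exactly the stated structure (in particular extracting the clean $\widetilde K_{\mathcal M}^{-1/2}$ in $\delta_1$ and the combination $d^4K_\ell^4/\varepsilon_T^2 + Ra/\ell^2$ in $\delta_2$), rather than a cruder bound — this requires carefully choosing the Cauchy--Schwarz weights so that the leftover positive mass $\frac{b}{2\ell^2}n_+$ and the $\varepsilon_T/d^2$-suppressed $n_+^H$ term exactly absorb the residuals, and invoking the parameter relations in Assumption~\ref{assump:params} to discard everything of relative size $\sqrt{\rmu a^3}$ into the final $-C\rmu^2a\ell^3\sqrt{\rmu a^3}$.
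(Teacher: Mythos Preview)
There is a genuine gap in your proposal, and it lies at the heart of the argument. You claim that after completing the square in the scalar term, the leftover $\tfrac12\rho_z^2\ell^3\widehat{g\omega}(0)$ is ``of size $\rho_z^2 a\ell^3\cdot Ra/\ell^2$ or $\sqrt{\rho a^3}$'' and can be absorbed into $\delta_2$ or the final error. This is false: $\widehat{g\omega}(0)=\int g\omega$ is of order $a$ (indeed, by \eqref{eq:I2-integral-new} it is essentially $(2\pi)^{-3}\int \widehat W_1(k)^2/(2k^2)\,dk$), so $\tfrac12\rho_z^2\ell^3\widehat{g\omega}(0)$ is of the \emph{same order} as the leading term $4\pi\rho_z^2 a\ell^3$. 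You cannot discard it.

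Relatedly, your treatment of the off-diagonal block $\tfrac12\rho_z\widehat W_1(k)(a_ka_{-k}+a_k^\dagger a_{-k}^\dagger)$ via the crude inequality $\pm(a_ka_{-k}+\text{h.c.})\le a_k^\dagger a_k+a_{-k}^\dagger a_{-k}+2$ cannot work: the ``$+2$'' integrated over $k$ diverges, and even after cutoff the resulting error is leading-order. What the paper does instead is the essential step you are missing: it combines the kinetic term $(1-\varepsilon_N)\tau(k)a_k^\dagger a_k$ with the diagonal and off-diagonal $\widehat W_1$-terms into a genuine Bogoliubov Hamiltonian and applies Theorem~\ref{thm:bogolubov-complete} plus the integral estimate Lemma~\ref{lm:aprioriintegral_new}. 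This diagonalization produces precisely $-\tfrac12\rho_z^2\ell^3\widehat{g\omega}(0)$ (up to controlled errors), which \emph{cancels} the scalar $+\tfrac12\rho_z^2\ell^3\widehat{g\omega}(0)$. To make the Bogoliubov argument work one must first peel off a fraction $\varepsilon$ of the kinetic energy to control ${\mathcal Q}_3(z)$ by completing the square (this is where $\delta_0$ comes from, not $\delta_2$ as you suggest), and add/subtract a term $\rho_z\varepsilon^{-1/2}a\int a_k^\dagger a_k$ to guarantee a spectral gap in $\mathcal A(k)$; the optimal choice $\varepsilon\sim \varepsilon_T^{-2}d^4K_\ell^4$ is exactly what makes the Bogoliubov error come out as $\delta_2=C(d^4K_\ell^4/\varepsilon_T^2 + Ra/\ell^2)$. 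Your ${\mathcal Q}_2^{\rm ex}$ analysis is also off: the $Ra/\ell^2$ in $\delta_2$ comes from the Bogoliubov integral via \eqref{eq:I2-integral-new}, not from ${\mathcal Q}_2^{\rm ex}$.
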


Before we give the proof of Lemma~\ref{lem:Roughbounds} we will state its main consequence, Proposition~\ref{prop:FarAway} below.

Notice that by Section~\ref{sec:params} our choice of parameters ensures that $\delta_0 + \delta_1 + \delta_2 \ll 1$.

\begin{proposition}\label{prop:FarAway}
Suppose, for some sufficiently large universal constant $C>0$, we have
\begin{align}\label{eq:Close-I}
|\rho_z - \rmu | \geq C \rmu \max\left( \Big(\delta_0+ \delta_1 + \delta_2\big)^{\frac{1}{2}}, (\rmu a^3)^{\frac{1}{4}} \right).
\end{align}
and that $\delta_0+ \delta_1+\delta_2\leq 1/C$.
Then, for any state $\Phi$ satisfying \eqref{eq:LocalizedAftercNumber}, we have
\begin{align}
\langle \Phi, {\mathcal K}(z) \Phi \rangle
\geq - \frac{\widehat{g}(0)}{2} \rmu^2 \ell^3+ 2 \rmu^2 a \ell^3 \frac{128}{15 \sqrt{\pi}} \sqrt{\rmu a^3}.
\end{align}
\end{proposition}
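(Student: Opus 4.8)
\textbf{Proof strategy for Proposition~\ref{prop:FarAway}.}

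The plan is to feed the hypothesis \eqref{eq:Close-I} into the rough bound \eqref{eq:RoughBelow} of Lemma~\ref{lem:Roughbounds} and show that the gain from the quadratic term $\frac{\widehat g(0)}{2}(\rmu-\rho_z)^2\ell^3$ dominates all the other error contributions. I would first write $t := |\rho_z-\rmu|$, so that \eqref{eq:Close-I} reads $t \geq C\rmu\max\{(\delta_0+\delta_1+\delta_2)^{1/2},(\rmu a^3)^{1/4}\}$, and observe that since $\widehat g(0)=8\pi a$ the positive term is $4\pi a \ell^3 t^2$. The target lower bound, after subtracting the common $-\frac{\widehat g(0)}{2}\rmu^2\ell^3$, is $2\rmu^2 a\ell^3\frac{128}{15\sqrt\pi}\sqrt{\rmu a^3}$, which is of order $\rmu^2 a\ell^3\sqrt{\rmu a^3}$; and the quadratic gain is at least $4\pi a\ell^3\cdot C^2\rmu^2\sqrt{\rmu a^3}$, so for $C$ large the quadratic term alone beats the target by a wide margin. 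Hence it suffices to show that each of the negative error terms in \eqref{eq:RoughBelow}, namely $a(\rho_z+\rmu)^{3/2}\rmu^{1/2}\ell^3\delta_1$, $\rho_z^2 a\ell^3\delta_2$, $\delta_0\rho_z a\rmu\ell^3$, and $C\rmu^2 a\ell^3\sqrt{\rmu a^3}$, is bounded by a small fraction of $4\pi a\ell^3 t^2$ plus a small fraction of the target.

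The estimates split into two regimes according to whether $\rho_z$ is comparable to $\rmu$ or much larger. If $\rho_z \leq 2\rmu$ (say), then $(\rho_z+\rmu)^{3/2}\rmu^{1/2}\leq C\rmu^2$, $\rho_z^2\leq 4\rmu^2$, and $\rho_z\rmu\leq 2\rmu^2$, so all three terms are bounded by $C\rmu^2 a\ell^3(\delta_0+\delta_1+\delta_2)$; since $t\geq C\rmu(\delta_0+\delta_1+\delta_2)^{1/2}$ gives $t^2\geq C^2\rmu^2(\delta_0+\delta_1+\delta_2)$, each is absorbed into a small fraction of $4\pi a\ell^3 t^2$ once $C$ is large. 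The last term $C\rmu^2 a\ell^3\sqrt{\rmu a^3}$ is absorbed similarly using $t\geq C\rmu(\rmu a^3)^{1/4}$, i.e. $t^2\geq C^2\rmu^2\sqrt{\rmu a^3}$, which dominates it for $C$ large, with the remainder of $4\pi a\ell^3 t^2$ still exceeding the target. If instead $\rho_z > 2\rmu$, then $t=\rho_z-\rmu\geq \rho_z/2$, so $t^2\geq \rho_z^2/4$; now $(\rho_z+\rmu)^{3/2}\rmu^{1/2}\leq (2\rho_z)^{3/2}\rmu^{1/2}\leq C\rho_z^{3/2}\rmu^{1/2}\leq C\rho_z^2$ (using $\rmu\leq\rho_z$), and similarly $\rho_z\rmu\leq\rho_z^2$, so again every negative term is $\leq Ca\ell^3\rho_z^2\max(\delta_0,\delta_1,\delta_2)\leq Ca\ell^3 t^2\max(\delta_0,\delta_1,\delta_2)$, which for $\delta_0+\delta_1+\delta_2\leq 1/C$ small is a tiny fraction of $4\pi a\ell^3 t^2$. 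In this regime the $C\rmu^2 a\ell^3\sqrt{\rmu a^3}$ term is handled exactly as before. Collecting the two regimes, the net of \eqref{eq:RoughBelow} is at least $-\frac{\widehat g(0)}{2}\rmu^2\ell^3 + \frac12\cdot 4\pi a\ell^3 t^2$, and since $t^2\geq C^2\rmu^2\sqrt{\rmu a^3}$ with $C$ large this exceeds $-\frac{\widehat g(0)}{2}\rmu^2\ell^3 + 2\rmu^2 a\ell^3\frac{128}{15\sqrt\pi}\sqrt{\rmu a^3}$, which is the claim.

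The only mildly delicate point — and the one I would write out carefully — is the bookkeeping of constants: one must fix the universal $C$ in \eqref{eq:Close-I} \emph{after} seeing how large a multiple of $t^2$ is consumed by the error terms, so the argument should first derive an inequality of the shape ``negative errors $\leq \tfrac14\cdot 4\pi a\ell^3 t^2$ provided $C\geq C_0$ and $\delta_0+\delta_1+\delta_2\leq 1/C_0$'' with an explicit $C_0$, and only then conclude. There is no genuine analytic obstacle here; the rough bound \eqref{eq:RoughBelow} already does all the real work, and what remains is a careful but routine comparison of powers of $\rmu$ and of the small parameters. I do not expect to need the localization hypothesis \eqref{eq:LocalizedAftercNumber} beyond what was already used to establish Lemma~\ref{lem:Roughbounds}, so it plays no further role in this step.
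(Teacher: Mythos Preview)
Your proposal is correct and follows essentially the same approach as the paper: both feed the rough bound of Lemma~\ref{lem:Roughbounds} and show that the positive quadratic term $\frac{\widehat g(0)}{2}(\rmu-\rho_z)^2\ell^3$ dominates all error terms once \eqref{eq:Close-I} holds. The only cosmetic difference is that the paper absorbs the $\rho_z$-dependent errors into the quadratic term in one stroke via convexity (i.e.\ bounding each error by $C\delta_j\big[(\rho_z-\rmu)^2+\rmu^2\big]$), whereas you achieve the same absorption by an explicit case split $\rho_z\le 2\rmu$ versus $\rho_z>2\rmu$; the two arguments are equivalent in content.
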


\begin{proof}
By convexity, \eqref{eq:RoughBelow} implies the bound
\begin{align}
\langle \Phi, {\mathcal K}(z) \Phi \rangle &\geq - \frac{\widehat{g}(0)}{2} \rmu^2 \ell^3 + \frac{\widehat{g}(0)}{2} (1 - C(\delta_0+ \delta_1 +\delta_2))  (\rmu - \rho_z)^2 \ell^3
 \nonumber \\
&\quad  
- C \rmu^2 a \ell^3 \Big(\delta_0+ \delta_1 + \delta_2 +\sqrt{\rmu a^3}\Big) \nonumber \\
 &\geq
 - \frac{\widehat{g}(0)}{2} \rmu^2 \ell^3 + \frac{\widehat{g}(0)}{4}   (\rmu - \rho_z)^2 \ell^3 
 - C \rmu^2 a \ell^3 \Big(\delta_0+ \delta_1 + \delta_2 + \sqrt{\rmu a^3}\Big).
\end{align}
If \eqref{eq:Close-I} is satisfied, then the $(\rmu - \rho_z)^2$ term dominates both the error term above and the LHY correction. This finishes the proof of Proposition~\ref{prop:FarAway}.
\end{proof}

\begin{proof}[Proof of Lemma~\ref{lem:Roughbounds}]
For the proof of Lemma~\ref{lem:Roughbounds}] we will drop the positive term $\ER (\rmu - \rho_z)^2 a \ell^3$ from \eqref{eq:Kz}.

Since, for any $\delta'>0$,
$z a_k^{\dagger} + \overline{z} a_k \leq \delta' |z|^2 + (\delta')^{-1} a_k^{\dagger} a_k$,
we find, using \eqref{eq:LocalizedAftercNumber}, \eqref{eq:Localized_n_AftercNumber} and Lemma~\ref{lem:UsingML},
\begin{align}
& \langle \Phi, \int \widehat{\chi_{\Lambda}}(k) (z a_k^{\dagger} + \overline{z} a_k)\,dk\, \Phi  \rangle\nonumber \\
 &\leq \delta'  |z|^2 \int | \widehat{\chi_{\Lambda}}(k) |\,dk + |\widehat{\chi_{\Lambda}}(0) |
(\delta')^{-1} \langle \Phi,  \int_{\{|k| \leq K_H'' \ell^{-1}\}}  a_k^{\dagger} a_k\,dk\,\Phi \rangle \nonumber \\
&\quad + 
(\delta')^{-1} \langle \Phi,  \int_{\{|k| \geq K_H'' \ell^{-1}\}}  | \widehat{\chi_{\Lambda}}(k) | a_k^{\dagger} a_k\,dk\,\Phi \rangle \nonumber \\
 &\leq C ( \delta' |z|^2 + (\delta')^{-1} (\ML + \rmu \ell^3 \sup_{\{|k| \geq K_H'' \ell^{-1}\}} ( \ell^{-3}  | \widehat{\chi_{\Lambda}}(k) |))).
\end{align}
Using \eqref{cond:KHs-ML} and \eqref{eq:DecayPH-new}, both $ (\delta')^{-1} $ terms can be estimated by $\ML$.
Therefore, we easily get, setting $\delta' = \sqrt{ \ML/|z|^2)}$ and the definitions in \eqref{eq:DefQ1z} and \eqref{eq:DefQ1zex},
\begin{align}
\langle \Phi, \big({\mathcal Q}_1(z) + {\mathcal Q}_1^{\rm ex}(z) \big) \Phi \rangle \geq
-C  a \sqrt{\ML|z|^2} ( |\rho_z-\rmu| + \rho_z) ,
\end{align}
in agreement with the $\delta_1$ term in \eqref{eq:RoughBelow}.

Notice that quadratic terms of the form
$\ell^{3} \int \widehat{W}_1(k) a_k^{\dagger} a_k \,dk$ are easily estimated, using \eqref{eq:DomByML_ext},  as
\begin{align}
\pm \langle \Phi, \ell^{3} \int \widehat{W}_1(k) a_k^{\dagger} a_k \,dk \Phi \rangle
\leq C a (\ML + \langle \Phi, n_{+}^{H} \Phi \rangle).
\end{align}
This allows us to estimate all the quadratic terms in ${\mathcal K}(z)$ except the kinetic energy and the off-diagonal quadratic terms and to absorb the corresponding terms in the error in \eqref{eq:RoughBelow} (using in particular that $\ML \ll \rmu \ell^3$) and in the gap term(s) $n_{+}^{H}$ in ${\mathcal K}(z)$. To absorb the $n_+^H$-terms in the corresponding gap terms, recall that \eqref{con:eTdK} implies that $\frac{\varepsilon_T}{ (dK_\ell)^2}\gg 1$.

Therefore, to establish \eqref{eq:RoughBelow} we only have left to estimate the sum of the kinetic energy, ${\mathcal Q}_3(z)$ and the `off-diagonal' quadratic terms.
This we will do by first adding and subtracting an $n_{+}$ term, which is easily estimated as above.
We will prove the following 3 inequalities, where $\Phi$ is a state satisfying \eqref{eq:LocalizedAftercNumber} and \eqref{eq:Localized_n_AftercNumber}, and where $\varepsilon$ is chosen to be
\begin{align}\label{eq:ChoiceEpsilon}
\varepsilon=C_0 \varepsilon_{T}^{-2} d^4 K_{\ell}^4 ,
\end{align}
with $C_0$ a sufficiently large constant. The condition \eqref{con:eTdK} assures that $\varepsilon \ll 1$.

\begin{align}\label{eq:Rough0}
-  \left\langle \Phi, (2\pi)^{-3} \ell^3  \rho_z \varepsilon^{-1/2} a \int  a_k^{\dagger} a_k \,dk \,\Phi \right\rangle
 \geq - C \varepsilon^{-1/2}  \rho_z a (\ML + \langle \Phi, n_{+}^{H}\Phi \rangle )
\end{align}
\begin{align}\label{eq:Rough1}
\left\langle \Phi,  \left((2\pi)^{-3} \ell^3 \int \varepsilon \tau (k) a_k^{\dagger} a_k \,dk + {\mathcal Q}_3(z)\right) \, \Phi \right\rangle 
 \geq -\varepsilon^{-1} C \rho_z a \rmu \ell^3  \frac{\ML}{\rmu \ell^3} d^{-6} \Big(1 +  (K_H'')^{-2}  K_{\ell}^2
\Big),
\end{align}
and
\begin{align}\label{eq:Rough2}\frac{1}{2} \rho_z^2 \ell^3 \widehat{g\omega}(0)  +
(2\pi)^{-3} \ell^3 &\int \Big(\widetilde{\mathcal A}_1(k) a_k^{\dagger} a_k
 +\frac{1}{2}{\mathcal B}_1(k) \left( a_k^{\dagger} a_{-k}^{\dagger} + a_k a_{-k}\right) \Big) dk  \nonumber \\
 &\geq
 - 
 C \ell^3 \rho_z^2 a \left(\varepsilon + \frac{R a}{\ell^2}
 \right)
 - C \rmu^2 a \ell^3 \sqrt{\rmu a^3},
\end{align}
where we have introduced
\begin{align}
\widetilde{\mathcal A}_1(k) := (1-\varepsilon-\varepsilon_N)\tau(k) + \rho_z \varepsilon^{-1/2} a,\qquad
{\mathcal B}_1(k) :=  \rho_z \widehat{W}_1(k).
\end{align}

Notice that \eqref{eq:Rough0} is easy given the discussion above. The choice of $\varepsilon$ in \eqref{eq:ChoiceEpsilon} is dictated by the necessity of absorbing the $n_{+}^{H}$-term in \eqref{eq:Rough0} in the corresponding gap in ${\mathcal K}(z)$.

We proceed to prove \eqref{eq:Rough2}.  Using \eqref{eq:Cond_epsilonN}, \eqref{con:eTdK}, \eqref{con:sdKellKB} and \eqref{con:KBKell}, we find
\begin{align}
\label{cond:epsilonNdKlepsilonT}
\epsilon_N \ll d^4 K_{\ell}^4 \varepsilon_T^{-2}.
\end{align}
Notice that \eqref{cond:epsilonNdKlepsilonT} implies that for a lower bound we may replace $\widetilde{\mathcal A}_1(k)$ by the smaller
\begin{align}
{\mathcal A}_1(k) := (1-2\varepsilon)\left[ |k| - \frac{1}{2}(ds\ell)^{-1}\right]_{+}^2 + \rho_z \varepsilon^{-1/2} a.
\end{align}
We rewrite in a symmetrized form
\begin{align}\label{eq:symmetrized}
(2\pi)^{-3} \ell^3 &\int \Big({\mathcal A}_1(k) a_k^{\dagger} a_k
 +\frac{1}{2}{\mathcal B}_1(k) a_k^{\dagger} a_{-k}^{\dagger} + \frac{1}{2}{\mathcal B}_1(k)a_k a_{-k}\Big) dk  \\
 &=
 \frac{1}{2} (2\pi)^{-3} \ell^3 \int \Big({\mathcal A}_1(k) a_k^{\dagger} a_k
 +{\mathcal A}_1(k) a_{-k}^{\dagger} a_{-k}
 +{\mathcal B}_1(k) a_k^{\dagger} a_{-k}^{\dagger} + {\mathcal B}_1(k)a_k a_{-k}\Big) dk. \nonumber\\
 & \geq
 -  \frac{1}{2} (2\pi)^{-3} \ell^3 \int \left( {\mathcal A}_1(k)  - \sqrt{ {\mathcal A}_1(k)^2 - |{\mathcal B}_1(k) |^2}\right)\,dk,
\end{align}
where we used Lemma~\ref{eq:BogIneq} and \eqref{eq:Commutator}.

At this point we use Lemma~\ref{lm:aprioriintegral_new} with $\kappa_A=(1-2\varepsilon)$, $K_1= \frac{1}{2}\rho_z \varepsilon^{-1/2}$, $K_2=2\rho_z$ (using  \eqref{eq:W1-g-new}  and $R \ll \ell$), and $P_1=(ds\ell)^{-1}$.
Therefore, for $\varepsilon$ sufficiently small, the assumptions of Lemma~\ref{lm:aprioriintegral_new} are satisfied and we get
\begin{align}
& (2\pi)^{-3} \int \left( {\mathcal A}_1(k)  - \sqrt{ {\mathcal A}_1(k)^2 - |{\mathcal B}_1(k) |^2}\right)\,dk \nonumber \\
&\leq
\frac{1}{1-2\varepsilon} \rho_z^2 (2\pi)^{-3} \int \frac{\widehat{W}_1(k)^2}{2k^2} 
 + C (a (ds\ell)^{-1} + \varepsilon) \rho_z^2 \int \frac{\widehat{W}_1(k)^2}{2k^2}
+C \varepsilon^{1/2} \rho_z a (ds\ell)^{-3} \nonumber \\
&\quad +C (\rho_z a)^2 (ds\ell)^{-1} \log( d s \ell/a).
\end{align}
Using now \eqref{eq:I2-integral-new} we therefore find
\begin{align}\label{eq:NewIntegral}
& (2\pi)^{-3} \int \left( {\mathcal A}_1(k)  - \sqrt{ {\mathcal A}_1(k)^2 - |{\mathcal B}_1(k) |^2}\right)\,dk \nonumber \\
&\leq \rho_z^2 \widehat{g\omega}(0) + C \rho_z^2 a \left( \varepsilon + \frac{Ra}{\ell^2}
+a (ds\ell)^{-1}\left(1 + \log( d s \ell/a) \right)\right) + a (ds\ell)^{-6}.
\end{align}
Combing \eqref{con:eTdK}, \eqref{con:sdKellKB} and \eqref{con:KBKell} with the choice \eqref{eq:ChoiceEpsilon} and the fact that $s \ll 1$, we get that
$\varepsilon \gg  \frac{a}{d s \ell} \big(1 + \log\big( \frac{d s \ell}{a}\big)$.
Also, combining \eqref{con:sdKellKB} and \eqref{con:KBKell} with the fact that $K_{\ell} \gg 1$, we get that 
$a (d s \ell )^{-6} \ll \rmu^2 a  \sqrt{\rmu a^3}$.
Therefore, \eqref{eq:Rough2} follows upon combing \eqref{eq:NewIntegral} with \eqref{eq:symmetrized}.

To prove \eqref{eq:Rough1} we use a similar approach. Notice that by definition \eqref{eq:kroelleQ3z}, ${\mathcal Q}_3(z)$ only lives in the high momentum region $P_H$.
For these momenta we have $\tau(k) \geq \frac{1}{2} k^2$. Therefore, dropping a part of the kinetic energy, it suffices to bound
\begin{align}
\ell^3 (2\pi)^{-3} \int_{\{ k \in P_H\}} \left(\frac{\varepsilon}{2} k^2 a_k^{\dagger} a_k +
(2\pi)^{-3} \int f_L(s)
\widehat{W}_1(k) (\overline{z} \widetilde{a}_s^{\dagger}  a_{s-k} a_k
+ a_k^{\dagger}  a^{\dagger}_{s-k} \widetilde{a}_s z) \,ds\right)\,dk.
\end{align}
We estimate, with $\widetilde{b}_k := a_k + 2(2\pi)^{-3} \int f_L(s)
\frac{ \widehat{W}_1(k)}{\varepsilon k^2} z a_{s-k}^{\dagger} \widetilde{a}_s   \,ds$,
\begin{align}\label{eq:RoughQ3_1}
&\ell^3 (2\pi)^{-3} \int_{\{ k \in P_H\}} \left(\frac{\varepsilon}{2} k^2 a_k^{\dagger} a_k +
(2\pi)^{-3} \int f_L(s)
\widehat{W}_1(k) (\overline{z} \widetilde{a}_s^{\dagger}  a_{s-k} a_k
+ a_k^{\dagger}  a^{\dagger}_{s-k} \widetilde{a}_s z) \,ds\right)\,dk\nonumber \\
&=\ell^3 (2\pi)^{-3} \int_{\{ k \in P_H\}}\left(
\frac{\varepsilon}{2} k^2 \widetilde{b}_k^{\dagger} \widetilde{b}_k
- 4 (2\pi)^{-6} 
\iint f_L(s) f_L(s') \frac{ \widehat{W}_1(k)^2}{\varepsilon k^2} |z|^2 \widetilde{a}_{s'}^{\dagger}  a_{s'-k} a_{s-k}^{\dagger} \widetilde{a}_s\right)dk \nonumber \\
&\geq- 4\varepsilon^{-1}
\ell^3 (2\pi)^{-9} \int_{\{ k \in P_H\}}  \frac{ \widehat{W}_1(k)^2}{ k^2} |z|^2
\iint f_L(s) f_L(s') \widetilde{a}_{s'}^{\dagger}  (a_{s-k}^{\dagger} a_{s'-k} +[a_{s'-k}, a_{s-k}^{\dagger}]) \widetilde{a}_s .
\end{align}
On the term without a commutator, we estimate $\widetilde{a}_{s'}^{\dagger}  a_{s-k}^{\dagger} a_{s'-k}\widetilde{a}_s$ by Cauchy-Schwarz and (since $k \in P_H= \{ |k| \geq K_H'' \ell^{-1} \} $),
$\frac{ \widehat{W}_1(k)^2}{k^2} \leq C (K_H'')^{-2} a^2 \ell^2$.
For a $\Phi$ satisfying \eqref{eq:LocalizedAftercNumber} and \eqref{eq:Localized_n_AftercNumber} we find, using \eqref{eq:DomByML} and the support of $f_L$ as defined in \eqref{eq:def_fL},
\begin{align}
\langle \Phi, \ell^6 \iint f_L(s) \widetilde{a}_{s}^{\dagger}  a_{s'-k}^{\dagger} a_{s'-k}  \widetilde{a}_s \,dsdk\Phi \rangle 
&=
\langle \Phi, \ell^6 \iint f_L(s)  a_{k}^{\dagger}\widetilde{a}_{s}^{\dagger} \widetilde{a}_s   a_{k}  \,dsdk\Phi \rangle \nonumber \\
&\leq C \ML \rmu \ell^3.
\end{align}
Therefore, it follows that
\begin{align}\label{eq:RoughQ3_2}
\langle \Phi, &
\ell^3 \int_{\{ k \in P_H\}}  \frac{ \widehat{W}_1(k)^2}{ k^2} |z|^2 \iint f_L(s) f_L(s') \widetilde{a}_{s'}^{\dagger}  a_{s-k}^{\dagger} a_{s'-k}  \widetilde{a}_s \Phi \rangle \nonumber \\
&\leq
C \rho_z  (\int_{\{|s| \leq 2 d^{-2} \ell^{-1} \}}\,ds )(K_H'')^{-2} a^2 \ell^2  \ML \rmu \ell^3 \nonumber \\
&\leq
C a \rho_z  \rmu \ell^3 d^{-6}   (K_H'')^{-2} (a / \ell)  \ML .
\end{align}

For the commutator term, we estimate (using \eqref{eq:Commutator_general} and the Cauchy-Schwarz inequality)
$$
\widetilde{a}_{s'}^{\dagger} [a_{s'-k}, a_{s-k}^{\dagger}]) \widetilde{a}_s + h.c.\leq 2 \widetilde{a}_{s'}^{\dagger}\widetilde{a}_{s'} + 2 \widetilde{a}_s^{\dagger} \widetilde{a}_s
$$
and $\int \frac{ \widehat{W}_1(k)^2}{k^2} \leq Ca$.
This leads to (for a $\Phi$ satisfying \eqref{eq:LocalizedAftercNumber}),
\begin{align}\label{eq:RoughQ3_3}
\langle \Phi, \ell^3 &\int_{\{ k \in P_H\}}  \frac{ \widehat{W}_1(k)^2}{ k^2} |z|^2
\iint f_L(s) f_L(s') \widetilde{a}_{s'}^{\dagger} [a_{s'-k}, a_{s-k}^{\dagger}] \widetilde{a}_s \Phi \rangle \nonumber \\
&
\leq 
C \ML a |z|^2 \int_{\{|s| \leq 2 d^{-2} \ell^{-1} \}}\,ds \nonumber \\
&\leq C a \rho_z \ML d^{-6}.
\end{align}
Combining the estimates \eqref{eq:RoughQ3_1}, \eqref{eq:RoughQ3_2} and \eqref{eq:RoughQ3_3} proves \eqref{eq:Rough1}.

We will add the estimates of \eqref{eq:Rough0}, \eqref{eq:Rough1} and \eqref{eq:Rough2} with $\varepsilon$ from \eqref{eq:ChoiceEpsilon}. 
Notice that, by choosing $C_0$ sufficiently large, the $n_{+}^{H}$ term in \eqref{eq:Rough0} can be absorbed in the $n_{+}^{H}$-gap in ${\mathcal K}(z)$.

Notice that since $\varepsilon^{-1/2} \leq \varepsilon^{-1}$ the 
$\ML$ contribution from \eqref{eq:Rough0} will be smaller than the corresponding term in \eqref{eq:Rough1}.
Clearly, with this choice of $\varepsilon$, the term in \eqref{eq:Rough1} is controlled by the $\delta_0$-term in \eqref{eq:RoughBelow}.
Similarly, inserting the choice of $\varepsilon$ in \eqref{eq:Rough2} we get the $\delta_2$ term as well as the final term in \eqref{eq:RoughBelow}.

This finishes the proof of \eqref{eq:RoughBelow}.
\end{proof}

\section{More precise energy estimates}
\label{sec:Precise}
From Proposition~\ref{prop:FarAway} above, we see that the energy is too high unless $\rho_z \approx \rmu$. In this section we will give precise energy bounds in the complementary regime. We will always assume that
\begin{align}\label{eq:Close}
|\rho_z - \rmu | \leq \rmu C \max\left( \big(\delta_0+ \delta_1 + \delta_2\big)^{\frac{1}{2}}, (\rmu a^3)^{\frac{1}{4}} \right),
\end{align}
with the notation from Proposition~\ref{prop:FarAway}.

We will need the condition that
\begin{align}\label{eq:NewConditionNew}
K_{\ell}^2 \max\left( \big(\delta_0+\delta_1 + \delta_2\big)^{\frac{1}{2}}, (\rmu a^3)^{\frac{1}{4}} \right) \leq C^{-1},
\end{align}
for some sufficiently large universal constant.
This condition is satisfied by \eqref{con:ER}, \eqref{con:eTdK}, \eqref{eq:maerkedobbelt} and \eqref{cond:KH3-n}.

Notice, using \eqref{eq:Close} and \eqref{eq:NewConditionNew}, that
\begin{align}
\frac{|\rho_z-\rmu|}{\rmu} \leq C^{-1} K_{\ell}^{-2}.
\end{align}

We define the quadratic Bogoliubov Hamiltonian as follows,
\begin{align}\label{eq:BogHamPositive-1}
{\mathcal K}^{\rm Bog} =
\frac{1}{2} (2\pi)^{-3} \ell^3 \int \Big(& {\mathcal A}(k) (a_k^{\dagger} a_k + a^{\dagger}_{-k} a_{-k}) + {\mathcal B}(k) (a_k^{\dagger} a_{-k}^{\dagger} +  a_{k} a_{-k} )\nonumber \\
&+
{\mathcal C}(k) (a^{\dagger}_k + a^{\dagger}_{-k}+a_k + a_{-k}) \Big)
 \,dk ,
\end{align}
with
\begin{align}\label{eq:defABC}
{\mathcal A}(k) :=
(1-\varepsilon_N) \tau(k) + \rho_z \widehat{W_1}(k), \quad
{\mathcal B}(k) :=\rho_z \widehat{W_1}(k),\quad
{\mathcal C}(k):=  \ell^{-3} (\rho_z - \rmu) \widehat{W_1}(0) \widehat{\chi_{\Lambda}}(k) z\,.\quad
\end{align}
With this notation, we can rewrite/estimate ${\mathcal K}(z)$ from \eqref{eq:Kz} as follows,
\begin{align}\label{eq:SimplifyKz}
{\mathcal K}(z) 
& = {\mathcal K}^{\rm Bog} 
+ \frac{1}{2} \rho_z^2 \ell^{3} 
\Big(\widehat{g}(0) + \widehat{g\omega}(0)\Big)
  - \rmu   \widehat{g}(0) \rho_z \ell^3
  + \ER (\rmu - \rho_z)^2 a \ell^3
  \nonumber \\
  &\quad  +  \frac{b}{2 \ell^2} n_{+} + \varepsilon_T\frac{b}{8 d^2 \ell^2} n_{+}^{H} 
  +\varepsilon_T\frac{b |z|^2}{16 d^2 \rmu \ell^5} n_{+}^{H}
  + (\rho_z - \rmu) \widehat{W}_1(0) (2\pi)^{-3} \ell^{3} \int a_k^{\dagger} a_k\,dk \nonumber \\
  &\quad
  + {\mathcal Q}_1^{\rm ex}(z)+
  {\mathcal Q}_2^{\rm ex}(z) + {\mathcal Q}_3(z) \nonumber \\
 & \geq
  - \frac{1}{2} \rmu^2 \ell^{3} \widehat{g}(0)
+ \frac{1}{2} \rho_z^2 \ell^{3} \widehat{g\omega}(0)
+ \frac{1}{2} (\rho_z-\rmu)^2 \ell^{3} \widehat{g}(0) + \ER (\rmu - \rho_z)^2 a \ell^3+ {\mathcal K}^{\rm Bog} \nonumber \\
  &\quad  +  \frac{b}{4 \ell^2} n_{+} + \varepsilon_T\frac{b}{8 d^2 \ell^2} n_{+}^{H} 
  + {\mathcal Q}_1^{\rm ex}(z)+
  {\mathcal Q}_2^{\rm ex}(z) + {\mathcal Q}_3(z).
\end{align}
with the notations ${\mathcal Q}_1^{\rm ex}(z)$, ${\mathcal Q}_2^{\rm ex}(z)$,  and ${\mathcal Q}_3(z)$ from \eqref{eq:DefQ1zex}, \eqref{eq:kroelleQ3z}, and \eqref{eq:DefQ2ex}, and
where we used \eqref{eq:NewConditionNew} to absorb the quadratic operator $(\rho_z - \rmu) \widehat{W}_1(0) (2\pi)^{-3} \ell^{3} \int a_k^{\dagger} a_k\,dk$ in the gap.

\subsection{The Bogoliubov Hamiltonian}
\begin{theorem}[Analysis of Bogoliubov Hamiltonian]\label{thm:BogHamDiag}
Assume that $\Phi$ satisfies \eqref{eq:LocalizedAftercNumber} and \eqref{eq:Localized_n_AftercNumber}, that $\frac{9}{10} \rmu \leq \rho_z \leq \frac{11}{10} \rmu$ as well as Assumption~\ref{assump:params} and that $\rmu a^3$ is sufficiently small.
Then,
\begin{align}\label{eq:BogHamDiag}
&\frac{1}{2} \rho_z^2 \ell^3\widehat{g\omega}(0)+
 \frac{1}{2} (\rho_z-\rmu)^2 \ell^{3} \widehat{g}(0)
+ \langle \Phi, {\mathcal K}^{\rm Bog} \Phi \rangle \nonumber \\
&\geq
(2\pi)^{-3} \ell^3 \langle \Phi, \int {\mathcal D}_k b_k^{\dagger} b_{k}\,dk \, \Phi \rangle  
+ 4\pi  \frac{128 }{15\sqrt{\pi}} \rho_z^2 a \sqrt{\rho_z a^3} \ell^3 - C (\rmu - \rho_z)^2 a \ell^3 \rmu a R^2
 \nonumber\\
&\quad
- C \rmu^2 a \ell^3  \left( \frac{R a}{\ell^2} 
+ \varepsilon_N 
+ \epsilon(\rmu)\sqrt{\rmu a^3} 
+ \sqrt{\rmu \ell^3}( K_H'')^{4-M}\right).
\end{align}
Here
\begin{align}\label{eq:LebesgueError-Total}
\epsilon(\rmu) &= (\rmu a)^{\frac{1}{4}} \sqrt{R} + \varepsilon_T +
 (K_{\ell} s)^{-1} \left( 1 + \log(d^{-1}) + \log( \frac{K_{\ell} d s}{(\rmu a^3)^{1/2}}) \right) \nonumber \\
 &\quad 
+ \varepsilon_T (K_{\ell} d s)^{-1} \left( 1 +\log( \frac{K_{\ell} d s}{(\rmu a^3)^{1/2}}) \right).
\end{align}
Also, with ${\mathcal A}, {\mathcal B}$ as defined in \eqref{eq:defABC},
\begin{align}\label{eq:defDk}
{\mathcal D}_k := \frac{1}{2}\left( {\mathcal A}(k) + \sqrt{{\mathcal A}(k)^2 - {\mathcal B}(k)^2}\right),
\end{align}
and
\begin{align}\label{eq:Defb_k}
b_k :=  a_{k} + \alpha_k a_{-k}^{\dagger} + c_k,
\end{align}
with
\begin{align}
\alpha_k:= {\mathcal B}(k)^{-1} \left( {\mathcal A}(k) - \sqrt{{\mathcal A}(k)^2 - {\mathcal B}(k)^2}\right),
\end{align}
and
\begin{align}
c_k:=\begin{cases}\frac{2{\mathcal C}(k)}{{\mathcal A}(k) + {\mathcal B}(k) + \sqrt{{\mathcal A}(k)^2 - {\mathcal B}(k)^2}}, & |k| \leq \frac{1}{2}
K_H'' \ell^{-1},\\
0, & |k| > \frac{1}{2} K_H'' \ell^{-1}.
\end{cases}
\end{align}
\end{theorem}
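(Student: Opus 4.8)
\textbf{Proof plan for Theorem~\ref{thm:BogHamDiag}.}
The plan is to diagonalize the quadratic--plus--linear Hamiltonian $\mathcal{K}^{\rm Bog}$ explicitly by a Bogoliubov transformation combined with a shift, and then to compare the resulting $c$-number (ground state energy of $\mathcal{K}^{\rm Bog}$) with the Lee--Huang--Yang integral. First I would complete the square in the modes $(a_k,a_{-k}^\dagger)$: for $|k|\le \tfrac12 K_H''\ell^{-1}$ one checks by direct computation that
\[
\tfrac12\big(\mathcal{A}(k)(a_k^\dagger a_k + a_{-k}^\dagger a_{-k}) + \mathcal{B}(k)(a_k^\dagger a_{-k}^\dagger + a_k a_{-k}) + \mathcal{C}(k)(a_k^\dagger + a_{-k}^\dagger + a_k + a_{-k})\big)
= \mathcal{D}_k\, b_k^\dagger b_k + (\text{constant}_k),
\]
where $b_k$ is as in \eqref{eq:Defb_k}; the constant is $-\tfrac12(\mathcal{A}(k)-\sqrt{\mathcal{A}(k)^2-\mathcal{B}(k)^2}) - |c_k|^2\mathcal{D}_k(1-\alpha_k^2)$ up to commutator corrections, and the cross terms vanish by the definitions of $\alpha_k$ and $c_k$. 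For $|k|>\tfrac12 K_H''\ell^{-1}$ one has $\mathcal{C}(k)=0$ and the standard Bogoliubov identity applies with $c_k=0$. A subtlety is that the $a_k$ do not satisfy canonical commutation relations, only \eqref{eq:Commutator_general}; the deviation of $[a_k,a_k^\dagger]$ from $1$ is controlled by $\widehat{\chi}$ decay, and I would absorb the resulting commutator errors into the $\varepsilon_N$ and $(K_H'')^{4-M}$ terms, using that $b_k^\dagger b_k\ge 0$ only approximately but with an error of the stated order.

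The next step is to integrate the constant over $k$ and identify the main contributions. The dominant piece $-\tfrac12(2\pi)^{-3}\ell^3\int(\mathcal{A}(k)-\sqrt{\mathcal{A}(k)^2-\mathcal{B}(k)^2})\,dk$ is treated exactly as in the proof of \eqref{eq:Rough2}: apply Lemma~\ref{lm:aprioriintegral_new} (with $\tau$ replaced by its relevant branch) and then \eqref{eq:I2-integral-new}, which extracts $\tfrac12\rho_z^2\ell^3\widehat{g\omega}(0)$ --- cancelling the corresponding term on the left of \eqref{eq:BogHamDiag} --- together with the genuine LHY term. Here the key computation is the classical Lee--Huang--Yang integral
\[
(2\pi)^{-3}\int\Big(\tfrac12 k^2 + 8\pi\rho_z a - \sqrt{(\tfrac12 k^2+8\pi\rho_z a)^2 - (8\pi\rho_z a)^2} - \frac{(8\pi\rho_z a)^2}{k^2}\Big)dk = -\frac{128}{15\sqrt{\pi}}\,4\pi\rho_z a\,(\rho_z a^3)^{1/2}\cdot\rho_z,
\]
whose sign I must track carefully so that after cancellation the $+4\pi\frac{128}{15\sqrt\pi}\rho_z^2 a\sqrt{\rho_z a^3}\ell^3$ survives. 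The replacement of $\widehat{W}_1$ by $\widehat{g}$ costs $O(Ra^2/\ell^2)$ per \eqref{eq:I2-integral-new}--\eqref{eq:I2-integral-2-new}; the replacement of the true kinetic symbol $(1-\varepsilon_N)\tau(k)$ by $\tfrac12 k^2$ costs the $\varepsilon_N$ term plus the $\epsilon(\rmu)$ terms, which come from the infrared cutoffs $(s\ell)^{-1}$, $(ds\ell)^{-1}$ in $\tau$ and are exactly the logarithmic and $(K_\ell s)^{-1}$, $\varepsilon_T$ contributions in \eqref{eq:LebesgueError-Total}; and the UV truncation at $K_H''\ell^{-1}$ (where $\mathcal{C}$ is switched off and the region $|k|>\tfrac12 K_H''\ell^{-1}$ is discarded) costs $\sqrt{\rmu\ell^3}(K_H'')^{4-M}$, using $\widehat{\chi}$-decay and that $\mathcal{C}(k)$ carries a factor $\widehat{\chi_\Lambda}(k)$.

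The remaining term to control is the shift contribution $-\sum_{|k|\le \frac12 K_H''\ell^{-1}} |c_k|^2\mathcal{D}_k(1-\alpha_k^2)\ell^3(2\pi)^{-3}$, which is quadratic in $\mathcal{C}(k)\propto(\rho_z-\rmu)\widehat{\chi_\Lambda}(k)z$. Since $\mathcal{D}_k(1-\alpha_k^2)=\sqrt{\mathcal{A}(k)^2-\mathcal{B}(k)^2}\le \mathcal{A}(k)$ and $|c_k|^2\le C\mathcal{C}(k)^2/(\mathcal{A}(k)+\mathcal{B}(k))^2$, I would bound this by $C(\rho_z-\rmu)^2\rho_z\ell^3(2\pi)^{-3}\int|\widehat{\chi_\Lambda}(k)|^2 (\mathcal{A}(k)^{-1}\wedge\ldots)\,dk$; inserting $\mathcal{A}(k)\ge \rho_z\widehat{W}_1(k)\sim\rho_z a$ at small $k$ and using $\int|\widehat{\chi_\Lambda}|^2 = \ell^3$ gives, after accounting for the support radius $R$ of $\widehat{W}_1$ near $0$, a bound of the form $C(\rho_z-\rmu)^2 a\ell^3\,\rmu a R^2$ --- precisely the error term in \eqref{eq:BogHamDiag} with the bad $R^2$ dependence (this is the term the new $\ER(\rmu-\rho_z)^2 a\ell^3$ in $\mathcal{K}(z)$ is designed to absorb, as noted in the remark after Theorem~\ref{thm:Kafz}). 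The main obstacle I expect is bookkeeping rather than conceptual: keeping the non-canonical commutator errors, the infrared-cutoff errors, and the $R$-dependence of the shift term all simultaneously below LHY order, and in particular getting the $c_k$-completion of the square to close exactly despite $[a_k,a_k^\dagger]\ne 1$ --- this likely forces one to work with the $b_k$ only inside expectation values against $\Phi$ and to use the a priori bounds \eqref{eq:LocalizedAftercNumber}, \eqref{eq:Localized_n_AftercNumber} plus Lemma~\ref{lem:UsingML} to control $\langle\Phi, a_k^\dagger a_k\Phi\rangle$ in the error terms.
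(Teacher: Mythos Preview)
Your overall architecture is right, and it matches the paper's: apply the Bogoliubov identity (Theorem~\ref{thm:bogolubov-complete}) mode by mode to get the positive $\mathcal D_k b_k^\dagger b_k$ term, the integral $-\tfrac12\int(\mathcal A-\sqrt{\mathcal A^2-\mathcal B^2})$, and a shift contribution from the linear term, then evaluate the integral via the dedicated Lemma~\ref{lem:BogIntegral} (not Lemma~\ref{lm:aprioriintegral_new}, which is only a one-sided rough bound and cannot produce the constant $\tfrac{128}{15\sqrt\pi}$). However, there is a genuine gap in your treatment of the shift term, and a misconception about the commutators.

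\textbf{The shift term.} You claim the shift contribution is directly bounded by $C(\rho_z-\rmu)^2 a\ell^3\,\rmu aR^2$. This is false. From Theorem~\ref{thm:bogolubov-complete} the shift is $-\tfrac{2|\mathcal C(k)|^2}{\mathcal A(k)+\mathcal B(k)}$, and after integration one finds (using $\mathcal A+\mathcal B\ge 2\rho_z\widehat{W_1}(0)$ and $(2\pi)^{-3}\int|\widehat{\chi_\Lambda}|^2=\ell^3$)
\[
(2\pi)^{-3}\ell^{-3}(\rho_z-\rmu)^2\widehat{W_1}(0)^2 z^2\!\int\!\frac{|\widehat{\chi_\Lambda}(k)|^2}{\mathcal A(k)+\mathcal B(k)}\,dk
\;\le\;\tfrac12(\rho_z-\rmu)^2\widehat{g}(0)\,\ell^3\bigl(1+C\rmu aR^2\bigr).
\]
The leading piece $\tfrac12(\rho_z-\rmu)^2\widehat{g}(0)\,\ell^3$ is \emph{not} small; it is exactly cancelled by the term $+\tfrac12(\rho_z-\rmu)^2\widehat g(0)\,\ell^3$ sitting on the left-hand side of \eqref{eq:BogHamDiag}. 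Only the $O(\rmu aR^2)$ correction---coming from $|\widehat{W_1}(k)-\widehat{W_1}(0)|\le Ca(kR)^2$ on the support of $\widehat{\chi_\Lambda}$---survives and gives the stated error $C(\rmu-\rho_z)^2 a\ell^3\,\rmu aR^2$. Your proposal never mentions this cancellation and instead tries to bound the whole shift by the remainder, which cannot work.

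\textbf{The commutators.} You flag the non-canonical commutators as ``the main obstacle'' and speak of $b_k^\dagger b_k\ge 0$ holding ``only approximately''. Both concerns are misplaced. First, $b_k^\dagger b_k\ge 0$ holds exactly for any operator $b_k$. Second, the identity in Theorem~\ref{thm:bogolubov-complete} contains the commutators $[a_\pm,a_\pm^*]$ explicitly, multiplied by the nonnegative factor $\tfrac12(\mathcal A-\sqrt{\mathcal A^2-\mathcal B^2})$. Since $[a_k,a_k^\dagger]\le 1$ by \eqref{eq:Commutator}, one obtains the \emph{clean inequality} \eqref{eq:LowerBoundHBog} with no commutator error terms whatsoever; the inequality direction is automatic. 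There is nothing to absorb into $\varepsilon_N$ or $(K_H'')^{4-M}$ here. The $(K_H'')^{4-M}$ error arises only from discarding the linear terms $\mathcal C(k)$ for $|k|>\tfrac12 K_H''\ell^{-1}$ (your attribution of this is correct), and the $\varepsilon_N$ error comes purely from the kinetic symbol $(1-\varepsilon_N)\tau$ inside Lemma~\ref{lem:BogIntegral}.
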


\begin{remark}\label{rem:commutatorsMixed}
We notice that following commutation relations (using the ones for the $a_k$'s \eqref{eq:Commutator_general} and that $\widehat{\chi}$ is even and real).
\begin{align}\label{eq:bComms1}
[ b_k, b_{k'}]
= (\alpha_k - \alpha_{k'}) \Big(  \widehat{\chi^2}((k+k')\ell) - \widehat{\chi}(k \ell) \widehat{\chi}(k'\ell) \Big).
\end{align}
Also,
\begin{align}\label{eq:bComms2}
[ b_k, b_{k'}^{\dagger}] =
(1-\alpha_k \alpha_{k'}) \Big(\widehat{\chi^2}((k-k')\ell) -  \widehat{\chi}(k \ell) \widehat{\chi}(k'\ell) \Big).
\end{align}

We also have
\begin{align}
[\widetilde{a}_s^{\dagger} , b_{-k}^{\dagger}] = \alpha_{-k} [ \widetilde{a}_s^{\dagger}, a_k ] =  \alpha_{-k} \ell^{-3} \langle e^{isx}, Q \chi_{\Lambda} e^{ikx}\rangle,
\end{align}
and
\begin{align}
[\widetilde{a}_s, b_{-k}^{\dagger}] = [\widetilde{a}_s, a_{-k}^{\dagger}] = \ell^{-3} \langle e^{isx}, Q \chi_{\Lambda} e^{-ikx}\rangle
= \widehat{\chi}((s+k)\ell) - \widehat{\theta}(s\ell) \widehat{\chi}(k\ell).
\end{align}

\end{remark}

\begin{proof}[Proof of Theorem~\ref{thm:BogHamDiag}]
To simplify later calculations we start by removing the linear terms, i.e. ${\mathcal C}(k)$, for $|k| > \frac{1}{2} K_H'' \ell^{-1}$ from ${\mathcal K}^{\rm Bog}$, so we aim to prove
\begin{align}\label{eq:RemoveC's}
\langle \Phi, 
\frac{1}{2} (2\pi)^{-3} \ell^3 &\int_{\{|k| > \frac{1}{2} K_H'' \ell^{-1}\}} {\mathcal C}(k) (a^{\dagger}_k + a^{\dagger}_{-k}+a_k + a_{-k}) \Big)
 \,dk\,  \Phi \rangle \nonumber \\
 &\geq - C \rmu^2 a \ell^3 \sqrt{\rmu \ell^3}( K_H'')^{4-M}.
\end{align}
Since $a_k+ a_k^{\dagger} \leq a_k^{\dagger} a_k + 1$,
we find
\begin{align}
\frac{1}{2}& (2\pi)^{-3} \ell^3 \int_{\{|k| > \frac{1}{2} K_H''\ell^{-1}\}} {\mathcal C}(k) (a^{\dagger}_k + a^{\dagger}_{-k}+a_k + a_{-k}) 
 \,dk \nonumber \\
 &\geq
 - (2\pi)^{-3} |\rho_z -\rmu| \widehat{W_1}(0) |z|
 \int_{\{|k| > \frac{1}{2} K_H'' \ell^{-1}\}} |\widehat{\chi_{\Lambda}}(k)| (a_k^{\dagger} a_k +1)\,dk \nonumber \\
 &\geq
 -C |\rho_z -\rmu| \widehat{W_1}(0) |z| ( n_{+} + 1) \epsilon(\chi),
\end{align}
where
\begin{align}
\epsilon(\chi) := \ell^{-3}  \sup_{\{|k| > \frac{1}{2} K_H'' \ell^{-1}\}} (1+(k\ell)^2)^2  |\widehat{\chi_{\Lambda}}(k)|
\leq
C ( K_H'')^{4-M},
\end{align}
where we used Lemma~\ref{lem:DecaychiHat} to get the last estimate.
Estimating $n_{+}$ using \eqref{eq:Localized_n_AftercNumber} and using the assumed control of $|z|$, 
we get \eqref{eq:RemoveC's}.

By the estimate above, it suffices to consider
\begin{align}
\widetilde{\mathcal K}^{\rm Bog} :=
\frac{1}{2} (2\pi)^{-3} \ell^3 \int \Big( &{\mathcal A}(k) (a_k^{\dagger} a_k + a^{\dagger}_{-k} a_{-k}) + {\mathcal B}(k) (a_k^{\dagger} a_{-k}^{\dagger} +  a_{k} a_{-k} )\nonumber \\
&+
\widetilde{\mathcal C}(k) (a^{\dagger}_k + a^{\dagger}_{-k}+a_k + a_{-k}) \Big)
 \,dk ,
\end{align}
with ${\mathcal A}, {\mathcal B}$ from \eqref{eq:BogHamPositive-1} and
\begin{align}
\widetilde{\mathcal C}(k):=  \begin{cases} 0, & |k| \geq \frac{1}{2} K_{H}'' \ell^{-1}, \\
\ell^{-3} (\rho_z - \rmu) \widehat{W_1}(0) \widehat{\chi_{\Lambda}}(k) z, & \text{ else}.
\end{cases}
\end{align}

With the notation from Theorem~\ref{thm:BogHamDiag} and using Theorem~\ref{thm:bogolubov-complete} combined with \eqref{eq:Commutator} we find
\begin{align}\label{eq:LowerBoundHBog}
\widetilde{\mathcal K}^{\rm Bog} 
&\geq
(2\pi)^{-3} \ell^3 \int {\mathcal D}_k b_k^{\dagger} b_{k}\,dk  \nonumber\\
&\quad-\frac{1}{2} (2\pi)^{-3} \ell^3 \int  \left( {\mathcal A}(k) - \sqrt{{\mathcal A}(k)^2 - {\mathcal B}(k)^2}\right) \,dk \nonumber \\
&\quad - (\rho_z - \rmu)^2 \widehat{W_1}(0)^2 z^2  (2\pi)^{-3} \ell^{-3} \int_{\{|k| \leq \frac{1}{2} K_{H}'' \ell^{-1}\}} \frac{|{\widehat \chi}_{\Lambda}(k)|^2}{{\mathcal A}(k)+ {\mathcal B}(k)}.
\end{align}
The calculation of the integral $\int {\mathcal A}(k) - \sqrt{{\mathcal A}(k)^2 - {\mathcal B}(k)^2} \,dk$ is carried out in Lemma~\ref{lem:BogIntegral}.
The result is that
\begin{align}\label{eq:BogIntegral_new}
&-\frac{1}{2} (2\pi)^{-3} \ell^3 \int  \left( {\mathcal A}(k) - \sqrt{{\mathcal A}(k)^2 - {\mathcal B}(k)^2}\right) \,dk + \frac{\widehat{g\omega}(0)}{2} \rho_z^2 \ell^3
\nonumber \\
&
\geq 
4\pi  \frac{128 }{15\sqrt{\pi}} \rho_z^2 a \sqrt{\rho_z a^3} \ell^3 
- C \epsilon(\rmu) \rmu^2 a\sqrt{\rmu a^3} \ell^3
- C \rmu^2 \ell^3 \frac{R a}{\ell^2} a
-C \varepsilon_N a \rmu^2 \ell^3,
\end{align}
with $\epsilon(\rmu)$ defined in \eqref{eq:LebesgueError-Total}.

It is elementary, using that $W_1$ is even, that 
\begin{align}
\left| \widehat{W}_1(k) - \widehat{W}_1(0) \right|
\leq C a (kR)^2.
\end{align}
So for $|k| \leq C \sqrt{\rmu a}$, we find that ${\mathcal A}(k)+ {\mathcal B}(k)  \geq 2 \rho_z \widehat{W_1(0)} (1 - C (\rmu a^3) \frac{R^2}{a^2})$ (for sufficiently small values of $\rmu a^3$) using \eqref{con:ER} and \eqref{eq:W1-g-new}.
This lower bound extends to all $k$, since the kinetic energy is dominating unless $|k| \leq C \sqrt{\rmu a}$. Therefore, we find
that the last term in \eqref{eq:LowerBoundHBog} becomes controlled as
\begin{align}
&(\rho_z - \rmu)^2 \widehat{W_1}(0)^2 z^2  (2\pi)^{-3} \ell^{-3} \int_{\{|k| \leq \frac{1}{2} K_{H}'' \ell^{-1}\}} \frac{|{\widehat \chi}_{\Lambda}(k)|^2}{{\mathcal A}(k)+ {\mathcal B}(k)} \nonumber \\
&\leq
(\rho_z - \rmu)^2 \frac{\widehat{W_1}(0)}{2} \ell^3 (1 + C (\rmu a^3) \frac{R^2}{a^2})\nonumber \\
&\leq
(\rho_z - \rmu)^2 \frac{\widehat{g}(0)}{2} \ell^3 (1 + C (\rmu a^3)\frac{R^2}{a^2}),
\end{align}
where we used \eqref{eq:W1-g-new} and that $\ell^{-2} \ll \rmu a$ to get the last estimate.

Combining this estimate with \eqref{eq:RemoveC's} and \eqref{eq:BogIntegral_new} finishes the proof of Theorem~\ref{thm:BogHamDiag}.
\end{proof}

\subsection{The control of ${\mathcal Q}_3(z)$}
\label{subsec:Q3}
Recall the notations ${\mathcal Q}_1^{\rm ex}(z)$, ${\mathcal Q}_2^{\rm ex}(z)$,  and ${\mathcal Q}_3(z)$ from \eqref{eq:DefQ1zex}, \eqref{eq:kroelleQ3z}, and \eqref{eq:DefQ2ex}.
The quadratic Hamiltonian $(2\pi)^{-3} \ell^3 \int {\mathcal D}_k b_k^{\dagger} b_{k}\,dk$ from \eqref{eq:BogHamDiag} turns out to control the $3Q$-term ${\mathcal Q}_3(z)$ from \eqref{eq:kroelleQ3z}.
This we summarize as follows

\begin{theorem}\label{thm:Control3Q}
Assume that $\Phi$ satisfies \eqref{eq:LocalizedAftercNumber} and \eqref{eq:Localized_n_AftercNumber}.
Assume furthermore that \eqref{eq:Close} and 
Assumption~\ref{assump:params} are satisfied and that $\rmu a^3$ is sufficiently small.
Then,
\begin{align}\label{eq:Q3Cancels}
\Big\langle &
\Phi, \Big((2\pi)^{-3} \ell^3 \int_{\{|k| \geq \frac{1}{2} K_H'' \ell^{-1}\}} {\mathcal D}_k b_k^{\dagger} b_{k}\,dk
+
{\mathcal Q}_3(z) + {\mathcal Q}_2^{\rm ex}
+ {\mathcal Q}_1^{\rm ex}(z)
+ \frac{b}{50} (\frac{1}{\ell^2} n_{+} + \frac{\varepsilon_{T}}{(d \ell)^2} n_{+}^H)
\Big) \Phi \Big\rangle \nonumber \\
&\geq
-C \rmu^2 a \ell^3 \Big[
\sqrt{\frac{(K_H'')^2}{\widetilde{K}_{\mathcal M}}} K_{\ell}^{-1} (\rmu a^3)^{\frac{1}{2}} + \widetilde{K}_{\mathcal M}^{-\frac{1}{2}} d^{2M}
 + ( K_H'' )^{-4M-4} K_{\ell} d^{-12} (\rmu a^3)^{\frac{1}{2}}\nonumber \\
&\quad\quad\quad\quad\quad\quad +
(\rmu a^3) (K_H'')^{-6}  K_{\ell}^8 d^{-6}
\Big].
\end{align}
\end{theorem}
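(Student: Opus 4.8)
The plan is to realize the ``soft-pair cancellation'': the positive operator $(2\pi)^{-3}\ell^3\int_{\{|k|\geq\frac12 K_H''\ell^{-1}\}}{\mathcal D}_k b_k^{\dagger}b_k\,dk$ (the high-momentum part of the diagonalized Bogoliubov number operator from Theorem~\ref{thm:BogHamDiag}, which is all that is retained here) is used to absorb the dangerous part of ${\mathcal Q}_3(z)$, while the rest of ${\mathcal Q}_3(z)$ and the linear term ${\mathcal Q}_1^{\rm ex}(z)$ are controlled by the nonnegative quadratic operator ${\mathcal Q}_2^{\rm ex}$, the gap operators $\tfrac{b}{50}(\ell^{-2}n_{+}+\varepsilon_T(d\ell)^{-2}n_{+}^{H})$, the a priori bounds \eqref{eq:apriori_n}--\eqref{eq:apriorinn+NEW} of Theorem~\ref{thm:aprioribounds} (applicable because \eqref{eq:LocalizedAftercNumber}, \eqref{eq:Localized_n_AftercNumber} hold together with \eqref{eq:aprioriPsi}), and Lemma~\ref{lem:UsingML}. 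The first step is to record the structure on the high-momentum region $\{|k|\geq\tfrac12 K_H''\ell^{-1}\}$: there $c_k=0$, so $b_k=a_k+\alpha_k a_{-k}^{\dagger}$, and the parameter relations in Assumption~\ref{assump:params} put the thresholds of \eqref{eq:Def_tau} below $\tfrac12 K_H''\ell^{-1}$, so $\tau(k)\geq c|k|^2$, hence ${\mathcal D}_k\geq c|k|^2\geq c(K_H'')^2\ell^{-2}$ and $|\alpha_k|\leq C\rho_z|\widehat{W}_1(k)|\tau(k)^{-1}\ll1$; moreover, for $s\in\supp f_L$ (so $|s|\leq 2d^{-2}\ell^{-1}$, which the parameter choices make far below $\tfrac12 K_H''\ell^{-1}$) and $k\in P_H$, the shifted momentum $s-k$ again lies in this region, so all three of $b_k,b_{-k},b_{s-k},b_{k-s}$ are well defined.

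I would handle ${\mathcal Q}_1^{\rm ex}(z)$ first and separately: since $\int\widehat{\chi}_{\Lambda}(k)a_k^{\dagger}\,dk$ is proportional to $a^{\dagger}(Q\chi_{\Lambda}^2)$, with $\widehat{W_1\omega}(0)\leq Ca$ and $|z|^2=\rho_z\ell^3$ from \eqref{eq:Close}, a Cauchy--Schwarz against the nonnegative ${\mathcal Q}_2^{\rm ex}$ and the $n_{+}$-gap bounds ${\mathcal Q}_1^{\rm ex}(z)$ below by minus an error within the stated bound. Then I re-express ${\mathcal Q}_3(z)$ through the $b$-modes by substituting $a_k=(1-\alpha_k^2)^{-1}(b_k-\alpha_k b_{-k}^{\dagger})$ in both annihilators $a_k,a_{s-k}$ of \eqref{eq:kroelleQ3z}. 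Normal-ordering the expansion splits ${\mathcal Q}_3(z)$ into: the dangerous term $\propto z\,\widehat{W}_1(k)\alpha_k\,\widetilde{a}_s^{\dagger}b_{-k}^{\dagger}b_{s-k}+ h.c.$, which is the only summand whose naive estimate against the $n_{+}$-gap would cost an error of LHY size or larger; subleading summands carrying one or two extra factors of $\alpha$ (schematically $z\widehat{W}_1(k)\widetilde{a}_s^{\dagger}b_{s-k}b_k$, $z\widehat{W}_1(k)\alpha_{s-k}\widetilde{a}_s^{\dagger}b_{k-s}^{\dagger}b_k$, $z\widehat{W}_1(k)\alpha_k\alpha_{s-k}\widetilde{a}_s^{\dagger}b_{k-s}^{\dagger}b_{-k}^{\dagger}$, plus conjugates); and commutator remainders generated by $[b_k,b_{k'}]$, $[b_k,b_{k'}^{\dagger}]$, $[\widetilde{a}_s,b_{-k}^{\dagger}]$ as in Remark~\ref{rem:commutatorsMixed}.

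For the dangerous term I would complete the square: for fixed high $k$ and low $s$ use $\widetilde{a}_s^{\dagger}b_{-k}^{\dagger}b_{s-k}+ h.c.\leq \delta_{k,s}\,\widetilde{a}_s^{\dagger}b_{-k}^{\dagger}b_{-k}\widetilde{a}_s+\delta_{k,s}^{-1}b_{s-k}^{\dagger}b_{s-k}$, and choose $\delta_{k,s}$ (after integrating the weight $|z||\widehat{W}_1(k)\alpha_k|$ over the finite $s$-region) so that the $b_{s-k}^{\dagger}b_{s-k}$ contribution is dominated by half of $(2\pi)^{-3}\ell^3\int{\mathcal D}_{k'}b_{k'}^{\dagger}b_{k'}\,dk'$; the balance works because ${\mathcal D}_{k'}^{-1}$ against $|\widehat{W}_1(k)\alpha_k|$, integrated over $P_H$, is controlled by the finite quantity $\rho_z\int_{P_H}\widehat{W}_1(k)^2/|k|^2\,dk\leq C\rho_z a$ from Lemma~\ref{lem:Simple} (cf. \eqref{eq:I2-integral-new}) combined with the smallness of $\alpha$. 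The leftover $\int\delta_{k,s}|z||\widehat{W}_1(k)\alpha_k|\,\widetilde{a}_s^{\dagger}b_{-k}^{\dagger}b_{-k}\widetilde{a}_s$ is then estimated using $b_{-k}^{\dagger}b_{-k}\leq C(a_{-k}^{\dagger}a_{-k}+1)$, Lemma~\ref{lem:UsingML} (via \eqref{eq:DomByML}, \eqref{eq:DomByML_ext}) to control $\ell^3\iint\widetilde{a}_s^{\dagger}a_{-k}^{\dagger}a_{-k}\widetilde{a}_s$, and \eqref{eq:apriori_n}--\eqref{eq:apriorinn+NEW}, which after optimizing $\delta$ yields the error term $\sqrt{(K_H'')^2/\widetilde{K}_{\mathcal M}}\,K_{\ell}^{-1}(\rmu a^3)^{1/2}$. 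The remaining summands of ${\mathcal Q}_3(z)$ are bounded by straightforward Cauchy--Schwarz: high-momentum $b$- and $b^{\dagger}$-factors pair against $\int{\mathcal D}_k b_k^{\dagger}b_k$ or against the gap terms and ${\mathcal Q}_2^{\rm ex}$, while $\widetilde{a}_s$-factors are handled by Lemma~\ref{lem:UsingML}; the summands with two factors of $\alpha$ are small since $|\alpha|\lesssim(K_H'')^{-2}K_{\ell}^2$, $\int_{P_H}|k|^{-6}\,dk\sim(K_H'')^{-3}\ell^3$ and $\int_{\{|s|\leq 2d^{-2}\ell^{-1}\}}ds\sim d^{-6}\ell^{-3}$, giving the $(\rmu a^3)(K_H'')^{-6}K_{\ell}^8 d^{-6}$ contribution; and the commutator remainders and cutoff mismatches are estimated via the rapid decay of $\widehat{\chi}$ and $\widehat{\chi^2}$ (Lemma~\ref{lem:DecaychiHat}) and the smoothness of $f_L$, producing the $(K_H'')^{-4M-4}K_{\ell}d^{-12}(\rmu a^3)^{1/2}$ and $\widetilde{K}_{\mathcal M}^{-1/2}d^{2M}$ terms. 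The condition \eqref{cond:KH3-n} is what guarantees all these errors lie below LHY order; collecting them gives \eqref{eq:Q3Cancels}.

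The main obstacle is the soft-pair step itself: one must isolate the precise summand of ${\mathcal Q}_3(z)$ that is dangerous and show that the comparatively small operator $\int{\mathcal D}_k b_k^{\dagger}b_k$ — not the much larger gap $n_{+}$, whose use would spoil the LHY precision — absorbs it with an error below the $(\rmu a^3)^{1/2}$ scale, keeping careful track of operator orderings and of the finitely many correction terms the rewriting $a_k=(1-\alpha_k^2)^{-1}(b_k-\alpha_k b_{-k}^{\dagger})$ generates. Everything else is lower-order Cauchy--Schwarz bookkeeping, powered by the a priori estimates of Theorem~\ref{thm:aprioribounds} and by Lemma~\ref{lem:UsingML}.
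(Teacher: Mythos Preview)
Your overall strategy---rewrite ${\mathcal Q}_3(z)$ in terms of the $b_k$'s and use $(2\pi)^{-3}\ell^3\int{\mathcal D}_k b_k^\dagger b_k$ to complete a square---is correct, but the identification of which summand is dangerous and of which cancellations are needed is off, and this is a genuine gap.

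The term you call ``subleading,'' namely $z\,\widehat{W}_1(k)\,\widetilde{a}_s^\dagger b_{s-k}b_k$ (which carries \emph{no} factor of $\alpha$), is in fact the principal term (the paper's ${\mathcal Q}_3^{(1)}$). Completing the square of this term with ${\mathcal D}_k b_k^\dagger b_k$ produces a residue $T_2(k)$ whose commutator part, after integrating in $k$ via $(2\pi)^{-3}\int_{P_H}\widehat{W}_1(k)^2/{\mathcal D}_k\,dk\approx 2\widehat{W_1\omega}(0)$, equals $-2\rho_z\widehat{W_1\omega}(0)\sum_j Q_{L,j}^{*}\chi_\Lambda^2 Q_{L,j}$. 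This is of order $\rho_z a\, n_+$, which \emph{cannot} be absorbed in the gap $\ell^{-2}n_+$ since $\rho_z a\,\ell^2\sim K_\ell^2\gg 1$. It must instead \emph{cancel} against ${\mathcal Q}_2^{\rm ex}$; that cancellation (Lemma~\ref{lem:CompareQ2ex} together with Lemma~\ref{lem:EstimateIntegral}) is the heart of the proof, and your plan does not contain it. Conversely, the term you single out as ``dangerous,'' $z\,\widehat{W}_1(k)\alpha_k\,\widetilde{a}_s^\dagger b_{-k}^\dagger b_{s-k}$ (the paper's ${\mathcal Q}_3^{(2)}$), is actually mild: the extra factor $|\alpha_k|\leq C K_\ell^2(K_H'')^{-2}$ lets it be absorbed in a tiny fraction $(K_\ell/K_H'')^4$ of $\int{\mathcal D}_k b_k^\dagger b_k$.

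Your separate treatment of ${\mathcal Q}_1^{\rm ex}(z)$ also fails. A direct Cauchy--Schwarz estimate using Lemma~\ref{lem:UsingML} gives at best $|\langle\Phi,{\mathcal Q}_1^{\rm ex}(z)\Phi\rangle|\leq C\rho_z^2 a\ell^3\,\widetilde{K}_{\mathcal M}^{-1/2}$, and since $\widetilde{K}_{\mathcal M}\ll(\rmu a^3)^{-1}$ by \eqref{con:KBellKM} this is \emph{above} LHY order. Instead, ${\mathcal Q}_1^{\rm ex}(z)$ has to cancel against the commutator piece ${\mathcal Q}_3^{(4)}(z)=-z\int_{P_H}f_L(s)\,\widehat{W}_1(k)\alpha_k\,[b_{s-k},b_{-k}^\dagger](\widetilde{a}_s^\dagger+\widetilde{a}_s)$: using \eqref{eq:bComms2} and Lemma~\ref{lem:EstimateIntegral} one has $(2\pi)^{-3}\int_{P_H}\widehat{W}_1(k)\alpha_k\,dk=\rho_z\widehat{W_1\omega}(0)+O(K_H'' a/\ell)$, and it is precisely the $O(K_H'' a/\ell)$ mismatch, combined with the $\sqrt{\ML}$ coming from Lemma~\ref{lem:UsingML}, that yields the leading error $\sqrt{(K_H'')^2/\widetilde{K}_{\mathcal M}}\,K_\ell^{-1}(\rmu a^3)^{1/2}$ in \eqref{eq:Q3Cancels}. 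So the two ``extra'' operators ${\mathcal Q}_1^{\rm ex}$ and ${\mathcal Q}_2^{\rm ex}$ are not there to provide positivity for crude bounds; each one cancels a specific piece of ${\mathcal Q}_3(z)$, and missing those two cancellations is what breaks your argument.
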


\begin{proof}[Proof of Theorem~\ref{thm:Control3Q}]

Notice that (with the notation from \eqref{eq:defABC})
\begin{align}\label{eq:BoverA}
|{\mathcal B}(k)/{\mathcal A}(k)|  
\leq C \rho_z a |k|^{-2} \leq 
C K_{\ell}^2 (K_H'')^{-2} , \qquad \forall |k| \geq \frac{1}{2} K_H'' \ell^{-1}.
\end{align}
In particular, $|{\mathcal B}(k)/{\mathcal A}(k)| \leq \frac{1}{2}$, for $\rmu$ sufficiently small and $|k| \geq \frac{1}{2} K_H'' \ell^{-1}$.

This implies, by expansion of the square root that for all $|k| \geq \frac{1}{2} K_H'' \ell^{-1}$,
\begin{align}\label{eq:alphaInPH}
| \alpha_k| = | {\mathcal B}(k)|^{-1} \left( {\mathcal A}(k) - \sqrt{{\mathcal A}(k)^2 - {\mathcal B}(k)^2}\right)| 
\leq C \rho_z a |k|^{-2} 
\leq C K_{\ell}^2 (K_H'')^{-2}.
\end{align}
In particular, \eqref{eq:BoverA} and \eqref{eq:alphaInPH} are valid for $k=k'-s$, when $s \in {\mathcal P}_{\rm low}$ and $k' \in {\mathcal P}_{\rm high}$.

Similarly, we get for ${\mathcal D}_k$ defined in \eqref{eq:defDk} and for all $|k| \geq \frac{1}{2} K_H'' \ell^{-1}$,
\begin{align}\label{eq:Dk_in_PH}
{\mathcal D}_k \geq \frac{1}{2} k^2 \geq \frac{1}{8} (K_H'')^2 \ell^{-2}
\end{align}

For later convenience, we reformulate the first-order operator in \eqref{eq:Q3Cancels} in terms of the $\widetilde{a}_s$. We get
\begin{align}
{\mathcal Q}_1^{\rm ex}(z)&=\rho_z z \widehat{W_1 \omega}(0)
(2\pi)^{-3} \int \widehat{\chi_{\Lambda}}(s)  ( a_s^{\dagger} +  a_s)\,ds \nonumber \\
&=
\rho_z z \widehat{W_1 \omega}(0)
(2\pi)^{-3} \int \widehat{\chi_{\Lambda}^2}(s)  ( \widetilde{a}_s^{\dagger} + \widetilde{a}_s)\,ds \nonumber \\
&= \rho_z z \widehat{W_1 \omega}(0)
(2\pi)^{-3}\ell^3  \int \widehat{\chi^2}(s\ell)  ( \widetilde{a}_s^{\dagger} + \widetilde{a}_s)\,ds.
\end{align}

We start by rewriting ${\mathcal Q}_3(z)$ in terms of the $b_k$'s defined in \eqref{eq:Defb_k}.
Notice that $c_k, c_{s-k}=0$ if $k \in {\mathcal P}_{\rm high}$ and $s\in {\mathcal P}_{\rm low}$.
We find the basic relation (we will freely use that all involved functions are symmetric, e.g. $\alpha_k = \alpha_{-k}$)
\begin{align}
a_{s-k} &= \frac{1}{1-\alpha_{s-k}^2} \Big(b_{s-k} - \alpha_{s-k} b_{k-s}^{\dagger}
\Big), &
a_k &= \frac{1}{1-\alpha_k^2} \Big(b_k - \alpha_k b_{-k}^{\dagger}
\Big).
\end{align}
Therefore,
\begin{align}\label{eq:asInTermsOfbs}
&a_{s-k} a_k  \\
&= \frac{1}{1-\alpha_k^2}  \frac{1}{1-\alpha_{s-k}^2}\Big(
b_{s-k} b_k  - \alpha_k b_{-k}^{\dagger} b_{s-k} - \alpha_{s-k} b_{k-s}^{\dagger} b_k
+ \alpha_k \alpha_{s-k} b_{k-s}^{\dagger}  b_{-k}^{\dagger} 
-\alpha_{k} [b_{s-k}, b_{-k}^{\dagger}]
\Big) .\nonumber
\end{align}

We will decompose ${\mathcal Q}_3(z)$ according to the different terms in \eqref{eq:asInTermsOfbs}, i.e.
\begin{align}
{\mathcal Q}_3(z) = {\mathcal Q}_3^{(1)}(z) + {\mathcal Q}_3^{(2)}(z) + {\mathcal Q}_3^{(3)}(z) + {\mathcal Q}_3^{(4)}(z), 
\end{align}
where
\begin{align}
{\mathcal Q}_3^{(1)}(z) &:= z \ell^3 (2\pi)^{-6} \iint_{\{ k \in {\mathcal P}_{\rm high}\}}
\frac{f_L(s)
\widehat{W}_1(k)}{(1-\alpha_k^2)(1-\alpha_{s-k}^2)} \left(\widetilde{a}_s^{\dagger}  
b_{s-k} b_k + \alpha_k \alpha_{s-k} \widetilde{a}_s^{\dagger}   b_{k-s}^{\dagger} b_{-k}^{\dagger} + h.c.
 \right), \nonumber \\
{\mathcal Q}_3^{(2)}(z) &:= - z \ell^3 (2\pi)^{-6} \iint_{\{ k \in {\mathcal P}_{\rm high}\}}
\frac{f_L(s)
\widehat{W}_1(k)\alpha_k}{(1-\alpha_k^2)(1-\alpha_{s-k}^2)} \left(\widetilde{a}_s^{\dagger}  b_{-k}^{\dagger} b_{s-k}
+ b_{s-k}^{\dagger}  b_{-k} \widetilde{a}_s \right), \nonumber \\
{\mathcal Q}_3^{(3)}(z) &:= - z \ell^3 (2\pi)^{-6} \iint_{\{ k \in {\mathcal P}_{\rm high}\}}
\frac{f_L(s)
\widehat{W}_1(k)\alpha_{s-k}}{(1-\alpha_k^2)(1-\alpha_{s-k}^2)} \left(\widetilde{a}_s^{\dagger}  b_{k-s}^{\dagger} b_{k}
+ b_{k}^{\dagger}  b_{k-s} \widetilde{a}_s \right), \nonumber \\
\intertext{and}
{\mathcal Q}_3^{(4)}(z) &:= (2\pi)^{-6} z \ell^3 \iint_{\{k \in {\mathcal P}_{\rm high}\}} f_L(s) \widehat{W}_1(k) \frac{-\alpha_k}{(1-\alpha_k^2)(1-\alpha_{s-k}^2)}[b_{s-k},b_{-k}^{\dagger}] (\widetilde{a}_s^{\dagger} + \widetilde{a}_s).
\end{align}
The different ${\mathcal Q}_3^{(j)}(z)$'s will be estimated individually. The result of this is summarized in Lemma~\ref{lem:EstimatesOnQ3-js}.
Theorem~\ref{thm:Control3Q} follows by adding the estimates of Lemma~\ref{lem:EstimatesOnQ3-js}.
We have used \eqref{con:Constants} and the definition \eqref{eq:DefML} to simplify the total remainder.
This finishes the proof.
\end{proof}

\begin{lemma}\label{lem:EstimatesOnQ3-js}
Assume that $\Phi$ satisfies \eqref{eq:LocalizedAftercNumber} and \eqref{eq:Localized_n_AftercNumber}
Assume furthermore that \eqref{eq:Close} and Assumption~\ref{assump:params} are satisfied and $\rmu a^3$ is sufficiently small.
Then,
\begin{align}
\label{eq:EstimateQ3Tilde1} 
&\Big\langle \Phi, \Big( {\mathcal Q}_3^{(1)}(z)
+ \left(1-\frac{K_{\ell}^4}{ (K_H'')^{4}} \right) (2\pi)^{-3} \ell^3 \int_{\{|k| \geq \frac{1}{2} K_H'' \ell^{-1}\}} {\mathcal D}_k b_{k}^{\dagger} b_k \,dk\nonumber \\
&\qquad\qquad\qquad\qquad\qquad\qquad\qquad\qquad + {\mathcal Q}_2^{\rm ex}
+ \frac{b}{100} (\frac{1}{\ell^2} n_{+} + \frac{\varepsilon_{T}}{(d \ell)^2} n_{+}^H)
\Big) \Phi \Big\rangle   \\
&\geq  -C \rmu^2 a \ell^3 (\rmu a^3)^{\frac{1}{2}}  \left\{ (\rmu a^3)^{\frac{1}{2}} \left( (K_H'')^{-6} K_{\ell}^{8} d^{-6} + (K_H'')^{-2M-3} d^{-12}\right)
+ d^{-12} K_{\ell}  (K_H'')^{-2M-4}\right\},  \nonumber \\
\label{eq:EstimateQ3Tilde2+3}
&\Big\langle \Phi, \Big( {\mathcal Q}_3^{(2)}(z) + {\mathcal Q}_3^{(3)}(z)
+ \Big(\frac{K_{\ell}}{ K_H''}\Big)^4 (2\pi)^{-3} \ell^3 \int_{\{|k| \geq \frac{1}{2} K_H'' \ell^{-1}\}} {\mathcal D}_k b_{k}^{\dagger} b_k\,dk
\Big) \Phi \Big\rangle 
\nonumber \\ 
& \geq 
-C \rmu^2 a \ell^3 
( K_H'' )^{-2M-4} K_{\ell} d^{-12} (\rmu a^3)^{\frac{1}{2}},\\
\intertext{and}
\label{eq:EstimateQ3Tilde4}
&\Big\langle \Phi, \Big( {\mathcal Q}_3^{(4)}(z) + \rho_z z \widehat{W_1 \omega}(0)
(2\pi)^{-3} \int \widehat{\chi_{\Lambda}^2}(s)  ( \widetilde{a}_s^{\dagger} + \widetilde{a}_s)\,ds 
+  \frac{1}{100} \frac{b}{\ell^2} n_{+}
\Big) \Phi \Big\rangle \nonumber \\
&\geq
-C \rmu^2 a \ell^3 \sqrt{\frac{\ML}{|z|^2}} \Big(
K_H'' \frac{a}{\ell} +d^{2M}
\Big) .
\end{align}

\end{lemma}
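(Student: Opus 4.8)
The plan is to estimate each of the four pieces $\mathcal{Q}_3^{(1)}(z),\dots,\mathcal{Q}_3^{(4)}(z)$ separately by Cauchy--Schwarz, playing them against the positive operators available on the right-hand side of \eqref{eq:SimplifyKz}: the Bogoliubov-diagonal operator $(2\pi)^{-3}\ell^3\int_{\{|k|\geq\frac12 K_H''\ell^{-1}\}}\mathcal{D}_k b_k^\dagger b_k\,dk$, the manifestly positive $\mathcal{Q}_2^{\rm ex}$ (note $\widehat{W_1\omega}(k)+\widehat{W_1\omega}(0)=\int W_1\omega(1+\cos k\cdot x)\,dx\geq 0$), and the kinetic gap terms $\frac b{\ell^2}n_+$ and $\varepsilon_T\frac b{(d\ell)^2}n_+^H$. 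The standard inputs throughout are: the lower bound $\mathcal{D}_k\geq\frac18(K_H'')^2\ell^{-2}$ on $\mathcal{P}_{\rm high}$ from \eqref{eq:Dk_in_PH}; the smallness $|\alpha_k|,|\alpha_{s-k}|\leq CK_\ell^2(K_H'')^{-2}$ for $s\in\mathcal{P}_{\rm low}$, $k\in\mathcal{P}_{\rm high}$ from \eqref{eq:alphaInPH}; $\|\widehat W_1\|_\infty\leq Ca$ and $\widehat W_1(k)^2/k^2\leq C(K_H'')^{-2}a^2\ell^2$ on $\mathcal{P}_{\rm high}$; the fast decay of $\widehat\chi$ from Lemma~\ref{lem:DecaychiHat}, making the off-diagonal parts of the commutators \eqref{eq:bComms1}--\eqref{eq:bComms2} of size $(K_H'')^{-M}$; Lemma~\ref{lem:UsingML}, bounding $\ell^3\int_{\{|k|\leq 2K_H''\ell^{-1}\}}(a_k^\dagger a_k+\widetilde a_k^\dagger\widetilde a_k)\,dk$ by $C\ML$ (and the full integral by $C\ML+Cn_+^H$ via \eqref{eq:DomByML_ext}) on states localized as in \eqref{eq:LocalizedAftercNumber}--\eqref{eq:Localized_n_AftercNumber}; and finally $\rho_z\approx\rmu$, $|z|^2=\rho_z\ell^3\approx\rmu\ell^3$, together with the volume $|\mathcal{P}_{\rm low}|\sim d^{-6}\ell^{-3}$.

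For $\mathcal{Q}_3^{(1)}(z)$ I would split into the annihilation part $\widetilde a_s^\dagger b_{s-k}b_k+h.c.$ and the creation part $\alpha_k\alpha_{s-k}\widetilde a_s^\dagger b_{k-s}^\dagger b_{-k}^\dagger+h.c.$; since $k\in\mathcal{P}_{\rm high}$ forces $s-k\in\mathcal{P}_{\rm high}$, both $b$-factors carry a large $\mathcal{D}$. Writing $2\Re\langle b_{s-k}^\dagger\widetilde a_s\Phi,b_k\Phi\rangle\leq\varepsilon^{-1}\langle\Phi,\widetilde a_s^\dagger b_{s-k}b_{s-k}^\dagger\widetilde a_s\Phi\rangle+\varepsilon\langle\Phi,b_k^\dagger b_k\Phi\rangle$, normal-ordering (commutator $\leq 1$), and integrating against $f_L(s)\widehat W_1(k)$, the $b_k^\dagger b_k$-term is absorbed into a tiny multiple of $\int\mathcal{D}_k b_k^\dagger b_k$ for $\varepsilon$ small, and the $\widetilde a_s^\dagger b_{s-k}^\dagger b_{s-k}\widetilde a_s$-term into $\mathcal{Q}_2^{\rm ex}$ plus the $\ML$-bound, the $d^{-12}$ arising from $|\mathcal{P}_{\rm low}|$ entering quadratically; the creation part is smaller by $\alpha_k\alpha_{s-k}=O(K_\ell^4(K_H'')^{-4})$ and is handled against the $b^\dagger b^\dagger+bb$-completion of $\int\mathcal{D}_k b_k^\dagger b_k$. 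For $\mathcal{Q}_3^{(2)}+\mathcal{Q}_3^{(3)}$, each term already carries one factor $\alpha_k$ or $\alpha_{s-k}$, hence $O(K_\ell^2(K_H'')^{-2})$; bounding $\pm(\widetilde a_s^\dagger b_{-k}^\dagger b_{s-k}+h.c.)$ by $\mathcal{D}_{s-k}b_{s-k}^\dagger b_{s-k}$ and $\mathcal{D}_{s-k}^{-1}\widetilde a_s^\dagger b_{-k}^\dagger b_{-k}\widetilde a_s$ and integrating gives precisely the absorption into $(K_\ell/K_H'')^4\int\mathcal{D}_k b_k^\dagger b_k$ plus the stated error.

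The heart of the matter is \eqref{eq:EstimateQ3Tilde4}, the soft-pair cancellation between $\mathcal{Q}_3^{(4)}(z)$ and $\mathcal{Q}_1^{\rm ex}(z)$. Using \eqref{eq:bComms2}, $[b_{s-k},b_{-k}^\dagger]=(1-\alpha_k\alpha_{s-k})(\widehat{\chi^2}(s\ell)-\widehat\chi((s-k)\ell)\widehat\chi(k\ell))$, and since $k\in\mathcal{P}_{\rm high}$ both $\widehat\chi(k\ell)$ and $\widehat\chi((s-k)\ell)$ are $O((K_H'')^{-M})$, so $\mathcal{Q}_3^{(4)}(z)$ equals $-z(2\pi)^{-6}\ell^3\bigl(\int_{\mathcal{P}_{\rm high}}\widehat W_1(k)\alpha_k\,dk\bigr)\int\widehat{\chi^2}(s\ell)f_L(s)(\widetilde a_s^\dagger+\widetilde a_s)\,ds$ up to a coefficient of size $\rmu a(K_H''\,a/\ell+d^{2M})$ times $\int\widehat{\chi^2}(s\ell)(\widetilde a_s^\dagger+\widetilde a_s)\,ds$. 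Expanding the square root in $\alpha_k=\mathcal{B}(k)^{-1}(\mathcal{A}(k)-\sqrt{\mathcal{A}(k)^2-\mathcal{B}(k)^2})$ and using $\mathcal{A}(k)\approx\tau(k)\approx k^2$ on $\mathcal{P}_{\rm high}$ together with \eqref{eq:I2-integral-new} yields $(2\pi)^{-3}\int_{\mathcal{P}_{\rm high}}\widehat W_1(k)\alpha_k\,dk=\rho_z\widehat{W_1\omega}(0)$ up to errors of the same order; recalling the rewriting $\mathcal{Q}_1^{\rm ex}(z)=\rho_z z\widehat{W_1\omega}(0)(2\pi)^{-3}\ell^3\int\widehat{\chi^2}(s\ell)(\widetilde a_s^\dagger+\widetilde a_s)\,ds$ the two cancel to leading order, and the residual first-order operator $z\cdot(\text{small coeff})\cdot\int\widehat{\chi^2}(s\ell)(\widetilde a_s^\dagger+\widetilde a_s)\,ds$ is controlled by $z\widetilde a_s^\dagger+h.c.\leq\varepsilon'|z|^2+\varepsilon'^{-1}\widetilde a_s^\dagger\widetilde a_s$ with $\int|\widehat{\chi^2}(s\ell)|\,ds=O(\ell^{-3})$ and $\ell^3\int\widetilde a_s^\dagger\widetilde a_s\,ds\leq C\ML+Cn_+^H$ (the $n_+^H$-part absorbed in $\frac b{100\ell^2}n_+$), optimising $\varepsilon'=\sqrt{\ML/|z|^2}$ to get exactly $-C\rmu^2 a\ell^3\sqrt{\ML/|z|^2}(K_H''\,a/\ell+d^{2M})$.

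The main obstacle is precisely this last step: making the cancellation $(2\pi)^{-3}\int_{\mathcal{P}_{\rm high}}\widehat W_1(k)\alpha_k\,dk\approx\widehat{W_1\omega}(0)$ sharp enough that the residue is of the permitted size, while simultaneously, across all four pieces, tracking the exact powers of $K_H''$, $K_\ell$, $d$, $\widetilde K_{\mathcal M}$ and $\rmu a^3$; the bookkeeping is delicate because several competing small parameters ($\alpha_k\sim K_\ell^2(K_H'')^{-2}$, the $\widehat\chi$-decay $\sim(K_H'')^{-M}$, $\mathcal{D}_k^{-1}\sim(K_H'')^{-2}\ell^2$, $|\mathcal{P}_{\rm low}|\sim d^{-6}\ell^{-3}$, and the $\ML$-bound) enter each estimate and must be combined in the right order so that the errors, via the parameter relations of Appendix~\ref{sec:params}, stay inside the LHY remainder. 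Theorem~\ref{thm:Control3Q} then follows by adding \eqref{eq:EstimateQ3Tilde1}, \eqref{eq:EstimateQ3Tilde2+3} and \eqref{eq:EstimateQ3Tilde4} and collecting terms using the definition of $\ML$ and \eqref{con:Constants}.
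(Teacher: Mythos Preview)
Your treatment of \eqref{eq:EstimateQ3Tilde2+3} and \eqref{eq:EstimateQ3Tilde4} is essentially the paper's: for ${\mathcal Q}_3^{(2)}+{\mathcal Q}_3^{(3)}$ one commutes $\widetilde a_s^\dagger$ past $b_{-k}^\dagger$ and then uses the Cauchy--Schwarz you describe (both resulting terms carry a $b^\dagger b$ and are absorbed into a small multiple of $\int{\mathcal D}_k b_k^\dagger b_k$ via Lemma~\ref{lem:UsingML}); for ${\mathcal Q}_3^{(4)}$ the commutator formula plus Lemma~\ref{lem:EstimateIntegral} gives exactly the cancellation with ${\mathcal Q}_1^{\rm ex}(z)$ you outline.

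The gap is in your handling of ${\mathcal Q}_3^{(1)}(z)$. A pointwise Cauchy--Schwarz with a fixed small $\varepsilon$ cannot work: the $\varepsilon^{-1}$--side carries $\int_{{\mathcal P}_{\rm high}}|\widehat W_1(k)|\,dk$, which for merely $L^1$ potentials need not be finite (and is in any case not controlled by $a$). What is actually needed is to \emph{complete the square} in $b_k$, i.e.\ choose the $k$--dependent optimal weight; this produces operators $c_k$ with ${\mathcal D}_k c_k^\dagger c_k\geq 0$ and an exact remainder
\[
T_2(k)\;=\;-\frac{|z|^2\widehat W_1(k)^2}{{\mathcal D}_k}\,(2\pi)^{-6}\iint f_L(s)f_L(s')\,\widetilde a_{s'}^\dagger b_{s'-k}\,b_{s-k}^\dagger\widetilde a_s\,ds\,ds'\;+\;\text{(small)}.
\]
You have also misidentified which piece meets ${\mathcal Q}_2^{\rm ex}$. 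It is \emph{not} the normal--ordered part $\widetilde a_{s'}^\dagger b_{s-k}^\dagger b_{s'-k}\widetilde a_s$ that is absorbed into ${\mathcal Q}_2^{\rm ex}$; that piece (the paper's $T_{2,{\rm op}}'$) is fed back into a small multiple of $\int{\mathcal D}_k b_k^\dagger b_k$ using the $\ML$--bound. The term that cancels ${\mathcal Q}_2^{\rm ex}$ is the \emph{commutator} $[b_{s'-k},b_{s-k}^\dagger]$: by \eqref{eq:bComms2} its leading part is $\widehat{\chi^2}((s-s')\ell)$, so via \eqref{eq:RepInTermsOfatilde} the $s,s'$--integral becomes $\sum_j Q_{L,j}^*\chi_\Lambda^2 Q_{L,j}$, and the $k$--integral produces the \emph{precise} coefficient $(2\pi)^{-3}\int_{{\mathcal P}_{\rm high}}\widehat W_1(k)^2/{\mathcal D}_k\,dk=2\widehat{W_1\omega}(0)+O(K_H'' a/\ell)$ by \eqref{eq:CompIntegrals-Final-2}. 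This yields $-2\rho_z\widehat{W_1\omega}(0)\sum_j Q_{L,j}^*\chi_\Lambda^2 Q_{L,j}$, which is then matched against ${\mathcal Q}_2^{\rm ex}$ through Lemma~\ref{lem:CompareQ2ex}. Without this exact coefficient --- which a generic Cauchy--Schwarz does not deliver --- there is no cancellation, and the residual term is of order $\rho_z a\, n_+$, far larger than LHY.
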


\begin{proof}[Proof of Lemma~\ref{lem:EstimatesOnQ3-js}]
The proofs of \eqref{eq:EstimateQ3Tilde1}, \eqref{eq:EstimateQ3Tilde2+3} and \eqref{eq:EstimateQ3Tilde4} are each rather lengthy and will be carried out individually.

\begin{proof}[Proof of \eqref{eq:EstimateQ3Tilde4}]
Notice, using Lemma~\ref{lem:DecaychiHat} applied to $\chi^2$ that
\begin{align}
\sup_{\{|k| \geq d^{-2} \ell^{-1}\}} \widehat{\chi_{\Lambda}^2}(k)  \leq C_0 \ell^3 d^{4M}.
\end{align}

Therefore, by Cauchy-Schwarz weighted with $\sqrt{\ML}$, we get for any state $\Phi$ satisfying \eqref{eq:LocalizedAftercNumber} and \eqref{eq:Localized_n_AftercNumber} and using \eqref{cond:KHs-ML} as well as Lemma~\ref{lem:UsingML},
\begin{align}\label{eq:CSon1a-terms}
&\Big| \langle \Phi, \int \widehat{\chi_{\Lambda}^2}(s)  ( \widetilde{a}_s^{\dagger} + \widetilde{a}_s)\,ds \, \Phi \rangle \Big|\nonumber \\
& \leq C \sqrt{\ML} + \frac{1}{\sqrt{\ML}} \langle \Phi,  \Big( \int_{k \in {\mathcal P}_{\rm high}} |\widehat{\chi_{\Lambda}^2}(s)| \,\widetilde{a}_s^{\dagger} \widetilde{a}_s\,ds+ \int_{k \notin {\mathcal P}_{\rm high}}| \widehat{\chi_{\Lambda}^2}(s)|\, \widetilde{a}_s^{\dagger} \widetilde{a}_s\,ds \Big) \Phi \rangle
\nonumber \\
&\leq C \sqrt{\ML}, 
\end{align}
and similarly,
\begin{align}
\Big| \langle \Phi, \int \widehat{\chi_{\Lambda}^2}(s) \Big(1-f_L(s)\Big) ( \widetilde{a}_s^{\dagger} + \widetilde{a}_s)\,ds \, \Phi \rangle \Big|
&\leq C d^{2M} \sqrt{\rmu \ell^3}.
\end{align}
Therefore, using Lemma~\ref{lem:EstimateIntegral} below
 to estimate the $k$-integral, we find
\begin{align}
\Big| \langle \Phi,  \Big( z \rho_z  \widehat{W_1 \omega}(0) (2\pi)^{-3} & \int \widehat{\chi_{\Lambda}^2}(s)  ( \widetilde{a}_s^{\dagger} + \widetilde{a}_s)\,ds\, 
\nonumber \\
&\quad 
- z (2\pi)^{-6} \iint_{k \in {\mathcal P}_{\rm high}}  \widehat{W}_1(k) \alpha_{k} \widehat{\chi_{\Lambda}^2}(s)  f_L(s) ( \widetilde{a}_s^{\dagger} + \widetilde{a}_s)\,ds \Big) \Phi \rangle \Big| \nonumber \\
&
\leq
C \rho_z^2 a \ell^3 \Big(K_H'' \frac{a}{\ell} \sqrt{\frac{\ML}{|z|^2}} + d^{2M} \Big) 
\end{align}
The estimate is in agreement with the error term in \eqref{eq:EstimateQ3Tilde4}.

What remains in order to prove \eqref{eq:EstimateQ3Tilde4} is to estimate a difference of two integrals over the same domain. Writing out the commutator using \eqref{eq:bComms2} we have to estimate
\begin{align}\label{eq:1Q-integral}
z (2\pi)^{-6} \ell^3 \iint_{k \in {\mathcal P}_{\rm high}} \widehat{W}_1(k) \alpha_k 
\widehat{\chi^2}(s\ell) f_L(s) \left( 1- \frac{1-\alpha_{s-k} \alpha_{-k}}{(1-\alpha_k^2)(1-\alpha_{s-k}^2)}\right) ( \widetilde{a}_s^{\dagger} + \widetilde{a}_s),
\end{align}
and
\begin{align}\label{eq:1Q-integral2}
z (2\pi)^{-6} \ell^3 \iint_{k \in {\mathcal P}_{\rm high}} \widehat{W}_1(k) \alpha_k f_L(s) 
\frac{1-\alpha_{s-k} \alpha_{-k}}{(1-\alpha_k^2)(1-\alpha_{s-k}^2)}
\widehat{\chi}(k\ell) \widehat{\chi}\big((k-s)\ell\big)
 ( \widetilde{a}_s^{\dagger} + \widetilde{a}_s).
\end{align}

\begin{remark}
In the estimate of the term corresponding to \eqref{eq:1Q-integral} in \cite{FS}, i.e. \cite[(12.56)]{FS} there is an unimportant error. The factor $z$ in  \eqref{eq:1Q-integral} (or equivalently \cite[(12.56)]{FS}) is forgotten in the estimate \cite[(12.58)]{FS}.
which should have been,
\begin{align}
(12.56) \leq \ldots \leq C |z| \rho_z a \delta^2 (\varepsilon^{-1} + \varepsilon n_{+}).
\end{align}
In order to absorb the $n_{+}$-term in the `gap', the choice of $\varepsilon$ therefore needs an extra factor of $1/|z|$, and consequently the error term in \cite[(12.59)]{FS}---which also should replace the last error term in \cite[(12.51)]{FS} and the corresponding term in \cite[(12.41)]{FS}---becomes
\begin{align}
C z^2 \rho_z^2 a \ell^3 \delta^4 \frac{a}{\ell}.
\end{align}
Since $|z|^2 \leq 2 \rmu \ell^3$ and $\delta = (\rmu a^3)^{\frac{1}{6}} \widetilde{K}_H^2$ (see \cite[(12.4)]{FS}) this is estimated as
\begin{align}
Cz^2 \rho_z^2 a \ell^3 \delta^4 \frac{a}{\ell} \leq C \rmu^2 a \ell^3 \delta^4 K_{\ell}^2 = C \rmu^2 a \ell^3 \sqrt{\rmu a^3} 
(\rmu a^3)^{\frac{1}{6}} K_{\ell}^2
 \widetilde{K}_H^8.
\end{align}
Notice that with the choices of parameters in \cite{FS} $K_{\ell} = X^{-3/2}$, $\widetilde{K}_H = X^{-8/3}$ with $X=(\rmu a^3)^{\frac{1}{323}}$, we get
$$
(\rmu a^3)^{\frac{1}{6}} K_{\ell}^2
 \widetilde{K}_H^8 = (\rmu a^3)^{\frac{1}{6}(1-\frac{146}{323})} \ll 1. 
$$
So this is still a lower order error term and the proof of Theorem~1.2 in \cite[Section 13]{FS} is not affected.
\end{remark}

To estimate the expectation value of the expression in \eqref{eq:1Q-integral} in the state $\Phi$, we use \eqref{eq:alphaInPH} and a Cauchy-Schwarz inequality (weighted with $\sqrt{\ML}$) to get 
\begin{align}
|\langle \Phi \eqref{eq:1Q-integral} \Phi \rangle| 
&\leq
z (2\pi)^{-6} \ell^3\iint_{k \in {\mathcal P}_{\rm high}}  |\widehat{W}_1(k)| |\alpha_k| | \widehat{\chi^2}(s\ell) | f_L(s) (\alpha_k^2 + \alpha_{s-k}^2) \nonumber \\
&\qquad\qquad\qquad \times \left(\sqrt{\ML}+ \frac{1}{\sqrt{\ML}} \langle \Phi, \widetilde{a}_s^{\dagger} \widetilde{a}_s \Phi \rangle \right)\,dsdk\nonumber \\
&\leq C z (K_H'')^{-3} \ell^3 \rho_z^3 a^4 \sqrt{\ML}
\nonumber \\
&\leq C \rmu^2 a \ell^3 \sqrt{\frac{\ML}{|z|^2}} (K_H'')^{-3}  K_{\ell}^3 \sqrt{\rmu a^3},
\end{align}
where we used Lemma~\ref{lem:UsingML} to estimate the $s$-integral.

Using that by \eqref{eq:maerkedobbelt} we have $K_{\ell} \ll K_H''$, this term can be absorbed in the error term in \eqref{eq:EstimateQ3Tilde4}.

In the second integral \eqref{eq:1Q-integral2} the terms $\widehat{\chi}(k \ell)$ are very small due to regularity of $\chi$ and the fact that $k \in {\mathcal P}_{\rm high}$. Therefore this integral is much smaller.
We easily get, upon estimating the $s$-integral as above and the $k$-integral by Lemma~\ref{lem:EstimateIntegral}
 below,
\begin{align}
&\langle \Phi, z (2\pi)^{-6} \ell^3 \iint_{k \in {\mathcal P}_{\rm high}} \widehat{W}_1(k) \alpha_k f_L(s) 
\frac{1-\alpha_{s-k} \alpha_{-k}}{(1-\alpha_k^2)(1-\alpha_{s-k}^2)}
\widehat{\chi}(k\ell) \widehat{\chi}\big((k-s)\ell\big)
 ( \widetilde{a}_s^{\dagger} + \widetilde{a}_s) \Phi \rangle \nonumber \\
& \geq
 - C z \rmu a  \sup_{k \in {\mathcal P}_{\rm high}} |\widehat{\chi}(k\ell)| \sqrt{\ML} \nonumber \\
 &\geq - C \rmu^2 a \ell^3 \sqrt{\frac{\ML}{|z|^2}} (K_H'')^{-M}
\end{align}
where we used Lemma~\ref{lem:DecaychiHat} to get the last estimate. This error term is clearly in agreement with \eqref{eq:EstimateQ3Tilde4}, since 
by \eqref{eq:maerkedobbelt} we have $d^{-2} \ll K_H''$.

This finishes the proof of \eqref{eq:EstimateQ3Tilde4}.
\end{proof}

In the proof of \eqref{eq:EstimateQ3Tilde4} we used the following result.
\begin{lemma}\label{lem:EstimateIntegral}
Suppose \eqref{eq:Close} as well as 
Assumption~\ref{assump:params}.
Then for sufficiently small values of $\rmu a^3$ we have,
\begin{align}\label{eq:CompIntegrals-Final}
\left|  \rho_z \widehat{W_1 \omega}(0) - (2\pi)^{-3} \int_{k \in {\mathcal P}_{\rm high}}  \widehat{W}_1(k) \alpha_{k} \,dk \right|
\leq C \rho_z a   K_H'' (a/\ell).
\end{align}
Furthermore,
\begin{align}\label{eq:CompIntegrals-Final-2}
\left|    \widehat{W_1 \omega}(0) - (2\pi)^{-3} \int_{k \in {\mathcal P}_{\rm high}}  \frac{\widehat{W}_1(k)^2}{ 2{\mathcal D}_k} \,dk \right|
\leq C  a   K_H'' (a/\ell).
\end{align}
\end{lemma}

\begin{proof}
Collecting the estimates below, we really get
\begin{align}\label{eq:CompIntegrals}
&\left|  \rho_z \widehat{W_1 \omega}(0) - (2\pi)^{-3} \int_{k \in {\mathcal P}_{\rm high}}  \widehat{W}_1(k) \alpha_{k} \,dk \right| \nonumber \\
&\qquad \leq C \rho_z a \Big\{ K_H'' a \ell^{-1}+ K_{\ell}^3 (K_H'')^{-3} \sqrt{\rmu a^3} + \varepsilon_N
+\frac{R a }{\ell^2} \nonumber \\
&\qquad\qquad+ K_{\ell} (K_H'')^{-1} \sqrt{\rmu a^3} + \frac{a}{ds\ell} \left(1 + \log\left(\frac{K_{\ell}}{K_H'' \sqrt{\rmu a^3}}\right)
\right)\Big\}.
\end{align}
From this \eqref{eq:CompIntegrals-Final} follows upon using \eqref{eq:Cond_epsilonN}, \eqref{eq:maerkedobbelt} and \eqref{con:ER} to compare the different terms.

We will calculate on $(1-\varepsilon_N)^{-1}\rho_z \widehat{W_1 \omega}(0)$ instead of $\rho_z \widehat{W_1 \omega}(0)$. The difference between the two contributes with the $\varepsilon_N$-error term in \eqref{eq:CompIntegrals}.
We get,
\begin{align}
&(1-\varepsilon_N)^{-1}\rho_z \widehat{W_1 \omega}(0) - (2\pi)^{-3} \int_{k \in {\mathcal P}_{\rm high}}  \widehat{W_1}(k) \alpha_{k} \,dk\\
&=
(2\pi)^{-3} \int_{k \in {\mathcal P}_{\rm high}}  \widehat{W_1}(k) \Big( \rho_z \frac{\widehat{g}(k)}{2(1-\varepsilon_N)k^2} - \alpha_{k}\Big) \,dk 
+
(2\pi)^{-3} \int_{k \notin {\mathcal P}_{\rm high}}  \rho_z \widehat{W_1}(k) \frac{\widehat{g}(k)}{2(1-\varepsilon_N)k^2}\,dk. \nonumber 
\end{align}
We first estimate the last integral, using that $\varepsilon_N \ll 1$,
\begin{align}
\Big| \int_{k \notin {\mathcal P}_{\rm high}}  \widehat{W_1}(k) \frac{\widehat{g}(k)}{2(1-\varepsilon_N)k^2}\,dk \Big|
\leq 
C a^2 \int_{\{ |k| \leq  K_H'' \ell^{-1}\}} k^{-2} \,dk 
=
C a^2 K_H'' \ell^{-1}.
\end{align}
This is consistent with the error term in \eqref{eq:CompIntegrals}.

To continue, we write
\begin{align}
\widehat{W_1}(k) \alpha_k = \rho_{z}^{-1} {\mathcal A}(k) \Big( 1 - \sqrt{1- {\mathcal B}(k)^2/{\mathcal A}(k)^2}\Big).
\end{align}

Notice that $|{\mathcal B}(k)/{\mathcal A}(k)| \leq \frac{1}{2}$, for $\rmu$ sufficiently small using \eqref{eq:BoverA} and \eqref{eq:maerkedobbelt}.
Therefore,
\begin{align}
\Big| \widehat{W_1}(k) \alpha_k - \frac{ \rho_z \widehat{W_1}(k)^2}{2 {\mathcal A}(k)}\Big|
\leq C \rho_z^3 \frac{ \widehat{W_1}(k)^4}{{\mathcal A}(k)^3} \leq C \rho_z^3 a^4 k^{-6},
\end{align}
where we used that ${\mathcal A}(k) \geq \frac{1}{2} k^2$ in ${\mathcal P}_{\rm high}$.
Upon integrating over ${\mathcal P}_{\rm high}$ we find a term of magnitude
\begin{align}\label{eq:Comparealpha-A}
\int_{{\mathcal P}_{\rm high}} \Big| \widehat{W_1}(k) \alpha_k - \frac{ \rho_z \widehat{W_1}(k)^2}{2 {\mathcal A}(k)}\Big| \leq C \rho_z^3 a^4 (K_H'' \ell^{-1})^{-3},
\end{align}
in agreement with \eqref{eq:CompIntegrals}.

Finally, we consider,
\begin{align}\label{eq:SlightlyComp}
\rho_z &\Big| \int_{k \in {\mathcal P}_{\rm high} } \widehat{W_1}(k) \Big( \frac{\widehat{g}(k)}{2(1-\varepsilon_N)k^2} - \frac{\widehat{W}_1(k)}{2{\mathcal A}(k)} \Big) \Big|\\
&\leq  \rho_z \Big| \int_{k \in {\mathcal P}_{\rm high} } \widehat{W_1}(k)  \frac{\widehat{W}_1(k)-\widehat{g}(k)}{2(1-\varepsilon_N)k^2}\Big|+ \rho_z \Big|\int_{k \in {\mathcal P}_{\rm high} } \frac{\widehat{W_1}(k)^2}{2(1-\varepsilon_N)k^2}  \Big( 1 - \frac{(1-\varepsilon_N)k^2}{{\mathcal A}(k)} \Big)\Big|.\nonumber 
\end{align}
We estimate the first term in \eqref{eq:SlightlyComp} as 
\begin{align}\label{eq:goodinR}
\rho_z \Big| \int_{k \in {\mathcal P}_{\rm high} } \widehat{W_1}(k)  \frac{\widehat{W}_1(k)- \widehat{g}(k)}{2(1-\varepsilon_N)k^2}\Big| &
\leq 2 \rho_z \Big| \int_{{\mathbb R}^3} \widehat{W_1}(k)  \frac{\widehat{W}_1(k)-\widehat{g}(k)}{2 k^2}\Big| 
+  C \rho_z \frac{a^2 R^2}{\ell^2} K_H'' \ell^{-1}  \nonumber \\
&\leq
C \rho_z \ell^{-2} \iint \frac{g(x) |y|^2 g(y)}{|x-y|} \,dx dy +  C \rho_z \frac{a^2 R^2}{\ell^2} K_H'' \ell^{-1}  \nonumber \\
&\leq
C \rho_z \ell^{-2} \int \omega(y) |y|^2 g(y) \,dy +  C \rho_z \frac{a^2 R^2}{\ell^2} K_H'' \ell^{-1} \nonumber \\
&\leq C \rho_z R a^2 \ell^{-2}  +  C \rho_z \frac{a^2 R^2}{\ell^2} K_H'' \ell^{-1},
\end{align}
where we used \eqref{eq:Scattering3}, \eqref{omegabounds} and \eqref{eq:W1-g-new}. 
Note that in the second inequality in \eqref{eq:goodinR} we used the first inequality in \eqref{eq:W1-g-new} to conclude that the integral is positive.
Clearly, using \eqref{con:ER} this is in agreement with \eqref{eq:CompIntegrals}.

We estimate the second term in \eqref{eq:SlightlyComp}, using $0 \leq k^2 - \tau(k) \leq 2|k| (d s \ell)^{-1}$ in ${\mathcal P}_{\rm high}$,
\begin{align}
&\rho_z \Big|\int_{k \in {\mathcal P}_{\rm high} } \frac{\widehat{W_1}(k)^2}{2(1-\varepsilon_N)k^2}  \Big( 1 - \frac{(1-\varepsilon_N)k^2}{{\mathcal A}(k)} \Big)\Big| \nonumber \\
&\leq  C \rho_z^2 a^3 \int_{k \in {\mathcal P}_{\rm high} } k^{-4}  
 +
C\rho_z (d s \ell)^{-1} \Big(\int_{\{K_H'' \ell^{-1} \leq  |k| \leq a^{-1} \}} a^2 |k|^{-3}+ a \int \frac{\widehat{W_1}(k)^2}{2k^2} \Big) \nonumber \\
&\leq C\rho_z a K_{\ell} (K_H'')^{-1} \sqrt{\rmu a^3} +
C\rho_z a^2(d s \ell)^{-1} \left(1+ \log\big(\frac{K_{\ell}}{K_H'' \sqrt{\rmu a^3}}\big)\right)
\end{align}
Since this also agrees with \eqref{eq:CompIntegrals}, this finishes the proof of \eqref{eq:CompIntegrals},
and therefore of \eqref{eq:CompIntegrals-Final}.

The proof of \eqref{eq:CompIntegrals-Final-2} is similar. One can for instance use \eqref{eq:CompIntegrals-Final} and \eqref{eq:Comparealpha-A} and the fact that
$|1 - \frac{{\mathcal A}(k)}{{\mathcal D}_k}| \leq C \frac{\mathcal B(k)^2}{{\mathcal A}(k)^2} \leq C \rmu^2 a^2 k^{-4}$ in ${\mathcal P}_{\rm high}$. Then \eqref{eq:CompIntegrals-Final-2} follows.
\end{proof}

\begin{proof}[Proof of \eqref{eq:EstimateQ3Tilde2+3}]
The two operators $ {\mathcal Q}_3^{(2)}(z)$ and ${\mathcal Q}_3^{(3)}(z)$ are very similar and can be estimated in identical fashion, so we will only explicity consider the first.
We decompose 
\begin{align}
{\mathcal Q}_3^{(2)}(z) = I + II,
\end{align}
where
\begin{align}\label{eq:OneAndTwo}
I&:= - z \ell^3 (2\pi)^{-6} \iint_{\{ k \in {\mathcal P}_{\rm high}\}}
\frac{f_L(s)
\widehat{W}_1(k)\alpha_k}{(1-\alpha_k^2)(1-\alpha_{s-k}^2)} \left( b_{-k}^{\dagger} \widetilde{a}_s^{\dagger}  b_{s-k}
+ b_{s-k}^{\dagger} \widetilde{a}_s b_{-k} \right) , \nonumber \\
II&:= - z \ell^3 (2\pi)^{-6} \iint_{\{ k \in {\mathcal P}_{\rm high}\}}
\frac{f_L(s)
\widehat{W}_1(k)\alpha_k}{(1-\alpha_k^2)(1-\alpha_{s-k}^2)} \left([\widetilde{a}_s^{\dagger} , b_{-k}^{\dagger}] b_{s-k}
+ b_{s-k}^{\dagger}  [b_{-k}, \widetilde{a}_s] \right).
\end{align}
The second term $II$ will be very small, due to the smallness of the commutator (notice that $s$ and $k$ are `far apart' since $s \in {\mathcal P}_{\rm low}$ and $k \in {\mathcal P}_{\rm high}$).
So the main term is $I$, which we estimate using Cauchy-Schwarz and \eqref{eq:alphaInPH} as
\begin{align}
I \geq
-C \ell^3 z a K_{\ell}^2 (K_H'')^{-2}  \iint_{\{k \in {\mathcal P}_{\rm high}\}} f_L(s) \left( \varepsilon b_{-k}^{\dagger} \widetilde{a}_s^{\dagger} \widetilde{a}_s b_{-k} + \varepsilon^{-1} b_{s-k}^{\dagger} b_{s-k} \right).
\end{align}
Using Lemma~\ref{lem:UsingML}, we estimate
\begin{align}
\langle \Phi, \int f_L(s) b_{-k}^{\dagger} \widetilde{a}_s^{\dagger} \widetilde{a}_s b_{-k} \,ds \Phi \rangle \leq C \ell^{-3} \ML \langle \Phi,b_{-k}^{\dagger} b_{-k} \Phi \rangle.
\end{align}
Therefore, after doing the $s$-integral in the other term, we choose $\varepsilon = \sqrt{d^{-6}/\ML}$ 
and use \eqref{eq:Dk_in_PH} to insert ${\mathcal D}_k$.
This leads to the estimate
\begin{align}\label{eq:LoseABitOfb's}
\langle \Phi, I  \Phi \rangle & \geq - Cz a K_{\ell}^2 (K_H'')^{-2} 
d^{-3} (\ML)^{1/2}
\langle \Phi, \int_{\{|k| \geq \frac{1}{2} K_H'' \ell^{-1}\}} b_{k}^{\dagger} b_k \Phi \rangle \nonumber \\
&\geq - C K_{\ell}^4 (K_H'')^{-4} \ell^3 \sqrt{\frac{d^{-6} \ML}{\rmu \ell^3}} \langle \Phi, \int_{\{|k| \geq \frac{1}{2} K_H'' \ell^{-1}\}} {\mathcal D}_k b_{k}^{\dagger} b_k \Phi \rangle .
\end{align}
Notice that 
\begin{align}
d^{-6} \ML \ll \rmu \ell^3,
\end{align}
using \eqref{cond:KH3-n} and \eqref{cond:disjoint}.
Therefore, $I$ can be absorbed in the positive ${\mathcal D}_k b_{k}^{\dagger} b_k$ term in \eqref{eq:EstimateQ3Tilde2+3}.

We now return to the term $II$ from \eqref{eq:OneAndTwo}, where we use Remark~\ref{rem:commutatorsMixed} to evaluate the commutators.
Therefore, the term $II$ from \eqref{eq:OneAndTwo} can be estimated, using \eqref{eq:alphaInPH} and Lemma~\ref{lem:EstimateIntegral}, for arbitrary $\varepsilon >0$, as
\begin{align}
II &\geq - 2 z \ell^3 (2\pi)^{-6} 
\iint_{\{ k \in {\mathcal P}_{\rm high}\}}  | [\widetilde{a}_s^{\dagger} , b_{-k}^{\dagger}] |
f_L(s)
|\widehat{W}_1(k)\alpha_k| \left( \varepsilon b_{s-k}^{\dagger} b_{s-k}
+ \varepsilon^{-1} \right)\\
&\geq
-C z (\frac{K_{\ell}}{ K_H''})^2  \Big(\sup_{\{|p|\geq \frac{1}{2} K_H'' \ell^{-1}\}} |\widehat{\chi}(p\ell)| \Big) 
d^{-6}
\left(\varepsilon^{-1} \rmu a
+ \varepsilon a \frac{K_{\ell}^2}{ (K_H'')^4} \ell^2 \int_{\{|k| \geq \frac{1}{2} K_H'' \ell^{-1}\}} {\mathcal D}_k b_k^{\dagger} b_k \right).\nonumber 
\end{align}

Upon choosing $\varepsilon^{-1}$ proportional to $z (\sup_{\{|p|\geq \frac{1}{2} K_H'' \ell^{-1}\}} |\widehat{\chi}(p\ell)|) d^{-6} (K_H'')^{-2} a/\ell$, 
the $b_k^{\dagger} b_k$-term is easily absorbed in the $K_{\ell}^4 (K_H'')^{-4} \ell^3 \int_{\{|k| \geq \frac{1}{2} K_H'' \ell^{-1}\}} {\mathcal D}_k b_{k}^{\dagger} b_k$ term in \eqref{eq:EstimateQ3Tilde2+3}.
Therefore, using \eqref{eq:Close} and Lemma~\ref{lem:DecaychiHat}, $II$ contributes with an error term of order
\begin{align}
\rmu^2 a \ell^3 
( K_H'' )^{-2M-4} K_{\ell} d^{-12} (\rmu a^3)^{\frac{1}{2}}
\end{align}
to \eqref{eq:EstimateQ3Tilde2+3}.

This finishes the proof of \eqref{eq:EstimateQ3Tilde2+3}.
\end{proof}

\begin{proof}[Proof of \eqref{eq:EstimateQ3Tilde1}]
Finally, we estimate ${\mathcal Q}_3^{(1)}(z)$. We rewrite
\begin{align}
{\mathcal Q}_3^{(1)}(z) =
 z \ell^3 (2\pi)^{-6} \iint_{\{ k \in {\mathcal P}_{\rm high}\}}
\frac{f_L(s)
\widehat{W}_1(k)}{(1-\alpha_k^2)(1-\alpha_{s-k}^2)} \left(\widetilde{a}_s^{\dagger}  
b_{s-k} b_k + \alpha_k \alpha_{s-k} \widetilde{a}_{-s}^{\dagger}   b_{s-k}^{\dagger} b_{k}^{\dagger} + h.c.
 \right),
\end{align}
where we performed a change of variables in the second term to get the equality.

We combine this term with the diagonalized Bogoliubov Hamiltonian.
We leave a (small) $K_{\ell}^4 (K_H'')^{-4}$-part of this operator in order to control error terms appearing below.

Therefore, we consider
\begin{align}\label{eq:CentralEstimateOfQ31}
&{\mathcal Q}_3^{(1)}(z) + (2\pi)^{-3} \ell^3 \int_{\{ k \in {\mathcal P}_{\rm high}\}} (1- 2 K_{\ell}^4 (K_H'')^{-4}) {\mathcal D}_k b_k^{\dagger} b_{k}\,dk \nonumber \\
 &=
 (2\pi)^{-3} \ell^3 \int_{\{ k \in {\mathcal P}_{\rm high}\}} (1-2K_{\ell}^4 (K_H'')^{-4}){\mathcal D}_k c_k^{\dagger} c_k +
 T_1(k) +T_2(k) \nonumber \\
 &\geq (2\pi)^{-3} \ell^3 \int_{\{ k \in {\mathcal P}_{\rm high}\}} 
 T_1(k) +T_2(k).
 \end{align}
Here we have introduced the operators,
\begin{align}
c_k &:= b_k +\\
&\quad z (2\pi)^{-3} \int \frac{f_L(s)
\widehat{W}_1(k)}{(1-2K_{\ell}^4 (K_H'')^{-4}){\mathcal D}_k(1-\alpha_k^2)(1-\alpha_{s-k}^2)} 
\left( b_{s-k}^{\dagger} \widetilde{a}_s + \alpha_k \alpha_{s-k} \widetilde{a}_{-s} b_{s-k}^{\dagger}\right)ds,  \nonumber \\
\label{eq:T1_def}
T_1(k)  &:= - z (2\pi)^{-3} \int \frac{f_L(s)
\widehat{W}_1(k) \alpha_k \alpha_{s-k} }{(1-\alpha_k^2)(1-\alpha_{s-k}^2)} \left( [b_k^{\dagger}, \widetilde{a}_{-s} b_{s-k}^{\dagger}] + h.c. \right)ds,
\intertext{and}
\label{eq:T2_def}
T_2(k) &:= -\frac{|z|^2 \widehat{W}_1(k)^2}{(1-2 K_{\ell}^4 (K_H'')^{-4}){\mathcal D}_k(1-\alpha_k^2)^2} (2\pi)^{-6} \iint \frac{f_L(s)f_L(s')
}{(1-\alpha_{s-k}^2)(1-\alpha_{s'-k}^2)}  \nonumber \\
&\quad \times 
\left(\widetilde{a}_{s'}^{\dagger} b_{s'-k}      + \alpha_k \alpha_{s'-k} b_{s'-k} \widetilde{a}_{-s'}^{\dagger} \right)
\left( b_{s-k}^{\dagger} \widetilde{a}_s + \alpha_k \alpha_{s-k} \widetilde{a}_{-s} b_{s-k}^{\dagger}\right)\,ds\,ds'\nonumber \\
&\geq
-\left(1+ C K_{\ell}^4 (K_H'')^{-4} \right) |z|^2\frac{ \widehat{W}_1(k)^2}{{\mathcal D}_k} (2\pi)^{-6} \iint \frac{f_L(s)f_L(s')}{(1-\alpha_{s-k}^2)(1-\alpha_{s'-k}^2)} 
 \nonumber \\
&\quad\times 
\left(\widetilde{a}_{s'}^{\dagger} b_{s'-k}      + \alpha_k \alpha_{s'-k} b_{s'-k} \widetilde{a}_{-s'}^{\dagger} \right)
\left( b_{s-k}^{\dagger} \widetilde{a}_s + \alpha_k \alpha_{s-k} \widetilde{a}_{-s} b_{s-k}^{\dagger}\right)\,ds\,ds',
\end{align}
where we used \eqref{eq:alphaInPH} to get the estimate on $T_2$.

We will prove that
\begin{align}\label{eq:FinalT1}
&(2\pi)^{-3} \ell^3 \int_{\{ k \in {\mathcal P}_{\rm high}\}} T_1(k) \,dk + \frac{b}{200} \frac{1}{\ell^2} n_{+}
+ \frac{K_{\ell}^4}{ 2 (K_H'')^4} \ell^3 \int_{\{ |q| \geq \frac{1}{2} K_H'' \ell^{-1}\}} D_q b_q^{\dagger} b_q \,dq\nonumber \\
&\geq  - C \rmu^2 a \ell^3 (\rmu a^3) \left( (K_H'')^{-6} d^{-6} K_{\ell}^8 + (K_H'')^{-2M -3} d^{-12}\right).
\end{align}
and
\begin{align}\label{eq:FinalT2}
&(2\pi)^{-3} \ell^3 \int_{\{ k \in {\mathcal P}_{\rm high}\}} T_2(k) \,dk + {\mathcal Q}_2^{\rm ex} 
+ \frac{K_{\ell}^4}{ 2 (K_H'')^{4}} 
(2\pi)^{-3} \ell^3 \int_{\{|k| \geq \frac{1}{2} K_H'' \ell^{-1}\}} {\mathcal D}_k b_{k}^{\dagger} b_k \,dk \nonumber \\
&\quad  + \frac{b}{200} (\frac{1}{\ell^2} n_{+} + \frac{\varepsilon_{T}}{(d \ell)^2} n_{+}^H) \nonumber \\
&\geq  -C \rmu^2 a \ell^3 (\rmu a^3)^{\frac{1}{2}}  d^{-12} K_{\ell}  (K_H'')^{-2M-4}.
\end{align}
Clearly, \eqref{eq:EstimateQ3Tilde1} follows from \eqref{eq:FinalT1}, \eqref{eq:FinalT2} and \eqref{eq:CentralEstimateOfQ31}.

\noindent {\bf Proof of \eqref{eq:FinalT2}.}
We start by proving \eqref{eq:FinalT2}.
Notice that
\begin{align}
\widetilde{a}_{s'}^{\dagger} b_{s'-k}+ \alpha_k \alpha_{s'-k} b_{s'-k} \widetilde{a}_{-s'}^{\dagger}
=
\left( \widetilde{a}_{s'}^{\dagger}+ \alpha_k \alpha_{s'-k} \widetilde{a}_{-s'}^{\dagger}\right) b_{s'-k} + \alpha_k \alpha_{s'-k} [b_{s'-k}, \widetilde{a}_{-s'}^{\dagger}].
\end{align}
The contribution from the commutator term is very small, both due to the factors of $\alpha$ and to the commutator,
since $k \in {\mathcal P}_{\rm high}$, $s'\in {\mathcal P}_{\rm low}$. 
Therefore, we estimate for arbitrary $\varepsilon>0$,
\begin{align}
T_2(k) \geq (1+\varepsilon) T_2'(k) + (1+ \varepsilon^{-1}) T_2''(k),
\end{align}
where
\begin{align}
T_2'(k) &:= - \left(1+ C K_{\ell}^4 (K_H'')^{-4} \right) |z|^2 \frac{ \widehat{W}_1(k)^2}{{\mathcal D}_k} (2\pi)^{-6} \iint \frac{f_L(s)f_L(s')}{(1-\alpha_{s-k}^2)(1-\alpha_{s'-k}^2)} 
 \nonumber \\
&\qquad \times 
\left( \widetilde{a}_{s'}^{\dagger}      + \alpha_k \alpha_{s'-k} \widetilde{a}_{-s'}^{\dagger}\right) b_{s'-k} b_{s-k}^{\dagger} 
\left( \widetilde{a}_s + \alpha_k \alpha_{s-k} \widetilde{a}_{-s} \right)\,ds\,ds', \nonumber \\
T_2''(k) &:=- \left(1+ C K_{\ell}^4 (K_H'')^{-4} \right) |z|^2 \frac{ \widehat{W}_1(k)^2}{{\mathcal D}_k} (2\pi)^{-6} \nonumber \\
&\qquad \times \iint \frac{f_L(s)f_L(s')}{(1-\alpha_{s-k}^2)(1-\alpha_{s'-k}^2)}  |\alpha_k|^2 \alpha_{s'-k} \alpha_{s-k}
[b_{s'-k}, \widetilde{a}_{-s'}^{\dagger}] [ \widetilde{a}_{-s}, b_{s-k}^{\dagger}] .
\end{align}
For simplicity, we choose $\varepsilon = K_{\ell}^4 (K_H'')^{-4}  \ll 1$ and can therefore absorb the factor of $(1+\varepsilon)$ in $T_2'(k)$ by simply changing the value of $C$.
With this choice, we estimate using \eqref{eq:CompIntegrals-Final-2}, \eqref{eq:alphaInPH} and Remark~\ref{rem:commutatorsMixed},
\begin{align}
(2\pi)^{-3} \ell^3 \int_{k \in {\mathcal P}_{\rm high}} (1+ \varepsilon^{-1}) T_2''(k) \,dk
&\geq 
-C \rho_z a d^{-12} K_{\ell}^4 (K_H'')^{-4}  \sup_{k\in {\mathcal P}_{\rm high}, s \in {\mathcal P}_{\rm low}}
| [ \widetilde{a}_{-s}, b_{s-k}^{\dagger}] |^2 \nonumber \\
&\geq
- C \rmu^2 a \ell^3 (\rmu a^3)^{\frac{1}{2}} d^{-12} K_{\ell} (K_H'')^{-4} \sup_{\{|k| \geq \frac{1}{2} K_H'' \ell^{-1}\}}
| \widehat{\chi}(k\ell)|^2 \nonumber \\
&\geq - C
\rmu^2 a \ell^3 (\rmu a^3)^{\frac{1}{2}} d^{-12} K_{\ell} (K_H'')^{-4} (K_H'')^{-2M},
\end{align}
by Lemma~\ref{lem:DecaychiHat}. This is in agreement with the error bound in \eqref{eq:FinalT2}.

We continue to estimate the other part of $T_2(k)$.
\begin{align}\label{eq:T2prime}
T_2'(k) &:= - \left(1+ C K_{\ell}^4 (K_H'')^{-4}  \right) |z|^2 \frac{ \widehat{W}_1(k)^2}{{\mathcal D}_k} (2\pi)^{-6} \iint \frac{f_L(s)f_L(s')}{(1-\alpha_{s-k}^2)(1-\alpha_{s'-k}^2)} 
 \nonumber \\
&\qquad \times 
\left( \widetilde{a}_{s'}^{\dagger}      + \alpha_k \alpha_{s'-k} \widetilde{a}_{-s'}^{\dagger}\right) b_{s'-k} b_{s-k}^{\dagger} 
\left( \widetilde{a}_s + \alpha_k \alpha_{s-k} \widetilde{a}_{-s} \right)\,ds\,ds' 
\nonumber \\
&=T_{2,{\rm comm}}'(k) + T_{2,{\rm op}}'(k),
\end{align}
with
\begin{align}
T_{2,{\rm comm}}'(k)&:=- \left(1+ C K_{\ell}^4 (K_H'')^{-4}  \right) |z|^2\frac{ \widehat{W}_1(k)^2}{{\mathcal D}_k} (2\pi)^{-6} \iint \frac{f_L(s)f_L(s')}{(1-\alpha_{s-k}^2)(1-\alpha_{s'-k}^2)} 
 \nonumber \\
&\qquad \times 
\left( \widetilde{a}_{s'}^{\dagger}      + \alpha_k \alpha_{s'-k} \widetilde{a}_{-s'}^{\dagger}\right) [b_{s'-k} ,b_{s-k}^{\dagger} ]
\left( \widetilde{a}_s + \alpha_k \alpha_{s-k} \widetilde{a}_{-s} \right)\,ds\,ds' ,\nonumber \\
T_{2,{\rm op}}'(k)&:=-\left(1+ C K_{\ell}^4 (K_H'')^{-4}  \right) |z|^2
\frac{ \widehat{W}_1(k)^2}{{\mathcal D}_k} (2\pi)^{-6} \iint \frac{f_L(s)f_L(s')}{(1-\alpha_{s-k}^2)(1-\alpha_{s'-k}^2)} 
 \nonumber \\
&\qquad \times 
\left( \widetilde{a}_{s'}^{\dagger}+ \alpha_k \alpha_{s'-k} \widetilde{a}_{-s'}^{\dagger}\right) b_{s-k}^{\dagger} b_{s'-k}
\left( \widetilde{a}_s + \alpha_k \alpha_{s-k} \widetilde{a}_{-s} \right)\,ds\,ds'.
\end{align}
We start by estimating the last term in \eqref{eq:T2prime}.
We introduce the notation
\begin{align}
{\mathcal C} := \sup_{s,s' \in {\mathcal P}_{\rm low}, k \in {\mathcal P}_{\rm high} } \left| [ \widetilde{a}_{s'}^{\dagger}      + \alpha_k \alpha_{s'-k} \widetilde{a}_{-s'}^{\dagger}, b_{s-k}^{\dagger}]\,\right| \leq 1,
\end{align}
In fact,
it follows from Remark~\ref{rem:commutatorsMixed} and Lemma~\ref{lem:DecaychiHat} that
\begin{align}\label{eq:StoerrelseC}
{\mathcal C} \leq C K_{\ell}^2 (K_H'')^{-2}  (K_H'')^{-M} \ll 1.
\end{align}

To estimate the last term in \eqref{eq:T2prime} we first apply Cauchy-Schwarz, then commute the $\widetilde{a}$'s through the $b$'s and apply Cauchy-Schwarz to the commutator terms.
This yields,
\begin{align}
&\left\langle \Phi, \iint \frac{f_L(s)f_L(s')}{(1-\alpha_{s-k}^2)(1-\alpha_{s'-k}^2)}   \left( \widetilde{a}_{s'}^{\dagger}      + \alpha_k \alpha_{s'-k} \widetilde{a}_{-s'}^{\dagger}\right) b_{s-k}^{\dagger} b_{s'-k}
\left( \widetilde{a}_s + \alpha_k \alpha_{s-k} \widetilde{a}_{-s} \right) \Phi \right\rangle\nonumber \\
&\leq 
3 \iint f_L(s)f_L(s') \left\langle \Phi, b_{s-k}^{\dagger}
\left( \widetilde{a}_{s'}^{\dagger}      + \alpha_k \alpha_{s'-k} \widetilde{a}_{-s'}^{\dagger}\right) 
\left( \widetilde{a}_{s' }+ \alpha_k \alpha_{s'-k} \widetilde{a}_{-s'} \right)  b_{s-k} \Phi \right\rangle\nonumber \\
&\quad 
+C {\mathcal C}^2  \iint f_L(s)f_L(s')\,ds\,ds'
\nonumber \\
&\leq
C \ell^{-3} \ML  \int f_L(s)  \langle \Phi, b_{s-k}^{\dagger}b_{s-k} \Phi \rangle 
+ C d^{-12} \ell^{-6} {\mathcal C}^2,
\end{align}
where we applied \eqref{eq:DomByML} of Lemma~\ref{lem:UsingML} to $b_{s-k} \Phi$ for fixed $s,k$ to get the last estimate.

Therefore, using \eqref{eq:Dk_in_PH}
and \eqref{eq:CompIntegrals-Final-2},
\begin{align}\label{eq:T2opprime}
&\left\langle \Phi, (2\pi)^{-3} \ell^3 \int_{k \in {\mathcal P}_{\rm high}} T_{2,{\rm op}}'(k) \Phi \right\rangle\nonumber \\
&\geq
- C  \left( K_{\ell} d^{-6} (K_H'')^{-4} \ML \sqrt{\rmu a^3} \right)\ell^3 \langle \Phi \int_{\{|q| \geq \frac{1}{2} K_H'' \ell^{-1} \}} {\mathcal D}_q b_q^{\dagger} b_q \Phi \rangle - C \rmu  a d^{-12} {\mathcal C}^2.
\end{align}
Notice that, using \eqref{cond:KH3-n} and \eqref{eq:maerkedobbelt} that
$$
K_{\ell} d^{-6} (K_H'')^{-4} \ML \sqrt{\rmu a^3} \ll  K_{\ell}^4  (K_H'')^{-4}.
$$
Therefore, the negative ${\mathcal D}_q b_q^{\dagger} b_q$-term in \eqref{eq:T2opprime} can be absorbed in a fraction of the similar (positive) term in 
\eqref{eq:FinalT2}.
Using \eqref{eq:StoerrelseC} and the identity $\rmu a = \rmu^2 a \ell^3 (\rmu a^3)^{\frac{1}{2}} K_{\ell}^{-3}$, the error term in \eqref{eq:T2opprime} is easily seen to agree with \eqref{eq:FinalT2}.

We next consider the commutator term $T_{2,{\rm comm}}'(k)$ from \eqref{eq:T2prime}.
From \eqref{eq:bComms2} and using Lemma~\ref{lem:DecaychiHat} and that $M\geq 2$, we see that
\begin{align}\label{eq:bcommutator}
\left|  [b_{s'-k} ,b_{s-k}^{\dagger} ]  - \widehat{\chi^2}((s-s')\ell) \right| \leq 
C K_{\ell}^4 (K_H'')^{-4}.
\end{align}
Therefore, using also $\ell^3 \int \widetilde{a}_s^{\dagger} \widetilde{a}_s \leq C n_{+}$,
\begin{align}\label{eq:EstimT2comm}
T_{2,{\rm comm}}'(k) &\geq
-\big(1+C K_{\ell}^4 (K_H'')^{-4} \big) |z|^2 \frac{ \widehat{W}_1(k)^2}{{\mathcal D}_k} (2\pi)^{-6} \iint f_L(s)f_L(s')
\widetilde{a}_{s'}^{\dagger} \widehat{\chi^2}((s-s')\ell) \widetilde{a}_s \nonumber \\
&\quad -
C  |z|^2
\frac{ \widehat{W}_1(k)^2}{{\mathcal D}_k} K_{\ell}^4 (K_H'')^{-4} d^{-6} \ell^{-6} n_{+}.
\end{align}
Using \eqref{eq:RepInTermsOfatilde} and \eqref{eq:CompIntegrals-Final-2} we see that 
\begin{align}
-(2\pi)^{-3} &\ell^3 \int_{k \in {\mathcal P}_{\rm high}} \big(1+C K_{\ell}^4 (K_H'')^{-4} \big) |z|^2 \frac{ \widehat{W}_1(k)^2}{{\mathcal D}_k} (2\pi)^{-6} \iint f_L(s)f_L(s')
\widetilde{a}_{s'}^{\dagger} \widehat{\chi^2}((s-s')\ell) \widetilde{a}_s \nonumber \\
&=
-\rho_z \big(1+C K_{\ell}^4 (K_H'')^{-4} \big) \big((2\pi)^{-3} \int_{k \in {\mathcal P}_{\rm high}} \frac{ \widehat{W}_1(k)^2} {{\mathcal D}_k} \big)\sum_j Q_{L,j}^{*} \chi_{\Lambda}^2(x_j)  Q_{L,j} \nonumber \\
&\geq
- 2 \rho_z \big(1+C[K_H'' \frac{a}{\ell} + K_{\ell}^4 (K_H'')^{-4}] \big) \widehat{W_1 \omega}(0) \sum_j Q_{L,j}^{*} \chi_{\Lambda}^2(x_j) Q_{L,j}.
\end{align}

We now notice that, for all $0 < \varepsilon < 1$,
\begin{align}
\sum_j Q_{L,j}^{*}\chi_{\Lambda}^2(x_j)  Q_{L,j} \leq (1+\varepsilon) \sum_j Q_{j} \chi_{\Lambda}^2(x_j)  Q_{j} +
C \varepsilon^{-1} n_{+}^H.
\end{align}
We notice that $\rmu a = (d K_{\ell})^2 \frac{1} {d^2 \ell^2}$. Therefore, choosing $\varepsilon$ proportional to $\varepsilon_T^{-1} (d K_{\ell})^2$, (notice that $\varepsilon_T^{-1} (d K_{\ell})^2\ll 1$ by \eqref{con:eTdK}) we find, 
\begin{align}
-(2\pi)^{-3} &\ell^3 \int_{k \in {\mathcal P}_{\rm high}} \big(1+C  K_{\ell}^4 (K_H'')^{-4} \big) |z|^2 \frac{ \widehat{W}_1(k)^2}{{\mathcal D}_k} (2\pi)^{-6} \iint f_L(s)f_L(s')
\widetilde{a}_{s'}^{\dagger} \widehat{\chi^2}((s-s')\ell) \widetilde{a}_s \nonumber \\
&\geq
- 2 \rho_z \widehat{W_1 \omega}(0) \sum_j Q_j \chi_{\Lambda}^2(x_j)  Q_j 
\nonumber \\
&\quad 
- \frac{b}{500} \frac{\varepsilon_T}{(d\ell)^2} n_{+}^H
- C \rho_z a\big(K_H'' \frac{a}{\ell}+ K_{\ell}^4 (K_H'')^{-4} +  \varepsilon_T^{-1} (dK_{\ell})^2 \big) n_{+}
\end{align}
Notice now, using \eqref{con:eTdK}, \eqref{eq:maerkedobbelt}, \eqref{con:KBKell}, \eqref{cond:disjoint} and \eqref{con:Constants}
\begin{align}
\rho_z a \big(K_H'' \frac{a}{\ell}+ K_{\ell}^4 (K_H'')^{-4} +  \varepsilon_T^{-1} (dK_{\ell})^2 \big)
\ll \ell^{-2}.
\end{align}
Therefore, the above error terms can be absorbed in the energy gap.

Using the definition of ${\mathcal Q}_2^{\rm ex}$ from \eqref{eq:DefQ2ex}, we see from Lemma~\ref{lem:CompareQ2ex} (since $\rmu \approx \rho_z$ by \eqref{eq:Close}), 
\begin{align}
- 2 \rho_z \widehat{W_1 \omega}(0) \sum_j Q_j \chi_{\Lambda}^2(x_j)  Q_j + {\mathcal Q}_2^{\rm ex}
\geq -C \left( \rmu a n_{+}^H + \rmu a d^{-2} \left( \frac{aR}{\ell^2}\right)^{\frac{1}{2}} n_{+}\right).
\end{align}
By \eqref{eq:Someparameters} the $n_{+}$-term can be absorbed in the energy gap. Also $\rmu a = (\varepsilon_T^{-1}  d^2 K_{\ell}^2) \frac{\varepsilon_T}{d^2 \ell^2} \ll \frac{\varepsilon_T}{d^2 \ell^2}$ by \eqref{con:eTdK}.
So the $n_{+}^H$ can be absorbed in the corresponding gap term.

To estimate the error term in \eqref{eq:EstimT2comm} we integrate, using \eqref{eq:CompIntegrals-Final-2},
\begin{align}
- (2\pi)^{-3} \ell^3 \int_{k \in {\mathcal P}_{\rm high}} C  |z|^2
\frac{ \widehat{W}_1(k)^2}{{\mathcal D}_k} K_{\ell}^4 (K_H'')^{-4} d^{-6} \ell^{-6} n_{+}&\geq
-C K_{\ell}^6 (K_H'')^{-4} d^{-6} \frac{1}{\ell^2} n_{+}.
\end{align}
Notice that by using the first and the last estimate in \eqref{con:eTdK}, \eqref{eq:small_and_large} and \eqref{cond:disjoint}, we find
\begin{align}
\label{cond:KHstornok}
K_{\ell}^6 d^{-6} (K_H'')^{-4} \ll 1.
\end{align}
Therefore, the above term can also be absorbed in the energy gap.

This finishes the estimate of $T_{2,{\rm comm}}'(k)$ and therefore establishes the estimate \eqref{eq:FinalT2}  for $T_2(k)$ from \eqref{eq:T2_def}.

\noindent {\bf Proof of \eqref{eq:FinalT1}.}
We now estimate $T_1(k)$ from \eqref{eq:T1_def}.
We clearly have  using \eqref{eq:alphaInPH} and for all $\varepsilon_1, \varepsilon_2 >0$,
\begin{align}
T_1(k)
&\geq - C z \frac{\rho_z^2 a^3}{|k|^4}  \sup_{ s \in {\mathcal P}_{\rm low} } \left(|[b_k^{\dagger}, b_{s-k}^{\dagger}]| \right)
\int f_L(s) \left( \varepsilon_1 \widetilde{a}_{-s}^{\dagger} \widetilde{a}_{-s} + \varepsilon_1^{-1}\right)\,ds \nonumber \\
&\quad - C z \frac{\rho_z^2 a^3}{|k|^4} \sup_{s \in {\mathcal P}_{\rm low} } \left(|[b_k^{\dagger}, \widetilde{a}_{-s}^{\dagger}]|\right) \int f_L(s)
\left( \varepsilon_2 b_{s-k}^{\dagger} b_{s-k} + \varepsilon_2^{-1}\right) \,ds \nonumber \\
&\geq
-C z \frac{\rho_z^3 a^4}{|k|^6} \ell^{-3} \Big(  \varepsilon_1 n_{+} +  \varepsilon_1^{-1} d^{-6}\Big) \nonumber \\
&\quad - C z \frac{\rho_z^2 a^3}{|k|^4} (K_H'')^{-M} \Big( \varepsilon_2 (K_H'')^{-2} \ell^2 \int f_L(s) {\mathcal D}_{s-k} b^{\dagger}_{s-k} b_{s-k} \,ds
+  \varepsilon_2^{-1} d^{-6} \ell^{-3} \Big).
\end{align}
Here we inserted ${\mathcal D}_{s-k}$  for $|k| \geq K_H'' \ell^{-1}$ using \eqref{eq:Dk_in_PH}, estimated $\ell^3 \int \widetilde{a}_{-s}^{\dagger} \widetilde{a}_{-s} \leq C n_{+}$, and 
estimated the commutators for $|k| \geq K_H'' \ell^{-1}$ using Remark~\ref{rem:commutatorsMixed}, \eqref{eq:alphaInPH}, and Lemma~\ref{lem:DecaychiHat}
 as
\begin{align}\label{eq:SomeMoreComms}
\sup_{ s \in {\mathcal P}_{\rm low} } \left(|[b_k^{\dagger}, b_{s-k}^{\dagger}]| \right) \leq C \frac{\rho_z a}{|k|^2}, \qquad \text{ and } \qquad \sup_{s \in {\mathcal P}_{\rm low} } \left(|[b_k^{\dagger}, \widetilde{a}_{-s}^{\dagger}]|\right) 
\leq C (K_H'')^{-M}.
\end{align}
Therefore,
\begin{align}\label{eq:EndEstimateT1}
&\ell^3 (2\pi)^{-3}  \int_{k \in {\mathcal P}_{\rm high}} T_1(k)\,dk \nonumber \\
&\geq -C z \rho_z^3 a^4 (K_H'')^{-3} \ell^5 \Big(  \varepsilon_1 \ell^{-2} n_{+} +  \varepsilon_1^{-1} d^{-6} \ell^{-2} \Big) \nonumber \\
&\quad - C z \rho_z^2 a^3 (K_H'')^{-M-2} d^{-6} \ell^3 \Big( \varepsilon_2 (K_H'')^{-4} \ell^3 \int_{\{|q| \geq \frac{1}{2} K_H'' \ell^{-1}\}}  {\mathcal D}_{q} b^{\dagger}_{q} b_{q} \,dq
+  \varepsilon_2^{-1} (K_H'') \ell^{-2} \Big).
\end{align}
We choose $\varepsilon_1$ in such a way as to be able to absorb the $n_{+}$ term in the positive gap in \eqref{eq:FinalT1}, i.e.
\begin{align*}
\varepsilon_1^{-1}:= C z \rho_z^3 a^4 (K_H'')^{-3} \ell^5. 
\end{align*}
We also choose $\varepsilon_2$ in order to absorb the $\int {\mathcal D}_{q} b^{\dagger}_{q} b_{q} \,dq$-term in the similar positive term 
in \eqref{eq:FinalT1}, i.e.
\begin{align*}
\varepsilon_2^{-1} := C K_{\ell}^{-4} z \rho_z^2 a^3 (K_H'')^{-M-2} d^{-6} \ell^3.
\end{align*}
Therefore, $\ell^3 (2\pi)^{-3}  \int_{k \in {\mathcal P}_{\rm high}} T_1(k)\,dk$ contributes with a total error term of magnitude
\begin{align}
- C \left( z^2 \rho_z^6 a^8 (K_H'')^{-6} \ell^8 d^{-6} + 
z^2 \rho_z^4 a^6 (K_H'')^{-2M-3} d^{-12} \ell^4 K_{\ell}^{-4} 
\right).
\end{align}
Using now that $z^2 \approx \rmu \ell^3$ by \eqref{eq:Close}, we find the final estimate \eqref{eq:FinalT1}.

As observed earlier \eqref{eq:EstimateQ3Tilde1} follows from \eqref{eq:FinalT1}, \eqref{eq:FinalT2} and \eqref{eq:CentralEstimateOfQ31}, so
this finishes the proof of \eqref{eq:EstimateQ3Tilde1}.
\end{proof}

Now we have established all three inequalities \eqref{eq:EstimateQ3Tilde1}, \eqref{eq:EstimateQ3Tilde2+3} and \eqref{eq:EstimateQ3Tilde4}.
This finishes the proof of Lemma~\ref{lem:EstimatesOnQ3-js}.
\end{proof}

In the proof of \eqref{eq:EstimateQ3Tilde1} we used the following result.

\begin{lemma}\label{lem:CompareQ2ex}
Assume that the parameters satisfy 
Assumption~\ref{assump:params} and that $\rmu a^3$ is small enough.
Then, 
\begin{align}\label{eq:CompareQ2ex}
(2\pi)^{-3} \rho_z \ell^3  \int \big(\widehat{W_1 \omega}(k) - \widehat{W_1 \omega}(0)\big)  a_k^{\dagger} a_k
\geq - 4 \frac{\rho_z}{\rmu} \left( \rmu a  n_{+}^H - C \rmu a d^{-2} \left( \frac{aR}{\ell^2}\right)^{\frac{1}{2}} n_{+}\right).
\end{align}
\end{lemma}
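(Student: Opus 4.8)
The plan is to establish \eqref{eq:CompareQ2ex} as an operator inequality on $\mathcal{F}_{\rm s}(L^2(\Lambda))$; no hypothesis on a state is needed. First I would rewrite the left-hand side. The same computation as in \eqref{eq:BackToFirst}, but with the momentum cutoff removed, shows that $(2\pi)^{-3}\ell^3\int a_k^{\dagger}a_k\,dk$ is the second quantisation of $Q\chi_{\Lambda}^2 Q$, i.e.\ equals $\sum_j Q_j\chi_{\Lambda}^2(x_j)Q_j$; hence the left-hand side of \eqref{eq:CompareQ2ex} equals ${\mathcal Q}_2^{\rm ex}(z)-2\rho_z\widehat{W_1\omega}(0)\sum_j Q_j\chi_{\Lambda}^2(x_j)Q_j$, i.e.\ precisely the combination occurring in the application above. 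Moreover $W_1\omega$ is a non-negative, even function in $L^1({\mathbb R}^3)$ supported in $\overline{B(0,R)}$ (non-negativity because $W_1=g/(\chi*\chi(\cdot/\ell))\ge 0$ and $\omega\ge 0$), so $\widehat{W_1\omega}$ is real and even and
\[
0\le\widehat{W_1\omega}(0)-\widehat{W_1\omega}(k)=\int W_1\omega(x)\bigl(1-\cos(k\cdot x)\bigr)\,dx .
\]
In particular the operator on the left of \eqref{eq:CompareQ2ex} is non-positive, and it suffices to bound $\rho_z\ell^3(2\pi)^{-3}\int\bigl(\widehat{W_1\omega}(0)-\widehat{W_1\omega}(k)\bigr)a_k^{\dagger}a_k\,dk$ from above.

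The key pointwise input is the two-scale estimate
\[
0\le\widehat{W_1\omega}(0)-\widehat{W_1\omega}(k)\le C\min\bigl\{\,a,\ |k|^2a^2R\,\bigr\}.
\]
For the first alternative one uses $1-\cos\le 2$, $W_1\le Cg$ on $\{|x|\le R\}$ (from \eqref{eq:W1-g-new} and $R\ll\ell$), and $\int g\omega\le\int g=8\pi a$ by \eqref{eq:ScatLengthBasic}. For the second one uses $1-\cos t\le t^2/2$ together with \eqref{omegabounds} and $|x|\le R$ on $\supp g$, which gives $\int|x|^2W_1\omega\le Ca\int|x|\,g\le CaR\int g\le Ca^2R$.

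Then I would split the $k$-integral at a scale $P_*$ with $d^{-2}\ell^{-1}\ll P_*$ and $aRP_*^2\ll d^{-2}(aR/\ell^2)^{1/2}$; such a $P_*$ exists thanks to the parameter relations of Section~\ref{sec:params}, in particular $Ra/\ell^2\ll d^4$. On $\{|k|\le P_*\}$, using the $|k|^2$-bound and then $|k|^2\le P_*^2$ and $a_k^{\dagger}a_k\ge 0$,
\[
\rho_z\ell^3(2\pi)^{-3}\!\!\int_{\{|k|\le P_*\}}\!\!\bigl(\widehat{W_1\omega}(0)-\widehat{W_1\omega}(k)\bigr)a_k^{\dagger}a_k\,dk\le CP_*^2\rho_z a^2R\,\ell^3(2\pi)^{-3}\!\!\int a_k^{\dagger}a_k\,dk\le CP_*^2\rho_z a^2R\,\|\chi\|_\infty^2 n_{+},
\]
which by the choice of $P_*$ is at most $C\rho_z a\,d^{-2}(aR/\ell^2)^{1/2}n_{+}$. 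On $\{|k|\ge P_*\}$, using the $a$-bound, the contribution is dominated by $C\rho_z a$ times the second quantisation of $Q\chi_{\Lambda}\one_{\{|p|\ge P_*\}}\chi_{\Lambda}Q$. Exactly as in the proof of \eqref{eq:DomByML_ext}, inserting $\one_{[0,d^{-2}\ell^{-1})}(\sqrt{-\Delta})+\one_{[d^{-2}\ell^{-1},\infty)}(\sqrt{-\Delta})$ on both sides of $\chi_{\Lambda}$ and using Cauchy--Schwarz---this is where $d^{-2}\ell^{-1}\ll P_*$ enters---gives
\[
Q\chi_{\Lambda}\one_{\{|p|\ge P_*\}}\chi_{\Lambda}Q\le 4\,Q\one_{(d^{-2}\ell^{-1},\infty)}(\sqrt{-\Delta})Q+C\bigl(d^{-2}\ell^{-1}P_*^{-1}\bigr)^{M}Q,
\]
the constant $4$ being produced by the Cauchy--Schwarz step (together with the explicit bound on $\|\chi\|_\infty$), and the norm of the off-diagonal term estimated from the $|\cdot|^{-M}$-decay of $\widehat{\chi}$ (Lemma~\ref{lem:DecaychiHat}). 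In second quantisation this reads $\le 4\,n_{+}^{H}+C(d^{-2}\ell^{-1}P_*^{-1})^{M}n_{+}$, and the last term is of lower order than the one already obtained. Adding the two parts and recalling $\rho_z\approx\rmu$ yields \eqref{eq:CompareQ2ex}; note that both $n_{+}$-errors are in fact absorbed in the kinetic gap in the application.

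The main obstacle is the high-momentum part: one has to compare the $\chi_{\Lambda}$-smeared momentum localisation $Q\chi_{\Lambda}\one_{\{|p|\ge P_*\}}\chi_{\Lambda}Q$ with the sharp localisation $n_{+}^{H}$, whose cutoff $d^{-2}\ell^{-1}$ is fixed; this is only possible after splitting the $\chi_{\Lambda}$'s at that smaller scale, which forces $P_*\gg d^{-2}\ell^{-1}$, while the low-momentum part forces $P_*$ below the crossover $\ell^{-1}(aR/\ell^2)^{-1/2}$ of the two-scale bound, and hence pins down the size of the remaining $n_{+}$-error. Verifying that these two constraints are compatible and that the resulting error is admissible---essentially that $Ra/\ell^2$ is small against suitable powers of $d$ and $K_{\ell}$---is the one delicate point, and it is exactly of the type collected in Section~\ref{sec:params}.
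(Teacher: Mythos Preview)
Your argument is correct and follows essentially the same route as the paper's proof: both establish the two-scale bound $0\le \widehat{W_1\omega}(0)-\widehat{W_1\omega}(k)\le C\min\{a,\,|k|^2 a^2R\}$, split the $k$-integral at a threshold (your $P_*$ is the paper's $\delta^{-1}\ell^{-1}$), use the quadratic bound below the threshold and the uniform bound above it, and then compare $Q\chi_\Lambda\one_{\{|p|\ge P_*\}}\chi_\Lambda Q$ with $n_+^H$ by inserting $\one_{[0,d^{-2}\ell^{-1})}+\one_{[d^{-2}\ell^{-1},\infty)}$ and applying Cauchy--Schwarz. The only cosmetic difference is that the paper retains the off-diagonal remainder as $C(\delta d^{-2})^{2M'}$ and takes $M'=1$ before optimising $\delta^2=d^2(aR/\ell^2)^{1/2}$, whereas you keep the full $M$-th power decay and phrase the choice of $P_*$ as a pair of compatible inequalities; both lead to the same $d^{-2}(aR/\ell^2)^{1/2}n_+$ error.
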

Notice that
\begin{align}\label{eq:Someparameters}
\rmu a d^{-2} \left( \frac{aR}{\ell^2}\right)^{\frac{1}{2}} = K_{\ell} d^{-2} (\rmu a^3)^{\frac{1}{4}} (\rmu a R^2)^{\frac{1}{4}} \ll \ell^{-2},
\end{align}
using \eqref{con:ER}, \eqref{con:KBKell}, \eqref{con:sdKellKB}, \eqref{con:eTdK} and \eqref{eq:small_and_large}.

\begin{proof}
Using that $W_1 \omega$ is even and has finite support, as well as \eqref{omegabounds}, we easily get the estimate, for arbitrary $\delta>0$,
\begin{align}
&(2\pi)^{-3} \rho_z \ell^3  \int \big(\widehat{W_1 \omega}(k) - \widehat{W_1 \omega}(0)\big)  a_k^{\dagger} a_k 
\nonumber \\
&
\geq - 2 (2\pi)^{-3}  \rho_z a \ell^3\int_{\{ |k|\geq \delta^{-1} \ell^{-1}\}} a_k^{\dagger} a_k 
- C \rho_z a \delta^{-2} \frac{a R}{\ell^2} n_{+}.
\end{align}
Upon undoing the second quantization we can continue the estimate as
\begin{align}
(2\pi)^{-3} \ell^3\int_{\{ |k|\geq \delta^{-1} \ell^{-1}\}} a_k^{\dagger} a_k &= \sum_j Q_j \chi_{\Lambda}(x_j) \one_{\{|p_j|\geq \delta^{-1} \ell^{-1}\}} \chi_{\Lambda}(x_j)  Q_j \nonumber \\
&\leq 2 n_{+}^H + 2 {\mathcal N} n_{+},
\end{align}
with
\begin{equation}
{\mathcal N}  := \| \one_{\{|p|\leq d^{-2} \ell^{-1}\}} \chi_{\Lambda}(x) \one_{\{|p|\geq \delta^{-1} \ell^{-1}\}} \|^2  \leq C (\delta d^{-2})^{2M'},
\end{equation}
for all $M' \leq M$,
using the regularity of $\chi$.

For simplicity, we choose $M'=1$ and optimize by choosing $\delta^2 = d^2 \left( \frac{aR}{\ell^2}\right)^{\frac{1}{2}} \ll 1$. 
This yields \eqref{eq:CompareQ2ex}.
\end{proof}

\section{Proof of the main theorem}\label{sec:proofCombined}
In this section we will combine the results of the previous sections in order to prove Theorem~\ref{thm:LHY}.
As noted in Section~\ref{sec:Box}, Theorem~\ref{thm:LHY} follows from 
Theorem~\ref{thm:LHY-Background}, which again follows from Theorem~\ref{thm:LHY-Box}.

\begin{proof}[Proof of Theorem~\ref{thm:LHY-Box}]
We will use the concrete choice of parameters set down in \eqref{eq:Xparameter}-\eqref{eq:choiceM} in Appendix~\ref{sec:params}.
Recall in particular the notation $X$ defined in \eqref{eq:Xchoice}.

To prove Theorem~\ref{thm:LHY-Box} let $\Psi \in {\mathcal
  F}_s(L^2(\Lambda))$ be a normalized $n$-particle trial state satisfying \eqref{eq:aprioriPsiHalf}. Notice that if such a state does not exist, then there is nothing to prove. Using Proposition~\ref{prop:LocMatrices} there exists a normalized $n$-particle wave function $\widetilde{\Psi}\in {\mathcal F}_{\rm s}(L^2(\Lambda))$ satisfying \eqref{eq:LocalizedNPlus} and such that
\begin{align}\label{eq:LargeMatricesNew}
\langle\Psi,{\mathcal H}_\Lambda(\rmu)\Psi\rangle \geq
\langle\widetilde\Psi,{\mathcal H}_\Lambda(\rmu)\widetilde\Psi\rangle
- C X^{\frac{1}{4}} \rmu^2 a \ell^3 (\rmu a^3)^{1/2}.
\end{align}
Notice that the error term in \eqref{eq:LargeMatricesNew} is consistent with the error term in Theorem~\ref{thm:LHY-Box}.

Using Proposition~\ref{prop:Hamilton2ndQuant} we find that our localized state $\widetilde\Psi$ satisfies 
\begin{align}
\langle \widetilde{\Psi}, {\mathcal H}_{\Lambda}(\rmu) \widetilde{\Psi} \rangle 
\geq \langle \widetilde{\Psi}, {\mathcal H}_{\Lambda}^{\rm 2nd}(\rmu) \widetilde{\Psi} \rangle  - C \rmu^2 a \ell^3  (\rmu a^3)^{1/2} \left( X^3 + (\rmu a^3)^{1/2}  \right) , 
\end{align}
where the error clearly is consistent with the error term in Theorem~\ref{thm:LHY-Box}.

At this point, we can apply
Theorem~\ref{thm:Kafz} to get the lower bound
\begin{align}\label{eq:IntroK(z)-}
\langle \widetilde{\Psi}, {\mathcal H}_{\Lambda}^{\rm 2nd}(\rmu) \widetilde{\Psi} \rangle
\geq
\inf_{z \in {\mathbb R}_{+} } \inf_{\Phi} \langle \Phi, {\mathcal K(z)} \Phi \rangle
- C \left( X^{\frac{9}{2}} + X^{\frac{1}{2}} \right) K_{\ell}^{3} \rmu a,
\end{align}
where the second infimum is over all normalized $\Phi  \in {\mathcal F}_{\rm s}(\Ran(Q))$ satisfying \eqref{eq:LocalizedAftercNumber} and \eqref{eq:Localized_n_AftercNumber}.

Since 
\begin{align}
K_{\ell}^{3}  \rmu a = \rmu^2 a \ell^3 \sqrt{\rmu a^3},
\end{align}
this implies that we need to prove that
\begin{align}\label{eq:ToProve}
\inf_{\Phi} \langle \Phi, {\mathcal K(z)} \Phi \rangle \geq
-4\pi \rmu^2 a \ell^3 +  4\pi \rmu^2 a  \ell^3 \frac{128}{15\sqrt{\pi}}(\rmu a^3)^{\frac{1}{2}} 
 - C \rmu^2 a \ell^3 (\rmu a^3)^{\frac{1}{2}}
X^{\frac{1}{5}},
\end{align}
for all normalized $\Phi$ satisfying \eqref{eq:LocalizedAftercNumber} and \eqref{eq:Localized_n_AftercNumber}.

We will use that with our choice of parameters \eqref{eq:NewConditionNew} is satisfied.

If $\rho_z = |z|^2/\ell^3$ satisfies \eqref{eq:Close-I}, then Proposition~\ref{prop:FarAway} provides a lower bound on $\langle \Phi, {\mathcal K(z)} \Phi \rangle$ which is larger than needed for \eqref{eq:ToProve} by a factor of $2$ on the LHY-term. Since \eqref{eq:NewConditionNew} is satisfied the assumptions of Proposition~\ref{prop:FarAway} are verified.

If $\rho_z$ satisfies \eqref{eq:Close}---which is opposite inequality to \eqref{eq:Close-I}---and $\Phi$ satisfies \eqref{eq:LocalizedAftercNumber} and \eqref{eq:Localized_n_AftercNumber}, then by \eqref{eq:SimplifyKz} (using again that \eqref{eq:NewConditionNew} is satisfied) and Theorem~\ref{thm:BogHamDiag} we get
\begin{align}\label{eq:Combined1}
\langle \Phi, {\mathcal K(z)} \Phi \rangle 
&\geq - \frac{1}{2} \rmu^2 \ell^{3} \widehat{g}(0)
+ 4\pi \frac{128}{15\sqrt{\pi}} \rho_z a \sqrt{\rho_z a^3}\ell^3 \nonumber \\
&\quad +  \langle \Phi, \left(\frac{b}{4 \ell^2} n_{+} + \varepsilon_T\frac{b}{8 d^2 \ell^2} n_{+}^{H} 
  + {\mathcal Q}_1^{\rm ex}(z)+
  {\mathcal Q}_2^{\rm ex}(z) + {\mathcal Q}_3(z) \right) \Phi \rangle \nonumber \\
  &\quad + (2\pi)^{-3} \ell^3 \langle \Phi, \int {\mathcal D}_k b_k^{\dagger} b_{k}\,dk \, \Phi \rangle - {\mathcal E}_1,
\end{align}
where the error term ${\mathcal E}_1$ satisfies
\begin{align}
{\mathcal E}_1 \leq C \rmu^2 a \ell^3 (\rmu a^3)^{\frac{1}{2}} 
X^{\frac{1}{5}}.
\end{align}
Here the error term in $X^{\frac{1}{5}}$ comes from the $\varepsilon(\rmu)$ in \eqref{eq:LebesgueError-Total}. Notice that this error is compatible with \eqref{eq:ToProve}.

Now we can apply Theorem~\ref{thm:Control3Q} to obtain the inequality
\begin{align}\label{eq:Combined2}
&(2\pi)^{-3} \ell^3 \langle \Phi, \int {\mathcal D}_k b_k^{\dagger} b_{k}\,dk \, \Phi \rangle \nonumber \\
&+  \langle \Phi, \left(\frac{b}{4 \ell^2} n_{+} + \varepsilon_T\frac{b}{8 d^2 \ell^2} n_{+}^{H} 
  + {\mathcal Q}_1^{\rm ex}(z)+
  {\mathcal Q}_2^{\rm ex}(z) + {\mathcal Q}_3(z) \right) \Phi \rangle 
  \geq - {\mathcal E}_2,
\end{align}
with error term
\begin{align}
{\mathcal E}_2 
&\leq C \rmu^2 a \ell^3 \sqrt{\rmu a^3} X^{\frac{23}{2}}.
\end{align}
Here the dominant contribution to the error (with our choice of parameters) comes from 
the $\sqrt{\frac{(K_H'')^2}{\widetilde{K}_{\mathcal M}}} K_{\ell}^{-1}$-term. This error is clearly consistent with \eqref{eq:ToProve}.

Combining \eqref{eq:Combined1} and \eqref{eq:Combined2}, we get
\begin{align}
\langle \Phi, {\mathcal K(z)} \Phi \rangle 
&\geq - \frac{1}{2} \rmu^2 \ell^{3} \widehat{g}(0)
+ 4\pi \frac{128}{15\sqrt{\pi}} \rmu a \sqrt{\rmu a^3}\ell^3 \nonumber \\
&\quad - ( {\mathcal E}_1 + {\mathcal E}_2 + C \left|\rmu a \sqrt{\rmu a^3} - \rho_z a \sqrt{\rho_z a^3}\right|\ell^3).
\end{align}
This establishes \eqref{eq:ToProve} for $\rho_z$ satisfying \eqref{eq:Close}, since by \eqref{eq:Close}, \eqref{eq:NewConditionNew} and \eqref{eq:Xparameter} we have
\begin{align}
\left|\rmu a \sqrt{\rmu a^3} - \rho_z a \sqrt{\rho_z a^3}\right|\ell^3 \leq C \rmu a \sqrt{\rmu a^3} \ell^3 K_{\ell}^{-2} = C \rmu a \sqrt{\rmu a^3} \ell^3 X^3.
\end{align}
This finishes the proof of \eqref{eq:ToProve} and therefore of Theorem~\ref{thm:LHY-Box}.
\end{proof}

\appendix

\section{Bogoliubov method}\label{sec:simplebog}
In this section we recall a simple consequence of the Bogoliubov method (see \cite[Theorem~6.3]{LS} and \cite{BS})

\begin{theorem}[Simple case of Bogoliubov's method]\label{thm:bogolubov-complete}\hfill\\
Let $a_\pm$ be operators on a Hilbert space satisfying $[a_+,a_-]=0$
For $\cA>0$, 
${\mathcal B} \in {\mathbb R}$ satisfying either $|{\mathcal B}| < {\mathcal A}$ or ${\mathcal B} = {\mathcal A}$ and arbitrary 
$\kappa\in\C$, we have the operator identity 
\begin{align}\label{eq:BogIdentity}
&\cA(a^{*}_+ a_+ +a^{*}_{-} a_{-})+{\mathcal B} (a^*_+a^*_{-}+a_+a_{-})+
\kappa(a^*_+ +a_{-})+\overline{\kappa}(a_+ +a_{-}^{*})\nonumber \\
&=
{\mathcal D} (b_{+}^* b_{+} + b_{-}^* b_{-})
-\frac{1}{2}(\cA-\sqrt{\cA^2-{\mathcal B}^2})
([a_{+},a^*_{+}]+[a_{-},a^*_{-}])-\frac{2|\kappa|^2}{{\mathcal A} + {\mathcal B}},
\end{align}

where
\begin{align}
{\mathcal D} := \frac{1}{2} \left( {\mathcal A} + \sqrt{{\mathcal A}^2 - {\mathcal B}^2}\right),
\end{align}
and
\begin{align}\label{eq:bpm}
b_{+}:= a_{+} + \alpha a_{-}^* + \overline{c_0},\qquad 
b_{-}:=a_{-} + \alpha a_{+}^{*} + c_0,
\end{align}
with
\begin{align}
\alpha:= {\mathcal B}^{-1} \left( {\mathcal A} - \sqrt{{\mathcal A}^2 - {\mathcal B}^2}\right),\qquad
c_0:=\frac{2\overline{\kappa}}{{\mathcal A} + {\mathcal B} + \sqrt{{\mathcal A}^2 - {\mathcal B}^2}}.
\end{align}
In particular,
\begin{align}\label{eq:BogIneq}
		&\cA(a^{*}_+ a_+ +a^{*}_{-} a_{-})+\cB(a^*_+a^*_{-}+a_+a_{-})+
		\kappa(a^*_+ +a_{-}^*)+\overline{\kappa}(a_+ +a_{-})\nonumber \\
		&\geq-\frac{1}{2}(\cA-\sqrt{\cA^2-\cB^2})
		([a_{+},a^*_{+}]+[a_{-},a*_{-}])-\frac{2|\kappa|^2}{\cA+\cB}.
\end{align}
\end{theorem}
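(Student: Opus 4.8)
The plan is to establish the operator identity \eqref{eq:BogIdentity} by directly expanding its right‑hand side, and then to read off the inequality \eqref{eq:BogIneq} from it. The only structural inputs are $[a_+,a_-]=0$ (together with its adjoint $[a_+^*,a_-^*]=0$), the reality of $\alpha$ (immediate, since $\mathcal{A},\mathcal{B}\in\mathbb{R}$ and $\mathcal{A}^2-\mathcal{B}^2\ge 0$), and a few elementary identities among the scalars; everything is an identity of polynomials in $a_\pm,a_\pm^*$, valid on the common domain on which both sides make sense.

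First I would record the scalar identities that make the expansion close up. Writing $S:=\sqrt{\mathcal{A}^2-\mathcal{B}^2}$, one has $2\mathcal{D}=\mathcal{A}+S$; from $(\mathcal{A}-S)(\mathcal{A}+S)=\mathcal{B}^2$ one gets $\alpha=\mathcal{B}^{-1}(\mathcal{A}-S)=\mathcal{B}/(2\mathcal{D})$, hence $2\mathcal{D}\alpha=\mathcal{B}$, $\mathcal{D}\alpha^2=\mathcal{B}^2/(4\mathcal{D})=\tfrac12(\mathcal{A}-S)$, and $\mathcal{D}(1+\alpha^2)=\mathcal{D}+\tfrac12(\mathcal{A}-S)=\mathcal{A}$. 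For the lower‑order terms, $2\mathcal{D}(1+\alpha)=\mathcal{A}+\mathcal{B}+S$ is exactly the denominator appearing in $c_0$, and, using $S^2=(\mathcal{A}-\mathcal{B})(\mathcal{A}+\mathcal{B})$, one checks $(\mathcal{A}+\mathcal{B}+S)^2=2(\mathcal{A}+\mathcal{B})(\mathcal{A}+S)=4\mathcal{D}(\mathcal{A}+\mathcal{B})$. All of these remain valid in the degenerate case $\mathcal{B}=\mathcal{A}$ (then $S=0$, $\alpha=1$); the only step needing $\mathcal{A}+\mathcal{B}>0$ is the division by $\mathcal{A}+\mathcal{B}$, which holds since $\mathcal{A}>0$.

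Next I would expand $\mathcal{D}(b_+^*b_++b_-^*b_-)$ from the definitions \eqref{eq:bpm}, using that $\alpha$ is real to form the adjoints, $[a_+,a_-]=0$ to symmetrize the cross terms into $a_+^*a_-^*$ and $a_+a_-$, and $a_\pm a_\pm^*=a_\pm^*a_\pm+[a_\pm,a_\pm^*]$ to normal‑order the pieces $\alpha^2a_\pm a_\pm^*$. Collecting, the sum becomes
\begin{align*}
\mathcal{D}(b_+^*b_++b_-^*b_-)
&=\mathcal{D}(1+\alpha^2)(a_+^*a_++a_-^*a_-)+2\mathcal{D}\alpha(a_+^*a_-^*+a_+a_-)
+\mathcal{D}\alpha^2\big([a_+,a_+^*]+[a_-,a_-^*]\big)\\
&\quad+\mathcal{D}(1+\alpha)\big(\overline{c_0}(a_+^*+a_-)+c_0(a_++a_-^*)\big)+2\mathcal{D}|c_0|^2 .
\end{align*}
Substituting the scalar identities: the first coefficient is $\mathcal{A}$, the second is $\mathcal{B}$, the commutator coefficient is $\tfrac12(\mathcal{A}-S)$, the linear coefficient $\mathcal{D}(1+\alpha)\overline{c_0}=\tfrac12(\mathcal{A}+\mathcal{B}+S)\cdot\tfrac{2\kappa}{\mathcal{A}+\mathcal{B}+S}=\kappa$ (and its conjugate $\overline\kappa$ for the other linear block), and $2\mathcal{D}|c_0|^2=\tfrac{8\mathcal{D}|\kappa|^2}{(\mathcal{A}+\mathcal{B}+S)^2}=\tfrac{2|\kappa|^2}{\mathcal{A}+\mathcal{B}}$. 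Solving this for the left‑hand side of \eqref{eq:BogIdentity} gives the stated identity. Finally, \eqref{eq:BogIneq} follows at once: $\mathcal{D}=\tfrac12(\mathcal{A}+S)\ge 0$ and $b_\pm^*b_\pm\ge 0$ as operators, so the term $\mathcal{D}(b_+^*b_++b_-^*b_-)$ in \eqref{eq:BogIdentity} is nonnegative and may be discarded for a lower bound.

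I do not expect a genuine obstacle: the argument is pure bookkeeping in the finite polynomial algebra generated by $a_\pm,a_\pm^*$. The points that need care are keeping the $\kappa$ and $\overline\kappa$ contributions straight when forming adjoints (recall $\alpha$ is real but $c_0$ is not), using $[a_+,a_-]=0$ to merge $a_-a_+$ with $a_+a_-$ and $a_-^*a_+^*$ with $a_+^*a_-^*$, and verifying the one nontrivial scalar identity $4\mathcal{D}(\mathcal{A}+\mathcal{B})=(\mathcal{A}+\mathcal{B}+S)^2$, which is precisely what forces the constant $2\mathcal{D}|c_0|^2$ to collapse to $2|\kappa|^2/(\mathcal{A}+\mathcal{B})$.
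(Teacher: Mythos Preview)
Your proposal is correct and follows exactly the approach indicated in the paper: the paper's own proof simply states that the identity \eqref{eq:BogIdentity} is elementary and that \eqref{eq:BogIneq} then follows by discarding the nonnegative operator $\mathcal{D}(b_+^*b_++b_-^*b_-)$. You have carried out in full the scalar bookkeeping and the operator expansion that the paper leaves implicit, and your verification of the key identities (in particular $\mathcal{D}(1+\alpha^2)=\mathcal{A}$, $2\mathcal{D}\alpha=\mathcal{B}$, $\mathcal{D}\alpha^2=\tfrac12(\mathcal{A}-S)$, and $4\mathcal{D}(\mathcal{A}+\mathcal{B})=(\mathcal{A}+\mathcal{B}+S)^2$) is correct.
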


\begin{proof}
The identity \eqref{eq:BogIdentity} is elementary. From here the
inequality \eqref{eq:BogIneq} follows by dropping the positive
operator term ${\mathcal D} (b_{+}^* b_{+} + b_{-}^* b_{-})$.
\end{proof}

\section{The explicit localization function}\label{sec:chiproperties}

In this section we discuss the explicit choice of the localization function 
$\chi$ and its properties.
Define
$$
\zeta(y) = \begin{cases}
\cos(\pi y), & |y| \leq 1/2,\\ 0,& |y|>1/2,
\end{cases}
$$
and
\begin{align}
\label{eq:Def_chi}
\chi(x) = C_{\cM} \left(\zeta(x_1) \zeta(x_2) \zeta(x_3)\right)^{\cM+2}.
\end{align}
Here $\cM \in 4{\mathbb N}$ is to be chosen large enough.
The condition for the choice of $M$ is given in \eqref{eq:choiceM} below.
The constant $C_{\cM}$ is chosen 
such that the normalization $\int\chi^2=1$ from \eqref{eq:chinormalization} holds.
We have $0\leq\chi \in C^{\cM}(\R^3)$.

\begin{lemma}\label{lem:DecaychiHat}
Let $\chi$ be the localization function from \eqref{eq:Def_chi}.
Recall that by assumption $M/2 \in {\mathbb N}$.
Then, for all $k \in {\mathbb R}^3$,
\begin{equation}
|\widehat{\chi}(k)| \leq C_{\chi} (1+ |k|^2)^{-M/2} ,
\end{equation}
where
\begin{equation}
C_{\chi} = \int \left| (1- \Delta)^{M/2 } \chi 
\right|
\end{equation}

In particular, when $|k| \geq \frac{1}{2} K_H'' \ell^{-1}$, we have
\begin{equation}\label{eq:DecayPH-new}
|\widehat{\chi_{\Lambda}}(k )| =
\ell^3 |\widehat{\chi}(k \ell)|\leq C \ell^3 ( K_H'')^{-M}.
\end{equation}

\end{lemma}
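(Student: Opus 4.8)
\textbf{Plan for the proof of Lemma~\ref{lem:DecaychiHat}.}
The statement is a standard fact: integration by parts turns polynomial decay of $\widehat{\chi}$ into bounds on derivatives of $\chi$, and $\chi \in C^M$ with compact support makes those derivatives integrable. The plan is as follows. First, recall that the Fourier transform is normalized so that for a test function $\psi$ we have $\widehat{(1-\Delta)^{M/2}\psi}(k) = (1+|k|^2)^{M/2}\widehat{\psi}(k)$; here $M/2 \in \mathbb{N}$ by hypothesis, so $(1-\Delta)^{M/2}$ is an honest differential operator of order $M$, and since $\chi \in C^M(\mathbb{R}^3)$ (with compact support in $[-1/2,1/2]^3$, by \eqref{eq:Def_chi}), the function $(1-\Delta)^{M/2}\chi$ is a well-defined element of $L^1(\mathbb{R}^3)$ (it is continuous with compact support). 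Then I would use the elementary bound $\|\widehat{f}\|_\infty \leq \|f\|_{L^1}$ applied to $f = (1-\Delta)^{M/2}\chi$, which gives
\begin{equation}
(1+|k|^2)^{M/2} |\widehat{\chi}(k)| = \left| \widehat{(1-\Delta)^{M/2}\chi}(k) \right| \leq \int \left| (1-\Delta)^{M/2}\chi \right| = C_\chi
\end{equation}
for all $k \in \mathbb{R}^3$, which is exactly the first displayed inequality.

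For the consequence \eqref{eq:DecayPH-new}, I would first record the scaling identity $\widehat{\chi_\Lambda}(k) = \widehat{\chi(\cdot/\ell)}(k) = \ell^3 \widehat{\chi}(k\ell)$, which is immediate from the change of variables $x \mapsto \ell x$ in the defining integral. Then apply the bound just proved at the point $k\ell$: when $|k| \geq \frac{1}{2} K_H'' \ell^{-1}$ we have $|k\ell| \geq \frac{1}{2} K_H''$, hence
\begin{equation}
(1+|k\ell|^2)^{M/2} \geq |k\ell|^M \geq \left(\tfrac{1}{2} K_H''\right)^M,
\end{equation}
and therefore $|\widehat{\chi_\Lambda}(k)| = \ell^3 |\widehat{\chi}(k\ell)| \leq C_\chi \ell^3 (1+|k\ell|^2)^{-M/2} \leq C_\chi 2^M \ell^3 (K_H'')^{-M}$. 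Absorbing the universal factor $C_\chi 2^M$ into the constant $C$ gives \eqref{eq:DecayPH-new}. (One should note that $K_H'' > 1$, so $(1+|k\ell|^2)^{-M/2} \leq |k\ell|^{-M}$ is harmless and the estimate is clean.)

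There is no real obstacle here; the only point requiring a word of care is that $(1-\Delta)^{M/2}$ genuinely makes sense as a classical differential operator — this uses the hypothesis $M \in 4\mathbb{N}$ (so in particular $M/2$ is an even integer, and certainly a natural number) together with the explicit regularity $\chi \in C^M(\mathbb{R}^3)$ established for the function in \eqref{eq:Def_chi}. Everything else is the standard "decay of Fourier transform versus smoothness" dictionary, so the write-up can be kept to a few lines.
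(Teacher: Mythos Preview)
Your proposal is correct and is precisely the standard elementary argument the paper has in mind; indeed, the paper simply states that the proof is elementary and omits it. Your write-up can serve as the omitted proof with no changes needed.
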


The proof of Lemma~\ref{lem:DecaychiHat} is elementary and will be omitted.

The explicit choice of $\chi$ is important when we analyze the behavior of the small box localization function, i.e. 
it is relevant for the a priori bounds in Theorem~\ref{thm:aprioribounds} the proof of which was given in \cite{FS}.

\section{Calculation of the Bogoliubov integral}
In this appendix, we will estimate the integrals that appear when
using Bogoliubov's method.  The first lemma,
Lemma~\ref{lm:aprioriintegral_new}, is a simple estimate that will be used
several times in our a-priori estimates. The second lemma,
Lemma~\ref{lem:BogIntegral}, gives the correct LHY correction.  For
completeness we give the details of this elementary but technical
argument.

\begin{lemma}\label{lm:aprioriintegral_new}
Consider functions ${\mathcal A},{\mathcal B}:\R^3\to \R$ with 
${\mathcal A}(p)\geq \kappa_A[|p|-P_1]^2_++2K_1 a$ and $|{\mathcal B}(p)|\leq K_2a$ for constants 
$\kappa_A>0$, $0<K_2\leq K_1$, and $0<P_1<a^{-1}$ then there exists a universal constant $C>0$ such that 
\begin{align*}
 & \int \left({\mathcal A}(p)-\sqrt{{\mathcal A}(p)^2-{\mathcal B}(p)^2}\right)dp\nonumber\\
 & \leq  \frac12(1+CP_1a)\kappa_A^{-1}\int_{\R^3} \frac{{\mathcal B}(p)^2}{p^2}dp+C \frac{K_2^2}{K_1} a P_1^{3}\\& 
  +C(K_2a)^2P_1\kappa_A^{-1}\log((P_1a)^{-1})
  +C\min\left\{(K_2 a)^4\kappa_A^{-3}P_1^{-3}, \frac{K_2^2}{K_1^2} \kappa_A^{-1}\int_{\R^3} \frac{{\mathcal B}(p)^2}{p^2}dp\right\}.
\end{align*}
\end{lemma}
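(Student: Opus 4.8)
\textbf{Proof proposal for Lemma~\ref{lm:aprioriintegral_new}.}

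The plan is to estimate the integrand pointwise using elementary algebra, then split the integration region according to whether the kinetic part dominates the potential part. First I would use the elementary bound $\mathcal{A} - \sqrt{\mathcal{A}^2 - \mathcal{B}^2} = \mathcal{B}^2/(\mathcal{A} + \sqrt{\mathcal{A}^2-\mathcal{B}^2}) \leq \mathcal{B}^2/\mathcal{A}$, which is already too lossy near $p=0$ where $\mathcal{A}$ is bounded below only by $2K_1 a$, so $\mathcal{B}^2/\mathcal{A}$ does not reproduce the claimed leading term $\tfrac12 \kappa_A^{-1}\int \mathcal{B}(p)^2/p^2\,dp$. Hence I would instead compare $\mathcal{A}(p)$ with $\kappa_A[\,|p|-P_1\,]_+^2$ and further with $\kappa_A p^2$. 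The natural scale separating the two regimes is $|p| \sim \sqrt{K_2 a/\kappa_A}$ (where $\kappa_A p^2 \sim K_2 a$), up to the shift $P_1$. I would introduce $\rho_0 := P_1 + c\sqrt{K_1 a/\kappa_A}$ for a suitable constant and treat $\{|p| \leq \rho_0\}$ and $\{|p| > \rho_0\}$ separately.

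On the outer region $\{|p|>\rho_0\}$ I would Taylor-expand the square root: $\mathcal{A}-\sqrt{\mathcal{A}^2-\mathcal{B}^2} = \frac{\mathcal{B}^2}{2\mathcal{A}} + O(\mathcal{B}^4/\mathcal{A}^3)$. The main term $\int_{|p|>\rho_0} \mathcal{B}^2/(2\mathcal{A})$ I would bound by $\frac12 \kappa_A^{-1}\int_{|p|>\rho_0}\mathcal{B}^2 \big([\,|p|-P_1\,]_+^{-2}\big)$, then replace $[\,|p|-P_1\,]_+^{-2}$ by $p^{-2}$ at the cost of a factor $(1+CP_1 a)$: indeed for $|p| > \rho_0 \geq$ (const)$\,P_1$ one has $[\,|p|-P_1\,]_+^{-2} \leq (1 - P_1/|p|)^{-2} p^{-2} \leq (1+CP_1/|p|)p^{-2}$, and $P_1/|p| \leq P_1 a$ on the part $|p| \lesssim a^{-1}$ which carries the bulk; on $|p| \gtrsim a^{-1}$ the integrand is tiny and folded into the error. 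Extending back to all of $\mathbb{R}^3$ only adds a positive quantity, giving the first asserted term. The remainder $\int_{|p|>\rho_0} \mathcal{B}^4/\mathcal{A}^3 \lesssim (K_2 a)^4 \kappa_A^{-3}\int_{|p|>\rho_0}[\,|p|-P_1\,]_+^{-6}dp \lesssim (K_2 a)^4\kappa_A^{-3}\rho_0^{-3}$; combined with the crude global bound $\mathcal{B}^4/\mathcal{A}^3 \leq (K_2^2/K_1^2)\mathcal{B}^2/\mathcal{A} \cdot (\text{const})$ this yields the $\min\{\cdots\}$ term. I would also need the cross-contribution from $[\,|p|-P_1\,]_+^{-2}$ versus $p^{-2}$ restricted to the annulus, producing the $(K_2a)^2 P_1 \kappa_A^{-1}\log((P_1 a)^{-1})$ term: on $P_1 < |p| < a^{-1}$ one estimates $\int (|p|-P_1)^{-2} - |p|^{-2}\,dp$ which, after the angular integration, behaves like $\int_{P_1}^{a^{-1}} P_1 r^{-1}(\text{something})\,dr \sim P_1 \log((P_1 a)^{-1})$, weighted by $\|\mathcal B\|_\infty^2 \leq (K_2 a)^2$.

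On the inner region $\{|p|\leq\rho_0\}$, where the square-root expansion is unreliable, I would simply bound the integrand by $\mathcal{B}^2/\mathcal{A} \leq (K_2 a)^2/(2K_1 a) = K_2^2 a/(2K_1)$ and integrate over a ball of radius $\rho_0 \sim P_1 + \sqrt{K_1a/\kappa_A}$; when $P_1$ dominates this gives $\lesssim (K_2^2/K_1) a P_1^3$, and the other piece $\lesssim (K_2^2/K_1)a (K_1 a/\kappa_A)^{3/2} = K_2^2 (K_1 a)^{1/2} a^{5/2}\kappa_A^{-3/2}$ should be absorbed into the $\min$-term using $K_2 \leq K_1$ and $P_1 < a^{-1}$ (so that $(K_2 a)^4\kappa_A^{-3}P_1^{-3}$ or the other branch dominates). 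The main obstacle is precisely this bookkeeping in the inner region and at the matching radius $\rho_0$: one must choose the constant in $\rho_0$ so that every leftover term is genuinely dominated by one of the four advertised terms, and in particular verify that the inner-ball contribution never exceeds the $\min\{\cdot,\cdot\}$ bound --- this requires carefully playing off $K_2 \leq K_1$, $P_1 a < 1$, and the two alternatives inside the minimum. Everything else is a routine, if tedious, splitting-and-Taylor argument.
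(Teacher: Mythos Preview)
Your approach is essentially the paper's: split the integral into an inner ball and an outer region, use the crude bound ${\mathcal A}-\sqrt{{\mathcal A}^2-{\mathcal B}^2}\leq C\,{\mathcal B}^2/{\mathcal A}$ on the inner ball, and on the outer region Taylor-expand to $\tfrac12{\mathcal B}^2/{\mathcal A}+O({\mathcal B}^4/{\mathcal A}^3)$, then compare ${\mathcal A}$ with $\kappa_A p^2$. The logarithmic term and the two branches of the $\min$ arise exactly as you describe.

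The one substantive difference is your choice of split radius $\rho_0=P_1+c\sqrt{K_1 a/\kappa_A}$, and this is precisely the source of the ``main obstacle'' you flag. The paper simply takes $\rho_0=2P_1$. Two remarks explain why your choice is problematic and why $2P_1$ is clean. First, the Taylor expansion is \emph{not} unreliable anywhere: since $|{\mathcal B}|\leq K_2a\leq K_1a\leq \tfrac12{\mathcal A}$ globally, the expansion with remainder holds on all of $\mathbb R^3$; what fails on the inner ball is only the comparison $(|p|-P_1)^{-2}\leq(1+CP_1/|p|)\,p^{-2}$, which requires $|p|\geq cP_1$ with $c>1$. Your $\rho_0$ does not guarantee this (when $P_1\gg\sqrt{K_1a/\kappa_A}$ one has $\rho_0/P_1\to1$ and the comparison blows up near $\rho_0$). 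Second, with $\rho_0=2P_1$ the inner-ball contribution is immediately $\leq C(K_2^2/K_1)\,aP_1^3$, which is one of the four stated terms; there is no leftover $(K_1a/\kappa_A)^{3/2}$ piece to absorb, and your bookkeeping worry vanishes.

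So: keep your outline, replace $\rho_0$ by $2P_1$, and the proof goes through with no residual matching issues.
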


\begin{proof}
Using that $|{\mathcal
  B}|/{\mathcal A}\leq 1/2 $ we have 
$$
{\mathcal A}(p)-\sqrt{{\mathcal A}(p)^2-{\mathcal B}(p)^2}\leq C\frac{{\mathcal B}(p)^2}{{\mathcal A}(p)}.
$$
We use this for $|p|<2P_1$ and find
\begin{align*}
  \int_{|p|<2P_1}\left({\mathcal A}(p)-\sqrt{{\mathcal A}(p)^2-{\mathcal B}(p)^2}\right)dp\leq&
  \cst\int_{|p|<2P_1} \frac{{\mathcal B}(p)^2}{{\mathcal A}(p)}dp \\\leq& \cst \frac{K_2^2}{K_1} a \int_{|p|<2P_1}1dp=
    C \frac{K_2^2}{K_1} a P_1^{3}.
\end{align*}
In the range $|p|>2P_1$, we use 
$$
{\mathcal A}(p)-\sqrt{{\mathcal A}(p)^2-{\mathcal B}(p)^2}
\leq \frac12\frac{{\mathcal B}(p)^2}{{\mathcal A}(p)}+C\frac{{\mathcal B}(p)^4}{{\mathcal A}(p)^3}.
$$
For $|p|>2P_1$ we have
\begin{align}\label{eq:B2overA}
  \frac{{\mathcal B}(p)^2}{{\mathcal A}(p)}\leq\kappa_A^{-1}
  \frac{{\mathcal B}(p)^2}{(|p|-P_1)^2}
  \leq \kappa_A^{-1}\frac{{\mathcal B}(p)^2}{p^2}(1+CP_1|p|^{-1})
\end{align}
and hence by splitting the integral over the error in $|p|<a^{-1}$ and $|p|>a^{-1}$ we obtain
$$
\int_{|p|>2P_1}\frac{{\mathcal B}(p)^2}{{\mathcal A}(p)}dp \leq(1+CP_1a)\kappa_A^{-1}
\int_{\R^3} \frac{{\mathcal B}(p)^2}{p^2}dp +C(K_2 a)^2P_1\kappa_A^{-1}\log((P_1a)^{-1}). 
$$ 
Finally, to get the bound in the lemma we estimate either,
\begin{align*}
  \int_{|p|>2P_1}\frac{{\mathcal B}(p)^4}{{\mathcal A}(p)^3}
  \leq \cst (K_2 a)^4\kappa_A^{-3}\int_{|p|>2P_1}  |p|^{-6} dp
  =C(K_2 a)^4\kappa_A^{-3}P_1^{-3},
\end{align*}
or
\begin{equation*}
\int_{|p|>2P_1}\frac{{\mathcal B}(p)^4}{{\mathcal A}(p)^3} \leq  \frac{K_2^2}{4 K_1^2} \int_{|p|>2P_1} \frac{{\mathcal B}(p)^2}{{\mathcal A}(p)}\,dp
\end{equation*}
and use \eqref{eq:B2overA}.
\end{proof}

\begin{lemma}\label{lem:BogIntegral}
Assume that
$\frac{9}{10} \rmu \leq \rho_z \leq \frac{11}{10} \rmu$ as well as Assumption~\ref{assump:params}.
We have the following estimate for sufficiently small $\rmu a^3$, and with ${\mathcal A}, {\mathcal B}$ as defined in \eqref{eq:defABC},
\begin{align}\label{eq:BogIntegral}
&-\frac{1}{2} (2\pi)^{-3} \ell^3 \int  \left( {\mathcal A}(k) - \sqrt{{\mathcal A}(k)^2 - {\mathcal B}(k)^2}\right) \,dk \nonumber \\
&\geq - \frac{\widehat{g\omega}(0)}{2} \rho_z^2 \ell^3
+ 4\pi  \frac{128 }{15\sqrt{\pi}} \rho_z^2 a \sqrt{\rho_z a^3} \ell^3 
- C \rmu^2 a \ell^3\left( 
 \epsilon(\rmu) \sqrt{\rmu a^3}  + \frac{R a}{\ell^2}+\varepsilon_N \right)
\end{align}
where the error term $\epsilon(\rmu)$ was defined in \eqref{eq:LebesgueError-Total}.
\end{lemma}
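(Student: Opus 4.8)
The plan is to evaluate the integral $\int\left({\mathcal A}(k)-\sqrt{{\mathcal A}(k)^2-{\mathcal B}(k)^2}\right)dk$ by reducing it, up to controlled errors, to a clean integral that produces both the $-\frac{1}{2}\widehat{g\omega}(0)\rho_z^2\ell^3$ term and the LHY constant $4\pi\frac{128}{15\sqrt{\pi}}\rho_z^2 a\sqrt{\rho_z a^3}\ell^3$. The starting point is to replace the complicated kinetic symbol $(1-\varepsilon_N)\tau(k)+\rho_z\widehat{W_1}(k)$ inside ${\mathcal A}(k)$ by the simpler expression $k^2+\rho_z\widehat{g}(0)$ (i.e.\ replace $\widehat{W_1}(k)$ by $\widehat{g}(0)$ near $k=0$ where it matters, and $\tau(k)$ by $k^2$), and similarly ${\mathcal B}(k)=\rho_z\widehat{W_1}(k)$ by $\rho_z\widehat{g}(0)$. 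The price of these replacements is exactly the error terms listed: the $\varepsilon_N$ factor comes from $(1-\varepsilon_N)$, the $Ra/\ell^2$ from $|\widehat{W_1}(k)-\widehat{g}(k)|$ and $|\widehat{g}(k)-\widehat{g}(0)|\le Ca(kR)^2$ together with the Hardy--Littlewood--Sobolev-type estimates \eqref{eq:I2-integral-new}, \eqref{eq:I2-integral-2-new} from Lemma~\ref{lem:Simple}, and the $\epsilon(\rmu)\sqrt{\rmu a^3}$ pieces from the discrepancy between $\tau(k)$ and $k^2$ (governed by the cutoffs at scales $(s\ell)^{-1}$ and $(ds\ell)^{-1}$ in \eqref{eq:Def_tau}), which is where the $(K_\ell s)^{-1}$, $\varepsilon_T$, and logarithmic terms in \eqref{eq:LebesgueError-Total} enter.

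First I would make the $\varepsilon_N$ and $\varepsilon_T$ reductions: use $1-\varepsilon_N\ge $ comparison and monotonicity of $x\mapsto x-\sqrt{x^2-b^2}$ to replace ${\mathcal A}$ by $\tau(k)+\rho_z\widehat{W_1}(k)$ at the cost of $C\varepsilon_N\int\frac{{\mathcal B}(k)^2}{{\mathcal A}(k)}\,dk\le C\varepsilon_N a\rho_z^2\ell^3$, which matches $C\varepsilon_N\rmu^2 a\ell^3$. Next I would replace $\tau(k)$ by $[\,|k|-\frac12(s\ell)^{-1}]_+^2$ (dropping the $\varepsilon_T$ piece) and then by $k^2$; the difference $k^2-[\,|k|-\frac12(s\ell)^{-1}]_+^2\le C|k|(s\ell)^{-1}$ in the relevant region contributes, after integration against $\frac{{\mathcal B}(k)^2}{{\mathcal A}(k)^2}$-type weights, the $(K_\ell s)^{-1}(1+\log(\cdots))$ and $\varepsilon_T(K_\ell ds)^{-1}(\cdots)$ terms of $\epsilon(\rmu)$ scaled by $\sqrt{\rmu a^3}$. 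I would also need the $(\rmu a)^{1/4}\sqrt{R}$ term, which arises from the region $|k|\sim\sqrt{\rmu a}$ where the potential shape $\widehat{W_1}(k)$ differs from $\widehat{W_1}(0)$ by $O(a(kR)^2)=O(\rmu a^3 R^2/a)$, optimized appropriately. For all these I would lean on the structure $x-\sqrt{x^2-b^2}=\frac{b^2}{2x}+O(b^4/x^3)$ and split the $k$-integral into low, intermediate, and high momentum ranges, using Lemma~\ref{lm:aprioriintegral_new} as a ready-made bookkeeping device for the error terms whenever possible.

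After these reductions the core computation is the explicit evaluation
\begin{align}
-\tfrac{1}{2}(2\pi)^{-3}\ell^3\int\left(k^2+\rho_z\widehat{g}(0)-\sqrt{(k^2+\rho_z\widehat{g}(0))^2-(\rho_z\widehat{g}(0))^2}\right)dk
=-\tfrac12\widehat{g\omega}(0)\rho_z^2\ell^3+4\pi\tfrac{128}{15\sqrt{\pi}}\rho_z^2 a\sqrt{\rho_z a^3}\,\ell^3+(\text{err}),
\end{align}
which is the standard Lee--Huang--Yang integral: with $\mu_0:=\rho_z\widehat{g}(0)=8\pi\rho_z a$, one has $\int_{\R^3}\left(k^2+\mu_0-\sqrt{k^4+2\mu_0 k^2}\right)dk=\frac{8}{15\pi^2}\mu_0^{5/2}$ by scaling $k\mapsto\sqrt{\mu_0}\,k$ and a one-dimensional radial integral; the leading subtraction term that must be added back, coming from expanding the square root to second order, is precisely $\frac12(2\pi)^{-3}\int\frac{\mu_0^2}{k^2}\cdot(\text{shape corrections})$, which after using $\widehat\omega(k)=\widehat g(k)/2k^2$ and \eqref{eq:I2-integral-new} reconstitutes $\frac12\widehat{g\omega}(0)\rho_z^2\ell^3$ up to $O(Ra^2/\ell^2)$. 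Plugging $\mu_0=8\pi\rho_z a$ and simplifying $\frac{8}{15\pi^2}(8\pi\rho_z a)^{5/2}=4\pi\cdot\frac{128}{15\sqrt\pi}\rho_z^2 a\sqrt{\rho_z a^3}$ gives the claimed main terms, and finally one replaces $\rho_z$ by $\rmu$ in the error prefactors using $\frac{9}{10}\rmu\le\rho_z\le\frac{11}{10}\rmu$.

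The main obstacle is the careful tracking of the momentum-cutoff errors from $\tau(k)$: unlike in the periodic/standard-Laplacian case, the kinetic symbol has two regularized $[\,|p|-\cdots]_+^2$ pieces with parameters $s,d,\varepsilon_T$, and one must verify that replacing $\tau(k)$ by $k^2$ inside the Bogoliubov integrand produces errors of exactly the form $\epsilon(\rmu)\sqrt{\rmu a^3}$ with the logarithms as written in \eqref{eq:LebesgueError-Total}. This requires splitting the integral at the scales $(s\ell)^{-1}$, $(ds\ell)^{-1}$, $\sqrt{\rmu a}$, and $a^{-1}$, and at each scale choosing whether to bound $x-\sqrt{x^2-b^2}$ by $\frac{b^2}{2x}$ or by $b$ itself, then summing the contributions; the logarithmic factors come from the $\int\frac{dk}{k}$-type integrals over the dyadic range between $\sqrt{\rmu a}$ and the cutoff scales. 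Everything else — the explicit LHY integral, the reconstitution of $\widehat{g\omega}(0)$, the $\varepsilon_N$ and $Ra/\ell^2$ errors — is routine given Lemma~\ref{lem:Simple}, Lemma~\ref{lm:aprioriintegral_new}, and the scattering identities of Section~\ref{scattering}.
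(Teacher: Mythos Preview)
Your overall strategy is the same as the paper's, and your identification of where each error term in $\epsilon(\rmu)$, $Ra/\ell^2$, and $\varepsilon_N$ comes from is accurate. But the ``core computation'' you display,
\[
\int_{\R^3}\Big(k^2+\mu_0-\sqrt{k^4+2\mu_0 k^2}\Big)\,dk \;=\; \frac{8}{15\pi^2}\mu_0^{5/2},
\]
is a divergent integral: the integrand behaves like $\mu_0^2/(2k^2)$ at infinity. Consequently your plan of \emph{first} replacing $\widehat{W_1}(k)$ by the constant $\widehat{g}(0)$ in both ${\mathcal A}$ and ${\mathcal B}$ and \emph{then} computing does not work as stated, because after that replacement there is no finite integral to evaluate and no way to recover $\widehat{g\omega}(0)$ from a separate convergent piece. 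The vague ``$\cdot(\text{shape corrections})$'' in your subtraction term does not resolve this, since the subtraction you would need, $\int \mu_0^2/(2k^2)\,dk$, is itself infinite.

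The paper fixes exactly this by reversing the order of operations: it first subtracts the $k$-dependent term $\rho_z^2\,\widehat{W_1}(k)^2/\big(2(1-\varepsilon_N)k^2\big)$, whose integral is \emph{convergent} (since $W_1\in L^1$) and equals $(1-\varepsilon_N)^{-1}\rho_z^2\,\widehat{g\omega}(0)$ up to $O(Ra^2/\ell^2)$ by \eqref{eq:I2-integral-new}. Only after this regularization does it rescale $k=\sqrt{\rho_z a}\,t$ and compare the now-convergent remainder $I_1'$ to the explicit LHY integral
\[
\int_{\R^3}\Big(t^2+8\pi-\tfrac{(8\pi)^2}{2t^2}-\sqrt{(t^2+8\pi)^2-(8\pi)^2}\Big)\,dt \;=\; -64\pi^4\,\frac{128}{15\sqrt{\pi}},
\]
where the replacement $\beta(t)\to 8\pi$ is now a perturbation of an absolutely convergent integrand (decay $\sim t^{-4}$) and can be controlled pointwise. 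Your sketch is repairable along these lines, but the regularization step with the full $k$-dependent $\widehat{W_1}(k)^2/(2k^2)$ must come before any replacement by constants.
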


\begin{proof}
We regularize the integral as
\begin{align}\label{LHYIntegral1}
&\int  {\mathcal A}(k) - \sqrt{{\mathcal A}(k)^2 - {\mathcal B}(k)^2} \,dk \nonumber \\
&=
\int  {\mathcal A}(k) - \sqrt{{\mathcal A}(k)^2 - {\mathcal B}(k)^2} - \rho_z^2 \frac{\widehat{W_1}(k)^2}{2 (1-\varepsilon_N)  k^2}  \,dk
+
\rho_z^2 \int \frac{\widehat{W_1}(k)^2}{2 (1-\varepsilon_N) k^2}  \,dk.
\end{align}

The last integral is controlled by \eqref{eq:I2-integral-new}
and contributes with the first and the last two terms in \eqref{eq:BogIntegral}.

In the regularized integral in \eqref{LHYIntegral1} we perform the change of variables $\sqrt{\rho_z a} \,t = k$.
In this way we get
\begin{align}
&\int  {\mathcal A}(k) - \sqrt{{\mathcal A}(k)^2 - {\mathcal B}(k)^2} - \rho_z^2 \frac{\widehat{W_1}(k)^2}{2(1-\varepsilon_N)  k^2}  \,dk
=
\rho_z^2 \sqrt{\rho_z a^3} a I_1,
\end{align}
with
\begin{align}
I_1 &= \int \alpha(t)  -
\sqrt{\alpha(t)^2 - \beta(t)^2} - \frac{\beta(t)^2}{2 (1-\varepsilon_N)  t^2}\,dt ,\nonumber \\
\alpha(t) &= (1-\varepsilon_N) \widetilde{\tau}(t) + a^{-1} \widehat{W_1}(\sqrt{\rho_z a} t), \nonumber \\
\beta(t) &=  a^{-1} \widehat{W_1}(\sqrt{\rho_z a} t),\nonumber\\
\widetilde{\tau}(t) &= (1-\varepsilon_T) \Big[ |t| - \frac{1}{2K_{\ell}s} (\frac{\rmu}{\rho_z})^{1/2}  \Big]_{+}^2
+ \varepsilon_T \Big[ |t| - \frac{1}{2K_{\ell}ds} (\frac{\rmu}{\rho_z})^{1/2} \Big]_{+}^2.
\end{align}
We further decompose $I_1$ as
\begin{align}
I_1 
&= I_1' + I_1'',
\end{align}
with
\begin{align}\label{eq:SplitIntegrals}
I_1' &:= \int \alpha(t)  - \frac{\beta(t)^2}{2\alpha(t)}-
\sqrt{\alpha(t)^2 - \beta(t)^2} - \frac{\beta(t)^3}{2 (1-\varepsilon_N) t^2 \alpha(t)}, \nonumber\\
I_1''&:= \int \frac{\beta(t)^2}{2} \frac{t^2-\widetilde{\tau}(t)}{t^2 \alpha(t)}.
\end{align}

We will prove that
\begin{align}\label{eq:LebesgueError}
&\left| I_1' - \int_{{\mathbb R}^3} t^2 + 8\pi - \frac{(8\pi)^2}{2 t^2}
- \sqrt{(t^2+8\pi)^2-(8\pi)^2} \,dt \right| \nonumber \\
&\qquad \leq C \left(
 (\rmu a)^{\frac{1}{4}} \sqrt{R} + \varepsilon_T + (K_{\ell} s)^{-1}+ \varepsilon_N \right).
\end{align}
Notice that this is consistent with \eqref{eq:BogIntegral} with the error term $\epsilon(\rmu)$ defined in \eqref{eq:LebesgueError-Total} and that
\begin{align}\label{eq:BogConstantInt}
\int_{{\mathbb R}^3} t^2 + 8\pi - \frac{(8\pi)^2}{2 t^2}
- \sqrt{(t^2+8\pi)^2-(8\pi)^2} \,dt = -64 \pi^4 \frac{128}{15\sqrt{\pi}} .
\end{align}
We artificially introduce the factor $(1-\varepsilon_N)$ in the integral \eqref{eq:BogConstantInt} by scaling to get
\begin{align}\label{eq:BogConstantInt_perturbed}
&\left| \int_{{\mathbb R}^3}(1-\varepsilon_N) t^2 + 8\pi - \frac{(8\pi)^2}{2 (1-\varepsilon_N)t^2}
- \sqrt{((1-\varepsilon_N)t^2+8\pi)^2-(8\pi)^2} \,dt + 64 \pi^4 \frac{128}{15\sqrt{\pi}} \right| \nonumber \\
&\leq C \varepsilon_N.
\end{align}
This contributes with one of the error terms in \eqref{eq:LebesgueError}.

We continue the proof of \eqref{eq:LebesgueError} by noticing that
the part of both integrals---the one in \eqref{eq:BogConstantInt_perturbed} and $I_1'$---where $|t| \leq 10 (K_{\ell } s)^{-1}$ (notice that $(K_{\ell } s)^{-1} \ll 1$ by\eqref{con:sKell})  is bounded by
$$
C (K_{\ell } s)^{-1},
$$
for sufficiently small $\rmu$ (using that $\rho_z \approx \rmu$).
This is another of the error terms in \eqref{eq:LebesgueError}.

For $|t| \geq 10 (K_{\ell } s)^{-1}$ we will use (by \eqref{eq:W1-g-new} and since $W_1$ is even)
\begin{align}\label{eq:tautilde-Est}
|\beta(t) - 8\pi| \leq C \rmu a R^2 |t|^2, \qquad
0 \leq t^2 - \widetilde{\tau}(t) \leq \varepsilon_T t^2 + \frac{1}{K_{\ell}s} (\frac{\rmu}{\rho_z})^{1/2} |t|.
\end{align}
Notice that it follows that $|\beta(t) - 8\pi| \leq C (\rmu a)^{\frac{1}{4}} R^{\frac{1}{2}} |t|^{\frac{1}{2}}$ and also that
$\widetilde{\tau} \geq \frac{1}{2} t^2$ when $\varepsilon_T$ is sufficiently small (since $\frac{\rmu}{\rho_z}$ is close to $1$).

We decompose the resulting difference of integrals as
\begin{align}
&\int_{\{|t| \geq 10 (K_{\ell } s)^{-1}\}} \alpha  - \frac{\beta^2}{2\alpha}-
\sqrt{\alpha^2 - \beta^2} - \frac{\beta^3}{2 (1-\varepsilon_N) t^2 \alpha}\,dt \nonumber \\
&\qquad -
\int_{\{|t| \geq 10 (K_{\ell } s)^{-1}\}}  (1-\varepsilon_N)t^2 + 8\pi - \frac{(8\pi)^2}{2 (1-\varepsilon_N)t^2}
- \sqrt{[(1-\varepsilon_N)t^2+8\pi]^2-(8\pi)^2} \,dt  \nonumber \\
&= J_1 + J_2,
\end{align}
with
\begin{align}
J_2&:= \int_{\{|t| \geq 10 (K_{\ell } s)^{-1}\}}   \frac{ (8 \pi)^3}{2 (1-\varepsilon_N)t^2 [(1-\varepsilon_N)t^2 + 8 \pi]}- \frac{\beta^3}{2 (1-\varepsilon_N)t^2 \alpha} \,dt ,\nonumber \\
J_1&:= \int_{\{|t| \geq 10 (K_{\ell } s)^{-1}\}}  \alpha_1 f\big(\frac{\beta_1}{\alpha_1}\big) - \alpha_2 f\big(\frac{\beta_2}{\alpha_2}\big)\,dt,
\end{align}
with $f(x) := 1 - \frac{1}{2} x^2 - \sqrt{1-x^2}$,
\begin{align}
\alpha_1 &:= \alpha, & \alpha_2&:= (1-\varepsilon_N) t^2 + 8\pi, \nonumber \\
\beta_1 &:=\beta, & \beta_2&:= 8\pi.
\end{align}

The  integral $J_2$ is easily estimated (using that $\epsilon_N \leq 1/2$ and the discussion around \eqref{eq:tautilde-Est}), as
\begin{align}
|J_2| \leq
C \left( (\rmu a)^{\frac{1}{4}} \sqrt{R} + \varepsilon_T + (K_{\ell} s)^{-1}\right),
\end{align}
in agreement with \eqref{eq:LebesgueError}.
We further decompose $J_1$ as
\begin{align}
J_1 = \big( \int_{\{10 (K_{\ell } s)^{-1} \leq |t| \leq 100\}} + \int_{\{ |t| \geq 100\}}\big) \alpha_1 f\big(\frac{\beta_1}{\alpha_1}\big) - \alpha_2 f\big(\frac{\beta_2}{\alpha_2}\big)\,dt =: J_{1,1} + J_{1,2}.
\end{align}
The integral over a bounded set, $J_{1,1}$, is estimated termwise using \eqref{eq:tautilde-Est} and is easily seen to be consistent with \eqref{eq:LebesgueError}.

To estimate $J_{1,2}$ we use that for $0\leq x \leq y \leq \frac{1}{2}$, we have the elementary estimates
\begin{align}
|g(x)|\leq C x^4, \qquad |g(x)-g(y)| \leq C y^3 |x-y|.
\end{align}
Furthermore,
\begin{align}
\left| \frac{\beta_1}{\alpha_1} - \frac{\beta_2}{\alpha_2} \right| \leq \frac{|\beta_1|}{|\alpha_1 \alpha_2|} |\alpha_2-\alpha_1| + \frac{|\beta_1-\beta_2|}{|\alpha_2|},
\end{align}
as well as
\begin{align}
\min\{ (1-\varepsilon_N) \widetilde{\tau} + \beta,  (1-\varepsilon_N)t^2 + 8\pi\} \geq \frac{1}{2} t^2,
\end{align}
for $|t| \geq 100$. Therefore, using also \eqref{eq:tautilde-Est},
\begin{align}
|J_{1,2}| &\leq 
 \int_{\{ |t| \geq 100\}} C t^{-8} \left( \rmu a R^2 |t|^2 + \varepsilon_T t^2 + \frac{1}{K_{\ell}s} (\frac{\rmu}{\rho_z})^{1/2} |t|
 \right) \,dt \nonumber \\
&\quad +  \int_{\{ |t| \geq 100\}} C t^{-6} \left( \frac{1}{t^4}\left[ \rmu a R^2 |t|^2 + \varepsilon_T t^2 + \frac{1}{K_{\ell}s} (\frac{\rmu}{\rho_z})^{1/2} |t|\right]
+ \rmu a R^2 |t|^2
\right)\,dt.
\end{align}
Since the integral is convergent, this is also seen to be in agreement with \eqref{eq:LebesgueError} (using that $\rmu a R^2 \leq 1$ by \eqref{con:ER}.
This finishes the estimate of \eqref{eq:LebesgueError}.

The integral $I_1''$ from \eqref{eq:SplitIntegrals} is split in $3$ parts. For $|t| \leq 10 (K_{\ell}s)^{-1}$, we have $0 \leq t^2 - \widetilde{\tau}(t) \leq t^2$. Therefore,
\begin{align}
\left|  \int_{\{ |t| \leq 10 (K_{\ell}s)^{-1}\} } \frac{\beta^2}{2} \frac{t^2-\widetilde{\tau}}{t^2 \alpha} \right|
\leq C (K_{\ell}s)^{-1},
\end{align}
which is consistent with \eqref{eq:BogIntegral}.

For $10 (K_{\ell} s)^{-1} \leq |t| \leq 10 (K_{\ell} d s)^{-1} $, we have \eqref{eq:tautilde-Est} above.
Therefore,
\begin{align}
\left|  \int_{\{ 10 (K_{\ell} s)^{-1} \leq |t| \leq 10 (K_{\ell} d s)^{-1}\} } \frac{\beta^2}{2} \frac{t^2-\widetilde{\tau}}{t^2 \alpha} \right|
\leq
C \varepsilon_T (K_{\ell} d s)^{-1} + C (K_{\ell} s)^{-1} \log(d^{-1}),
\end{align}
in agreement with \eqref{eq:BogIntegral} .

Finally the case $|t| \geq 10 (K_{\ell} d s)^{-1}$.
Here, $0 \leq t^2 - \widetilde{\tau}(t) \leq C |t| ( (K_{\ell} s)^{-1} + \varepsilon_T (K_{\ell} d s)^{-1})$ and $\alpha \geq \frac{1}{2} t^2$.
Therefore, 
\begin{align}\label{eq:Logarithm}
&\left|  \int_{\{ |t| \geq 10 (K_{\ell} d s)^{-1} \} } \frac{\beta^2}{2} \frac{t^2-\widetilde{\tau}}{t^2 \alpha} \right| 
\nonumber \\
&\leq
C ( (K_{\ell} s)^{-1} + \varepsilon_T (K_{\ell} d s)^{-1}) \int_{\{ 10 (K_{\ell} d s)^{-1}\leq |t| \leq (\rho_z a^3)^{-1/2}\}} |t|^{-3} \nonumber \\
&\quad + C ( (K_{\ell} s)^{-1} + \varepsilon_T (K_{\ell} d s)^{-1}) (\rho_z a^3)^{1/2} a^{-2} 
 \int\frac{\widehat{W}_1(\sqrt{\rho_z a} t)^2}{t^2} \nonumber \\
 &\leq C ( (K_{\ell} s)^{-1} + \varepsilon_T (K_{\ell} d s)^{-1}) \left( \log\big( \frac{K_{\ell} d s }{(\rmu a^3)^{1/2}}\big) + 1 \right). 
\end{align}
Since this estimate is also in agreement with \eqref{eq:BogIntegral} 
this finishes the proof of Lemma~\ref{lem:BogIntegral}.
\end{proof}

\section{The parameters involved in the proof of Theorem~\ref{thm:LHY-Background}}
\label{sec:params}

The data of our problem are the 'chemical potential' parameter $\rmu$ and the potential $v$. The potential will appear in estimates in terms of the scattering length $a$, the radius $R$ of the support and integral $\int v$.

In the proof we need a number of auxilliary dimensionless parameters
\begin{equation}
0 < s,d,\varepsilon_T, \varepsilon_R , K_{\ell}, \widetilde{K}_{\mathcal M},K_H', K_H'', K_B, K_N, K_{\mathcal R}.
\end{equation}
These will all be chosen at the end of this section as powers of 
the dimensionless diluteness parameter
$$
\rmu a^3.
$$
The parameter $K_{\mathcal R}$ also depends on the constant ${\mathcal C}$ from \eqref{eq:assumps_thm_background} and $\varepsilon_R$ depends on $R$.
In the localized kinetic energy defined in \eqref{eq:DefT_tilde} there is also a parameter $b$, which in Theorem~\ref{thm:CompareBoxEnergy} is chosen as a (sufficiently small) universal constant and a parameter $\varepsilon_N$ defined in \eqref{eq:Cond_epsilonN} below in terms of $K_N$.
Furthermore, there is a choice of a specific localization function $\chi$, defined in Appendix~\ref{sec:chiproperties}, that depends on a regularity parameter 
$$
M\in 4 \N,
$$ 
which we will choose explicitly below.
In order for the proof to work, the parameters have to satisfy a number of relations. We have collected these in Assumption~\ref{assump:params} below. After the statement of this assumption, we indicate key places where these assumptions are used. Finally, we make an explicit choice for which all the conditions are satisfied. Here and in the rest of the paper $f\ll g$ is used in the precise meaning that $(f/g)\leq (\rmu
a^3)^{\varepsilon}$ for some positive $\varepsilon$ and likewise for $f \gg g$. Throughout the paper there will also be logarithmic factors. They are ignored 
here as they are always accomodated by the conditions given, since they can always be absorbed in the $\varepsilon$ powers implicit in the $\ll$-signs.

The integral $\int v$ satisfies \eqref{eq:assumps_thm_background}, which we write in terms of the parameter $K_{\mathcal R}$ as
\begin{align}\label{con:intv}
{\mathcal R} =\frac{1}{8\pi
a} \int v \leq K_{\mathcal R} (\rmu a^3)^{-\frac{1}{2}}.
\end{align}
This is simply saying that the parameter $K_{\mathcal R}$ is defined by the exponent $\eta_2$ and the constant ${\mathcal C}$ in Theorem~\ref{thm:LHY-Background} by
\begin{equation}\label{con:KR}
K_{\mathcal R}  := {\mathcal C} (\rmu a^3)^{-\eta_2}.
\end{equation}

\begin{assumption}\label{assump:params}
The parameters satisfy
\begin{align}\label{eq:small_and_large}
0 < s,d,\varepsilon_T, \varepsilon_R \ll 1  \ll K_{\ell}, \widetilde{K}_{\mathcal M},K_H', K_H'', K_B, K_N, K_{\mathcal R}.
\end{align}
Furthermore, they are interdependent by the conditions,
\begin{align}
  (dK_\ell)^2\ll \varepsilon_TK_\ell^{-2}&\ll\varepsilon_T \ll sdK_\ell,\label{con:eTdK}\\
    sK_\ell&\gg 1,\label{con:sKell} \\ 
     sdK_\ell & \gg K_B^{-1}\label{con:sdKellKB}, \\
     d^{-2} \ll K_H''  &\ll K_H', \label{cond:disjoint} \\
     \label{cond:KH3-n}
K_{\ell}^4 (K_H'')^3& \ll \widetilde{K}_{{\mathcal M}}.
\end{align}
Their magnitudes are controlled, in terms of the diluteness parameter, by the conditions,
\begin{align}
 K_B^3K_\ell^2 &\ll (\rmu a^3)^{-\frac{1}{4}}\label{con:KBKell}, \\
 \widetilde{K}_{{\mathcal M}} K_B^3 K_{\ell}^2 &\ll (\rmu a^3)^{-\frac{1}{2}}, \label{con:KBellKM} \\
\label{con:Constants}
K_N^{\frac{1}{2}} K_{\ell} K_B^3 (K_H')^3 &\ll (\rmu a^3)^{-\frac{1}{4}},\\
\label{con:newML}
\widetilde{K}_{{\mathcal M}}  K_{\mathcal R}^{\frac{1}{2}}  K_N^{\frac{1}{8}} (K_H')^{\frac{3}{4}} K_B^{\frac{3}{4}} &\ll K_{\ell}^{\frac{5}{2}} (\rmu a^3)^{-\frac{1}{16}},\\
  \label{eq:Newlambdacondition}
  K_\ell d&\gg (\rmu a^3)^{\frac{1}{6}}.
\end{align}
Their relation to the radius $R$ of the potential is controlled by the conditions,
\begin{align}
  \label{con:RNew}
R/a & \ll  d  K_{\ell} (\rmu a^3)^{-\frac{1}{2}},\\
\label{con:ER}
\rmu a R^2 &\ll \ER \ll K_{\ell}^{-2} K_B^{-3} \ll 1.
\end{align}
Finally, their relation to the regularity parameter $M$ is given by the conditions,\footnote{
The condition \eqref{con:d5s} also appeared in \cite{FS} as (5.3) but with the erroneous power as $d^{-5} s^{M+1}\ll 1$. This minor error affects the choice of $M$ in that paper, that should have been $33$ instead of $30$.
}
\begin{align}
 d^{-5}s^{M-2}&\ll 1,\label{con:d5s}\\
 \label{cond:KHs-ML}
d^{2M} + (K_H'')^{-M/2} + (d^2 K_H'')^{-2M} +  \left( \frac{K_H''}{K_H'} \right)^{M} & \ll \widetilde{K}_{{\mathcal M}}^{-1},\\
\label{eq:dM-small}
d^{2M} &\ll (\rmu a^3)^{\frac{1}{2}}, \\
\label{con:boghamerror}
  (K_H'')^{4-M} K_{\ell}^{\frac{3}{2}} &\ll (\rmu a^3)^{\frac{3}{4}}.
\end{align}

\end{assumption}

\paragraph{Discussion of parameters.}
The proof starts by localizing. This introduces the length scale $\ell$ defined by
\begin{equation}\label{eq:def_ell}
  \ell:=K_\ell(\rmu a)^{-1/2},
\end{equation}
with $K_{\ell} \gg 1$,
and the localization function $\chi$ depending on the regularity parameter $M$.
In order to achieve the necessary a priori bounds in Theorem~\ref{thm:aprioribounds}, we perform a second localization to smaller boxes of scale $d \ell$ with $d \ll 1$.
This double localization procedure requires the additional parameters
$\varepsilon_T, s \ll 1$ and $K_B\gg 1$, satisfying the relations \eqref{con:d5s}, \eqref{con:eTdK}, \eqref{con:sKell}, \eqref{con:RNew}, \eqref{eq:Newlambdacondition}  and \eqref{con:sdKellKB}.
The localization also depends on the parameter $b$ (see Theorem~\ref{thm:CompareBoxEnergy}) which is chosen as a (sufficiently small) universal constant.
The conditions \eqref{eq:Newlambdacondition} and \eqref{con:RNew} are used in
the small boxes in Section~\ref{SmallBoxes}.

By \eqref{con:RNew} the length scale of the small box
has to be longer than the radius of the potential. Condition
\eqref{eq:Newlambdacondition} is a weak condition that assures that
the small boxes are not too small.

The condition \eqref{con:KBKell} ensures that the a priori bounds on the particle number
and expected number of excited particles are both correct to leading order (see \eqref{eq:apriori_n} and \eqref{eq:apriorinn+NEW}). 

Compared to \cite{FS}, the localized kinetic energy in  \eqref{eq:DefT_new} includes a new term with the parameter $\varepsilon_N$. This we define to be
\begin{align}\label{eq:Cond_epsilonN}
\varepsilon_N := K_N^{-1} \sqrt{\rmu a^3},
\end{align}
in terms of the parameter $K_N \gg 1$, in order for this extra term not to change calculations up to LHY precision.

A next important step is to use the technique of localization of large matrices, here in the form of Proposition~\ref{prop:LocMatrices}, to restrict to the subspace where the momentum localized excitations, $n_+^{L}$ is bounded.
For this we need some more parameters.
The operator $n_+^{L}$ itself is defined in \eqref{eq:def_nplusL} in terms of the parameter $K_H' \gg 1$ and the kinetic energy. For later estimates it is preferable to work with a momentum localization which is given in terms of a closely related parameter $K_H'' \gg 1$.
We will assume \eqref{cond:disjoint}, where the first condition makes the regions of low and high momentum in \eqref{eq:Momomta} disjoint, and the second is needed in Lemma~\ref{lem:pseudolocal} to dominate the momentum localization by the kinetic energy localization.
It is useful to note in passing the following consequences of \eqref{con:eTdK} and  \eqref{cond:disjoint},
\begin{align}\label{eq:maerkedobbelt}
&K_{\ell} \ll s d^{-1} \ll d^{-1} \ll d^{-2}\ll K_H'',  \\
\label{cond:newKH'}
&\varepsilon_T (ds)^{-2} \ll (K_H')^2.
\end{align}

The bound from the localization of large matrices is given in terms of $\ML$ which we define as
\begin{equation}\label{eq:DefML}
\ML := \widetilde{K}_{{\mathcal M}}^{-1} \rmu \ell^3,
\end{equation}
with $\widetilde{K}_{{\mathcal M}} \gg 1$, since $\ML$ has to be much smaller than the number of particles in the box.
The condition \eqref{con:KBellKM}
states that the upper bound $\ML$ on the number of excited particles must be much 
bigger than the {\it expected} number of these particles, which 
in Theorem~\ref{thm:aprioribounds} is shown to be not much worse than $K_B^3K_\ell^2\rmu\ell^3(\rmu a^3)^{1/2}$. 

Furthermore, in the localization of large matrices-argument, Proposition~\ref{prop:LocMatrices}, we need \eqref{con:Constants} and  \eqref{con:newML}, where \eqref{con:Constants} is a weak condition that determines the relative size of two error terms, whereas \eqref{con:newML} it crucial for the localization error to be smaller that the LHY-term.

The condition \eqref{cond:KH3-n} is crucial in order to absorb the error term from \eqref{eq:Q3-1-reduce} in the kinetic energy gap.

After introducing second quantization it turns out to be useful to do
$c$-number substitution in the spirit of \cite{LSYc}. 
This is done in Section~\ref{sec:Second}.
After $c$-number
substitution, where the annihilation operator for the constant
functions is replaced by a number $z$ we need to control that the
parameter $\rho_z=|z|^2/\ell^3$, which represents the density in the
$c$-number substituted condensate, is sufficiently close to $\rmu$. This is
done in Section~\ref{sec:rough}. 
To help in this control, in the case of potentials with large support, 
we introduced in Section~\ref{sec:Second} a new parameter $\ER$---satisfying \eqref{con:ER}---and a corresponding quadratic term in \eqref{eq:Kz}. Notice that this is a modification compared to \cite{FS}.

The conditions \eqref{cond:KHs-ML}, \eqref{eq:dM-small} and \eqref{con:boghamerror} are used to control localization errors from Fourier space, using the finite but high regularity $M$ of the function $\chi$.

Notice that the parameter $K_{\mathcal R}$ is defined in \eqref{con:KR} in terms of the parameter $\eta_2>0$ the existence of which has to be established.

\paragraph{Choice of parameters} We will choose to let all the parameters except $\ER$ depend on a small parameter $X\ll1$ in the following way 
\begin{align}\label{eq:Xparameter}
s=X,\ d=X^6, \ \varepsilon_T&=X^{23/4}, \ K_\ell=X^{-3/2},\ K_B=X^{-6},\
K_N= X^{-1},\ K_{\mathcal R}={\mathcal C} X^{-1},\nonumber \\
K_H''&=X^{-13},\ K_H'=X^{-14},\ \widetilde{K}_{{\mathcal M}}=X^{-46}.
\end{align}
Then the conditions \eqref{con:eTdK}--\eqref{cond:KH3-n}
will be satisfied. 

We now choose
\begin{equation}\label{eq:Xchoice}
X= (\rmu a^3)^{\min\{ \frac{1}{928}, \frac{\kappa}{11}\}},
\end{equation}
where $\kappa$ is the parameter from \eqref{eq:assumps_thm_background}.
With this choice \eqref{con:KBKell}
--\eqref{con:RNew}  are satisfied.
Notice that with this choice we get $\eta_2= 5\eta_1$ with $\eta_1$ given by \eqref{eq:Choiceeta2}
as an explicit choice for the parameter $\eta_2$,  the existence of which is part of the statement of Theorem~\ref{thm:LHY-Background}.

We choose
\begin{equation}
\ER:= \sqrt{\rmu a R^2 K_{\ell}^{-2} K_B^{-3}}.
\end{equation}
With this choice \eqref{con:ER} is satisfied.

Finally, \eqref{con:d5s}
--\eqref{con:boghamerror} are satisfied if we choose
\begin{align}\label{eq:choiceM}
M > \max\{ 79, \frac{11}{12} \kappa^{-1}\}.
\end{align}

\section{Proof of Theorem~\ref{thm:aprioribounds}}
\label{SmallBoxes}
In this appendix we will prove Theorem~\ref{thm:aprioribounds}. This
will be done in two steps. First we prove the a priori bound on the
expectation value of $n_+$ in \eqref{eq:apriorinn+NEW} this requires a
localization to smaller boxes where certain errors due to the number
of excited particles can be absorbed into a large Neumann gap.  This
is done in Subsection~\ref{sec:n+apriori}.  Armed with the a priori
estimate on $n_+$ we are able to repeat the analysis in the large box
$\Lambda$ to get the a priori bounds on $n$ and
${\mathcal{Q}}^{\rm ren}_4$ in \eqref{eq:apriori_n} and
  \eqref{eq:Q4apriori_2}. This is done in
  Subsection~\ref{sec:NewAppB}. The arguments in the two subsections
  are very similar and are indeed very similar to Appendix B in
  \cite{FS}.

  \subsection{Proof of  the a priori bound on $n_+$}\label{sec:n+apriori}
The Hamiltonian ${\mathcal H}_{\Lambda}(\rmu)$ defined in \eqref{eq:Def_HB_new}
is localized to the box 
$\Lambda:=\Lambda(0)=[-\ell/2,\ell/2]^3$. In order to arrive at the a priori bounds in Theorem~\ref{thm:aprioribounds} 
we will localize again to boxes with a length scale $\ell d\ll (\rho a)^{-1/2}$. 
The reason for this second localization is that we need a larger Neumann gap in order to absorb errors. 
We therefore introduce a new family of boxes (some of which will have a rectangular shape) given by 
\begin{equation}
B(u)=[-\ell/2,\ell/2]^3\cap \left(\ell d u +[-\ell d/2, \ell d /2]^3\right),\quad u\in \R^3.
\end{equation}
The functions that localize to these boxes are
\begin{equation}\label{eq:chiBdefinition}
        \chi_{B(u)}(x)=\chi\left(\frac{x}{\ell}\right)\chi\left(\frac{x}{d\ell}-u\right), \quad u\in\R^3,
\end{equation}
where $\chi$ is given in \eqref{eq:Def_chi}
in terms of the positive integer $M$. Observe that
\begin{equation}\label{eq:chiBintegral}
\iint\chi_{B(u)}(x)^2dx du= \ell^3.
\end{equation}

We consider the projections 
$$
P_{B(u)}\phi=|B(u)|^{-1}\langle\one_{B(u)},\phi\rangle \one_{B(u)},\quad
Q_{B(u)}\phi=\one_{B(u)}\phi -P_{B(u)}\phi.
$$
In these small boxes we consider the Hamiltonian
\begin{equation}\label{eq:HBU}
  {\mathcal H}_{B(u)}(\rmu)=\sum_{i=1}^N\left((1-\varepsilon_N){\mathcal T}_{B(u),i}-\rmu \int w_{1,B(u)}(x_i,y)dy\right)
  +\frac12\sum_{i\ne j} w_{B(u)}(x_i,x_j),
\end{equation}
where (omitting the index $u$)
\begin{equation}\label{eq:appTB}
  {\mathcal T}_{B}=\frac12\varepsilon_T(1+\pi^{-2})^{-1}(d\ell)^{-2}Q_B
  +Q_B\chi_B[\sqrt{-\Delta}-(ds\ell)^{-1}]_+^{-2}\chi_{B}Q_B,
\end{equation}
and 
\newcommand{\WS}{W^{\rm s}}
\begin{equation}
  w_{B}(x,y)=\chi_B(x)\WS(x-y)\chi_B(y),
  \quad w_{1,B}(x,y)=\chi_B(x)\WS_1(x-y)\chi_B(y),
\end{equation}          
with (where the superscript s refers to small and where we need the definitions \eqref{eq:defW_12})
\begin{equation}
  \WS(x)=\frac{W(x)}{\chi*\chi(x/(d\ell))},\quad 
  \WS_1(x)=\frac{W_1(x)}{\chi*\chi(x/(d\ell))}.
\end{equation}
Here we use that $R< d \ell/2$ by \eqref{con:RNew}.
As in the large boxes we will also need 
\begin{equation}
\quad w_{2,B}(x,y)=\chi_B(x)\WS_2(x-y)\chi_B(y),\quad 
\WS_2(x)=\frac{W_2(x)}{\chi*\chi(x/(d\ell))}.
\end{equation}
Since $\omega\leq 1$ we have 
\begin{equation}\label{eq:WSestimate}
  \int \WS_2\leq 2\int \WS_1(x)\leq C a .
\end{equation}
Similarly to \eqref{eq:W1-g-new} we get
\begin{align}\label{eq:W1-g-new-App}
0 \leq \WS_1(x) -  g(x)  \leq C g(x) \frac{|x|^2}{d^2\ell^2}
\end{align}
and with an identical proof we get a result similar to
\eqref{eq:I2-integral-new}, but with the length scale $\ell$ replaced by
$\ell d$, i.e.,
\begin{align}\label{eq:WS2g}
\left|(2\pi)^{-3}\int_{\R^3} \frac{\widehat{\WS_1}(p)^2}{2p^2}dp - \widehat{g\omega}(0)\right|
 \leq C \frac{Ra^2}{(d\ell)^2}.
\end{align}

We have by a Schwarz inequality that 
\begin{equation}\label{eq:iintw1}
  \iint w_{1,B}(x,y)dxdy\leq \iint\chi_B(x)^2\WS_1(x-y)dxdy\leq \cst a\int\chi_B^2\leq Ca|B|.
\end{equation}
Observe also that 
\begin{equation}\label{eq:appw1sliding}
  \iiint w_{1,B(u)}(x,y)dxdy du= \ell^{3}\int g =8\pi a\ell^{3}.
\end{equation}

It can be inferred from the proof of Theorem 3.10 in \cite{BS} that
the operator ${\mathcal H}_{\Lambda}(\rmu)$ 
can be bounded below
by (we are for the lower bound ignoring the first term in ${\mathcal
  T}$ in \eqref{eq:DefT_new} and the fourth term in
$\widetilde{\mathcal T}$ in \eqref{eq:DefT_tilde})
\begin{equation}\label{eq:BSsliding2}
  {\mathcal H}_{\Lambda}(\rmu)\geq \sum_{i=1}^N (1-\varepsilon_N)\frac{b}{2} Q_{\Lambda,i}\ell^{-2}+
  \int_{\R^3} {\mathcal H}_{B(u)}(\rmu) du,
\end{equation}
if
\begin{equation}\label{eq:BSlidingassp}
\varepsilon_T,s,ds^{-1},\text{ and } (s^{-2}+d^{-3})(sd)^{-2}s^M
\end{equation}
are smaller than some universal constant. Note that, if $\rmu a^3$ is
small enough, this is satisfied for our choices in
Section~\ref{sec:params}, in particular, due to \eqref{con:d5s}.

In the integral above the operators ${\mathcal H}_{B(u)}(\rmu)$
are, however, not unitarily equivalent. Depending on $u$ the boxes $B(u)$ can
be rather small and rectangular. We denote by
$\lambda_1(u)\leq\lambda_2(u)\leq\lambda_3(u)\leq d\ell$ the side
lengths of the boxes $B(u)$. To avoid boxes that are very small, in particular that $\lambda_1(u)\leq R$, we will restrict the integral  in \eqref{eq:BSsliding2} to
$u$ such that 
$$
\|\ell d u\|_\infty\leq \frac\ell2(1+d)-\lambda,
$$
with
\begin{equation}\label{eq:newlambdadef}
  \lambda=\frac1{10}d\ell.
\end{equation}
Note that since the full integral would be over the set where 
$\|\ell d u\|_\infty\leq \frac\ell2(1+d)$ the 
restriction implies that all boxes will satisfy $\lambda_1(u)\geq\lambda$.

For the kinetic energy and the repulsive potential
this restriction will only give a further lower bound. For the chemical potential term we will use the following result. 
\begin{lemma}For all $x\in\Lambda$ we have the estimate\footnote{
A similar estimate was also given in \cite{FS} as Lemma B.1, but with the minor error 
that 7 was replaced by 3.}.  
\begin{align}
-\rmu\iint &w_{1,B(u)}(x,y)dy du \nonumber \\
\geq &
-\rmu\int_{\{\|u\|_\infty-\frac12\left(\frac1d+1\right)\leq - \frac{\lambda}{\ell d}\}}\int w_{1,B(u)}(x,y)dy du\nonumber\\
&-7\rmu\int_{\{-2\frac{\lambda}{\ell d}\leq \|u\|_\infty-\frac12\left(\frac1d+1\right)\leq -\frac{\lambda}{\ell d}\}}\int w_{1,B(u)}(x,y)dy du.
\end{align}
\end{lemma}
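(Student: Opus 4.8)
The statement is a purely geometric/combinatorial inequality about how the sliding-box decomposition distributes the chemical potential term over the translation parameter $u$. The point is that near the boundary of $\Lambda$ the restricted set of $u$ (those with $\|\ell d u\|_\infty \le \frac\ell2(1+d)-\lambda$) covers the various regions of $\Lambda$ unevenly: an interior point $x$ is covered by the full range of $u$, but a point $x$ close to the boundary is covered by fewer boxes $B(u)$, and the ``lost'' boxes must be compensated by over-counting the boxes that do reach $x$. So the plan is: first, fix $x \in \Lambda$ and quantify, in each coordinate direction separately, the Lebesgue measure of the set of admissible $u_i$ for which $x_i \in B(u)$ in that coordinate; then show that after removing the non-admissible $u$ (those with $\|u\|_\infty - \frac12(\frac1d+1)$ between $-2\lambda/(\ell d)$ and $-\lambda/(\ell d)$, i.e. the ``thin shell'' of $u$'s near the cutoff) one can rescale the measure on the remaining admissible $u$ by a factor at most $7$ in each direction and still dominate the full (unrestricted) $u$-average pointwise in $x$ and $y$.

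\textbf{Key steps, in order.} (1) Reduce to a one-dimensional statement: since $w_{1,B(u)}(x,y) = \chi_{B(u)}(x) \WS_1(x-y)\chi_{B(u)}(y) \ge 0$ and $\chi_{B(u)}$ factorizes over coordinates as in \eqref{eq:chiBdefinition}, and since $\WS_1$ has support in $B(0,R)$ with $R \le \frac12 d\ell$ by \eqref{con:RNew} (so $x$ and $y$ lie in a common small box whenever the integrand is nonzero), it suffices to prove the corresponding inequality for the one-dimensional ``mass function'' $m(x_i) := \int \chi(\frac{x_i}{\ell})^2\chi(\frac{x_i}{d\ell}-u_i)^2\,du_i$, namely that the unrestricted integral of $\chi(\frac{x_i}{d\ell}-u_i)^2$ over $u_i$ is bounded by the integral over the admissible range plus $7$ times the integral over the thin shell. (2) Use the explicit support $\mathrm{supp}\,\chi \subset [-1/2,1/2]^3$ and the normalization to write, for fixed $x \in \Lambda$, the measure of $u$'s with $x \in B(u)$ as a sum over at most a bounded number of unit-cell translates; then separate these translates into those that survive the cutoff $\|\ell d u\|_\infty \le \frac\ell2(1+d)-\lambda$ and those that do not. (3) With $\lambda = \frac1{10}d\ell$, count: the number of admissible translates reaching a boundary point $x$ is at least (roughly) the number of ``lost'' ones divided by $7$ — this is the arithmetic where the constant $7$ is pinned down; bound the contribution of the lost translates by $7$ times the contribution of the outermost admissible shell (the one with $-2\lambda/(\ell d) \le \|u\|_\infty - \frac12(\frac1d+1) \le -\lambda/(\ell d)$), using $0 \le \chi \le$ (const) and monotonicity/uniformity estimates on $\chi$ near the edge of its support. (4) Multiply the $x$-factor and $y$-factor estimates back together, reinsert $\WS_1(x-y)$, integrate in $y$ and sum the pieces, and multiply by $-\rmu < 0$, which flips the inequality into the stated lower bound.

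\textbf{Main obstacle.} The genuinely delicate point is Step (3): making the covering/counting argument precise enough to get a clean universal constant (here $7$) rather than a vague ``$\le C$''. One must handle boxes $B(u)$ that are rectangular (sides $\lambda_i(u) \le d\ell$, possibly as small as $\lambda = \frac1{10}d\ell$ after the restriction) and account for the fact that the admissible $u$-set is an axis-parallel cube slightly smaller than the full range, so the ``loss'' only happens in the coordinate directions where $x$ is within distance $\sim \lambda$ of $\partial\Lambda$ — and it can happen in up to three directions simultaneously, which is why a direction-by-direction factor contributes multiplicatively. The bookkeeping that the total loss in all three directions together is still absorbed by $7$ times the single ``outermost admissible shell'' term (not $7^3$ times) requires exploiting that the integrand $\chi_{B(u)}(x)^2$ is supported in a single translate for each relevant $u$, so the three-directional losses do not compound. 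This is exactly the kind of estimate that was done (with the erroneous constant $3$) in \cite[Lemma B.1]{FS}; the correction here is to redo the counting carefully and verify $7$ suffices with the choice $\lambda = \frac1{10} d\ell$.
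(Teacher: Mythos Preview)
Your plan has the right ingredients (nonnegativity of $w_{1,B(u)}$, the product structure of $\chi$, and monotonicity of the one–dimensional factors near the edge of their support), but the combinatorics in Steps (1), (3) and (4) is garbled in a way that would not actually close the argument.

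First, the reduction in Step (1) is not to a one–dimensional inequality with constant $7$. After fixing $x,y\in\Lambda$ and dividing out $\chi(x/\ell)\chi(y/\ell)\ge 0$, what you must show is the three–dimensional inequality
\[
\int_{\{\|u\|_\infty>a-\delta\}} F(u)\,du \;\le\; 7\!\int_{\{a-2\delta\le\|u\|_\infty\le a-\delta\}} F(u)\,du,
\qquad F(u)=\prod_{i=1}^3 f_i(u_i),
\]
with $a=\tfrac12(\tfrac1d+1)$, $\delta=\lambda/(\ell d)$, and $f_i(u_i)=\chi_1(\tfrac{x_i}{d\ell}-u_i)\,\chi_1(\tfrac{y_i}{d\ell}-u_i)$ (not $\chi_1(\cdot)^2$; you need both $x$ and $y$). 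The shell region $\{a-2\delta\le\|u\|_\infty\le a-\delta\}$ is \emph{not} a product set, so you cannot simply prove a one–dimensional bound with constant $7$ and multiply; nor can you ``multiply the $x$-factor and $y$-factor estimates back together'' as in your Step (4), since the $u$-integral does not split that way.

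Second, the constant $7$ does not come from a covering count of ``lost translates divided by $7$'' as you suggest in Step (3). The actual mechanism in the paper is this: because $\chi_1$ is symmetric decreasing and $\lambda<\tfrac14 d\ell$, one has for each $i$ (and each sign)
\[
\max_{a-\delta\le |u_i|\le a} f_i(u_i)\ \le\ \min_{a-2\delta\le |u_i|\le a-\delta} f_i(u_i),
\]
i.e.\ the one–dimensional bound holds with constant $1$, not $7$. The bad region $\{\|u\|_\infty>a-\delta\}$ then splits into exactly $2^3-1=7$ disjoint pieces according to which nonempty subset of coordinates lies in $[a-\delta,a]$ in absolute value; shifting each bad coordinate inward by $\delta$ maps every piece measure–preservingly into the shell and increases $F$ by the displayed inequality. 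Summing over the $7$ pieces gives the factor $7$. Your worry that the losses ``do not compound'' is misplaced: they do compound, precisely into $2^3-1$ pieces, and that \emph{is} the $7$.

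So the fix is: replace your Step (1) by the pointwise-in-$(x,y)$ reduction above, replace the one-dimensional target by the max--min inequality with constant $1$ (this is where the symmetric-decreasing structure of $\chi_1$ and the choice $\lambda=\tfrac1{10}d\ell<\tfrac14 d\ell$ enter), and replace the vague counting in Step (3) by the $2^3-1$ coordinate-subset decomposition with the shift map. Step (4) then disappears: once the $u$-inequality holds for all $x,y$, integrate against $W_1^{\rm s}(x-y)\ge 0$ and multiply by $-\rmu$.
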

\begin{proof} The estimate above follows if we can show that for all $x,y\in\Lambda$ we have 
\begin{align}
  \chi*\chi\left(\frac{x-y}{\ell d}\right)\leq& \int_{\{\|u\|_\infty-\frac12\left(\frac1d+1\right)\leq -\frac{\lambda}{\ell d}\}}
  \chi\left(\frac{x}{\ell d}-u\right)\chi\left(\frac{y}{\ell d}-u\right)du\nonumber\\&
  +7\int_{\{-2\frac{\lambda}{\ell d}\leq \|u\|_\infty-\frac12\left(\frac1d+1\right)\leq -\frac{\lambda}{\ell d}\}}
  \chi\left(\frac{x}{\ell d}-u\right)\chi\left(\frac{y}{\ell d}-u\right)du.
\end{align}
We have 
\begin{align}
\chi*\chi\left(\frac{x-y}{\ell d}\right)-& \int_{\{\|u\|_\infty-\frac12\left(\frac1d+1\right)\leq -\frac{\lambda}{\ell d}\}}
  \chi\left(\frac{x}{\ell d}-u\right)\chi\left(\frac{y}{\ell d}-u\right)du\nonumber\\=&
  \int_{\{\|u\|_\infty-\frac12\left(\frac1d+1\right)\geq-\frac{\lambda}{\ell d}\}}
  \chi\left(\frac{x}{\ell d}-u\right)\chi\left(\frac{y}{\ell d}-u\right)du.
\end{align}
Since $x,y\in\Lambda$, the integral on the right is supported on
$\|u\|_\infty-\frac12\left(\frac1d+1\right)\leq 0$. Using the fact that by \eqref{eq:newlambdadef}
$\lambda< \ell d/4$ and that $\chi$
is a product of symmetric decreasing functions of the coordinates 
$u_1,u_2,u_3$ respectively, we may observe that for fixed $u_2,u_3$ we have
\begin{align}
  &\max_{\frac12\left(\frac1d+1\right)-\frac{\lambda}{\ell d}\leq|u_1|\leq \frac12\left(\frac1d+1\right)}
  \chi\left(\frac{x}{\ell d}-u\right)\chi\left(\frac{y}{\ell d}-u\right)
  \leq\nonumber
\\ &\min_{\frac12\left(\frac1d+1\right)-2\frac{\lambda}{\ell d} \leq|u_1|\leq \frac1{2d}+\frac12-\frac{\lambda}{\ell d}}
  \chi\left(\frac{x}{\ell d}-u\right)\chi\left(\frac{y}{\ell d}-u\right).
\end{align}
Using this repeatedly (also with $u_1,u_2$ and $u_1,u_3$ fixed) gives the result in the lemma. 
\end{proof}
As a consequence of the lemma we find from \eqref{eq:BSsliding2}, if \eqref{eq:BSlidingassp} is satisfied, that
\begin{equation}\label{eq:appsliding}
  {\mathcal H}_{\Lambda}(\rmu)\geq 
  \frac{b}{2}\ell^{-2}\sum_{i=1}^NQ_{\Lambda,i}+\int_{\|\ell d u\|_\infty\leq \frac12 \ell (1+d)
    -\lambda} {\mathcal H}_{B(u)}(m(u)\rmu) du,
\end{equation}
where $m(u)=1$ if 
$\|\ell d u\|_\infty\leq \frac12 \ell (1+d)
    -2\lambda$
and $m(u)=8$ otherwise, i.e., for $u$ near the boundary. 

The goal in the rest of this section is to give a lower bound on the
ground state energy of the operators ${\mathcal H}_{B(u)}(m(u)\rmu)$
to conclude an a priori lower bound on the ground state
energy of ${\mathcal H}_{\Lambda}(\rmu)$. 
We may now assume that the shortest side length of $B(u)$ satisfies
$\lambda_1(u)\geq\lambda$ and we will make use of the fact that the range $R$ of the potential 
satisfies $R\ll \lambda$ by \eqref{con:RNew}.
For simplicity we will often omit the parameter $u$. 
A main ingredient in getting a lower bound is to get a priori bounds on the operators 
\begin{equation}\label{eq:smallboxnn0n+}
n=\sum_{i=1}^N \one_{B,i},\quad n_0=\sum_{i=1}^NP_{B,i},\quad n_+=\sum_{i=1}^NQ_{B,i}.
\end{equation}
Note that the operator $n$ commutes with ${\mathcal H}_{B}$,
so we may consider $n$ a number.

Applying the decomposition of the potential energy in Subsection~\ref{sec:potsplit} 
to the small boxes we arrive at the following lemma. 
\begin{lemma}\label{lm:appinteractionestimate} There is a constant $C>0$ such that on any small box $B$ we have 
\begin{align}\label{eq:SmallsimpleQs}
  -\rmu \sum_{i=1}^N \int w_{1,B}(x,y)\,dy+
   \frac{1}{2} \sum_{i\neq j}  w_B(x_i, x_j)
  \geq A_0+A_2-Ca (\rmu +n_0|B|^{-1})n_+  
\end{align}
where
\begin{align}
  A_0&=\frac{n_0(n_0-1)}{2|B|^2}\iint w_{2,B}(x,y)\,dx dy \nonumber\\
  &- \left(\rmu \frac{n_0}{|B|}+\frac14\left(\rmu
  -\frac{n_0-1}{|B|}\right)^2\right)\iint w_{1,B}(x,y)\,dx dy
  \label{eq:A0}
\intertext{and} 
  A_2&= \frac{1}{2}\sum_{i\neq j} P_{B,i} P_{B,j} w_{1,B}(x_i,x_j) Q_{B,j} Q_{B,i} + h.c.
\end{align} 
\end{lemma}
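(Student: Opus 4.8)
The plan is to adapt the potential-energy decomposition of Lemma~\ref{lm:potsplit}—proved for the large box $\Lambda$—verbatim to the small box $B=B(u)$, keeping only the terms we need and discarding the manifestly positive $\mathcal{Q}_4^{\rm ren}$-term. Concretely, I would write $\one_{B,i}=P_{B,i}+Q_{B,i}$ in both slots of $w_B(x_i,x_j)$ and expand, exactly as in the proof of Lemma~\ref{lm:potsplit}, now using the small-box analogues of the normalization identities: $|B|^{-1}\iint w_{1,B}=\int g=8\pi a$ (from a change of variables as in \eqref{eq:DefU}) and $|B|^{-1}\iint w_{2,B}=8\pi a+\int g\omega$ (as in \eqref{eq:w2int}), together with $\WS_1=\WS(1-\omega)$, $\WS_2=\WS(1-\omega^2)$. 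This produces the small-box versions $\mathcal{Q}_{0,B}^{\rm ren},\ldots,\mathcal{Q}_{4,B}^{\rm ren}$; since $v\ge 0$ (hence $\WS,\WS_1\ge 0$) the term $\mathcal{Q}_{4,B}^{\rm ren}$ is positive and is dropped for the lower bound.

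Next I would identify $A_0$ and $A_2$ among the surviving terms. The term $A_2$ is precisely the "$PPQQ+h.c.$" piece of $\mathcal{Q}_{2,B}^{\rm ren}$. For $A_0$, combine $\mathcal{Q}_{0,B}^{\rm ren}$ (which, by the small-box analogue of Lemma~\ref{lem:QDecompSimpler}, equals $\frac{n_0(n_0-1)}{2|B|^2}\iint w_{2,B}-\rmu\frac{n_0}{|B|}\iint w_{1,B}$) with a completion of the square: the extra chemical-potential-type contribution $-\frac14(\rmu-\frac{n_0-1}{|B|})^2\iint w_{1,B}$ is non-positive, so subtracting it only weakens the bound; it is inserted so that later (in the subsequent lemmas of this appendix, where $A_0$ is bounded below) one has a clean square in $(\rmu-\rho_0)$ to work with. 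The $\varepsilon_N$-prefactor $(1-\varepsilon_N)$ in \eqref{eq:HBU} multiplies only the kinetic term, not the potential, so it plays no role here.

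It then remains to show that all the discarded "small" terms are bounded below by $-Ca(\rmu+n_0|B|^{-1})n_+$. These are: (i) the $\mathcal{Q}_{1,B}^{\rm ren}$-term and the off-diagonal $PPQQ$ cross-terms other than $A_2$, and the two $PQ\,w_{2,B}\,PQ$ / $PQ\,w_{2,B}\,QP$ pieces of $\mathcal{Q}_{2,B}^{\rm ren}$, plus the $-\rmu\sum_i Q_i\int w_{1,B}Q_i$ piece; (ii) the $0$--$3Q$ remainders. Each is handled by a Cauchy–Schwarz/operator-norm estimate exactly as in the rewriting in Lemma~\ref{lem:QDecompSimpler}: one uses $\|W_1\|_{L^1}\le Ca$, $\|W\|_{L^1}\le Ca$ on the renormalized potential (note the $L^1$-norms here are controlled by $a$, not $\int v$, since $\WS,\WS_1$ are built from $W,W_1$ whose integrals are $\sim a$), the bound $0\le\omega\le 1$, and $\|\chi_B\|_\infty\le C$, to convert every such term into a multiple of $a(\rmu+n_0|B|^{-1})$ times $n_+$ (or into a genuinely positive term that is again dropped). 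The $3Q$-type term $PQ\,w_{1,B}\,QQ+h.c.$ is estimated by splitting off one factor and Cauchy–Schwarz against $n_0/|B|$; the $2Q$-type $PP\,w_{1,B}\,QQ$ that is not $A_2$ is its adjoint and combines. The main obstacle—really the only subtlety—is bookkeeping: making sure that after discarding $\mathcal{Q}_{4,B}^{\rm ren}$ one has genuinely kept $A_0+A_2$ exactly (including the completed square) and that every leftover term is of the stated form; there is no analytic difficulty beyond what is already in the proofs of Lemmas~\ref{lm:potsplit} and \ref{lem:QDecompSimpler}, which this lemma simply transcribes to the small box.
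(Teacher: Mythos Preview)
Your overall plan—transcribe Lemma~\ref{lm:potsplit} to the small box and estimate the pieces—is the right one and is what the paper does (see the proof of the companion Lemma~\ref{lm:appinteractionestimate_a}). But two steps you label ``bookkeeping'' are where the actual work lies, and your treatment of both is incomplete.

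First, you cannot drop $\mathcal{Q}_{4,B}^{\rm ren}$ at the outset and then bound $\mathcal{Q}_{3,B}^{\rm ren}=\sum P_iQ_jw_{1,B}Q_jQ_i+h.c.$ by a bare Cauchy--Schwarz. Any such estimate leaves a genuine four-$Q$ term $\sum Q_iQ_jw_{1,B}Q_jQ_i$, and there is no bound of the form $Ca(\rmu+n_0/|B|)n_+$ for that: the operator norm of $w_{1,B}$ is of order $\|v\|_\infty$, not $a$, and its $L^1$-norm does not help here. The paper instead keeps $\mathcal{Q}_{4,B}^{\rm ren}$ long enough to absorb $\mathcal{Q}_{3,B}^{\rm ren}$: one rewrites $P_iQ_jw_{1,B}Q_jQ_i=P_iQ_jw_{1,B}\bigl[Q_jQ_i+\omega(P_jP_i+P_jQ_i+Q_jP_i)\bigr]-P_iQ_jw_{1,B}\omega(\cdots)$, applies Cauchy--Schwarz against $\mathcal{Q}_{4,B}^{\rm ren}$ on the first bracket (using $w(1-\omega)^2=w_1(1-\omega)\le w_1$), and only then discards the remainder of $\mathcal{Q}_{4,B}^{\rm ren}$; see \eqref{eq:3Qto4Q}--\eqref{eq:Q3+Q4CSa}. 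Moreover, the residual $-\sum P_iQ_j(w_{1,B}\omega)P_jP_i+h.c.$ produced by this step \emph{cancels} the $(W_1^{\rm s}\omega)*\chi_B$ piece of $\mathcal{Q}_{1,B}^{\rm ren}$ from \eqref{eq:Q1n0}; without that cancellation you are left with a contribution of size $a\,n_0^2/|B|$, which is leading order and not of the claimed form.

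Second, the term $-\tfrac14\bigl(\rmu-\tfrac{n_0-1}{|B|}\bigr)^2\iint w_{1,B}$ in $A_0$ is not ``inserted for convenience'': the remaining part of $\mathcal{Q}_{1,B}^{\rm ren}$ carries the factor $(n_0/|B|-\rmu)$, which is \emph{not} a priori small in the small box, so $\mathcal{Q}_{1,B}^{\rm ren}$ cannot be bounded by $-Ca(\rmu+n_0/|B|)n_+$ alone. The quadratic term arises exactly as the cost of a weighted Cauchy--Schwarz that uses the commutation of $n_0$ with $P_i,Q_i$ to factor $n_0-\rmu|B|$; see \eqref{Q_1 remaining part} and the two displays following it. (A side remark: your claim $\|W\|_{L^1}\le Ca$ is false---$\int W\sim\int v$ may be enormous in this paper; only $\int W_1,\int W_2\le Ca$. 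This slip is not what breaks the argument, but it is precisely the distinction that forces one to route $\mathcal{Q}_3^{\rm ren}$ through $\mathcal{Q}_4^{\rm ren}$ rather than estimate it directly.)
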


\begin{proof}
  Lemma~\ref{lm:appinteractionestimate} is identical to \cite[Lemma~B.2]{FS}, so we omit the proof here.
  Moreover, we give an almost identical proof of a slightly stronger result in the case of the large box
  $\Lambda$ in Lemma~\ref{lm:appinteractionestimate_a} below.  
\end{proof}

We express the term $A_2$ from the lemma in second quantization. Introducing the operators
$$
\ta_p^\dagger=|B|^{-1/2}a^\dagger(Q_B\chi_Be^{-ipx})a_0
$$
we can write 
$$
A_2=\frac12 (2\pi)^{-3}\int \widehat{\WS_1}(p)(\ta_p^\dagger \ta_{-p}^\dagger+\ta_{-p}\ta_p)dp.
$$
We shall control $A_2$ using Bogoliubov's method. In order to do this we will add and subtract a term 
\begin{equation}\label{eq:A1F}
  A_1= (2\pi)^{-3}K_{\rm s} a \int (\ta_p^\dagger \ta_{p}+\ta^\dagger_{-p}\ta_{-p})dp,
\end{equation}
with the universal constant $K_{\rm s}>0$ chosen appropriately below. 
Note that we have 
\begin{equation}\label{eq:A1estimate-smallbox}
  A_1\leq K_{\rm s} a \frac{n_0+1}{|B|} n_+\|\chi_B\|_\infty^2\leq CK_{\rm s} a \frac{n_0+1}{|B|} n_+.
\end{equation}

\begin{lemma}[Bogoliubov's method in small boxes]\label{lm:appbogolubov} 
  If the parameters $\varepsilon_N,s,d$, and $K_\ell$ satisfy Assumption~\ref{assump:params} then there exists a constant $C>0$ such that
  if $\rmu a^3\leq C^{-1}$ then 
\begin{align}\label{eq:refinedaprioribog}
  (1-\varepsilon_N)&\sum_{i=1}^NQ_{B,i}\chi_{B,i}[\sqrt{-\Delta_i}-(ds\ell)^{-1}]_+^{-2}\chi_{B,i}Q_{B,i}+A_2
  \geq\nonumber\\&-\frac{1}{2} \widehat{g\omega}(0) \frac{(n+1)n}{|B|^2} \int \chi_B^2
  - C a \left(\frac{Ra}{(d\ell)^2}+a(ds\ell)^{-1}\log(ds\ell a^{-1})\right)\frac{(n+1)n}{|B|^2}\int\chi_B^2\nonumber\\
  &-C\left(a^4(ds\ell)^{3}\left(\frac{n+1}{|B|}\right)^3 +a(ds\ell)^{-3}\right)\frac{n}{|B|}\int\chi_B^2
  -Ca\frac{n+1}{|B|}n_+.
\end{align}
\end{lemma}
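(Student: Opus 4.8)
\textbf{The plan} is to diagonalize the quadratic form $A_2$ (plus the regularizing piece $A_1$) using Bogoliubov's method, Theorem~\ref{thm:bogolubov-complete}, treating the operators $\ta_p^\dagger, \ta_p$ essentially as bosonic creation/annihilation operators. First I would write the kinetic term on the left-hand side in terms of the $\ta_p$: since $Q_{B}\chi_B[\sqrt{-\Delta}-(ds\ell)^{-1}]_+^{-2}\chi_B Q_B$ is a one-particle operator, its second quantization (after multiplying by $a_0^\dagger a_0/|B|$, which requires care but is harmless since $a_0$ nearly commutes with everything here) produces $(2\pi)^{-3}\int [\sqrt{|p|}-(ds\ell)^{-1}]_+^{-2}\,\ta_p^\dagger\ta_p\,dp$ up to errors controlled by $Ca(n+1)|B|^{-1}n_+$ (these are the last error term in \eqref{eq:refinedaprioribog}, and they come from commutators $[a_0,a_0^\dagger]$, from the difference between $\chi_B^2$ and $\chi_B*\chi_B$-type convolutions, and from the $P_B$ versus $Q_B$ bookkeeping, all handled as in Lemma~\ref{lm:appinteractionestimate} and its large-box analogue). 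I would then add and subtract $A_1$ from \eqref{eq:A1F}, using \eqref{eq:A1estimate-smallbox} to absorb the subtracted $-A_1$ into the $Ca(n+1)|B|^{-1}n_+$ error.

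Next I would apply the Bogoliubov inequality \eqref{eq:BogIneq} pointwise in $p$ with $a_{\pm} = \ta_{\pm p}$, ${\mathcal A}(p) = (1-\varepsilon_N)[\sqrt{|p|}-(ds\ell)^{-1}]_+^{-2} + K_{\rm s}a \cdot(\text{const})$ coming from the combined kinetic plus $A_1$ terms, and ${\mathcal B}(p) = \widehat{\WS_1}(p)$. Choosing the universal constant $K_{\rm s}$ large enough guarantees $|{\mathcal B}(p)| \le K_2 a \le {\mathcal A}(p)$ so the hypothesis of Theorem~\ref{thm:bogolubov-complete} holds (this is exactly why $A_1$ is introduced). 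The commutator correction $[\ta_p,\ta_p^\dagger]$ is bounded by $C(n_0+1)|B|^{-1}\|\chi_B\|_\infty^2 \le Ca^{-1}\cdot(\ldots)$-type quantities times the number operator, contributing again to the $Ca(n+1)|B|^{-1}n_+$ error. The outcome of the Bogoliubov step is a lower bound by $-\frac12(2\pi)^{-3}\int\big({\mathcal A}(p)-\sqrt{{\mathcal A}(p)^2-{\mathcal B}(p)^2}\big)\,dp$ times $\big(\frac{n_0+1}{|B|}\big)^2|B|$ or so (the $\ta$'s carry a factor $(n_0+1)/|B|$ squared), plus the commutator errors.

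Then I would estimate that integral using Lemma~\ref{lm:aprioriintegral_new} applied with $\kappa_A = 1-\varepsilon_N$ (or the appropriate constant), $P_1 = \frac12(ds\ell)^{-1}$, $K_1 \sim K_{\rm s}$, $K_2$ the bound from $\|\WS_1\|$; combined with \eqref{eq:WS2g} (the small-box analogue of \eqref{eq:I2-integral-new}) to identify the leading term $\widehat{g\omega}(0)$, this yields precisely the first line on the right of \eqref{eq:refinedaprioribog} together with the $Ra/(d\ell)^2$ term, the $a(ds\ell)^{-1}\log(ds\ell/a)$ term, and the $a^4(ds\ell)^3(\frac{n+1}{|B|})^3$ and $a(ds\ell)^{-3}$ remainders. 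The parameter conditions of Assumption~\ref{assump:params} (notably \eqref{con:RNew} ensuring $R\ll d\ell$, and the smallness of $\rmu a^3$) are what make Lemma~\ref{lm:aprioriintegral_new} applicable and keep all these remainders of the stated form. I would also replace factors of $n_0+1$ or $n_0(n_0-1)$ by $(n+1)n$ throughout, since $n_0 \le n$ and the difference is again absorbable into the $n_+$ error term.

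\textbf{The main obstacle} I anticipate is the careful bookkeeping in passing to second quantization and back: the operators $\ta_p$ involve both $a_0$ and $Q_B\chi_B e^{-ipx}$, so the kinetic energy's second quantization does not literally equal $(2\pi)^{-3}\int[\ldots]_+^{-2}\ta_p^\dagger\ta_p$ — there is a genuine discrepancy between $\sum_i Q_{B,i}\chi_{B,i}(\ldots)\chi_{B,i}Q_{B,i}$ and $|B|^{-1}a_0^\dagger a_0 \cdot(\text{that operator restricted suitably})$, and showing the difference is bounded by $Ca(n+1)|B|^{-1}n_+$ requires using $\|\chi_B\|_\infty \le C$, the spectral localization $[\sqrt{-\Delta}-(ds\ell)^{-1}]_+^{-2} \le C(ds\ell)^2$ on the relevant range, and that $R \ll d\ell$ so $\chi_B$ is slowly varying on the scale of the potential. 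This is routine in spirit — it parallels \cite[Lemma~B.3]{FS} and Lemma~\ref{lm:appinteractionestimate} — but it is where most of the real work sits; the Bogoliubov diagonalization and the integral estimate via Lemma~\ref{lm:aprioriintegral_new} are then comparatively mechanical.
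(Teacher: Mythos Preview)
Your overall strategy matches the paper's exactly: add and subtract $A_1$, rewrite everything in terms of $\ta_p,\ta_p^\dagger$, apply Theorem~\ref{thm:bogolubov-complete} pointwise in $p$, and then estimate the resulting integral with Lemma~\ref{lm:aprioriintegral_new} together with \eqref{eq:WS2g}. However, two points in your execution are off and would lead you astray.

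First, your ``main obstacle'' is a phantom. The identity
\[
(2\pi)^{-3}\int f(p)\,\ta_p^\dagger\ta_p\,dp \;=\; |B|^{-1}(n_0+1)\sum_j \bigl(Q_B\chi_B f(\sqrt{-\Delta})\chi_B Q_B\bigr)_j
\]
is \emph{exact}, because $\ta_p^\dagger\ta_p=|B|^{-1}a^\dagger(\cdot)\,a_0a_0^\dagger\,a(\cdot)$, the modes $Q_B\chi_Be^{-ipx}$ are orthogonal to the constant, and $a_0a_0^\dagger=n_0+1$ commutes with the rest. Using $n_0\le n$ (where $n$ is a number since it commutes with the Hamiltonian) then gives the one-sided inequality needed. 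There is no approximation here, no commutator bookkeeping, no invocation of $R\ll d\ell$. The \emph{only} source of the $Ca(n+1)|B|^{-1}n_+$ error is the subtraction of $A_1$ via \eqref{eq:A1estimate-smallbox}.

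Second, and relatedly, the factor $|B|/(n+1)$ must go \emph{into} ${\mathcal A}(p)$: the paper takes
\[
{\mathcal A}(p)=(1-\varepsilon_N)\frac{|B|}{n+1}\bigl[|p|-(ds\ell)^{-1}\bigr]_+^2+2K_{\rm s}a .
\]
This is what makes Lemma~\ref{lm:aprioriintegral_new} (applied with $\kappa_A=(1-\varepsilon_N)|B|/(n+1)$) produce the correct powers of $(n+1)/|B|$ in every error term. You instead placed the density factor outside and described the commutator $[\ta_p,\ta_p^\dagger]\le n|B|^{-1}\int\chi_B^2$ as ``contributing to the $n_+$ error''; in fact that commutator is what multiplies the entire Bogoliubov integral $-\tfrac12({\mathcal A}-\sqrt{{\mathcal A}^2-{\mathcal B}^2})$ and is therefore the source of the \emph{leading} term $-\tfrac12\widehat{g\omega}(0)\,(n+1)n|B|^{-2}\int\chi_B^2$, not an error. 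Once you fix these two normalizations, the proof is mechanical.
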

\begin{proof}
We add $A_1$ from \eqref{eq:A1F} to the term we want to estimate. Using $n_0\leq n$ we may write
\begin{equation}\label{eq:smallbox-h-op-int}
  (1-\varepsilon_N)\sum_{i=1}^N Q_{B,i}\chi_{B,i}[\sqrt{-\Delta_i}-(ds\ell)^{-1}]_+^{-2}\chi_{B,i}Q_{B,i}+A_1+A_2\geq
  (2\pi)^{-3}\frac12\int h(p) dp,
\end{equation}
where $h$ is the operator
$$
h(p)={\mathcal A}(p)(\ta_p^\dagger \ta_{p}+\ta^\dagger_{-p}\ta_{-p})+{\mathcal B}(p)(\ta_p^\dagger \ta_{-p}^\dagger+\ta_{-p}\ta_p).
$$
with
$$
{\mathcal A}(p)= (1-\varepsilon_N)\frac{|B|}{n+1}[|p|-(ds\ell)^{-1}]^2_++2K_{\rm s} a,\quad
{\mathcal B}(p)= \widehat{\WS_1}(p).
$$
Here we have used that for any function $f(p)$ we have 
\begin{align*}
(2\pi)^{-3}\int f(p) \ta_p^\dagger\ta_pdp &=|B|^{-1}(n_0+1)\sum_j \left(Q_B\chi_Bf(\sqrt{-\Delta})\chi_B Q_B\right)_j \\
&\leq
|B|^{-1}(n+1)\sum_j \left( Q_B\chi_Bf(\sqrt{-\Delta})\chi_B Q_B\right)_j.
\end{align*}

We observe that 
$$
[\ta_p,\ta_p^\dagger]\leq n_0|B|^{-1}\int \chi_B^2\leq n|B|^{-1}\int \chi_B^2.
$$
We have by \eqref{eq:WSestimate} that 
$$
|{\mathcal B}(p)|= |\widehat{\WS_1}(p)|\leq \int\WS_1\leq C_0a .
$$ 
If we therefore choose $K_{\rm s}= C_0$ we see that $|{\mathcal B}|/{\mathcal A}\leq 1/2$, and 
we get the following lower bound from Theorem~\ref{thm:bogolubov-complete}. 
$$
h(p)\geq -\left({\mathcal A}(p)-\sqrt{{\mathcal A}(p)^2-{\mathcal B}(p)^2}\right)n|B|^{-1}\int \chi_B^2.
$$
We also see that all the assumptions are satisfied in order to apply
Lemma~\ref{lm:aprioriintegral_new} to estimate the integral in
\eqref{eq:smallbox-h-op-int}. Note, in particular, that
$\varepsilon_N\leq a(d s \ell)^{-1}\leq 1/2$ by \eqref{eq:small_and_large}, \eqref{con:eTdK},
\eqref{con:sdKellKB}, \eqref{con:KBKell},
and \eqref{eq:Cond_epsilonN} with $K_N\gg 1$. Together with the
estimates \eqref{eq:WS2g} and \eqref{eq:A1estimate-smallbox} this
proves \eqref{eq:refinedaprioribog}.
\end{proof}

To estimate the energy in the small boxes it is important to
establish the result in the following lemma on the integrals in \eqref{eq:A0}.
\begin{lemma}\label{lm:w2>w1} 
  There is a constant $C>0$ such that if $R<\ell d/2$ and if $\lambda_1$ denotes the shortest length of the box
  $B$ then
\begin{equation}\label{eq:w1>}
  \iint w_{1,B}(x,y)dx dy \geq 8\pi a\left(1-C
  \Big(\frac{R}{\lambda_1}\Big)^2\right)\int\chi_B^2.
\end{equation}
\begin{equation}\label{eq:w>g}
  \iint w_{2,B}(x,y)dx dy\geq \iint w_{1,B}(x,y)dx dy +
  \widehat{g\omega}(0)\int\chi_B^2
  -C \frac{Ra}{\lambda_1^2} a \int\chi_B^2.
\end{equation}
\end{lemma}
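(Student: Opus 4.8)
The plan is to derive both inequalities from a single gradient estimate for the small–box localization function, via a change of variables that exploits the evenness and short range of the kernels $\WS_1$ and $\WS_2-\WS_1$. First I would dispose of the degenerate case: since $g=v(1-\omega)\geq 0$ and $\chi*\chi\geq 0$ we have $w_{1,B},w_{2,B}\geq 0$, so both left–hand sides are $\geq 0$; hence, taking $C\geq 1$ in the statement, we may assume $R\leq\lambda_1$ (otherwise the right–hand sides are $\leq 0$).

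Next, substitute $z=x-y$ and set $\gamma(z):=\int\chi_B(y+z)\chi_B(y)\,dy$. The elementary identity $\gamma(z)=\int\chi_B^2-\tfrac12\int|\chi_B(y+z)-\chi_B(y)|^2\,dy$ together with $\|\chi_B(\cdot+z)-\chi_B\|_2\leq|z|\,\|\nabla\chi_B\|_2$ gives $\gamma(z)\geq\int\chi_B^2-\tfrac12|z|^2\|\nabla\chi_B\|_2^2$. Since $\WS_1$ and $\WS_2-\WS_1=\dfrac{g\omega}{\chi*\chi(\cdot/\ell)\,\chi*\chi(\cdot/(d\ell))}$ are even and supported in $\overline{B(0,R)}$, this yields
\[ \iint w_{1,B}=\int\WS_1(z)\,\gamma(z)\,dz\geq\Big(\int\WS_1\Big)\int\chi_B^2-\tfrac12\|\nabla\chi_B\|_2^2\int|z|^2\WS_1(z)\,dz, \]
and likewise $\iint w_{2,B}-\iint w_{1,B}\geq\big(\int(\WS_2-\WS_1)\big)\int\chi_B^2-\tfrac12\|\nabla\chi_B\|_2^2\int|z|^2(\WS_2-\WS_1)(z)\,dz$. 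Using $\chi*\chi\leq 1$ we get $\WS_1\geq g$ and $\WS_2-\WS_1\geq g\omega$, hence $\int\WS_1\geq\int g=8\pi a$ by \eqref{eq:ScatLengthBasic} and $\int(\WS_2-\WS_1)\geq\int g\omega=\widehat{g\omega}(0)$. Because $R<d\ell/2$, on $\overline{B(0,R)}$ both $\chi*\chi$ denominators are bounded below by a positive universal constant, so there $\WS_1\leq Cg$ and $\WS_2-\WS_1\leq Cg\omega$; consequently $\int|z|^2\WS_1\leq CR^2\int g\leq CR^2a$ and, invoking $\omega(z)\leq a/|z|$ from \eqref{omegabounds}, $\int|z|^2(\WS_2-\WS_1)\leq Ca\int|z|\,g(z)\,dz\leq CaR\int g\leq CRa^2$.

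It then remains to establish $\|\nabla\chi_B\|_2^2\leq C\lambda_1^{-2}\int\chi_B^2$; inserting this into the two displays above produces exactly \eqref{eq:w1>} and \eqref{eq:w>g}. Here I would use the product structure $\chi(x)=C_M\prod_i\zeta(x_i)^{M+2}$ from \eqref{eq:Def_chi}, which gives $\chi_B(x)=C_M^2\prod_i\psi_i(x_i)$ with $\psi_i(t)=\zeta(t/\ell)^{M+2}\zeta\big(t/(d\ell)-u_i\big)^{M+2}$, so that $\|\nabla\chi_B\|_2^2/\int\chi_B^2=\sum_i\int(\psi_i')^2/\int\psi_i^2$, and it suffices to prove the one–dimensional bound $\int(\psi_i')^2\leq C\lambda_i^{-2}\int\psi_i^2$, where $\lambda_i\geq\lambda_1$ is the $i$-th side length of $B$. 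This one–dimensional estimate is what I expect to be the main obstacle, although it is routine: since $d<1$, $\supp\psi_i$ is a single interval $[\alpha,\beta]$ of length $\lambda_i$, each endpoint of which is an endpoint of the support of one of the two factors, and using $\zeta(s)\asymp\dist(s,\{\pm\tfrac12\})$ and $|\zeta'/\zeta|\leq C\dist(s,\{\pm\tfrac12\})^{-1}$ one obtains $|\psi_i'(t)|\leq C_M\,\psi_i(t)\,\min(t-\alpha,\beta-t)^{-1}$; since $\psi_i$ vanishes to order $M+2\geq 2$ at $\alpha$ and $\beta$, a one–dimensional Hardy inequality then gives $\int(\psi_i')^2\leq C\int\psi_i^2\,\min(t-\alpha,\beta-t)^{-2}\leq C\lambda_i^{-2}\int\psi_i^2$. (In the only regime actually used, $\lambda_1\geq d\ell/10$ as fixed in \eqref{eq:newlambdadef}, one may instead bound $\|\nabla\chi_B\|_2\leq C(d\ell)^{-1}\|\chi_B\|_2$ directly, so the general estimate is needed only for the cosmetic generality of the statement.) The whole argument parallels the treatment of the explicit localization function in \cite{FS}.
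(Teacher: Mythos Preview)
Your approach is sound and close to the paper's: both reduce to a second–order estimate on $\gamma(z)=\int\chi_B(y+z)\chi_B(y)\,dy$, then feed in a derivative bound on $\chi_B$ scaled by $\lambda_1^{-2}$. The paper uses the pointwise Hessian bound $\|\partial_i\partial_j\chi_B\|_\infty\leq C\lambda_1^{-2}|B|^{-1}\int\chi_B$ (imported from \cite{FS}) together with Cauchy--Schwarz $(\int\chi_B)^2\leq|B|\int\chi_B^2$; you instead use the $L^2$ gradient bound $\|\nabla\chi_B\|_2^2\leq C\lambda_1^{-2}\int\chi_B^2$, via the clean identity $\gamma(z)=\|\chi_B\|_2^2-\tfrac12\|\chi_B(\cdot+z)-\chi_B\|_2^2$. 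Either route works; yours is a touch more self–contained.

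There is, however, a genuine gap in your justification of the one–dimensional estimate. From $|\psi_i'|\leq C_M\psi_i/\dist$ you correctly get $\int(\psi_i')^2\leq C\int\psi_i^2\dist^{-2}$, but the second step $\int\psi_i^2\dist^{-2}\leq C\lambda_i^{-2}\int\psi_i^2$ is \emph{not} a Hardy inequality (Hardy goes the other way and would make the chain circular). This inequality is nonetheless true for your specific $\psi_i$, but it needs the product structure: after rescaling $t=\alpha+\lambda_i s$, both weights $\psi_i^2$ and $\psi_i^2\dist^{-2}$ become fixed profiles in $s\in[0,1]$ (up to a common prefactor), using $\zeta(s)\asymp\min(\tfrac12-s,\tfrac12+s)$; the ratio of the two $s$–integrals is then an $M$–dependent constant. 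Spell this out, or---as you note yourself---simply work in the only regime used, $\lambda_1\geq d\ell/10$, where $\|\nabla\chi_B\|_2\leq C(d\ell)^{-1}\|\chi_B\|_2\leq C'\lambda_1^{-1}\|\chi_B\|_2$ is immediate. A minor second point: your ``degenerate case'' disposal is not needed (your main argument already covers all $R<d\ell/2$), and the claim that the right side of \eqref{eq:w>g} is $\leq0$ when $R>\lambda_1$ is not correct in general, since $\widehat{g\omega}(0)\leq 8\pi a$ need not be dominated by $CRa^2/\lambda_1^2$.
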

\begin{proof}
The estimate \eqref{eq:w1>} follows from
\begin{align}
  \iint w_{1,B}(x,y)dxdy 
  &\geq \int W_1^{\rm s}(x)dx \left(\int\chi_B^2-CR^2\|\nabla^2\chi_B\|_\infty\int\chi_B\right)\nonumber\\
  & \geq \left(1-C (R\lambda_1^{-1})^2\right)\int  g(x)dx\int\chi_B^2\nonumber
\end{align}
where we have used that $W$ is spherically symmetric, \eqref{eq:W1-g-new-App},
$|B|^{-1}\left(\int\chi_B\right)^2\leq \int\chi_B^2$, and that 
\begin{equation}\label{eq:chiBestimate}
  \|\partial_i\partial_j\chi_B\|_\infty\leq C_M \lambda_1^{-2}|B|^{-1}\int\chi_B,
\end{equation}
which is a simple exercise (see (B.37) and Appendix C in \cite{FS}). We do not need to assume that $1-C (R\lambda_1^{-1})^2>0$ since otherwise the inequality is trivially true.
We conclude using $\int g=8\pi a$.
The proof of \eqref{eq:w>g} uses the same ideas as well as \eqref{omegabounds}, as follows.
\begin{align}
  \iint w_{2,B}(x,y)dx dy&- \iint w_{1,B}(x,y)dx dy -  \int \omega(x)W_1^{\rm s}(x)dx \int\chi_B^2 \nonumber \\
  &\geq - C \int \omega(x)W_1^{\rm s}(x)|x|^2 dx \|\nabla^2\chi_B\|_\infty\int\chi_B\nonumber\\
  & \geq - C \frac{a^2 R}{\lambda_1^2} \int \chi_B^2,
\end{align}
where we in particular used \eqref{eq:WSestimate}, \eqref{eq:chiBestimate} and \eqref{omegabounds} to get the last estimate.
Finally, \eqref{eq:w>g} now easily follows upon using \eqref{eq:W1-g-new-App}.
\end{proof}
Note that the error $R^2/\lambda_1^2$ in \eqref{eq:w1>} is worse than the error $Ra/\lambda_1^2$ in
\eqref{eq:w>g}. Below we shall however only use \eqref{eq:w1>} to conclude that $\iint w_{1,B}(x,y)dx dy\geq ca\int \chi_B^2$,
which follows if $R/\lambda_1$ is small enough.

We are now ready to give the bound on the energy in the small boxes. 
\begin{theorem}[Lower bound on energy in small boxes]\label{thm:smallbox}
  Assume $B$ is a box with shortest side length
  $\lambda_1\geq \lambda$, where $\lambda$ is given in
  \eqref{eq:newlambdadef}.  Assume moreover that the parameters
  $\varepsilon_T, s, d, K_N, K_\ell$, and $K_B$  satisfy
  Assumption~\ref{assump:params} and that $R$ satisfies \eqref{con:RNew}. 
  Then there are universal constants
  $C>0$ and
  $0<c<1/2$ such that if $\rmu a^3< C^{-1}$
    we have for the Hamiltonian defined in \eqref{eq:HBU} that
  \begin{align}
    {\mathcal H}_B(\rmu)\geq& 
                              -\frac12\rmu^2\iint w_{1,B}(x,y)dxdy
                              -CK_B^3\rmu^2 a\sqrt{\rmu a^3}\int\chi_B^2-C\rmu a.
  \end{align}
\end{theorem}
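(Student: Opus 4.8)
The plan is to collect the a priori estimates from Lemmas~\ref{lm:appinteractionestimate}--\ref{lm:w2>w1} into a lower bound on ${\mathcal H}_B(\rmu)$ that is, up to controlled errors, a function of the particle number $n$ alone, and then to minimize over $n$. First I would combine \eqref{eq:SmallsimpleQs} with the Bogoliubov estimate \eqref{eq:refinedaprioribog} and the Neumann gap term $\frac12\varepsilon_T(1+\pi^{-2})^{-1}(d\ell)^{-2}Q_B$ coming from the first term of ${\mathcal T}_B$ in \eqref{eq:appTB}. The point of the gap is that all the error terms proportional to $n_+$ in \eqref{eq:SmallsimpleQs}, \eqref{eq:A1estimate-smallbox} and \eqref{eq:refinedaprioribog} are of the form $Ca(\rmu+n_0|B|^{-1})n_+$, and since $n_0\leq n$, $n\lesssim \rmu|B|$ whenever the energy is negative, and $|B|\sim(d\ell)^3$, these are dominated by $\varepsilon_T(d\ell)^{-2}n_+$ using \eqref{con:eTdK}, \eqref{con:sdKellKB} and $\rmu a (d\ell)^2\ll \varepsilon_T$ (a consequence of the parameter choices). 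Thus the gap term absorbs all the $n_+$-errors, and we are left with a lower bound purely in terms of $n$, $n_0$, and the integrals of $w_{1,B}$, $w_{2,B}$.

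Next I would insert the estimates \eqref{eq:w1>} and \eqref{eq:w>g} on the integrals, using $R\ll\lambda_1$ (guaranteed by \eqref{con:RNew} and $\lambda_1\geq\lambda=\frac1{10}d\ell$) so that $\iint w_{1,B}\geq ca\int\chi_B^2>0$. Writing $t:=n|B|^{-1}$ for the local density and $I:=\iint w_{1,B}(x,y)\,dx\,dy$, the surviving terms from $A_0$ in \eqref{eq:A0} together with the $\widehat{g\omega}(0)$-terms from \eqref{eq:refinedaprioribog} and \eqref{eq:w>g} organize into
\begin{align*}
{\mathcal H}_B(\rmu)\geq |B|\Big( \tfrac12 t^2 \widehat{g\omega}(0) \tfrac{\int\chi_B^2}{|B|} - \rmu t\, \tfrac{I}{|B|} - \tfrac14(\rmu-t)^2\tfrac{I}{|B|} - \tfrac12 t^2\widehat{g\omega}(0)\tfrac{\int\chi_B^2}{|B|} \Big) + (\text{errors}),
\end{align*}
so that the $\widehat{g\omega}(0)$-contributions cancel and one is left with $-\rmu t I - \frac14(\rmu-t)^2 I$ plus the LHY-type remainder terms $C a^4(ds\ell)^3 t^3 + Ca(ds\ell)^{-3}$ and $Ca\big(\frac{Ra}{(d\ell)^2}+a(ds\ell)^{-1}\log(ds\ell a^{-1})\big)t^2$ from \eqref{eq:refinedaprioribog}, times $\int\chi_B^2$. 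The quadratic-in-$t$ expression $-\rmu t I-\frac14(\rmu-t)^2 I$ is $\geq -\frac12\rmu^2 I$ at its minimum $t=\rmu$; writing it as $-\frac12\rmu^2 I - \frac14(t-\rmu)^2 I + $ (a negative multiple of $(t-\rmu)^2$ cancels) one checks it equals $-\frac12\rmu^2 I$ exactly when expanded, hence ${\mathcal H}_B(\rmu)\geq -\frac12\rmu^2 I - \frac14 I (t-\rmu)^2 + \ldots$, i.e. we can drop the $t$-dependence at the cost of the stated $-\frac12\rmu^2\iint w_{1,B}$ main term.

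The remaining task is to check that the leftover error terms are bounded by $CK_B^3\rmu^2 a\sqrt{\rmu a^3}\int\chi_B^2 + C\rmu a$. Since $\int\chi_B^2\leq |B|\sim (d\ell)^3$ and $\ell=K_\ell(\rmu a)^{-1/2}$, the term $a^4(ds\ell)^3 t^3$ with $t\lesssim \rmu$ (from negativity of the energy) gives $\lesssim a^4 (ds\ell)^3\rmu^3\int\chi_B^2$; using $dsK_\ell\gg K_B^{-1}$ from \eqref{con:sdKellKB} and $d\ell=dK_\ell(\rmu a)^{-1/2}$ one converts this into $\lesssim K_B^{?}\rmu^2 a\sqrt{\rmu a^3}\int\chi_B^2$, and a careful bookkeeping of the powers of $K_B$ — which is where the exponent $3$ in $K_B^3$ comes from — shows it fits. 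The term $a(ds\ell)^{-3}\cdot\frac{n}{|B|}\int\chi_B^2\lesssim a (ds\ell)^{-3}\rmu\int\chi_B^2 \lesssim \rmu a$ (since $(ds\ell)^{-3}\int\chi_B^2\lesssim (ds\ell)^{-3}|B|\lesssim d^{-3}s^{-3}\cdot d^3 = s^{-3}$... needs $s\ll 1$ handled, so actually this contributes the $C\rmu a$ term after using $s K_\ell\gg 1$). The $\frac{Ra}{(d\ell)^2}t^2$ and logarithmic terms are controlled by \eqref{con:RNew} and \eqref{con:ER} ($\rmu a R^2\ll 1$), again giving something $\ll \rmu^2 a\sqrt{\rmu a^3}\int\chi_B^2$. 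I expect the main obstacle to be precisely this last bookkeeping: verifying that \emph{every} error term, after substituting $t\lesssim \rmu$ (which itself requires a preliminary argument that ${\mathcal H}_B<0$ forces $n\lesssim \rmu|B|$, obtained by comparing the positive kinetic/gap energy with the lower bound $-\frac12\rmu^2 I$), is dominated by the claimed right-hand side with the stated power $K_B^3$ — this is routine but delicate, and is essentially identical to the corresponding computation in \cite[Appendix B]{FS}.
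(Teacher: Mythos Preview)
Your overall plan---combine Lemmas~\ref{lm:appinteractionestimate} and \ref{lm:appbogolubov}, absorb the $n_+$-errors into the Neumann gap, then observe that the resulting expression in $t=n/|B|$ organizes as $I\big(\tfrac14(t-\rmu)^2-\tfrac12\rmu^2\big)$---is correct and matches the paper. The obstacle you yourself flag, the a priori control $n\leq C\rmu|B|$, is real, and your suggested preliminary argument does not close it.

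The argument ``${\mathcal H}_B<0$ forces $n\leq C\rmu|B|$ by comparing the gap with the lower bound $-\tfrac12\rmu^2 I$'' is circular: that lower bound is only available \emph{after} the $n_+$-errors $Ca(\rmu+n_0|B|^{-1})n_+$ have been absorbed into the Neumann gap $\sim\rmu a\, n_+$, and that absorption already requires $n_0|B|^{-1}\leq C\rmu$. Moreover the Bogoliubov error in \eqref{eq:refinedaprioribog} contains a term $a^4(ds\ell)^3((n+1)/|B|)^3\cdot n/|B|$ which scales like $n^4$ and is useless without a prior bound on $n$.

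The paper breaks the circularity by a particle-splitting trick (this is also what \cite{FS} does): since ${\mathcal H}_B$ commutes with $n$, fix $n$, choose an integer $n'\in[3\rmu|B|,4\rmu|B|)$ and write $n=mn'+n''$ with $0\leq n''<n'$; discarding the positive inter-group interaction gives $e_B(n,\rmu)\geq m\,e_B(n',\rmu)+e_B(n'',\rmu)$. Each group now has at most $\sim\rmu|B|$ particles, so the gap absorption and \eqref{eq:refinedaprioribog} apply groupwise and yield \eqref{eq:eBtilden}. For $\widetilde n=n'$ the positive term $\tfrac14(n'/|B|-\rmu)^2 I\geq\rmu^2 I$ dominates $-\tfrac12\rmu^2 I$ and the errors, so $e_B(n',\rmu)\geq0$; the single remaining group of $n''$ particles then gives the stated lower bound.

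A minor bookkeeping slip: the term $a(ds\ell)^{-3}\cdot t\int\chi_B^2$ with $t\leq C\rmu$ does not produce the $C\rmu a$; it gives $(dsK_\ell)^{-3}\rmu^2a\sqrt{\rmu a^3}\int\chi_B^2\leq K_B^3\rmu^2a\sqrt{\rmu a^3}\int\chi_B^2$ via \eqref{con:sdKellKB}. The $C\rmu a$ in the statement comes from replacing $n_0\pm1$ by $n_0$ in $A_0$.
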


\begin{proof} From
  \eqref{eq:small_and_large},\eqref{con:eTdK},\eqref{con:sdKellKB},\eqref{con:RNew} and
  \eqref{eq:Cond_epsilonN} we may for all $C'>0$ assume $\rmu a^3$ small enough such that 
  \begin{equation}\label{eq:appdellsassumption}
    (\rmu a)^{-1/2}K_B^{-1}\leq \frac{sd\ell}{\ln(ds\ell a^{-1})}\leq d\ell\leq \frac{\varepsilon_T^{1/2}}{C'} (\rmu a)^{-1/2},\quad
    \varepsilon_N\leq 1/2
  \end{equation}
 This, in particular, implies that
\begin{equation}\label{eq:appgapestimate}
  (1-\varepsilon_N)\sum_{i=1}^N\frac12\varepsilon_T(1+\pi^{-2})^{-1}(\ell d)^{-2}Q_{B,i}\geq \frac12 C'^2(1+\pi^{-2})^{-1}\rmu a n_+.
\end{equation}
We shall make a choice for $C'$ below.

We may of course assume that $n>0$ as the inequality we want to prove
is obviously satisfied if $n=0$ since the operator is $0$ whereas the
lower bound is negative in this case.
Observe that by \eqref{eq:Newlambdacondition}
we may also assume that $\rmu \lambda^3\geq1$ and hence $\rmu |B|\geq 1$. In fact,  $\rmu |B|\gg 1$.

For all 
$\Xi>2$  we can choose an integer
$n'$ in the interval $[\Xi\rmu|B|,(\Xi+1)\rmu|B|)$ and we may write
$n=mn'+n''$ with $m, n',n''$ non-negative integers and
$n''<n'<(\Xi+1)\rmu|B|$.
We will choose $\Xi=3$ below (see estimate \eqref{eq:eBtilden}).

We will get a lower bound on the energy if in
the Hamiltonian we think of dividing the particles in $m$ groups of
$n'$ particles and one group of $n''$ particles ignoring the positive
interaction between the groups. It is not important that the
Hamiltonian is no longer symmetric between the particles as we are not
considering it as an operator on the symmetric subspace, but only
calculating its expectation value in a symmetric state. We arrive at
the conclusion that if we denote by $e_B(n,\rmu)$ the ground state
energy of ${\mathcal H}_B(\rmu)$ restricted to states with $n$
particles in the box $B$, then
\begin{equation}\label{eq:ebnsplit}
  e_B(n,\rmu)\geq me_B(n',\rmu)+e_B(n'',\rmu).
\end{equation}

We will estimate the energy for different groups of particles using Lemmas
\ref{lm:appinteractionestimate} and \ref{lm:appbogolubov}. 
We have that both $n'$ and $n''$ are less than
$(\Xi+1)\rmu|B|\leq 2\Xi\rmu|B|$.  We shall therefore first give a general estimate
on $e_B(\widetilde{n},\rmu)$ assuming that $\widetilde{n}\leq 2\Xi\rmu|B|$.
Notice that in this case the last terms in
\eqref{eq:SmallsimpleQs} and \eqref{eq:refinedaprioribog}
can be absorbed in the positive term from
\eqref{eq:appgapestimate} if we make the choice $C'\geq C\Xi^{1/2}$.
Using \eqref{eq:iintw1} we see that the same is also true for the
errors we get by replacing $\widetilde{n}_0=\widetilde{n}-n_+$ by $\widetilde{n}$
everywhere in $A_0$ in \eqref{eq:A0}.
Notice that by \eqref{eq:small_and_large},
\eqref{eq:newlambdadef}, \eqref{con:RNew},
\eqref{con:eTdK}, and \eqref{con:sdKellKB}
we have
\begin{align}
  \frac{Ra}{(dl)^2}&\leq \frac{Ra}{\lambda^2}\leq  (dK_\ell)^{-1} \sqrt{\rmu a^3}\leq K_B\sqrt{\rmu a^3},
  \label{eq:Ralambda}\\
  a(dsl)^{-3}\rmu&= (dsK_\ell)^{-3}\rmu^2 a\sqrt{\rmu a^3}\leq K_B^3\rmu^2 a\sqrt{\rmu a^3},\label{eq:adsl-3}\\
  a^4(ds\ell)^3\rmu^4&=(dsK_\ell)^3\rmu^2 a\sqrt{\rmu a^3}\leq \rmu^2 a\sqrt{\rmu a^3}\label{eq:a4dsl3}.
\end{align}
Using also \eqref{eq:appdellsassumption} we arrive at
\begin{align}
  e_B(\widetilde{n}, \rmu)\geq& \frac{\widetilde{n}^2}{2|B|^2}\iint w_{2,B}(x,y)\,dx dy 
                                - \left(\rmu \frac{\widetilde{n}}{|B|}+\frac14\left(\rmu
                                -\frac{\widetilde{n}}{|B|}\right)^2\right)\iint w_{1,B}(x,y)\,dx dy\nonumber \\
                              &-\widehat{g\omega}(0)\frac{\widetilde{n}^2}{2|B|^2}\int\chi_B^2\\&
                                -C(\Xi^4 +\Xi^2K_B+\Xi K_B^3) \rmu^2 a\sqrt{\rmu a^3}\int\chi_B^2
                                -C\Xi\rmu a |B|^{-1}\int\chi_B^2.\nonumber
\end{align}
The last term is due to the error in replacing $\widetilde{n}\pm1$ by $\widetilde{n}$ in several terms.
If we now use \eqref{eq:w>g} and \eqref{eq:Ralambda} we obtain
\begin{align}
  e_B(\widetilde{n}, \rmu)\geq& \left(\frac14\left(\frac{\widetilde{n}}{|B|}-\rmu
                                \right)^2-\frac12 \rmu^2\right)
                                \iint w_{1,B}(x,y)\,dx dy\nonumber \\
                              &-C(\Xi^4 +\Xi^2K_B+\Xi K_B^3) \rmu^2 a\sqrt{\rmu a^3}\int\chi_B^2
                                -C\Xi\rmu a |B|^{-1}\int\chi_B^2.\label{eq:eBtilden}
\end{align}
In the case of the $m$ groups of $n'$ particles, i.e., for $\widetilde{n}=n'$  it follows using $n'\geq \Xi\rmu |B|$ that if we choose $\Xi=3$ then
\begin{equation}
  e_B(n', \rmu)\geq \frac12\rmu^2\iint w_{1,B}(x,y)\,dx dy\nonumber -CK_B^3\rmu^2 a\sqrt{\rmu a^3}\int\chi_B^2-C\rmu a |B|^{-1}\int\chi_B^2.
\end{equation}
Since $R/\lambda_1\leq R/\lambda\leq 10R/(d\ell)\ll 1$ by
\eqref{eq:newlambdadef} and \eqref{con:RNew} we may if $\rmu a^3$ is
small enough assume from \eqref{eq:w1>} that
$$
\iint w_{1,B}(x,y)\,dx dy\geq \frac12 a \int\chi_B^2.
$$
It then follows from \eqref{eq:Newlambdacondition}  (which implies $|B|^{-1}\ll \rmu$)
and \eqref{con:KBellKM} (which implies $K_B\ll (\rmu a^3)^{1/6}$ )
that if $\rmu a^3$ is small enough then
$$
e_B(n', \rmu)\geq0.
$$
The estimate in the theorem follows if we now
use \eqref{eq:ebnsplit} together with \eqref{eq:eBtilden} (with the choice $\Xi=3$)
for $\widetilde{n}=n''$ to estimate $e_B(n'', \rmu)$ from below.
\end{proof}

We will now apply the small box estimate from the previous theorem to
get an a priori bound on the energy and on the number of particles $n$
and excited particles $n_+$ in the large box.

\begin{theorem}[A priori bound on $n_+$] \label{thm:aprioriHLambda}
  Assume that the parameters $\varepsilon_T, \varepsilon_N, s, d$,
  $K_\ell$, and $K_B$ satisfy
  Assumption~\ref{assump:params} and that $R$ satisfies
  \eqref{con:RNew}. Then there is a constant $C>0$ such that if
  $\rmu a^3\leq C^{-1}$ then
\begin{align}\label{eq:aprioriHLambda}
{\mathcal H}_\Lambda(\rmu)\geq -4\pi \rmu^2 a\ell^3(1+CK_B^3(\rmu a^3)^{1/2})
+ \frac{b}{2\ell^2} n_{+}. 
\end{align}
Moreover, if there exists a normalized
$\Psi\in{\mathcal F}_{\rm s}(L^2(\Lambda))$ with $n$ particles in $\Lambda$, such that \eqref{eq:aprioriPsi} holds
then the a priori bound on $n_+$ in \eqref{eq:apriorinn+NEW} holds.
\end{theorem}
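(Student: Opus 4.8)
\textbf{Plan of proof for Theorem~\ref{thm:aprioriHLambda}.}

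The idea is to integrate the small-box lower bound of Theorem~\ref{thm:smallbox} over the sliding parameter $u$, using the decomposition \eqref{eq:appsliding}, and then feed the resulting energy bound into a comparison argument to deduce the a priori control on $n_+$. First I would invoke \eqref{eq:appsliding}: it already produces the positive gap term $\frac{b}{2\ell^{2}} n_{+}$ plus the $u$-integral of the small-box Hamiltonians ${\mathcal H}_{B(u)}(m(u)\rmu)$, where $m(u)\in\{1,8\}$ and $m(u)=8$ only for $u$ in the boundary layer of thickness $2\lambda/(\ell d)$. Since the parameter choices in Section~\ref{sec:params} guarantee the hypotheses of Theorem~\ref{thm:smallbox} hold for every box with shortest side length $\geq\lambda$ (and the restricted $u$-range ensures exactly this), I can apply that theorem to each ${\mathcal H}_{B(u)}(m(u)\rmu)$. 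The leading term $-\frac12 (m(u)\rmu)^2\iint w_{1,B(u)}$ integrates, via the sliding identity \eqref{eq:appw1sliding}, to $-\frac12\rmu^2\cdot 8\pi a\,\ell^3 = -4\pi\rmu^2 a\ell^3$ on the bulk, since there $m(u)=1$; the boundary layer where $m(u)=8$ has relative measure $O(\lambda/(\ell d))=O(1)$ — actually $O(1/10)$ in linear extent per side but the volume fraction of the layer is $O(\lambda/(d\ell))=O(1)$ — so one must check it contributes only a lower-order error. Here I would use that $R\ll\lambda$ and \eqref{eq:w1>} to bound $\iint w_{1,B(u)}\leq Ca\int\chi_{B(u)}^2$, together with \eqref{eq:chiBintegral}, so the boundary contribution is $O(\rmu^2 a\ell^3)$ times the volume fraction of the layer near $\partial\Lambda$, which is $O(d)$ — hence negligible. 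The error terms $-CK_B^3\rmu^2 a\sqrt{\rmu a^3}\int\chi_{B(u)}^2$ integrate, again by \eqref{eq:chiBintegral}, to $-CK_B^3\rmu^2 a\ell^3\sqrt{\rmu a^3}$, and the $-C\rmu a$ terms integrate to something of order $\rmu a\cdot(\text{number of boxes})=\rmu a\cdot O(d^{-3})$, which by \eqref{eq:Newlambdacondition} and \eqref{con:KBellKM} is $\ll\rmu^2 a\ell^3\sqrt{\rmu a^3}$. Collecting these proves \eqref{eq:aprioriHLambda}.

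For the second statement, suppose $\Psi$ is normalized with $n$ particles and satisfies \eqref{eq:aprioriPsi}, i.e. $\langle\Psi,{\mathcal H}_\Lambda(\rmu)\Psi\rangle\leq -4\pi\rmu^2 a\ell^3(1-K_B^3(\rmu a^3)^{1/2})$. Combining this with \eqref{eq:aprioriHLambda} gives
\begin{align}
\frac{b}{2\ell^2}\langle\Psi,n_+\Psi\rangle
\leq \langle\Psi,{\mathcal H}_\Lambda(\rmu)\Psi\rangle + 4\pi\rmu^2 a\ell^3(1+CK_B^3(\rmu a^3)^{1/2})
\leq C'\rmu^2 a\ell^3 K_B^3(\rmu a^3)^{1/2},\nonumber
\end{align}
where the two $-4\pi\rmu^2 a\ell^3$ terms cancel and only the $K_B^3(\rmu a^3)^{1/2}$ remainders survive. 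Multiplying through by $\frac{2\ell^2}{b}$ and using $\ell^2=K_\ell^2(\rmu a)^{-1}$ so that $\ell^2\cdot\rmu^2 a\ell^3=K_\ell^2\rmu\ell^3$, I obtain $\langle\Psi,n_+\Psi\rangle\leq C\rmu\ell^3 K_B^3 K_\ell^2(\rmu a^3)^{1/2}$, which is exactly \eqref{eq:apriorinn+NEW}.

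\textbf{Main obstacle.} The delicate point is not the algebra of collecting errors but verifying that the boundary-layer boxes (those with $m(u)=8$ and those of irregular rectangular shape with $\lambda\leq\lambda_1(u)<d\ell$) genuinely contribute only lower-order terms. One must track that Theorem~\ref{thm:smallbox} applies uniformly — its constants are universal and its hypotheses only require $\lambda_1\geq\lambda$ — and that the factor $m(u)^2\leq 64$ is harmless because it multiplies a quantity integrated over a set of small relative volume. A secondary subtlety is that the small-box bound is stated in terms of $\int\chi_{B(u)}^2$ rather than $|B(u)|$, so one must consistently use \eqref{eq:chiBintegral} and \eqref{eq:appw1sliding} (the exact sliding identities for $\chi_B^2$ and for $w_{1,B}$) rather than naive volume counting; this is precisely what makes the leading constant come out as the clean $-4\pi\rmu^2 a\ell^3$. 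Since the argument is, as the text notes, essentially identical to Appendix~B of \cite{FS}, no genuinely new difficulty arises beyond bookkeeping with the present parameter conventions.
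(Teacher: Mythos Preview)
Your overall strategy is right and matches the paper: use the sliding decomposition \eqref{eq:appsliding}, apply the small-box bound of Theorem~\ref{thm:smallbox} to each ${\mathcal H}_{B(u)}(m(u)\rmu)$, integrate, and then read off the $n_+$ bound. The bulk part and the derivation of \eqref{eq:apriorinn+NEW} from \eqref{eq:aprioriHLambda} are fine.

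The gap is in your handling of the boundary layer ${\mathcal I}$ where $m(u)=8$. You argue by volume fraction: the layer occupies an $O(d)$ fraction of the total, so its contribution is $O(\rmu^2 a\ell^3\, d)$, ``hence negligible''. But the error budget in \eqref{eq:aprioriHLambda} is $K_B^3(\rmu a^3)^{1/2}\rmu^2 a\ell^3$, so you would need $d\le C K_B^3(\rmu a^3)^{1/2}$. This is \emph{not} a consequence of Assumption~\ref{assump:params}, and in fact fails for the paper's concrete choices: with $d=X^6$, $K_B=X^{-6}$, $X=(\rmu a^3)^{1/928}$ one gets $d/(K_B^3\sqrt{\rmu a^3})=(\rmu a^3)^{-440/928}\to\infty$. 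So the volume-fraction bound is genuinely too weak.

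What the paper uses instead is a pointwise smallness of $\chi_{B(u)}$ for $u\in{\mathcal I}$: since $\chi_{B(u)}(x)=\chi(x/\ell)\chi(x/(d\ell)-u)$ and the boundary boxes sit within distance $2\lambda$ of $\partial\Lambda$ where $\chi(\cdot/\ell)$ vanishes to order $M$, one has $|\chi_{B(u)}(x)|\le C(\lambda/\ell)^M\le C d^{M}$ on ${\mathcal I}$ (this is the content of \eqref{eq:Iintegral}). Hence $\int_{\mathcal I}\int\chi_{B(u)}^2\le C d^{2M}\ell^3$, and the boundary energy is $\ge -C\rmu^2 a\, d^{2M}\ell^3 - C\rmu a d^{-3}$. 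Now \eqref{eq:dM-small} gives $d^{2M}\ll\sqrt{\rmu a^3}\le K_B^3\sqrt{\rmu a^3}$, and \eqref{con:sdKellKB} gives $(K_\ell d)^{-3}\le K_B^3$ so that $\rmu a d^{-3}=(K_\ell d)^{-3}\rmu^2 a\ell^3\sqrt{\rmu a^3}\le K_B^3\rmu^2 a\ell^3\sqrt{\rmu a^3}$. (Your citations \eqref{eq:Newlambdacondition}, \eqref{con:KBellKM} for the $\rmu a d^{-3}$ term do not give this; you need \eqref{con:sdKellKB}.) This is the missing mechanism: the high-order vanishing of $\chi$, encoded in the regularity parameter $M$, is what kills the boundary contribution, not the mere thinness of the layer.
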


\begin{proof}
We use \eqref{eq:appsliding} together with the estimate in
Theorem~\ref{thm:smallbox}.  We will denote by $n(u)$, $n_0(u)$, and
$n_+(u)$ the operators defined in \eqref{eq:smallboxnn0n+}. The
corresponding operators in the large box $\Lambda$ will be denoted $n,n_0$, and $n_+$.
On the set 
$$
{\mathcal I}=\Big\{u\in \Big[-\frac12(1+\frac1d),\frac12(1+\frac1d)\Big]^3\ \Big|\ 
\frac12 \ell(1+d)-2\lambda \leq \|\ell du\|_\infty\leq \frac12 \ell(1+d)-\lambda\Big\}
$$ 
we have by \eqref{eq:appsliding} that 
$\rmu$ is replaced by $8\rmu$. On this set we have according to (C.6) in \cite{FS} that 
$|\chi_{B(u)}(x)|\leq C(\lambda/\ell)^M$ with ($C$ depending on $M$) and therefore
\begin{equation}\label{eq:Iintegral}
  \int_{\mathcal I} 
  \int\chi_{B(u)}(x)^2dx\, du\leq C(\lambda/\ell)^{2M}(\ell d)^3d^{-3}\leq C(\lambda/\ell)^{2M}\ell^3.
\end{equation}
If we use Theorem~\ref{thm:smallbox}, \eqref{eq:w1>}, and the assumption
\eqref{con:KBellKM}
to get the
the rough estimate
$$
{\mathcal H}_B(8\rmu)\geq -C\rmu^2 a\int\chi_B^2-C\rmu a
$$
we obtain
\begin{equation}
  \int_{\mathcal I} 
  {\mathcal H}_B(8\rmu)\geq -C\rmu^2 a(\lambda/\ell)^{2M}\ell^3 -C\rmu ad^{-3}
                                                \geq -CK_B^3\rmu^2 a\ell^3\sqrt{\rmu a^3},
  \label{eq:badset}
\end{equation}
where we have used \eqref{eq:Iintegral} and 
that \eqref{eq:newlambdadef}, \eqref{eq:small_and_large} and \eqref{eq:dM-small} give
$$
(\lambda/\ell)^{2M} \leq d^{2M}\leq K_B^3\sqrt{\rmu a^3}
$$
and by \eqref{con:sdKellKB}
$$
\rmu a d^{-3}= (K_\ell d)^{-3}\rmu^2 a\ell^3 \sqrt{\rmu a^3}\leq K_B^3\rmu^2 a\ell^3 \sqrt{\rmu a^3}  
$$
Using again this last estimate together with \eqref{eq:appsliding},
\eqref{eq:appw1sliding}, \eqref{eq:chiBintegral}, \eqref{eq:badset},
and the estimate in Theorem~\ref{thm:smallbox} we arrive at
\eqref{eq:aprioriHLambda}. The a priori bound on $n_+$ follows immediately from this since
$$
\ell^2 K_B^3\rmu^2 a\ell^3 \sqrt{\rmu a^3}  =K_\ell^2K_B^3\rmu\ell^3 \sqrt{\rmu a^3}.
$$
\end{proof}

\subsection{Proof of the a priori bounds
on $n$  and ${\mathcal{Q}}_4^{\rm ren}$}
\label{sec:NewAppB}
The purpose of Theorem~\ref{thm:aprioriHLambda} was to prove an
a priori bound on the number of excited particles $n_+$ in the large
box $\Lambda$. We did this by localizing into smaller boxes $B$ where
errors involving the number of excited particles in the small boxes
could be absorbed into the Neumann gap. Our goal is now to get the
a priori bounds on the total number of particles $n$ and the positive
term ${\mathcal Q}_4^{\rm ren}$ in the large box. We no longer have a
sufficiently large gap at our disposal, but we now have an appropriate
a priori bound on $n_+$

We prove the a priori bounds by giving a lower bound on the Hamiltonian
${\mathcal H}_\Lambda(\rmu)$ defined in \eqref{eq:Def_HB_new}. The approach is
essentially identical to what we did in the small boxes adjusted to
the easier case of the large box.

Applying the decomposition of the potential energy in
Lemma~\ref{lm:potsplit} we arrive at the following lemma by, in
particular, applying a Cauchy-Schwarz inequality to absorb
${\mathcal Q}_3^{\rm ren}$ in the positive
${\mathcal Q}_4^{\rm ren}$-term.  The following lemma is essentially
\cite[Lemma 6.3]{BFS} or \cite[Lemma B.2]{FS}. The only difference is
that we keep a part of $ {\mathcal Q}_4^{\rm ren}$.

\begin{lemma}\label{lm:appinteractionestimate_a} 
There is a constant $C>0$, depending only on the localization function
$\chi$, such that if $\rmu$ satisfies  \eqref{con:RNew},
then
	\begin{align}\label{eq:SmallsimpleQs_a}
	-\rmu \sum_{i=1}^N \int w_{1}(x,y)\,dy+
	\frac{1}{2} \sum_{i\neq j}  w(x_i, x_j)
	\geq A_0+A_2 + \frac{1}{2} {\mathcal Q}_4^{\rm ren} -Ca (\rmu +n_0 \ell^{-3})n_+  
	\end{align}
	where
	\begin{align}
	A_0={}&\frac{n_0(n_0-1)}{2\ell^3}\big(\widehat{g}(0) + \widehat{g\omega}(0)\big)
	- \left(\rmu \frac{n_0}{\ell^3}+\frac14\left(\rmu
	-\frac{n_0-1}{\ell^3}\right)^2\right)\ell^3\widehat{g}(0)
	\label{eq:A0_a}
	\end{align}
	and 
	\begin{equation}\label{eq:defA_2_a}
		A_2= \frac{1}{2}\sum_{i\neq j} P_{i} P_{j} w_{1}(x_i,x_j) Q_{j} Q_{i} + h.c.
	\end{equation} 
\end{lemma}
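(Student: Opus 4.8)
The statement to prove is Lemma~\ref{lm:appinteractionestimate_a}, which is the large-box analogue of the small-box Lemma~\ref{lm:appinteractionestimate}, with the single extra feature that a full half of the positive term ${\mathcal Q}_4^{\rm ren}$ is retained on the right-hand side. The plan is to start from the exact potential-energy decomposition of Lemma~\ref{lm:potsplit}, namely
\begin{equation*}
-\rmu \sum_{i=1}^N \int w_1(x_i,y)\,dy+
\tfrac{1}{2} \sum_{i\neq j}  w(x_i, x_j)
= {\mathcal Q}_0^{\rm ren}+{\mathcal Q}_1^{\rm ren}
+{\mathcal Q}_2^{\rm ren}+{\mathcal Q}_3^{\rm ren} + {\mathcal Q}_4^{\rm ren},
\end{equation*}
and to keep ${\mathcal Q}_4^{\rm ren}$ fully positive, then show that each of ${\mathcal Q}_0^{\rm ren},\ldots,{\mathcal Q}_3^{\rm ren}$ either equals one of the terms $A_0,A_2$ up to controllable errors, or can be absorbed into $\tfrac12 {\mathcal Q}_4^{\rm ren}$ plus an error of the stated form $Ca(\rmu+n_0\ell^{-3})n_+$.

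First I would handle the ``diagonal'' pieces. The operator ${\mathcal Q}_0^{\rm ren}$ is a $c$-number in $n_0$: by \eqref{eq:Q0n0} of Lemma~\ref{lem:QDecompSimpler} it equals $\frac{n_0(n_0-1)}{2\ell^3}(\widehat g(0)+\widehat{g\omega}(0)) - \rmu n_0 \widehat g(0)$. Completing the square in $n_0$ produces exactly $A_0$ as defined in \eqref{eq:A0_a} together with the extra term $-\tfrac14(\rmu-\tfrac{n_0-1}{\ell^3})^2 \ell^3 \widehat g(0)$ that has been built into $A_0$, plus harmless $O(\rmu a)$ constants which can be absorbed (note $\widehat g(0)=8\pi a$ and $n_0\le n$). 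For ${\mathcal Q}_1^{\rm ren}$, the representation \eqref{eq:Q1n0} writes it as $(\rho_0-\rmu)\sum_i Q_i\chi_\Lambda W_1*\chi_\Lambda P_i + \rho_0\sum_i Q_i\chi_\Lambda (W_1\omega)*\chi_\Lambda P_i + h.c.$; a Cauchy--Schwarz bound against $PP$ and $QQ$, using $\int W_1\le Ca$, $\int W_1\omega \le Ca$, gives $|{\mathcal Q}_1^{\rm ren}| \le Ca(\rmu+\rho_0)n_+ + (\text{something absorbed in }{\mathcal Q}_4^{\rm ren})$ — this is exactly how the $1Q$-term is treated in \cite{BFS}, Lemma 6.3, and the first inequality in the cited \cite[Lemma B.2]{FS}. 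For ${\mathcal Q}_2^{\rm ren}$, I would use \eqref{eq:Q2n0}: the $P_iQ_j w_2 P_j Q_i$ piece is positive and simply dropped, the diagonal $Q_i\chi_\Lambda^2 Q_i$ piece is bounded by $Ca(\rmu+\rho_0)n_+$ after inserting \eqref{eq:W1-g-new} (this also needs $R\ll\ell$, guaranteed by \eqref{con:RNew}), the convolution correction contributes $C(\rmu+\rho_0)a(R/\ell)^2 n_+ \le Ca(\rmu+\rho_0)n_+$, and what survives is precisely $A_2 = \tfrac12\sum_{i\ne j}(P_iP_j w_1 Q_jQ_i + h.c.)$ as in \eqref{eq:defA_2_a}.

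The one genuinely nontrivial step is the treatment of ${\mathcal Q}_3^{\rm ren} = \sum_{i\ne j} P_i Q_j w_1 Q_j Q_i + h.c.$ This has to be absorbed into $\tfrac12{\mathcal Q}_4^{\rm ren}$ at the cost of $Ca(\rmu+n_0\ell^{-3})n_+$. The mechanism is the standard one from \cite[Corollary 6.12]{FS} (used also in Lemma~\ref{lem:Q3-splitting1}): one writes $Q_j = Q_j$ acting on the $Q$-slot, and applies a Cauchy--Schwarz inequality of the form $PQ\,w\,QQ + h.c. \ge -\varepsilon\, QQ\,w\,QQ - \varepsilon^{-1} PQ\,w\,QP$, i.e.\ one half of ${\mathcal Q}_4^{\rm ren}$ dominates the $QQ$-factor (note $w=\chi_\Lambda W\chi_\Lambda$ with $W\ge 0$ since $v\ge 0$, so ${\mathcal Q}_4^{\rm ren}$ is a genuine positive form), while the $PQ\,w\,QP$ term is a one-body operator bounded by $Ca(\rmu+n_0\ell^{-3})n_+$ after using $\int W_1 \le Ca$ (this is where $\rho_0 = n_0\ell^{-3}$ enters, from contracting a $P$ against the condensate, exactly as in the $2Q$ diagonal estimate). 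Here $\varepsilon$ is a fixed small universal constant — one does not need the refined momentum splitting of Lemma~\ref{lem:Q3-splitting1} because in Appendix~\ref{SmallBoxes} we only want an a priori bound, not an LHY-precise statement, so a crude absorption suffices. The error from the convolution discrepancy $W_1*\chi_\Lambda$ versus $\chi_\Lambda \int W_1$ is again $O(a(R/\ell)^2 n_+)$ by \eqref{eq:I2-integral-2}, negligible.

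\textbf{Expected main obstacle.} The only place requiring care is bookkeeping: making sure that after the Cauchy--Schwarz absorption of ${\mathcal Q}_3^{\rm ren}$ and the off-diagonal part of ${\mathcal Q}_2^{\rm ren}$, exactly one half (not more) of ${\mathcal Q}_4^{\rm ren}$ has been used up, so that $\tfrac12{\mathcal Q}_4^{\rm ren}$ indeed survives on the right. Since ${\mathcal Q}_4^{\rm ren}$ in \eqref{eq:DefQ4} is a sum of squares of the schematic form $\sum_{i\ne j} M_{ij}^* M_{ij}$ with $M_{ij} = (Q_jQ_i + \omega(P_jP_i+P_jQ_i+Q_jP_i))w^{1/2}$, the absorption is clean provided the absorbed terms are themselves of the form $\pm(\text{piece of }M_{ij}^* M_{ij})$ up to the one-body remainders; this is exactly the computation in \cite[Lemma B.2]{FS} and \cite[Lemma 6.3]{BFS}, and since the lemma statement explicitly says it is ``essentially'' those results with a retained $\tfrac12{\mathcal Q}_4^{\rm ren}$, I would simply follow that computation verbatim, splitting off $\tfrac12$ of the square at the Cauchy--Schwarz step rather than all of it. No new analytic input is needed beyond Lemma~\ref{lem:Simple} and the positivity of $W$.
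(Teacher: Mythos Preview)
Your overall architecture (start from the decomposition of Lemma~\ref{lm:potsplit} and estimate each ${\mathcal Q}_j^{\rm ren}$) matches the paper, and your treatment of ${\mathcal Q}_0^{\rm ren}$ and ${\mathcal Q}_2^{\rm ren}$ is essentially right. But your handling of ${\mathcal Q}_1^{\rm ren}$ contains a genuine gap. The claim that a plain Cauchy--Schwarz yields $|{\mathcal Q}_1^{\rm ren}|\le Ca(\rmu+\rho_0)n_+$ is false for both pieces of \eqref{eq:Q1n0}. For the second line, $\rho_0\sum_i Q_i\chi_\Lambda(W_1\omega)*\chi_\Lambda P_i+h.c.$, any splitting of this $Q\!\leftrightarrow\!P$ cross term produces a $PP$-contribution of order $a\rho_0^2\ell^3$ (see \eqref{eq:Q'app}), which is far too large. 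For the first line, when $|\rho_0-\rmu|\sim\rmu$ the same obstruction appears; the Cauchy--Schwarz there necessarily generates a term of size $-\tfrac14(\rmu-\tfrac{n_0-1}{\ell^3})^2\ell^3\widehat g(0)$, and this is in fact the \emph{origin} of the square built into $A_0$ --- it is not a byproduct of ``completing the square'' in ${\mathcal Q}_0^{\rm ren}$, which already equals the non-square part of $A_0$ exactly.

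The mechanism you are missing is a cancellation between ${\mathcal Q}_1^{\rm ren}$ and ${\mathcal Q}_3^{\rm ren}$. To absorb ${\mathcal Q}_3^{\rm ren}$ into $\tfrac12{\mathcal Q}_4^{\rm ren}$ one first completes the $Q_jQ_i$ factor in $P_iQ_jw_1Q_jQ_i$ to $Q_jQ_i+\omega(P_jP_i+P_jQ_i+Q_jP_i)$ so as to match the structure of ${\mathcal Q}_4^{\rm ren}$ (this is \eqref{eq:3Qto4Q} in the paper). The Cauchy--Schwarz then leaves a residual $-\sum_{i\ne j}P_iQ_j(w_1\omega)P_jP_i+h.c.$ (the accompanying $PQ\!\leftrightarrow\!PQ$ pieces are bounded by $Ca\rho_0 n_+$ via \eqref{eq:Q'app1}--\eqref{eq:Q'app2}). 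This residual \emph{exactly cancels} the second line of \eqref{eq:Q1n0}. Only after this cancellation does the surviving $(\rho_0-\rmu)$-part of ${\mathcal Q}_1^{\rm ren}$ admit the factored Cauchy--Schwarz (using the commutation $n_0^{1/2}Q_i\cdots P_i=Q_i\cdots P_i[n_0-1]_+^{1/2}$) that produces the quarter-square term together with an acceptable $Ca(\rmu+\rho_0)n_+$ error. Without tracking this cancellation, the bookkeeping cannot close.
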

\begin{proof}
We use the identity \eqref{eq:potsplit} and note that, since $P$ is the projection onto constant functions in the box,
\begin{align}\label{eq:Q_0^ren equation}
	{\mathcal Q}_{0}^{\rm ren}&=\frac{n_0(n_0-1)}{2\ell^{6}}\iint w_{2}(x,y)\,dx dy - \rmu \frac{n_0}{\ell^{3}}\iint w_{1}(x,y)\,dx dy \nonumber \\
	&=\frac{n_0(n_0-1)}{2\ell^{3}} \big(\widehat{g}(0) + \widehat{g\omega}(0)\big)- 
	\rmu n_0 \widehat{g}(0),
\end{align}
where we used \eqref{eq:DefU} and \eqref{eq:w2int} to get the last identity.

We will show below that
\begin{align}\label{eq:Absorb}
{\mathcal Q}_{1}^{\rm ren}+{\mathcal Q}_{3}^{\rm ren}+
\frac{1}{2}{\mathcal Q}_{4}^{\rm ren} \geq
&\,-\frac14\left(\rmu-\frac{n_0-1}{\ell^{3}}\right)^2\iint
w_{1}(x,y)\,dx dy
-Ca(\rmu+ n_0\ell^{-3})n_+,
\end{align}
and that 
\begin{equation}\label{eq:Q2>A2+}
{\mathcal Q}_{2}^{\rm ren}\geq A_2-Ca(\rmu+ n_0\ell^{-3})n_+.
\end{equation}
Combining \eqref{eq:Q_0^ren equation}, \eqref{eq:Absorb}, and \eqref{eq:Q2>A2+} and again using \eqref{eq:DefU} we easily get
\begin{align}
{\mathcal Q}_{0}^{\rm ren}+{\mathcal Q}_{2}^{\rm ren}+{\mathcal Q}_{1}^{\rm ren}+{\mathcal Q}_{3}^{\rm ren}
+ \frac{1}{2} {\mathcal Q}_{4}^{\rm ren}\geq A_0+A_2-Ca (\rmu +n_0 \ell^{-3})n_+,
\end{align}
which inserted into the identity \eqref{eq:potsplit} implies the result in the lemma.

To prove \eqref{eq:Absorb} and \eqref{eq:Q2>A2+} we observe using Lemma~\ref{lem:Simple} that,
$$
0\leq \sum_{i,j}P_iQ_jw_{1}(x_i,x_j)Q_jP_i=n_0\ell^{-3}\sum_j Q_j\chi_{\Lambda}(x_j)W_1*\chi_{\Lambda}(x_j)Q_j
\leq Cn_0 n_+\ell^{-3}a\|\chi_{\Lambda}\|_\infty^2
$$ or more generally using again Cauchy-Schwarz inequalities,
Lemma~\ref{lem:Simple}, $0\leq \omega\leq 1$ (see
\eqref{omegabounds}), and $\|\chi_\Lambda\|_\infty\leq C$, we have for
all $k\in {\mathbb N} \cup \{ 0 \}$
\begin{align}
0\leq \sum_{i,j}P_iQ_j(w_{1}\omega^k)(x_i,x_j)Q_jP_i\leq \,& Cn_0\ell^{-3}an_+,\label{eq:Q'app1}
\end{align}
\begin{align}
 \pm\Bigl(\sum_{i,j}P_iQ_j(w_{1}\omega^k)(x_i,x_j)P_jQ_i+h.c.\Bigr)\leq&\,
 2\sum_{i,j}P_iQ_j(w_{1}\omega^k)(x_i,x_j)Q_jP_i\nonumber\\
\leq &\, Cn_0\ell^{-3}an_+,\label{eq:Q'app2}\\
\pm\Bigl(\sum_{i,j}Q_iP_j(w_1\omega^k)(x_i,x_j)P_jP_i+h.c.\Bigr)\leq &\,
\sum_{i,j}Q_iP_j(w_1\omega^k)(x_i,x_j)P_jQ_i\nonumber\\
&\,+\sum_{i,j}P_iP_j(w_1\omega^k)(x_i,x_j)P_jP_i
\nonumber\\ \leq &\,C
n_0a\ell^{-3}\Bigl(n_+ +n_0\Bigr),
\label{eq:Q'app}
\end{align}
where we have abbreviated $(w_1\omega^k)(x_1,x_2)=w_1(x_1,x_2)\omega(x_1-x_2)^k$. We have
	\begin{align}
	\sum_{i,j}P_iQ_jw_1(x_i,x_j)Q_jQ_i=&\, \sum_{i,j} \Big(P_i
	Q_j w_1(x_i,x_j) \Big[ Q_j Q_i + \omega(x_i-x_j) (P_j P_i + P_j Q_i
	+ Q_j P_i)\Big]  \Big)\nonumber \\ &\,- \sum_{i,j}
	\Big(P_i Q_j w_1(x_i,x_j) \omega(x_i-x_j) (P_j P_i +
	P_j Q_i + Q_j P_i)\Big) \label{eq:3Qto4Q}
	\end{align}
	and the same identity for the Hermitian conjugates.	
        We estimate the first term in \eqref{eq:3Qto4Q} (and its Hermitian conjugate) using a Cauchy-Schwarz inequality
	\begin{align}
	\pm \sum_{i,j} \Big(P_i
	&Q_j w_1(x_i,x_j) \Big[ Q_j Q_i + \omega(x_i-x_j) (P_j P_i + P_j Q_i
	+ Q_j P_i)\Big]+h.c.  \Big) \nonumber \\
	&\,\leq \frac{1}{2}
	{\mathcal Q}_4^{\rm ren}+C\sum_{i\neq j} P_i Q_j
	w(x_i,x_j)(1-\omega(x_i-x_j))^2Q_j P_i \nonumber \\
	&\,\leq \frac{1}{2}
	{\mathcal Q}_4^{\rm ren}+ Cn_0\ell^{-3}an_+,
	\end{align}
	where we have used that $w(x_i,x_j)(1-\omega(x_i-x_j))^2=w_1(x_i,x_j)(1-\omega(x_i-x_j))$ and 
        then \eqref{eq:Q'app1} in the last inequality.
	We estimate the second term in \eqref{eq:3Qto4Q} (and its Hermitian conjugate)
        also using \eqref{eq:Q'app1} and \eqref{eq:Q'app2}  arriving at
        \begin{align}\label{eq:Q3+Q4CSa} 
	{\mathcal Q}_3^{\rm ren} +\frac{1}{2}
	{\mathcal Q}_4^{\rm ren}\geq - \sum_{i\neq j} \Big(
	P_{i} Q_{j} w_{1} \omega(x_i,x_j) P_{j} P_{i} + h.c.\Big)-Can_0 \ell^{-3}n_+.
	\end{align}
	Notice that if we rewrite ${\mathcal Q}_{1}^{\rm ren}$ as in \eqref{eq:Q1n0}
	then the first term on the right side of \eqref{eq:Q3+Q4CSa} cancels the second line of \eqref{eq:Q1n0}.	
Using the exact commutation relation between $n_{+}$ and the operator in question, we estimate the 
remaining part of ${\mathcal Q}_{1}^{\rm ren}$
	\begin{align}
	&\ell^{-3}(n_0-\rmu \ell^{3})\sum_{i} Q_{i} \chi_{\Lambda}(x_i) W_1*\chi_{\Lambda}(x_i)
	P_{i} +
	h.c.\nonumber\\ 
	&=\ell^{-3}({n_0}^{1/2}-(\rmu \ell^{3})^{1/2})\sum_{i}
	Q_{i} \chi_{\Lambda}(x_i) W_1*\chi_{\Lambda}(x_i)
	P_{i}([n_0-1]_+^{1/2}-(\rmu \ell^{3})^{1/2}) +
	h.c.\nonumber\\ 
	&\geq -4 \ell^{-3}\left(n_0^{1/2}+(\rmu \ell^{3})^{1/2}\right)^2\sum_{i}
	Q_{i} \chi_{\Lambda}(x_i) W_1*\chi_{\Lambda}(x_i) Q_{i}\nonumber \\
	&\quad -\frac{1}{4}\ell^{-3}\left([n_0-1]_+^{1/2}-(\rmu \ell^{3})^{1/2}\right)^2\sum_{i}
	P_{i} \chi_{\Lambda}(x_i) W_1*\chi_{\Lambda}(x_i) P_{i}.\label{Q_1 remaining part}
	\end{align}
	The first term above we estimate as
	\begin{align}
	&-4 \ell^{-3} \left(n_0^{1/2}+(\rmu \ell^{3})^{1/2}\right)^2\sum_{i}
	Q_{i} \chi_{\Lambda}(x_i) W_1*\chi_{\Lambda}(x_i) Q_{i}\\
	&\quad \geq -C \ell^{-3}(n_0+\rmu \ell^{3})n_+ a\|\chi_{\Lambda}\|_\infty^2.
	\end{align}
We complete the proof of \eqref{eq:Absorb} by estimating the last term in \eqref{Q_1 remaining part}
	\begin{align}
	&-\frac14\frac{n_0}{\ell^{6}}\left((n_0-1)^{1/2}-(\rmu
          \ell^{3})^{1/2}\right)^2\iint w_{1}(x,y)\,dx
          dy\nonumber\\ &=\,-\frac{1}{4}\frac{n_0}{\ell^{6}}\left((n_0-1)-\rmu
          \ell^{3}\right)^2[(n_0-1)^{1/2}+(\rmu
            \ell^{3})^{1/2}]^{-2}\iint w_{1}(x,y)\,dx
          dy\nonumber\\ &\geq\,-\frac{1}{4}\left(\frac{n_0-1}{\ell^{3}}-\rmu\right)^2\iint
          w_{1}(x,y)\,dx dy,
	\end{align}
	using that $\rmu\ell^3\geq 1$. 
	
	To prove \eqref{eq:Q2>A2+} we recall that $w_2=w_1(1+\omega)\leq 2w_1$ and estimate the
        first two terms in ${\mathcal Q}_{2}^{\rm ren}$ by \eqref{eq:Q'app1} and \eqref{eq:Q'app1}. 
        Finally, the one-body term in ${\mathcal
          Q}_{2}^{\rm ren}$ is estimated as
        \begin{equation}
          \rmu \sum_{i} Q_i \int w_1(x_i,y)\,dy\, Q_i +h.c.\leq Ca\rmu n_+\|\chi_{\Lambda}\|_\infty^2.
        \end{equation}
\end{proof}

We now repeat the simple Bogoliubov argument in
Lemma~\ref{lm:appbogolubov} in our present situation to arrive at the
following lower bound on the kinetic energy $\mathcal{T}$ given in
\eqref{eq:DefT_new}, and appearing in the Hamiltonian
${\mathcal H}_\Lambda(\rmu)$ from \eqref{eq:Def_HB_new}.
\begin{lemma} [Bogoliubov's method in the large box]\label{lm:appbogolubovF} 
  If the parameters $\varepsilon_N,s,d$, and $K_\ell$ satisfy Assumption~\ref{assump:params} then there exists a constant $C>0$ such that
  if $\rmu a^3\leq C^{-1}$ then 
  \begin{align}\label{eq:largeaprioribog}
    \sum_{i=1}^N\mathcal{T}^{(i)} +A_2
  \geq&-\frac{1}{2} \widehat{g\omega}(0) {(n+1)n}\ell^{-3}
  - C a \left(\frac{Ra}{\ell^2}+a(ds\ell)^{-1}\log(ds\ell a^{-1})\right){(n+1)n}\ell^{-3}\nonumber\\
  &-C\left(a^4(ds\ell)^{3}\left((n+1) \ell^{-3}\right)^3 +a(ds\ell)^{-3}\right){n}
  -Ca\ell^{-3}(n+1)n_+.
  \end{align}
\end{lemma}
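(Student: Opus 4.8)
The plan is to mirror the proof of Lemma~\ref{lm:appbogolubov} (Bogoliubov's method in small boxes) almost verbatim, adapting it to the simpler geometry of the large box $\Lambda$ where the box is a fixed cube of side $\ell$ rather than a variable rectangular box. The key structural point is that $A_2$ from \eqref{eq:defA_2_a} has exactly the same form in the large box as in the small box, namely it is the off-diagonal pairing term $\frac12\sum_{i\neq j}P_iP_j w_1(x_i,x_j)Q_jQ_i + h.c.$, and the relevant part of the kinetic energy $\mathcal{T}$ contains a term of the shape $Q\chi_\Lambda[\sqrt{-\Delta}-(s\ell)^{-1}]_+^2\chi_\Lambda Q$ (coming from $\tau(p)$ in \eqref{eq:Def_tau}, dropping $\varepsilon_T$ and the other positive terms in \eqref{eq:DefT_tilde}). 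So I would first isolate from $\sum_i\mathcal{T}^{(i)}$ the piece $(1-\varepsilon_N)(1-\varepsilon_T)\sum_i Q_i\chi_\Lambda(x_i)[\sqrt{-\Delta_i}-\tfrac12(s\ell)^{-1}]_+^2\chi_\Lambda(x_i)Q_i$ for the Bogoliubov bound, discarding the remaining positive terms of $\mathcal{T}$ (the Neumann Laplacians, the third and fourth terms of \eqref{eq:DefT_tilde}) for this lower bound.

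Next I would introduce the operators $\ta_p^\dagger = \ell^{-3/2}a^\dagger(Q_\Lambda\chi_\Lambda e^{-ipx})a_0$, exactly as in the small-box case but with $|B|$ replaced by $\ell^3$ and $\chi_B$ by $\chi_\Lambda$, so that $A_2 = \frac12(2\pi)^{-3}\int\widehat{W}_1(p)(\ta_p^\dagger\ta_{-p}^\dagger + \ta_{-p}\ta_p)\,dp$. Then I would add and subtract a term $A_1 = (2\pi)^{-3}K_{\rm s}a\int(\ta_p^\dagger\ta_p + \ta_{-p}^\dagger\ta_{-p})\,dp$ with $K_{\rm s}=C_0$ chosen so that $|{\mathcal B}|/{\mathcal A}\leq 1/2$ with ${\mathcal A}(p) = (1-\varepsilon_N)\ell^3(n+1)^{-1}[|p|-\tfrac12(s\ell)^{-1}]_+^2 + 2K_{\rm s}a$ and ${\mathcal B}(p)=\widehat{W}_1(p)$; here the bound $|\widehat{W}_1(p)|\leq\int W_1\leq Ca$ follows from \eqref{eq:W1-g-new} and \eqref{eq:ScatLengthBasic}. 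The term $A_1$ is controlled by $CK_{\rm s}a\ell^{-3}(n_0+1)n_+$ which is absorbable into the last error term of \eqref{eq:largeaprioribog} (using $\|\chi_\Lambda\|_\infty\leq C$ and $n_0\leq n$). Applying Theorem~\ref{thm:bogolubov-complete} pointwise in $p$, using $[\ta_p,\ta_p^\dagger]\leq n\ell^{-3}\int\chi_\Lambda^2\leq Cn\ell^{-3}$, gives the lower bound $-\big({\mathcal A}(p)-\sqrt{{\mathcal A}(p)^2-{\mathcal B}(p)^2}\big)\,n\ell^{-3}\int\chi_\Lambda^2$ for the integrand.

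Finally I would invoke Lemma~\ref{lm:aprioriintegral_new} to estimate $\int({\mathcal A}(p)-\sqrt{{\mathcal A}(p)^2-{\mathcal B}(p)^2})\,dp$, with $\kappa_A = (1-\varepsilon_N)\ell^3(n+1)^{-1}$, $P_1 = \tfrac12(s\ell)^{-1}$, $K_1 = K_{\rm s}$, $K_2$ of order $1$; one checks the hypotheses ($P_1<a^{-1}$ since $\ell\gg a$, $K_2\leq K_1$, $\varepsilon_N\leq a(s\ell)^{-1}\leq1/2$ by \eqref{con:eTdK}, \eqref{con:sdKellKB}, \eqref{eq:Cond_epsilonN}) exactly as after \eqref{eq:smallbox-h-op-int}. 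The leading term $\frac12\kappa_A^{-1}\int\widehat{W}_1(p)^2/(2p^2)\,dp = \frac{n+1}{2(1-\varepsilon_N)\ell^3}(2\pi)^3\int\widehat{W}_1(p)^2/(2p^2)\,dp$ combined with \eqref{eq:I2-integral-new} produces $-\frac12\widehat{g\omega}(0)(n+1)n\ell^{-3}$ up to the $\varepsilon_N$ error and the $Ra^2/\ell^2$ error from \eqref{eq:I2-integral-new}; the remaining error terms of Lemma~\ref{lm:aprioriintegral_new} (with $P_1\sim(s\ell)^{-1}$ here rather than $(ds\ell)^{-1}$, so the estimates are in fact better) give the $a^4(s\ell)^3$, $a(s\ell)^{-3}$ and $a^2(s\ell)^{-1}\log$ contributions, all of which are dominated by the stated $a^4(ds\ell)^3$, $a(ds\ell)^{-3}$, $a^2(ds\ell)^{-1}\log$ errors in \eqref{eq:largeaprioribog} since $d<1$. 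I do not expect a genuine obstacle: this is essentially a simplified re-run of Lemma~\ref{lm:appbogolubov}, and if anything the only point demanding a little care is bookkeeping that replacing the small-box length $ds\ell$ and localization $\chi_B$ by the large-box quantities $s\ell$ and $\chi_\Lambda$ only improves the error terms, so the cruder bound \eqref{eq:largeaprioribog} (stated with $ds\ell$) holds a fortiori.
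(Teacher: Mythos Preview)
Your approach is essentially the same as the paper's, and almost every step is correct. There is, however, one genuine slip in the final paragraph: you claim that the error term $a^4(s\ell)^3\big((n+1)\ell^{-3}\big)^3$ arising from the $(K_2a)^4\kappa_A^{-3}P_1^{-3}$ term of Lemma~\ref{lm:aprioriintegral_new} with your choice $P_1\sim(s\ell)^{-1}$ is dominated by the stated error $a^4(ds\ell)^3\big((n+1)\ell^{-3}\big)^3$. That is false: since $d<1$ one has $(s\ell)^3>(ds\ell)^3$. The monotonicity in $P_1$ goes the right way for the $P_1$ and $P_1^3$ terms of Lemma~\ref{lm:aprioriintegral_new}, but the \emph{wrong} way for the $P_1^{-3}$ term. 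This matters downstream: the application in Theorem~\ref{thm:aprioriHLambdanQ} uses \eqref{eq:a4dsl3}, which relies on $(dsK_\ell)^3\ll1$, whereas $sK_\ell\gg1$ by \eqref{con:sKell}, so $(sK_\ell)^3$ is large.

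The fix is immediate and is exactly what the paper does: instead of keeping only the $(1-\varepsilon_T)[|p|-\tfrac12(s\ell)^{-1}]_+^2$ part of $\tau(p)$, use the cruder lower bound $\tau(p)\geq[|p|-(ds\ell)^{-1}]_+^2$ (valid since each summand in \eqref{eq:Def_tau} is bounded below by $[|p|-\tfrac12(ds\ell)^{-1}]_+^2$) and run your argument with $P_1=(ds\ell)^{-1}$. (A minor additional slip: $\int\chi_\Lambda^2=\ell^3$, so your commutator bound $[\ta_p,\ta_p^\dagger]\leq n\ell^{-3}\int\chi_\Lambda^2$ equals $n$, not $Cn\ell^{-3}$.)
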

\begin{proof}
   We have the lower bound 
  $$
  \mathcal{T}\geq (1-\varepsilon_N)\mathcal{T}'\geq
  (1-\varepsilon_N)Q_\Lambda\chi_\Lambda[\sqrt{-\Delta}-(ds\ell)^{-1}]_+^2\chi_\Lambda
  Q_\Lambda .$$
  At this point we again shift to a second quantized formalism to use a simple
  version of Bogoliubov's method.  Introducing the
  operators
  $$
  \ta_p^\dagger=\ell^{-3/2}a^\dagger(Q \chi_{\Lambda}e^{-ipx})a_0
  $$
  we can write 
  $$
  A_2=\frac12 (2\pi)^{-3}\int \widehat{W_1}(p)(\ta_p^\dagger \ta_{-p}^\dagger+\ta_{-p}\ta_p)dp.
  $$
  We shall control $A_2$ using Bogoliubov's method. In order to do this we will add and subtract a term 
  \begin{equation}\label{eq:A1}
    A_1= (2\pi)^{-3}K_\Lambda a \int (\ta_p^\dagger \ta_{p}+\ta^\dagger_{-p}\ta_{-p})dp,
  \end{equation}
  with the constant $K_\Lambda>0$ chosen appropriately. 
  Note that we have 
  \begin{equation}\label{eq:A1estimate}
    A_1\leq K_\Lambda a \frac{n_0+1}{\ell^3} n_+\|\chi_{\Lambda}\|_\infty^2\leq
    CK_\Lambda a \frac{n_0+1}{\ell^3} n_+.
  \end{equation}
  Using $n_0\leq n$ we may write
  $$
  \sum_{i=1}^N\mathcal{T}^{(i)}+A_1+A_2\geq
  (2\pi)^{-3}\frac12\int h(p) dp,
  $$
  where $h$ is the operator
  $$
  h(p)=\left((1-\varepsilon_N)\frac{\ell^3}{n+1}[|p|-(ds\ell)^{-1}]_+^2
    +2K_\Lambda a \right)(\ta_p^\dagger \ta_{p}+\ta^\dagger_{-p}\ta_{-p})+
  \widehat{W_1}(p)(\ta_p^\dagger \ta_{-p}^\dagger+\ta_{-p}\ta_p).
  $$
  We observe that 
  $$
  [\ta_p,\ta_p^\dagger]\leq n_0 \ell^{-3}\int \chi_{\Lambda} ^2\leq n.
  $$
  We will now apply the simple case of Bogoliubov's method in Theorem~\ref{thm:bogolubov-complete} with 
  $$
  {\mathcal A}(p)= (1-\varepsilon_N)\frac{\ell^3}{n+1}[|p|-(ds\ell)^{-1}]_+^2+2K_\Lambda a,\quad
  {\mathcal B}(p)= \widehat{W_1}(p).
  $$
  We have by \eqref{eq:W1-g-new} that 
  $$
  |{\mathcal B}(p)|= |\widehat{W_1}(p)|\leq \int W_1\leq C_0 a .
  $$ 
  If we therefore choose $K_\Lambda\geq C_0$ we see that $|{\mathcal B}|/{\mathcal A}\leq 1/2$, and 
  we get the following lower bound from Theorem~\ref{thm:bogolubov-complete}. 
  $$
  h(p)\geq -\left({\mathcal A}(p)-\sqrt{{\mathcal A}(p)^2-{\mathcal B}(p)^2}\right)n.
  $$
  The final estimate in \eqref{eq:largeaprioribog} now follows from
  Lemma~\ref{lm:aprioriintegral_new} if we use
  \eqref{eq:largeaprioribog} and \eqref{eq:A1estimate}. We note as in the proof of
  Lemma~\ref{lm:appbogolubov} that
  $\varepsilon_N\leq a(d s \ell)^{-1}\leq 1/2$ by
  \eqref{eq:small_and_large}, \eqref{con:eTdK}, \eqref{con:sdKellKB},
  \eqref{con:KBKell}, and \eqref{eq:Cond_epsilonN} with $K_N\gg 1$.
\end{proof}

We are now ready to prove the a priori bounds on $n$ and ${\mathcal Q}^{\rm ren}_4$
from Theorem~\ref{thm:aprioribounds}. The proof is very similar to the proof
Theorem~\ref{thm:smallbox} just adapted to the large box.

\begin{theorem}[A priori bounds on $n$ and ${\mathcal Q}^{\rm ren}_4$]
\label{thm:aprioriHLambdanQ}
Assume that the parameters $\varepsilon_T, s, d$, $K_\ell$, $K_N$ and $K_B$ satisfy
Assumption~\ref{assump:params} and that $R$ satisfies
\eqref{con:RNew}. Then there is a constant $C>0$ such that if
$\rmu a^3\leq C^{-1}$ and if there exists a normalized $n$-particle
$\Psi\in{\mathcal F}_{\rm s}(L^2(\Lambda))$ such that
\eqref{eq:aprioriPsi} holds then the a priori bounds on $n$ and
${\mathcal Q}^{\rm ren}_4$ in \eqref{eq:apriori_n} and
\eqref{eq:Q4apriori_2} hold.
\end{theorem}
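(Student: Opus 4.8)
The strategy is to mimic the small-box argument of Theorem~\ref{thm:smallbox}, but carried out directly on the large box $\Lambda$, now using the a priori bound on $n_+$ from Theorem~\ref{thm:aprioriHLambda} as a replacement for the Neumann gap that was available in the small boxes. The starting point is Lemma~\ref{lm:appinteractionestimate_a}, which gives
\begin{align*}
-\rmu \sum_{i=1}^N \int w_{1}(x_i,y)\,dy+\tfrac12\sum_{i\neq j}w(x_i,x_j)
\geq A_0 + A_2 + \tfrac12{\mathcal Q}_4^{\rm ren}-Ca(\rmu+n_0\ell^{-3})n_+,
\end{align*}
and Lemma~\ref{lm:appbogolubovF}, which controls $\sum_i{\mathcal T}^{(i)}+A_2$ from below in terms of $\widehat{g\omega}(0)(n+1)n\ell^{-3}$ plus errors of order $\rmu^2 a\ell^3 K_B^3(\rmu a^3)^{1/2}$ (using $n\lesssim\rmu\ell^3$) plus a term $-Ca\ell^{-3}(n+1)n_+$.

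\textbf{Key steps.} First I would assume that a normalized $n$-particle state $\Psi$ satisfying \eqref{eq:aprioriPsi} exists (otherwise there is nothing to prove), so that by Theorem~\ref{thm:aprioriHLambda} the a priori bound \eqref{eq:apriorinn+NEW} on $\langle\Psi,n_+\Psi\rangle$ holds. Second, combining Lemma~\ref{lm:appinteractionestimate_a} and Lemma~\ref{lm:appbogolubovF} and inserting the explicit value of $A_0$ from \eqref{eq:A0_a}, I would complete the square in $n_0$ exactly as in \eqref{eq:eBtilden}: the terms $\frac{n_0(n_0-1)}{2\ell^3}(\widehat g(0)+\widehat{g\omega}(0))$, the chemical potential term, and the $\widehat{g\omega}(0)(n+1)n\ell^{-3}$ from the Bogoliubov bound assemble (after replacing $n_0$ by $n$ at the cost of errors controlled by $a\,n_+$ and by $\rmu a$) into
\begin{align*}
{\mathcal H}_\Lambda(\rmu)\geq \Big(\tfrac14\big(\tfrac{n}{\ell^3}-\rmu\big)^2-\tfrac12\rmu^2\Big)\ell^3\widehat g(0)+\tfrac12{\mathcal Q}_4^{\rm ren}-{\mathcal E},
\end{align*}
where ${\mathcal E}$ collects: the Bogoliubov error terms $Ca^2 R\ell^{-2}$, $Ca^2(ds\ell)^{-1}\log(\cdots)$, $Ca^4(ds\ell)^3\rmu^3$, $Ca(ds\ell)^{-3}$ — all multiplied by $n\lesssim\rmu\ell^3$ and all shown in \eqref{eq:Ralambda}--\eqref{eq:a4dsl3} to be $\lesssim K_B^3\rmu^2 a\ell^3\sqrt{\rmu a^3}$ — together with the terms $Ca(\rmu+n_0\ell^{-3})n_+$ whose expectation, using \eqref{eq:apriorinn+NEW}, is $\lesssim \rmu a\cdot\rmu\ell^3 K_B^3 K_\ell^2(\rmu a^3)^{1/2}=C\rmu^2 a\ell^3 K_B^3 K_\ell^2(\rmu a^3)^{1/2}$, hence consistent with the claimed bounds. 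Third, taking the expectation in $\Psi$ and using the energy hypothesis \eqref{eq:aprioriPsi}, i.e. $\langle\Psi,{\mathcal H}_\Lambda(\rmu)\Psi\rangle\leq -4\pi\rmu^2 a\ell^3(1-K_B^3(\rmu a^3)^{1/2})=-\tfrac12\widehat g(0)\rmu^2\ell^3+4\pi\rmu^2 a\ell^3 K_B^3(\rmu a^3)^{1/2}$ (using $\widehat g(0)=8\pi a$), one gets
\begin{align*}
\tfrac14\big(\tfrac{n}{\ell^3}-\rmu\big)^2\ell^3\widehat g(0)+\tfrac12\langle\Psi,{\mathcal Q}_4^{\rm ren}\Psi\rangle\leq C\rmu^2 a\ell^3 K_B^3 K_\ell^2(\rmu a^3)^{1/2}.
\end{align*}
Since ${\mathcal Q}_4^{\rm ren}\geq0$ this yields $\big(\tfrac{n}{\ell^3}-\rmu\big)^2\leq C\rmu^2 K_B^3 K_\ell^2(\rmu a^3)^{1/2}$, which is \eqref{eq:apriori_n} after taking square roots; and since the left-hand side also dominates $\tfrac12\langle\Psi,{\mathcal Q}_4^{\rm ren}\Psi\rangle$, the lower bound $0\leq\langle\Psi,{\mathcal Q}_4^{\rm ren}\Psi\rangle$ (from positivity of $v$, hence of $W_1$ and of the expression \eqref{eq:DefQ4}) together with the upper bound $\langle\Psi,{\mathcal Q}_4^{\rm ren}\Psi\rangle\leq C\rmu^2 a\ell^3 K_B^3 K_\ell^2(\rmu a^3)^{1/2}$ gives \eqref{eq:Q4apriori_2}.

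\textbf{Main obstacle.} The bookkeeping hurdle is the same as in Theorem~\ref{thm:smallbox}: one must replace $n_0=n-n_+$ by $n$ (and $n-1$, $n_0-1$ by $n$) in the several places where they occur in $A_0$ and in the Bogoliubov bound, and show each resulting commutator/error is absorbed — either by $Ca\ell^{-3}n n_+$, whose expectation is controlled by the already-established \eqref{eq:apriorinn+NEW}, or by a term of size $\rmu a$ (lower order since $\rmu\ell^3\gg1$). Here, unlike in the small boxes, there is no large Neumann gap to absorb the $n_+$-errors into; instead they are simply estimated by their expectation in $\Psi$ via the a priori bound on $n_+$. Verifying that all these pieces are $\ll\rmu^2 a\ell^3 K_B^3 K_\ell^2(\rmu a^3)^{1/2}$ under Assumption~\ref{assump:params} (in particular \eqref{con:KBKell}, \eqref{con:sdKellKB}, \eqref{con:RNew}, \eqref{con:ER}) is routine but must be done carefully; this mirrors the computations \eqref{eq:Ralambda}--\eqref{eq:a4dsl3}. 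The argument is thus essentially identical in structure to Theorem~\ref{thm:smallbox}, with the small-box length $d\ell$ replaced by $\ell$ and the Neumann-gap absorption replaced by the a priori $n_+$-bound, so I would present it concisely, referring back to those estimates where possible.
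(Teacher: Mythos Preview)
Your overall plan is right in spirit, but there is a genuine circularity that the paper's proof is specifically designed to avoid, and which your outline does not address.

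The problem is in your ``Key steps'' paragraph where you write ``all multiplied by $n\lesssim\rmu\ell^3$'' and where you bound $Ca(\rmu+n_0\ell^{-3})n_+$ by $\rmu a\cdot\rmu\ell^3 K_B^3K_\ell^2(\rmu a^3)^{1/2}$. Both of these estimates presuppose $n\lesssim\rmu\ell^3$, i.e.\ essentially the bound \eqref{eq:apriori_n} that you are trying to prove. Without such a bound, the Bogoliubov error from Lemma~\ref{lm:appbogolubovF} contains the term $-Ca^4(ds\ell)^3((n+1)\ell^{-3})^3 n$, which is \emph{quartic} in $n$ and therefore overwhelms the positive quadratic term $\tfrac14(n\ell^{-3}-\rmu)^2\ell^3\widehat g(0)$ for large $n$; similarly the error $-Ca(\rmu+n_0\ell^{-3})n_+$ scales like $-Ca\ell^{-3}n\langle n_+\rangle$ and is not controlled. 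So your displayed inequality for ${\mathcal H}_\Lambda(\rmu)$ simply does not hold without an a priori upper bound on $n$.

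The paper resolves this exactly as in Theorem~\ref{thm:smallbox}: one writes $n=mn'+n''$ with $n'\in[\Xi\rmu\ell^3,(\Xi+1)\rmu\ell^3)$ and $n''<n'$, splits the particles into $m$ groups of size $n'$ and one of size $n''$, and applies Lemmas~\ref{lm:appinteractionestimate_a} and~\ref{lm:appbogolubovF} to the reduced states $\Gamma^\Psi_{\widetilde n}$ with $\widetilde n\le 2\Xi\rmu\ell^3$. On such groups all the $n$-dependent errors are controlled (and the $n_+$-errors are handled via $\tr[\Gamma^\Psi_{\widetilde n}n_+]\le\langle\Psi,n_+\Psi\rangle$ together with \eqref{eq:apriorinn+NEW}). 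With $\Xi=3$ each group of $n'$ particles contributes a strictly positive energy, so the hypothesis \eqref{eq:aprioriPsi} forces $m=0$; hence $n=n''<4\rmu\ell^3$, and the refined bounds \eqref{eq:apriori_n} and \eqref{eq:Q4apriori_2} follow from the inequality for the single remaining group. You should include this splitting step; the rest of your argument then goes through.
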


\begin{proof}
  Observe that according to \eqref{con:KBellKM} we may assume from
  \eqref{eq:aprioriPsi} that $ \langle\Psi,{\mathcal H}_\Lambda(\rmu)\Psi\rangle<0$. Hence we must have
  $n>0$. Since $\rmu \ell^3=K_\ell^3 (\rmu a^3)^{-1/2}$ we may assume this number to be bigger than 1.
  We proceed as in Theorem~\ref{thm:smallbox}. 
  For all $\Xi>2$  we can choose an integer
  $n'$ in the interval $[\Xi\rmu\ell^3,(\Xi+1)\rmu\ell^3)$ and we may write
  $n=mn'+n''$ with $m, n',n''$ non-negative integers and
  $n''<n'<(\Xi+1)\rmu\ell^3$.
  We will choose $\Xi=3$ below.

  We are aiming at giving a lower bound for
  $\langle\Psi,{\mathcal H}_\Lambda(\rmu)\Psi\rangle$. From Theorem~\ref{thm:aprioriHLambda} we know that \eqref{eq:apriorinn+NEW}
  holds for the expectation value of $n_+$ in $\Psi$.

  As in the proof of Theorem~\ref{thm:smallbox} we divide the
  particles into  $m$ groups of $n'$ particles and one group of $n''$ particles. For $\widetilde{n}\leq n$ we denote by
  $\Gamma_{\widetilde{n}}^\Psi$ the $\widetilde{n}$-particle reduced state of $\Psi$.
  Ignoring the positive interaction between the particles in different groups we have
  \begin{equation}\label{eq:Psinsplit}
    \langle\Psi,{\mathcal H}_\Lambda(\rmu)\Psi\rangle\geq m\tr\left[\Gamma_{n'}^\Psi{\mathcal H}_\Lambda(\rmu)\right]+
    \tr\left[\Gamma_{n''}^\Psi{\mathcal H}_\Lambda(\rmu)\right].
  \end{equation}
  Observe also that for any $\widetilde{n}\leq n$ we have
  \begin{equation}\label{eq:n+reduced}
    \langle\Psi,n_+\Psi\rangle=\langle\Psi\sum_iQ_i\Psi\rangle\geq \tr\left[\Gamma_{\widetilde{n}}^{\Psi} \sum_iQ_i\right]=\tr\left[\Gamma_{\widetilde{n}}^{\Psi} n_+\right].
  \end{equation}
  We have that both $n'$ and $n''$ are less than
  $(\Xi+1)\rmu|B|\leq 2\Xi\rmu|B|$.  To treat both cases simultaneously we shall therefore first give a
  general estimate on $\tr\left[\Gamma_{\widetilde{n}}^\Psi{\mathcal H}_\Lambda(\rmu)\right]$ assuming that
  $\widetilde{n}\leq 2\Xi\rmu|B|$ 
  We estimate the energy using
  Lemmas~\ref{lm:appinteractionestimate_a} (ignoring for now the positive ${\mathcal Q}_4^{\rm ren}$-term) and \ref{lm:appbogolubovF}.
  Observe that using \eqref{eq:apriorinn+NEW} and \eqref{eq:n+reduced}
  we see that the last terms in \eqref{eq:SmallsimpleQs_a} and \eqref{eq:largeaprioribog} can be estimated 
  by
  $$
  C \Xi K_B^3K_\ell^2\rmu^2 a \ell^3\sqrt{\rmu a^3}.
  $$
  We get the same error in replacing $n_0$ by $\widetilde{n}$ everywhere in \eqref{eq:SmallsimpleQs_a}. The error we get by replacing
  $\widetilde{n}\pm 1$ by $\widetilde{n}$ in several terms is bounded by $C\Xi\rmu a$ which is smaller than the above error. Hence using
  \eqref{eq:ScatLengthBasic} we arrive at
  \begin{align}
    \tr\left[\Gamma_{\widetilde{n}}^\Psi{\mathcal H}_\Lambda(\rmu)\right]\geq& -4\pi\rmu^2 a\ell^3 +2\pi(\widetilde{n}\ell^{-3}-\rmu)^2a\ell^3\nonumber\\
                                &-C(\Xi^4 +\Xi^2K_B+\Xi K_B^3K_\ell^2) \rmu^2 a\ell^3\sqrt{\rmu a^3}+\frac12\tr\left[\Gamma_{\widetilde{n}}^\Psi{\mathcal Q}_4^{\rm ren}\right],
  \end{align}
  where we have used \eqref{eq:appdellsassumption}, \eqref{eq:Ralambda} (and $d>1$), \eqref{eq:adsl-3}, and \eqref{eq:a4dsl3}.
  Hence with $\Xi=3$ we see from \eqref{eq:Psinsplit} and \eqref{con:KBKell} that for $\rmu a^3$ small enough 
  \begin{align*}
  \langle\Psi,{\mathcal H}_\Lambda(\rmu)\Psi\rangle\geq&-4\pi\rmu^2 a+ 3m\pi\rmu^2 a\ell^3 +2\pi(n'' \ell^{-3}-\rmu)^2a\ell^3 \nonumber\\
                                &-CK_B^3K_\ell^2\rmu^2 a\ell^3\sqrt{\rmu a^3}+\frac12\tr\left[\Gamma_{{n''}}^\Psi{\mathcal Q}_4^{\rm ren}\right], 
  \end{align*}
  where we used that ${\mathcal Q}_4^{\rm ren}\geq0$. We see from \eqref{eq:aprioriPsi} that we must have $m=0$ and hence $n=n''$ and
  $\Gamma_{{n''}}^\Psi=|\Psi\rangle\langle\Psi|$ and that the a priori bounds \eqref{eq:apriori_n} and \eqref{eq:Q4apriori_2} must hold. 
\end{proof}

\section{Proofs}\label{sec:proofs}
In this section we give the proofs of Lemma~\ref{lem:pseudolocal} and Lemma~\ref{lem:LowMomentaOperator}.

\subsection{Proof of Lemma~\ref{lem:pseudolocal}}
\begin{proof}[Proof of Lemma~\ref{lem:pseudolocal}]
To have a unified proof, we let $\widetilde{\chi}_{\Lambda}$ denote either the function $ \chi_{\Lambda}$ or $\one_{\Lambda}$.
We easily get, using Cauchy-Schwarz and the definition \eqref{eq:Def_QL'} of $Q_L'$,
\begin{align}
Q \widetilde{\chi}_{\Lambda} \one_{\{|p| \leq K_H'' \ell^{-1}\}} \widetilde{\chi}_{\Lambda}Q \leq 2 \| \widetilde{\chi}_{\Lambda}\|_{\infty}^2 Q_L'  + 2 Q_H'  \widetilde{\chi}_{\Lambda}\one_{\{|p| \leq K_H'' \ell^{-1}\}} \widetilde{\chi}_{\Lambda} Q_H' .
\end{align}
So it suffices to prove that
\begin{align}\label{eq:LongCommutator}
\left\| Q_H'  \widetilde{\chi}_{\Lambda} \one_{\{|p| \leq K_H'' \ell^{-1}\}}\right\| \leq C  \left( \frac{K_H''}{K_H'} \right)^{\frac{M}{2}} + C \varepsilon_N^{\frac{3}{4}} .
\end{align}
Where norm denotes the operator norm from $L^2({\mathbb R}^3)$ to $L^2(\Lambda)$.
By scaling, we may assume that $\ell = 1$, so in the rest of the proof $\Lambda$ denotes the unit box, i.e. $\Lambda = (-\frac{1}{2}, \frac{1}{2})^3$.
We will denote by $\widetilde{\chi}$ the scaled $\widetilde{\chi}_{\Lambda}$.
We will in this proof slightly abuse notation and denote by $\theta$ the indicator function of the {\it open} unit box, i.e. $\theta(x) = \one_{(-\frac{1}{2}, \frac{1}{2})^3}(x)$.

From the definition of $Q_H'$ and the spectral theorem, we have the bound
\begin{align}
\label{eq:spectral}
\| {\mathcal T}^{-s} Q_H' \| \leq (K_H')^{-2s},
\end{align}
for all $s>0$.

We rewrite
\begin{align}
{\mathcal T} &= (1-\varepsilon_N) Q \left( A_1 + A_2 + A_3 + A_4 + A_5 \right) Q \nonumber \\
&= (1-\varepsilon_N)  \left( A_5 + A'\right)  ,
\end{align}
with $A'$ defined by the above identity, and 
\begin{align}
A_1&:= (\frac{1}{2}+b) \varepsilon_T d^{-2} + b, \nonumber \\
A_2 &:=- \frac{1}{2} \varepsilon_T d^{-4} (-\Delta^{{\mathcal N}}+d^{-2})^{-1}, \nonumber \\
A_3&:= - b \varepsilon_T d^{-2} \theta \one_{[0,d^{-2}]}(|p|) \theta, \nonumber \\
A_4&:= \chi \tau(p) \chi, \nonumber \\
A_5&:= \varepsilon_N' (-\Delta^{{\mathcal N}}),
\end{align}
where we use the notation, $\varepsilon_N':=\frac{\varepsilon_N}{1-\varepsilon_N}$ and $\tau$ from \eqref{eq:Def_tau} (with $\ell=1$).
Let us note in passing that the self-adjoint operator ${\mathcal T}$ has domain given by the domain of the Neumann-Laplace operator.
Let $\varphi \in L^2(\Lambda), \psi \in L^2({\mathbb R}^3)$ be normalized (where we keep the notation $\Lambda$ for the scaled (i.e. unit)  box $(-\frac{1}{2}, \frac{1}{2})^3$). We will prove below that as long as $2s+2 \leq \frac{M}{2}$, we have that
\begin{align}\label{eq:InducRegul}
\psi_s := (A')^s \widetilde{\chi} \one_{\{|p| \leq K_H'' \}}  \psi \in H^2(\Lambda),
\end{align}
and, for all $\alpha \in {\mathbb N}_0^3$, with $|\alpha|\leq 2$,
\begin{align}\label{eq:InducBound}
\| \partial^{\alpha} \psi_s \|_{L^2(\Lambda)} \leq C_s (K_H'')^{2s+ |\alpha|}.
\end{align}
We start by showing how Lemma~\ref{lem:pseudolocal} follows using \eqref{eq:InducRegul} and \eqref{eq:InducBound}.

We now use that ${\mathcal D}({\mathcal T}) = {\mathcal D}(\Delta^{{\mathcal N}})= {\mathcal D}(A_5)$, hence 
$$
Q_H' =  {\mathcal T} {\mathcal T}^{-1} Q_H' = (1-\varepsilon_N) A_5{\mathcal T}^{-1} Q_H'  + (1-\varepsilon_N) A' {\mathcal T}^{-1} Q_H' .
$$ 
If we use this identity $N = M/2 \in {\mathbb N}$ times, and that $A'$ is symmetric on $H^2(\Lambda) \supseteq {\mathcal D}(\Delta^{{\mathcal N}})$, we get
\begin{align}
&\langle Q_H' \varphi,  \widetilde{\chi}  \one_{\{|p| \leq K_H'' \}} \psi \rangle \nonumber \\
&=
(1 - \varepsilon_N) \langle A_5 {\mathcal T}^{-1} Q_H' \varphi,  \psi_0 \rangle +
(1 - \varepsilon_N)\langle  {\mathcal T}^{-1} Q_H' \varphi, A'  \psi_0 \rangle\nonumber \\
&=(1 - \varepsilon_N)\langle A_5 {\mathcal T}^{-1} Q_H' \varphi,  \psi_0 \rangle +
(1 - \varepsilon_N)^2\langle  A_5 {\mathcal T}^{-2} Q_H' \varphi, \psi_1 \rangle + 
(1 - \varepsilon_N)^2\langle   {\mathcal T}^{-2} Q_H' \varphi, \psi_2 \rangle \nonumber \\
&= \sum_{j=0}^{N-1} (1 - \varepsilon_N)^j \langle A_5 {\mathcal T}^{-j-1} Q_H' \varphi, \psi_j \rangle
+ (1 - \varepsilon_N)^N \langle {\mathcal T}^{-N} Q_H' \varphi, \psi_{N} \rangle.
\end{align}

Here the powers of $(1 - \varepsilon_N)$ are clearly unimportant, since $\varepsilon_N \leq 1/2$.

Using \eqref{eq:InducBound} and 
\eqref{eq:spectral} we immediately get
\begin{align}
\left|  \langle {\mathcal T}^{-N} Q_H' \varphi, \psi_{N} \rangle \right| \leq C (K_H')^{-2N} (K_H'')^{2N},
\end{align}
in agreement with \eqref{eq:LongCommutator}.

For the other terms, we will prove that
\begin{align}\label{eq:A5terms}
| \langle A_5 {\mathcal T}^{-j-1} Q_H' \varphi, \psi_j \rangle | \leq C \varepsilon_N^{\frac{3}{4}} \left( \frac{K_H''}{K_H'} \right)^{2j+2},
\end{align}
and \eqref{eq:LongCommutator} (and therefore the lemma) will follow.

We integrate by parts, using \eqref{eq:InducRegul}, to get
\begin{align}
\langle A_5 {\mathcal T}^{-j-1} Q_H' \varphi, \psi_j \rangle
=
\varepsilon_N' \langle {\mathcal T}^{-j-1} Q_H' \varphi, (-\Delta) \psi_j \rangle
+\varepsilon_N' \int_{\partial \Lambda} \overline{{\mathcal T}^{-j-1} Q_H' \varphi(x)} \partial_n \psi_j \,d\sigma(x),
\end{align}
where $\partial_n$ denotes the normal derivative, $\sigma$ is the surface measure and where we used that ${\mathcal T}^{-1}$ maps into the domain of the Neumann Laplace operator to notice that one of the boundary terms vanishes.
Notice in passing that for $j=0$ the other boundary term also vanishes and this could be used to marginally improve our conclusion, but we will not use this fact.

Therefore, using Cauchy-Schwarz and standard trace theorems,
\begin{align}
&\left| \langle A_5 {\mathcal T}^{-j-1} Q_H' \varphi, \psi_j \rangle \right|\nonumber \\
&\leq
\varepsilon_N' C \left(  \| {\mathcal T}^{-j-1} Q_H' \varphi\| \| \psi_j \|_{H^2(\Lambda)} 
+
\| {\mathcal T}^{-j-1} Q_H' \varphi(x) \|_{H^{1/2}(\Lambda)} \| \psi_j \|_{H^{3/2}(\Lambda)} \right) \nonumber \\
&\leq \varepsilon_N' C \Big(  \| {\mathcal T}^{-j-1} Q_H' \varphi\| \| \psi_j \|_{H^2(\Lambda)}\nonumber \\
&\qquad \qquad+
\| {\mathcal T}^{-j-1} Q_H' \varphi(x) \|^{1/2}
\| \nabla {\mathcal T}^{-j-1} Q_H' \varphi(x) \|^{1/2}
\| \psi_j \|_{H^1(\Lambda)}^{1/2}
\| \psi_j \|_{H^2(\Lambda)}^{1/2}
\Big)  ,
\end{align}
where the last inequality follows from the interpolation inequalities recalled in Lemma~\ref{lem:Interpolation} below, and a Poincar\'{e} inequality using that the range ${\mathcal T}^{-1}$ is orthogonal to constants.

Notice now that
$\| \nabla {\mathcal T}^{-j-1} Q_H' \varphi(x) \|^2 \leq (\varepsilon_N')^{-1} \|  {\mathcal T}^{-j-1/2} Q_H' \varphi(x) \|^2$, since $A_5 \leq {\mathcal T}$.
Inserting this, together with \eqref{eq:InducBound} and the spectral bound 
\eqref{eq:spectral} gives \eqref{eq:A5terms}.

We will now prove the claims  \eqref{eq:InducRegul} and \eqref{eq:InducBound}.
We can write $(A')^s$ as a finite sum of terms of the form
$$
\Big((\frac{1}{2} +b)\varepsilon_T d^{-2} + b\Big)^{s-k} \prod_{j=1}^k Q A_{\sigma(j)} Q,
$$
for some $\sigma: \{1,\ldots,k\} \rightarrow \{2,3,4\}$ and where $k\leq s$.
Notice that the prefactor satisfies $((\frac{1}{2} +b)\varepsilon_T d^{-2} + b)^{s-k} \leq C (K_H'')^{2s-2k}$ by 
\eqref{cond:disjoint}.
So it suffices to prove that
\begin{align}
\left\|  \partial^{\alpha} \prod_{j=1}^k Q A_{\sigma(j)} Q  \widetilde{\chi}  \one_{\{|p| \leq K_H''\}}\right\| \leq C (K_H'')^{2k+|\alpha|},
\end{align}
for all $|\alpha| \leq 2$ and 
where norm denotes the operator norm from $L^2({\mathbb R}^3)$ to $L^2(\Lambda)$.

We will split such a product at each occurrence of a factor of $A_3$. Thereby, it clearly suffices to prove the estimates
\begin{align}\label{eq:ToWork1}
b \varepsilon_T d^{-2} \left\| \partial^{\alpha} \left(\prod_{j=1}^{k} Q A_{\sigma(j)} Q\right)  \theta \one_{[0,d^{-2}]}(|p|)  \right\| \leq C (K_H'')^{2k+2+|\alpha|},
\end{align}
and 
\begin{align}\label{eq:ToWork}
\left\|  \partial^{\alpha} \left(\prod_{j=1}^{k} Q A_{\sigma(j)} Q\right) \widetilde{\chi}  \one_{\{|p| \leq K_H''\}} \right\| \leq C (K_H'')^{2k+|\alpha|},
\end{align}
for all $|\alpha| \leq 2$, $2k+2\leq \frac{M}{2}$ and $\sigma: \{1,\ldots,k\} \rightarrow \{2,4\}$ and
where norms denote the operator norm from $L^2({\mathbb R}^3)$ to $L^2(\Lambda)$.

We will only explicitly consider \eqref{eq:ToWork}, the estimate \eqref{eq:ToWork1} being similar since $d^{-2} \ll K_H''$ by \eqref{cond:disjoint}.
By the interpolation inequality \eqref{eq:H1-interpolation} of Lemma~\ref{lem:Interpolation}, it suffices to consider the cases $|\alpha| \in \{0,2\}$.

Let $f \in C_0^{\frac{M}{2}}({\mathbb R}^3)$ with $f\equiv 0$ outside $\Lambda$.

Consider first, for $s_1, t_1 \in {\mathbb N}$, with $2s_1 \leq \frac{M}{2}$, the operator $(Q\chi \tau \chi Q)^{s_1} (-\Delta^{\mathcal N} + d^{-2})^{-t_1}$. If $s_1\geq t_1$, we have the following identity when acting on $H^{2(s_1-t_1)}({\mathbb R}^3)$,
\begin{align}
&(Q\chi \tau \chi Q)^{s_1} Q(-\Delta^{\mathcal N} + d^{-2})^{-t_1} Qf \nonumber \\
&\quad = \left( (Q\chi \tau \chi Q)^{s_1}  (-\Delta^{\mathcal N} + d^{-2})^{-s_1} Q \right) (-\Delta+ d^{-2})^{s_1-t_1} f.
\end{align}
This uses that $Q$ commutes with $-\Delta^{\mathcal N}$ and that the vanishing on the boundary of the multiplication operator $f$ assures that the boundary conditions are satisfied, so one can replace $\Delta^{\mathcal N}$ by $\Delta$ in the rightmost term. Notice that $(Q\chi \tau \chi Q)^{s_1} (-\Delta^{\mathcal N} + d^{-2})^{-s_1}$ can be bounded as an operator on $L^2(\Lambda)$ using Lemma~\ref{lem:Eigenfunctions_Elliptic}, writing $Q = \one - P$ on $L^2(\Lambda)$ and that $\chi$ has $M$ bounded derivatives.

Iterating this argument, we find with $[t]_{\pm} := \max\{0, \pm t\} \geq 0$,
\begin{align}
&(Q\chi \tau \chi Q)^{s_1} Q(-\Delta^{\mathcal N} + d^{-2})^{-t_1} Q
(Q\chi \tau \chi Q)^{s_2} Q(-\Delta^{\mathcal N} + d^{-2})^{-t_2} Q
f \nonumber \\
&= \left\{ (Q\chi \tau \chi Q)^{s_1} (-\Delta^{\mathcal N} + d^{-2})^{-s_1} Q\right\} (-\Delta^{\mathcal N}+ d^{-2})^{-[s_1-t_1]_{-}} \nonumber \\
&\quad \times \left\{(-\Delta+ d^{-2})^{[s_1-t_1]_{+}} (Q\chi \tau \chi Q)^{s_2}
(-\Delta^{\mathcal N} + d^{-2})^{-s_2-[s_1-t_1]_{+}} Q\right\} \nonumber \\
&\quad 
\times (-\Delta^{\mathcal N}+ d^{-2})^{-\left[s_2-t_2 + [s_1-t_1]_{+}\right]_{-}} \times
(-\Delta+ d^{-2})^{\left[s_2-t_2 + [s_1-t_1]_{+}\right]_{+}} f.
\end{align}
Therefore, for $k'\geq 2$ and $|\alpha| \in \{0,2\}$, consider $s_1, t_{k'} \in {\mathbb N}_{0}$ and $s_2,\ldots, s_{k'}, t_1, \ldots , t_{k'-1} \in {\mathbb N}$ 
with $|\alpha|+ 2\sum_j s_j \leq \frac{M}{2}$,
and define
\begin{align}
\delta_1:= s_1-t_1 + \frac{|\alpha|}{2}, \qquad \delta_j:= \left[ s_j-t_j + [\delta_{j-1}]_{+}\right]_{+}, \quad \text{ for } j \geq 2.
\end{align}
Notice in passing the trivial bound
\begin{align}\label{eq:sums}
\delta_j \leq \frac{|\alpha|}{2} + \sum_{j'=1}^j s_{j'}.
\end{align}
Then, for all $\varphi \in H^{2 [\delta_{k'}]_{+}}$, and with $t_{\rm tot}:= \sum t_j$,
\begin{align}
&\partial^{\alpha} \left( \prod_{j=1}^{k'} (Q\chi \tau \chi Q)^{s_j} Q (\frac{1}{2} \varepsilon_T d^{-4} )^{t_j} (-\Delta^{\mathcal N} + d^{-2})^{-t_j} Q\right) f \varphi \nonumber \\
&=(\frac{1}{2} \varepsilon_T d^{-4} )^{t_{\rm tot}} 
\left(\partial^{\alpha} (Q \chi \tau \chi Q)^{s_1} (-\Delta^{\mathcal N} + d^{-2})^{-s_1 -\frac{|\alpha|}{2}} \right) \nonumber \\
&\quad \times \left( \prod_{j=2}^{k'}  (-\Delta+ d^{-2})^{[\delta_{j-1}]_{+}} (Q\chi \tau \chi Q)^{s_j} (-\Delta^{\mathcal N} + d^{-2})^{-s_j -[\delta_{j-1}]_{+} } 
Q (-\Delta^{\mathcal N} + d^{-2})^{-[\delta_{j}]_{-}}
\right) \nonumber \\
&\quad \times (-\Delta + d^{-2})^{[\delta_{k'}]_{+}}
f \varphi.
\end{align}
In particular, still for all $\varphi \in H^{2 [\delta_{k'}]_{+}}(\Lambda)$,
\begin{align}\label{eq:PulledThrough}
&\left\| \partial^{\alpha} \left( \prod_{j=1}^{k'} (Q \chi \tau \chi Q)^{s_j} Q (\frac{1}{2} \varepsilon_T d^{-4} )^{t_j}(-\Delta^{\mathcal N} + d^{-2})^{-t_j} Q\right) f \varphi 
\right\| \nonumber \\
&\quad \leq C (\frac{1}{2} \varepsilon_T d^{-4} )^{t_{\rm tot}}  \| (-\Delta + d^{-2})^{[\delta_{k'}]_{+}}
f \varphi \|,
\end{align}
since all the other factors are bounded (uniformly in $d\leq 1$) operators on $L^2$.

Now we can prove \eqref{eq:ToWork}. We split the proof in two cases depending on whether $\sigma(k)=2$ or $4$.

If $\sigma(k) = 4$, we clearly have
\begin{align}
A_{\sigma(k)} Q \widetilde{\chi}  \one_{\{|p| \leq K_H''\}} = \chi \tau \chi (1-P)  \widetilde{\chi}  \one_{\{|p| \leq K_H''\}},
\end{align}
where $\tau \chi (1-P)  \widetilde{\chi}  \one_{\{|p| \leq K_H''\}}$ has range in $H^{M-2}({\mathbb R}^3)$. 
Recall that $\widetilde{\chi}$ denotes either $\one_{(-\frac{1}{2}, \frac{1}{2})^3}$ or $\chi$ and that the bound \eqref{eq:sums} implies that in the present case $\delta_{k'}\leq (k-1) + \frac{|\alpha|}{2} \leq \frac{M}{4}
-1$. Therefore, using that $\chi$ satisfies the assumptions on the function $f$ above, we can apply \eqref{eq:PulledThrough}  on the first $k-1$ factors to get
\begin{align}
&\left\|  \partial^{\alpha} \left( \prod_{j=1}^k Q A_{\sigma(j)} Q \right)  \widetilde{\chi}  \one_{\{|p| \leq K_H''\}}\right\| \nonumber \\
&\quad \leq
C  (\frac{1}{2} \varepsilon_T d^{-4} )^{t_{\rm tot}}  \left\| (-\Delta + d^{-2})^{[\delta_{k'}]_{+}}  \chi \tau \chi (1-P)  \widetilde{\chi}  \one_{\{|p| \leq K_H''\}}
\right\|.
\end{align}
Using furthermore \eqref{cond:disjoint}, we therefore clearly get 
the desired estimate \eqref{eq:ToWork} in this case.

If $\sigma(k) = 2$, we will apply Lemma~\ref{lem:EllipticReg_bdry} below.
For this we start by noticing that \eqref{eq:ToWork} is very simple if $\sigma(j) = 2$ for all $j$---using in particular Lemma~\ref{lem:Eigenfunctions_Elliptic} and \eqref{cond:disjoint}.
Therefore, we may assume that there exists a largest $\tilde{k} < k$ such that $\sigma(\tilde{k}) = 4$.
Thus, we write
\begin{align}
&\partial^{\alpha} \left(\prod_{j=1}^k Q A_{\sigma(j)} Q \right)  \widetilde{\chi}  \one_{\{|p| \leq K_H''\}} \nonumber \\
&\quad =
(\frac{1}{2} \varepsilon_T d^{-4} )^{k-\tilde{k}} 
\partial^{\alpha} \left( \prod_{j=1}^{\tilde{k}-1} Q A_{\sigma(j)} Q \right) Q
\chi \tau \chi 
 (-\Delta^{\mathcal N} + d^{-2})^{-(k-\tilde{k})} Q
\widetilde{\chi}  \one_{\{|p| \leq K_H''\}}.
\end{align}
Here we split the last factor of $\chi$ as $\chi = f^2$, using the assumption \eqref{eq:Splitchi} on the structure of $\chi$.
Using Lemma~\ref{lem:EllipticReg_bdry} we see that
$f  (-\Delta^{\mathcal N} + d^{-2})^{-(k-\tilde{k})} Q
\widetilde{\chi}  \one_{\{|p| \leq K_H''\}}$ maps into $H^{\frac{M}{2}}(\Lambda)$.
Therefore, we can apply \eqref{eq:PulledThrough}  on the first $\tilde{k}-1$-factor to get
\begin{align}
&\left\| \partial^{\alpha} \left(\prod_{j=1}^k Q A_{\sigma(j)} Q \right)  \widetilde{\chi}  \one_{\{|p| \leq K_H''\}} \right\| \nonumber \\
&\quad\leq
C (\frac{1}{2} \varepsilon_T d^{-4} )^{t_{\rm tot}}  \left\| \left( (-\Delta + d^{-2})^{[\delta_{k'}]_{+}}
Q
\chi \tau f \right) f
 (-\Delta^{\mathcal N} + d^{-2})^{-(k-\tilde{k})} Q
\widetilde{\chi}  \one_{\{|p| \leq K_H''\}}\right\|,
\end{align}
where, by a slight abuse of notation, $t_{\rm tot}:= | \{j : \sigma(j) = 2\}|$ and where by \eqref{eq:sums} we have $\delta_{k'}+1 \leq  | \{j : \sigma(j) = 4\}| + \frac{|\alpha|}{2}$.
By  Lemma~\ref{lem:EllipticReg_bdry} and \eqref{cond:disjoint} we estimate the last term as
\begin{align}
 \left\| \left( (-\Delta + d^{-2})^{[\delta_{k'}]_{+}}
Q
\chi \tau f \right) f
 (-\Delta^{\mathcal N} + d^{-2})^{-(k-\tilde{k})} Q
\widetilde{\chi}  \one_{\{|p| \leq K_H''\}}\right\|
\leq C (K_H'')^{2 [\delta_{k'}]_{+} + 2}.
\end{align}
Therefore, using \eqref{cond:disjoint} again, we see that \eqref{eq:ToWork} is also true in this case.

This finishes the proof of \eqref{eq:ToWork} and therefore of \eqref{eq:InducRegul} and \eqref{eq:InducBound}, which was all that remained in order to finish the proof of Lemma~\ref{lem:pseudolocal}.
\end{proof}

\begin{lemma}\label{lem:Interpolation}
We have the interpolation inequalities, with $\Lambda=(-\frac{1}{2}, \frac{1}{2})^3$,
\begin{align}\label{eq:H3/2-interpolation}
\| f \|_{H^{3/2}(\Lambda)}  &\leq C \| f \|_{H^{1}(\Lambda)}^{\frac{1}{2}} \| f \|_{H^{2}(\Lambda)}^{\frac{1}{2}}, \\
\label{eq:H1/2-interpolation}
\| f \|_{H^{1/2}(\Lambda)} 
&\leq C \| f \|_{L^2(\Lambda)}^{\frac{1}{2}} \| f \|_{H^{1}(\Lambda)}^{\frac{1}{2}} ,\\
\intertext{and for all $i \in \{1,2,3\}$,}\label{eq:H1-interpolation}
\| \partial_i f \|_{L^2(\Lambda)} &\leq C \left( \| f \|_{L^2(\Lambda)}  + \|  f \|_{L^2(\Lambda)}^{\frac{1}{2}} \| \partial_i^2 f \|_{L^2(\Lambda)}^{\frac{1}{2}} \right).
\end{align}
\end{lemma}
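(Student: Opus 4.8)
\textbf{Proof proposal for Lemma~\ref{lem:Interpolation}.}

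All three inequalities are standard interpolation estimates on the cube $\Lambda = (-\tfrac12,\tfrac12)^3$, and the plan is to reduce each of them to a one-dimensional Gagliardo--Nirenberg-type statement together with the fact that Sobolev norms on a bounded Lipschitz domain are equivalent to the restrictions of Sobolev norms on $\R^3$ via a bounded extension operator. First I would fix, once and for all, a bounded linear extension operator $E: H^s(\Lambda)\to H^s(\R^3)$ for $s\in\{0,1,2\}$ (Stein's extension, or the simpler reflection extension adapted to the cube), so that $\|Ef\|_{H^s(\R^3)}\le C\|f\|_{H^s(\Lambda)}$ and $Ef|_\Lambda = f$. Then $\|f\|_{H^\sigma(\Lambda)}\le \|Ef\|_{H^\sigma(\R^3)}$ for every intermediate $\sigma$ as well, so it suffices to prove the three inequalities on all of $\R^3$ (with $f$ replaced by $Ef$) and then bound the right-hand sides back in terms of $\|f\|_{H^s(\Lambda)}$ using boundedness of $E$. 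On $\R^3$ the first two are immediate from the Fourier side: writing $\|g\|_{H^\sigma}^2 = \int (1+|k|^2)^\sigma |\hat g(k)|^2\,dk$ and applying H\"older's inequality with exponents $2,2$ to the factorization $(1+|k|^2)^{3/2} = (1+|k|^2)^{1/2}\cdot(1+|k|^2)$ (resp.\ $(1+|k|^2)^{1/2} = 1\cdot(1+|k|^2)^{1/2}$, Cauchy--Schwarz) gives
\begin{align}
\|g\|_{H^{3/2}(\R^3)} \le \|g\|_{H^1(\R^3)}^{1/2}\|g\|_{H^2(\R^3)}^{1/2}, \qquad
\|g\|_{H^{1/2}(\R^3)} \le \|g\|_{L^2(\R^3)}^{1/2}\|g\|_{H^1(\R^3)}^{1/2}.
\end{align}
Combining with the extension bounds yields \eqref{eq:H3/2-interpolation} and \eqref{eq:H1/2-interpolation}.

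For \eqref{eq:H1-interpolation} the Fourier argument on $\R^3$ gives only $\|\partial_i g\|_{L^2(\R^3)}\le \|g\|_{L^2(\R^3)}^{1/2}\|\partial_i^2 g\|_{L^2(\R^3)}^{1/2}$ (H\"older with $(1+|k|^2)$-free factorization $|k_i| \le |k_i^2|^{1/2}$), which after extension is dominated by the stated right-hand side; so in fact \eqref{eq:H1-interpolation} follows from the same extension-plus-Fourier scheme, with the extra additive $\|f\|_{L^2(\Lambda)}$ term absorbing the constants introduced by $E$ (and making the inequality robust even when $\|\partial_i^2 f\|_{L^2}$ is not finite on all of $\R^3$ but only the mixed second derivative is controlled on $\Lambda$). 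Alternatively, and more self-containedly, I would prove \eqref{eq:H1-interpolation} directly by a one-dimensional argument: for a $C^2$ function $h$ on an interval $I$ of length $1$, the mean value theorem gives a point where $|h'|\le \int_I|h'|\le \|h\|_{L^2(I)} + \|h'\|_{L^1(I)}$-type control, and then for any $x$, $h'(x) = h'(x_0) + \int_{x_0}^x h''$, so $\|h'\|_{L^2(I)}^2 \le C(\|h\|_{L^2(I)}^2 + \|h\|_{L^2(I)}\|h''\|_{L^2(I)})$ after a Cauchy--Schwarz and the elementary inequality $ab\le \tfrac12(a^2+b^2)$; integrating this in the remaining two variables over $\Lambda$ and taking square roots gives \eqref{eq:H1-interpolation}. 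The mild subtlety here is that the naive Gagliardo--Nirenberg bound $\|h'\|_{L^2}\le C\|h\|_{L^2}^{1/2}\|h''\|_{L^2}^{1/2}$ is \emph{false} on a bounded interval (take $h$ linear and nonconstant), which is precisely why the statement carries the additive $\|f\|_{L^2(\Lambda)}$; the proof above produces exactly that correction term.

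The only real obstacle is bookkeeping: ensuring the fractional-order norms $H^{1/2}$ and $H^{3/2}$ on $\Lambda$ are interpreted consistently (as restriction norms, equivalently as real-interpolation norms $[L^2,H^1]_{1/2}$ and $[H^1,H^2]_{1/2}$) so that the extension operator is simultaneously bounded on the integer endpoints and hence on the interpolated spaces; once that is set up the estimates are one-line consequences of H\"older/Cauchy--Schwarz on the Fourier side (for the $\R^3$ versions) and the elementary one-dimensional lemma (for \eqref{eq:H1-interpolation}). No ingredient beyond standard Sobolev space theory is needed, and the constants depend only on the fixed domain $\Lambda$, as claimed.
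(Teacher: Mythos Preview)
Your proposal is correct and follows essentially the same approach as the paper: for \eqref{eq:H3/2-interpolation} and \eqref{eq:H1/2-interpolation} the paper also uses a total extension operator combined with the trivial Fourier-side interpolation on $\R^3$, exactly as you do. For \eqref{eq:H1-interpolation} the paper simply cites \cite[Theorem~13.52]{MR3726909}, whereas you supply a one-dimensional Gagliardo--Nirenberg argument; your observation that the additive $\|f\|_{L^2(\Lambda)}$ is genuinely needed (linear functions) is correct and is precisely why the cited result has that form. One minor caution: your first route for \eqref{eq:H1-interpolation} via extension requires an extension that controls $\|\partial_i^2 Ef\|_{L^2(\R^3)}$ by $\|\partial_i^2 f\|_{L^2(\Lambda)}$ (plus lower order), not merely the full $H^2$ norm; on the cube a coordinate-wise even reflection achieves this, but a generic Stein extension does not, so your alternative 1D argument is the cleaner path.
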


\begin{proof}
These are all standard inequalities, but we state them for easy reference. The last inequality \eqref{eq:H1-interpolation} is given in \cite[Theorem~13.52]{MR3726909}.
 
The proofs of the two other inequalities are similar, so we only explicitly consider \eqref{eq:H3/2-interpolation}.
It follows from results on extensions and the similar inequality in ${\mathbb R}^3$, which is simple by the Fourier transform:
Let $E:L^2(\Lambda) \rightarrow L^2({\mathbb R}^3)$ be a total extension operator, the existence of which is guaranteed by \cite[Theorem 5.24]{adams}. In particular, $E$ restricts as a bounded map $H^{s}(\Lambda) \rightarrow H^s({\mathbb R}^3)$ for $s\in \{1 ,2\}$.
We then have, using the interpolation inequality in ${\mathbb R}^3$,
\begin{align*}
\| f \|_{H^{3/2}(\Lambda)} \leq \| Ef \|_{H^{3/2}({\mathbb R}^3)} \leq C  \| Ef \|_{H^{1}({\mathbb R}^3)}^{\frac{1}{2}} \| Ef \|_{H^{2}({\mathbb R}^3)}^{\frac{1}{2}}  \leq C' \| f \|_{H^{1}(\Lambda)}^{\frac{1}{2}} \| f \|_{H^{2}(\Lambda)}^{\frac{1}{2}} ,
\end{align*}
where the last step uses the boundedness of $E$ in $H^s$ for $s \in \{1 ,2\}$ and which finishes the proof of \eqref{eq:H3/2-interpolation}.
\end{proof}

\begin{lemma}\label{lem:Eigenfunctions_Elliptic}
Let $k \in {\mathbb N}$.
The operator $ (-\Delta^{\mathcal N} + d^{-2})^{-k}$ is bounded from $L^2((-\frac{1}{2}, \frac{1}{2})^3)$ to $H^{2k}((-\frac{1}{2}, \frac{1}{2})^3)$. 
More precisely, we have the bound
\begin{align}
\| \partial^{\beta}  (-\Delta^{\mathcal N} + d^{-2})^{-k}  \| \leq C d^{2k-|\beta|},
\end{align}
with the operator norm from $L^2((-\frac{1}{2}, \frac{1}{2})^3)$ to itself and
for all $|\beta| \leq 2k$ and where the constant is independent of $d\leq 1$.
\end{lemma}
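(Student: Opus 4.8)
The plan is to diagonalize the operator explicitly, since on the cube $\Lambda=(-\tfrac12,\tfrac12)^3$ the Neumann Laplacian has a completely explicit eigenbasis. By separation of variables, $-\Delta^{\mathcal N}$ has the orthonormal eigenbasis $\phi_n(x)=\prod_{j=1}^3 e_{n_j}(x_j)$, $n\in\mathbb N_0^3$, where $e_0\equiv1$ and $e_m(t)=\sqrt2\cos(\pi m(t+\tfrac12))$ for $m\ge1$, with $-\Delta^{\mathcal N}\phi_n=\pi^2|n|^2\phi_n$. Consequently $(-\Delta^{\mathcal N}+d^{-2})^{-k}$ acts in this basis as the multiplier $n\mapsto(\pi^2|n|^2+d^{-2})^{-k}$, and the whole statement reduces to an estimate on a Fourier-multiplier acting on coefficient sequences.

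The first step is a bookkeeping observation: differentiation keeps $\phi_n$ inside an orthogonal system. For a fixed multi-index $\beta$ one has $\partial^\beta\phi_n=\bigl(\prod_j(\pi n_j)^{\beta_j}\bigr)\widetilde\phi_n^{\,\beta}$, where in each coordinate $\partial^{\beta_j}e_m$ is $\pm(\pi m)^{\beta_j}$ times either $\pm e_m$ (if $\beta_j$ is even) or $\pm\sqrt2\sin(\pi m(t+\tfrac12))$ (if $\beta_j$ is odd), and the factor vanishes when $n_j=0$ and $\beta_j\ge1$. Since $\{1\}\cup\{\sqrt2\cos(\pi m(t+\tfrac12))\}_{m\ge1}$ and $\{\sqrt2\sin(\pi m(t+\tfrac12))\}_{m\ge1}$ are each orthonormal systems on $(-\tfrac12,\tfrac12)$, for $\beta$ fixed the functions $\widetilde\phi_n^{\,\beta}$ (as $n$ varies) are mutually orthogonal with $\|\widetilde\phi_n^{\,\beta}\|\le1$.

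Given $f=\sum_n c_n\phi_n\in L^2(\Lambda)$, the partial sums of $\sum_n(\pi^2|n|^2+d^{-2})^{-k}c_n\phi_n$ and of their $\partial^\beta$-derivatives converge in $L^2(\Lambda)$ for every $|\beta|\le2k$ (the coefficient weights $\prod_j(\pi n_j)^{\beta_j}(\pi^2|n|^2+d^{-2})^{-k}$ are bounded), so $u:=(-\Delta^{\mathcal N}+d^{-2})^{-k}f\in H^{2k}(\Lambda)$ with $\partial^\beta u=\sum_n\frac{\prod_j(\pi n_j)^{\beta_j}}{(\pi^2|n|^2+d^{-2})^{k}}c_n\,\widetilde\phi_n^{\,\beta}$. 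By Bessel's inequality together with $\prod_j(\pi n_j)^{2\beta_j}\le(\pi^2|n|^2)^{|\beta|}\le(\pi^2|n|^2+d^{-2})^{|\beta|}$,
\[
\|\partial^\beta u\|^2\le\sum_n|c_n|^2\,(\pi^2|n|^2+d^{-2})^{|\beta|-2k}\le d^{2(2k-|\beta|)}\sum_n|c_n|^2 ,
\]
where the last inequality uses $|\beta|-2k\le0$ and $d^{-2}\ge1$. Taking square roots gives $\|\partial^\beta(-\Delta^{\mathcal N}+d^{-2})^{-k}\|\le d^{2k-|\beta|}$ for each $|\beta|\le2k$, and summing over the finitely many such multi-indices yields boundedness from $L^2(\Lambda)$ into $H^{2k}(\Lambda)$ with a constant depending only on $k$ (indeed on nothing else, as claimed).

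There is no real obstacle here; the only mildly delicate points are the orthogonality accounting for the differentiated basis functions (the coordinatewise switch between the cosine and sine systems according to the parity of $\beta_j$) and the standard passage from $L^2$-convergence of the termwise-differentiated expansion to genuine $H^{2k}$-regularity of $u$, both of which are routine once the explicit eigenbasis is in hand.
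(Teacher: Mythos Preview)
Your proof is correct and follows exactly the approach the paper indicates: the paper's proof consists of the single sentence ``This is an easy consequence of writing the operator in terms of its basis of explicit eigenfunctions and we leave the details to the reader,'' and you have supplied precisely those details. Your orthogonality bookkeeping for the differentiated basis and the multiplier estimate $\prod_j(\pi n_j)^{2\beta_j}\le(\pi^2|n|^2+d^{-2})^{|\beta|}$ are the routine points the paper leaves implicit.
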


\begin{proof}
This is an easy consequence of writing the operator in terms of its basis of explicit eigenfunctions and we leave the details to the reader.
\end{proof}

\begin{remark}
Lemma~\ref{lem:EllipticReg_bdry} below is a kind of elliptic regularity result.
Notice that for non-smooth domains, such as the box $(-\frac{1}{2}, \frac{1}{2})^3$, the Neumann resolvent $(-\Delta^{\mathcal N} + 1)^{-1}$ is in general not a bounded map from $H^1$ to $H^3$ or between other high order Sobolev spaces (see e.g. \cite{MR961439}). Therefore, one cannot immediately propagate smoothness through (powers of) the resolvent. 
However, this is a problem near the boundary only, which we can circumvent, since in our application we only need regularity after multiplying by a function $f$ that vanishes to high order on the boundary. Lemma~\ref{lem:EllipticReg_bdry} below states that in that case the vanishing on the boundary of $f$ cancels the possible explosion near the boundary of the resolvent.
\end{remark}

\begin{lemma}\label{lem:EllipticReg_bdry}
Let $m \in {\mathbb N}_0$ and $k \in {\mathbb N}$.
Suppose that $f \in C^m({\mathbb R}^3)$, that $\supp f \subseteq [-\frac{1}{2}, \frac{1}{2}]^3$, and let $\Lambda := (-\frac{1}{2}, \frac{1}{2})^3$.
 Suppose that $d\leq 1$ and $d^{-2} \leq K_H''$.

Then, the operator $f (-\Delta^{\mathcal N} + d^{-2})^{-k} \theta \one_{[0,K_H'']}(|p|) $ is bounded from $L^2({\mathbb R}^3)$ to $H^m(\Lambda)$ with
\begin{align}
\left\| f (-\Delta^{\mathcal N} + d^{-2})^{-k} \theta \one_{[0,K_H'']}(|p|)  \right\|_{{\mathcal B}(L^2({\mathbb R}^3),H^m(\Lambda))}
\leq C  \left(  \max_{\{|\alpha| \leq m\}} \| \partial^{\alpha} f \|_{\infty}\right) 
(K_H'')^{[m-2k]_{+}}.
\end{align}
\end{lemma}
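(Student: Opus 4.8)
\textbf{Proof plan for Lemma~\ref{lem:EllipticReg_bdry}.}

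The plan is to prove the bound by induction on $m$, peeling off one order of Sobolev regularity at a time using the elliptic equation satisfied by the resolvent, and exploiting the vanishing of $f$ on $\partial\Lambda$ to kill the boundary terms that would otherwise obstruct propagation of regularity for the Neumann resolvent on a non-smooth domain.

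First, for the base case $m=0$ (or more generally whenever $m\leq 2k$): one simply writes $f(-\Delta^{\mathcal N}+d^{-2})^{-k}\theta\one_{[0,K_H'']}(|p|)$ and bounds $\| f\|_\infty$ times the operator norm of $(-\Delta^{\mathcal N}+d^{-2})^{-k}$ on $L^2(\Lambda)$, which is at most $d^{2k}\leq 1\leq (K_H'')^{[m-2k]_+}$ since $d^{-2}\leq K_H''$. More care is needed when $m$ is small but positive and $m\leq 2k$: here one uses Lemma~\ref{lem:Eigenfunctions_Elliptic}, which gives $\|\partial^\beta(-\Delta^{\mathcal N}+d^{-2})^{-k}\|\leq Cd^{2k-|\beta|}$ for $|\beta|\leq 2k$, and Leibniz's rule applied to $\partial^\alpha(f\cdot g)$ with $g$ in the range of the resolvent, together with $d^{2k-j}\leq d^{2k-m}\leq (d^{-2})^{m-2k}$... wait, more carefully: $d^{2k-j}$ with $j\leq m\leq 2k$ satisfies $d^{2k-j}\leq 1$, which is the claimed bound since $[m-2k]_+=0$.

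The inductive step is the heart of the matter. Suppose the claim holds for all exponents $\leq m-1$ and for all $k$; we prove it for $m$ (with $m>2k$, say, the remaining case being covered above). Let $g := (-\Delta^{\mathcal N}+d^{-2})^{-k}\theta\one_{[0,K_H'']}(|p|)\psi$ for $\psi\in L^2({\mathbb R}^3)$. Write the identity $(-\Delta^{\mathcal N}+d^{-2})^{-k} = (-\Delta^{\mathcal N}+d^{-2})^{-1}(-\Delta^{\mathcal N}+d^{-2})^{-(k-1)}$, set $h:=(-\Delta^{\mathcal N}+d^{-2})^{-(k-1)}\theta\one_{[0,K_H'']}(|p|)\psi$, so that $g=(-\Delta^{\mathcal N}+d^{-2})^{-1}h$ and $(-\Delta+d^{-2})g = h$ in the interior of $\Lambda$ with Neumann boundary conditions. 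Now $fg\in H^{m}(\Lambda)$ is to be estimated; I would write $\|fg\|_{H^m(\Lambda)}\leq C(\|fg\|_{L^2}+\sum_i\|\partial_i^2(fg)\|_{H^{m-2}(\Lambda)})$ using that second derivatives control $H^m$ from $H^{m-2}$ on a cube. Expanding $\partial_i^2(fg)= (\partial_i^2 f)g + 2(\partial_i f)(\partial_i g) + f\partial_i^2 g$, the first term is handled by the inductive hypothesis applied to $(\partial_i^2 f)g$ with exponent $m-2$; the last term $f\Delta g = f(d^{-2}g - h) = d^{-2}fg - fh$, where $d^{-2}fg$ is handled by the inductive hypothesis (exponent $m-2$, noting $d^{-2}\leq K_H''$ so the extra factor $d^{-2}$ is absorbed into one power of $K_H''$, matching $[m-2k]_+=[(m-2)-2k]_++ \text{possible }+1$... this needs care and is the place to be attentive about the exponent bookkeeping), and $fh$ is handled by the inductive hypothesis with $k-1$ in place of $k$. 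The cross terms $(\partial_i f)(\partial_i g)$ require writing $\partial_i g$, which costs one derivative; here I would split $\partial_i f = f_1 \tilde f$ for suitable $f_1,\tilde f\in C^{m-1}$ supported in the cube (or just observe $\partial_i f$ again vanishes on the boundary since $f$ vanishes to order $\geq 1$ there when $m\geq 1$, hence $\partial_i f$ is itself an admissible cutoff) and apply the inductive hypothesis in the form that controls $\|(\partial_i f)\,\partial_i g\|_{H^{m-1}}$ — this is the subtle point, because $\partial_i g$ is not directly of the form "cutoff times resolvent applied to something", so one instead estimates $\|(\partial_i f) g\|_{H^{m}}$ directly and uses that the $H^m$ norm of a product with a boundary-vanishing cutoff dominates the $H^{m-1}$ norm of the cross term. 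Finally $\|fg\|_{L^2}\leq \|f\|_\infty\|g\|_{L^2}\leq C\|f\|_\infty d^{2k}$, trivially within the bound.

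\textbf{Main obstacle.} The genuine difficulty — as flagged in the remark preceding the lemma — is that the Neumann resolvent on the cube does \emph{not} map $H^s\to H^{s+2}$ globally, so one cannot naively bootstrap. The plan circumvents this by never differentiating $g$ more than the cutoff $f$ (and its derivatives, which still vanish on $\partial\Lambda$) allows: every time a derivative would hit $g$ alone, the accompanying factor is a derivative of $f$, which is again a high-order-vanishing cutoff, so the induction closes within the class of estimates being proved. Making this precise requires (i) choosing the factorization/splitting of derivatives of $f$ so that at each stage one has an honest cutoff (using $\supp f\subseteq[-1/2,1/2]^3$ and $f\in C^m$, so $\partial^\alpha f$ for $|\alpha|<m$ vanishes on $\partial\Lambda$ to order $\geq m-|\alpha|\geq 1$), and (ii) careful bookkeeping of the powers of $K_H''$: each application of $\Delta g = d^{-2}g-h$ trades a factor $d^{-2}\leq K_H''$ for reducing $k$ by one or staying at $k$ while reducing $m$ by two, and one must check that the accumulated powers of $K_H''$ never exceed $[m-2k]_+$. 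The cleanest way to organize (ii) is to prove the slightly more flexible statement that for any $j\in{\mathbb N}_0$, $\|\partial^\beta\big(f(-\Delta^{\mathcal N}+d^{-2})^{-k}\theta\one_{[0,K_H'']}(|p|)\big)\|_{{\mathcal B}(L^2,L^2)}\leq C(\max_{|\alpha|\leq|\beta|}\|\partial^\alpha f\|_\infty)(K_H'')^{[|\beta|-2k]_+}$ for $|\beta|\leq m$, and induct on $|\beta|$; the $H^m(\Lambda)$ bound follows by summing over $|\beta|\leq m$.
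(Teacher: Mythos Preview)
Your overall strategy—induction on $m$, using the elliptic equation, exploiting that derivatives of $f$ remain admissible cutoffs—matches the paper's. But the inductive step as written has a real gap.

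The claimed inequality $\|fg\|_{H^m(\Lambda)}\leq C(\|fg\|_{L^2}+\sum_i\|\partial_i^2(fg)\|_{H^{m-2}(\Lambda)})$ is not standard on the cube; worse, after the Leibniz expansion you need the \emph{individual} $\|f\partial_i^2 g\|_{H^{m-2}}$, whereas the equation only gives you the \emph{sum} $f\Delta g=f(d^{-2}g-h)$. You silently conflate these when you write ``the last term $f\Delta g$''. If instead you try the correct-on-$\mathbb{R}^3$ inequality $\|u\|_{H^m}\lesssim\|u\|_{L^2}+\|\Delta u\|_{H^{m-2}}$, you must first know that $u=fg$ (extended by zero) lies in $H^m(\mathbb{R}^3)$—but that is exactly the conclusion, and it is not a priori true, since $g$ may blow up near $\partial\Lambda$ and you have no rate. (Your cross-term argument is also muddled—$(\partial_i f)$ is only $C^{m-1}$ so the induction gives $H^{m-1}$, not $H^m$—though this is easily repaired via $(\partial_i f)(\partial_i g)=\partial_i[(\partial_i f)g]-(\partial_i^2 f)g$.)

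The paper closes these gaps by (i) proving the stronger statement $\|f\,\partial^\beta g\|_{L^2}\leq C(\max_{|\alpha|\leq m}\|\partial^\alpha f\|_\infty)(K_H'')^{[|\beta|-2k]_+}$ for all $|\beta|\leq m+2k$, with the derivative placed on $g$ rather than on $fg$—this is the genuinely useful strengthening, not your ``flexible statement'', which is just the lemma restated; (ii) introducing an interior cutoff $h_T\in C_c^\infty(\Lambda)$ and proving bounds uniform in $T$ via monotone convergence, which substitutes for the missing a priori $H^m$-membership; (iii) using the factorization trick $s_\delta(x)=\prod_j\cos(\pi x_j)^{\delta_j}$ so that $(s_\delta\partial^\delta h_T)$ is bounded uniformly in $T$ while $s_\delta^{-1}\partial^\eta f$ is again an admissible $C^{m-|\delta|-|\eta|}$ cutoff; and (iv) pulling $\partial^{\beta''}$ (with $|\beta''|=2k$) through the resolvent via the identity
\[
\partial^{\beta''}h_Tf\partial^{\beta'}R^k=(\partial^{\beta''}R^k)\,h_Tf\partial^{\beta'}+\partial^{\beta''}R^k\bigl[(-\Delta+d^{-2})^k,\,h_Tf\partial^{\beta'}\bigr]R^k,
\]
where the first term is handled by Lemma~\ref{lem:Eigenfunctions_Elliptic} and band-limitedness of $\one_{[0,K_H'']}(|p|)\psi$, and the commutator is a differential operator of order $2k-1+|\beta'|=m_0$ with coefficients that are derivatives of $h_Tf$, hence controlled by the induction hypothesis.
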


\begin{proof}
We will actually prove the slightly stronger statement that for all $|\beta| \leq m + 2k$, the expression
\begin{align}
A_{f,\beta} := f \partial^{\beta}  (-\Delta^{\mathcal N} + d^{-2})^{-k} \theta \one_{[0,K_H'']}(|p|),
\end{align}
defines a bounded map from $L^2({\mathbb R}^3)$ to itself with
\begin{align}\label{eq:InductionBound}
\| A_{f,\beta}  \| \leq C \left(  \max_{\{|\alpha| \leq m\}} \| \partial^{\alpha} f \|_{\infty}\right) (K_H'')^{[|\beta|-2k]_{+}}.
\end{align}
From this we easily get the statement of the lemma using Leibniz' rule.

We will use the following elementary consideration in the proof. For a multi-index $\beta = (\beta_1,\beta_2,\beta_3) \in {\mathbb N}_0^3$, we define
\begin{align}\label{eq:NearBdry}
s_{\beta}(x) = s_{\beta}(x_1,x_2,x_3):= \prod_{j=1}^3 \cos(\pi x_j)^{\beta_j}.
\end{align}
If $f$ satisfies the assumptions of the lemma and $|\beta| \leq m-1$, then the quotient $s_{\beta}^{-1} f$ defines a function which vanishes outside $(-\frac{1}{2}, \frac{1}{2})^3$ and has $m-|\beta|$ continuous derivatives (using e.g. Taylor's formula).

We will prove \eqref{eq:InductionBound} by induction in $m$ for fixed $k$.
For $m=0$ there is nothing to prove by Lemma~\ref{lem:Eigenfunctions_Elliptic}.

Suppose the statement is proved for all $m \leq m_0$ for some $m_0 \geq 0$.
Let $f$ have $m_0+1$ continuous derivatives vanishing on the boundary of the box.
Let $|\beta| \leq  m_0+1 + 2k$ be a multiindex. We may assume that $|\beta| = m_0+1 + 2k$, since otherwise the desired result follows by the induction hypothesis. Decompose $\beta = \beta' + \beta''$, for some multiindices $\beta',\beta''$ with $|\beta''| = 2k$.
Let $\varphi \in L^2({\mathbb R}^3)$ be normalized. Clearly, $\varphi':= \theta \one_{[0,K_H'']}(|p|)  \varphi \in C^{\infty}((-\frac{1}{2}, \frac{1}{2})^3)$.
So by {\it interior} elliptic regularity $(-\Delta^{\mathcal N} + d^{-2})^{-k}  \varphi' \in C^{\infty}((-\frac{1}{2}, \frac{1}{2})^3)$.
Therefore, we only have to prove the suitable bound on the $L^2$-norm on $f \partial^{\beta} (-\Delta^{\mathcal N} + d^{-2})^{-k}  \varphi'$, not the existence of the derivatives.

Let $h \in C^{\infty}({\mathbb R})$ be non-decreasing and satisfy that $h(t) =0$ for all $t\leq \frac{1}{2}$ and $h(t) = 1$ for all $t\geq 1$.
For $T>0$, define 
\begin{align}
h_T(x) = \prod_{j=1}^3 h\left((x_j+\frac{1}{2})T\right)  h\left((\frac{1}{2}-x_j)T\right).
\end{align}
Then, $h_T \in C^{\infty}_0((-\frac{1}{2}, \frac{1}{2})^3)$ and $h_T\equiv 1$ in the box except on a $T^{-1}$-neighborhood of the boundary.
By monotone convergence it suffices to prove the $L^2$-bound on $h_T A_{f,\beta} \varphi$, uniformly in $T\geq 2$.

We will start by proving, using the induction hypothesis, that for all $|\tilde{\beta}|\leq 2k$, and with $C$ independent of $T$,
\begin{align}\label{eq:Leibnitz1}
\left\| \left[ \partial^{\tilde{\beta}}, h_T f\right] \partial^{\beta'}  (-\Delta^{\mathcal N} + d^{-2})^{-k}  \varphi' \right\|
\leq C \left(  \max_{\{|\alpha| \leq m\}} \| \partial^{\alpha} f \|_{\infty}\right) (K_H'')^{[|\beta'|+|\tilde{\beta}|-2k-1]_{+}}.
\end{align}
To prove \eqref{eq:Leibnitz1}
notice first that, by Leibniz' formula, for some constants $c_{\gamma,\delta,\eta}$ and introducing the functions $s_{\delta}$ as defined in \eqref{eq:NearBdry},
\begin{align}
h_T f \partial^{\tilde{\beta}} - \partial^{\tilde{\beta}}  h_T f = \sum_{\gamma+\delta+\eta = \tilde{\beta}, \gamma < \tilde{\beta}} c_{\gamma,\delta,\eta} (s_{\delta} \partial^{\delta} h_T ) (s_{\delta}^{-1} \partial^{\eta} f) \partial^{\gamma}.
\end{align}
Here the function $(s_{\delta} \partial^{\delta} h_T )$ is uniformly bounded in $T$ and the function $(s_{\delta}^{-1} \partial^{\eta} f)$ satisfies the assumptions of the lemma with $m=m_0+1-|\eta|-|\delta|$.
Notice that $|\gamma+\beta'| = |\beta'| +|\tilde{\beta}|-|\delta+\eta| \leq  m_0 +1 + 2k -|\delta| -|\gamma|$.
Due to the sharp inequality in the summation over multiindices, this implies that we can apply the induction hypothesis termwise when the sum acts on $\partial^{\beta'}  (-\Delta^{\mathcal N} + d^{-2})^{-k}  \varphi'$, resulting in \eqref{eq:Leibnitz1}.

Using \eqref{eq:Leibnitz1}, we may write
\begin{align}
h_T A_{f,\beta} \varphi = \partial^{\beta''} h_T f \partial^{\beta'}  (-\Delta^{\mathcal N} + d^{-2})^{-k}  \varphi'
+ \psi'_T,
\end{align}
where $\psi'_T$ satisfies 
\begin{align}\label{eq:FirstLeibnitz}
\| \psi_T' \| \leq C \left(  \max_{\{|\alpha| \leq m\}} \| \partial^{\alpha} f \|_{\infty}\right) (K_H'')^{[|\beta|-2k-1]_{+}}.
\end{align}
So it suffices to prove that (uniformly in $T$),
\begin{align}
\| \partial^{\beta''} h_T f \partial^{\beta'}  (-\Delta^{\mathcal N} + d^{-2})^{-k}  \varphi' \| \leq C \left(  \max_{\{|\alpha| \leq m\}} \| \partial^{\alpha} f \|_{\infty}\right) (K_H'')^{[|\beta|-2k]_{+}}.
\end{align}
For this we use the commutator formula, 
\begin{align}\label{eq:ResolventIdentity}
&\partial^{\beta''} h_T f \partial^{\beta'}   (-\Delta^{\mathcal N} + d^{-2})^{-k}  \varphi' \nonumber \\
&\quad=
\left( \partial^{\beta''}   (-\Delta^{\mathcal N} + d^{-2})^{-k}\right)   h_T f \partial^{\beta'} \varphi' \nonumber \\
&\qquad + \partial^{\beta''}  (-\Delta^{\mathcal N} + d^{-2})^{-k}  [ (-\Delta^{\mathcal N} + d^{-2})^k , h_T f \partial^{\beta'}]  (-\Delta^{\mathcal N} + d^{-2})^{-k}  \varphi' .
\end{align}
Here we can easily estimate the first term on the right as follows, using the presence of $h_T$ to remove the localization $\theta$ in $\varphi'$,
\begin{align}
\|  \partial^{\beta''}   (-\Delta^{\mathcal N} + d^{-2})^{-k}   h_T f \partial^{\beta'} \varphi' \|
&\leq \| \partial^{\beta''}   (-\Delta^{\mathcal N} + d^{-2})^{-k} \| \| h_T f \partial^{\beta'}   \one_{[0,K_H'']}(|p|)  \varphi  \| \nonumber \\
&\leq  C (K_H'')^{|\beta'|},
\end{align}
with $C$ independent of $T$, where we used Lemma~\ref{lem:Eigenfunctions_Elliptic} in the last step.

To estimate the commutator term in \eqref{eq:ResolventIdentity} we notice that due to the presence of $h_T$, the boundary conditions are automatically satisfied and
$$
[ (-\Delta^{\mathcal N} + d^{-2})^k , h_T f \partial^{\beta'}] = [ (-\Delta + d^{-2})^k , h_T f \partial^{\beta'} ]
=  [ (-\Delta + d^{-2})^k , h_T f ]\partial^{\beta'} 
$$ 
is just a differential operator of order $2k-1+|\beta'| = m_0$.
Now we can use \eqref{eq:Leibnitz1} and the assumption $d^{-2} \leq K_H''$ to finish the proof.
\end{proof}

\subsection{Proof of Lemma~\ref{lem:LowMomentaOperator}}

Recall that $R$ is the radius of the support of the potential $v$ and that $\varepsilon_N$ is the parameter defined in  \eqref{eq:Cond_epsilonN} involved in the definition of the kinetic energy 
${\mathcal T}$ from  \eqref{eq:DefT_new}.

\begin{proposition}\label{prop:T-T2}
Suppose that Assumption~\ref{assump:params} is satisfied and $\rmu a^3$ is sufficiently small.
Let ${\mathcal T}$ be as defined in \eqref{eq:DefT_new}.
Then,
\begin{align}\label{eq:LowerT}
{\mathcal T} \geq \varepsilon_N (-\Delta^{\mathcal N}) + \frac{1}{4} Q \chi_{\Lambda} (-\Delta) \chi_{\Lambda} Q - C \varepsilon_T (ds\ell)^{-2},
\end{align}
and
\begin{align}\label{eq:LowerT2}
{\mathcal T}^2& \geq \frac{1}{2} \varepsilon_N^2 (-\Delta^{\mathcal N})^2 + \frac{1}{4}\varepsilon_N Q \chi_{\Lambda} |p|^4 \chi_{\Lambda} Q + \frac{1}{2}(1-\varepsilon_N)^2 ( Q \chi_{\Lambda} \tau(p) \chi_{\Lambda} Q)^2
 - C  \varepsilon_T^2 (ds\ell)^{-4}.
\end{align}
\end{proposition}

\begin{proof}
Notice first that ${\mathcal T}$ is self-adjoint with domain given by ${\mathcal D}(-\Delta^{\mathcal N})$ and quadratic form domain $H^1(\Lambda)$.

We write the kinetic energy as
\begin{align}\label{eq:LowerTprim}
      {\mathcal T} :=
       \varepsilon_N (-\Delta^{\mathcal N})+ (1-\varepsilon_N) Q \chi_{\Lambda} \tau(p) \chi_{\Lambda} Q + A'',
\end{align}
where
\begin{align}  
       A'':= (1-\varepsilon_N) 
       \Big\{
      \frac12 \varepsilon_T (d \ell)^{-2} \frac{-\Delta^{\mathcal N}}{-\Delta^{\mathcal N}+(d\ell)^{-2}} 
      + b \ell^{-2} Q 
      + b\varepsilon_T (d\ell)^{-2} Q\one_{(d^{-2}\ell^{-1},\infty)}(\sqrt{-\Delta})Q\Big\},
\end{align}
is bounded and $\tau(p)$ from \eqref{eq:Def_tau}.

Notice that
\begin{align}\label{eq:tau_p2}
0\leq p^2 - \tau(p) \leq |p| (s\ell)^{-1} \left( 1-\varepsilon_t+\varepsilon_T d^{-1}\right) + \frac{1}{4 (s\ell)^2} \left(1 -\varepsilon_T + \varepsilon_T d^{-2}\right).
\end{align}

Clearly $A''$ is positive, so using \eqref{con:eTdK} and applying a simple bound on $\tau(p)$, we find \eqref{eq:LowerT}.

We next calculate ${\mathcal T} ^2$. Using a Cauchy-Schwarz inequality, and that $\chi_{\Lambda}$ vanishes on $\partial \Lambda$, we see that
\begin{align*}
{\mathcal T}^2& \geq \frac{1}{2} \varepsilon_N^2 (-\Delta^{\mathcal N})^2 + \frac{1}{2}(1-\varepsilon_N)^2 ( Q \chi_{\Lambda} \tau(p) \chi_{\Lambda} Q)^2 \\
&\quad+ \varepsilon_N (1-\varepsilon_N) Q\left(2 \chi_{\Lambda} p^2 \tau(p) \chi_{\Lambda} + [-\Delta^{\mathcal N},\chi_{\Lambda}] \tau(p) \chi_{\Lambda} - 
\chi_{\Lambda} \tau(p)  [-\Delta^{\mathcal N},\chi_{\Lambda}] 
\right)Q
-C( A'')^2.
\end{align*}
We continue the calculation on the commutators,
\begin{align}
 [-\Delta^{\mathcal N},\chi_{\Lambda}] \tau(p) \chi_{\Lambda} - 
\chi_{\Lambda} \tau(p)  [-\Delta^{\mathcal N},\chi_{\Lambda}] 
&= {\mathcal C}_1 + {\mathcal C}_2,
\end{align}
with
\begin{align}
{\mathcal C}_1&:= \left((\Delta \chi_{\Lambda}) \tau(p) \chi_{\Lambda} + \chi_{\Lambda} \tau(p) (\Delta \chi_{\Lambda}) \right), \nonumber \\
 {\mathcal C}_2&:= 2(-i) \sum_{j=1}^3 \left( (\partial_{x_j} \chi_{\Lambda}) p_j \tau(p) \chi_{\Lambda} - \chi_{\Lambda} p_j \tau(p) (\partial_{x_j} \chi_{\Lambda}) \right).
\end{align}
We will prove that 
\begin{align}\label{eq:BoundCommutators}
\pm \left( [-\Delta^{\mathcal N},\chi_{\Lambda}] \tau(p) \chi_{\Lambda} - 
\chi_{\Lambda} \tau(p)  [-\Delta^{\mathcal N},\chi_{\Lambda}]  \right) \leq \chi_{\Lambda} p^4 \chi_{\Lambda} + C (s\ell)^{-4} (\varepsilon_T d^{-2})^2.
\end{align}
Recall for later use that $s \ll 1$ and $\varepsilon_T d^{-2} \gg 1$ by \eqref{con:eTdK}.

By Cauchy-Schwarz and $0 \leq \tau(p) \leq p^2$, we find
\begin{align}
{\mathcal C}_1 \leq \frac{1}{2} \chi_{\Lambda} p^2 \tau(p) \chi_{\Lambda} + C \ell^{-4},
\end{align}
so \eqref{eq:BoundCommutators} is valid for ${\mathcal C}_1$.

We rewrite $ {\mathcal C}_2$ as
\begin{equation}
 {\mathcal C}_2 =  {\mathcal C}_{2,1} +  {\mathcal C}_{2,2},
\end{equation}
with
\begin{align}
{\mathcal C}_{2,1} &:= 2(-i) \sum_{j=1}^3 \left( (\partial_{x_j} \chi_{\Lambda}) p_j (-\Delta) \chi_{\Lambda} - \chi_{\Lambda} p_j (-\Delta) (\partial_{x_j} \chi_{\Lambda}) \right), \nonumber \\
{\mathcal C}_{2,2}  &:= 2(-i) \sum_{j=1}^3 \left( (\partial_{x_j} \chi_{\Lambda}) p_j (\tau(p)-p^2) \chi_{\Lambda} - \chi_{\Lambda} p_j (\tau(p)-p^2) (\partial_{x_j} \chi_{\Lambda}) \right).
\end{align}
On ${\mathcal C}_{2,1}$ we commute again, to get
\begin{align}
{\mathcal C}_{2,1} &= 2(-i) \sum_{j=1}^3\left(  [ \partial_{x_j} \chi_{\Lambda}, p_j (-\Delta)] \chi_{\Lambda} + [ p_j (-\Delta), \chi_{\Lambda}] \partial_{x_j} \chi_{\Lambda} \right) \nonumber \\
&= 2 \sum_{j,k} \left(  [ (\partial_{x_j} \chi_{\Lambda}), \partial_j \partial_k^2] \chi_{\Lambda} + [ \partial_j \partial_k^2, \chi_{\Lambda}] (\partial_{x_j} \chi_{\Lambda}) \right).
\end{align}
At this point we use that $\chi$ can be written as $\chi = f^2$, where $f$ has $M/2$ bounded derivatives. Therefore, with $f_{\Lambda}(x):= f(x/\ell)$, we find $(\partial_j \chi_{\Lambda}) = 2 f_{\Lambda} (\partial_j f_{\Lambda})$, so we can organize the terms as
\begin{equation}
{\mathcal C}_{2,1} = \sum_{j,k} (f_{\Lambda}^2 p_j p_k R^{(0)}_{j,k} + h.c) + ( f_{\Lambda} p_j R^{(1)}_j + h.c.) + R^{(2)},
\end{equation}
where $|R^{(i)}| \leq C \ell^{-i-2}$.
Therefore, using a standard IMS-like commutator formula in the second step,
\begin{align}
\pm {\mathcal C}_{2,1} &\leq \sum_{j,k} \frac{1}{100} f_{\Lambda}^2 p_j^2 p_k^2 f_{\Lambda}^2 + \ell^{-2} \sum_{j=1}^3 f_{\Lambda}p_j^2 f _{\Lambda}+ C \ell^{-4} \nonumber \\
&= \frac{1}{100} \chi_{\Lambda} p^4 \chi_{\Lambda} + \ell^{-2} \sum_{j=1}^3 (f_{\Lambda}^2 p_j^2 + p_j^2 f_{\Lambda}^2 - (\partial_j f_{\Lambda})^2) + C \ell^{-4} \nonumber \\
&\leq \frac{1}{50} \chi_{\Lambda} p^4 \chi_{\Lambda} + C' \ell^{-4}.
\end{align}
So ${\mathcal C}_{2,1} $ is also compatible with \eqref{eq:BoundCommutators}.

To finally estimate ${\mathcal C}_{2,2} $ we use \eqref{eq:tau_p2}.
Therefore, by repeated Cauchy-Schwarz, and that $\varepsilon_T d^{-2} \gg 1$ by \eqref{con:eTdK}, we see that
\begin{align}
\pm {\mathcal C}_{2,2} &\leq \frac{1}{100} \chi_{\Lambda} p^4 \chi_{\Lambda} + C (s\ell)^{-2} (1+ \varepsilon_T d^{-1})^2 \ell^{-2}
+ \ell^{-2} \frac{1}{100} \chi_{\Lambda} p^2 \chi_{\Lambda} + C (s\ell)^{-4}  (1+ \varepsilon_T d^{-2})^2 \nonumber \\
&\leq \frac{1}{50} \chi_{\Lambda} p^4 \chi_{\Lambda} + C (s\ell)^{-4} (\varepsilon_T d^{-2})^2,
\end{align}
in agreement with \eqref{eq:BoundCommutators}.

Therefore, inserting \eqref{eq:BoundCommutators} and using again \eqref{eq:tau_p2}, we find \eqref{eq:LowerT2}.
This finishes the proof of Proposition~\ref{prop:T-T2}.
\end{proof}

From Proposition~\ref{prop:T-T2} and the definition of $Q_L'$ we immediately get the following estimates, since $Q\varphi = Q_L' \varphi = \varphi$ for $\varphi \in  \Ran Q_L'$. 

\begin{corollary}\label{cor:Derivatives}
Suppose that Assumption~\ref{assump:params} is satisfied and $\rmu a^3$ is sufficiently small.
If $\varphi \in \Ran Q_L'$ and is normalized in $L^2$, then
\begin{equation}\label{eq:DerivsPhiTerms}
\| \nabla \varphi \|_2 \leq C \varepsilon_N^{-1/2} K_H' \ell^{-1},\qquad
\| (-\Delta^{\mathcal N}) \varphi \|_2 \leq C \varepsilon_N^{-1} (K_H' \ell^{-1})^2,
\end{equation}
and
\begin{equation}
\label{eq:DerivsChiTerms}
\| \nabla (\chi_{\Lambda} \varphi) \|_2 \leq C  K_H' \ell^{-1},\qquad
\| \Delta(\chi_{\Lambda} \varphi )\|_2 \leq C \varepsilon_N^{-1/2} (K_H' \ell^{-1})^2.
\end{equation}
\end{corollary}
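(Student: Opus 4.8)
The plan is to read off all four bounds directly from Proposition~\ref{prop:T-T2}. The only input about $\varphi$ that is needed is that $\varphi\in\Ran Q_L'$ forces $Q\varphi=\varphi$ and, by the definition \eqref{eq:Def_QL'} of $Q_L'$ as a spectral projection of $\mathcal T$, the two a priori bounds $\langle\varphi,\mathcal T\varphi\rangle\le (K_H'\ell^{-1})^2$ and $\langle\varphi,\mathcal T^2\varphi\rangle\le (K_H'\ell^{-1})^4$; moreover, since $\varphi$ lies in the range of a \emph{bounded} spectral projection of $\mathcal T$, it lies in $\mathcal{D}(\mathcal T^m)$ for every $m$, in particular in $\mathcal{D}(-\Delta^{\mathcal N})\subseteq H^2(\Lambda)$.

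First I would take expectations of \eqref{eq:LowerT} in the normalized state $\varphi$. Using $Q\varphi=\varphi$, the identities $\langle\varphi,-\Delta^{\mathcal N}\varphi\rangle=\|\nabla\varphi\|_2^2$ and $\langle\varphi,Q\chi_\Lambda(-\Delta)\chi_\Lambda Q\varphi\rangle=\|\nabla(\chi_\Lambda\varphi)\|_2^2$ (the latter because $\chi_\Lambda\varphi$, extended by zero, belongs to $H^1(\mathbb{R}^3)$, as $\chi$ vanishes on $\partial\Lambda$), one gets
\[
\varepsilon_N\|\nabla\varphi\|_2^2+\tfrac14\|\nabla(\chi_\Lambda\varphi)\|_2^2\le \langle\varphi,\mathcal T\varphi\rangle+C\varepsilon_T(ds\ell)^{-2}\le (K_H'\ell^{-1})^2+C\varepsilon_T(ds\ell)^{-2}.
\]
By \eqref{cond:newKH'} we have $\varepsilon_T(ds\ell)^{-2}=\varepsilon_T(ds)^{-2}\ell^{-2}\ll (K_H')^2\ell^{-2}$, so the error term is dominated by the leading one; since both terms on the left are nonnegative, this yields the first bound in \eqref{eq:DerivsPhiTerms} and the first bound in \eqref{eq:DerivsChiTerms}.

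Next I would repeat the same computation with \eqref{eq:LowerT2} and $\mathcal T^2$, dropping the positive term $\tfrac12(1-\varepsilon_N)^2(Q\chi_\Lambda\tau(p)\chi_\Lambda Q)^2$. Using $Q\varphi=\varphi$, $\langle\varphi,(-\Delta^{\mathcal N})^2\varphi\rangle=\|(-\Delta^{\mathcal N})\varphi\|_2^2$, and $\langle\varphi,Q\chi_\Lambda|p|^4\chi_\Lambda Q\varphi\rangle=\|\Delta(\chi_\Lambda\varphi)\|_2^2$ (where $\chi_\Lambda\varphi$, extended by zero, lies in $H^2(\mathbb{R}^3)$ because $\chi$ and $\nabla\chi$ vanish on $\partial\Lambda$ while $\varphi\in H^2(\Lambda)$), one obtains
\[
\tfrac12\varepsilon_N^2\|(-\Delta^{\mathcal N})\varphi\|_2^2+\tfrac14\varepsilon_N\|\Delta(\chi_\Lambda\varphi)\|_2^2\le \langle\varphi,\mathcal T^2\varphi\rangle+C\varepsilon_T^2(ds\ell)^{-4}\le (K_H'\ell^{-1})^4+C\varepsilon_T^2(ds\ell)^{-4}.
\]
Squaring \eqref{cond:newKH'} shows $\varepsilon_T^2(ds\ell)^{-4}\ll (K_H')^4\ell^{-4}$, and the two remaining bounds $\|(-\Delta^{\mathcal N})\varphi\|_2\le C\varepsilon_N^{-1}(K_H'\ell^{-1})^2$ in \eqref{eq:DerivsPhiTerms} and $\|\Delta(\chi_\Lambda\varphi)\|_2\le C\varepsilon_N^{-1/2}(K_H'\ell^{-1})^2$ in \eqref{eq:DerivsChiTerms} follow from positivity of the left-hand side.

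Essentially all the work is already contained in Proposition~\ref{prop:T-T2}, so there is no real obstacle. The only points deserving care are bookkeeping: checking that the additive errors $\varepsilon_T(ds\ell)^{-2}$ and $\varepsilon_T^2(ds\ell)^{-4}$ are absorbed by the leading powers of $K_H'\ell^{-1}$ (exactly what \eqref{cond:newKH'} provides), and the identification of the quadratic forms with squared Sobolev norms — in particular that $\chi_\Lambda\varphi$ genuinely extends to an $H^2(\mathbb{R}^3)$ function, which uses the high-order vanishing of $\chi$ on $\partial\Lambda$ together with $\varphi\in H^2(\Lambda)$ (interior elliptic regularity makes $\varphi$, and hence $\chi_\Lambda\varphi$, smooth in the interior, but only $H^2$ regularity is actually needed for the estimates).
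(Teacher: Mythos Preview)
Your proof is correct and follows exactly the approach the paper has in mind: the paper simply states that the corollary follows immediately from Proposition~\ref{prop:T-T2} together with $Q\varphi=\varphi$ for $\varphi\in\Ran Q_L'$, and your argument spells out precisely those details, including the correct use of \eqref{cond:newKH'} to absorb the additive error terms.
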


We also need the following Sobolev-type inequality\footnote{The authors are grateful to Rupert Frank for communicating \eqref{eq:Sob-fullspace} to them.}.

\begin{lemma}\label{lem:SobolevHom}
There exists a uniform constant $C>0$ such that the following is true.
For $L>0$, let $\Delta^{\mathcal N}_L$ denote the Neumann Laplace operator on $[-\frac{L}{2}, \frac{L}{2}]^3$.
Then, for all $f \in {\mathcal D}(-\Delta^{\mathcal N}_L)$ with $\int_{ [-\frac{L}{2}, \frac{L}{2}]^3} f(x) \,dx = 0$, we have
\begin{align}\label{eq:Sob-box}
\| f \|_{\infty} \leq C \| \nabla f\|_{L^2([-\frac{L}{2}, \frac{L}{2}]^3)}^{\frac{1}{2}} \| -\Delta^{\mathcal N}_L f\|_{L^2([-\frac{L}{2}, \frac{L}{2}]^3)}^{\frac{1}{2}} .
\end{align}
Also, for all $f \in H^2({\mathbb R}^3)$,
\begin{equation}\label{eq:Sob-fullspace}
\| f \|_{\infty} \leq C \| \nabla f\|_{L^2({\mathbb R}^3)}^{\frac{1}{2}} \| -\Delta f\|_{L^2({\mathbb R}^3)}^{\frac{1}{2}} .
\end{equation}
\end{lemma}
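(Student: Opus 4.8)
### Proof plan for Lemma~\ref{lem:SobolevHom}

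\textbf{Overall strategy.} Both inequalities are ``half-a-derivative-from-each-side'' Gagliardo--Nirenberg-type estimates at the critical point for $L^\infty$ control in dimension $3$: the scaling on the right-hand side is $\|\nabla f\|_2^{1/2}\| \Delta f\|_2^{1/2}$, which has the same scaling dimension as $\|f\|_\infty$ (both scale like $[\text{length}]^{0}\cdot[\,f\,]$ once one tracks units), so one expects a dimensionless constant. I would prove the whole-space inequality \eqref{eq:Sob-fullspace} first, then deduce the box inequality \eqref{eq:Sob-box} from it by a reflection/extension argument, being careful to keep the constant uniform in $L$ by a scaling reduction to $L=1$.

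\textbf{Step 1: whole space, \eqref{eq:Sob-fullspace}.} For $f\in H^2(\mathbb R^3)$ write $\hat f$ for the Fourier transform. By Fourier inversion and Cauchy--Schwarz, for any $R>0$,
\begin{align}
\|f\|_\infty &\le (2\pi)^{-3}\int_{\mathbb R^3} |\hat f(k)|\,dk
= (2\pi)^{-3}\Big( \int_{|k|\le R} |k|\,|k|^{-1}|\hat f(k)|\,dk + \int_{|k|\ge R} |k|^2\,|k|^{-2}|\hat f(k)|\,dk\Big)\nonumber\\
&\le (2\pi)^{-3}\Big( \|\,|k|^{-1}\|_{L^2(|k|\le R)} \| \nabla f\|_2 + \|\,|k|^{-2}\|_{L^2(|k|\ge R)}\| \Delta f\|_2 \Big).
\end{align}
One computes $\|\,|k|^{-1}\|_{L^2(|k|\le R)} = (4\pi R)^{1/2}$ and $\|\,|k|^{-2}\|_{L^2(|k|\ge R)} = (4\pi/R)^{1/2}$, so the bound reads $\|f\|_\infty \le C (R^{1/2}\|\nabla f\|_2 + R^{-1/2}\|\Delta f\|_2)$. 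Optimizing over $R$ (i.e. choosing $R = \|\Delta f\|_2/\|\nabla f\|_2$, assuming both are nonzero; the degenerate cases are trivial) yields \eqref{eq:Sob-fullspace} with a universal $C$.

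\textbf{Step 2: the box, \eqref{eq:Sob-box}.} By the scaling $x\mapsto Lx$ it suffices to treat $L=1$ with $Q=(-\tfrac12,\tfrac12)^3$: if $f$ is defined on $[-\tfrac L2,\tfrac L2]^3$, set $g(y)=f(Ly)$; then $\int_Q g = L^{-3}\int f = 0$, $\|g\|_\infty=\|f\|_\infty$, $\|\nabla g\|_{L^2(Q)} = L^{-1/2}\|\nabla f\|_2$, $\|\Delta^{\mathcal N} g\|_{L^2(Q)} = L^{3/2}\|\Delta^{\mathcal N}_L f\|_2$, and one checks the product $\|\nabla g\|^{1/2}\|\Delta^{\mathcal N} g\|^{1/2}$ equals $\|\nabla f\|_2^{1/2}\|\Delta^{\mathcal N}_L f\|_2^{1/2}$, so the inequality is scale-invariant. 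On $Q$, given $f\in{\mathcal D}(-\Delta^{\mathcal N})$ with mean zero, extend $f$ to $\mathbb R^3$ by successive even reflections across the faces of $Q$, producing a function $\tilde f$ that is periodic with period $2$ in each variable. Because $f$ satisfies Neumann boundary conditions, $\tilde f\in H^2_{\mathrm{loc}}(\mathbb R^3)$ (the even reflection matches value and, since $\partial_n f=0$, also the normal derivative across each face, so no distributional surface terms appear in $\nabla\tilde f$ or $\Delta\tilde f$). Now multiply by a fixed smooth cutoff $\psi\in C_0^\infty((-1,1)^3)$ with $\psi\equiv 1$ on $\overline Q$, apply \eqref{eq:Sob-fullspace} to $\psi\tilde f$, and expand $\nabla(\psi\tilde f)$ and $\Delta(\psi\tilde f)$ by Leibniz. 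The ``good'' top-order terms are $\|\nabla f\|_{L^2(\mathrm{supp}\,\psi)}$ and $\|\Delta^{\mathcal N}_{} f\|_{L^2(\mathrm{supp}\,\psi)}$, each controlled by a fixed constant times the corresponding norm on $Q$ (using periodicity of $\tilde f$); the lower-order terms involve $\|\nabla f\|_{L^2}$, $\|f\|_{L^2}$. It remains to absorb $\|f\|_{L^2(Q)}$: by the mean-zero assumption and the Poincaré inequality on $Q$, $\|f\|_{L^2(Q)}\le C\|\nabla f\|_{L^2(Q)}$; and then $\|\nabla f\|_{L^2(Q)}^2 = -\langle f,\Delta^{\mathcal N}f\rangle \le \|f\|_{L^2}\|\Delta^{\mathcal N} f\|_{L^2} \le C\|\nabla f\|_{L^2}\|\Delta^{\mathcal N} f\|_{L^2}$, so $\|\nabla f\|_{L^2(Q)}\le C\|\Delta^{\mathcal N} f\|_{L^2(Q)}$ and everything can be bounded by $C\|\nabla f\|_{L^2(Q)}^{1/2}\|\Delta^{\mathcal N}f\|_{L^2(Q)}^{1/2}$ after one more Cauchy--Schwarz, giving \eqref{eq:Sob-box}.

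\textbf{Main obstacle.} The routine-but-delicate point is Step~2: verifying that the even reflection of a Neumann $H^2$ function is genuinely $H^2$ across the faces (not merely piecewise $H^2$), and, relatedly, handling the edges and corners of the cube where three faces meet. The cleanest way to sidestep edge/corner worries is to note that the reflected function is periodic and globally $H^2_{\mathrm{loc}}$ because the reflection is an isometry of $\mathbb R^3$ fixing each coordinate hyperplane and the matching of $f$ and $\partial_n f=0$ kills the only possible jump; then localizing with $\psi$ and invoking \eqref{eq:Sob-fullspace} is immediate. An alternative that avoids reflection entirely is to expand $f$ in the Neumann eigenbasis $\{e_n\}$ (products of $\cos$) on $Q$, write $\|f\|_\infty\le \sum_n |c_n|\,\|e_n\|_\infty \le C\sum_n |c_n|$ with $\|e_n\|_\infty\le C$ uniformly, and run the same split-at-$R$ argument on the eigenvalue sum $\sum (1+\lambda_n)|c_n|$ using $\|\nabla f\|^2=\sum\lambda_n|c_n|^2$, $\|\Delta^{\mathcal N}f\|^2=\sum\lambda_n^2|c_n|^2$ and $\sum_{\lambda_n\le R^2}\lambda_n^{-1}\le CR$, $\sum_{\lambda_n\ge R^2}\lambda_n^{-2}\le CR^{-1}$ (comparing the lattice sums to the integrals as above); the mean-zero hypothesis is exactly what removes the $\lambda_n=0$ mode. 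I would present the reflection proof as the primary argument and mention the eigenfunction computation as the fallback.
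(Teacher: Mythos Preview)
Your proof is correct (up to a harmless slip: in the scaling reduction you wrote $\|\Delta^{\mathcal N}g\|_{L^2(Q)}=L^{3/2}\|\Delta^{\mathcal N}_L f\|_2$ where the exponent should be $L^{1/2}$; the scale-invariance conclusion is unaffected). Your Step~1 is exactly what the paper has in mind for \eqref{eq:Sob-fullspace}.

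For the box inequality, however, your primary route differs from the paper's. The paper does \emph{not} go through reflection and the whole-space estimate; instead it works directly with the Neumann eigenbasis $u_n$ (products of cosines, eigenvalues $\lambda_n=\pi^2 n^2/L^2$), uses the uniform pointwise bound $|u_n|\le C L^{-3/2}$ and Cauchy--Schwarz to get
\[
|f(x)|^2 \le C L^{-3}\Big(\sum_{n\ne 0}(\lambda_n+\lambda_n^2)^{-1}\Big)\Big(\sum_{n\ne 0}(\lambda_n+\lambda_n^2)|c_n|^2\Big),
\]
and then shows $L^{-3}\sum_{n\ne 0}(\lambda_n+\lambda_n^2)^{-1}$ is bounded uniformly in $L$ by comparing to the convergent integral $\int (\pi^2 k^2+\pi^4 k^4)^{-1}\,dk$. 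This is precisely your ``fallback'' argument (the normalization $\|\nabla f\|=\|\Delta^{\mathcal N}f\|$ the paper uses is equivalent to your optimal choice of $R$). What your reflection approach buys is conceptual economy---one inequality proved once, then transported---at the cost of the $H^2$-across-edges verification you flagged; the paper's eigenfunction argument avoids that regularity discussion entirely, at the cost of a lattice-sum estimate. Either is fine; since you already have the eigenfunction argument written out as the fallback, you might as well promote it to the main proof and skip the reflection subtleties.
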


\begin{proof}
We will only prove \eqref{eq:Sob-box}, the case of ${\mathbb R}^3$ in \eqref{eq:Sob-fullspace} follows easily using the same ideas and the Fourier transform.
We use the well-known $L^2$-normalized eigenfunctions $u_n$ of $-\Delta^{\mathcal N}_L$, for $n \in {\mathbb N}_0^3$ with eigenvalues $\lambda_n = \frac{\pi^2 n^2}{L^2}$. Notice the uniform bounds $|u_n(x)| \leq C_0 L^{-3/2}$ for all $x$ and $L$.

Let  $f \in {\mathcal D}(-\Delta^{\mathcal N}_L)$ with $\int_{ [-\frac{L}{2}, \frac{L}{2}]^3} f(x) \,dx = 0$ be given.
By scaling invariance of \eqref{eq:Sob-box}, we may assume that $\| \nabla f\|_{L^2([-\frac{L}{2}, \frac{L}{2}]^3)} =\| -\Delta^{\mathcal N}_L f\|_{L^2([-\frac{L}{2}, \frac{L}{2}]^3)}$, so it suffices to show that
\begin{align}
\| f \|_{\infty}^2 \leq C \Big( \| \nabla f\|_{L^2([-\frac{L}{2}, \frac{L}{2}]^3)}^2 +\| -\Delta^{\mathcal N}_L f\|_{L^2([-\frac{L}{2}, \frac{L}{2}]^3)}^2\Big),
\end{align}
with $C$ independent of $f$ and $L$.

But using the expansion of $f$ in eigenfunctions, with $c_n = \langle f, u_n\rangle$, and noticing that $c_0=0$, we get by Cauchy-Schwarz,
\begin{align}
|f(x)|^2 &\leq C_0 L^{-3}\Big( \sum_{n\neq 0} (\lambda_n + \lambda_n^2)^{-\frac{1}{2}} (\lambda_n + \lambda_n^2)^{\frac{1}{2}} |c_n|\Big)^2 \nonumber \\
&\leq 
C_0 L^{-3} \Big( \sum_{n\neq 0} (\lambda_n + \lambda_n^2)^{-1} \Big)
\Big(  \sum_{n\neq 0} (\lambda_n + \lambda_n^2) |c_n|^2 \Big).
\end{align}
Therefore, it suffices to prove that $L^{-3}\sum_{n\neq 0} (\lambda_n + \lambda_n^2)^{-1} $ is bounded uniformly in $L$. But clearly,
\begin{equation}
L^{-3}\sum_{n\neq 0} (\lambda_n + \lambda_n^2)^{-1}  \leq L^{-3}\sum_{n\neq 0} \lambda_n^{-2} = L \sum_{n\neq 0}\pi^{-4} n^{-4},
\end{equation}
so it suffices to consider the case of large values of $L$.
When $L \rightarrow \infty$ the sum converges to the convergent integral $\int_{{\mathbb R}^3} (\pi^2 k^2 + \pi^4 k^4)^{-1}\,dk$. This finishes the proof.
\end{proof}

\end{document}